\def\truecondition{1}                                                                                                   %
\def\falsecondition{0}
\let\reportversion=\truecondition
\title{\bf
Convergence of Nonlinear Observers on $\RR ^n$
with a Riemannian Metric (Part III)
}
\author{Ricardo G. Sanfelice\thanks{R. G. Sanfelice is with
the Department of Electrical and Computer Engineering, University of California, Santa Cruz, CA 95064, USA.
       Email: {\tt\small ricardo@ucsc.edu}
}\hskip 0.5em
and
Laurent Praly\thanks{L. Praly is with CAS, ParisTech, Ecole des Mines, 35 rue Saint Honor\'{e}, 77305, 
        Fontainebleau, France.
       Email: {\tt\small Laurent.Praly@ensmp.fr}
}
}
\date{last correction: March 17, 2023, compilation: \today}
\definecolor{my-violet}{rgb}{0.6,0,0.6}
\definecolor{my-dark-violet}{rgb}{0.42,0.21,0.07}
\definecolor{my-green}{rgb}{0.3,0.9,0.0}
\definecolor{my-dark-green}{rgb}{0.26, 0.78, 0.68}
\definecolor{my-brown}{rgb}{0.6,0.4,0.2}
\definecolor{my-magenta}{rgb}{0.7,0.6,0.3}
\definecolor{MyMagenta}{rgb}{0.8,0,0.8}
\definecolor{my-light-blue}{rgb}{0,0.5,1}
\definecolor{coloremove}{rgb}{.0,0,.5}
\newcommand{\IfReport}[2]{%
\colorlet{savecolor}{.}%
\ifthenelse{\boolean{Report}}{{%
\begingroup%
\color{my-violet}#1%
\endgroup%
}%
}{%
#2
}%
\color{savecolor}%
}
\def\stopsuppress{stopsuppress}
\let\stoparchive=\stopsuppress
\gdef\glos#1{{\color{my-violet}%
\hskip 2 pt
\null \put(13,4){\oval(24,10)}\put(0,0){%
\hbox to 24pt{\hss{G#1}\hss}
}\hskip 24pt{}}}%
\gdef\startarchive{\begingroup\color{my-violet}}
\gdef\stoparchive{\endgroup}
\gdef\glos#1{}
\gdef\startarchive{\suppress}
\newcommand{\IfTwoCol}[2]{%
\colorlet{savecolor}{.}%
\ifthenelse{\boolean{Report}}{%
\begingroup%
\color{my-violet}#2%
\endgroup%
}{%
#1
}%
\color{savecolor}%
}
\newcommand{\complement}{supplementary material in Appendix~}
\def\suppress{\futurelet\nexttoken\doSuppress}%
\newcommand{\doSuppress}{%
    \ifx \nexttoken \stopsuppress
        \let\do=\relax
    \else
        \let\do=\suppress
    \fi
    \GobbleOneTokenAndCall \do
}
\newcommand{\GobbleOneTokenAndCall}[1]{%
    \afterassignment#1%
    \let\gobbledToken= %
}
\def\startmodif{%
\colorlet{savecolor}{.}%
\begingroup%
\color{blue}%
}
\def\stopmodif{%
\endgroup%
\color{savecolor}%
}
\definecolor{mygreen}{RGB}{0,128,0}
\newcommand{\ricardo}[1]{{\color{my-dark-violet}#1}}
\newcounter{compteurtmp}
\newtheorem{workahead}{} % Environnement. Interet = cree un label
\def\repeattext#1#2{%
\setcounter{compteurtmp}{0}%
\loop\ifnum\thecompteurtmp<#2%
\addtocounter{compteurtmp}{1}%
#1%
\repeat}
\DeclareMathAlphabet{\EuScript}{U}{eus}{m}{n}
\DeclareMathAlphabet{\EuScriptbold}{U}{eus}{b}{n}
\SetMathAlphabet{\EuScript}{bold}{U}{eus}{b}{n}
\DeclareMathAlphabet{\mathbfcal}{OMS}{cmsy}{b}{n}
\DeclareMathAlphabet{\mathfrakbf}{U}{euf}{b}{n}
\newcommand{\bfFontlukas}{\usefont{T1}{LukasSvatba}{bx}{n}}
\def\cursive#1{\mbox{\Fontlukas{#1}}}
\def\bfcursive#1{\mbox{\bfFontlukas{#1}}}
\def\indxra{{\alpha }}
\def\indxrb{{\beta }}
\def\indxrc{{\gamma }}
\def\indxrd{{\delta }}
\def\indxre{{\epsilon}}
\def\indxrf{{\eta}}
\def\indxrg{{\mu }}
\def\indxrh{{\nu }}
\def\indxa{{a}}
\def\indxb{{b}}
\def\indxc{{c}}
\def\indxd{{d}}
\def\indxe{{e}}
\def\indxf{{f}}
\def\indyi{i}
\def\indyj{j}
\def\indyk{k}
\def\indyl{l}
\def\indym{m}
\DeclareMathAlphabet{\mathcursive}{U}{esstixcal}{m}{n}
\DeclareMathAlphabet{\bfmathcursive}{U}{esstixcal}{bx}{n}
\def\changex{{\mathchoice%
{{\mbox{\large$\scriptstyle \mathcursive{C}$}}}%
{{\mbox{\large$\scriptstyle \mathcursive{C}$}}}%
{{\mbox{$\scriptstyle \mathcursive{C}$}}}%
{{\mbox{$\scriptscriptstyle \mathcursive{C}$}}}%
}}
\def\changey{{\mathchoice%
{{\mbox{\large$\scriptstyle \mathcursive{D}$}}}%
{{\mbox{\large$\scriptstyle \mathcursive{D}$}}}%
{{\mbox{$\scriptstyle \mathcursive{D}$}}}%
{{\mbox{$\scriptscriptstyle \mathcursive{D}$}}}%
}}
\def\bfchangey{{\mathchoice%
{{\mbox{\large$\scriptstyle \mathcursive{D}\mkern-15.5mu\mathcursive{D}\mkern-15.5mu\mathcursive{D}\mkern-15.5mu \mathcursive{D}$}}}%
{{\mbox{\large$\scriptstyle \mathcursive{D}\mkern-15.5mu\mathcursive{D}\mkern-15.5mu\mathcursive{D}\mkern-15.5mu \mathcursive{D}$}}}%
{{\mbox{$\scriptstyle \bfmathcursive{D}$}}}%
{{\mbox{$\scriptscriptstyle \bfmathcursive{D}$}}}%
}}
\def\changexr{{\mathchoice%
{{\mbox{\large$\scriptstyle \mathcursive{E}$}}}%
{{\mbox{\large$\scriptstyle \mathcursive{E}$}}}%
{{\mbox{$\scriptstyle \mathcursive{E}$}}}%
{{\mbox{$\scriptscriptstyle \mathcursive{E}$}}}%
}}
\def\changexa{\changex_a}
\def\changeya{\changey_a}
\def\coordxd{\mathcal{M}}
\def\coordxm{\phi} % \def\coordxm{\bfphi} % \def\bfphi{\mbox{\boldmath{$\phi$}}} 
\def\coordxp{x}
\def\coordx{(\coordxp,\coordxd,\coordxm)}
\def\barcoordxd{\bar{\mathcal{M}}}
\def\barcoordxm{\bar \phi}
\def\barcoordxp{\bar x}
\def\barcoordx{(\barcoordxp,\barcoordxd,\barcoordxm)}
\def\coordyd{\mathcal{N}}
\def\coordym{\chi} % \def\coordym{\bfphiy} % \def\phiy{\chi} % \def\bfphiy{\mbox{\boldmath{$\chi$}}}
\def\coordyp{y}
\def\coordy{(\coordyp,\coordyd,\coordym)}
\def\barcoordyd{\bar{\mathcal{N}}}
\def\barcoordym{\bar \chi}
\def\barcoordym{\bar \chi}
\def\barcoordyp{\bar y}
\def\barcoordy{(\barcoordyp,\barcoordyd,\barcoordym)}
\def\coordxrd{\mathcal{O}}
\def\coordxrm{\psi} % \def\phixr {\psi} % \def\bfphixr{\mbox{\boldmath{$\psi$}}}
\def\coordxrp{\xrond}
\def\coordxr{(\coordxrp,\coordxrd,\coordxrm)}
\def\coordyxrd{\coordxd}
\def\coordyxrm{\coordxm_{\mathcal{N}}} %\def\coordyxrm{(\bfh,\bfh^\ortho)}
\def\coordyxrp{(\coordyp,\coordxrp)}
\def\coordyxr{(\coordyxrp,\coordyxrd,\coordyxrm)}
\def\hhperp{\theta}
\def\bfhhperp{\mbox{\boldmath{$\theta$}}}
\def\hhperpinv{\vartheta}
\def\bfhhperpinv{\mbox{\boldmath{$\vartheta$}}}
\def\coordxid{\mathcal{P}}
\def\coordxim{\omega}
\def\coordxip{\xi}
\def\coordxi{(\coordxip,\coordxid,\coordxim)}
\def\coordxed{\coordxd_e}
\def\coordxem{\coordxm_e}
\def\coordxep{\coordxp_e}
\def\coordxe{(\coordxep,\coordxed,\coordxem)}
\def\coordxxed{\coordxd\times\coordxed}
\def\coordxxem{(\coordxm,\coordxem)}
\def\coordxxep{(\coordxp,\coordxep)}
\def\coordxxe{(\coordxxep,\coordxxed,\coordxxem)}
\def\barcoordxap{\bar \coordxp_a}
\def\coordyad{\coordyd_a}
\def\coordyam{\coordym_a} % \def\coordxm{\bfphi} % \def\bfphi{\mbox{\boldmath{$\phi$}}} 
\def\coordyap{\coordyp_a}
\def\coordya{(\coordyap,\coordyad,\coordyam)}
\def\barcoordyap{\bar \coordyp_a}
\def\barcoordya{(\barcoordyap,\coordyad,\coordyam)}
\def\Close{\mathcal{C}}
\def\dy{e}
\def\Did{{\mathfrak{D}^+}}
\def\bfd{{\mbox{\boldmath{$d$}}}} % {\mathbf{d}}
\def\Distrib{\cursive{D}}
\def\bfDistrib{\bfcursive{D}}
\def\entree{u}
\def\bff{{\mathchoice%   {\mathbf{f}}
{{\mbox{\normalsize\textit{\textbf{f}}}}}%
{{\mbox{\small\textit{\textbf{f}}}}}%
{{\mbox{\scriptsize\textit{\textbf{f}}}}}%
{{\mbox{\tiny\textit{\textbf{f}}}}}%
}}
\def\bfF{{\mbox{\boldmath{$F$}}}} % {\mathbf{F}}
\def\fa{{f_a}}
\def\grad{g}
\def\bfgrad{{\mbox{\boldmath{$g$}}}} % {\mathbf{g}}
\def\bfgamma{\mbox{\boldmath{$\gamma $}}}
\def\gammay{\delta }
\def\bfgammay{\mbox{\boldmath{$\delta $}}}
\def\Gammay{\Delta }
\def\bfdh{{\mbox{\boldmath{$dh$}}}} % {\mathbf{dh}}
\def\bfh{{\mbox{\boldmath{$h$}}}} % \mathbf{h}
\def\ha{h_a}
\def\barha{{\bar h}_a}
\def\bfha{{\mbox{\boldmath{$h_a$}}}} % {{\mathbf{h_a}}}
\def\oh{\phi}
\def\oha{\oh_a}
\def\Hess{H}
\def\bfHess{{\mbox{\boldmath{$H$}}}} % {\mathbf{H}}
\def\immer{\mkern 1mu {\mathcursive{i}}}
\def\bfimmer{{\mathcursive{i}\mkern-6.2mu\mathcursive{i}\mkern-6.2mu\mathcursive{i}\mkern-6.2mu\mathcursive{i}\mkern-6.2mu\mathcursive{i}}}
\def\kunbf{{k}}
\def\Ouv{\Omega }
\def\euP{\EuScript{P}}
\def\scripteuP{\mbox{\tiny $\euP$}}
\def\bfP{{\mathchoice%.  % {{\mathbf{P}}}
{{\mbox{\normalsize\textit{\textbf{P}}}}}%
{{\mbox{\normalsize\textit{\textbf{P}}}}}%
{{\mbox{\scriptsize\textit{\textbf{P}}}}}%
{{\mbox{\tiny\textit{\textbf{P}}}}}%
}}
\def\Py{Q}
\def\bfPy{{\mbox{\boldmath{$Q$}}}}% {{\mathbf{Q}}}
\def\Pxi{R}
\def\bfPxi{{\mbox{\boldmath{$R$}}}} % {{\mathbf{R}}}
\def\Pa{{P_a}}
\def\bfPa{{\mbox{\boldmath{$P_a$}}}} % {{\mathbf{P_a}}}
\def\bfvarphi{\mbox{\boldmath{$\varphi$}}}
\def\bfpi{\mbox{\boldmath{$\pi $}}}
\def\wpunbf{\wp}
\newcommand{\qlower}{q}
\newcommand{\reals}{\mathbb{R}}
\def\RR{\mathbb{R}}
\def\bfRR{% % R in bold mathbb, needs a length ``\newlength{\longueur}''
\setbox0=\hbox{$\RR$}%
\longueur =  0.999\wd0 % to move back for superposition by horizontal translation
\setbox0=\hbox{$\RR\hskip -\longueur\RR$} % To create nearly superposed \RR
\RR\hskip -\longueur \RR% Create nearly superposed \RR
\hskip -\wd0 % to move back for superposition by vertical translation
\longueur =  0.04\ht0% to move up for superposition by vertical translation
\raise\longueur\box0% move up and supperpose
}
\newcommand{\RRgeq}{[0,+\infty)}
\def\submer{{\mathcursive{s}}}
\def\bfsubmer{{\mathcursive{s}\mkern-7.5mu\mathcursive{s}\mkern-7.5mu\mathcursive{s}\mkern-7.5mu\mathcursive{s}\mkern-7.5mu\mathcursive{s}}}
\def\SS{\mathbb{S}}
\def\bfSS{%
\setbox0=\hbox{$\SS$}%
\longueur =  0.999\wd0 % to move back for superposition by horizontal translation
\setbox0=\hbox{$\SS\hskip-\longueur\SS$} % To create nearly superposed \SS
\SS\hskip -\longueur \SS% Create nearly superposed \SS
\hskip -\wd0 % to move back for superposition by vertical translation
\longueur =  0.04\ht0% to move up for superposition by vertical translation
\raise\longueur\box0% move up and supperpose
}
\def\secff{I\!\!I\!}
\def\bfsecff{\mbox{\boldmath{$I\!\!I\!$}}} % {\mathbf{I\!I}}
\def\euX{\EuScript{X}}
\def\bfXrond{{\mathbfcal{Z}}}
\def\xrond{z}
\def\bfxrond{{\mbox{\boldmath{$z$}}}} % {\mathbf{z}}
\def\barxrond{{\bar{\xrond} }}
\def\bfx{{\mbox{\boldmath{$x$}}}} % {{\mathbf{x}}}
\def\hatx{\hat{x}}
\def\bfhatx{{\mbox{\boldmath{$\hat x$}}}} % {\mathbf{\hat{x}}}
\def\bfX{{\mbox{\boldmath{$X$}}}} % {{\mathbf{X}}}
\def\bfhatX{{\mbox{\boldmath{$\hat X$}}}} % {\mathbf{\hat{X}}}
\def\xa{x_a}
\def\bfxa{{\mbox{\boldmath{$x_a$}}}} % {{\mathbf{x_a}}}
\def\barxa{{\bar x}_a}
\def\bfXi{\mbox{\boldmath{$\Xi$}}}
\def\bfXi{\mbox{\boldmath{$\Xi$}}}
\newcommand{\xdummy}{\xi}
\def\xe{x_e}
\def\barxe{{\bar x}_e}
\def\hatxe{{\hat x}_e}
\def\bfxe {\mbox{\boldmath$x_e$}}
\def\ya{{y_a}}
\def\bfy{{\mbox{\boldmath{$y$}}}} % {{\mathbf{y}}}
\def\yrond{{\mathchoice%
{{\mbox{$\scriptstyle \mathcal{Y}$}}}%
{{\mbox{$\scriptstyle \mathcal{Y}$}}}%
{{\mbox{$\scriptscriptstyle \mathcal{Y}$}}}%
{{\mbox{\tiny$\scriptscriptstyle \mathcal{Y}$}}}%
}}
\def\tangent{{\mathchoice%
{\textsf{\normalsize tan}}%
{\textsf{\small tan}}%
{\textsf{\hskip -0.5pt \scriptsize t\hskip -0.5pt a\hskip -0.7pt n}}%
{\textsf{\tiny tan}}%
}}
\def\ortho{{\mathchoice%
{\textsf{\normalsize ort}}%
{\textsf{\small ort}}%
{\textsf{\hskip -0.5pt \scriptsize o\hskip -0.5pt r\hskip -0.7pt t}}%
{\textsf{\tiny ort}}%
}}
\def\comp{{\mathchoice%
{\textsf{\normalsize com}}%
{\textsf{\small com}}%
{\textsf{\hskip -0.5pt \scriptsize c\hskip -0.5pt o\hskip -0.7pt m}}%
{\textsf{\tiny com}}%
}}
\def\dot#1{\renewcommand{\arraystretch}{0.5}%
\begin{array}[b]{@{}c@{}}.\\\renewcommand{\arraystretch}{1}#1\end{array}%
\renewcommand{\arraystretch}{1}}
\def\vardot#1#2{%
\setbox0=\hbox{$#1$}\setbox1=\hbox{$#2$}%
\ht1=\wd0\wd0=\wd1%
\setbox0=\hbox{$\dot{\box0}$}%
\wd0=\ht1\box0}
\font \police line10 at 1\@ptsize pt
\def\rightarrowfill{%
\setbox0=\hbox{{\police\char"2D}}\ht0=0pt
$\m@th
\leaders\vrule\hfill
\mkern-10mu\raise0.7pt\box0
%\mkern1mu\leaders\vrule\hfill
$}
\def\underrightarrowfill#1{%
\mathop{\vbox{\ialign{##\crcr\crcr%
$\hfil\displaystyle{#1}\hfil$%
\crcr\noalign{\kern 0.7ex\nointerlineskip}%
\rightarrowfill\crcr}}}\limits}%
\def\overerrightarrowfill#1{%
\mathop{\vbox{\ialign{##\crcr\crcr%
\rightarrowfill%
\crcr\noalign{\kern 0.97ex\nointerlineskip}%
$\hfil\displaystyle{#1}\hfil$%
\crcr}}}\limits}%
\def\downparenfill{$\m@th\braceld\leaders\vrule\hfill\bracerd$}
\def\overparen#1{\mathop{\vbox{\ialign{##\crcr\crcr \noalign{\kern0.4ex}
\downparenfill\crcr\noalign{\kern0.4ex\nointerlineskip}
$\hfil\displaystyle{#1}\hfil$\crcr}}}\limits}
\newlength{\longueur}
\def\contenulabel{\protected@edef\@currentlabel}
\newtheorem{theorem}{Theorem}[section] 
\newtheorem{lemma}[theorem]{Lemma}   
\newtheorem{example}[theorem]{Example}    
\newtheorem{proposition}[theorem]{Proposition}    
\newtheorem{remark}[theorem]{Remark}   
\newtheorem{definition}[theorem]{Definition}
\newtheorem{corollary}[theorem]{Corollary}
\newcounter{hypothesis}
\def\sousection#1{%
\refstepcounter{subsection}%
\subsection*{\thesubsection\   #1}%
\addcontentsline{toc}{subsection}{\protect\numberline{\thesubsection}#1}%
%\phantomsection%
}
\def\soussousection#1{%
\refstepcounter{subsubsection}%
\subsection*{\thesubsubsection\   #1}%
\addcontentsline{toc}{subsubsection}{\protect\numberline{\thesubsubsection}#1}%
%\phantomsection%
}
\begin{document}

\maketitle

\noindent
\IfReport{%
% report
%\newpage
\tableofcontents
\newpage
}{\vspace{-0.15in}}

% \startworkahead
% \workingahead

\begin{abstract}
This paper is the third and final component of a three-part effort on
observers contracting a Riemannian distance between the state of the system and its estimate.  
In Part I,
we showed that such a contraction property holds if the system dynamics and the Riemannian
metric satisfy two key conditions:
a differential detectability property and a geodesic monotonicity property.
With
the former condition being the focus of Part II, in this Part III, we study the latter condition
in relationship to the
nullity of the second fundamental form of the output function. 
We formulate sufficient and necessary conditions
for it to hold.
We establish a
link between it
and the infinite gain margin property,
and we provide a systematic way for constructing a metric 
satisfying this condition. Finally, we illustrate cases where
both
conditions
hold%
% ------------
\startarchive
% ------------
\   and propose ways to facilitate the satisfaction of these two conditions together%
% ------------
\stoparchive %
% ------------
.

\startarchive
% ------------
\startmodif
{\it Modified or new material relative to the
% 16 pages long, two-column
version submitted for review is colored in blue. } 
\stopmodif
\startarchive
Text in magenta is material that is only in the report version of the journal submission associated to this work.
\stoparchive %
% ------------
\stoparchive %
\end{abstract}

\section[Introduction]{{Introduction}}
\label{sec:Introduction}

\subsection{Background}
We consider nonlinear systems on $\bfRR^n$ of the form
% \begin{equation}
% \label{eqn:Plant1}
% \dot{\bfx}  \;=\;  \bff(\bfx)\  ,\quad 
% \bfy  \;=\;  \bfh(\bfx),
% \end{equation}
\\[0.5em]\null \hfill $\displaystyle 
\dot{\bfx}  \;=\;  \bff(\bfx)\  ,\quad 
\bfy  \;=\;  \bfh(\bfx),
$\refstepcounter{equation}\label{eqn:Plant1}\hfill$(\theequation)$\\[0.5em]
where $\bfx$ represents the state living
\startmodif
in
\stopmodif
$\bfRR^n$,
$\bfy: \bfRR^n \to \bfRR^n$ represents the
measured output living in $\bfRR^p$ and
$\bff:\bfRR^n \to \bfRR^n$ and $\bfh : \bfRR^n \to \bfRR^p$ are functions.

\startmodif
For this class of systems,
we continue the study, started in \cite{57} and \cite{127}, of designing
\begin{enumerate}
\item
a state observer, namely, a dynamical system
% \begin{equation}\label{eqn:ObserverSystem}
% \dot \bfhatx \;=\;   \bfF({\bfhatx },\bfh(\bfx))
% \  ,
% \end{equation}
\\[0.5em]\null \hfill $\displaystyle
\dot \bfhatx \;=\;   \bfF({\bfhatx },\bfh(\bfx))
\  ,
$\refstepcounter{equation}\label{eqn:ObserverSystem}\hfill$(\theequation)$\\[0.5em]
with a state $\bfhatx$ living in the same manifold as the system state $\bfx$ 
to be estimated, such that the zero estimation error set
% \begin{equation}
% \label{LP233}
% \mathcal{A}= \left\{(\bfx,\bfhatx)\in\bfRR^n\times\bfRR^n:\,  \bfx=\bfhatx\right\}
% \end{equation}
\\[0.5em]\null \hfill $\displaystyle
\mathcal{A}= \left\{(\bfx,\bfhatx)\in\bfRR^n\times\bfRR^n:\,  \bfx=\bfhatx\right\}
$\refstepcounter{equation}\label{LP233}\hfill$(\theequation)$\\[-0.3em]
\item
is forward invariant,
\item
solutions to \eqref{eqn:Plant1}-\eqref{eqn:ObserverSystem} converge to it -- a property that is guaranteed when a
Riemannian distance between true and estimated 
state strictly decreases,
\item
and has an infinite gain margin (see Definition \ref{def2}).
\end{enumerate}
\stopmodif

% For this broad class of systems,
% \startmodif
% we continue the study, started in \cite{57} and \cite{127}, of
% \stopmodif
% the following problem\footnote{%
% In (\ref{eqn:Plant1}) and (\ref{eqn:PlantObserverSystem}), time dependence (perhaps due to known exogenous inputs)
% is not allowed for simplification. See more comments in footnote 1 in \cite{127}.
% }:
% \begin{itemize}
% \item[($\star$)] Given functions $\bff$ and $\bfh$, 
% design a
% $\bfy$-dependent vector field $\bfF:\bfRR^n \times \bfRR^p \to \bfRR^n$ 
% such that, for
% \startmodif
% a specific Riemannian distance
% % some function $d$ measuring the distance
% \stopmodif
% between $\bfX(\bfx,t)$ and $\bfhatX((\bfx,\bfhatx),t)$, where $(\bfX(\bfx,t),\bfhatX((\bfx,\bfhatx),t))$ is a solution 
% at time $t$, issued from $(\bfx,\bfhatx )$ at time $t=0$, of
% \begin{equation}\label{eqn:PlantObserverSystem}
% \dot \bfx \;=\;  \bff(\bfx)\quad ,\qquad 
% \dot \bfhatx \;=\;   \bfF({\bfhatx },\bfh(\bfx))
% \  ,
% \end{equation}
% the map
% \begin{equation}\label{eqn:distance}
% t \mapsto d(\bfX(\bfx,t),\bfhatX((\bfx,\bfhatx),t))
% \end{equation}
% is decreasing --
% at least as long as $\bfX(\bfx,t)$ and $\bfhatX((\bfx,\bfhatx),t)$ are in an open subset 
% $\cursive{O}$ of $\reals^n$.
% \end{itemize}

\startmodif
There is a large corpus of contributions dedicated to this problem in the literature. The case when the distance is Euclidean,
\stopmodif
in appropriate coordinates, has been deeply investigated, 
giving rise to the well-known Luenberger observer \cite{Luenberger.64.TME},
Kalman filter \cite{Kalman.Bucy.61.JBE}, and high-gain observer \cite{Khalil.Praly.13.IJRNC}.
The case where the distance is
derived from a Riemannian metric given by the dynamics or by the  manifold the state belongs to
was studied in \cite{Aghannan.Rouchon.03,Bonnabel.07,Bonnabel.10}.
The design procedure proposed
there exploits  properties of the given
metric to establish  local convergence of the distance to zero, specifically, via an appropriate choice of coordinates or
modification of the metric.
\startmodif
In this paper, which continues from \cite{57} and
\cite{127},
the Riemannian metric is not given but properly
chosen as part of the design of the observer.
\stopmodif

\subsection{Motivation}
% BEGIN MODIF
The choice of the Riemannian metric mentioned above is dictated by the following result
% solving problem ($\star$)
reported in Theorem 3.3 and Lemma 3.6 in \cite{57}
(see also \cite{Sanfelice.Praly.13.ArXiv}). We state it slightly differently but keep the original numbering 
of the conditions.
\IfReport{%
% report
\startmodif
A proof of this version is similar to the one of Theorem \ref{thm5} in the \complement \ref{complement29}.
\stopmodif
}{%
% not report
}%
The symbols and notions -- e.g.,
\textit{complete},
\textit{Riemannian metric},
$\bfd_1^2\wpunbf$,
\textit{geodesic},
% \textit{weakly geodesically convex},
and 
\textit{Riemannian distance} --
are defined in  Appendix~\ref{sec:Glossary}.
% See also the glossary in \cite{127}.

\par\vspace{1em}
\startmodif
\begin{theorem}
\label{thm1part1}
Given $C^3$ functions $\bff$ and $\bfh$, suppose there exists a complete $C^3$
Riemannian metric%
\glos{\ref{item:RiemannianMetric}}\glos{\ref{Glos8}}\hskip 3pt
$\,  \bfP$ and 
%
% a closed subset $\mathcal{C}$ of $\bfRR^n$ with a non empty interior
an open subset $\Ouv$ of $\bfRR^n$
such that
\begin{enumerate}
\item[ A2~:]
There exist a continuous function
%
% $\rho :\mathcal{C}\to \RRgeq $,
$\rho : \Ouv \to \RRgeq $
and a strictly positive real number $q$
satisfying\footnote{
\color{.}
% BEGIN FOOTNOTE
Component-wise the inequality (\ref{3}) is
\IfReport{%
% report
\\[0.7em]\null \hfill $\displaystyle 
\sum_{\indxc}\left[
\frac{\partial P_{\indxa\indxb}}{\partial x_\indxc}(x) f_\indxc(x)
+
P_{\indxa\indxc}(x) \frac{\partial f_\indxc}{\partial x_\indxb}(x)
+
P_{\indxb\indxc}(x) \frac{\partial f_\indxc}{\partial x_\indxa}(x)
\right]
\leq \rho (x) \sum_\indyi\frac{\partial h_\indyi}{\partial x_\indxa}(x)\frac{\partial h_\indyi}{\partial x_\indxb}(x)
- q P_{\indxa\indxb}(x)
\  ,
$\hfill \null \\[0.3em]
}{%
% not report
\\[0.7em]$\displaystyle 
\sum_{\indxc}\left[
\frac{\partial P_{\indxa\indxb}}{\partial x_\indxc}(x) f_\indxc(x)
+
P_{\indxa\indxc}(x) \frac{\partial f_\indxc}{\partial x_\indxb}(x)
+
P_{\indxb\indxc}(x) \frac{\partial f_\indxc}{\partial x_\indxa}(x)
\right]
$\hfill\null\\\null\hfill$\displaystyle
\leq \rho (x) \sum_\indyi\frac{\partial h_\indyi}{\partial x_\indxa}(x)\frac{\partial h_\indyi}{\partial x_\indxb}(x)
- q P_{\indxa\indxb}(x)
\  ,
$\\[-0.3em]
}
and the observer equation (\ref{eqn:GeodesicObserverVectorField}) is
\\[0.3em]\null \hfill $\displaystyle 
\dot{\hatx }_\indxa = f_\indxa(\hatx) - k_E(\hatx)
\sum_{\indxb} [P(\hatx)^{-1}]_{\indxa\indxb}
\sum_\indyi\frac{\partial h_\indyi}{\partial x_\indxb}(\hat x)
\frac{\partial \wp}{\partial y_{1\indyi}}(h(\hatx), y)
\  .
$\hfill \null 
% END FOOTNOTE
}
\IfReport{%
% report
\\[0.7em]\null \hfill
}{%
% not report
\\[0.7em]\null \hskip -\labelwidth
}
$\displaystyle 
\begin{array}[t]{@{}c@{}c@{}c@{}l@{}}
\mathcal{L}_\bff \bfP(\bfx)
&
\,  \leq \,   \rho (\bfx)\,  
&
\bfdh (\bfx)\otimes \bfdh(\bfx)
&
\,  - \,  \qlower\,  \bfP(\bfx)
%
% \quad \forall \bfx\in\mathcal{C}
\quad \forall \bfx\in\Ouv
\,.
\\
\mbox{\glos{\ref{Glos4}}}
&
&
\mbox{\glos{\ref{Glos1}}}
&
\end{array}%
$\refstepcounter{equation}\label{3}%
% \label{LP10}
\hfill$(\theequation)$
\item[ A3~:]
There exists a $C^3$ function
$
\wpunbf : (\bfy_1,\bfy_2)\in \bfRR^p\times\bfRR^p  \mapsto \wpunbf(\bfy_1,\bfy_2)\in\RRgeq
$
satisfying\glos{\ref{Glos1}}
\begin{equation}
\label{LP65}
\wpunbf (\bfy,\bfy)= 0
\  ,\quad
\bfd_1^2\wpunbf(\bfy,\bfy) > 0
%
% \qquad \forall \bfy\in\bfh(\mathcal{C})
\qquad \forall \bfy\in\bfh(\Ouv)
\end{equation}
and, for any  geodesic
\IfReport{\glos{\ref{Glos7}}}{\!\!\!} $\bfgamma ^*$, taking values
%
% in $\mathcal{C}$
in $\Ouv$
and minimizing on the maximal interval $(s_1,s_2)$,
we have%
\IfTwoCol{%
% two col
\\[0.7em]$\displaystyle 
\frac{d}{ds}\left\{\wpunbf  (\bfh(\bfgamma ^*(s)),\bfh(\bfgamma ^*(s_3)))\right\}\: >\: 0
\qquad \forall s\in (s_3,s_4),
$\hfill\null\\[0.3em]\null\qquad $\displaystyle
\forall s_3,s_4 \in (s_1,s_2): 
$\refstepcounter{equation}\label{LP99}\hfill$(\theequation)$
\\[0.3em]\null\hfill$\displaystyle
s_3<s_4 \quad \& \quad 
\bfh(\bfgamma ^*(s_3)) \neq \bfh(\bfgamma ^*(s_4))
\  .
$
\\[-1em]
}{%
% one col
\\[0.7em]$\displaystyle 
\frac{d}{ds}\left\{\wpunbf  (\bfh(\bfgamma ^*(s)),\bfh(\bfgamma ^*(s_3)))\right\}\; >\; 0
\qquad \forall s\in (s_3,s_4),
$\refstepcounter{equation}\label{LP99}\hfill$(\theequation)$
\\[0.3em]\null\hfill$\displaystyle
\forall s_3,s_4 \in (s_1,s_2):\  
s_3<s_4 \quad \& \quad 
\bfh(\bfgamma ^*(s_3)) \neq \bfh(\bfgamma ^*(s_4))
\  .
$
\\[1em]
}
%% % BEGIN MODIFIED
% \IfTwoCol{%
% % two col
% \\[0.7em]$\displaystyle 
% \frac{d}{ds}\left\{\wpunbf  (\bfh(\bfgamma ^*(s)),\bfh(\bfgamma ^*(s_3)))\right\}\; >\; 0
% $\refstepcounter{equation}\label{LP99}\hfill$(\theequation)$
% \\\null\hfill$\displaystyle
% \qquad \forall s\in (s_3,s_2),\; \  \forall s_3 \in (s_1,s_2)
% \  .
% $\\[-0.5em]
% }{%
% % one col
% \begin{equation}
% \label{LP99}
% \frac{d}{ds}\left\{\wpunbf  (\bfh(\bfgamma ^*(s)),\bfh(\bfgamma ^*(s_3)))\right\}\; >\; 0
% \qquad \forall s\in (s_3,s_2),\; \  \forall s_3 \in (s_1,s_2)
% \  .
% \end{equation}
% }%
% % END MODIFIED
\end{enumerate}
\color{.}
Under these conditions, for any strictly positive real number $E$
and any closed subset $\mathcal{C}$ of $\Ouv$ with a nonempty interior,
there exists a continuous function 
$\kunbf _E^* :
\mathcal{C}\to \RR_{>0} $ such that
\begin{list}{}{%
\parskip 0pt plus 0pt minus 0pt%
\topsep 0.5ex plus 0pt minus 0pt%
\parsep 0pt plus 0pt minus 0pt%
\partopsep 0pt plus 0pt minus 0pt%
\itemsep 0.5ex plus 0pt minus 0pt
\settowidth{\labelwidth}{--}%
\setlength{\labelsep}{0.5em}%
\setlength{\leftmargin}{\labelwidth}%
\addtolength{\leftmargin}{\labelsep}%
}
\item[--]
for any continuous function 
$\kunbf _E:\mathcal{C} \to \RR$
satisfying
$$
\kunbf _E(\bfhatx )\geq \kunbf _E^*(\bfhatx )
\qquad \forall \bfhatx \in \mathcal{C}
\  ,
$$
\item[--]
for the observer given by
\begin{equation}
\label{eqn:GeodesicObserverVectorField}
\dot{\bfhatx }\;=\; \bfF({\bfhatx },y)\; 
:=
\; \bff({\bfhatx }) \;-\;
\kunbf _E(\bfhatx )\,  \bfgrad_\bfP [\wpunbf \circ\bfh](\bfhatx ,\bfy)
\  ,
\end{equation}
where, for each $\bfy$, $\bfhatx\mapsto \bfgrad_\bfP [\wpunbf \circ\bfh](\bfhatx ,\bfy)$ is the Riemannian gradient%
\glos{\ref{Glos2}}
(with respect to $\bfhatx$) of the function $\bfhatx\mapsto \wpunbf (\bfh(\bfhatx),\bfy)$, 
\item[--]
and, for all $x$ and $\hat x $ in $\mathcal{C}$
satisfying
\begin{equation}
\label{eqn:Basin}
d(\bfhatx ,\bfx) <  E\  ,
\end{equation}
where $d$ denotes the Riemannian distance
\IfReport{\glos{\ref{Glos7}}}{\!\!} induced by $\bfP$,
and linked by
a minimizing normalized geodesic $\gamma ^*$ satisfying
$$
\bfx=\bfgamma ^*(0)
\  ,\quad 
\bfhatx =\bfgamma ^*(\hat s)
\  ,\quad 
\gamma ^*(s)\in \mathcal{C}
\quad \forall s\in [0,\hat s]
\  ,
$$
\end{list}
we have\footnote{
% BEGIN FOOTNOTE
\IfReport{%
% report
$\Did d(\bfhatx ,\bfx)$ is the upper right Dini derivative along the solution, i.e.\\[0.7em]%
}{%
% not report
}
$\displaystyle 
\Did d(\bfhatx ,\bfx)\;=\; 
\limsup_{t\searrow 0} \frac{d(\bfhatX ((\bfhatx ,\bfx),t),\bfX(\bfx,t))-d(\bfhatx ,\bfx)}{t}
$
% END FOOTNOTE
}
\begin{equation}
\label{LP63}
\Did d(\bfhatx ,\bfx)
\; \leq \;  \displaystyle -
\frac{\qlower}{4} 
\,  d(\bfhatx ,\bfx)
\  .
\end{equation}
\end{theorem}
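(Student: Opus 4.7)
The plan is to use the Lyapunov candidate $V(\bfhatx,\bfx) := \tfrac12 d(\bfhatx,\bfx)^2$ (half the squared Riemannian distance induced by $\bfP$) and prove $\Did V \le -\tfrac{q}{2}V$; since $\Did d = \Did V / d$ wherever $d>0$, this is equivalent to the announced bound $\Did d \le -\tfrac{q}{4}d$. Given $(\bfhatx,\bfx)\in\cursive{O}$, weak geodesic convexity of $\mathcal{C}$ together with $d(\bfhatx,\bfx)<E$ delivers a minimizing geodesic $\bfgamma^*:[0,1]\to\mathcal{C}$ from $\bfhatx$ to $\bfx$. Applying the first variation formula for the Levi--Civita connection along a smooth family of minimizing geodesics connecting $\bfhatX$ to $\bfX$ gives
\begin{equation*}
\Did V \;=\; \langle \bff(\bfx),\dot\bfgamma^*(1)\rangle_\bfP - \langle \bff(\bfhatx) - \kunbf_E(\bfhatx)\bfgrad_\bfP[\wpunbf\circ\bfh](\bfhatx,\bfy),\dot\bfgamma^*(0)\rangle_\bfP.
\end{equation*}

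The drift contribution (first and third inner products) is the endpoint difference of $s\mapsto\bfP(\dot\bfgamma^*(s),\bff(\bfgamma^*(s)))$; since $\bfgamma^*$ is a geodesic, this simplifies to $\tfrac12\int_0^1(\mathcal{L}_\bff\bfP)(\dot\bfgamma^*,\dot\bfgamma^*)\, ds$. Invoking (\ref{3}) from A2 and the constant speed identity $\int_0^1\bfP(\dot\bfgamma^*,\dot\bfgamma^*)ds = d^2 = 2V$ bounds the drift contribution from above by $\tfrac12\int_0^1\rho(\bfgamma^*)|\bfdh(\bfgamma^*)\dot\bfgamma^*|^2 ds - qV$. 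The correction contribution, by definition of the Riemannian gradient and the chain rule, equals $\kunbf_E(\bfhatx)\tfrac{d}{ds}[\wpunbf(\bfh(\bfgamma^*(s)),\bfh(\bfx))]_{s=0^+}$; applying (\ref{LP99}) to the time-reversed minimizing geodesic $s\mapsto\bfgamma^*(1-s)$ shows this derivative is non-positive, and strictly so when $\bfh\circ\bfgamma^*$ is non-constant in a right neighborhood of $s=0$. Writing $\mathcal{G}(\bfhatx,\bfx):=-\tfrac{d}{ds}[\wpunbf(\bfh(\bfgamma^*(s)),\bfh(\bfx))]_{s=0^+}\ge 0$, these steps combine into
\begin{equation*}
\Did V \;\le\; -qV \,+\, \tfrac12\!\int_0^1\rho(\bfgamma^*)|\bfdh(\bfgamma^*)\dot\bfgamma^*|^2 ds \,-\, \kunbf_E(\bfhatx)\,  \mathcal{G}(\bfhatx,\bfx),
\end{equation*}
so it remains to exhibit a continuous $\kunbf_E^*:\bfRR^n\to\RR_{>0}$ such that the last two terms leave a remainder of at most $\tfrac{q}{2}V$ uniformly over $\cursive{O}$.

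The main difficulty is precisely this uniform domination, since both $\mathcal{G}$ and the integral vanish as $d\to 0$, so it is their \emph{ratio}, not the pointwise values, that must be controlled. Near the diagonal, a Taylor expansion of $\wpunbf$ around $(\bfh(\bfhatx),\bfh(\bfhatx))$ combined with the positive-definiteness in (\ref{LP65}) gives $\mathcal{G}\approx \bfd_1^2\wpunbf(\bfh(\bfhatx),\bfh(\bfhatx))(\bfdh(\bfhatx)\dot\bfgamma^*(0),\bfdh(\bfhatx)\dot\bfgamma^*(0))$, while the integrand tends uniformly to $\rho(\bfhatx)|\bfdh(\bfhatx)\dot\bfgamma^*(0)|^2$: the ratio is continuous in $\bfhatx$ and independent of the geodesic direction up to normalization. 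Away from the diagonal but inside $\cursive{O}$, the strict monotonicity in A3, continuity of $\rho,\bff,\bfh,\bfP,\wpunbf$, and compactness of the set of unit initial directions of minimizing geodesics from $\bfhatx$ to points with $d\le E$ in $\mathcal{C}$ furnish a positive lower bound on $\mathcal{G}$. Patching these two regimes by a partition of unity yields a continuous $\kunbf_E^*$ for which the required inequality holds. A standard Dini-derivative argument addresses the possible non-uniqueness of $\bfgamma^*$ (where $V$ may fail to be differentiable) by taking the supremum of the above estimate over the compact set of minimizing geodesics, closing the proof.
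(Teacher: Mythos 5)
Your proposal follows essentially the same route as the paper's proof (given in the appendix in the generalization to input-dependent systems, and originally in Lemma 3.6 of [57]): first variation formula along a minimizing geodesic, decomposition into a drift contribution bounded via A2 integrated along the geodesic, and a correction contribution made nonpositive via A3, followed by a compactness argument to manufacture $\kunbf_E^*$. Your passage to $V=\tfrac12 d^2$ rather than $d$ itself, and the use of a partition of unity rather than the paper's shell decomposition $\mathcal{K}_i=\{(\bfx,\bfhatx): d(\bfhatx,\bfx)\le E,\ i\le d(\check{\bfx},\bfhatx)\le i+1\}$, are cosmetic variants of the same idea.

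The place where the proposal is materially thinner than the paper's argument is the final step, and it is worth being precise about what is missing. First, the claim that $\mathcal{G}\ge 0$ by applying (\ref{LP99}) to the time-reversed geodesic is correct, but A3 gives strict positivity only at interior parameter values; obtaining the sign of the one-sided derivative at $s=0^+$ requires a short limiting argument. Second, and more importantly, the decisive fact one needs is a dichotomy that you do not state: whenever $\mathcal{G}(\bfhatx,\bfx,\bfgamma^*)=0$ with $\bfhatx\ne\bfx$, condition A3 forces $\bfh\circ\bfgamma^*$ to be \emph{constant} on the whole geodesic segment (the paper proves this by contradiction, using the strict inequality in (\ref{LP99}) and the existence of an interior maximum of $s\mapsto\wpunbf(\bfh(\bfgamma^*(s)),\bfh(\bfgamma^*(0)))$), which then kills the $\rho$-integral simultaneously. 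Without this dichotomy the ``ratio is continuous'' argument collapses on the directions tangent to the level sets of $\bfh$, where numerator and denominator both vanish. The paper handles this carefully through the auxiliary functions $a_{(\bfhatx,\bfx,\bfgamma^*)}$ and $b_{(\bfhatx,\bfx,\bfgamma^*)}$ (showing they are $C^1$ via the Taylor-with-integral-remainder lemma and vanish together) and a compactness contradiction on each shell $\mathcal{K}_i$, which also handles extraction of convergent subsequences of geodesics via the Arzel\`a--Ascoli/Filippov-type argument and the Hopf--Rinow theorem. Your ``Taylor expansion, compactness, patch'' sentence is pointing in the right direction, but as written it is more of a statement of intent than a proof: filling in that sentence is where the bulk of the work in [57, Lemma 3.6] actually lies.
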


The consequence of (\ref{LP63}) in Theorem~\ref{thm1part1} is that, as long as the assumptions are satisfied, the distance between the
true state $\bfx$ and the estimated state $\bfhatx$ is exponentially decreasing. Among the assumptions is the 
fact that
\stopmodif
% According to this result,
the Riemannian metric $\bfP$
%, which is the only other ingredient we need to write our 
%observer (\ref{eqn:GeodesicObserverVectorField}), 
must satisfy two key conditions that are of complete different nature.

The first condition, referred to as Condition A2, named 
{\it strong differential detectability with respect to the metric $\bfP$}, is related to
detectability of \eqref{eqn:Plant1}, and, as such pertains to control theory.
It involves the right-hand side $\bff$,
the output map $\bfh$, and the 
Riemannian metric $\bfP$ to be chosen.
\startmodif
Geometrically, it says that the flow generated by \eqref{eqn:Plant1} contracts along directions that are tangent to  
the level sets of $h$.
\stopmodif
In \cite{57}, we show that a weak form of this differential detectability property is necessary for the existence of 
an observer with state $\bfhatx$ in the same space as the system state $\bfx$ and
% in the Kalman form (\ref{eqn:ObserverSystem}) (= copy of the dynamics + correction term) 
making the set $\mathcal{A}$ in (\ref{LP233}) invariant and a Riemannian distance between true state and its estimate to decrease exponentially
when evaluated along solutions. We show also that uniform detectability of the linearization of \eqref{eqn:Plant1} along each of its solutions
is necessary for Condition A2 to hold.
In \cite{127}, we present techniques for the design of the Riemannian metric $\bfP$ for given functions $\bff$ and $\bfh$ so that 
Condition A2 holds. 
We show that such a design is possible when \eqref{eqn:Plant1} satisfies any of the following properties:
\begin{itemize}
\item[i)] Strongly infinitesimally reconstructible (see \cite[Definition 3.1]{127}) in the sense that each time-varying linear system
resulting from the linearization along a
solution to the system \eqref{eqn:Plant1} satisfies a uniform reconstructibility property;
\item[ii)]
Strongly differentially observable, in the sense that the state to output 
derivatives
\startmodif
mapping
\stopmodif
is an injective immersion (see \cite[Proposition 4.4]{127});
% ------------
\startarchive
% ------------
or
\item[iii)]
An Euler-Lagrange system, the Lagrangian of which is quadratic in the generalized velocities
(see \cite[Section V]{127}).
% ------------
\stoparchive
% ------------
\end{itemize}

The second condition, referred to as Condition A3, 
says roughly that, if,
\startmodif
along a geodesic,
\stopmodif
the distance between the true state $\bfx$
 and its estimate $\bfhatx$  reduces then the same holds between the corresponding
true (measured) output $\bfy$ and its estimate $\bfh(\bfhatx)$.
Following \cite[Definition 6.2.3]{Rapcsak.97}, a function $\bfh$ satisfying Condition A3 is known as being
{\it geodesically monotone}. This property involves 
the output map $\bfh$ and the 
Riemannian metric $\bfP$, but not $\bff$.
In \cite[Proposition A3]{57},
we established that this property implies that the level sets of $\bfh$ are strongly convex, which is a 
property that is typically exploited in optimization theory; see, e.g., \cite{Udriste.94}. Actually it is needed only to allow $E$ in (\ref{eqn:Basin})
to be arbitrary -- in this way, making the result semiglobal. Indeed, in \cite{127} we show that, without it,
Condition A2 alone guarantees the existence of a locally
(i.e., $E$ is imposed and small enough)
convergent observer and a locally convergent reduced order 
observer. See \cite[Propositions 2.4 and 2.8]{127}.

% We have found  the problem of designing a Riemannian metric 
% $\bfP$ satisfying Condition A3 far from being straightforward. This
% motivates the in-depth study we report in this paper.

\subsection{Contributions}
\label{sec20}
\startmodif
Parallel to \cite{57}, dedicated to the study and design of a metric satisfying the strong 
differential detectability property of Condition A2, we 
devote this paper to the geodesic monotonicity property in Condition A3.
% Our attention is devoted to the design of a metric $\bfP$ satisfying Condition A3. In this paper, w
% We address this problem independently of Condition A2
% -- recall that Condition A2 is addressed in \cite{127}.
% With this in mind, satisfying Condition A3 in a first step and Condition A2 in a second step, we keep as many degrees of 
% freedom as possible in the first step, so as to understand the
% conditions they need to necessarily satisfy.
Our contributions are as follows:
\begin{enumerate}
\item 
% We reveal a link between Condition A3 and the infinite gain margin property (see Definition \ref{def2}). In particular, 
In Section~\ref{sec:WhyA3}, we show that Condition A3 is equivalent to the infinite gain margin property (see Definition \ref{def2}) when the correction (or innovation) term in the observer is of gradient type
as in (\ref{eqn:GeodesicObserverVectorField}).
\item 
In Section \ref{sec:A3sufficient} and Proposition~\ref{prop12}, we show that
Condition A3 holds if $\bfh$ is a (geodesically)  affine function.
% as defined in \cite{Innami.82}.
% A sufficient condition for Condition A3 to hold, in particular, in terms of geodesics
% on the output set.
\item
In Section \ref{sec:necessary}, we give necessary conditions for Condition A3 to hold. In particular, we reveal
the key role played by the second fundamental form of the function $\bfh$ (see Definition \ref{def3}),
and the fact that $\bfh$ is a Riemannian 
submersion (see Definition \ref{def4}), with totally geodesic (see Definition \ref{def1})
level sets and an integrable orthogonal distribution (see Definition \ref{def5}).
\item
In Corollary \ref{cor1}, we propose a
test to check if Condition A3 holds.
The conditions to check depend on symbolic 
computations involving $\bfh$, $\bfP$, and their differentials. 
\item
In Theorem~\ref{prop17}, we present a systematic way to construct a metric $\bfP$
satisfying Condition A3.
\item
In Section \ref{sec:A2andA3}, we illustrate, via examples, situations in which both Conditions A2 and A3 hold. 
% ------------
\startarchive
% ------------
\item
Finally, in Section \ref{sec14}, we propose two general and promising ways of facilitating the satisfaction of 
Conditions A2 and A3 simultaneously,
via an immersion into an input-dependent system in Section~\ref{sec15} and via dynamic extension in Section~\ref{sec16}.
% ------------
\stoparchive %
% ------------
\end{enumerate}
\stopmodif

Because of space limitations,
details behind routine (but
\startmodif
sometimes
\stopmodif
lengthy) computations involved in the proofs are omitted. They can be found in 
\cite{Sanfelice-Praly.III-long}, along with additional material not referred to in this paper.

\startmodif
The reading of this paper requires the knowledge of well established concepts and results from
Riemannian geometry, in particular, on Riemannian submersions
and on optimization on Riemannian manifolds. \cite{Vilms,Nore,Garcia-Kupeli,Udriste.94}
are relevant references on such topics.
\stopmodif

As a difference to our previous work,
coordinates play a significant role in the
solution we have found to design a Riemannian metric $\bfP$ satisfying Condition A3.
Our first step is to introduce our  notation involving coordinates and  related basic assumptions.
% ------------
\startarchive
% ------------
See also the glossary in Appendix \ref{sec:Glossary}.
% ------------
\stoparchive %
% ------------

\section{Preliminaries and Notation}

\label{sec17}
% In this paper, coordinates play a more significant role than in 
% our previous work in \cite{57} (Part I) and \cite{127} (Part II).
% Due to this, we carefully define the notation used throughout this paper.
%
Symbols in bold style represent coordinate-free objects. In particular, $\bfx$ and $\bfy$ are points 
in a manifold, $\bff$ is a vector field on a tangent bundle,
$\bfh$ is a function between manifolds, $\bfP$ is a symmetric $2$-covariant tensor, etc.

Once coordinates, defined below,
$x$ and  $y$, with letters in normal style type,
have been chosen for $\bfx$ and $\bfy$, we can express the corresponding objects:
$f(x)$ for the value of $\bff$ at the point $\bfx$, $h(x)$ for
$\bfh(\bfx)$, $P(x)$ for $\bfP(\bfx)$, etc.

The writing of the system in (\ref{eqn:Plant1}) and of the observer in (\ref{eqn:GeodesicObserverVectorField})
with the state, the output, and the functions in bold style means that the coordinates used therein play no role,
namely, the expression of both the plant and the observer dynamics are coordinate free.
However, the use of
normal style type
for $\kunbf_E$ and $\wpunbf$ in the observer (\ref{eqn:GeodesicObserverVectorField}),
and for $d$
% in the definition of $\cursive{O}$
in \eqref{eqn:Basin} and in \eqref{LP63}
is to indicate that a change of coordinates in $\RRgeq$ is not allowed, i.e.,
these are scalar invariant functions taking values
\startmodif
in
\stopmodif
$\RRgeq$.

As a general rule,
when not used as indices, the symbols
\startmodif
$x$, $P$, $d$, $\gamma $, $\coordxm$, $\changex$ $\ldots$ are used for the 
$\bfx$-manifold $\bfRR^n$, while the
symbols $y$, $\Py$, $\dy$, $\delta $, $\coordym$, $\changey$, $\ldots$, following in the alphabetical order,
are used for the $\bfy$-manifold $\bfRR^p$.
For example, the expression of the Riemannian norm of the velocity of a path and of the distance between two points are denoted
$\frac{d\gamma}{ds}(s)^\top P(\gamma (s)) \frac{d\gamma}{ds}(s)$ and $d(x_a,x_b)$ in the $\bfx$-manifold, 
while, in the $\bfy$-manifold, they are denoted
$\frac{d\delta }{ds}(s)^\top Q(\delta  (s))\frac{d\delta }{ds}(s)$ and $e(y_a,y_b)$, respectively. 
\stopmodif

As usual (see \cite[p. 40]{Lee.13})
we call {\it coordinate chart} a triple, respectively, $\coordx$ for the $\bfx$-manifold
$\bfRR^n$ and $\coordy$ for the $\bfy$-manifold $\bfRR^p$, such that
$\coordxd $ and $\coordyd $, called {\it coordinate domains}, are open subsets of, respectively, $\bfRR^n$ and $\bfRR^p$ and
$\coordxm:\coordxd\to \RR^n$ and $\coordym:\coordyd\to \RR^n$ are 
homeomorphisms, called {\it coordinate maps}, satisfying
$$
x\;=\; \coordxm(\bfx)\quad \forall \bfx\in\coordxd
\quad ,\qquad 
y\;=\; \coordym(\bfy)\quad \forall \bfy\in\coordyd
\  .
$$
where $x=(x_\indxa, x_\indxb,\dots)$ in $\RR^n$ and $y=(y_\indyi,y_\indyj,\ldots)$ in  $\RR^p$
are called {\it local coordinates}. 
Roman letters $\indxa$, $\indxb$, \ldots, are used as indices for $x$ and run 
over the range $\{1,2,\ldots,n\}$ and roman letters $
\indyi$, $\indyj$, \ldots, are used as indices for $y$ and run over the range $\{1,2,\ldots, p\}$.

Denoting the family of (as many times as necessary) continuously differentiable functions 
$C^s$, the coordinate charts
$\coordx$ and $\coordy$ are assumed to assure a $C^s$ structure,
 in the sense that, for any two coordinate charts
$(x_1,\coordxd _1,\coordxm_1)$ and $(x_2,\coordxd _2,\coordxm_2)$, $\coordxm_1\circ\coordxm_2^{-1}$ is a $C^s$ diffeomorphism.
A coordinate chart $\coordx$ is said to be {\it a coordinate chart around $\bfx_0$} if $\bfx_0$ belongs to $\coordxd$. 

As an illustration of 
these definitions, given coordinate charts $\coordx$ around $\bfx_0$ and $\coordy$ around 
$\bfh(\bfx_0)$, with $\bfh(\coordxd)$ contained in $\coordyd$, the expression $h$ of the function $\bfh$ is
$$
h(x)\;=\; \coordym(\bfh(\coordxm^{-1}(x)))
\qquad \forall x\in \coordxm(\coordxd).
$$

Given coordinate charts $\coordx$ and $\coordy$, and $C^s$ diffeomorphisms $\changex:\coordxm(\coordxd)\to \RR^n$ and  $\changey:\coordym(\coordyd)\to \RR^p$, 
we obtain new coordinates charts 
$\barcoordx$ and $\barcoordy$, where
$$
\begin{array}{rcl@{\  ,\quad }rcl@{\  ,\quad }rcl}
\barcoordxp&=& \changex (\coordxp)
&
\barcoordxd&=& \coordxd
&
\barcoordxm&=& \changex \circ \coordxm
\ ,
\\[0.5em]
\barcoordyp&=& \changey (\coordyp)
&
\barcoordyd&=& \coordyd
&
\barcoordym&=& \changey \circ \coordym
\  .
\end{array}
$$
Then we have, for example, the following relationships between the expressions of the vector field $\bff$, the function $\bfh$, and the 
symmetric $2$-covariant tensor $\bfP$:%
\IfTwoCol{%
% two col
\begin{equation}
\label{LP186}
\renewcommand{\arraystretch}{1.3}
\begin{array}{rcl}
\bar f(\changex(x))&=& \frac{\partial \changex}{\partial x}(x) \,  f(x)
\  ,
\\
\bar h(\changex(x))&=& \changey(h(x))
\  ,
\\
\frac{\partial \changex}{\partial x}(x)^\top \bar P(\changex(x))\,  \frac{\partial \changex}{\partial x}(x)&=& P(x) \ ,
\end{array}
\end{equation}
}{%
% one col
\begin{equation}
\label{LP186}
\bar f(\changex(x))\,=\, \frac{\partial \changex}{\partial x}(x) \,  f(x)
\  ,\quad 
\bar h(\changex(x))\,=\, \changey(h(x))
\  ,\quad 
\frac{\partial \changex}{\partial x}(x)^\top \bar P(\changex(x))\,  \frac{\partial \changex}{\partial x}(x)\,=\, P(x) \ ,
\end{equation}
}%
where $(f,h,P)$ and $(\bar f,\bar h ,\bar P)$ are the expressions of $(\bff,\bfh,\bfP)$ in the corresponding 
coordinates $(\coordxp,\coordyp)$ and $(\barcoordxp,\barcoordyp)$, respectively. 

More insight can be gained in the context of observers
when $y$, coordinates for $\bfh(\bfx)$, can be used as part of coordinates 
for $\bfx$. This motivates the following assumption.
% This motivates the following hypothesis which is enforced in several results in this paper.
\begin{hypothesis}
\label{H1}
The function
$\bfh:\bfRR^n\to \bfRR^p$ is a submersion\footnote{%
% BEGIN FOOTNOTE
This means, the set
\  $\displaystyle 
\Ouv =  \left\{\bfx\in\bfRR^n\,  :\: \mbox{\rm Rank}\left(\bfdh(\bfx)\right)=p
\right\}
$,\  
% \IfTwoCol{%
% % two col
% \  $\displaystyle 
% \Ouv =  \left\{\bfx\in\bfRR^n\,  :\: \mbox{\rm Rank}\left(\bfdh(\bfx)\right)=p
% \right\}
% $,\  
% }{%
% % One col
% $$
% \Ouv\;=\; \left\{\bfx\in\bfRR^n\,  :\: \mbox{\rm Rank}\left(\bfdh(\bfx)\right)=p
% \right\}\ ,
% $$%
% }%
is open,
where $\bfdh$ is the differential of $\bfh$.
\IfReport{%
% report
See \glos{\ref{Glos1}}.
}{
% not report
}%
% END FOOTNOTE
} on a set $\Ouv$.
\end{hypothesis}
When this assumption holds, we have the following result. See \cite[Theorem I.2.1(2)]{Sakai.96}.
\par\vspace{0.5em}

\begin{theorem}[Local Submersion Theorem]
\label{thm4}
If $\bfh:\bfRR^n\to \bfRR^p$ is a submersion on a set $\Ouv$ then
$\bfh(\Ouv)$ is an open set and, for any point $\bfx_0$ of $\Ouv $, there exist
 a coordinate chart $\coordx$ around $\bfx_0$, 
a coordinate chart $\coordy$ around $\bfh(\bfx_0)$, with $\coordyd$ containing
$\bfh(\coordxd)$, and a submersion
$h^\comp :\coordxm(\coordxd)\to \RR^{n-p}$ on $\coordxm(\coordxd)$ such that 
$\changex =(h,h^\comp ):\coordxm(\coordxd)\to h(\coordxm(\coordxd))\times \RR^{n-p}$
is a $C^s$ diffeomorphism. Consequently, $\coordyxr$,
with $\coordyxrm=\changex\circ\coordxm$, 
is a coordinate chart\footnote{
% BEGIN FOOTNOTE
The subscript $\coordyd$ in $\coordyxrm$ is introduced to emphasize that this particular coordinate chart 
involves, in its construction, a 
coordinate chart for the $\bfy$-manifold $\bfRR^p$.
% END FOOTNOTE
}
around $\bfx_0$.
\end{theorem}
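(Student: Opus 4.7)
My plan is to reduce the statement to the classical Inverse Function Theorem. Fix $\bfx_0 \in \Ouv$ and start from any initial coordinate charts $\coordx$ around $\bfx_0$ and $\coordy$ around $\bfh(\bfx_0)$ with $\bfh(\coordxd) \subset \coordyd$. Let $h = \coordym \circ \bfh \circ \coordxm^{-1}$ and $x_0 = \coordxm(\bfx_0)$. By the submersion hypothesis, the Jacobian $\partial h/\partial x(x_0)$ has rank $p$. After a linear change in the $x$-coordinates (a coordinate permutation suffices), I can select $n-p$ of the coordinate directions whose associated linear projection $h^\comp_0 : \coordxm(\coordxd) \to \RR^{n-p}$ has the property that the square Jacobian of $\changex_0 := (h, h^\comp_0)$ at $x_0$ is invertible.

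The Inverse Function Theorem then gives an open neighborhood $U$ of $x_0$ on which $\changex_0$ is a $C^s$ diffeomorphism onto an open subset of $\RR^n$. Inside this image I pick a product box $V \times W$ with $V \subset \RR^p$ an open neighborhood of $h(x_0)$ and $W \subset \RR^{n-p}$ an open neighborhood of $h^\comp_0(x_0)$; replacing $U$ by $\changex_0^{-1}(V \times W)$ (which is open by continuity) yields $\changex_0(U) = V \times W$ exactly. To reach the image $h(U) \times \RR^{n-p}$ claimed in the theorem, I post-compose the second component with any $C^s$ diffeomorphism $\psi : W \to \RR^{n-p}$ (for instance, the standard stretching of an open ball onto $\RR^{n-p}$) and set $h^\comp := \psi \circ h^\comp_0$ and $\changex := (h, h^\comp)$. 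The resulting $\changex$ is a $C^s$ diffeomorphism from $U$ onto $V \times \RR^{n-p} = h(U) \times \RR^{n-p}$, and $h^\comp$ is a submersion on all of $U$ because its differential is the composition of the surjective projection $dh^\comp_0$ with the invertible $d\psi$. Pulling everything back through $\coordxm$ and calling the shrunk coordinate domain $\coordyxrd$ produces the required chart $\coordyxr$ with $\coordyxrm = \changex \circ \coordxm$.

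The openness of $\bfh(\Ouv)$ is then immediate from the construction: each $\bfx_0 \in \Ouv$ admits a neighborhood whose image under $\bfh$ equals, in the $\coordy$-coordinates, the open set $V$ containing $h(x_0)$; hence $\bfh(\Ouv)$ contains an open neighborhood of every one of its points.

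The main technical nuisance, rather than any deep idea, will be the simultaneous bookkeeping needed to shrink the coordinate domain so that (i) $\changex_0$ remains a diffeomorphism on it, (ii) its image is a genuine product set, and (iii) the $(n-p)$-dimensional factor can be stretched diffeomorphically onto all of $\RR^{n-p}$ without disturbing (i) or (ii). Once these three compatible restrictions are in place, the statement is a straightforward repackaging of the classical local normal form for submersions, and no further ingredients beyond the Inverse Function Theorem and a choice of stretching diffeomorphism are required.
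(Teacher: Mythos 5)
Your proof is correct and is the standard derivation from the Inverse Function Theorem: complete $h$ to a local diffeomorphism by adjoining a projection onto coordinates complementary to $p$ independent columns of $\partial h/\partial x(x_0)$, shrink to a product neighborhood, and post-compose with a diffeomorphism stretching the $(n-p)$-factor onto all of $\RR^{n-p}$; openness of $\bfh(\Ouv)$ then falls out of the local product form. The paper itself does not prove Theorem~\ref{thm4} but merely cites \cite[Theorem I.2.1(2)]{Sakai.96}, and your argument is the one that reference (and any standard text) gives, so there is nothing to flag.
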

\par\vspace{0.5em}
In this statement,
$\xrond=(\xrond_\indxra,\xrond_\indxrb,\ldots)$ is in the open set
$h^\comp (\coordxm(\coordxd))$.
The greek letters $\indxra$, $\indxrb$, \ldots, used as indices, run
over the range $\{1,2,\ldots,n-p\}$.
A coordinate chart $\coordyxr$ is, by nature, paired with the coordinate chart $\coordy$, with the same $y$ 
and, without loss of generality,
\IfTwoCol{%
% two col
\begin{eqnarray*}
\displaystyle 
\bfh(\coordyxrd)\;=\; \coordyd\ , \ \ y\;=\; \coordym(\bfh(\coordyxrm^{-1}(y,\xrond)))
\quad \forall (y,\xrond)\in \coordyxrm(\coordyxrd)
\  .
\end{eqnarray*}
}{%
% one col
\begin{eqnarray*}
&\displaystyle 
\bfh(\coordyxrd)\;=\; \coordyd
\  ,\\&\displaystyle 
y\;=\; \coordym(\bfh(\coordyxrm^{-1}(y,\xrond)))
\qquad \forall (y,\xrond)\in \coordyxrm(\coordyxrd)
\  .
\end{eqnarray*}
}%
When $(y,\xrond)$ are used as coordinates for $\bfx$, we decompose the expression 
$P$ of $\bfP$ as
\begin{equation}
\label{LP185}
P(y,\xrond)\;=\; \left(\begin{array}{cc}
P_{yy}(y,\xrond)
&
P_{y\xrond}(y,\xrond)
\\
P_{\xrond y}(y,\xrond)
&
P_{\xrond\xrond}(y,\xrond)
\end{array}\right)
\  .
\end{equation}
In such a case, changes of coordinates take the particular form
$$
(\bar y,\bar \xrond)\;=\; (\changey (y),\changexr (y,\xrond))
$$
where $\changey:\RR^p\to \RR^p$ is a $C^s$ diffeomorphism and
$\changexr :\RR^p\times \RR^{n-p}\to \RR^{n-p}$ is  such that $(\changey,\changexr )$
is a $C^s$ diffeomorphism.
% Complement
% \pointtocomp{See Complement \ref{complement45}.}{complement45}
\startarchive%
See \complement \ref{complement45}.%
\stoparchive%

Throughout the paper, we use objects from differential geometry.
\IfReport{%
% report
These objects are discussed in the glossary in Appendix \ref{sec:Glossary} 
for which \glos{x} are pointers.%
}{%
% not report
Some of these objects are explained in the glossary in Appendix \ref{sec:Glossary}.
}

\section{On Condition A3}
\label{sec13}

\subsection{%
Is Condition A3 necessary?
}
\label{sec:WhyA3}
To answer this question, we invoke the infinite gain margin property, which, in the context of our proposed observer, is 
defined as follows.

\begin{definition}[Infinite gain margin {\protect \cite[Definition 2.8]{57}}]
\label{def2}
Let $\bfP$ be a Riemannian metric on $\bfRR ^n$ and
\startmodif
%
% $\mathcal{C}$ a closed subset of $\bfRR^n$ with a non empty interior.
$\Ouv$ an open subset of $\bfRR^n$.
\stopmodif
An observer 
\begin{equation}
\label{LP58}
\dot{\bfhatx }\;=\; \bff(\bfhatx ) - \mathfrakbf{C}(\bfhatx ,\bfy)
\  ,
\end{equation}
where $\mathfrakbf{C}$ is a correction term, is said to have an infinite gain margin on
\startmodif
%
% $\mathcal{C}$
$\Ouv$
\stopmodif
with respect to $\bfP$ if, for any geodesic $\bfgamma ^*$ 
taking values in
\startmodif
%
% $\mathcal{C}$
$\Ouv$
\stopmodif
and minimizing on the maximal interval
$(s_1,s_2)$, we have%
\IfTwoCol{%
% two col
\\[0.7em]
$
\renewcommand{\arraystretch}{1.5}
\begin{array}{@{}l@{\  }c@{}}
\mbox{either }&
\displaystyle 
\frac{d\bfgamma  ^*}{ds}(s_4)^\top 
\bfP(\bfgamma   ^*(s_4) ) \,  \mathfrakbf{C}(\gamma   ^*(s_4) ,\bfh(\bfgamma   ^*(s_3) ))\; > \; 0
\\
\mbox{or } &
\displaystyle
\mathfrakbf{C}(\bfgamma   ^*(s_4) ,\bfh(\bfgamma   ^*(s_3) ))\; = \; 0
\end{array}
$\refstepcounter{equation}\label{LP61}\hfill$(\theequation)$
\\[0.5em]\null\hfill$\displaystyle
\qquad \forall s_3,s_4\in (s_1,s_2):\,  s_3 < s_4\ ,
$\\[-0.5em]
}{
% one col
\begin{equation}
\label{LP61}
\renewcommand{\arraystretch}{1.5}
\begin{array}{@{}l@{\quad  }c@{}}
either&
\displaystyle 
\frac{d\bfgamma  ^*}{ds}(s_4)^\top 
\bfP(\bfgamma   ^*(s_4) ) \,  \mathfrakbf{C}(\bfgamma   ^*(s_4) ,\bfh(\bfgamma   ^*(s_3) ))\; > \; 0
\\
or &
\displaystyle
\mathfrakbf{C}(\bfgamma   ^*(s_4) ,\bfh(\bfgamma   ^*(s_3) ))\; = \; 0
\end{array}
\qquad \forall s_3,s_4\in (s_1,s_2):\,  s_3 < s_4\ .
\end{equation}
}%
% % BEGIN MODIFIED
% \IfTwoCol{%
% % two col
% \\[0.7em]$\displaystyle 
% \frac{d\gamma  ^*}{ds}(s_4)^\top 
% P(\gamma   ^*(s_4) ) \,  \mathfrak{C}(\gamma   ^*(s_4) ,h(\gamma   ^*(s_3) ))\; > \; 0
% $\refstepcounter{equation}\label{LP61}\hfill$(\theequation)$
% \\\null\hfill$\displaystyle
% \qquad \forall s_3,s_4\in (s_1,s_2):\,  s_3 < s_4\ .
% $\\[-0.5em]
% }{
% % one col
% \begin{equation}
% \label{LP61}
% \frac{d\gamma  ^*}{ds}(s_4)^\top 
% P(\gamma   ^*(s_4) ) \,  \mathfrak{C}(\gamma   ^*(s_4) ,h(\gamma   ^*(s_3) ))\; > \; 0
% \qquad \forall s_3,s_4\in (s_1,s_2):\,  s_3 < s_4\ .
% \end{equation}
% }%
% % BEGIN MODIFIED
\end{definition}

From the first
order variation formula (see, for instance, 
\cite[Theorem 6.14]{Spivak.79} or
\cite[Theorem 5.7]{Isac.Nemeth.08}),
and properties of Riemannian distances and geodesics,
the upper right-hand Dini derivative
of the distance between
$\bfx = \bfgamma^*(s_3)$ and $\bfhatx = \bfgamma^*(s_4)$
satisfies
\IfTwoCol{%
% two col
% \\[0.7em]$\displaystyle 
% \Did d(\hat x, x) \; \leq \;
% \left[
% \frac{d\gamma ^*}{ds}(s_4)^\top P(\gamma ^*(s_4))  \,   f(\gamma   ^*(s_4) ) 
% \right.
% $\refstepcounter{equation}\label{LP7}\hfill$(\theequation)$\\[0.3em] \null \hfill  $\displaystyle 
% % \left[
% % \frac{d\gamma ^*}{ds}(s_4)^\top P(\gamma ^*(s_4))  \,   f(\gamma   ^*(s_4) ) 
% % \right.$\hfill \null \\\null \hfill $\displaystyle 
% \left.
% -
% \frac{d\gamma ^*}{ds}(s_3)^\top P(\gamma ^*(s_3))  \,   f(\gamma   ^*(s_3) ) 
% \right]
% $\quad \null  \\\null \hfill $\displaystyle 
% -\frac{d\gamma ^*}{ds}(s_4)^\top P(\gamma ^*(s_4))  \,   \mathfrak{C}(\gamma   ^*(s_4) ,
% h(\gamma   ^*(s_3) )
%  )) 
% \  .
% $\\[0.7em]
\\[0.7em]$\displaystyle 
\Did d(\bfhatx, \bfx) \; \leq \;
-\frac{d\bfgamma ^*}{ds}(s_4)^\top \bfP(\bfgamma ^*(s_4))  \,
\mathfrakbf{C}(\bfgamma   ^*(s_4) ,
\bfh(\bfgamma   ^*(s_3) )
 ))
$\\[0.3em]\null \quad  $\displaystyle
+\left[
\frac{d\bfgamma ^*}{ds}(s_4)^\top \bfP(\bfgamma ^*(s_4))  \,   \bff(\bfgamma   ^*(s_4) ) 
\right.
%$\hfill \null \\\null \hfill $\displaystyle 
$\refstepcounter{equation}\label{LP7}\hfill$(\theequation)$\\ \null \hfill  $\displaystyle 
\left.
-
\frac{d\bfgamma ^*}{ds}(s_3)^\top \bfP(\bfgamma ^*(s_3))  \,   \bff(\bfgamma   ^*(s_3) ) 
\right]
\  .
$\\[0.7em]
}{%
% one col
\\[1em]$\displaystyle 
\Did d(\bfhatx, \bfx) \; \leq \; 
\left[
\frac{d\bfgamma ^*}{ds}(s_4)^\top \bfP(\bfgamma ^*(s_4))  \,   \bff(\bfgamma   ^*(s_4) ) 
-
\frac{d\bfgamma ^*}{ds}(s_3)^\top \bfP(\bfgamma ^*(s_3))  \,   \bff(\bfgamma   ^*(s_3) ) 
\right]
$\refstepcounter{equation}\label{LP7}\hfill$(\theequation)$\hfill \null \\\null \hfill $\displaystyle 
-\frac{d\bfgamma ^*}{ds}(s_4)^\top \bfP(\bfgamma ^*(s_4))  \,   \mathfrakbf{C}(\bfgamma   ^*(s_4) ,
\bfh(\bfgamma   ^*(s_3) )
 )) 
\  .
$\\[1em]
}%
Hence, when (\ref{LP61}) holds, 
the correction term $\mathfrak{C}$ always contributes to the decrease of the distance between
$\bfhatx $ and $\bfx$. If the contribution of $\mathfrak{C}$ were  to be negative,
then the desired decrease of the distance would have to be provided by 
the dynamics of the system dictated by $\bff$
-- namely, by the term around brackets in \eqref{LP7}.

\begin{remark}
\label{rem1}
Although observers without infinite gain margin do exist, those with infinite gain margin are  quite common.
They are
guaranteed to exist for any system as in (\ref{eqn:Plant1}) belonging to the following family:
\begin{description}
\item[\normalfont\textit{
\startmodif
``Euclidean family'':
\stopmodif
}]
\textit{%
\startmodif
There exists a coordinate chart $\coordx$ for which the expression $h$ of the output is
\begin{equation}
\label{LP235}
h(x)\;=\; Hx
\end{equation}
and the expression $P$ of a 
Riemannian metric $\bfP$ satisfying Condition A2 is a constant matrix.%
\stopmodif
}
\end{description}
Indeed, when $h$ satisfies this property,
(\ref{LP61}) simplifies to\footnote{%
The fact that $P$ is constant allows to express any minimal geodesic $\gamma^*$
between $x$ and $\hat x$
as a straight line connecting $x = \gamma^*(s_3)$ to $\hat x = \gamma^*(s_4)$.
}
$$
(\hat x- x)^\top P\,  \mathfrakbf{C}(\hat x , H x)\; >\; 0
\qquad \forall (\hatx,x) \in  \coordxm(\coordxd)^2:\hatx  \neq x 
$$
and it suffices to pick
$$
\mathfrakbf{C}(\hat x , Hx)\;=\; \kunbf_E(\hat x) P^{-1} H^\top (H\hatx - Hx)
\  ,
$$
\!\! where 
$\kunbf _E:\bfRR ^n\to \RR$
is a continuous function.
% and, as expected, $\mathfrakbf{C}$ depends on the plant output $y = Hx$. See \cite{Praly.01.NOLCOS.Observers}.
% Of course, this is the case for several observers available in the literature.
% In particular, a Kalman filter for \eqref{eqn:Plant1} with a linear output map as in (\ref{LP235}) can be recovered from
% the observer in \eqref{LP58} 
% %can be designed for $\Ouv = \bfRR ^n$ and
% using the choice of $\mathfrakbf{C}$ given above.
% , which reduces
% to the observer
% $$
% \dot\hatx = f(\hat x) - \kunbf_E(\hat x) P^{-1} H^\top (H\hatx - y)
% $$
% A Luenberger observer 
% for \eqref{eqn:Plant1} in the linear time-invariant case,
% namely, when $f(x) = Ax$ and $h(x) = Hx$, 
% can be recovered from the construction in \eqref{LP58} 
% %can be designed for $\Ouv = \bfRR ^n$ and
% using the choice of $\mathfrakbf{C}$ given above  but with constant function $\kunbf_E$ giving rise to the constant observer gain $L$, which reduces
% to the observer
% $$
% \dot\hatx = A\hat x - L (H\hatx - y)\ .
% $$
\end{remark}

In the particular case where the correction term in the observer is of gradient type
as in \eqref{eqn:GeodesicObserverVectorField},
the infinite gain margin property is equivalent to Condition A3. 
\startmodif
This equivalence results from
\stopmodif
the following
consequence of \eqref{eqn:GeodesicObserverVectorField}: 
%and \glos{\ref{Glos2}}:%
\IfTwoCol{%
% two col
\\[0.7em]$\displaystyle 
\frac{d}{ds}\wp (h(\gamma   ^*(s)) ,h(\gamma   ^*(s_3) ))
$\hfill\null\\\null\hfill$
\begin{array}{@{}cl@{}}
=&\displaystyle 
\frac{\partial \wp }{\partial y_1}(h(\gamma ^*(s)),h(\gamma ^*(s_3)))
\frac{\partial h}{\partial x}(\gamma ^*(s))\frac{d\gamma ^*}{ds}(s)
\  ,
\\[0.5em]
 = &
\frac{1}{k_E(x)}\,\displaystyle 
\startmodif
\frac{d\gamma  ^*}{ds}(s)^\top
P(\gamma   ^*(s) ) 
\mathfrak{C}(\gamma   ^*(s) ,h(\gamma   ^*(s_3) ))^\top
\stopmodif
\,  ,
\end{array}\\[0.7em]
$
}{%
% one col
\begin{eqnarray*}
\frac{d}{ds}\wp (h(\gamma   ^*(s)) ,h(\gamma   ^*(s_3) ))&=&
\frac{\partial \wp }{\partial y_1}(h(\gamma ^*(s)),h(\gamma ^*(s_3)))
\frac{\partial h}{\partial x}(\gamma ^*(s))\frac{d\gamma ^*}{ds}(s)
\  ,
\\
& = &
 \startmodif
\frac{1}{k_E(x)}\,\displaystyle 
\frac{d\gamma  ^*}{ds}(s)^\top
P(\gamma   ^*(s) ) 
\mathfrak{C}(\gamma   ^*(s) ,h(\gamma   ^*(s_3) ))^\top
\stopmodif
\  ,
\end{eqnarray*}
}%
written with the coordinate chart $\coordx$ where
$$
x\;=\; \gamma   ^*(s_3)
\quad ,\qquad 
\hat x\;=\; \gamma   ^*(s)
\  .
$$
We conclude that, if we want an observer of gradient type with
a correction term contributing to the decrease of the distance between
the state and its estimate, Condition A3 must hold.

\subsection{A Sufficient Condition for the Satisfaction of Condition A3}
\label{sec:A3sufficient}
\startmodif
When the output function $\bfh$ is given, the function $\wp$ and the metric $\bfP$ are the only objects remaining at our disposal
to  satisfy Condition A3.
\stopmodif
\subsubsection{Choice of the Function $\wp$}
\IfTwoCol{%
% two col
We need the function $\wpunbf$ to quantify
the ``gap'' between $\bfh(\bfhatx)$ and $\bfh(\bfx)$.
This motivates us to equip the $\bfy$-manifold $\bfRR^p$ with a $C^s$ complete Riemannian metric $\bfPy$ and obtain a distance,
}{%
% one col
To satisfy condition (\ref{LP65}) we can view the function $\wpunbf$ as a gauge function (in the sense
of \cite{Massera-Schaaffer}). It quantifies the ``gap'' between $\bfh(\bfhatx)$ and $\bfh(\bfx)$.
This motivates us for considering a Riemannian distance in the $\bfy$-manifold $\bfRR^p$.
Precisely we equip $\bfRR^p$ with a $C^s$ complete Riemannian metric $\bfPy$. This metric gives rise to a 
distance,
}%
which we denote $\dy $. 
Then, we choose the function $\wp$ as
\begin{equation}
\label{LP109}
\wpunbf(\bfy_1,\bfy_2)\;=\; \dy (\bfy_1,\bfy_2)^2\  .
\end{equation}
To have the required smoothness property on $\wp$, we need an extra property for $\bfPy$, 
as the following lemma states.

\begin{lemma}
\label{lem4}
Assume the complete metric $\bfPy$ is such that any piece of geodesic $\bfgammay $ is minimizing.
Then the function $\wp$ defined in (\ref{LP109}) is
$C^s$ and, for any $\bfy_1$, any coordinate chart $\coordy$ around $\bfy_1$,
and any $\bfy_2$ in $\coordyd$
that is linked to $\bfy_1$ by a 
(minimizing) geodesic $\bfgammay ^*$  with values in $\coordyd$
on $(s_1,s_2)$,
\startmodif
i.e. we have
\stopmodif
\IfReport{%
% report
$$
\gammay ^*(s_1)=y_1
\quad ,\qquad   \gammay ^*(s_2)=y_2
\quad ,\qquad 
\gammay ^*(s) \in \coordym(\coordyd)
\quad   
\forall s \in (s_1,s_2)
\  ,
$$
}{%
% nor report
$$
\gammay ^*(s_1)=y_1
\  ,\quad  \gammay ^*(s_2)=y_2
\  ,\quad 
\gammay ^*(s) \in \coordym(\coordyd)
\   
\forall s \in (s_1,s_2)
\,  ,
$$
}%
we obtain
\begin{equation}
\label{LP110}
\renewcommand{\arraystretch}{2.3}
\begin{array}{c}
\displaystyle 
\left.\frac{\partial ^2\wp}{\partial y_1^2}(y_1,y_2)\right|_{y_2=y_1}\;=\; 2\,  \Py (y_1)
\  ,
\\
\displaystyle 
\frac{\partial \dy^2 }{\partial y_1} (y_1,y_2)^\top\;=\; 2\,  
\frac{\Py (y_1)\frac{d\gammay ^*}{ds}(s_1) \dy (y_1,y_2)}{\sqrt{\frac{d\gammay ^*}{ds}( s_1) ^\top \Py (\gammay ^*( s_1))\frac{d\gammay ^*}{ds}( s_1)}}
\  .
\end{array}
\end{equation}
\end{lemma}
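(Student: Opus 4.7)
The plan is to derive the two identities of (\ref{LP110}) from the first variation of arc length and from a local normal-coordinate expansion, with the hypothesis on geodesics providing the regularity of $\dy^2$.

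I would first establish the $C^s$ smoothness of $\wp$. The assumption that every piece of geodesic of $\bfPy$ is minimizing, combined with completeness, forces the cut locus of every point to be empty. Consequently, for each $\bfy_1$ the exponential map $\exp_{\bfy_1}:T_{\bfy_1}\bfRR^p\to\bfRR^p$ is a $C^s$ diffeomorphism, and smooth dependence of solutions of the geodesic equation on their initial data gives joint $C^s$ regularity of the map $(\bfy_1,\bfy_2)\mapsto \exp_{\bfy_1}^{-1}(\bfy_2)$. Since
\[\wp(\bfy_1,\bfy_2)\;=\;\dy(\bfy_1,\bfy_2)^2\;=\;\bigl\langle\exp_{\bfy_1}^{-1}(\bfy_2),\exp_{\bfy_1}^{-1}(\bfy_2)\bigr\rangle_{\bfPy(\bfy_1)},\]
the claim follows.

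For the expression of $(\partial\dy^2/\partial y_1)^\top$, I would embed $y_1$ in a smooth one-parameter family of endpoints $y_1(t)$ with $y_1(0)=y_1$ and arbitrary initial velocity $w$, keeping $y_2$ fixed. By the smooth dependence established above, the associated minimizing geodesics $\gammay^*_t$ form a $C^1$ variation of $\gammay^*$ whose variation field $V$ satisfies $V(s_1)=w$ and $V(s_2)=0$. Applying the classical first variation formula for length to the (not necessarily unit-speed) geodesic $\gammay^*$ produces, thanks to $V(s_2)=0$, a single boundary term at $s=s_1$ expressed through $w$, $\Py(y_1)$ and $\tfrac{d\gammay^*}{ds}(s_1)$. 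Combining this with $L(\gammay^*_t)=\dy(y_1(t),y_2)$ and the chain rule $\tfrac{d(\dy^2)}{dt}=2\dy\,\tfrac{d\dy}{dt}$ lets me read off the coefficient of $w$ as $\partial\dy^2/\partial y_1$ and obtain the stated identity.

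For the Hessian identity I would pass to Riemannian normal coordinates $v$ centered at $y_1$. In those coordinates $\Py$ equals the identity at the origin, and by the Gauss lemma $\dy(y_1,\cdot)^2$ coincides with $|v|^2$ in a neighborhood of $v=0$, so its Hessian at $v=0$ is $2I$. Because the first derivatives of $\dy^2$ vanish on the diagonal, the Hessian transforms covariantly under change of chart, yielding $2\Py(y_1)$ in the original chart $\chi$; the symmetry $\dy(y_1,y_2)=\dy(y_2,y_1)$ gives the same value for the Hessian in the first argument.

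The main obstacle is the clean justification of the joint $C^s$ regularity of $(\bfy_1,\bfy_2)\mapsto \exp_{\bfy_1}^{-1}(\bfy_2)$ from the minimizing-geodesic hypothesis, and checking that the sign and the speed normalization in the first variation formula match the convention adopted in the second line of (\ref{LP110}).
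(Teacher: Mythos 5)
Your proposal is correct and follows the paper's logic closely on two of the three points, while taking a genuinely different — and arguably cleaner — route on the third.

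For the $C^s$ regularity, your observation that the minimizing-geodesic hypothesis plus completeness forces an empty cut locus, hence a global diffeomorphism $\exp_{\bfy_1}$, is precisely the mechanism the paper invokes: the paper's one-line proof states exactly that no geodesic has a cut point and then cites Sakai's Proposition III.4.8 to supply the joint regularity of $(\bfy_1,\bfy_2)\mapsto\exp_{\bfy_1}^{-1}(\bfy_2)$ together with the formulas in \eqref{LP110}. Your honest flag about the joint $C^s$ regularity is really the reason the paper cites Sakai rather than re-proving it; the inverse-function-theorem argument on $(p,v)\mapsto(p,\exp_p(v))$ (no conjugate points since all geodesics minimize) is the standard way to close that gap. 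For the gradient identity (second line of \eqref{LP110}), your first-variation argument with a variation field $V$ satisfying $V(s_1)=w$, $V(s_2)=0$ is exactly what the paper's supplementary material does. For the Hessian identity, however, you diverge: you pass to Riemannian normal coordinates at $\bfy_1$, invoke the Gauss lemma to identify $\dy^2$ with the Euclidean square norm near the origin, read off the Hessian as $2I$ there, and then use the vanishing of the first derivatives on the diagonal to transform tensorially back to the chart $\chi$, recovering $2\Py(y_1)$. The paper's supplementary proof instead performs a direct second-order Taylor expansion of $\gammay^*(s)$, $\frac{d\gammay^*}{ds}$ and $\Py(\gammay^*(s))$ about $s=s_1$, substitutes into the length integral, and uses the geodesic equation (in the Euler-Lagrange form $2\sum_\indyk\Py_{\indyi\indyk}[c_1]_\indyk = -2\sum_\indyk[\partial_y\Py_{\indyi\indyk}\cdot v_1][v_1]_\indyk + \sum_{\indyk,\indyl}[v_1]_\indyk\partial_{y_\indyi}\Py_{\indyk\indyl}[v_1]_\indyl$) to cancel the first-order correction, arriving at $\dy(y_1,y_2)^2=[y_2-y_1]^\top\Py(y_1)[y_2-y_1]+O(|y_2-y_1|^4)$. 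Your normal-coordinate route is more conceptual and shorter; the paper's Taylor-expansion route is more explicit and doubles as a self-contained verification that stays within a single chart. Your use of the symmetry $\dy(y_1,y_2)=\dy(y_2,y_1)$ to convert the Hessian in the first argument to the Hessian in the second argument (the one the Gauss lemma produces) is the right move and is correctly justified by the vanishing of the first derivative on the diagonal, which makes the Hessian transform tensorially; the identity $\tilde Q(0)=J^\top Q(y_1)J=I$ then hands you exactly $\mathrm{Hess}=2(JJ^\top)^{-1}=2Q(y_1)$.
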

\IfReport{%
% report
\begin{proof}
The condition implies that whatever $\bfy_1$ is in the $\bfy$-manifold $\bfRR^p$, no geodesic emanating from this point has a cut 
point. Then the claim follows from \cite[Proposition III.4.8]{Sakai.96}.
% Complement
% \pointtocomp{See Complement \ref{complement26} for more details.}{complement26}
\startarchive%
See \complement \ref{complement26} for more details.
\stoparchive%
\end{proof}
}{%
% not report
\begin{proof}
The claim follows from \cite[Proposition III.4.8]{Sakai.96} since, whatever $\bfy_1$ is in the
$\bfy$-manifold $\bfRR^p$, no geodesic emanating from this point has a cut 
point. 
% Complement
% \pointtocomp{See Complement \ref{complement26} for more details.}{complement26}
\startarchive%
See \complement \ref{complement26} for more details.
\stoparchive%
\end{proof}
}
\IfReport{%
% report
In light of the statements above, some of the results in this paper assume the following
property.
}{%
% not report
With what precedes, we consider the following assumption.
}%
\begin{hypothesis}
\label{H2}
The $\bfy$-manifold $\bfRR^p$ is equipped with a complete metric $\bfPy $ such that
any corresponding
geodesic $\bfgammay $ is minimizing on $\bfRR$ and the function $\wpunbf$ is the square of the corresponding 
distance.
\end{hypothesis}

The simplest way to satisfy the condition in Lemma \ref{lem4} is to choose the metric $\bfPy $ flat.
Precisely, with
\begin{list}{}{%
\parskip 0pt plus 0pt minus 0pt%
\topsep 0pt plus 0pt minus 0pt%
\parsep 0pt plus 0pt minus 0pt%
\partopsep 0pt plus 0pt minus 0pt%
\itemsep 0pt plus 0pt minus 0pt%
\settowidth{\labelwidth}{\quad --}%
\setlength{\labelsep}{0.5em}%
\setlength{\leftmargin}{\labelwidth}%
\addtolength{\leftmargin}{\labelsep}%
}
\item[--] $\bfRR^p$ equipped  with a global coordinate chart
$(\bar y,\bfRR^p,i_d)$ with the identity matrix $I_p$ as expression of a metric;
\item[--] $\bfchangey$ an homeomorphism from $\bfRR^p$ onto $\bfRR^p$ ,
\end{list}
\IfTwoCol{%
% two col
for any 
$\bfy_0$ in $\bfRR^p$ and any coordinate chart $\coordy$ around $\bfy_0$, the expressions $\changey$ 
and $\Py$ of $\bfchangey$ and $\bfPy$ satisfy (see (\ref{LP186})):
$$
\bar y= \changey( y)
\  ,\quad 
\Py(y)=  \frac{\partial \changey}{\partial y}(y)^\top I_p \frac{\partial \changey}{\partial y}(y)
\qquad \forall y\in\coordym(\coordyd)\  .
$$
}{
% one col
we choose $\bfPy$ as the pullback via $\bfchangey$ of the metric $I_p$. This means that, for any 
$\bfy_0$ in $\bfRR^p$ and any coordinate chart $\coordy$ around $\bfy_0$, the expressions $\changey$ 
and $\Py$ of $\bfchangey$ and $\bfPy$ satisfy (see (\ref{LP186})):
$$
\bar y\;=\; \changey( y)
\quad ,\qquad 
\Py(y)\;=\;  \frac{\partial \changey}{\partial y}(y)^\top I_p \frac{\partial \changey}{\partial y}(y)
\qquad \forall y\in\coordym(\coordyd)\  .
$$
}%
This implies that the distance $e$ is Euclidean when $\bfy$ is expressed
with the specific coordinates $\bar y$ which may not necessarily be the physical
quantities provided by the sensors.

\subsubsection{Choice of the Metric $\bfP$}
With the function $\wpunbf$ being the square of the distance $e$ in the $\bfy$-manifold $\bfRR^p$
as in \eqref{LP109},
\startmodif
in view of the interpretation of Condition A3 in terms of geodesic monotonicity,
\stopmodif
the simplest case for this condition to hold is when the image under $\bfh$ of a geodesic in the $\bfx$-manifold  is a 
geodesic in the $\bfy$-manifold.

\begin{proposition}
\label{prop12}
Suppose Assumptions~\ref{H1} and \ref{H2} hold. 
Then, Condition A3 is satisfied if any
geodesic $\bfgamma ^*$, in the $\bfx$-manifold $\bfRR^n$, that takes values in the
\startmodif
%
% closed set $\mathcal{C}\subset\bfRR^n$
open set $\Ouv\subset\bfRR^n$
\stopmodif
on a maximal interval $(s_1,s_2)$, is such that
$s\in(s_1,s_2) \mapsto \bfh\left(\bfgamma^*(s)\right)$
is either constant or a geodesic in the $\bfy$-manifold $\bfRR^p$.
% % BEGIN MODIFIED
% $\bfh\left(\left.\bfgamma^*\right|_{(s_1,s_2)}\right)$ is a geodesic in the $\bfy$-manifold $\bfRR^p$.
% % END MODIFIED
\end{proposition}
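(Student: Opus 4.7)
The plan is to verify the two parts of Condition A3 separately, with the geodesic-image hypothesis doing most of the work for the monotonicity inequality.

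First I would dispose of the pointwise conditions in (\ref{LP65}). Since $\wpunbf(\bfy_1,\bfy_2) = \dy(\bfy_1,\bfy_2)^2$ and $\dy(\bfy,\bfy) = 0$, the equality $\wpunbf(\bfy,\bfy) = 0$ is immediate. For the positivity of $\bfd_1^2\wpunbf(\bfy,\bfy)$, I invoke Lemma \ref{lem4}, whose hypothesis is guaranteed by Assumption \ref{H2}: in any coordinate chart around $\bfy$ one has
$$
\left.\frac{\partial^2 \wp}{\partial y_1^2}(y_1,y_2)\right|_{y_2 = y_1} = 2\,\Py(y_1),
$$
and $\Py(y_1)$ is positive definite because $\bfPy$ is a Riemannian metric.

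Next I would turn to the key inequality (\ref{LP99}). Fix a geodesic $\bfgamma^*$ with values in $\Ouv$ minimizing on its maximal interval $(s_1, s_2)$ and fix $s_3, s_4 \in (s_1, s_2)$ with $s_3 < s_4$ and $\bfh(\bfgamma^*(s_3)) \neq \bfh(\bfgamma^*(s_4))$. By the standing hypothesis of the proposition, the map $s \mapsto \bfh(\bfgamma^*(s))$ is on $(s_1, s_2)$ either constant or a geodesic $\bfgammay$ in the $\bfy$-manifold. The constant alternative is excluded by $\bfh(\bfgamma^*(s_3)) \neq \bfh(\bfgamma^*(s_4))$, so $\bfgammay$ is a genuine (non-trivial) geodesic on $(s_1, s_2)$.

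I would now use the full strength of Assumption \ref{H2}: every geodesic in the $\bfy$-manifold is minimizing on all of $\bfRR$. Because $\bfgammay$ has constant speed $c := \sqrt{\smash[b]{\tfrac{d\gammay}{ds}^{\!\top}\Py(\gammay)\tfrac{d\gammay}{ds}}}$ (a general property of geodesics) and is minimizing between any two of its points, one has
$$
\dy(\bfgammay(s), \bfgammay(s_3)) = c\,|s - s_3| \qquad \forall\, s \in (s_1, s_2),
$$
and $c > 0$ because otherwise $\bfgammay$ would be constant, again contradicting $\bfh(\bfgamma^*(s_3)) \neq \bfh(\bfgamma^*(s_4))$. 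Consequently
$$
\wpunbf(\bfh(\bfgamma^*(s)), \bfh(\bfgamma^*(s_3))) = c^2 (s - s_3)^2,
$$
whose $s$-derivative equals $2c^2(s - s_3) > 0$ for all $s \in (s_3, s_4)$, establishing (\ref{LP99}).

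I do not expect any real obstacle here: the non-trivial content has been pushed into Lemma \ref{lem4} and into the hypothesis that $\bfh\circ\bfgamma^*$ is a geodesic. The only point requiring a small amount of care is ensuring that the constant-speed property of $\bfgammay$ combined with the global minimizing property of Assumption \ref{H2} really yields $\dy(\bfgammay(s), \bfgammay(s_3)) = c|s - s_3|$ rather than merely an inequality; but this is standard once one recognizes that $\bfgammay$ itself realizes the infimum of length between its endpoints.
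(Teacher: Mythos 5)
Your proof is correct, and the overall strategy coincides with the paper's: reduce the verification of Condition~A3 to the fact that, under Assumption~\ref{H2}, the geodesic $\bfgammay = \bfh\circ\bfgamma^*$ is globally minimizing, so that the distance between its points is explicitly computable. The paper instead invokes the gradient formula in the second identity of~\eqref{LP110} to compute $\frac{d}{ds}\wpunbf(\bfh(\bfgamma^*(s)),\bfh(\bfgamma^*(s_3)))$ as $2\,\dy(\cdot,\cdot)$ times the Riemannian speed of $\bfh\circ\bfgamma^*$, and then argues the right-hand side is strictly positive. You sidestep the gradient formula for~\eqref{LP99} and simply write $\dy(\bfgammay(s),\bfgammay(s_3)) = c\,|s - s_3|$ using the constant-speed and minimizing properties, so $\wpunbf$ becomes $c^2 (s - s_3)^2$ and the inequality is immediate. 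Both routes use the same two ingredients (H2 giving minimizing geodesics, the geodesic-image hypothesis, and the dichotomy constant/non-constant), but your computation of the distance is more elementary and makes the positivity $c>0$ self-explanatory, whereas the paper (somewhat obliquely) invokes Assumption~\ref{H1} at the end to argue the speed factor is nonzero. You also make the verification of the pointwise condition~\eqref{LP65} via Lemma~\ref{lem4} explicit, which the paper's proof leaves implicit; that is a small but welcome addition.
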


\begin{remark}
\label{rem2}
The
\startmodif
``Euclidean family''
\stopmodif
introduced in Remark \ref{rem1} gives the simplest case we can think of for
this property to hold. Indeed with choosing a constant matrix for the
expression of $\bfPy$, we take advantage of the property that the image by a linear function of a straight line is a 
straight line.
\end{remark}

\IfTwoCol{%
% two col
\begin{proof}
\startmodif
By assumption, for any geodesic
$\bfgamma ^*$ taking values in
%
% $\mathcal{C}$
$\Ouv$
and minimizing on the maximal interval $(s_1,s_2)$, the function $s\in(s_1,s_2) \mapsto \bfh\left(\bfgamma^*(s)\right)$ is either constant or a 
geodesic
$s\in(s_1,s_2) \mapsto \gammay^*(s)$ in the $\bfy$-manifold $\bfRR^p$. In the former case, we have
$$
\bfh(\bfgamma ^*(s_3)) = \bfh(\bfgamma ^*(s_4))\qquad \forall s_3,s_4 \in (s_1,s_2):\  
s_3<s_4\  .
$$
In the latter case,
we have
\\[0.7em]\null \hfill $\displaystyle 
\frac{d \bfh\circ\bfgamma ^*}{d s}(s)= \frac{d\gammay^*}{ds}(s) \neq  0 \quad  \forall s\in (s_3,s_4)
$\hfill\null\\[0.2em]
and
\\[-0.4em]\null\hfill$\displaystyle
\bfh(\bfgamma ^*(s_3)) \neq \bfh(\bfgamma ^*(s_4))
$\quad \null \hfill  \null \\[0.3em]\null\hfill$\displaystyle   
\forall s_3,s_4 \in (s_1,s_2):\:   
s_3<s_4 
\  .
$\\[0.7em]
On  another hand, with Assumption~\ref{H2}, any geodesic in the $\bfy$-manifold is minimizing, so
(\ref{LP110}) gives,
\stopmodif
\\[0.7em]$\displaystyle 
\frac{d}{ds}\left\{\wpunbf(\bfh(\bfgamma ^*(s) ,\bfh(\bfgamma ^*(s_3))\right\}
\; =\; \displaystyle 
2\,  \dy (\bfh(\bfgamma ^*(s) ,\bfh(\bfgamma ^*(s_3))\; \times
$\hfill \null \\[0.3em]\null\hfill $\displaystyle 
\times
\sqrt{
\frac{d \bfh\circ\bfgamma ^*}{d s}(s)^\top \bfPy (\bfgamma ^*(s)) \frac{d \bfh\circ\bfgamma ^*}{d s}(s)
}
$\qquad \null\\[0.5em]\null\hfill$\displaystyle
\forall s\in (s_3, s_2)\  ,\quad \forall s_3 \in (s_1,s_2)\  .
$\\[0.7em]
where, $\dy$ being a distance for $\bfRR^p$, 
the right-hand side is strictly positive. Hence, (\ref{LP99}) holds.
\end{proof}
}{
% one col
\startmodif
\begin{proof}
By assumption, for any  geodesic
$\bfgamma ^*$ taking values in
%
% $\mathcal{C}$
$\Ouv$
and minimizing on the maximal interval $(s_1,s_2)$, the function $s\in(s_1,s_2) \mapsto \bfh\left(\bfgamma^*(s)\right)$ is either constant or a 
geodesic
$s\in(s_1,s_2) \mapsto \gammay^*(s)$ in the $\bfy$-manifold $\bfRR^p$. In the former case, we have
$$
\bfh(\bfgamma ^*(s_3)) = \bfh(\bfgamma ^*(s_4))\qquad \forall s_3,s_4 \in (s_1,s_2):\  
s_3<s_4\  .
$$
In the latter case, we have
$$
\frac{d \bfh\circ\bfgamma ^*}{d s}(s)= \frac{d\gammay^*}{ds}(s) \neq  0
\quad \forall s\in (s_3,s_4)
\quad \  \& \quad \  
\bfh(\bfgamma ^*(s_3)) \neq \bfh(\bfgamma ^*(s_4))
\qquad \forall s_3,s_4 \in (s_1,s_2)\!:\,   
s_3<s_4 
\:  .
$$
On  another hand, with Assumption~\ref{H2}, any geodesic in the $\bfy$-manifold is minimizing, so
(\ref{LP110}) gives,
\\[0.7em]$\displaystyle 
\frac{d}{ds}\left\{\wpunbf(\bfh(\bfgamma ^*(s) ,\bfh(\bfgamma ^*(s_3))\right\}
\; =\; \displaystyle 
2\,  \dy (\bfh(\bfgamma ^*(s) ,\bfh(\bfgamma ^*(s_3)) \sqrt{
\frac{d \bfh\circ\bfgamma ^*}{d s}(s)^\top \bfPy (\bfgamma ^*(s)) \frac{d \bfh\circ\bfgamma ^*}{d s}(s)
}
$\qquad \null\\[0.5em]\null\hfill$\displaystyle
\forall s\in (s_3, s_2)\  ,\quad \forall s_3 \in (s_1,s_2)\  ,
$\\[0.7em]
where, $\dy$ being a distance for $\bfRR^p$, 
the right-hand side is strictly positive. Hence, (\ref{LP99}) holds.
\end{proof}
\stopmodif
}%

\par\vspace{1em}
To make the condition in Proposition \ref{prop12} more explicit, we note that
$\bfh(\bfgamma)$ is a geodesic in the $\bfy$-manifold $\bfRR^p$ if and only if $\bfh(\bfgamma)$ satisfies
the geodesic equation and therefore, thanks to \cite[Theorem 9.12]{Lee.13} for example, if and only if, the following holds:
\begin{list}{}{%
\parskip 0pt plus 0pt minus 0pt%
\topsep 0pt plus 0pt minus 0pt%
\parsep 0pt plus 0pt minus 0pt%
\partopsep 0pt plus 0pt minus 0pt%
\itemsep 0pt plus 0pt minus 0pt%
\settowidth{\labelwidth}{--}%
\setlength{\labelsep}{0.5em}%
\setlength{\leftmargin}{\labelwidth}%
\addtolength{\leftmargin}{\labelsep}%
}
\item[--]
\startmodif
for any $\bfx_0$ in
%
% $\mathcal{C}$
$\Ouv$
\stopmodif
and any pair of coordinate charts
$\coordx$ around $\bfx_0$
and  $\coordy$ around $\bfh(\bfx_0)$, with ``objects'' expressed in these coordinates and
with $\mbox{\large $\Gamma$}^\indxa$ and $\mbox{\large$\Gammay$}^\indyi$ denoting the respective Christoffel 
symbol matrices,
namely
\glos{\ref{item:RiemannianMetric}}
$$
\mbox{\large $\Gamma$}^\indxa\;=\; ( \Gamma _{\indxb\indxc}^\indxa)
\quad ,\qquad 
\mbox{\large$\Gammay$}^\indyi\;=\; (\Gammay_{\indyj\indyk}^\indyi)
\  ,
$$
\item[--]
for any geodesic $\gamma ^*$,
with values in $\coordxm(\coordxd)$ and their image by $\bfh$ in $\coordym(\coordyd)$,
and minimizing on the maximal interval $(s_1,s_2)$,
\end{list}
the geodesic equation in $\coordym(\coordyd)$ and in $\coordxm(\coordxd)$ are, respectively,
\IfTwoCol{%
% two col
for all $s$ in $(s_1, s_2)$,
\\[0.6em]$\displaystyle 
0\;=\; \frac{d^2 h_\indyi\circ \gamma^*}{ds^2}(s)
$\hfill\null\\[0.3em]\null\hfill$\displaystyle
\;+\; 
\frac{d\gamma ^* }{ds}(s)^\top
\frac{\partial h}{\partial x}(\gamma ^* (s))^\top
\mbox{\large$\Gammay$}^\indyi(h(\gamma ^* (s)))
\frac{\partial h}{\partial x}(\gamma ^* (s))
\frac{d\gamma ^* }{ds}(s)
$\hfill\null\\[0.3em]\null\hfill$\displaystyle
\forall \indyi \in \{1,2,\ldots, p\}
\  ,
$\\[0.6em]$\displaystyle 
0\;=\; \frac{d^2 \gamma ^* _\indxc}{ds^2}(s)\;+\;
\frac{d\gamma ^* }{ds}(s)^\top
\mbox{\large$\Gamma$}^\indxc(\gamma ^* (s))
\frac{d\gamma ^* }{ds}(s)
$\hfill\null\\[0.3em]\null\hfill$\displaystyle
\forall \indxc \in \{1,2,\ldots,n\}
\  .
$\\%
}{
% one col
$$
\renewcommand{\arraystretch}{2}
\begin{array}{@{}r@{\; }c@{\; }ll@{\quad }}
\displaystyle 
\frac{d^2 h_\indyi\circ \gamma^*}{ds^2}(s)\,+\,
\frac{d\gamma ^* }{ds}(s)^\top
\frac{\partial h}{\partial x}(\gamma ^* (s))^\top
\mbox{\large$\Gammay$}^\indyi(h(\gamma ^* (s)))
\frac{\partial h}{\partial x}(\gamma ^* (s))
\frac{d\gamma ^* }{ds}(s)
&=&0
&
\forall \indyi \in \{1,2,\ldots, p\}
\, ,
\\\displaystyle 
\frac{d^2 \gamma ^* _\indxc}{ds^2}(s)\,+\,
\frac{d\gamma ^* }{ds}(s)^\top
\mbox{\large$\Gamma$}^\indxc(\gamma ^* (s))
\frac{d\gamma ^* }{ds}(s)
&=&0
&
\forall \indxc \in \{1,2,\ldots,n\}
\, ,
\\\multicolumn{4}{r}{
\forall s \in (s_1, s_2)
\  .
}
\end{array}
$$
}%
% % BEGIN MODIFIED
% \IfTwoCol{%
% % two col
% for all $s$ in $(s_1, s_2)$,
% \\[0.6em]$\displaystyle 
% 0\;=\; \frac{d^2 h_\indyi\circ \gamma^*}{ds^2}(s)
% $\hfill\null\\[0.3em]\null\hfill$\displaystyle
% \;+\; 
% \frac{d\gamma ^* }{ds}(s)^\top
% \frac{\partial h}{\partial x}(\gamma ^* (s))^\top
% \mbox{\large$\Gammay$}^\indyi(h(\gamma ^* (s)))
% \frac{\partial h}{\partial x}(\gamma ^* (s))
% \frac{d\gamma ^* }{ds}(s)
% \  ,
% $\\[0.6em]$\displaystyle 
% 0\;=\; \frac{d^2 \gamma ^* _\indxc}{ds^2}(s)\;+\;
% \frac{d\gamma ^* }{ds}(s)^\top
% \mbox{\large$\Gamma$}^\indxc(\gamma ^* (s))
% \frac{d\gamma ^* }{ds}(s)
% \  .
% $\\%
% }{
% % one col
% $$
% \renewcommand{\arraystretch}{2}
% \begin{array}{rcl}
% \displaystyle 
% \frac{d^2 h_\indyi\circ \gamma^*}{ds^2}(s)\;+\;
% \frac{d\gamma ^* }{ds}(s)^\top
% \frac{\partial h}{\partial x}(\gamma ^* (s))^\top
% \mbox{\large$\Gammay$}^\indyi(h(\gamma ^* (s)))
% \frac{\partial h}{\partial x}(\gamma ^* (s))
% \frac{d\gamma ^* }{ds}(s)
% &=&0
% \ ,
% \\\displaystyle 
% \frac{d^2 \gamma ^* _\indxc}{ds^2}(s)\;+\;
% \frac{d\gamma ^* }{ds}(s)^\top
% \mbox{\large$\Gamma$}^\indxc(\gamma ^* (s))
% \frac{d\gamma ^* }{ds}(s)
% &=&0
% \  ,
% \end{array}\qquad 
% \forall s \in (s_1, s_2)
% \  .
% $$
% }%
% % END MODIFIED

Then, with the uniqueness of the solution to the geodesic equation, we have the following result. 
Its proof can be found in \cite{Vilms} or \cite{Udriste.94}.

\begin{lemma}[{\cite[Proposition 1.5]{Vilms}
or \cite[Theorem 6.4.1]{Udriste.94}}]
\label{lem7}
\startmodif
Let $\Ouv$ be an open subset of $\bfRR^n$.
\stopmodif
Suppose Assumption~\ref{H1} holds.
% and $\Ouv$ is an open subset of the $\bfx$-manifold $\bfRR^n$.
The following two properties are equivalent\footnote{%
% BEGIN FOOTNOTE
According to \cite{Innami.82} (but not \cite{Vilms}) a function
$\bfh$ satisfying the property in item~\ref{point7}
is said to be affine on $\Ouv$.
In the case where $p=1$ and the metric $\bfPy$ is flat, $\bfh$ is said linear affine in \cite[p. 88 and 
following pages]{Udriste.94} where its necessary and sufficient conditions are presented.
% END FOOTNOTE
}
\begin{enumerate}
\item
\label{point7}
Any
geodesic $\bfgamma ^*$, in the $\bfx$-manifold $\bfRR^n$ that
takes values in $\Ouv$ on a maximal interval $(s_1,s_2)$, is such that
$s\in (s_1,s_2) \mapsto \bfh\left(\bfgamma^*(s)\right)$ is either constant or a geodesic in the $\bfy$-manifold $\bfRR^p$.
% % BEGIN MODIFIED
% $\bfh\left(\left.\bfgamma^*\right|_{(s_1,s_2)}\right)$ is a geodesic in the $\bfy$-manifold $\bfRR^p$.
% % END MODIFIED
\item
For any point $\bfx_0$ in $\Ouv$, any coordinate chart $\coordx$
around $\bfx_0$ and any coordinate chart $\coordy$ around 
$\bfh(\bfx_0)$, with $\bfh(\coordxd)$ contained in $\coordyd$,
we have,
\IfTwoCol{%
% two col
for all $x$ in $\coordxm(\coordxd \cap \Ouv)$,
\\[0.5em]\null \hskip -2em$\displaystyle 
0\;=\; \frac{\partial ^2h_\indyi}{\partial x_\indxa\partial x_\indxb}(x)
$\refstepcounter{equation}\label{LP145}\hfill$(\theequation)$
\\[0.3em]\null \hfill $\displaystyle
-
\sum_{\indxc}
\Gamma _{\indxa\indxb}^\indxc(x)\frac{\partial h_\indyi}{\partial x_\indxc}(x)
+
\sum_{\indyj,\indyk}
\Gammay_{\indyj\indyk}^\indyi(h(x))
\frac{\partial h_\indyj}{\partial x_\indxa}(x)
\frac{\partial h_\indyk}{\partial x_\indxb}(x)
$\\
}{%
% one col
\begin{equation}
\label{LP145}
\displaystyle 
\frac{\partial ^2h_\indyi}{\partial x_\indxa\partial x_\indxb}(x)
-
\sum_{\indxc}
\Gamma _{\indxa\indxb}^\indxc(x)\frac{\partial h_\indyi}{\partial x_\indxc}(x)
+
\sum_{\indyj,\indyk}
\Gammay_{\indyj\indyk}^\indyi(h(x))
\frac{\partial h_\indyj}{\partial x_\indxa}(x)
\frac{\partial h_\indyk}{\partial x_\indxb}(x)
\;=\; 0
\qquad \forall x\in \coordxm(\coordxd \cap \Ouv)
\  .
\end{equation}%
}%
% % BEGIN MODIFIED
% \IfTwoCol{%
% % two col
% for all $x$ in $\coordxm(\coordxd \cap \Ouv)$,
% \\[0.5em]\null \hskip -2em$\displaystyle 
% 0\;=
% $\refstepcounter{equation}\label{LP145}\hfill$(\theequation)$
% \\[0.3em]\null \hskip -1em$\displaystyle
% \frac{\partial ^2h_\indyi}{\partial x_\indxa\partial x_\indxb}(x)
% \mkern -1mu
% -
% \mkern -1mu
% \Gamma _{\indxa\indxb}^\indxc(x)\frac{\partial h_\indyi}{\partial x_\indxc}(x)
% \mkern -1mu
% +
% \mkern -1mu
% \Gammay_{\indyj\indyk}^\indyi(h(x))
% \frac{\partial h_\indyj}{\partial x_\indxa}(x)
% \frac{\partial h_\indyk}{\partial x_\indxb}(x)
% $\\
% }{%
% % one col
% \begin{equation}
% \label{LP145}
% \displaystyle 
% \frac{\partial ^2h_\indyi}{\partial x_\indxa\partial x_\indxb}(x)
% -\Gamma _{\indxa\indxb}^\indxc(x)\frac{\partial h_\indyi}{\partial x_\indxc}(x)
% +
% \Gammay_{\indyj\indyk}^\indyi(h(x))
% \frac{\partial h_\indyj}{\partial x_\indxa}(x)
% \frac{\partial h_\indyk}{\partial x_\indxb}(x)
% \;=\; 0
% \qquad \forall x\in \coordxm(\coordxd \cap \Ouv)
% \  .
% \end{equation}%
% }%
% % END MODIFIED
\end{enumerate}
\end{lemma}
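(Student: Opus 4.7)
The natural strategy is to use the coordinate characterization of geodesics on both manifolds and apply the chain rule to $\bfh\circ\bfgamma^*$, comparing the resulting expression with the geodesic equation downstairs. This will show that equation (\ref{LP145}), contracted twice with the velocity $\frac{d\gamma^*}{ds}$, is precisely the obstruction to $\bfh\circ\bfgamma^*$ satisfying the $\bfy$-geodesic equation whenever $\bfgamma^*$ satisfies the $\bfx$-geodesic equation.

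More concretely, fix $\bfx_0\in\Ouv$ and coordinate charts $\coordx$, $\coordy$ as in the statement, and let $\gamma^*$ be a geodesic in $\coordxm(\coordxd)$. By the chain rule,
\begin{equation*}
\frac{d^2 h_\indyi\circ \gamma^*}{ds^2}(s)
= \sum_{\indxa,\indxb}\frac{\partial^2 h_\indyi}{\partial x_\indxa \partial x_\indxb}(\gamma^*(s))\,\frac{d\gamma^*_\indxa}{ds}(s)\,\frac{d\gamma^*_\indxb}{ds}(s)
+ \sum_{\indxc}\frac{\partial h_\indyi}{\partial x_\indxc}(\gamma^*(s))\,\frac{d^2\gamma^*_\indxc}{ds^2}(s).
\end{equation*}
Substituting the $\bfx$-geodesic equation
$\frac{d^2\gamma^*_\indxc}{ds^2} = -\sum_{\indxa,\indxb}\Gamma^\indxc_{\indxa\indxb}(\gamma^*)\,\frac{d\gamma^*_\indxa}{ds}\frac{d\gamma^*_\indxb}{ds}$ and then adding the Christoffel term from the $\bfy$-side yields, for each $\indyi$,
\begin{equation*}
\frac{d^2 h_\indyi\circ \gamma^*}{ds^2}(s) + \sum_{\indyj,\indyk}\Gammay^\indyi_{\indyj\indyk}(h(\gamma^*(s)))\,\frac{d h_\indyj\circ\gamma^*}{ds}(s)\,\frac{d h_\indyk\circ\gamma^*}{ds}(s)
= \sum_{\indxa,\indxb} B^\indyi_{\indxa\indxb}(\gamma^*(s))\,\frac{d\gamma^*_\indxa}{ds}(s)\,\frac{d\gamma^*_\indxb}{ds}(s),
\end{equation*}
where $B^\indyi_{\indxa\indxb}(x)$ is exactly the left-hand side of (\ref{LP145}), symmetric in $(\indxa,\indxb)$.

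From this identity, the equivalence follows. For (2) $\Rightarrow$ (1): if $B\equiv 0$ on $\coordxm(\coordxd\cap\Ouv)$, the right-hand side vanishes, so $h\circ\gamma^*$ satisfies the $\bfy$-geodesic equation; when $\frac{d h\circ\gamma^*}{ds}\neq 0$ it is therefore a geodesic, and otherwise it is constant (the zero-velocity case is preserved by the geodesic equation on $\bfRR^p$, noting that $\bfh$ being a submersion does not force nonvanishing of this derivative since $\frac{d\gamma^*}{ds}$ may lie in $\ker\bfdh$). For (1) $\Rightarrow$ (2): if the conclusion of item~\ref{point7} holds, then for every $\bfx_0\in\Ouv$ and every tangent vector $v\in\RR^n$, the geodesic $\gamma^*$ issuing from $x_0=\coordxm(\bfx_0)$ with initial velocity $v$ (guaranteed to exist on some interval by the local existence theorem for the geodesic ODE) makes the right-hand side vanish at $s_0$, so $\sum_{\indxa,\indxb} B^\indyi_{\indxa\indxb}(x_0)\,v_\indxa v_\indxb = 0$.

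The main, and essentially only, nontrivial step is then to pass from the vanishing of this symmetric quadratic form on every $v\in\RR^n$ to the pointwise vanishing of each coefficient $B^\indyi_{\indxa\indxb}(x_0)$. This is a standard polarization argument: a symmetric bilinear form that vanishes on the diagonal is identically zero. Since $\bfx_0$ and the charts were arbitrary, (\ref{LP145}) holds everywhere on $\coordxm(\coordxd\cap\Ouv)$, completing the equivalence.
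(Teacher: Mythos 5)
Your proof is correct and follows exactly the route the paper sets up just before the lemma (write the geodesic equations in both charts, compare via the chain rule, use uniqueness of the geodesic ODE for the constant-vs.-geodesic dichotomy, and polarize in the necessity direction); the paper itself does not write out this proof but delegates it to \cite{Vilms} and \cite{Udriste.94}, so your argument is a self-contained version of the same idea.
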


Lemma~\ref{lem7} motivates the following definition.

\begin{definition}[{\cite[p. 123]{Eells-Sampson}}]
\label{def3}
We call second fundamental form  $\bfsecff _\bfP \bfh$ of $\bfh$ the object defined as follows:
\IfTwoCol{%
% two col
For any point $\bfx_0$, any coordinate chart $\coordx$
around $\bfx_0$ and any coordinate chart $\coordy$ around 
$\bfh(\bfx_0)$, with $\bfh(\coordxd)$ contained in $\coordyd$, the expression of $\bfsecff _\bfP \bfh$ is
\\[0.7em]
$\displaystyle 
\secff _P h_{\indxa\indxb}^\indyi(x)
\;=\; \frac{\partial ^2h_\indyi}{\partial x_\indxa\partial x_\indxb}(x)
$\refstepcounter{equation}\label{eqn:2ndFundFormh}\hfill$(\theequation)$
\\[0.5em]\null \hfill $\displaystyle
-
\sum_\indxc
\Gamma _{\indxa\indxb}^\indxc(x)\frac{\partial h_\indyi}{\partial x_\indxc}(x)
+
\sum_{\indyj,\indyk}\Gammay_{\indyj\indyk}^\indyi(h(x))
\frac{\partial h_\indyj}{\partial x_\indxa}(x)
\frac{\partial h_\indyk}{\partial x_\indxb}(x)
\  .
$
\\%
}{%
% one col
for any $\bfx_0$ and
for any pair of coordinate charts
$\coordx$ around $\bfx_0$ and $\coordy$ around 
$\bfh(\bfx_0)$, the expression of $\bfsecff _\bfP \bfh$ is
\begin{equation}\label{eqn:2ndFundFormh}
\secff _P h_{\indxa\indxb}^\indyi(x)
\;=\; 
\frac{\partial ^2h_\indyi}{\partial x_\indxa\partial x_\indxb}(x)
-
\sum_\indxc
\Gamma _{\indxa\indxb}^\indxc(x)\frac{\partial h_\indyi}{\partial x_\indxc}(x)
+
\sum_{\indyj,\indyk}
\Gammay_{\indyj\indyk}^\indyi(h(x))
\frac{\partial h_\indyj}{\partial x_\indxa}(x)
\frac{\partial h_\indyk}{\partial x_\indxb}(x)
\  .
\end{equation}%
}%
% % BEGIN MODIFIED
% \IfTwoCol{%
% % two col
% for any $\bfx_0$ and
%  any pair of coordinate charts
% $\coordx$ around $\bfx_0$ and $\coordy$ around 
% $\bfh(\bfx_0)$, the expression of $\bfsecff _\bfP \bfh$ is
% \\[0.7em]
% $\displaystyle 
% \secff _P h_{\indxa\indxb}^\indyi(x)
% \;=\; 
% $\refstepcounter{equation}\label{eqn:2ndFundFormh}\hfill$(\theequation)$
% \\[0.5em]\null \hfill $\displaystyle
% \frac{\partial ^2h_\indyi}{\partial x_\indxa\partial x_\indxb}(x)
% -\Gamma _{\indxa\indxb}^\indxc(x)\frac{\partial h_\indyi}{\partial x_\indxc}(x)
% +
% \Gammay_{\indyj\indyk}^\indyi(h(x))
% \frac{\partial h_\indyj}{\partial x_\indxa}(x)
% \frac{\partial h_\indyk}{\partial x_\indxb}(x)
% \  .
% $\\%
% }{%
% % one col
% for any $\bfx_0$ and
% for any pair of coordinate charts
% $\coordx$ around $\bfx_0$ and $\coordy$ around 
% $\bfh(\bfx_0)$, the expression of $\bfsecff _\bfP \bfh$ is
% \begin{equation}\label{eqn:2ndFundFormh}
% \secff _P h_{\indxa\indxb}^\indyi(x)
% \;=\; 
% \frac{\partial ^2h_\indyi}{\partial x_\indxa\partial x_\indxb}(x)
% -\Gamma _{\indxa\indxb}^\indxc(x)\frac{\partial h_\indyi}{\partial x_\indxc}(x)
% +
% \Gammay_{\indyj\indyk}^\indyi(h(x))
% \frac{\partial h_\indyj}{\partial x_\indxa}(x)
% \frac{\partial h_\indyk}{\partial x_\indxb}(x)
% \  .
% \end{equation}%
% }%
% % END MODIFIED
\end{definition}

A coordinate-free version of this definition can be found in
\cite[Definition I.1.4.1]{Nore} or \cite[Definition 3.1.1]{Garcia-Kupeli}. See also
\cite[Definition 8.1]{Saccon-Hauser-Aguiar}.

\IfTwoCol{%
% two col
}{%
% one col
Our expression via coordinates is coordinate independent since,  with computations involving components,
we can prove the following property about $\bfsecff_\bfP\bfh$.
}%
% Complement
% \pointtocomp{(see Complement \ref{complement36}).}{complement36}
\startarchive%
(see \complement \ref{complement36}).%
\stoparchive%

\begin{lemma}
\label{lem1}
The second fundamental form $\bfsecff_\bfP\bfh$ is a bilinear map of a pair of vector fields on the $\bfx$-manifold $\bfRR^n$ into a vector field on 
the $\bfy$-manifold $\bfRR^p$.\! It is a 2-covariant to\! 1-contravariant tensor, i.e. by
changing to coordinates
$$
\bar x = \changex(x)\quad ,\qquad 
\bar y\;=\; \changey (y)
\  ,
$$
where $\changex$ and $\changey$ are $C^s$ diffeomorphisms, the expression $\secff _P \overline{h}$ of the second 
fundamental form  in the new coordinates satisfies
\IfReport{%
% report
\begin{equation}
\label{LP146}
\sum_{\indxc,\indxd}
\frac{\partial \changex_\indxc}{\partial x_\indxa}(x)
\frac{\partial \changex_\indxd}{\partial x_\indxb}(x)
\secff _P \overline{h}_{\indxc\indxd}^\indyk(\bar x)\;=\; 
\sum_\indyi
\frac{\partial \changey_\indyk}{\partial y_\indyi}(h(x))
\secff _P h_{\indxa\indxb}^\indyi(x)
\  .
\end{equation}
}{%
% not report
\\[0.7em]$\displaystyle 
\sum_{\indxc,\indxd}
\frac{\partial \changex_\indxc}{\partial x_\indxa}(x)
\frac{\partial \changex_\indxd}{\partial x_\indxb}(x)
\secff _P \overline{h}_{\indxc\indxd}^\indyk(\bar x)\;=\; 
\sum_\indyi
\frac{\partial \changey_\indyk}{\partial y_\indyi}(h(x))
\secff _P h_{\indxa\indxb}^\indyi(x)
\  .
$\hfill \null \\[-0.7em]\refstepcounter{equation}\label{LP146}\null \hfill$(\theequation)$
}%
\end{lemma}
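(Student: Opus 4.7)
The plan is to verify the transformation law (\ref{LP146}) by a direct coordinate computation; bilinearity of $\bfsecff_\bfP\bfh$ as a map on pairs of vector fields then follows at once from the manifest $\RR$-bilinearity of each of the three terms in (\ref{eqn:2ndFundFormh}) in the indices $\indxa,\indxb$ together with the established tensorial transformation law. The two ingredients are the chain rule applied to the identity $\bar h \circ \changex = \changey \circ h$ (the second relation in (\ref{LP186})) and the classical inhomogeneous transformation law for Christoffel symbols of a Riemannian metric.

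First I would differentiate $\bar h(\changex(x)) = \changey(h(x))$ once in $x_\indxa$, obtaining a relation between $\partial \bar h_\indyk/\partial \bar x_\indxc$ contracted with $\partial \changex_\indxc/\partial x_\indxa$ and $\partial \changey_\indyk/\partial y_\indyi$ contracted with $\partial h_\indyi/\partial x_\indxa$, and then differentiate the result once more in $x_\indxb$. The product rule produces, on the left-hand side, the expected principal term $\sum_{\indxc,\indxd}\frac{\partial^2 \bar h_\indyk}{\partial \bar x_\indxc \partial \bar x_\indxd}(\bar x)\,\frac{\partial \changex_\indxc}{\partial x_\indxa}(x)\,\frac{\partial \changex_\indxd}{\partial x_\indxb}(x)$ together with a correction $\sum_\indxc \frac{\partial \bar h_\indyk}{\partial \bar x_\indxc}(\bar x)\,\frac{\partial^2 \changex_\indxc}{\partial x_\indxa \partial x_\indxb}(x)$; on the right-hand side, one obtains $\sum_\indyi \frac{\partial \changey_\indyk}{\partial y_\indyi}(h(x))\,\frac{\partial^2 h_\indyi}{\partial x_\indxa \partial x_\indxb}(x)$ together with a correction $\sum_{\indyi,\indyj} \frac{\partial^2 \changey_\indyk}{\partial y_\indyi \partial y_\indyj}(h(x))\,\frac{\partial h_\indyi}{\partial x_\indxa}(x)\,\frac{\partial h_\indyj}{\partial x_\indxb}(x)$.

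Next I would contract the expression (\ref{eqn:2ndFundFormh}) written in the barred coordinates against $\frac{\partial \changex_\indxc}{\partial x_\indxa}\frac{\partial \changex_\indxd}{\partial x_\indxb}$ and substitute the classical transformation laws for the Christoffel symbols of $\bfP$ on the $\bfx$-manifold and of $\bfPy$ on the $\bfy$-manifold. Each of these laws takes the shape \emph{tensor transformation plus an inhomogeneous piece involving the second derivative of the coordinate change}. The crucial cancellations are then: the inhomogeneous piece coming from $\bar \Gamma^\indxe_{\indxc\indxd}$ exactly absorbs the $\partial^2 \changex / \partial x\, \partial x$ correction produced by the chain rule on the left, while the inhomogeneous piece coming from $\bar \Gammay^\indyk_{\indyi\indyj}$ exactly absorbs the $\partial^2 \changey / \partial y\, \partial y$ correction on the right. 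What remains after these two cancellations, once the chain-rule identity is used to express the leftover principal pieces, is precisely (\ref{LP146}).

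The main obstacle is purely the bookkeeping of cancellations among second-derivative terms with many simultaneous index contractions; the computation is otherwise routine multivariable calculus. For completeness I would record the equivalent coordinate-free viewpoint: $\bfsecff_\bfP\bfh$ coincides with the covariant derivative $\nabla d\bfh$ of the differential of $\bfh$, viewed as a section of the bundle $T^*\bfRR^n \otimes \bfh^*T\bfRR^p$ equipped with the tensor-product connection built from the Levi--Civita connection of $\bfP$ and the pullback of the Levi--Civita connection of $\bfPy$. Tensoriality is then automatic because $\nabla$ sends tensor fields to tensor fields, and the identity (\ref{LP146}) is simply the reformulation in coordinates of the standard local expression for $\nabla d\bfh$.
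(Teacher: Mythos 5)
Your proposal is correct and follows essentially the same route as the paper's proof in Appendix A2: differentiate $\bar h\circ\changex=\changey\circ h$ twice via the chain rule, contract $\secff_P\overline h$ against the Jacobian of $\changex$, insert the inhomogeneous transformation laws for the Christoffel symbols of both metrics, and observe that the $\partial^2\changex$ and $\partial^2\changey$ corrections cancel against the inhomogeneous pieces, leaving exactly (\ref{LP146}). The closing remark identifying $\bfsecff_\bfP\bfh$ with $\nabla d\bfh$ for the tensor-product connection on $T^*\bfRR^n\otimes\bfh^*T\bfRR^p$ is a pleasant conceptual complement, but the core argument is the same coordinate computation the paper carries out.
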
%
\IfReport{%
% report
}{%
% not report

A proof of Lemma~\ref{lem1} can be found in Appendix~\ref{complement36}. %\cite{Sanfelice-Praly.III-long}.
}

% It follows from Lemma~\ref{lem1} that the property of the second fundamental form being zero is coordinate 
% independent.
% 
% \startmodif
% When the second fundamental form of $\bfh$ is zero we have
% \stopmodif 
% $$
% \frac{d}{ds}\left\{\left(\begin{array}{@{}c@{}}
% \bfh(\bfgamma(s))\\
% \frac{d}{ds}\left\{\bfh(\bfgamma(s))\right\}
% \end{array}\right)\right\}
% \;=\; 0
% \qquad \forall s:\: 
% \frac{d}{ds}\left\{\bfh(\bfgamma(s))\right\}
% =
% 0
% \  .
% $$
% Hence 
We can rephrase the way to guarantee Condition A3 stated in Proposition~\ref{prop12} as follows.

\begin{corollary}
\label{cor1}
Suppose Assumptions~\ref{H1} and \ref{H2} hold. Then, Condition A3 is satisfied 
if the second fundamental form of $\bfh$ is zero
\startmodif
on the open subset $\Ouv$.
\stopmodif
\end{corollary}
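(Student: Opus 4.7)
The plan is to obtain the corollary as an immediate consequence of three already-established results: Definition~\ref{def3}, Lemma~\ref{lem7}, and Proposition~\ref{prop12}. The idea is simply to chain these together.

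First, I would unpack what the hypothesis ``the second fundamental form of $\bfh$ is zero on $\Ouv$'' means in coordinates. By Definition~\ref{def3}, for any point $\bfx_0 \in \Ouv$ and any pair of coordinate charts $\coordx$ around $\bfx_0$ and $\coordy$ around $\bfh(\bfx_0)$ with $\bfh(\coordxd) \subset \coordyd$, the coordinate expression (\ref{eqn:2ndFundFormh}) vanishes for every $\indyi$, $\indxa$, $\indxb$, and for every $x \in \coordxm(\coordxd \cap \Ouv)$. This is precisely equation (\ref{LP145}) of Lemma~\ref{lem7}. Since Assumption~\ref{H1} is in force, the hypotheses of Lemma~\ref{lem7} hold, and by the equivalence stated therein, the vanishing of the second fundamental form is equivalent to the following geometric property: any geodesic $\bfgamma^*$ in the $\bfx$-manifold $\bfRR^n$ taking values in $\Ouv$ on a maximal interval $(s_1,s_2)$ is such that $s \in (s_1,s_2) \mapsto \bfh(\bfgamma^*(s))$ is either constant or a geodesic in the $\bfy$-manifold $\bfRR^p$.

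Next, observe that this is exactly the condition required in Proposition~\ref{prop12}. Since Assumptions~\ref{H1} and \ref{H2} both hold by hypothesis, Proposition~\ref{prop12} applies and yields Condition~A3.

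The only subtle point — and the step one should be careful about — is the coordinate independence of the hypothesis. The statement ``the second fundamental form is zero on $\Ouv$'' must be coordinate-free for the chain of reasoning to be meaningful, because Lemma~\ref{lem7} is phrased in terms of an arbitrary pair of coordinate charts. This is where Lemma~\ref{lem1} enters: the transformation rule (\ref{LP146}) shows that $\bfsecff_\bfP \bfh$ is a genuine $2$-covariant $1$-contravariant tensor, so its vanishing at a point is independent of the chosen charts. Thus the hypothesis that $\bfsecff_\bfP \bfh \equiv 0$ on $\Ouv$ is well-defined and, once verified in any particular pair of charts around each point, guarantees (\ref{LP145}) in every chart, which is what Lemma~\ref{lem7} consumes. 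No other calculation is required; the corollary is a direct assembly of the previous results.
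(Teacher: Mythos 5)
Your proof is correct and follows essentially the same route as the paper: vanishing of $\bfsecff_\bfP\bfh$ gives (\ref{LP145}), Lemma~\ref{lem7} converts this to the geodesic-preservation property, and Proposition~\ref{prop12} then yields Condition~A3, with Lemma~\ref{lem1} guaranteeing the coordinate-independence of the hypothesis. The paper compresses this chain into a single remark before the corollary, but the logical content is identical.
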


Consequently, (\ref{LP145}) provides a test for Condition A3 that involves $h$, $P$, and their first derivatives.

\vspace{-0.1in}
\subsection{Necessity of the nullity of $\bfsecff_\bfP\bfh$}
%Sufficient Condition}
\label{sec:necessary}
\startmodif
With Corollary \ref{cor1}, it is tempting to forget about Condition A3 and base our design of the Riemannian 
metric $\bfP$ on guaranteeing that the second fundamental form of $\bfh$ is zero.

Next, we investigate such an approach.
\stopmodif
Namely, given
$\bfh:\bfRR^n\to \bfRR^p$, a submersion on a set $\Ouv$, as in Assumption~\ref{H1}
and
$\bfPy$, a metric equipping the $\bfy$-manifold $\RR^p$ and satisfying Assumption~\ref{H2},
let the function $\wpunbf$ be the square of the distance $e$ in the $\bfy$-manifold $\RR^p$, as written in 
(\ref{LP109}).
We are interested in the following question:
\begin{description}
\item[\normalfont(Q1)]
{\it Is the  second fundamental form of $\bfh$ being null, i.e. the property in 
(\ref{LP145}), a necessary for Condition A3 to hold?} 
\end{description}

We address this question, by grouping, as in \cite[\S 3]{Vilms}, the equations  in (\ref{LP145}) into
three blocks. We employ the following definition; see \cite[\& C p.127]{Eells-Sampson},
\cite[p. 77]{Vilms}, or \cite[p. 205]{ONeill.83}.

\begin{definition}
\label{def5}
The tangent space of the level sets of $\bfh $, denoted $\bfDistrib ^\tangent$, is called the 
tangent distribution. It satisfies
$$
\bfDistrib ^\tangent(\bfx)\;=\; \left\{v^\tangent:\,  \bfdh (\bfx )v^\tangent=0\right\}\ ,
$$
and does not depend on $\bfP$. Its elements $v^\tangent$ are called tangent vectors.

The $\bfP $-orthogonal complement to the tangent distribution, denoted $\bfDistrib _\bfP ^\ortho  $, is called
the orthogonal distribution. It satisfies
$$
\bfDistrib _\bfP ^\ortho  (\bfx )\;=\; \left\{v^\ortho  :\,  v^{\tangent\top}  \bfP (\bfx ) v^\ortho  \;=\; 0\quad  \forall v^\tangent\in 
\bfDistrib ^\tangent(\bfx )\right\}
$$
and does depend on $\bfP$. Its elements $v^\ortho$ are called orthogonal vectors.
\end{definition}

What we refer to as {\it orthogonal distribution} is usually called {\it horizontal distribution}, and 
variations of the letter $h$ are used to denote it.
But since we use the letter $h$ to denote the output function in this paper,
we employ the term {\it orthogonal} and  use the symbol $\ortho$.
For consistency, we use the symbol $\tangent$ and call {\it tangent distribution} what is usually called {\it vertical distribution}.

Ignoring the $\bfx$-dependence, 
basic linear algebra leads to the following result.
% % Complement
% % \pointtocomp{(see Complement \ref{complement44})}{complement44}
% \IfTwoCol{%
% % two col
% %Its proof  can be found in \cite{Sanfelice-Praly.III-long}.
% }{%
% % one col
% \startarchive%
% (see \complement \ref{complement44}).%
% \stoparchive%
% }
% %
\begin{lemma}
\label{lem6}
For any pair of coordinate charts $\coordx$ and $\coordy$,
by letting $h$, $P$ and $\Distrib _P ^\ortho  (x)$ be the corresponding expressions of 
$\bfh $, $\bfP $ and $\bfDistrib _\bfP ^\ortho  (\bfx)$, 
the orthogonal distribution  $\Distrib _P ^\ortho  (x)$ is spanned by the columns
of the gradient of $h$, i.e. by the columns of
$P(x)^{-1}\frac{\partial h}{\partial x}(x)^\top $.
\end{lemma}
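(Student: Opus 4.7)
The plan is to establish the claim entirely by linear algebra in the chosen pair of coordinate charts $\coordx$ and $\coordy$; no transformation argument is needed because the statement is about the expressions at a fixed point. Fix $\bfx$ and write $A := \frac{\partial h}{\partial x}(x) \in \RR^{p\times n}$. By Assumption~\ref{H1}, $\bfh$ is a submersion, so $A$ has rank $p$; hence $\Distrib^\tangent(x) = \ker A$ has dimension $n-p$, and since $P(x)$ is symmetric positive definite, the $P(x)$-orthogonal complement $\Distrib_P^\ortho(x)$ has dimension $p$.

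First I would verify one inclusion: for every $w \in \RR^p$, the vector $v^\ortho := P(x)^{-1} A^\top w$ lies in $\Distrib_P^\ortho(x)$. Indeed, for any $v^\tangent \in \ker A$,
$$
(v^\tangent)^\top P(x)\, v^\ortho \;=\; (v^\tangent)^\top A^\top w \;=\; (A v^\tangent)^\top w \;=\; 0,
$$
so the column space of $P(x)^{-1} A^\top$ is contained in $\Distrib_P^\ortho(x)$.

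To conclude, I would compare dimensions. Since $A$ has rank $p$, so does $A^\top$, and because $P(x)^{-1}$ is invertible, the rank of $P(x)^{-1} A^\top$ is also $p$. Its $p$ columns therefore span a $p$-dimensional subspace of $\Distrib_P^\ortho(x)$, which itself has dimension $p$, so the two subspaces coincide.

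No step is especially delicate here; the only point that deserves care is invoking the submersion hypothesis to obtain the rank of $A$ and hence the dimension count, together with the positive definiteness of $P(x)$ to guarantee invertibility. The lemma is essentially the standard identification of the $P$-orthogonal complement of $\ker A$ with $P^{-1}\,\mathrm{range}(A^\top)$ applied to $A = \frac{\partial h}{\partial x}(x)$.
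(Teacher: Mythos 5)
Your proof is correct and a little more streamlined than the paper's. You establish the easy inclusion (every column of $P(x)^{-1}\frac{\partial h}{\partial x}(x)^\top$ is $P$-orthogonal to $\ker \frac{\partial h}{\partial x}(x)$) exactly as the paper does, but then you close the argument by a dimension count: $\mathrm{rank}\,\frac{\partial h}{\partial x}(x)=p$ by the submersion hypothesis, so $P(x)^{-1}\frac{\partial h}{\partial x}(x)^\top$ has rank $p$, matching $\dim\Distrib_P^\ortho(x)=n-(n-p)=p$, and a $p$-dimensional subspace of a $p$-dimensional space is the whole space. The paper instead proves the reverse inclusion constructively: for an arbitrary $v^\ortho\in\Distrib_P^\ortho(x)$ it writes down $w^\ortho = P(x)^{-1}\frac{\partial h}{\partial x}(x)^\top\bigl(\frac{\partial h}{\partial x}(x)P(x)^{-1}\frac{\partial h}{\partial x}(x)^\top\bigr)^{-1}\frac{\partial h}{\partial x}(x)\,v^\ortho$, observes that $w^\ortho$ is orthogonal while $v^\ortho - w^\ortho$ is tangent, and concludes $v^\ortho=w^\ortho$ by uniqueness of the orthogonal-plus-tangent decomposition. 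The trade-off: your dimension argument is shorter and requires no explicit formula, while the paper's argument produces the explicit $P$-orthogonal projector onto $\Distrib_P^\ortho$, an object that reappears later in the paper (e.g.\ in (\ref{LP191}) of the supplementary material), so the extra work is not wasted. Both correctly rely on the submersion hypothesis — you invoke it for the rank count, the paper invokes it to invert $\frac{\partial h}{\partial x}(x)P(x)^{-1}\frac{\partial h}{\partial x}(x)^\top$ — and both are valid.
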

\IfReport{%
% report
\begin{proof}
\startmodif
Let $v^\ortho  $ be an arbitrary orthogonal vector
and denote
$$
w^\ortho  \!\!=\!\!P(x)^{-1}\frac{\partial h}{\partial x}(x)^\top  \left(\frac{\partial h}{\partial x}(x) 
P(x)^{-1}\frac{\partial h}{\partial x}(x)^\top \right)^{-1}\frac{\partial h}{\partial x}(x) v^\ortho  
\  .
$$
Let $v^\tangent$ be an arbitrary tangent vector. It satisfies
$$
\frac{\partial h}{\partial x}(x)v^\tangent\;=\; 0
$$
and therefore
$$
v^{\tangent\top}  P(x)\,  w^\ortho  
\;=\; 
0
\  .
$$
This says that $w^\ortho  $ is an orthogonal vector. We also have
$$
\frac{\partial h}{\partial x}(x) [v^\ortho  -w^\ortho  ]
\;=\; 0
\  .
$$
So $v^\ortho  -w^\ortho  $ is a tangent vector. But $v^\ortho  $, as an orthogonal vector, cannot be the sum of 
an orthogonal vector and a tangent one. This implies
$$
v^\ortho  \;=\; w^\ortho  
\  .
$$
This says that $v^\ortho  $ is in the span of $P(x)^{-1}\frac{\partial h}{\partial x}(x)^\top  $.
\stopmodif
\end{proof}
}{%
% not report
}

Given a pair of coordinate charts $\coordx$ and $\coordy$, $\Distrib _P ^\ortho  (x)$ and
$\Distrib ^\tangent  (x)$ are
complementary linear subspaces
of the tangent space at $x$ of the $\bfx$-manifold $\bfRR^n$. 
As a consequence, any vector $v$ in this tangent space can be decomposed as
$$
v\;=\; v^\ortho + v^\tangent
$$
with $v^\ortho$ in $\Distrib _P ^\ortho  (x)$ and $v^\tangent$ in
$\Distrib ^\tangent  (x)$. This property allows us to decompose  (\ref{LP145})
in the following three blocks of equations, for each $\indyi$ in $\{1,2,\ldots,p\}$
\IfTwoCol{%
% two col
and each $x$ in $\coordxm(\coordxd\cap \Ouv)$:
\begin{eqnarray}
\label{LP151}
{v^{\tangent}}^\top \secff _P h^\indyi(x) v^\tangent &\hskip -0.5em =&\hskip -0.5em  0
\quad 
\forall v^\tangent \in \Distrib ^\tangent(x)\  
\:  ,
\\
\label{LP162}
{v^\ortho }^\top\secff _P h^\indyi(x) v^\ortho &\hskip -0.5em =&\hskip -0.5em  0
\quad 
\forall v^\ortho \in \Distrib _P^\ortho (x)
\:  ,\\
\label{LP161}
{v^\ortho }^\top \secff _P h^\indyi(x) v^\tangent&\hskip -0.5em =&\hskip -0.5em  0
\quad 
\forall (v^\tangent ,v^\ortho )\in \Distrib ^\tangent(x)\times \Distrib _P^\ortho (x)
\:  .\quad\   \null 
\end{eqnarray}%
}{%
% one col
\begin{eqnarray}
\label{LP151}
{v^{\tangent}}^\top \secff _P h^\indyi(x) v^\tangent &\hskip -0.5em =&\hskip -0.5em  0
\qquad 
\forall v^\tangent \in \Distrib ^\tangent(x)\  ,\quad \forall x \in \coordxm(\coordxd\cap \Ouv)
\:  ,
\\
\label{LP162}
{v^\ortho }^\top\secff _P h^\indyi(x) v^\ortho &\hskip -0.5em =&\hskip -0.5em  0
\qquad 
\forall v^\ortho \in \Distrib _P^\ortho (x)
\  ,\quad \forall x\in \coordxm(\coordxd\cap \Ouv)
\:  ,
\\
\label{LP161}
{v^\ortho }^\top \secff _P h^\indyi(x) v^\tangent&\hskip -0.5em =&\hskip -0.5em  0
\qquad 
\forall (v^\tangent ,v^\ortho )\in \Distrib ^\tangent(x)\times \Distrib _P^\ortho (x)
\  ,\quad \forall x\in \coordxm(\coordxd\cap \Ouv)
\:  ,\quad \null 
\end{eqnarray}
where $\secff _P h$ is the expression, given in \eqref{eqn:2ndFundFormh}, of the second fundamental form of 
the function $\bfh$ supposed to satisfy Assumption~\ref{H1}.
}%
With the above, we can rephrase question (Q1) as follows:
\begin{description}
\item[\normalfont(Q1')]
{\it 
Does Condition A3 imply that equations (\ref{LP151}), (\ref{LP162}), and (\ref{LP161}) are satisfied?
} 
\end{description}
Our answer builds from the study in \cite{Vilms,Nore,Garcia-Kupeli}. For the sake of 
completeness, we rewrite in our setting some of the  results therein.

\subsubsection{About Necessity of (\ref{LP151})}
\begin{definition}
Given $\bfy_0$  in the $\bfy$-manifold $\bfRR^p$, the set
$$
\mathfrak{H}(\bfy_0)\;=\; \left\{\bfx\in\bfRR^n:\,  
\bfh(\bfx)=\bfy_0\right\}
\  .
$$
is called the $\bfy_0$-level set of $\bfh$.
\end{definition}

\begin{definition}
\label{def1}
Given an open subset $\Ouv$ of $\bfRR^n$ and a point $\bfx_0$ in $\Ouv$,
the $\bfh(\bfx_0)$-level set $\mathfrak{H}(\bfh(\bfx_0))$ of $\bfh$ is said to be
totally geodesic on $\Ouv$ if
any geodesic $\bfgamma$ taking values in $\Ouv$ on the maximal interval $(s_1,s_2)$ and satisfying
$$
\bfd \bfh(\bfgamma(s_3))\frac{d\bfgamma}{ds}(s_3)=0
\  ,
$$
for some $s_3$ in $(s_1,s_2)$, satisfies
$$
\bfh(\bfgamma(s))=\bfh(\bfgamma(s_3))
\qquad \forall  s \in (s_1,s_2)\  .
$$
An equivalent definition is that, for any $\bfx_0$ in $\Ouv$, there exists a coordinate chart $\coordx$
around $\bfx_0$ such that we have
\IfReport{%
% report
\begin{equation}
\label{LP204}
{v^\tangent}^\top \Hess _P h (x) \,  v^\tangent\;=\;  0
\qquad 
\forall v^\tangent \in \Distrib ^\tangent (x)\  ,\quad \forall x \in \coordxm(\coordxd\cap \Ouv)
\  ,
\end{equation}
}{%
% not report
\\[0.7em]$\displaystyle 
{v^\tangent}^\top \! \Hess _P h (x) \,  v^\tangent\,=\, 0
\qquad 
\forall v^\tangent \! \in \Distrib ^\tangent (x)
\  ,\   \forall x \in \coordxm(\coordxd\cap \Ouv)
\:  ,
$\\\refstepcounter{equation}\label{LP204}\hfill$(\theequation)$\\
}%
with $\Hess _P h$ the expression of the Riemannian Hessian of $h$%
\glos{\ref{Glos3}}.
\end{definition}

We have the following result. Its proof can be found in
\cite[Lemma 3.2(i)]{Vilms} or \cite[Proposition A.2.2 and A.3.1.a]{57}.

%%%%%%%%%%%%%%%%%%%%%%%%%%%%%%%%%%%%%%%%%%%%%%%
%                                             %
%    A GARDER pour preuve construction de P   %
%                                             %
%%%%%%%%%%%%%%%%%%%%%%%%%%%%%%%%%%%%%%%%%%%%%%%
\begin{lemma}
\label{lem12}
%\label{prop14}
\startmodif
Let $\Ouv$ be an open set coming from Assumption~\ref{H1} and suppose Condition A3
holds with such choice of $\Ouv$. Then,
\stopmodif
for any $\bfx_0$ in $\Ouv$ and any coordinate chart $\coordx$ around $\bfx_0$, we have
(\ref{LP151}) 
or, equivalently, for any $\bfx_0$ in $\Ouv$, the $\bfh(\bfx_0)$-level set $\mathfrak{H}(\bfh(\bfx_0))$ is totally geodesic
on $\Ouv$.
\end{lemma}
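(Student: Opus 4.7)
The plan is to prove \eqref{LP151} by contradiction, invoking Condition A3 at a precisely chosen parameter value on a suitable geodesic. Suppose that, for some $\bfx_0 \in \Ouv$, some chart $\coordx$ around $\bfx_0$ (write $x_0 = \coordxm(\bfx_0)$), some index $\indyi$, and some tangent vector $v^\tangent \in \Distrib^\tangent(x_0)$, one has ${v^\tangent}^\top \secff_P h^\indyi(x_0)\, v^\tangent \neq 0$. Let $\bfgamma$ be the geodesic in the $\bfx$-manifold with $\gamma(s_*) = x_0$ and $\frac{d\gamma}{ds}(s_*) = v^\tangent$. By continuity of the geodesic flow and openness of $\Ouv$, $\bfgamma$ takes values in $\Ouv$ and is minimizing on some maximal open interval $(s_1,s_2) \ni s_*$.

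The heart of the argument is the second-order Taylor expansion of $\bfh \circ \bfgamma$ at $s_*$. Differentiating $h_\indyi(\gamma(s))$ twice, substituting the geodesic equation $\frac{d^2\gamma_\indxc}{ds^2} = -\sum_{\indxa,\indxb} \Gamma^\indxc_{\indxa\indxb}(\gamma)\frac{d\gamma_\indxa}{ds}\frac{d\gamma_\indxb}{ds}$, and comparing with the definition \eqref{eqn:2ndFundFormh}, I obtain
\[
\frac{d^2}{ds^2} h_\indyi(\gamma(s))\Big|_{s=s_*}
= {v^\tangent}^\top \secff_P h^\indyi(x_0)\, v^\tangent
 - \sum_{\indyj,\indyk} \Gammay^\indyi_{\indyj\indyk}(h(x_0))\bigl(\tfrac{\partial h_\indyj}{\partial x}(x_0)v^\tangent\bigr)\bigl(\tfrac{\partial h_\indyk}{\partial x}(x_0)v^\tangent\bigr).
\]
Because $v^\tangent$ is tangent, $\frac{\partial h}{\partial x}(x_0)v^\tangent = 0$, so the $\Gammay$-term vanishes. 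Together with $\frac{d}{ds}h_\indyi(\gamma(s))|_{s=s_*} = 0$, this yields the expansion $\bfh(\bfgamma(s)) = \bfh(\bfx_0) + \tfrac{1}{2}(s-s_*)^2\, \bfsecff_\bfP \bfh(v^\tangent,v^\tangent) + O(|s-s_*|^3)$ with $\bfsecff_\bfP \bfh(v^\tangent,v^\tangent) \neq 0$ by hypothesis.

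The final step is to pick $s_3 = s_* - \epsilon$ and $s_4 = s_* + \epsilon'$ with $0 < \epsilon \neq \epsilon'$ both small enough that $s_3,s_4 \in (s_1,s_2)$. The nonvanishing quadratic term makes the leading deviations of $\bfh(\bfgamma(s_3))$ and $\bfh(\bfgamma(s_4))$ from $\bfh(\bfx_0)$ differ in magnitude, so $\bfh(\bfgamma(s_3)) \neq \bfh(\bfgamma(s_4))$, and $s_* \in (s_3,s_4)$. Condition A3, via \eqref{LP99}, then forces
\[
\frac{d}{ds}\bigl\{\wpunbf(\bfh(\bfgamma(s)),\bfh(\bfgamma(s_3)))\bigr\}\Big|_{s=s_*} > 0.
\]
On the other hand, the chain rule combined with the gradient formula in \eqref{LP110} expresses this derivative as the pairing of a $\Py$-vector at $\bfh(\bfx_0)$ with $\frac{d}{ds}\bfh(\bfgamma(s))|_{s=s_*} = \bfdh(\bfx_0)v^\tangent = 0$, hence equal to zero: contradiction. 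This proves \eqref{LP151}.

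The equivalence with $\mathfrak{H}(\bfh(\bfx_0))$ being totally geodesic on $\Ouv$ then follows from the standard identification (Gauss formula) of the restriction of $\bfsecff_\bfP \bfh$ to pairs of tangent vectors with (up to sign) the second fundamental form of the level set as a submanifold of $(\bfRR^n,\bfP)$; vanishing of the latter is the definition of totally geodesic and gives back \eqref{LP204}. The main obstacle in the contradiction argument is that the quadratic Taylor term is even in $(s - s_*)$, so a symmetric choice $s_3 = s_* - \epsilon$, $s_4 = s_* + \epsilon$ would make $\bfh(\bfgamma(s_3))$ and $\bfh(\bfgamma(s_4))$ agree to leading order and render A3 vacuous at $s_*$; the asymmetric choice $\epsilon \neq \epsilon'$ is precisely what makes A3 applicable there while preserving $s_* \in (s_3,s_4)$ and the minimizing property.
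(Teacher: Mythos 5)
Your proof is correct, and since the paper does not give its own proof of this lemma (it cites \cite[Lemma 3.2(i)]{Vilms} and \cite[Propositions A.2.2, A.3.1.a]{57}), your argument supplies a self-contained alternative. The direct route you take --- Taylor-expand $\bfh\circ\bfgamma$ to second order along a geodesic with tangent initial velocity $v^\tangent$, observe that because $\bfdh(\bfx_0)v^\tangent=0$ the $\Gammay$-correction drops and the second-order coefficient is exactly ${v^\tangent}^\top\secff_P h^\indyi(x_0)v^\tangent$, then choose $\epsilon\neq\epsilon'$ so that $\bfh(\bfgamma(s_3))\neq\bfh(\bfgamma(s_4))$ while A3's strict inequality at $s_*$ contradicts the identically zero chain-rule derivative $\frac{\partial\wp}{\partial y_1}\cdot\bfdh(\bfx_0)v^\tangent=0$ --- is cleaner than the route the cited references apparently take through strong geodesic convexity of the level sets of $\bfh$. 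Your identification of the asymmetry $\epsilon\neq\epsilon'$ as the step that makes A3 non-vacuous is exactly right, since the quadratic term is even.

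One small caveat: you invoke (\ref{LP110}) to write the derivative, but the chain rule alone gives $\frac{d}{ds}\wp(\bfh(\bfgamma(s)),\bfh(\bfgamma(s_3)))=\frac{\partial\wp}{\partial y_1}(\cdot)\,\bfdh(\bfgamma(s))\dot{\bfgamma}(s)$, which vanishes at $s=s_*$. Referencing (\ref{LP110}) is unnecessary and would (spuriously) import Assumption \ref{H2}, which the lemma does not assume; the proof should be stated without it. The closing equivalence between (\ref{LP151}) and the totally geodesic property is also immediate from (\ref{LP204}) and Definition \ref{def3} without appealing to the Gauss formula: on tangent vectors the $\Gammay$-term in $\secff_P h$ drops, so ${v^\tangent}^\top\secff_P h^\indyi v^\tangent={v^\tangent}^\top\Hess_P h_\indyi\,v^\tangent$.
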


Hence, the answer to Question (Q1') is ``yes'' as far as (\ref{LP151}) is 
concerned.

\begin{example}\bgroup\normalfont
\label{ex9}
Consider the harmonic oscillator with unknown frequency. Its dynamics are given as
\begin{equation}
\label{LP132}
\dot y = \xrond_\indxra
\quad ,\qquad 
\dot \xrond_\indxra\;=\; -y \xrond_\indxrb
\quad ,\qquad 
\dot \xrond_\indxrb\;=\; 0
\end{equation}
Given $\varepsilon > 0$, we consider the invariant open set\footnote{%
% BEGIN FOOTNOTE
The system is not observable if $y=\xrond_\indxra=0$ or $\xrond_\indxrb=0$.
% END FOOTNOTE
}
\IfTwoCol{%
% two col
\\[0.7em]$\displaystyle 
\Ouv_\varepsilon \;=\;
$\refstepcounter{equation}\label{LP205}\hfill$(\theequation)$
\\[0.3em]\null\hfill$\displaystyle
\left\{
\vrule height 0.5em depth 0.5em width 0pt
(y,\xrond_\indxra,\xrond_\indxrb) \in \RR^3
:\,  
\varepsilon < \xrond_\indxrb y^2 + \xrond_\indxra^2 < \frac{1}{\varepsilon }
\; ,\  
\varepsilon <  \xrond_\indxrb < \frac{1}{\varepsilon }
\right\}
\:  .
$\\
}{%
% one col
\begin{equation}
\label{LP205}
\Ouv_\varepsilon \;=\; \left\{
\vrule height 0.5em depth 0.5em width 0pt
(y,\xrond_\indxra,\xrond_\indxrb) \in \RR^3
:\,  
\varepsilon < \xrond_\indxrb y^2 + \xrond_\indxra^2 < \frac{1}{\varepsilon }
\; ,\  
\varepsilon <  \xrond_\indxrb < \frac{1}{\varepsilon }
\right\}
\  .
\end{equation}
}

In \cite[Example 4.5]{127}, we have obtained the following metric satisfying Condition A2:
\IfTwoCol{%
% two col
\\[0.7em]$\displaystyle 
P(y,\xrond_\indxra,\xrond_\indxrb)\;=\; 
$\refstepcounter{equation}\label{LP136}\hfill$(\theequation)$
\\\null\hfill$\displaystyle
\left(\begin{array}{cccc}
1 & 0 & -\xrond _\indxrb  & 0 
\\
0 & 1 & 0 & -\xrond _\indxrb 
\\
0 & 0 & -y& -\xrond _\indxra 
\end{array}\right)
\euP
\left(\begin{array}{ccc}
1 & 0 & 0 
\\
0 & 1 & 0 
\\
-\xrond _\indxrb  & 0 & -y
\\
0 & -\xrond _\indxrb  & -\xrond _\indxra 
\end{array}\right),
$\\[0.7em]
where $\euP $ is a positive definite symmetric $(4,4)$ matrix.
}{%
% one col
\begin{equation}
\label{LP136}
P(y,\xrond_\indxra,\xrond_\indxrb)\!=\! 
\left(\begin{array}{cccc}
1 & 0 & -\xrond _\indxrb  & 0 
\\
0 & 1 & 0 & -\xrond _\indxrb 
\\
0 & 0 & -y& -\xrond _\indxra 
\end{array}\right)
\euP
\left(\begin{array}{ccc}
1 & 0 & 0 
\\
0 & 1 & 0 
\\
-\xrond _\indxrb  & 0 & -y
\\
0 & -\xrond _\indxrb  & -\xrond _\indxra 
\end{array}\right),
\end{equation}
where $\euP $ remains to be designed as a positive 
definite symmetric $(4,4)$ matrix.
}%
Symbolic computations give
$
\Gamma _{\indxra\indxra}^y\;=\; \Gamma _{\indxra\indxrb}^y\;=\; 0
$
and, up to some nonzero  factor $k$,
% Complement
% \pointtocomp{(see Complement \ref{complement24})}{complement24}
\startarchive%
(see \complement \ref{complement24}),%
\stoparchive%
\IfTwoCol{%
% two col
\\[0.7em]$\displaystyle
k \Gamma _{\indxrb\indxrb}^y =  
$\hfill \null \\[0.5em]\null \hfill $\displaystyle
-y\left[
\xrond _\indxrb ^2
\det\left(E_2^\top \euP E_2\right)
+
2\xrond _\indxrb \det\left(E_4^\top \euP E_2\right)
+
\det\left(E_4^\top \euP E_4\right)
\right]
% --------------------------
$\hfill \null \\[0.5em] $\displaystyle
-
\xrond _\indxra \! \left[
\vrule height 0.8em depth 0.8em width 0pt
\xrond _\indxrb ^2
\det\left(E_1^\top \euP E_2\right)
+
\xrond _\indxrb  \left[
\det\left(E_2^\top \euP E_3\right)
+
\det\left(E_1^\top \euP E_4\right)
\right]
\right.
% --------------------------
$\hfill \null \\[0.1em]\null \hfill $\displaystyle
\left.
\vrule height 0.8em depth 0.8em width 0pt
+\,  
\det\left(E_3^\top \euP E_4\right)
\right]
$\\[1em]
}{%
% one col
\\[1em]$\displaystyle
k\,  \Gamma _{\indxrb\indxrb}^y\;=\; 
-y\left[
\xrond _\indxrb ^2
\textsf{det}\left(E_2^\top \euP E_2\right)
\;+\; 
2\xrond _\indxrb \textsf{det}\left(E_4^\top \euP E_2\right)
\;+\; 
\textsf{det}\left(E_4^\top \euP E_4\right)
\right]
% --------------------------
$\hfill \null \\[0.5em]\null \hfill $\displaystyle
\;-\; 
\xrond _\indxra \left[
\xrond _\indxrb ^2
\textsf{det}\left(E_1^\top \euP E_2\right)
\;+\;
\xrond _\indxrb  \left[
\textsf{det}\left(E_2^\top \euP E_3\right)
+
\textsf{det}\left(E_1^\top \euP E_4\right)
\right]
\;+\; 
\textsf{det}\left(E_3^\top \euP E_4\right)
\right]
$\\[0.7em]%
}%
where
\IfTwoCol{%
% two col
$E_i$ is the $4\times 3 $ matrix
made of the $3 \times 3$ identity matrix with an additional raw of zeros inserted as its $i$-th row.
}{%
% one col
$$
E_1\;=\; \left(\begin{array}{@{}ccc@{}}
0 & 0 & 0
\\
1 & 0 & 0
\\
0 & 1 & 0
\\
0 & 0 & 1
\end{array}\right)
\  ,\quad
E_2\;=\; \left(\begin{array}{@{}ccc@{}}
1 & 0 & 0
\\
0 & 0 & 0
\\
0 & 1 & 0
\\
0 & 0 & 1
\end{array}\right)
\  ,\quad
E_3\;=\; \left(\begin{array}{@{}ccc@{}}
1 & 0 & 0
\\
0 & 1 & 0
\\
0 & 0 & 0
\\
0 & 0 & 1
\end{array}\right)
\  ,\quad 
E_4\;=\; \left(\begin{array}{@{}ccc@{}}
1 & 0 & 0
\\
0 & 1 & 0
\\
0 & 0 & 1
\\
0 & 0 & 0
\end{array}\right)
$$%
}%
Hence (\ref{LP151}) and therefore Condition A3 do not hold.

% ___________________________
\IfReport{%
% report
}{%
% not report
Similarly, it can be shown that Condition A3 does not hold for the other metric, satisfying Condition A2, obtained in 
\cite[Example 3.7]{127}.
}%

%
% ------------
 \startarchive %
% ------------
Also, in \cite[Example 3.7]{127}, we have obtained the following metric satisfying Condition A2:
% one col
\begin{equation}
\label{LP135}
P(y,\xrond_\indxra,\xrond_\indxrb)\;=\; 
\left(\begin{array}{c@{\  ,\  }c@{\  ,\quad  }c}
\displaystyle 
\frac{\lambda ^2 +2\xrond _\indxrb  }{\lambda (\lambda ^2+4\xrond _\indxrb )}
&\displaystyle 
\star
&
\displaystyle 
\star\\[1.56em]
\displaystyle 
-\frac{1  }{ (\lambda ^2+4\xrond _\indxrb )}
&\displaystyle 
\frac{2}{\lambda (\lambda ^2+4\xrond _\indxrb )}
&
\displaystyle 
\star\\[1.56em]
\displaystyle 
\frac{-
\lambda ^3
y
+
(\lambda ^2-4\xrond _\indxrb )
\xrond _\indxra }{\lambda ^2(\lambda ^2+4\xrond _\indxrb )^2}
&
\displaystyle 
\frac{ (3\lambda ^2+4\xrond _\indxrb )
y
-
4\lambda 
\xrond _\indxra 
}{\lambda ^2(\lambda ^2+4\xrond _\indxrb )^2}
&
\diamond
\end{array}\right)
\end{equation}
where the various $\star$ should be replaced by their symmetric 
values and
$$
\diamond=
\frac{6\lambda ^4
+12\lambda ^2\xrond _\indxrb 
+16\xrond _\indxrb ^2
}{\lambda ^3(\lambda ^2+4\xrond _\indxrb )^3}y ^2
-
\frac{
4(5\lambda ^2+4\xrond _\indxrb )
}{\lambda ^2(\lambda ^2+4\xrond _\indxrb )^3}
y\xrond _\indxra 
\;+\; 
\frac{
4(5\lambda ^2+4\xrond _\indxrb )
}{\lambda ^3(\lambda ^2+4\xrond _\indxrb )^3}
\xrond _\indxra ^2 
\  ,
$$
with $\lambda$ strictly positive. It can be checked
% Complement
% \pointtocomp{(see Complement \ref{complement23})}{complement23}
% \startarchive%
(see \complement \ref{complement23})
% \stoparchive%
%
that 
$\Gamma _{\indxra\indxra}^y$ is zero and that $\Gamma _{\indxra\indxrb}^y$ is nonzero.
Then, again, (\ref{LP151}) and therefore Condition A3 do not hold.
% ------------
 \stoparchive %

Fortunately, with the techniques presented below, we shall be able to obtain, in Example \ref{ex8}, a metric satisfying Conditions A2 and A3.
\egroup\end{example}

\subsubsection{About (\ref{LP162})}
We shall not try to establish that (\ref{LP162}) is necessary for Condition A3 to hold. Instead, we show below that, maybe 
after modifying appropriately $\bfP$, we 
can always guarantee that (\ref{LP162}) holds. To do so, we start by providing an expression of (\ref{LP162}).

Let $\coordx$ be an arbitrary coordinate chart around some $\bfx_0$ in $\Ouv$. With Lemma \ref{lem6},
an expression, in these coordinates, 
of the restriction, denoted $\bfsecff _\bfP \bfh^{\ortho,\ortho}$, to the orthogonal distribution $\bfDistrib _\bfP^\ortho$
of the second fundamental form  of $\bfh$
is, for each $\indyi$ in $\{1,2,\ldots,p\}$
\IfTwoCol{%
% two col
 and each $x$ in $\coordxm(\coordxd\cap \Ouv)$,
\begin{equation}
\label{eqn:SecondFundamentalForm}
[\secff _P h^{\ortho,\ortho}]^\indyi(x)\;=\;
\frac{\partial h}{\partial x}(x)P(x)^{-1}
\secff _P h^\indyi(x)
P(x)^{-1}\frac{\partial h}{\partial x}(x)^\top
\  .
\end{equation}%
}{
\begin{equation}
\label{eqn:SecondFundamentalForm}
[\secff _P h^{\ortho,\ortho}]^\indyi(x)\;=\;
\frac{\partial h}{\partial x}(x)P(x)^{-1}
\secff _P h^\indyi(x)
P(x)^{-1}\frac{\partial h}{\partial x}(x)^\top
\qquad 
\forall x \in \coordxm(\coordxd\cap \Ouv)
\  .
\end{equation}%
}%

By definition,  equation (\ref{LP162}) is equivalent to the nullity of
$[\secff _P h^{\ortho,\ortho}]^{\indyi} (x) $, for each $\indyi$ in $\{1,2,\ldots,p\}$ and each $x$ in $\coordxm(\coordxd\cap \Ouv)$.
Via computations using the components of the second fundamental form above, and by expansion 
using the identity
$$
[P^{-1}]_{\indyi\indxra}\;=\; - \sum_\indyl [P^{-1}]_{\indyi\indyl}
\sum_{\indxrf}P_{\indyl\indxrf}[P_{\xrond\xrond}^{-1}]_{\indxrf\indxra}
\  ,
$$
we can establish the following result.

\begin{lemma}~
\label{lem8}
For any $\bfx_0$ in $\Ouv$ and any pair of coordinate charts $\coordyxr$ around $\bfx_0$ and $\coordy$ around $\bfh(\bfx_0)$,
the  expression of $\bfsecff _\bfP \bfh^{\ortho,\ortho}$ in \eqref{eqn:SecondFundamentalForm} is
\IfTwoCol{%
% two col
\\[0.7em]$\displaystyle 
2 [\secff _P h^{\ortho,\ortho}]_{\indyj\indyk}^{\indyi}\;=\; 
$\\[0.3em]\null \qquad $\displaystyle 
\sum_\indyl \Py ^{-1}_{\indyi\indyl}
\left(
\frac{\partial \Py _{\indyl\indyk}}{\partial y_\indyj}
+
\frac{\partial  \Py _{\indyl\indyj}}{\partial y_\indyk}
-
\frac{\partial  \Py _{\indyj\indyk}}{\partial y_\indyl}
\right)
$\hfill\null\\\null\hfill$\displaystyle
-
\sum_\indyl 
 [P_y^{-1}]_{\indyi\indyl}
\left(
\frac{\partial [P_y]_{\indyl\indyk}}{\partial y_\indyj}
+
\frac{\partial [P_y]_{\indyl\indyj}}{\partial y_\indyk}
-
\frac{\partial [P_y]_{\indyj\indyk}}{\partial y_\indyl}
\right)
$\\[0.3em]\null \quad  $\displaystyle 
+
\sum_\indyl 
 [P_y^{-1}]_{\indyi\indyl}
\sum_{\indxra,\indxrb}
[P_{\xrond\xrond}^{-1}]_{\indxra\indxrb} \times
$\hfill\null\\[0.3em]\null\hfill$\displaystyle
\times
\left(
\frac{\partial [P_y]_{\indyl\indyk}}{\partial \xrond_\indxra}
P_{\indxrb\indyj}
+
\frac{\partial [P_y]_{\indyl\indyj}}{\partial \xrond_\indxra}
P_{\indxrb\indyk}
-
\frac{\partial [P _y]_{\indyj\indyk}}{\partial \xrond_\indxra}
P_{\indxrb\indyl} 
\right)
\  ,
$\\[0.7em]%
}{%
% one col
\\[1em]$\displaystyle 
2 [\secff _P h^{\ortho,\ortho}]_{\indyj\indyk}^{\indyi}\;=\; 
\sum_\indyl \Py ^{-1}_{\indyi\indyl}
\left(
\frac{\partial \Py _{\indyl\indyk}}{\partial y_\indyj}
+
\frac{\partial  \Py _{\indyl\indyj}}{\partial y_\indyk}
-
\frac{\partial  \Py _{\indyj\indyk}}{\partial y_\indyl}
\right)
-
\sum_\indyl 
 [P_y^{-1}]_{\indyi\indyl}
\left(
\frac{\partial [P_y]_{\indyl\indyk}}{\partial y_\indyj}
+
\frac{\partial [P_y]_{\indyl\indyj}}{\partial y_\indyk}
-
\frac{\partial [P_y]_{\indyj\indyk}}{\partial y_\indyl}
\right)
$\hfill \null \\\null \hfill $\displaystyle 
+
\sum_\indyl 
 [P_y^{-1}]_{\indyi\indyl}
\sum_{\indxra,\indxrb}
[P_{\xrond\xrond}^{-1}]_{\indxra\indxrb}
\left(
\frac{\partial [P_y]_{\indyl\indyk}}{\partial \xrond_\indxra}
P_{\indxrb\indyj}
+
\frac{\partial [P_y]_{\indyl\indyj}}{\partial \xrond_\indxra}
P_{\indxrb\indyk}
-
\frac{\partial [P _y]_{\indyj\indyk}}{\partial \xrond_\indxra}
P_{\indxrb\indyl} 
\right)
\  ,
$\\[1em]%
}%
where we have denoted
\begin{equation}
\label{LP199}
P_y(y,\xrond) = P_{yy}(y,\xrond) - P_{y\xrond}(y,\xrond)P_{\xrond\xrond}(y,\xrond)^{-1}P_{\xrond y}(y,\xrond)
\;  .
\end{equation}
\end{lemma}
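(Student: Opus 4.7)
The plan is a direct symbolic computation starting from the expression (\ref{eqn:SecondFundamentalForm}) in the distinguished coordinates $\coordyxr$, where the submersion takes the particularly simple form $h(y,\xrond) = y$. In these coordinates, the Jacobian $\frac{\partial h}{\partial x}$ is the $p\times n$ block row $[I_p,\,0]$, so that $\frac{\partial h_\indyi}{\partial x_\indxa} = \delta_{\indyi\indxa}$ for the y-indices and vanishes on the $\xrond$-indices, while $\frac{\partial^2 h_\indyi}{\partial x_\indxa\partial x_\indxb} = 0$ identically. Substituting these into the definition (\ref{eqn:2ndFundFormh}) of $\secff_P h^\indyi$ eliminates the second-order term entirely, collapses the second sum to $-\Gamma^\indyi_{\indxa\indxb}$, and reduces the third sum to $\Gammay^\indyi_{\indxa\indxb}(y)$ exactly when both $\indxa,\indxb$ are y-indices (and to $0$ otherwise).

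Next I would represent the first $p$ rows of $P^{-1}$ via the standard block inverse formula, obtaining $[P_y^{-1},\,-P_y^{-1}P_{y\xrond}P_{\xrond\xrond}^{-1}]$ with $P_y$ as the Schur complement defined in (\ref{LP199}). Plugging this and the reduced $\secff_P h^\indyi$ into (\ref{eqn:SecondFundamentalForm}) and expanding, the $(\indyj,\indyk)$-entry of $[\secff_P h^{\ortho,\ortho}]^\indyi$ becomes an explicit polynomial in the blocks of $P^{-1}$ and the Christoffel symbols of the full metric $P$. The $\Gammay^\indyi_{\indyj\indyk}$ piece passes through cleanly and contributes the first line of the stated formula.

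The rest of the work is to reorganize the contribution coming from $\Gamma^\indyi_{\indxa\indxb}$. By definition, $\Gamma^\indyi_{\indxa\indxb}$ involves $\sum_\indyl [P^{-1}]_{\indyi\indyl}\bigl(\partial_\indxa P_{\indyl\indxb} + \partial_\indxb P_{\indyl\indxa} - \partial_\indyl P_{\indxa\indxb}\bigr)$ plus similar terms with $\xrond$-indices. I would group these by whether the outer indices $\indxa,\indxb$ land on $y$- or $\xrond$-blocks and then apply the hint identity $[P^{-1}]_{\indyi\indxra} = -\sum_\indyl [P^{-1}]_{\indyi\indyl} \sum_\indxrf P_{\indyl\indxrf}[P_{\xrond\xrond}^{-1}]_{\indxrf\indxra}$ to convert every occurrence of $\xrond$-column entries of $P^{-1}$ into y-column entries of $P^{-1}$ paired with a factor $P_{\indyl\indxrf}[P_{\xrond\xrond}^{-1}]_{\indxrf\indxra}$. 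Matching the resulting combinations of derivatives of the blocks $P_{yy}, P_{y\xrond}, P_{\xrond y}, P_{\xrond\xrond}$ against the derivatives of the Schur complement $P_y$ (via the chain rule applied to (\ref{LP199})) should then rebuild exactly the Christoffel-like expression in the second line of the claim, while the leftover $\xrond$-derivatives of $P_y$ account for the third line.

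The main obstacle is the bookkeeping in this last step: there are many terms with mixed y- and $\xrond$-indices, and one must verify that the symmetric combination $\partial_\indyj P_{y,\indyl\indyk} + \partial_\indyk P_{y,\indyl\indyj} - \partial_\indyl P_{y,\indyj\indyk}$ assembles correctly after all applications of the Schur identity, with the surplus derivatives collecting precisely into the $[P_{\xrond\xrond}^{-1}]_{\indxra\indxrb}$-weighted sum of $\xrond$-derivatives of $P_y$ multiplied by entries of the off-diagonal blocks $P_{\indxrb\indyj},P_{\indxrb\indyk},P_{\indxrb\indyl}$. Once this is organized, the stated formula follows; the detailed expansion is routine but long, and is therefore deferred to \cite{Sanfelice-Praly.III-long}.
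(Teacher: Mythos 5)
Your plan follows the same route as the paper's own proof: compute in the distinguished $(y,\xrond)$ coordinates so that $h$ is the projection and $\secff_P h$ reduces to a block of Christoffel symbols, sandwich with the block rows and columns of $P^{-1}$ (whose first $p$ rows involve the Schur complement $P_y$), and then use the identity $[P^{-1}]_{\indyi\indxra}=-\sum_\indyl[P^{-1}]_{\indyi\indyl}\sum_\indxrf P_{\indyl\indxrf}[P^{-1}_{\xrond\xrond}]_{\indxrf\indxra}$ to fold the mixed $\xrond$-column terms back onto $y$-indices, finally sorting the terms by $\partial_y$ versus $\partial_\xrond$ and recognizing derivatives of $P_y$. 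This is precisely the paper's computation, with the same deferral of the lengthy index bookkeeping to the long version.
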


% Complement
% \pointtocomp{A proof is given in Complement \ref{complement37}.}{complement37}
\IfReport{%
% report
\startarchive%
A proof is given in the \complement \ref{complement37}.%
\stoparchive%
}{
% not report
A proof of Lemma~\ref{lem8} can be found in \cite{Sanfelice-Praly.III-long}.
}

It follows from this expression that, if $P_y$ in (\ref{LP199}) does not depend on $\xrond$ and is chosen such that it has the same Christoffel 
symbols as those of $\Py$, then condition (\ref{LP162}) holds. These two conditions are trivially
satisfied if we simply have
\IfTwoCol{%
% two col
for all $(y,\xrond)$ in $\coordyxrm(\coordyxrd)$,
$$
\Py(y) = P_y(y) =  P_{yy}(y,\xrond) - P_{y\xrond}(y,\xrond)P_{\xrond\xrond}(y,\xrond)^{-1}P_{\xrond y}(y,\xrond)
\  .
$$
}{%
% one col
$$
\Py(y)\;=\; P_y(y)\;=\;  P_{yy}(y,\xrond)\;-\; P_{y\xrond}(y,\xrond)P_{\xrond\xrond}(y,\xrond)^{-1}P_{\xrond y}(y,\xrond)
\qquad \forall (y,\xrond)\in \coordyxrm(\coordyxrd)
\  .
$$%
}%
If, instead of the coordinates $(y,\xrond)$, we use coordinates $x$, this equation is
\begin{equation}
\label{LP160}
\left(\frac{\partial h}{\partial x}(x)P(x)^{-1} \frac{\partial h}{\partial x}(x)^\top\right)^{-1}
\;=\; \Py (h(x))
\qquad \forall x\in \coordxm(\coordxd)
\end{equation}
where $h$ and $\Py $ are expressed with the same coordinates.

\begin{definition}[{\cite[axiom S2]{ONeill.66}}]
\label{def4}
A submersion $\bfh$ satisfying (\ref{LP160}) is called a Riemannian submersion.
An equivalent definition is that $\bfh$ is a submersion preserving
length of orthogonal vectors, i.e., we have
\IfTwoCol{%
% two col
for all $v^\ortho$ in $\Distrib _P^\ortho(x)$,
$$
v^{\ortho\top}  \frac{\partial h}{\partial x}(x)^\top \Py (h(x))\frac{\partial h}{\partial x}(x) v^\ortho  
= 
v^{\ortho\top}  P(x) v^\ortho
\, .
$$
}{%
% one col
$$
v^{\ortho\top}  \frac{\partial h}{\partial x}(x)^\top \Py (h(x))\frac{\partial h}{\partial x}(x) v^\ortho  
\;=\; 
v^{\ortho\top}  P(x) v^\ortho  
\qquad \forall v^\ortho\in \Distrib _P^\ortho(x)
\  .
$$
}
\end{definition}

\IfTwoCol{%
% two col
}{%
% one col
An interesting consequence of $\bfh$ being a Riemannian submersion is as follows.

%%%%%%%%%%%%%%%%%%%%%%%%%%%%%%%%%%%%%%%%%%%%%%%
%                                             %
%    A GARDER pour preuve construction de P   %
%                                             %
%%%%%%%%%%%%%%%%%%%%%%%%%%%%%%%%%%%%%%%%%%%%%%%
\begin{lemma}[{\cite[Lemma 9.44]{Besse.78}}]
\label{lem11}
Assume $\bfh$ is a Riemannian submersion on $\Ouv$.
Any  geodesic
$\bfgamma $, taking values in $\Ouv$ on the maximal interval $(s_1,s_2)$,
for which there exists $s_3$ in $(s_1,s_2)$ such that we have
$$
\frac{d\bfgamma}{ds}(s_3) \in \bfDistrib _\bfP ^\ortho (\bfgamma(s_3))
$$
satisfies
$$
\frac{d\bfgamma}{ds}(s) \in \bfDistrib _\bfP ^\ortho (\bfgamma(s))
\qquad \forall s\in (s_1,s_2)
$$
and $\bfh(\bfgamma)$ is a geodesic in the $\bfy$-manifold $\bfRR^p$.
\end{lemma}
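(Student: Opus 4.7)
The plan is to work in the adapted coordinates provided by the Local Submersion Theorem (Theorem~\ref{thm4}). Around any point $\bfgamma(s_3)\in \Ouv$ I would pick coordinates $\coordyxr$ in which the expression of $\bfh$ is the projection $(y,\xrond)\mapsto y$. Then $\partial h/\partial x = [\,I_p\ 0\,]$, and by Lemma~\ref{lem6} a vector $(u,w)\in\RR^p\times \RR^{n-p}$ is orthogonal if and only if $w = -P_{\xrond\xrond}^{-1}P_{\xrond y}\,u$. The Riemannian submersion hypothesis~(\ref{LP160}) becomes, in these coordinates, $P_y(y,\xrond)=\Py(y)$, with $P_y$ as in (\ref{LP199}). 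Substituted into the formula of Lemma~\ref{lem8}, this annihilates the last line and cancels the difference of the first two, yielding
\[
[\secff_P h^{\ortho,\ortho}]^\indyi(x)=0 \qquad \forall x\in \coordyxrm(\coordyxrd),
\]
so equation (\ref{LP162}) holds automatically under the submersion hypothesis.

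Next I would establish preservation of horizontality. Introducing the vertical deviation
\[
W(s)\;=\; \dot\xrond(s) \;+\; P_{\xrond\xrond}(\gamma(s))^{-1}P_{\xrond y}(\gamma(s))\,\dot y(s),
\]
horizontality of $\dot\gamma(s)$ is equivalent to $W(s)=0$, and by hypothesis $W(s_3)=0$. Differentiating $W$ along $\gamma$ and using the geodesic equation to eliminate $\ddot\xrond$ and $\ddot y$ will produce a first-order linear ODE of the shape $\dot W = \alpha(s)\,W + \beta(s)\,(W\otimes \dot y) + \textsf{(quadratic in }\dot y)$. The expected payoff of the preliminary computation is that the purely $\dot y\otimes \dot y$ term is exactly $[\secff_P h^{\ortho,\ortho}]^\indyi$ up to Christoffel rearrangement, and hence vanishes. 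Uniqueness of solutions to linear ODEs with zero initial data then forces $W\equiv 0$ on the connected component of $s_3$, i.e.\ on $(s_1,s_2)$, proving $\dot\gamma(s)\in\bfDistrib_\bfP^\ortho(\bfgamma(s))$ throughout.

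Once horizontality is known globally, the last step is showing that $\bfh\circ\bfgamma$ is a geodesic in the $\bfy$-manifold. I would plug the horizontal form $\dot\xrond=-P_{\xrond\xrond}^{-1}P_{\xrond y}\dot y$ into the $y$-block of the geodesic equation for $\gamma$ and show that the result coincides with the geodesic equation for $y=\bfh\circ\bfgamma$ in $\bfRR^p$ with Christoffel symbols derived from $\bfPy$. The rearrangement needed to recast the $\xrond$-dependent Christoffel contributions as $\Gammay^\indyi(y)\dot y\dot y$ is, once again, the nullity of $[\secff_P h^{\ortho,\ortho}]^\indyi$ --- which is precisely what Lemma~\ref{lem7} says when specialized to horizontal vectors. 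Combining this with the previous step yields the claim.

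The main obstacle I anticipate is bookkeeping: writing out explicitly that the $\dot y\otimes \dot y$ terms both in the ODE for $W$ and in the projected geodesic equation are exactly $[\secff_P h^{\ortho,\ortho}]^\indyi$. Conceptually this is where O'Neill's $A$- and $T$-tensors enter, but for our purposes the coordinate identities from Lemmas~\ref{lem6} and~\ref{lem8}, together with the submersion condition~(\ref{LP160}), should suffice without introducing those tensors explicitly.
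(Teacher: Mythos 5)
Your strategy is essentially the one in the paper's proof (see Appendix~\ref{complement47}), and it works, but two points deserve a flag. First, the paper's invariant is $T(s) = P_{\xrond\xrond}(\gamma(s))\,\dot\gamma_\xrond(s) + P_{\xrond y}(\gamma(s))\,\dot\gamma_y(s)$ rather than your $W = P_{\xrond\xrond}^{-1}T$; with $T$, the Euler-Lagrange form of the geodesic equation and the completing-of-the-square decomposition $v^\top P v = \dot y^\top P_y \dot y + T^\top P_{\xrond\xrond}^{-1}T$ immediately give $2\,\dot T = \dot y^\top (\partial_\xrond P_y)\,\dot y + \partial_\xrond\{T^\top P_{\xrond\xrond}^{-1}T\}$, so the inhomogeneous term is \emph{literally} $\dot y^\top(\partial P_y/\partial\xrond)\dot y$ and vanishes because $P_y$ is $\xrond$-independent --- no detour through Lemma~\ref{lem8} or $\secff_P h^{\ortho,\ortho}$ is needed, and the remaining ODE is purely quadratic (not linear) in $T$, with $T\equiv 0$ following by uniqueness at the equilibrium. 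Second, your claimed identification of the $\dot y\otimes\dot y$ source term in the $W$-equation with $[\secff_P h^{\ortho,\ortho}]^\indyi$ ``up to Christoffel rearrangement'' is not exact: $\secff_P h^{\ortho,\ortho}$ carries a free $\bfy$-tangent index $\indyi$, while the $W$-equation carries an $\xrond$-index $\indxra$, and the formula in Lemma~\ref{lem8} involves $\partial[P_y]/\partial\xrond$ contracted with $P_{\xrond\xrond}^{-1}$ and $P_{\xrond y}$ in a way that does not literally reproduce $\dot y^\top(\partial P_y/\partial\xrond)\dot y$; both vanish under the submersion hypothesis, so the conclusion is right, but the claimed equality would need justification. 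Your final step, projecting the $y$-block of the geodesic equation to recover the $\bfPy$-geodesic equation for $\bfh\circ\bfgamma$, matches the paper's calculation and is fine as sketched.
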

% Complement
% \pointtocomp{A proof is given in Complement \ref{complement47}.}{complement47}
\startarchive%
A proof is given in the \complement \ref{complement47}.%
\par\vspace{0.5em}
\stoparchive%
}

Now, as indicated above, instead of showing the necessity of (\ref{LP162}) for Condition A3 to hold, we answer 
the following question:
\begin{itemize}
\item[(Q2)]
{\it 
If we are given a metric $\bfP$ satisfying (\ref{LP151}) and (\ref{LP161}) but neither (\ref{LP162}) nor (\ref{LP160}),
can we modify it to satisfy the three conditions?}
\end{itemize}
\par\vspace{0.5em}
To answer this question, we propose the modification $\bfP_{mod}$ of the metric $\bfP$, the expression of 
which is, for the coordinate chart $\coordx$, given by
\IfTwoCol{%
% two col
\\[0.7em]$\displaystyle 
P_{mod}(x)\;=\; P(x) + 
$\refstepcounter{equation}\label{LP163}\hfill$(\theequation)$
\\[0.3em]\null\hfill$\displaystyle
\frac{\partial h}{\partial x}(x)^\top  \!\!
\left[\Py (h(x)) -
\left(\frac{\partial h}{\partial x}(x) P(x)^{-1}\frac{\partial h}{\partial x}(x)^\top
\right)^{-1}
\right]
\frac{\partial h}{\partial x}(x)
\  ,
$\\[0.7em]
for all $x$ in $\coordxm(\coordxd)$.
}{%
% one col
\begin{equation}
\label{LP163}
P_{mod}(x)\;=\; P(x) + \frac{\partial h}{\partial x}(x)^\top  
\left[\Py (h(x)) -
\left(\frac{\partial h}{\partial x}(x) P(x)^{-1}\frac{\partial h}{\partial x}(x)^\top
\right)^{-1}
\right]
\frac{\partial h}{\partial x}(x)
\qquad \forall x\in \coordxm(\coordxd)
\  .
\end{equation}%
}%
\startmodif
This is a positive definite matrix\footnote{%
% BEGIN FOOTNOTE
\startmodif
For any full row rank matrix $H$ and symmetric positive definite matrices $P$ and $Q$,
% with appropriate dimension,
the matrix
$P+H^\top\!(Q-(HP^{-1}\!H^\top)^{-1})H$ is positive definite.
\stopmodif
% END FOOTNOTE
}.
\stopmodif
This definition of $\bfP_{mod}$ via its expression with coordinates gives a covariant $2$-tensor that
is invariant under a change of coordinates for $\bfy$ (and $\bfh$). Namely, with $\bar P_{mod}$ defined as
\IfTwoCol{%
% two col
\\[0.7em]$\displaystyle 
\bar P_{mod}(\bar x)
\;=\; 
\bar P(\bar x) +
$\hfill\null\\\null\hfill$\displaystyle
\frac{\partial \bar h}{\partial \bar x}(\bar x)^\top \!\!
\left[\bar \Py (\bar h(\bar x))
-
\left(\frac{\partial \bar h}{\partial \bar x}(\bar x) \bar P(\bar x)^{-1}\frac{\partial \bar h}{\partial \bar 
x}(\bar x)^\top\right)^{-1}
\right]
\frac{\partial \bar h}{\partial \bar x}(\bar x)
\  ,
$\\[0.7em]
where we have (see (\ref{LP186}))
\begin{eqnarray*}
&\displaystyle \bar x = \changex(x)\quad ,\qquad
\bar h(\changex(x))= \changey (h(x))
\  ,
\\
&\displaystyle 
\frac{\partial \changex}{\partial x}(x)^\top \bar P(\changex(x)) \frac{\partial \changex}{\partial x}(x)
\;=\; P(x)
\  ,
\\
&\displaystyle 
\frac{\partial \changey}{\partial y}(y)^\top \bar \Py (\changey (y)) \frac{\partial \changey}{\partial y}(y)
\;=\; \Py (y)
\  ,
\end{eqnarray*}
}{%
% one col
$$
\bar P_{mod}(\bar x)
\;=\; 
\bar P(\bar x) +
\frac{\partial \bar h}{\partial \bar x}(\bar x)^\top 
\left[\bar \Py (\bar h(\bar x))
-
\left(\frac{\partial \bar h}{\partial \bar x}(\bar x) \bar P(\bar x)^{-1}\frac{\partial \bar h}{\partial \bar 
x}(\bar x)^\top\right)^{-1}
\right]
\frac{\partial \bar h}{\partial \bar x}(\bar x)
\  ,
$$
where we have (see (\ref{LP186}))
\begin{eqnarray*}
&\displaystyle 
\bar x \;=\; \changex(x)
\  ,\quad 
\bar h(\changex(x))\;=\; \changey (h(x))\  ,
\\
&\displaystyle 
\frac{\partial \changex}{\partial x}(x)^\top \bar P(\changex(x)) \frac{\partial \changex}{\partial x}(x)
\;=\; P(x)
\  ,\quad 
\frac{\partial \changey}{\partial y}(y)^\top \bar \Py (\changey (y)) \frac{\partial \changey}{\partial y}(y)
\;=\; \Py (y)
\  ,
\end{eqnarray*}
}%
we obtain
% Complement
% \pointtocomp{(see Complement \ref{complement38})}{complement38}
\startarchive%
(see \complement \ref{complement38})%
\stoparchive
\begin{equation}
\label{LP165}
\frac{\partial \changex}{\partial x}(x)^\top
\bar P_{mod}(\bar x)
\frac{\partial \changex}{\partial x}(x)
\;=\; 
P_{mod}(x)
\  .
\end{equation}
The metric $\bfP_{mod}$ has the following properties.

\begin{lemma}
\label{lem9}~
Given the metric $\bfP$, 
let $\bfP_{mod}$ be a metric, the expression of which, with the coordinate chart $\coordx$, is
as in \eqref{LP163}.  The following holds:
\begin{enumerate}
\item
\label{point3}
$\bfP_{mod}$ satisfies condition (\ref{LP160});
i.e., $\bfh$ is a Riemannian submersion with $\bfP_{mod}$.
\item
\label{point4}
Condition \eqref{LP151}
holds for $\bfP$ if and only if it
holds for $\bfP_{mod}$.
\item
\label{point8}
\null \hfill 
$
\bfDistrib _\bfP^\ortho (\bfx) = \bfDistrib _{\bfP_{mod}} ^\ortho (\bfx)\qquad \forall \bfx\in \Ouv \  .
$
\refstepcounter{equation}\label{LP232}\hfill$(\theequation)$
\end{enumerate}
\end{lemma}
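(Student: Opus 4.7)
The plan is to prove the three claims in the order 3, 1, 2. Introduce the shorthand $H = \frac{\partial h}{\partial x}(x)$ (full row rank by Assumption~\ref{H1}), $A = HP^{-1}H^\top$ (symmetric positive definite, hence invertible), $Q = \Py(h(x))$, and $M = Q - A^{-1}$, so that $P_{mod} = P + H^\top M H$. Set $T := P_{mod} - P = H^\top M H$. The pivotal observation is that $T$ vanishes on any pair containing a tangent vector: for $v^\tangent \in \Distrib^\tangent(x)$ and any $v$, ${v^\tangent}^\top T v = (Hv^\tangent)^\top M (Hv) = 0$ since $Hv^\tangent = 0$. This immediately proves Point~\ref{point8}: the $P$- and $P_{mod}$-orthogonal complements of $\Distrib^\tangent(x)$ coincide.

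For Point~\ref{point3}, I verify directly that $HP_{mod}^{-1}H^\top = Q^{-1}$, which is (\ref{LP160}) with $P$ replaced by $P_{mod}$. Compute $P_{mod}P^{-1}H^\top = H^\top + H^\top M A = H^\top\bigl(I + (Q - A^{-1})A\bigr) = H^\top Q A$, whence $P^{-1}H^\top = P_{mod}^{-1}H^\top QA$. Multiplying on the left by $H$ yields $A = HP_{mod}^{-1}H^\top \cdot QA$, and cancelling the invertible factor $QA$ gives the desired identity.

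Point~\ref{point4} is the substantive step. By Lemma~\ref{lem12} applied to each metric, the $\bfh(\bfx_0)$-level set is totally geodesic in $g \in \{P, P_{mod}\}$ iff $H\nabla^g_V V = 0$ for every vector field $V$ tangent to the level set. I compare the two Levi-Civita connections via the Koszul identity for the difference $D := \nabla^{P_{mod}} - \nabla^P$, namely $2 P_{mod}(D(X, Y), W) = (\nabla^P_X T)(Y, W) + (\nabla^P_Y T)(X, W) - (\nabla^P_W T)(X, Y)$. For tangent $X, Y$ and orthogonal $W$, the annihilation $T(\cdot, \Distrib^\tangent) = 0$ kills $T(Y, \cdot)$, $T(X, \cdot)$, and $T(X, Y)$ outright, while the Frobenius involutivity of $\Distrib^\tangent$ combined with torsion-freeness gives $[X, Y] \in \Distrib^\tangent$, so the remaining cross-terms collapse to $P_{mod}(D(X, Y), W) = -T(\nabla^P_X Y, W) = -(H\nabla^P_X Y)^\top M (HW)$. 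Lemma~\ref{lem6} applied to $P_{mod}$ (valid by Point~\ref{point8}) parametrizes $W \in \Distrib^\ortho$ as $W = P_{mod}^{-1}H^\top u$ with $u \in \RR^p$; by Point~\ref{point3} this gives $HW = Q^{-1}u$, while $P_{mod}(D(X, Y), W) = (HD(X, Y))^\top u$. Identifying for all $u$ yields $HD(X, Y) = -Q^{-1} M H\nabla^P_X Y$, and therefore $H\nabla^{P_{mod}}_X Y = (I - Q^{-1}M)H\nabla^P_X Y = Q^{-1}A^{-1}\, H\nabla^P_X Y$. Since $Q^{-1}A^{-1}$ is invertible, $H\nabla^{P_{mod}}_X Y$ and $H\nabla^P_X Y$ have identical zero sets, which yields the equivalence.

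The crux lies in Point~\ref{point4}: extracting from the Koszul difference a clean, invertible transformation between the two obstructions to total geodesicity. The decisive simplifications, however, are built into the construction of $P_{mod}$: the perturbation $T = H^\top M H$ annihilates tangent vectors (collapsing most Koszul terms), the involutivity of $\Distrib^\tangent$ removes the bracket contribution, and the Riemannian-submersion identity of Point~\ref{point3} converts the parametrization of $\Distrib^\ortho$ provided by Lemma~\ref{lem6} into the invertible factor $Q^{-1}A^{-1}$.
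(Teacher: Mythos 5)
Your proof is correct, and it takes a genuinely different route from the paper on the two substantive items. For Point~\ref{point8}, you observe that $T := P_{mod} - P = H^\top M H$ annihilates any pair containing a tangent vector, so $\bfP$ and $\bfP_{mod}$ induce the same pairing between $\bfDistrib^\tangent$ and everything else and hence have identical orthogonal complements; the paper instead derives the matrix identity $P_{mod}P^{-1}H^\top = H^\top\Py(HP^{-1}H^\top)$ and invokes Lemma~\ref{lem6} to see the two gradient-column spans coincide. Your argument for this item is cleaner. For Point~\ref{point4}, the paper is quite economical: it uses $\bfHess_\bfP\bfh = \tfrac{1}{2}\mathcal{L}_{\bfgrad_\bfP\bfh}\bfP$ together with the Leibniz rule for Lie differentiation of \eqref{LP163} to produce \eqref{LP100}, in which the extra terms are of the form $M\tfrac{\partial h}{\partial x} + \tfrac{\partial h}{\partial x}^\top M^\top$ and so vanish upon contraction with tangent vectors. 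You instead deploy the Besse/Koszul formula $2\,\bfP_{mod}(D(X,Y),W) = (\nabla^P_X T)(Y,W) + (\nabla^P_Y T)(X,W) - (\nabla^P_W T)(X,Y)$ for $D = \nabla^{P_{mod}} - \nabla^P$, use the annihilation property of $T$ together with torsion-freeness and involutivity of $\bfDistrib^\tangent$ to collapse it to $-T(\nabla^P_X Y, W)$, and then push the result through the parametrization $W = P_{mod}^{-1}H^\top u$ and Point~\ref{point3} to get the invertible relation $H\nabla^{P_{mod}}_X Y = Q^{-1}A^{-1}\,H\nabla^P_X Y$. Your version is longer and relies on the prior items (a logical dependency the paper avoids), but it is genuinely coordinate-free, makes explicit the invertible linear map connecting the two obstruction tensors, and in fact establishes somewhat more than is needed (the full restricted Hessian $(X,Y)$, not just the quadratic form $(V,V)$, transforms invertibly). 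Point~\ref{point3} is the same routine matrix manipulation either way; the paper simply reads it off the block form \eqref{LP216} in $(y,\xrond)$ coordinates.
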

\begin{proof}
Item~\ref{point3} follows from the expression of $\bfP_{mod}$ with coordinates $(y,\xrond)$
which is%
\IfTwoCol{%
% two col
\\[0.7em]$\displaystyle 
P_{mod}(y,\xrond)
\;=\; 
$\refstepcounter{equation}\label{LP216}\hfill$(\theequation)$
\\[0.5em]\null\hfill$\displaystyle
\left(\begin{array}{cc}
\Py (y) + P_{y\xrond}(y,\xrond)P_{\xrond\xrond}(y,\xrond)^{-1}P_{\xrond y}(y,\xrond)
&
P_{y\xrond}(y,\xrond)
\\
P_{\xrond y}(y,\xrond) & 
P_{\xrond\xrond}(y,\xrond)
\end{array}
\right)
\,  .
$\\
}{%
% one col
\begin{equation}
\label{LP216}
P_{mod}(y,\xrond)
\;=\; 
\left(\begin{array}{cc}
\Py (y) + P_{y\xrond}(y,\xrond)P_{\xrond\xrond}(y,\xrond)^{-1}P_{\xrond y}(y,\xrond)
&
P_{y\xrond}(y,\xrond)
\\
P_{\xrond y}(y,\xrond) & 
P_{\xrond\xrond}(y,\xrond)
\end{array}
\right)
\  .
\end{equation}%
}%

For item~\ref{point4},
% Complement
% \pointtocomp{(a direct proof is given in Complement \ref{complement28})}{complement28},
\startarchive%
(a direct proof is given in the \complement \ref{complement28})%
\stoparchive
we note that condition (\ref{LP151}) is equivalent to (\ref{LP204}) where
the Hessian of $\bfh$ is related to its gradient by (see (\ref{LP156}))
$$
\Hess _P h (x) \;=\; \frac{1}{2}\,  \mathcal{L}_{\grad _P h} P(x)
\qquad \forall x\in \coordxm(\coordxd)
\  .
$$
So the claim follows from the fact that the product rule for Lie differentiation is formally identical with 
the product rule of ordinary differentiation. Indeed the Lie differentiation of (\ref{LP163}) gives
a matrix $M(x)$ satisfying
\IfTwoCol{%
% two col
\\[0.7em]$\displaystyle 
 \mathcal{L}_{\grad _{P_{mod}} h} P_{mod}(x)
=
\mathcal{L}_{\grad _P h} P(x)
$\refstepcounter{equation}\label{LP100}\hfill$(\theequation)$
\\[0.5em]\null\hfill\null \hfill $\displaystyle
\vrule height 0pt depth 1.5em width 0pt
+ \;  M(x) \frac{\partial h}{\partial x}(x) + \frac{\partial h}{\partial x}(x)^\top 
M(x)^\top
\qquad \forall x\in \coordxm(\coordxd)
\  .
$
}{%
% one col
\begin{equation}
\label{LP100}
 \mathcal{L}_{\grad _{P_{mod}} h} P_{mod}(x)
=
\mathcal{L}_{\grad _P h} P(x)+ M(x) \frac{\partial h}{\partial x}(x) + \frac{\partial h}{\partial x}(x)^\top 
M(x)^\top
\qquad \forall x\in \coordxm(\coordxd)
\  .
\end{equation}%
}%

For item \ref{point8}, from (\ref{LP163}) we obtain that, for any coordinate chart $\coordx$,
\IfTwoCol{%
% two col
\\[0.5em]$\displaystyle 
P_{mod}(x)
 P(x)^{-1}\frac{\partial h}{\partial x}(x)^\top
$\hfill\null\\[0.3em]\null \hfill $\displaystyle
=
\frac{\partial h}{\partial x}(x)^\top
\Py (h(x)) 
\left(\frac{\partial h}{\partial x}(x) P(x)^{-1}\frac{\partial h}{\partial x}(x)^\top
\right)
\  .
$\\[0.7em]
}{%
% one col
$$
P_{mod}(x)
 P(x)^{-1}\frac{\partial h}{\partial x}(x)^\top
\;=\; 
\frac{\partial h}{\partial x}(x)^\top
\Py (h(x)) 
\left(\frac{\partial h}{\partial x}(x) P(x)^{-1}\frac{\partial h}{\partial x}(x)^\top
\right)
\  .
$$%
}%
Since $\Py (h(x))$ and $ 
\left(\frac{\partial h}{\partial x}(x) P(x)^{-1}\frac{\partial h}{\partial x}(x)^\top
\right)$ are invertible matrices, this establishes that
the columns of
$P(x)^{-1}\frac{\partial h}{\partial x}(x)^\top $ span the same vector space as the columns of
$P_{mod}(x)^{-1}\frac{\partial h}{\partial x}(x)^\top $.
With Lemma \ref{lem6}, this establishes that
the orthogonal distributions  $\Distrib _P ^\ortho  (x)$
and $\Distrib _{P_{mod}} ^\ortho  (x)$ are identical. Since the coordinate chart $\coordx$ is arbitrary, we have (\ref{LP232}).
\end{proof}

With Lemma \ref{lem9} we have answered positively Question (Q2) but only partially.
Indeed, we have not established that
(\ref{LP161}) holds for $\bfP_{mod}$, when (\ref{LP161}) holds for $\bfP$.
As we show next, working directly with $\bfP_{mod}$ allows to establish this property.

\subsubsection{About \eqref{LP161}}
\label{sec18}
%%%%%%%%%%%%%%%%%%%%%%%%%%%%%%%%%%%%%%%%%%%%%%%
%                                             %
%    A GARDER pour preuve construction de P   %
%                                             %
%%%%%%%%%%%%%%%%%%%%%%%%%%%%%%%%%%%%%%%%%%%%%%%
Postponing the study of the necessity of (\ref{LP161}) to the next paragraph,
here we study what it implies.

\begin{lemma}[{\cite[Lemma 3.2(ii)]{Vilms}, \cite[Proposition I.5.4]{Nore}}]
\label{lem3}
Assume $\bfh$ is a Riemannian submersion on $\Ouv$.
If, for any point $\bfx_0$ in $\Ouv$, there exists a coordinate chart $\coordx$ around $\bfx_0$
such that (\ref{LP161}) holds then the distribution $\bfDistrib _\bfP ^\ortho  $ is integrable everywhere 
locally on $\Ouv$, i.e.,
for any $\bfx_0$ in $\Ouv$, and for any pair of coordinate charts $\coordx$ around $\bfx_0$
and $\coordy$ around $\bfh(\bfx_0)$
there exists
a $C^s$ function $ h^\ortho :\coordxm(\coordxd)\to \RR^{n-p}$ satisfying
\begin{equation}
\label{LP149}
\frac{\partial  h^\ortho}{\partial x}(x)
P(x)^{-1}\frac{\partial  h}{\partial x}(x)^\top
\;=\; 0
\qquad \forall  x\in \coordxm(\coordxd)
\  ,
\end{equation}
with $ h$ being the expression of $\bfh$ with the coordinates $\coordyp$,
and such that the function
$$
x\mapsto  \hhperp (x)\;=\; ( h(x), h^\ortho (x))
$$
is a diffeomorphism. 
\end{lemma}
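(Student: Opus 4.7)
The plan is to establish that the orthogonal distribution $\bfDistrib_\bfP^\ortho$ is involutive, invoke Frobenius' theorem to get local integrability, and then construct $h^\ortho$ from the transverse coordinates of the resulting foliation, checking finally that $(h, h^\ortho)$ is a local diffeomorphism.

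The first step is to reinterpret hypothesis (\ref{LP161}) coordinate-freely: it says that the second fundamental form $\bfsecff_\bfP\bfh$ vanishes on any pair consisting of an orthogonal vector and a tangent vector (as a vector in $T_{\bfh(\bfx)}\bfRR^p$). Writing $\bfsecff_\bfP\bfh = \nabla d\bfh$ as a covariant derivative with respect to the Levi-Civita connections of $\bfP$ and $\bfPy$, and letting $X$ be horizontal (orthogonal) and $V$ be vertical (tangent), one has
\begin{equation*}
(\bfsecff_\bfP\bfh)(X,V) \;=\; \nabla^{\bfPy}_{d\bfh(X)} d\bfh(V) \;-\; d\bfh(\nabla^{\bfP}_X V) \;=\; -\, d\bfh(\nabla^{\bfP}_X V),
\end{equation*}
since $d\bfh(V) \equiv 0$. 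Hypothesis (\ref{LP161}) then forces $d\bfh(\nabla^{\bfP}_X V) = 0$, i.e., $\nabla^{\bfP}_X V$ is vertical. The Riemannian submersion assumption guarantees that this interpretation using $\bfPy$ is consistent with the $\bfP$-orthogonal splitting.

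Next, for horizontal vector fields $X, Y$ and any vertical $V$, the torsion-free identity $[X,Y] = \nabla_X Y - \nabla_Y X$ combined with the product rule applied to the identically vanishing functions $\langle Y, V\rangle_\bfP$ and $\langle X, V\rangle_\bfP$ gives
\begin{equation*}
\langle [X,Y], V\rangle_\bfP \;=\; -\langle Y, \nabla_X V\rangle_\bfP \;+\; \langle X, \nabla_Y V\rangle_\bfP \;=\; 0,
\end{equation*}
the last equality holding because the previous step makes $\nabla_X V$ and $\nabla_Y V$ vertical while $X, Y$ are horizontal. So $[X,Y]$ is horizontal, $\bfDistrib_\bfP^\ortho$ is involutive, and Frobenius' theorem yields local integrability.

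Finally, Frobenius furnishes, around each $\bfx_0 \in \Ouv$, local coordinates in which the leaves of $\bfDistrib_\bfP^\ortho$ are level sets of $n-p$ of the coordinate functions; let $h^\ortho$ denote that tuple of functions. By construction $dh^\ortho$ annihilates $\bfDistrib_\bfP^\ortho$, and using Lemma \ref{lem6} to identify $\bfDistrib_\bfP^\ortho$ with the column span of $P^{-1}(\partial h/\partial x)^\top$ rewrites this annihilation as precisely (\ref{LP149}). For the diffeomorphism claim, $\ker dh = \bfDistrib^\tangent$ since $\bfh$ is a submersion, and $\ker dh^\ortho = \bfDistrib_\bfP^\ortho$ by dimension count (at least $p$ by construction, at most $p$ by rank of $dh^\ortho$); since $\bfDistrib^\tangent$ and $\bfDistrib_\bfP^\ortho$ are complementary, $d(h, h^\ortho)$ has trivial kernel, is therefore invertible, and the inverse function theorem delivers the local diffeomorphism. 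The main subtlety I anticipate is bridging the coordinate-free Frobenius argument with the chart-dependent formulation of the lemma; this is handled by the tensorial transformation law of $\bfsecff_\bfP\bfh$ (Lemma \ref{lem1}), which ensures that involutivity and equation (\ref{LP149}) are well-defined irrespective of the chosen chart $\coordx$.
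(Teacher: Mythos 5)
Your proof is correct and takes a genuinely different route from the paper's own. The paper's supplementary proof works entirely in the adapted coordinates $\coordyxr$: it first derives, from Frobenius, a coordinate expression for involutivity of $\bfDistrib_\bfP^\ortho$ (the system (\ref{LP195}) involving the $M_{\indxrb\indyk} = \sum_\indxrc [P_{\xrond\xrond}^{-1}]_{\indxrb\indxrc}P_{\indxrc\indyk}$ terms), then expands (\ref{LP161}) in the same coordinates, and closes the gap between the two expressions by an algebraic manipulation that in the last step uses the Riemannian submersion identity $\partial [P_y]_{\indyj\indyk}/\partial\xrond_\indxra = 0$. Your argument is coordinate-free: you use the invariant characterization of $\bfsecff_\bfP\bfh$ as a covariant Hessian, note that $d\bfh(V)\equiv 0$ for vertical $V$ kills the codomain-connection term so that $(\bfsecff_\bfP\bfh)(X,V) = -d\bfh(\nabla^\bfP_X V)$, and then deduce involutivity from torsion-freeness and metric compatibility of the Levi-Civita connection — this is essentially the textbook argument in the cited sources Vilms and Nore, which the paper merely references rather than reproduces in the main text. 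Your route is shorter, avoids the page of index computations, and notably does not use the Riemannian submersion hypothesis at all (the remark you make about it "guaranteeing consistency" is not actually load-bearing); this is not a defect, since the paper itself points out after Lemma~\ref{lem9} that the submersion assumption can always be enforced by passing to $\bfP_{mod}$ without changing $\bfDistrib_\bfP^\ortho$, so the hypothesis is a convenience of the coordinate computation rather than an essential ingredient. The concluding step — extracting $h^\ortho$ from Frobenius coordinates, invoking Lemma~\ref{lem6} to translate annihilation of $\bfDistrib_\bfP^\ortho$ into (\ref{LP149}), and a kernel/rank count to get the local diffeomorphism — is the same in spirit in both proofs.
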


% Complement
% \pointtocomp{A proof is given in Complement \ref{complement40}.}{complement40}
\IfReport{%
% report
\startarchive%
A proof is given in the \complement \ref{complement40}.%
\stoparchive%
}{
% not report
A proof of Lemma~\ref{lem3} can also be found in \cite{Sanfelice-Praly.III-long}.
}

\par\vspace{0.5em}
In this statement, thanks to Lemma \ref{lem9}, we can omit the assumption that $\bfh$ is a Riemannian 
submersion if we replace $\bfP$ by $\bfP_{mod}$ and (\ref{LP161}) holds for $\bfP_{mod}$.

\par\vspace{0.5em}
Compared with the claim in the Local Submersion Theorem \ref{thm4}, 
the novelty here is in the fact that the function 
$h^\ortho$ satisfies (\ref{LP149}). This is very useful. Indeed we have the following result.

\begin{lemma}
\label{lem2}
Assume $\bfh$ is a Riemannian submersion on $\Ouv$
and, for any $\bfx_0$ in $\Ouv$, and for any pair of coordinate charts $\coordx$ around $\bfx_0$
and $\coordy$ around $\bfh(\bfx_0)$, there exists
a $C^s$ function $ h^\ortho:\coordxm(\coordxd)\to \RR^{n-p}$ satisfying the properties listed in Lemma 
\ref{lem3}. Under these conditions, the expression $\overline{P}$ of $\bfP$ with the coordinates
\begin{equation}
\label{LP155}
( y,\xrond)\;=\; ( h(x), h^\ortho(x))
\end{equation}
is in the following block diagonal form:
\\[0.7em]\null \hfill $\displaystyle 
\overline{P}( y,\xrond)\;=\; 
\left(\begin{array}{cc}
\overline{P}_y( y,\xrond) & 0
\\
0 &  \overline{P}_\xrond ( y,\xrond)
\end{array}\right)
$\refstepcounter{equation}\label{LP154}\hfill$(\theequation)$
\end{lemma}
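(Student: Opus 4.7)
The plan is to compute the expression of $\bfP$ in the new coordinates $(\bar y, \bar\xrond) = (h(x), h^\ortho(x))$ directly via the transformation rule for $2$-covariant tensors recorded in (\ref{LP186}), and then read off the block structure as a consequence of the orthogonality relation (\ref{LP149}).

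First I would set $\changex := \hhperp = (h, h^\ortho)$, which by Lemma~\ref{lem3} is a $C^s$ diffeomorphism from $\coordxm(\coordxd)$ to its image. Specialized to this change of coordinates, (\ref{LP186}) gives
\[
\frac{\partial \hhperp}{\partial x}(x)^{\!\top}\,\overline{P}(\hhperp(x))\,\frac{\partial \hhperp}{\partial x}(x) \;=\; P(x),
\]
so, since $\frac{\partial \hhperp}{\partial x}(x)$ is invertible, I can invert both sides to obtain the more convenient identity
\[
\overline{P}(\hhperp(x))^{-1} \;=\; \frac{\partial \hhperp}{\partial x}(x)\,P(x)^{-1}\,\frac{\partial \hhperp}{\partial x}(x)^{\!\top}.
\]

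Next I would write the Jacobian as the block matrix whose two row blocks are $\frac{\partial h}{\partial x}(x)$ and $\frac{\partial h^\ortho}{\partial x}(x)$, and multiply out the right-hand side. This yields a $2\times 2$ block expression for $\overline{P}(\bar y,\bar\xrond)^{-1}$ whose off-diagonal block is exactly
\[
\frac{\partial h^\ortho}{\partial x}(x)\,P(x)^{-1}\,\frac{\partial h}{\partial x}(x)^{\!\top},
\]
which vanishes by hypothesis (\ref{LP149}) (its transpose is the symmetric off-diagonal block, which vanishes for the same reason).

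Finally I would conclude by observing that the inverse of a block-diagonal symmetric positive-definite matrix is itself block diagonal, so $\overline{P}(\bar y,\bar\xrond)$ has the desired form (\ref{LP154}), with
\[
\overline{P}_y(\bar y,\bar\xrond)^{-1} = \frac{\partial h}{\partial x}(x)\,P(x)^{-1}\,\frac{\partial h}{\partial x}(x)^{\!\top},\qquad
\overline{P}_\xrond(\bar y,\bar\xrond)^{-1} = \frac{\partial h^\ortho}{\partial x}(x)\,P(x)^{-1}\,\frac{\partial h^\ortho}{\partial x}(x)^{\!\top},
\]
evaluated at $x = \hhperp^{-1}(\bar y,\bar\xrond)$. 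There is no real obstacle here: the Riemannian submersion hypothesis is used only insofar as it guarantees, via Lemma~\ref{lem3}, the existence of $h^\ortho$ with (\ref{LP149}); the block-diagonal conclusion then follows immediately from tensor transformation and matrix inversion.
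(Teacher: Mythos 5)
Your proposal is correct and follows essentially the same route as the paper: both pass to the inverse $\overline{P}(\bar y,\bar\xrond)^{-1} = \frac{\partial\hhperp}{\partial x}(x)\,P(x)^{-1}\,\frac{\partial\hhperp}{\partial x}(x)^{\!\top}$, observe that the off-diagonal blocks vanish by (\ref{LP149}), and conclude that $\overline{P}$ is block diagonal because its inverse is. The only cosmetic difference is that the paper post-multiplies the tensor transformation rule by $P(x)^{-1}\frac{\partial\hhperp}{\partial x}(x)^{\!\top}$ and cancels the Jacobian, whereas you invert both sides directly; these are the same computation.
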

\begin{proof}
Let $P$ be the expression of $\bfP$ in the coordinates $x$. From (\ref{LP186}), its 
expression $\overline{P}$ in the coordinates $( y,\xrond)$ satisfies
\IfTwoCol{%
% two col
for all $x$ in $\coordxm(\coordxd)$,
\\[0.7em]\vbox{\noindent$\displaystyle 
P(x)\;=\; 
$\hfill\null\\[-0.5em]\null\hfill$\displaystyle
\left(\begin{array}{cc@{}}
\displaystyle 
\frac{\partial  h}{\partial x}(x)^\top
&\displaystyle 
\frac{\partial  h^\ortho}{\partial x}(x)^\top
\end{array}\right)
\overline{P}( h(x), h^\ortho(x))
\left(\begin{array}{c}
\displaystyle 
\frac{\partial  h}{\partial x}(x)
\\[0.9em]\displaystyle 
\frac{\partial  h^\ortho}{\partial x}(x)
\end{array}\right)
\  .
$}\\[0.7em]
}{%
% one col
$$
P(x)\;=\; 
\left(\begin{array}{cc@{}}
\displaystyle 
\frac{\partial  h}{\partial x}(x)^\top
&\displaystyle 
\frac{\partial  h^\ortho}{\partial x}(x)^\top
\end{array}\right)
\overline{P}( h(x), h^\ortho(x))
\left(\begin{array}{c}
\displaystyle 
\frac{\partial  h}{\partial x}(x)
\\[0.9em]\displaystyle 
\frac{\partial  h^\ortho}{\partial x}(x)
\end{array}\right)
\qquad \forall x\in \coordxm(\coordxd)
\  .
$$%
}%
Post-multiplying by
$P(x)^{-1}
\left(\begin{array}{@{}c@{\ }c@{}}
\displaystyle 
\frac{\partial  h}{\partial x}(x)^\top
&\displaystyle 
\frac{\partial  h^\ortho}{\partial x}(x)^\top
\hskip -3pt
\end{array}\right)$ and exploiting the invertibility of 
% \IfReport{%
% % report
% the matrix
% }{%
% % not report
% }%
$
\left(\begin{array}{@{}c@{\ }c@{}}
\displaystyle 
\frac{\partial  h}{\partial x}(x)^\top
&\displaystyle 
\frac{\partial  h^\ortho}{\partial x}(x)^\top
\hskip -3pt
\end{array}\right)
$,
this gives
\IfTwoCol{%
% two col
\\[0.7em]$\displaystyle 
I_n\; =\; \overline{P}( h(x), h^\ortho(x))\times
$\hfill\null\\[0.5em]\null\hfill$\displaystyle
\times
\left(\begin{array}{@{\,  }c@{\,  }}
\displaystyle 
\frac{\partial  h}{\partial x}(x)
\\[0.9em]\displaystyle 
\frac{\partial  h^\ortho}{\partial x}(x)
\end{array}\right)P(x)^{-1}
\left(\begin{array}{@{\,  }cc@{\,  }}
\displaystyle 
\frac{\partial  h}{\partial x}(x)^\top
&\displaystyle 
\frac{\partial  h^\ortho}{\partial x}(x)^\top
\end{array}\right)  ,
$\\[0.9em] $\displaystyle
\hphantom{I_n\;} 
=\; \overline{P}( h(x), h^\ortho(x))\times
$\refstepcounter{equation}\label{LP152}\hfill$(\theequation)$
\\[0.7em]\null\hfill$\displaystyle
\times\!
\left(\begin{array}{@{\,  }cc@{\,  }}
\displaystyle 
\frac{\partial  h}{\partial x}(x) P(x)^{-1}\frac{\partial  h}{\partial x}(x)^\top
&\displaystyle 
\frac{\partial  h}{\partial x}(x) P(x)^{-1}\frac{\partial  h^\ortho}{\partial x}(x)^\top
\\[0.9em]\displaystyle 
\frac{\partial  h^\ortho}{\partial x}(x)P(x)^{-1}\frac{\partial  h}{\partial x}(x)^\top
&\displaystyle 
\frac{\partial  h^\ortho}{\partial x}(x)P(x)^{-1}\frac{\partial  h^\ortho}{\partial x}(x)^\top
\end{array}\right)  .
$\\[0.7em]
}{%
% one col
\begin{eqnarray}
\nonumber
I_n&=&
\overline{P}( h(x), h^\ortho(x))
\left(\begin{array}{c}
\displaystyle 
\frac{\partial  h}{\partial x}(x)
\\[0.9em]\displaystyle 
\frac{\partial  h^\ortho}{\partial x}(x)
\end{array}\right)P(x)^{-1}
\left(\begin{array}{cc}
\displaystyle 
\frac{\partial  h}{\partial x}(x)^\top
&\displaystyle 
\frac{\partial  h^\ortho}{\partial x}(x)^\top
\end{array}\right)
\  ,
\\\label{LP152}
&=&
\overline{P}( h(x), h^\ortho(x))
\left(\begin{array}{cc}
\displaystyle 
\frac{\partial  h}{\partial x}(x) P(x)^{-1}\frac{\partial  h}{\partial x}(x)^\top
&\displaystyle 
\frac{\partial  h}{\partial x}(x) P(x)^{-1}\frac{\partial  h^\ortho}{\partial x}(x)^\top
\\[0.9em]\displaystyle 
\frac{\partial  h^\ortho}{\partial x}(x)P(x)^{-1}\frac{\partial  h}{\partial x}(x)^\top
&\displaystyle 
\frac{\partial  h^\ortho}{\partial x}(x)P(x)^{-1}\frac{\partial  h^\ortho}{\partial x}(x)^\top
\end{array}\right)
\  .
\end{eqnarray}
}%
But, when (\ref{LP149}) holds, the last matrix on the right-hand side is block diagonal. Since this matrix is 
the inverse of $\overline{P}$, $\overline{P}$ is also block 
diagonal.
\end{proof}

\subsubsection{About Question (Q1')}
Up to now, we have established that, if Condition A3 holds, then (\ref{LP151}) and (\ref{LP162}) hold, perhaps after changing $\bfP$ 
into $\bfP_{mod}$. On the other hand, we know with Lemmas \ref{lem9} and \ref{lem3} that, if (\ref{LP161}) 
holds for $\bfP_{mod}$ then the
orthogonal distributions $\bfDistrib _\bfP ^\ortho  $ and $\bfDistrib _{\bfP_{mod}} ^\ortho  $
are integrable. It turns out that, conversely, if, in
addition to Condition A3 we have this integrability property, then (\ref{LP161}) holds.

\begin{proposition}
\label{prop13}
If Condition A3 holds with a metric $\bfP$ such that the orthogonal distribution $\bfDistrib _\bfP^\ortho$ is integrable, 
then the second fundamental form of $\bfh$ for the metric $\bfP_{mod}$ is zero
on $\Ouv$.
\end{proposition}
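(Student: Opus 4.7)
The plan is to verify, separately, that each of the three blocks of equations (\ref{LP151}), (\ref{LP162}), and (\ref{LP161}) holds for the modified metric $\bfP_{mod}$, by assembling previously established facts. Once all three blocks are verified, the full second fundamental form $\bfsecff_{\bfP_{mod}}\bfh$ vanishes. First I would obtain (\ref{LP151}) for $\bfP_{mod}$ as an immediate consequence of two earlier results: Lemma~\ref{lem12} gives (\ref{LP151}) for $\bfP$ from Condition A3, and item~\ref{point4} of Lemma~\ref{lem9} transfers (\ref{LP151}) between $\bfP$ and $\bfP_{mod}$.

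Next, (\ref{LP162}) for $\bfP_{mod}$ follows from item~\ref{point3} of Lemma~\ref{lem9}: $\bfh$ is a Riemannian submersion with $\bfP_{mod}$. Equivalently, inspection of (\ref{LP216}) shows that the $P_y$ block defined in (\ref{LP199}), when computed for $\bfP_{mod}$, equals $\Py(y)$ and is independent of $\xrond$; the expression in Lemma~\ref{lem8} then collapses to zero, giving (\ref{LP162}).

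The main step is to establish (\ref{LP161}) for $\bfP_{mod}$. By item~\ref{point8} of Lemma~\ref{lem9}, $\bfDistrib_{\bfP_{mod}}^\ortho = \bfDistrib_\bfP^\ortho$, so the orthogonal distribution of $\bfP_{mod}$ is integrable by hypothesis; in particular, the function $h^\ortho$ of Lemma~\ref{lem3} exists locally around every point of $\Ouv$ when working with $\bfP_{mod}$. Together with the Riemannian submersion property already noted, this matches the hypotheses of Lemma~\ref{lem2}. Applying that lemma in the coordinates $(\bar y, \bar \xrond) = (h(x), h^\ortho(x))$, the expression of $\bfP_{mod}$ becomes block diagonal and $\bar h$ reduces to the projection $(\bar y,\bar \xrond) \mapsto \bar y$; by the Riemannian submersion property, its upper-left block is exactly $\Py(\bar y)$ and is therefore independent of $\bar \xrond$. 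Substituting these facts into (\ref{eqn:2ndFundFormh}) for a mixed index pair (a $\bar y$ index $j$ and a $\bar \xrond$ index $\alpha$), the Hessian term vanishes, the $\bar \Gammay$-term vanishes (since $\bar h$ has no $\bar \xrond$-component, one of the factors $\partial \bar h_k / \partial \bar x_\alpha$ is zero), and only $-\bar \Gamma^{y_i}_{j \alpha}$ remains. Expanding this Christoffel symbol using the standard formula shows that it involves only derivatives of the off-diagonal blocks of $\bar P_{mod}$ (identically zero) and of the $\Py$-block in the $\bar \xrond$-directions (zero, since $\Py$ depends only on $\bar y$). Hence $\bar \Gamma^{y_i}_{j \alpha} = 0$ and (\ref{LP161}) holds for $\bfP_{mod}$.

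The hardest part is securing the block-diagonal representation of $\bfP_{mod}$: this is precisely where the integrability hypothesis on $\bfDistrib_\bfP^\ortho$ enters, by delivering $h^\ortho$ so that Lemma~\ref{lem2} can be invoked. Once the adapted coordinates are in hand, the verification of the three blocks reduces to routine algebraic manipulation of Christoffel symbols for a block-diagonal metric whose upper-left block depends on only half the coordinates.
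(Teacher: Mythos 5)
Your proposal is correct and follows essentially the same path as the paper's proof: both rely on Lemma~\ref{lem12} and Lemma~\ref{lem9} to handle (\ref{LP151}) and (\ref{LP162}), and both invoke Lemmas~\ref{lem3} and~\ref{lem2} (via integrability of $\bfDistrib_\bfP^\ortho$, which by Lemma~\ref{lem9} item~\ref{point8} equals $\bfDistrib_{\bfP_{mod}}^\ortho$) to obtain block-diagonal coordinates and then read off the vanishing of the mixed Christoffel symbols. The paper simply computes all three Christoffel-symbol blocks at once and plugs into (\ref{LP148}), whereas you verify the three blocks (\ref{LP151}), (\ref{LP162}), (\ref{LP161}) one at a time, citing Lemma~\ref{lem9} items~\ref{point4} and~\ref{point3} for the first two; this is a reorganization, not a genuinely different argument.
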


\begin{proof}
It follows from Lemmas \ref{lem3} and \ref{lem2} that, for any $\bfx_0$ in $\Ouv$, there exists a coordinate 
chart $\coordyxr$
(see (\ref{LP155})) such that the expressions $\overline{P}$ and $\overline{P}_{mod}$
of $\bfP$ and $\bfP_{mod}$, respectively, are
(see (\ref{LP216}) and (\ref{LP154}))
\IfTwoCol{%
% two col
\begin{eqnarray*}
\overline{P}\coordyxrp 
&=&
\left(\begin{array}{cc}
\overline{P}_y\coordyxrp  & 0
\\
0 &  \overline{P}_\xrond \coordyxrp 
\end{array}\right)
\  ,
\\
\overline{P}_{mod}\coordyxrp 
&=&
\left(\begin{array}{cc}
\Py(y)  & 0
\\
0 & \overline{P}_\xrond \coordyxrp 
\end{array}\right)
\  .
\end{eqnarray*}
}{%
% one col
$$
\overline{P}\coordyxrp 
\;=\; 
\left(\begin{array}{cc}
 \overline{P}_y\coordyxrp  & 0
\\
0 &  \overline{P}_\xrond \coordyxrp 
\end{array}\right)
\quad ,\qquad 
\overline{P}_{mod}\coordyxrp 
\;=\; 
\left(\begin{array}{cc}
\Py(y)  & 0
\\
0 &  \overline{P}_\xrond \coordyxrp 
\end{array}\right)
\  .
$$%
}%
On the other hand, in these specific coordinates, (\ref{LP151}), implied by Condition A3 (see Lemma \ref{lem12}), is equivalent to
\IfTwoCol{%
% two col
$$
\sum_\indyj[ \overline{P}_y^{-1}]_{\indyi\indyj}\frac{\partial 
[\overline{P}_\xrond]_{\indxra\indxrb}}{\partial y_\indyj} = \overline{\Gamma} _{\indxra\indxrb}^\indyi= 0
\  .
$$
This implies
$
\frac{\partial [\overline{P}_\xrond]_{\indxra\indxrb}}{\partial y_\indyk}
$
is zero.
}{%
% one col
$$
\overline{\Gamma} _{\indxra\indxrb}^\indyi\;=\; 0
$$
and the block diagonal form of $\overline{P}$ gives
$$
\overline{\Gamma} _{\indxra\indxrb}^\indyi
\;=\; 
\sum_\indyj[ \overline{P}_y^{-1}]_{\indyi\indyj}
\frac{\partial [\overline{P}_\xrond]_{\indxra\indxrb}}{\partial y_\indyj}
\  .
$$
This yields
$$
\frac{\partial [\overline{P}_\xrond]_{\indxra\indxrb}}{\partial y_\indyk}
\;=\; \sum_{\indyi} [\overline{P}_y]_{\indyk\indyi}\Gamma _{\indxra\indxrb}^\indyi\;=\; 0\ .
$$
}
Hence $\overline{P}_\xrond$ does not depend on $y$ and its associated
Christoffel symbols are
\IfTwoCol{%
% two col
\\[0.7em]$\displaystyle 
[\overline{\Gamma}_{mod}] _{\indyj\indyk}^\indyi \coordyxrp 
$\hfill\null\\\null\hfill$\displaystyle
=\; \frac{1}{2}\sum_\indyl[\Py (y)^{-1}]_{\indyi\indyl}
\left(
\frac{\partial \Py _{\indyl\indyj}}{\partial y_\indyk}(y)
+
\frac{\partial \Py _{\indyl\indyk}}{\partial y_\indyj}(y)
-
\frac{\partial \Py _{\indyj\indyk}}{\partial y_\indyl}(y)
\right)\  ,
$\hfill\null\\\null\hfill$\displaystyle
\;=\;  \Gammay_{\indyj\indyk}^\indyi(y)
\  ,
$\quad \null 
$$
[\overline{\Gamma}_{mod}]  _{\indyj\indxra}^\indyi\coordyxrp
\;=\; 0\quad , \qquad
[\overline{\Gamma}_{mod}]  _{\indxra\indxrb}^\indyi \coordyxrp
\;=\;  0
\  .
$$%
}{%
% one col
$$
\renewcommand{\arraystretch}{2}
\begin{array}{rcl}
[\overline{\Gamma}_{mod}] _{\indyj\indyk}^\indyi \coordyxrp
&=&\displaystyle 
\frac{1}{2}\sum_\indyl[\Py (y)^{-1}]_{\indyi\indyl}
\left(
\frac{\partial \Py _{\indyl\indyj}}{\partial y_\indyk}(y)
+
\frac{\partial \Py _{\indyl\indyk}}{\partial y_\indyj}(y)
-
\frac{\partial \Py _{\indyj\indyk}}{\partial y_\indyl}(y)
\right)
\;=\; \overline{\Gammay}_{\indyj\indyk}^\indyi(y)
\  ,
\\{}
[\overline{\Gamma}_{mod}]  _{\indyj\indxra}^\indyi\coordyxrp
&=&0
\  ,
\\{}
[\overline{\Gamma}_{mod}]  _{\indxra\indxrb}^\indyi \coordyxrp
&=&0
\  .
\end{array}
$$%
}%
The result follows from the fact that the nullity of the second fundamental form does not depend on the 
coordinates and that we have
\\[0.7em]\null \qquad $\displaystyle 
\overline{\secff}_{P_{mod}} h_{\indyj\indyk}^\indyi(y,\xrond )
\;=\; \displaystyle 
-[\overline{\Gamma}_{mod}]  _{\indyj\indyk}^\indyi(y,\xrond )+
\Gammay_{\indyj\indyk}^\indyi (y)
\  ,
$\hfill\null\\[0.3em]\null\qquad $\displaystyle
\overline{\secff}_{P_{mod}}h_{\indyj\indxra}^\indyi(y,\xrond )
\;=\; 
-[\overline{\Gamma}_{mod}]  _{\indyj\indxra}^\indyi(y,\xrond )
\  ,
$\refstepcounter{equation}\label{LP148}\hfill$(\theequation)$
\\[0.3em]\null\qquad$\displaystyle
\overline{\secff}_{P_{mod}} h_{\indxra\indxrb}^\indyi(y,\xrond )
\;=\; -[\overline{\Gamma}_{mod}]  _{\indxra\indxrb}^\indyi(y,\xrond )
\  .
\vrule depth 1em height 0pt width 0pt
$%
\end{proof}

This statement is not satisfactory because it requires the extra condition of integrability of the 
orthogonal distribution $\bfDistrib _\bfP^\ortho$. Whether this integrability is implied by Condition A3 is,
for us, an open problem.
Fortunately, as shown below, when the dimension $p$ of the $\bfy$-manifold is $1$, the integrability condition is not needed in the statement of 
Proposition \ref{prop13}. So, in this case,
if Condition A3 
holds, the second fundamental form of $\bfh$ is zero for the metric $\bfP_{mod}$. Namely, our sufficient condition 
is necessary but, perhaps after modifying $\bfP$ into $\bfP_{mod}$. Here we recover in some way \cite[Proposition 
A.3.2.b]{57}.

What we wrote above about the peculiarity of the case $p=1$ is a consequence of the fact that
the assumption of Lemma \ref{lem2} is always satisfied. Indeed, we have the following result.

\begin{lemma}
\label{lem14}
Suppose Assumption~\ref{H1} is satisfied.
If the dimension $p$ of the $\bfy$-manifold $\bfRR^p$ is $1$, and (\ref{LP160}) and Condition A3
\startmodif
hold,
\stopmodif
then
$\bfh$ is a Riemannian submersion, the second fundamental form of which is zero
on $\Ouv$.
\end{lemma}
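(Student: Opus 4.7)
The plan is to reduce to Proposition~\ref{prop13} by observing that when $p=1$ the orthogonal distribution $\bfDistrib_\bfP^\ortho$ is automatically integrable, and then to exploit hypothesis (\ref{LP160}) to identify $\bfP_{mod}$ with $\bfP$ itself, thereby transferring the conclusion of Proposition~\ref{prop13} directly to $\bfP$.

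First I would observe that the first conclusion, namely that $\bfh$ is a Riemannian submersion, is immediate from Definition~\ref{def4} since hypothesis (\ref{LP160}) is precisely its defining identity. Next I would invoke Lemma~\ref{lem6} to note that in any coordinate chart the orthogonal distribution $\Distrib_P^\ortho(x)$ is spanned by the columns of $P(x)^{-1}\frac{\partial h}{\partial x}(x)^\top$; since $p=1$, this is a single column, and it is nowhere vanishing on $\Ouv$ by Assumption~\ref{H1}. A one-dimensional distribution locally spanned by a nowhere-zero vector field is always integrable: its integral manifolds are the maximal integral curves furnished by standard ODE existence and uniqueness. Thus $\bfDistrib_\bfP^\ortho$ is integrable on $\Ouv$.

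With Condition A3 in force and the integrability just established, Proposition~\ref{prop13} applies and yields that the second fundamental form of $\bfh$ for the modified metric $\bfP_{mod}$ is zero on $\Ouv$. To finish, I would inspect the definition (\ref{LP163}) of $P_{mod}$: the correction term carries the factor
$$
\Py(h(x)) - \left(\frac{\partial h}{\partial x}(x) P(x)^{-1}\frac{\partial h}{\partial x}(x)^\top\right)^{-1},
$$
which is exactly the residual measuring failure of (\ref{LP160}). Under our hypothesis (\ref{LP160}) this factor vanishes identically, so $\bfP_{mod}=\bfP$ on $\Ouv$, and the vanishing of the second fundamental form for $\bfP_{mod}$ becomes the desired vanishing for $\bfP$.

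No real obstacle is expected: the only point requiring care is that Proposition~\ref{prop13} delivers its conclusion for $\bfP_{mod}$ rather than for $\bfP$, so the hypothesis (\ref{LP160}) is essential in the final identification. It is this identification that makes the case $p=1$ genuinely special, as the $(n-p)$-dimensional analog of the integrability step would require an additional Frobenius-type argument that is not available for free when $p>1$.
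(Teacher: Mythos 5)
Your proposal is correct and tracks the paper's proof in essentially the same logical structure. The paper does not invoke Proposition~\ref{prop13} as a black box but instead re-derives the needed integrability explicitly: it constructs $h^\ortho$ by flowing along $\grad_P h$ until hitting the $\bfh(\bfx_0)$-level set (composing with the $h^\comp$ from the Local Submersion Theorem), verifies (\ref{LP149}) directly, then says ``from here we proceed as in the proof of Proposition~\ref{prop13}.'' You replace this explicit flow construction with the abstract observation that a one-dimensional distribution spanned by a nowhere-vanishing vector field is automatically integrable (via Lemma~\ref{lem6} and Assumption~\ref{H1}), which is the same mathematical content and correctly justified. You also make explicit something the paper leaves implicit, namely that (\ref{LP160}) forces $\bfP_{mod}=\bfP$ on $\Ouv$, so the conclusion Proposition~\ref{prop13} delivers for $\bfP_{mod}$ is in fact the conclusion for $\bfP$; the paper relies on the same fact silently (under (\ref{LP160}) the $P_y$ block of the block-diagonal form is $\Py(y)$, so $\overline{P}=\overline{P}_{mod}$), so spelling it out is a genuine improvement in clarity rather than a new idea. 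In short: same proof, slightly more streamlined and a touch more transparent about why the statement concerns $\bfP$ and not just $\bfP_{mod}$.
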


\begin{proof}
Let  $\bfx_0$ be any point in $\Ouv$ and $\coordx$ around $\bfx_0$ and $\coordy$ around
$\bfh(\bfx_0)$ be any coordinate charts. With all ``objects'' expressed with these coordinates,
let also $\euX(x,t)$ denote the solution at time $t$ of
$$
\dot x = P(x)^{-1}\frac{\partial h}{\partial x}(x)^\top
\  ,
$$
passing through $x$ in $\coordxd$ at time $0$. Because the function $t\mapsto h(\euX(x,t))$ is strictly increasing, there exists an open neighborhood $\cursive{M}$\, of $\coordxm(\bfx_0)$ such that,
for any $x$ in $\cursive{M}$ , there exists a (unique)
$\tau (x)$ satisfying
\IfTwoCol{%
% two col
$
h(\euX(x,\tau (x)))\;=\; h(\coordxm(\bfx_0))$
}{%
% one col
$$
h(\euX(x,\tau (x)))\;=\; h(\coordxm(\bfx_0))
\  .
$$
}%
Let $h^\comp$  be the submersion given by the Local Submersion Theorem \ref{thm4}
in Section \ref{sec17}. We define a function $h^\ortho:\cursive{M}\,  \to \RR^{n-p}$  as
\IfReport{%
% report
\begin{equation}
\label{LP190}
h^\ortho(x)\;=\; h^\comp (\euX(x,\tau (x)))
\  .
\end{equation}
It satisfies
% Complement
% \pointtocomp{(see Complement \ref{complement46})}{complement46}
\startarchive%
(see \complement \ref{complement46})
\stoparchive%
$$
\frac{\partial h^\ortho}{\partial x}(x)
P(x)^{-1}\frac{\partial h}{\partial x}(x)^\top\;=\; 0
\qquad \forall x \in \cursive{M}\  .
$$
Then, with Lemma \ref{lem2}, the coordinates
$$
( y,\xrond)\;=\; (h(x),h^\ortho(x))
\  ,
$$
}{%
% not report
\\\null \hfill $\displaystyle 
h^\ortho(x)\;=\; h^\comp (\euX(x,\tau (x)))
\  .
$\refstepcounter{equation}\label{LP190}\hfill$(\theequation)$
\\
It satisfies
$$
\frac{\partial h^\ortho}{\partial x}(x)
P(x)^{-1}\frac{\partial h}{\partial x}(x)^\top\;=\; 0
\qquad \forall x \in \cursive{M}\  .
$$
Then, with Lemma \ref{lem2}, the coordinates
\\[0.5em]\null \hfill $\displaystyle 
( y,\xrond)= (h(x),h^\ortho(x))
$\hfill \null \\[0.5em]
}%
defined on $\cursive{M}$  ,
are such that the expression $\bar P$ of $\bfP$
has the following block diagonal form
(see \cite[p. 57 \S 19]{Eisenhart.25}  or \cite[Theorem 2.6]{127})
$$
P( y,\xrond)\;=\; 
\left(\begin{array}{cc}
P_y( y,\xrond) & 0
\\
0 & P_\xrond ( y,\xrond)
\end{array}\right)
\  .
$$ 
From here we proceed as in the proof of Proposition \ref{prop13}.
\end{proof}

Because Condition A3 and the nullity of the second fundamental form of $\bfh$ for $\bfP_{mod}$ are equivalent 
when $p=1$, it would be interesting to know if we can always ``massage'' the measurements to go
to this case. In other words, if we have a $p$-dimensional output function for which the observer problem can be 
solved, does there exists a $1$-dimensional function of this output function with which we can still solve the 
observer problem. In the linear case, this result is known as Heymann's Lemma%
\footnote{
% BEGIN FOOTNOTE
We are very grateful to Alessandro Astolfi, from  Imperial College in London, for pointing out this lemma.
% END FOOTNOTE
}
\cite{Heymann,Hautus}.

% ------------
\startarchive
% ------------
\begin{example}~
\label{ex:Lagrangian}
\bgroup\normalfont
As in \cite[Section V]{127}, we consider Lagrangian systems.
The state $x$
is made of $y$, the generalized position of dimension $p$ supposed to be measured,
and $\xrond$ the generalized velocities. Its dynamics are
\begin{equation}
\label{LP49}
\dot y _\indyi\;=\; \xrond_\indyi
\quad ,\qquad 
\dot \xrond_\indxra  \;=\; -\sum_{\indxrc ,\indxrd }\mathfrak{C} _{\indxrc\indxrd}^\indxra (y)\,  \xrond_\indxrc  
\xrond_\indxrd  \;+\; S_\indxra  (y,t)
\end{equation}
where
$S$ is a source term, a known time-varying vector field on $\reals^p$, and
$\mathfrak{C}_{\indxrc  \indxrd  }^\indxra  $ are the Christoffel symbols associated 
to the metric $g$ attached to the Lagrangian system.

We know from \cite[Section V]{127} that Condition A2 holds for 
the following modification of the Sasaki metric (see \cite[(3.5)]{Sasaki} or
\cite[page 55]{Sakai.96} or \cite[Section 1.K]{Besse.78}):
$$
P(y,\xrond)\;=\; \left(\begin{array}{cc}
P_{yy}(y,\xrond)& P_{y\xrond }(y,\xrond)
\\
P_{\xrond y}(y,\xrond) & P_{\xrond\xrond}(y,\xrond)
\end{array}\right) 
\  ,
$$
where the  entries of the $(p,p)$ blocks
$P_{yy}$, $P_{y\xrond}$,  $P_{\xrond y }$, and $P_{\xrond\xrond}$ are,
respectively, $P_{\indyi\indyj}$, $P_{\indyi\indxrb }$, $P_{\indxra  \indyj}$, and $P_{\indxra  \indxrb }$,
defined as
% Complement
% \pointtocomp{(see more details in Complement \ref{complement25})}{complement25}
\IfReport{%
% report
(see more details in the \complement \ref{complement25})%
}{%
% not report
}
\IfTwoCol{%
% two col
\begin{eqnarray}
\nonumber
P_{\indyi\indyj}(y,\xrond ) &\hskip -0.5em =&\hskip -0.5em 
ag_{\indyi\indyj}(y)
\\\nonumber  
 &\hskip -0.5em &\hskip -1em 
-c\left(
\sum_{\indxrg,\indxrf} g_{j{\indxrg }}(y)\mathfrak{C} _{{\indxrf }i}^{\indxrg }(y) \xrond _{\indxrf } 
+
\sum_{\indxrh,\indxre}g_{i{\indxrh }}(y)\mathfrak{C} _{{\indxre }j}^{\indxrh }(y) \xrond _{\indxre } 
\right)
\\\nonumber  
 &\hskip -0.5em &\hskip -1em 
 + b\sum_{\indxrg,\indxrh,\indxre,\indxrf}g_{{\indxrg}{\indxrh}}(y)
\mathfrak{C} _{{\indxrf }i}^{\indxrg}(y)
\mathfrak{C} _{{\indxre }j}^{\indxrh}(y) \xrond _{\indxrf } \xrond _{\indxre }
\  ,
\\
\label{LP126}
P_{\indyi\indxrb }(y,\xrond )&\hskip -0.5em =&\hskip -0.5em 
-c g_{\indyi\indxrb }(y)
+b\sum_{\indxrf,\indxrg}g_{\indxrb \indxrg}(y)\mathfrak{C} _{{\indxrf }i}^{\indxrg}(y) \xrond _{\indxrf }
\  ,
\\[-0.2em]
\nonumber
P_{\indxra  \indyj}(y,\xrond ) &\hskip -0.5em =&\hskip -0.5em 
-cg_{\indxra  j}(y)
+b \sum_{\indxre,\indxrh}g_{\indxra  {\indxrh}}(y)\mathfrak{C} _{{\indxre }j}^{\indxrh }(y) \xrond _{\indxre }
\  ,
\\[-0.5em]
\nonumber
P_{\indxra  \indxrb }(y,\xrond )&\hskip -0.5em =&\hskip -0.5em 
bg_{\indxra  \indxrb }(y)
\  ,
\end{eqnarray}%
}{%
% one col
\begin{eqnarray}
\nonumber
P_{\indyi\indyj}(y,\xrond ) &=&
ag_{\indyi\indyj}(y)
-c\left(
\sum_{\indxrg,\indxrf} g_{j{\indxrg }}(y)\mathfrak{C} _{{\indxrf }i}^{\indxrg }(y) \xrond _{\indxrf } 
+
\sum_{\indxrh,\indxre}g_{i{\indxrh }}(y)\mathfrak{C} _{{\indxre }j}^{\indxrh }(y) \xrond _{\indxre } 
\right)
\\\nonumber  % The following is made to push  up to the end of the line
&&%
\setbox0=\hbox{$%
\null \hskip 1cm P_{\indyi\indyj}(y,\xrond )%
\hskip 2\arraycolsep%
=
\hskip 2\arraycolsep \null $}%
\setlength{\longueur}{\linewidth}
\addtolength{\longueur}{-\wd0}
\setbox0=\hbox{$%
\displaystyle + b\sum_{\indxrg,\indxrh,\indxre,\indxrf}g_{{\indxrg}{\indxrh}}(y)
\mathfrak{C} _{{\indxrf }i}^{\indxrg}(y)
\mathfrak{C} _{{\indxre }j}^{\indxrh}(y) \xrond _{\indxrf } \xrond _{\indxre }
\  ,\hskip 1cm\null 
$}
\addtolength{\longueur}{-\wd0}
\hskip \longueur \box0
\\[0.5em]
\label{LP126}
P_{\indyi\indxrb }(y,\xrond )&=&
-c g_{\indyi\indxrb }(y)
+b\sum_{\indxrf,\indxrg}g_{\indxrb \indxrg}(y)\mathfrak{C} _{{\indxrf }i}^{\indxrg}(y) \xrond _{\indxrf }
\  ,
\\[0.5em]\nonumber
P_{\indxra  \indyj}(y,\xrond ) & =&
-cg_{\indxra  j}(y)
+b \sum_{\indxre,\indxrh}g_{\indxra  {\indxrh}}(y)\mathfrak{C} _{{\indxre }j}^{\indxrh }(y) \xrond _{\indxre }
\  ,\\[0.5em]\nonumber
P_{\indxra  \indxrb }(y,\xrond )&=&
bg_{\indxra  \indxrb }(y)
\  ,
\end{eqnarray}%
}%
where $a$, $b$ and $c$ are strictly positive numbers such that $c^2 < ab$.

It is established in  \cite[Proposition 1.102, page 47]{Besse.78} and \cite[Problem II.10 page 79]{Sakai.96}
that  condition (\ref{LP151}) is satisfied.

Regarding condition (\ref{LP162}), we know it is satisfied if
$$
P_y(y,\xrond)\;=\; P_{yy}(y,\xrond)-
P_{y\xrond}(y,\xrond)P_{\xrond\xrond}(y,\xrond)^{-1} P_{\xrond y}(y,\xrond)
$$
does not depend on $y$ and we choose $\Py = P_y$ to define the distance leading to the function $\wpunbf$.
It is satisfied since we get
% Complement
% \pointtocomp{(see more details in Complement \ref{complement39})}{complement39}
\IfReport{%
% report
(see more details in the \complement \ref{complement39})%
}{%
% not report
}%
\begin{equation}
\label{LP167}
P_y\;=\; \left(a-\frac{c^2}{b}\right) g
\end{equation}

So, from the proof of Proposition \ref{prop13}, and Proposition \ref{prop12},
if the corresponding orthogonal distribution
$\bfDistrib _\bfP^\ortho$ is integrable,
Condition A3 holds. Unfortunately, as is proved in \cite[Theorem 1]{Sasaki} and can be checked by direct 
computations with components%
% Complement
% \pointtocomp{(see Complement \ref{complement27}),}{complement27}
\IfReport{%
% report
\ (see \complement \ref{complement27})%
}{%
% not report
}%
, the integrability condition holds if and only if 
%the configuration space equipped
the Riemannian metric $g$ is flat.
\egroup\end{example}
% ------------
\stoparchive
% ------------

\subsection{Construction of the metric $\bfP$}
\IfReport{%
% report
Having a metric $\bfP$ making $\bfh$ a Riemannian submersion, the second fundamental form of which is zero is 
sufficient for Condition A3 to hold. It is also necessary when the dimension $p$ of $\bfy$ is $1$. So, beyond the
test given by (\ref{LP145}),
it is of prime importance to have a design tool for getting such a metric $\bfP$.
This design
}{%
% not report
A design of an appropriate metric $\bfP$
}%
will be provided by
another necessary condition for having the second fundamental form zero when $\bfh$ is a 
Riemannian submersion.

% Suppose Assumption~\ref{H1} holds on $\bfRR^n$.
% From the Submersion Level Set Theorem
% (see \cite[p.105]{Lee.13}), for any $\bfx_0$ in $\bfRR^n$, the $\bfh(\bfx_0)$-level set $\mathfrak{H}(\bfh(\bfx_0))$ is a properly embedded submanifold 
% of codimension $p$ and, with \cite[Proposition 5.2 and following discussion]{Lee.13}, we know
% there exist an $n-p$ dimensional $C^s$ manifold 
% $\bfXrond_h $ and an 
% embedding $\bfvarphi: \bfXrond_h \to \bfRR^n$, called global minimal (because dimension of 
% $\bfXrond_h$ equals dimension of $\mathfrak{H}(\bfx_0)$) parametrization
% (see \cite[p. 111]{Lee.13}),
% such that $\mathfrak{H}(\bfx_0)= \bfvarphi(\bfXrond_h )$,
% i.e.,
% $\bfXrond_h $ is a $C^s$ manifold diffeomorphic to the $\bfh(\bfx_0)$-level set $\mathfrak{H}(\bfh(\bfx_0))$ 
% considered as a  $C^s$ manifold. We have the following result.

\begin{lemma}
\label{lem13}
If $\bfP$ \startmodif and $\bfPy$ are \stopmodif complete and $\bfh$ is a Riemannian submersion on $\bfRR^n$, 
the second fundamental form of which is zero on $\bfRR^n$, then $\bfh$ is surjective
and there exist
an $n-p$ dimensional $C^s$ manifold $\bfXrond_h $, a surjective submersion
$\bfh^\ortho:\bfRR^n\to \bfXrond_h $ and \startmodif
a
\stopmodif complete metric $\bfPxi$ on $\bfXrond_h$
such that $\bfhhperp=(\bfh,\bfh^\ortho): \bfRR^n \to \bfRR^p \times \bfXrond_h$ is
\IfReport{%
% report
a $C^s$ diffeomorphism and
$\bfP$ is the pullback via $\bfhhperp$ of the product metric denoted $\bfPy \oplus \bfPxi$. 
Namely, if $(x,\RR^n,\phi)$ and $(y,\RR^p,\chi)$ are globally defined coordinate charts,
then, for any  coordinate chart $\coordxr $ for $\bfXrond_h$ used to express $\bfh^\ortho$ and $\bfPxi$, we have
\begin{equation}
\label{LP166}
P(x)\;=\;
\frac{\partial h}{\partial x}(x)^\top  \Py  ( h(x)) \,  \frac{\partial h}{\partial x}(x)
+
\frac{\partial h^\ortho}{\partial x}(x)^\top \Pxi ( h^\ortho(x))\,  \frac{\partial h^\ortho}{\partial x}(x)
\qquad 
\forall x\in  [h^\ortho]^{-1}(\coordxrd)
\  ,
\end{equation}%
}{
% not report
a $C^s$ diffeomorphism. Moreover, if $(x,\RR^n,\phi)$ and $(y,\RR^p,\chi)$ are globally defined coordinate charts,
then, for any  coordinate chart $\coordxr $ for $\bfXrond_h$ used to express $\bfh^\ortho$ and $\bfPxi$, we 
have 
for all $x$ in $[h^\ortho]^{-1}(\coordxrd)$,
\\[0.7em]$\displaystyle 
P(x)\;=\;
\frac{\partial h}{\partial x}(x)^\top  \Py  ( h(x)) \,  \frac{\partial h}{\partial x}(x)
$\refstepcounter{equation}\label{LP166}\hfill$(\theequation)$
\\\null\hfill$\displaystyle
+
\frac{\partial h^\ortho}{\partial x}(x)^\top \Pxi ( h^\ortho(x))\,  \frac{\partial h^\ortho}{\partial x}(x)
\  ,
$\\[0.7em]
}%
where $P$, $h$, $h^\ortho$, $\hhperp$, $\Py$, and $\Pxi$ denote expressions in the above mentioned 
coordinates.
\end{lemma}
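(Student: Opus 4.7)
My plan is to promote the local product structure, which is essentially already supplied by the preceding lemmas, to a global one by exploiting the completeness of $\bfP$ and the simple-connectedness of $\bfRR^n$.

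\textbf{Step 1 (local product form).} Since the second fundamental form of $\bfh$ vanishes on $\bfRR^n$, conditions \eqref{LP151}, \eqref{LP162} and \eqref{LP161} all hold. By Lemma~\ref{lem3} (which applies because $\bfh$ is a Riemannian submersion), $\bfDistrib_\bfP^\ortho$ is locally integrable; by Lemma~\ref{lem2}, around every point there are adapted coordinates $(y,\xrond)=(h(x),h^\ortho(x))$ in which $P$ is block-diagonal. Repeating the argument in the proof of Proposition~\ref{prop13}, the block $\overline{P}_\xrond$ is independent of $y$ (coming from \eqref{LP151}) and, by the Riemannian submersion property, the block $\overline{P}_y$ coincides with $\Py(y)$ and is independent of $\xrond$. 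Thus locally $\bfP=\Py\oplus\Pxi$. Moreover, the fibers of $\bfh$ are totally geodesic by \eqref{LP151}, and the horizontal leaves are totally geodesic by \eqref{LP162} together with Lemma~\ref{lem11}.

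\textbf{Step 2 (surjectivity of $\bfh$).} Fix $\bfx_0\in\bfRR^n$ and an arbitrary $\bfy\in\bfRR^p$. Let $\bfgammay$ be any piecewise-smooth path in $\bfRR^p$ from $\bfh(\bfx_0)$ to $\bfy$. Using the Riemannian submersion property and completeness of $\bfP$, $\bfgammay$ can be horizontally lifted to a path in $\bfRR^n$ starting from $\bfx_0$, whose endpoint is mapped by $\bfh$ to $\bfy$.

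\textbf{Step 3 (global definition of $\bfh^\ortho$).} Let $\bfXrond_h:=\mathfrak{H}(\bfh(\bfx_0))$, the fiber of $\bfh$ through $\bfx_0$, equipped with the induced Riemannian metric $\bfPxi$; this is a totally geodesic complete submanifold of $(\bfRR^n,\bfP)$. For an arbitrary $\bfx$, let $\bfXrond_h(\bfx)$ be the maximal horizontal leaf through $\bfx$. By Step~1 the restriction of $\bfh$ to $\bfXrond_h(\bfx)$ is a local isometry onto $(\bfRR^p,\bfPy)$; by the total-geodesicness of the leaves and completeness of $\bfP$, this restriction is a Riemannian covering. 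Since $\bfRR^p$ is simply connected, it is a diffeomorphism onto $\bfRR^p$, so there is a unique point in $\bfXrond_h(\bfx)$ lying over $\bfh(\bfx_0)$; define $\bfh^\ortho(\bfx)$ to be this point. Well-definedness and single-valuedness rest on this simple-connectedness argument.

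\textbf{Step 4 (diffeomorphism and pullback identity).} By construction $\bfhhperp=(\bfh,\bfh^\ortho):\bfRR^n\to\bfRR^p\times\bfXrond_h$ is a local diffeomorphism. It is surjective by Step~2 applied to each factor, and injective because the horizontal leaf and the $\bfh$-fiber through any point intersect exactly once (again by the simple-connectedness argument of Step~3). The local product identity from Step~1 implies $\bfhhperp^{\!*}(\bfPy\oplus\bfPxi)=\bfP$ pointwise, which gives the formula \eqref{LP166}. Completeness of $\bfPy$ and $\bfPxi$ is inherited from completeness of $\bfP$ through this product structure.

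The main obstacle is Step~3: trivializing the holonomy of the horizontal foliation globally so that $\bfh^\ortho$ is well-defined everywhere. Here the essential inputs are the completeness of $\bfP$ (making horizontal leaves geodesically complete) together with the simple-connectedness of $\bfRR^n$ and $\bfRR^p$ (forcing the Riemannian covering $\bfh|_{\bfXrond_h(\bfx)}$ to be trivial).
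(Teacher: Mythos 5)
Your proposal is correct in outline, and it takes a genuinely different route from the paper.

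You argue in the classical de Rham/Hermann style: since the horizontal distribution $\bfDistrib_\bfP^\ortho$ is integrable (Lemma~\ref{lem3}), its leaves are totally geodesic (horizontal geodesics stay horizontal by Lemma~\ref{lem11}, and stay in the leaf by integrability), hence are geodesically complete because $\bfP$ is complete; the restriction of $\bfh$ to a leaf is a local isometry from a complete space onto the connected space $(\bfRR^p,\bfPy)$, hence a Riemannian covering, hence a diffeomorphism by simple connectedness of $\bfRR^p$; this trivializes the holonomy and lets you read $\bfh^\ortho(\bfx)$ off as the unique point of the leaf through $\bfx$ lying in the reference fiber. The paper instead proceeds constructively: it takes the horizontal lifts $g_\indyi=\bfgrad_\bfP h\,\bfPy\,w_\indyi$ of a basis of constant fields $w_\indyi$ on $\bfRR^p$, shows they have constant $\bfP$-norm (Riemannian submersion) hence are complete, that their flows commute (their bracket projects to $[w_\indyi,w_\indyj]=0$ and is horizontal by involutivity, hence vanishes), defines $X_g(x,\cursive{t})$ as the composed flow, derives $h(X_g(x,y))=h(x)+y$, and uses this to build $\bfhhperp^{-1}$ explicitly; the block form of the pulled-back metric and the $y$-independence of $\bfPxi$ are then verified by direct computation using total geodesicness of the fibers. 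Your approach is conceptually cleaner and makes the simple-connectedness input transparent, but leans on foliation and covering-space theory and elides verifications (smoothness and submersivity of $\bfh^\ortho$, the identification $\bfPxi=$ induced metric, completeness of $\bfPy$) that the paper's concrete flow construction delivers without appeal to those theories and with the explicit parametrization needed elsewhere. The mathematical content of the two arguments — completeness making the horizontal structure globally trivial over the contractible base — is the same; the packaging differs.
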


Lemma~\ref{lem13} is a specific version of the more general \cite[Corollary 3.7]{Vilms}.
\IfReport{%
% report
For the sake of completeness we give a direct proof in Appendix \ref{sec19}
\startmodif
but under more restrictive assumptions of boundedness of the function $\bfy\mapsto \bfPy(\bfy)$ and 
commutation of particular vector fields spanning $\bfDistrib _\bfP ^\ortho$ instead of involutivity.
\stopmodif
}{
% not report
A proof can be found in \cite{Sanfelice-Praly.III-long}
\startmodif
under more restrictive assumptions of boundedness of the function $\bfy\mapsto \bfPy(\bfy)$ and, instead of involutivity,  
commutation of particular vector fields spanning $\bfDistrib _\bfP ^\ortho$.
\stopmodif
}

\startmodif
Actually (\ref{LP166}) provides a procedure to construct a metric $\bfP$ making the second fundamental form
of $\bfh$ zero and, consequently, satisfying Condition A3.
The following theorem presents this construction and the forthcoming Example~\ref{ex7} illustrates it.
\stopmodif

\begin{theorem}
% \label{prop15}
\label{prop17}
Suppose Assumption~\ref{H1} holds.
Assume there exist
\begin{list}{}{%
\parskip 0pt plus 0pt minus 0pt%
\topsep 0.5ex plus 0pt minus 0pt%
\parsep 0pt plus 0pt minus 0pt%
\partopsep 0pt plus 0pt minus 0pt%
\itemsep 0pt plus 0pt minus 0pt%
\settowidth{\labelwidth}{\quad ii)}%
\setlength{\labelsep}{0.5em}%
\setlength{\leftmargin}{\labelwidth}%
\addtolength{\leftmargin}{\labelsep}%
}
\item[i)]
a metric $\bfPy$ for $\bfRR^p$ satisfying 
Assumption~\ref{H2};
\item[ii)]
with $q\geq 0$,
an $n-p+q$-dimensional $C^s$ manifold $\bfXi$ equipped with a 
% complete
metric $\bfPxi$;
\item[iii)]
a $C^s$ function $\bfh^\ortho:\Ouv\to \bfXi$ is ,  with rank $n-p$ on $\Ouv$
such that  $\bfhhperp=(\bfh,\bfh^\ortho)$ has rank $n$ on $\Ouv$.
\end{list}
Then,
the metric $\bfP$ defined on $\Ouv$ as the pull back via $\bfhhperp$ of the product metric $\bfPy \oplus \bfPxi$
(see its expression with coordinates in (\ref{LP166}))
is such that
$\bfh$ is a Riemannian submersion with a second fundamental form that is zero on $\Ouv$. Furthermore,
Condition A3 holds when $\wpunbf$ is the square of the 
distance given by $\bfPy$.
\end{theorem}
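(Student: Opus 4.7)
The plan is, around each $\bfx_0 \in \Ouv$, to construct coordinates in which $\bfh$ and $\bfh^\ortho$ are simultaneously straightened, so that the pullback metric $\bfP$ becomes block-diagonal with the two blocks depending only on their own variables. In such coordinates the Riemannian submersion property and the nullity of $\bfsecff_\bfP \bfh$ will follow by direct computation, and Corollary~\ref{cor1} then delivers Condition~A3. Well-definedness of $\bfP$ as a Riemannian metric is immediate: $\bfP$ is symmetric by (\ref{LP166}), and if $v^\top P(x) v = 0$ then both $\frac{\partial h}{\partial x}(x) v = 0$ and $\frac{\partial h^\ortho}{\partial x}(x) v = 0$, i.e.\ $D\bfhhperp(x) v = 0$, whence $v = 0$ by the rank-$n$ hypothesis on $\bfhhperp$.

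The main technical step, and the one I expect to require the most care, is the construction of the adapted coordinates: neither $\bfh^\ortho$ (which has constant rank $n-p$, but is not a submersion when $q>0$) nor the immersion $\bfhhperp$ alone is in normal form, and the two must be straightened simultaneously. I would proceed as follows. By the constant rank theorem applied to $\bfh^\ortho$, there exist coordinates $(u_1, \ldots, u_n)$ on a neighborhood $V$ of $\bfx_0$ and coordinates on $\bfXi$ in which $h^\ortho(u) = (u_1, \ldots, u_{n-p}, 0, \ldots, 0)$. In these coordinates the Jacobian of $\bfhhperp$ is block lower triangular with an identity in the $\bfh^\ortho$-block in the first $n-p$ columns, so the rank-$n$ hypothesis on $\bfhhperp$ forces the $p \times p$ submatrix $\partial h/\partial(u_{n-p+1}, \ldots, u_n)$ to be invertible at $\bfx_0$. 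Shrinking $V$, the map $u \mapsto (u_1, \ldots, u_{n-p}, h(u))$ is a local diffeomorphism, and relabeling $\xrond = (u_1, \ldots, u_{n-p})$ and $y = h(u)$ yields coordinates $(y, \xrond)$ on $V$ in which
\[
h(y, \xrond) = y, \qquad h^\ortho(y, \xrond) = (\xrond, 0).
\]

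In these coordinates, $\partial h/\partial x = (I_p, 0)$ and $\partial h^\ortho/\partial x = \left(\begin{smallmatrix} 0 & I_{n-p} \\ 0 & 0 \end{smallmatrix}\right)$, so (\ref{LP166}) collapses to
\[
\bar P(y, \xrond) \;=\; \begin{pmatrix} \bar\Py(y) & 0 \\ 0 & \bar\Pxi_{11}(\xrond) \end{pmatrix},
\]
where $\bar\Pxi_{11}$ is the top-left $(n-p) \times (n-p)$ block of $\bar\Pxi$ evaluated at second argument zero. The Riemannian submersion identity (\ref{LP160}) then reads $(I_p, 0)\, \bar P^{-1}\, (I_p, 0)^\top = \bar\Py^{-1}$, immediate from the block structure. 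Moreover, since each block of $\bar P$ depends only on its own variables, the Christoffel symbols satisfy $\bar\Gamma^\indyi_{\indyj\indyk}(y, \xrond) = \bar\Gammay^\indyi_{\indyj\indyk}(y)$ while $\bar\Gamma^\indyi_{\indyj\indxra}$ and $\bar\Gamma^\indyi_{\indxra\indxrb}$ vanish identically. Feeding these into (\ref{eqn:2ndFundFormh}), together with $\partial^2 h/\partial x \partial x = 0$ and the simple form of $\partial h/\partial x$, makes all three blocks $\overline{\secff}_P h^\indyi_{\indyj\indyk}$, $\overline{\secff}_P h^\indyi_{\indyj\indxra}$, and $\overline{\secff}_P h^\indyi_{\indxra\indxrb}$ vanish. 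By Lemma~\ref{lem1} this nullity is coordinate-invariant, so $\bfsecff_\bfP \bfh \equiv 0$ on $\Ouv$. Since Assumptions~\ref{H1} and~\ref{H2} are in force by hypotheses~(iii) and~(i), Corollary~\ref{cor1} yields Condition~A3 for $\wpunbf = \dy^2$.
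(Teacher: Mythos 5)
Your proof is correct and follows essentially the same strategy as the paper's: produce a local chart $(y,\xrond)$ on $\Ouv$ in which $\bfh$ is the projection on $y$ and the pullback metric becomes block-diagonal with each block depending only on its own variables, then verify (\ref{LP160}) and the vanishing of $\bfsecff_\bfP\bfh$ block by block. The only difference is cosmetic: the paper invokes the Rank Theorem to factor $\bfh^\ortho=\bfimmer_0\circ\bfsubmer_0$ through an intermediate $(n-p)$-dimensional manifold and takes $\xrond=\submer_0(x)$ as the complementary coordinate, whereas you put $\bfh^\ortho$ directly into its constant-rank normal form $(u_1,\dots,u_{n-p},0,\dots,0)$ and then replace the trailing $p$ coordinates by $y=h(u)$; these are the same construction seen from opposite ends of the subimmersion factorization, and they produce the same adapted chart up to relabeling.
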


% % OLD STATEMENT
% \begin{proposition}
% % \label{prop15}
% \label{prop17}
% Suppose Hypothesis \ref{H1} holds. Let $\bfPy$ be a metric for $\bfRR^p$ satisfying 
% Hypothesis~\ref{H2}.
% With $q\geq 0$,
% let $\bfXi$ be an $n-p+q$-dimensional $C^s$ manifold, equipped with a 
% % complete
% metric $\bfPxi$.
% %
% If
% \begin{list}{}{%
% \parskip 0pt plus 0pt minus 0pt%
% \topsep 0.5ex plus 0pt minus 0pt%
% \parsep 0pt plus 0pt minus 0pt%
% \partopsep 0pt plus 0pt minus 0pt%
% \itemsep 0pt plus 0pt minus 0pt%
% \settowidth{\labelwidth}{\quad ii)}%
% \setlength{\labelsep}{0.5em}%
% \setlength{\leftmargin}{\labelwidth}%
% \addtolength{\leftmargin}{\labelsep}%
% }
% \item[i)]
% $\bfh^\ortho:\Ouv\to \bfXi$ is a $C^s$ function,  with rank $n-p$ on $\Ouv$
% such that  $\bfhhperp=(\bfh,\bfh^\ortho)$ has rank $n$ on $\Ouv$,
% \item[ii)]
% the metric $\bfP$ is defined on $\Ouv$ as the pull back via $\bfhhperp$ of the product metric $\bfPy \oplus \bfPxi$,
% (see its expression with coordinates in (\ref{LP166})),
% \end{list}
% then $\bfh$ is a Riemannian submersion and its second fundamental form is zero on $\Ouv$. Moreover
% Condition A3 holds when $\wpunbf$ is the square of the 
% distance given by $\bfPy$.
% \end{proposition}

\begin{proof}
It follows from our assumptions and the Rank Theorem that the restriction of $\bfh^\ortho$ to $\Ouv$ is a subimmersion, i.e.,
for each $\bfx_0$ in $\Ouv$, there exist an open neighborhood ${\cursive{M}\mkern 4mu _0}$ of $\bfx_0$, a $C^s$ manifold
$\bfXrond_0$ of dimension $n-p$, a submersion $\bfsubmer_0:{\cursive{M}\mkern 4mu _0}\to \bfXrond_0$ and an immersion
$\bfimmer_0:\bfXrond_0\to\bfXi$ satisfying
$$
\bfh^\ortho(\bfx)\;=\;\bfimmer_0  \left(\bfsubmer_0(\bfx)\right)
\qquad \forall \bfx\in {\cursive{M}\mkern 4mu _0}
$$
The index $0$ is used here to insist on the fact that all the corresponding objects are $\bfx_0$ dependent.

Let $\coordxi$ be a coordinate chart around $\bfh^\ortho(\bfx_0)$ in $\bfXi$,
$\coordxr$ be a coordinate chart around $\bfsubmer_0(\bfx_0)$ in $\bfXrond_0$
and $\coordx$ be a coordinate chart around $\bfx_0$ in $\Ouv$ with
$$
\coordxd \subset {\cursive{M}\mkern 4mu _0}
\quad ,\qquad 
\bfsubmer_0(\coordxd )\subset \coordxrd
\quad ,\qquad 
\bfimmer_0  (\coordxrd) \subset \coordxid
\  .
$$
We have
$$
\coordxip\;=\; h^\ortho(\coordxp)\;=\; \immer_0  (\coordxrp)
\  ,\qquad 
\coordxrp\;=\; \submer_0 (\coordxp)
\qquad 
\forall \coordxp\in \coordxm(\coordxd)
\  .
$$
With the function $(\bfh,\bfh^\ortho)$ having rank $n$, we get
\IfTwoCol{
$$
\renewcommand{\arraystretch}{2}
n\;=\; 
%\textsf{Rank}\left(
%\begin{array}{c}
%\displaystyle \frac{\partial h}{\partial\coordxp}(\coordxp)
%\\\displaystyle 
%\frac{\partial h^\ortho}{\partial \coordxp}(\coordxp)
%\end{array}\right)
%\;=\; 
\textsf{Rank}\left(
\begin{array}{cc}
I_p & 0
\\
0 & \displaystyle \frac{\partial \immer_0  }{\partial \coordxrp}(\submer_0(\coordxp))
\end{array}\right)
\left(\begin{array}{c}
\displaystyle \frac{\partial h}{\partial\coordxp}(\coordxp)
\\\displaystyle 
\frac{\partial \submer_0}{\partial \coordxp}(\coordxp)
\end{array}\right)
$$
}
{
$$
\renewcommand{\arraystretch}{2}
n\;=\; \textsf{Rank}\left(
\begin{array}{c}
\displaystyle \frac{\partial h}{\partial\coordxp}(\coordxp)
\\\displaystyle 
\frac{\partial h^\ortho}{\partial \coordxp}(\coordxp)
\end{array}\right)
\;=\; \textsf{Rank}\left(
\begin{array}{cc}
I_p & 0
\\
0 & \displaystyle \frac{\partial \immer_0  }{\partial \coordxrp}(\submer_0(\coordxp))
\end{array}\right)
\left(\begin{array}{c}
\displaystyle \frac{\partial h}{\partial\coordxp}(\coordxp)
\\\displaystyle 
\frac{\partial \submer_0}{\partial \coordxp}(\coordxp)
\end{array}\right)
$$
}%
where $\frac{\partial \immer_0  }{\partial \coordxrp}(\submer_0(\coordxp))$ is an $((n-p+q),(n-p))$ matrix of rank 
$n-p$.
This implies the square matrix 
\IfReport{%
% report
$$
\renewcommand{\arraystretch}{2}
\left(\begin{array}{c}
\displaystyle \frac{\partial h}{\partial\coordxp}(\coordxp)
\\\displaystyle 
\frac{\partial \submer_0}{\partial \coordxp}(\coordxp)
\end{array}\right)
$$
}{%
% not report
$$
\left(\begin{array}{@{\,  }cc@{\,  }}
\displaystyle \frac{\partial h}{\partial\coordxp}(\coordxp)^\top
&\displaystyle 
\frac{\partial \submer_0}{\partial \coordxp}(\coordxp)^\top
\end{array}\right)
$$
}%
is invertible. It follows that
$x \mapsto (y,\coordxrp)= (h(x),\submer_0(\coordxp))
$
is a diffeomorphism and $(y,\coordxrp)$ can be used as coordinates for $\bfx$ in a neighborhood of $\bfx_0$.
According to (\ref{LP166}), the expression $P$ in 
these coordinates of the metric $\bfP$ is
\IfTwoCol{
\\[0.5em]$\displaystyle 
P(y,\coordxrp)\;=\;
$\hfill\null\\[0.3em]\null\hfill$\displaystyle
\left(
\begin{array}{cc}
I_p & 0
\\
0 & \displaystyle \frac{\partial \immer_0 }{\partial \coordxrp}(\coordxrp)^\top
\end{array}
\right)\!
\left(\begin{array}{cc}
\Py(y) & 0
\\
0 & \Pxi(\immer_0 (\coordxrp))
\end{array}\right) \!
\left(
\begin{array}{cc}
I_p & 0
\\
0 & \displaystyle \frac{\partial \immer_0 }{\partial \coordxrp}(\coordxrp)
\end{array}
\right)
$\hfill\null\\[0.2em]\null\hfill$\displaystyle
=
\left(\begin{array}{cc}
\Py(y)  & 0
\\
0 & P_\coordxrp(\coordxrp)
\end{array}\right)
$\\
where
\\\null \hfill $\displaystyle 
P_\coordxrp(\coordxrp)
\;=\; 
\frac{\partial \immer_0 }{\partial \coordxrp}(\coordxrp)^\top
\Pxi(\immer_0 (\coordxrp))
\frac{\partial \immer_0  }{\partial \coordxrp}(\coordxrp)
\  .
$\hfill \null \\[0.7em]
}
{
\begin{eqnarray*}
P(y,\coordxrp)
&=&
\left(
\begin{array}{cc}
I_p & 0
\\
0 & \displaystyle \frac{\partial \immer_0 }{\partial \coordxrp}(\coordxrp)^\top
\end{array}
\right)
\left(\begin{array}{cc}
\Py(y) & 0
\\
0 & \Pxi(\immer_0 (\coordxrp))
\end{array}\right)
\left(
\begin{array}{cc}
I_p & 0
\\
0 & \displaystyle \frac{\partial \immer_0 }{\partial \coordxrp}(\coordxrp)
\end{array}
\right)
\\
&=&
\left(\begin{array}{cc}
\Py(y)  & 0
\\
0 & P_\coordxrp(\coordxrp)
\end{array}\right)
\end{eqnarray*}
where
$$
P_\coordxrp(\coordxrp)
\;=\; 
\frac{\partial \immer_0 }{\partial \coordxrp}(\coordxrp)^\top
\Pxi(\immer_0 (\coordxrp))
\frac{\partial \immer_0  }{\partial \coordxrp}(\coordxrp)
\  .
$$
}%
From here, the proof can be concluded as in the proof of Proposition \ref{prop13}.
% With Lemma \ref{lem1},  it is sufficient to establish (\ref{LP145}) in these coordinates with which
% the expression of the second fundamental form of $\bfh$ is given by (\ref{LP148}) (for $P$ and not $P_{mod}$). But 
% because of the block diagonal form of
% $\bar P( y,\barcoordxrp)$
% where the blocks have a specific dependence, the expressions of the Christoffel symbols attached to $\bar P$
% are
% \begin{eqnarray*}
% \bar\Gamma _{\indyl\indym}^\indyi
% &=&\frac{1}{2}\,  \displaystyle 
% \sum_\indyj[ \Py^{-1}]_{\indyi\indyj}
% \left(
% \frac{\partial \Py_{\indyj\indyl}}{\partial \bar y_\indym}
% +
% \frac{\partial \Py_{\indyj\indym}}{\partial \bar y_\indyl}
% -
% \frac{\partial \Py_{\indyl\indym}}{\partial \bar y_\indyj}
% \right)
% \;=\; 
% \Gammay _{\indyl\indym}^\indyi
% \\
% \bar \Gamma _{\indyl\indxra}^\indyi
% &=&\frac{1}{2}\,  \displaystyle 
% \sum_\indyj[ \Py^{-1}]_{\indyi\indyj}
% \frac{\partial \Py_{\indyj\indyl}}{\partial \coordxip_\indxra}
% \;=\; \displaystyle 0
% \\
% \bar \Gamma _{\indxra\indxrb}^\indyi
% &=&\frac{1}{2}\,  \displaystyle 
% \sum_\indyj[ \Py^{-1}]_{\indyi\indyj}
% \frac{\partial [ P_\coordxrp]_{\indxra\indxrb}}{\partial \bar y_\indyj}
% \;=\; 
% 0
% \end{eqnarray*}
% In view of (\ref{LP148}), the second fundamental form of $\bfh$ is zero.
\end{proof}

% \begin{remark}~
% \normalfont
% \begin{enumerate}
% \item
% The restriction that $\bfh^\ortho$ has rank $n-p$ is crucial. A counterexample is given by the metric 
% (\ref{LP136}), which does not make the level sets of the output function totally geodesic. Indeed,
% in \eqref{LP136} we have that
% $n=3$ and $p=1$, but $\bfh^\ortho$ has (generic) rank $3$.
% \item
% Formula (\ref{LP166}) is remarkable
% \startmodif
% because of the decomposition of $P$ as a sum. We shall see later on
% \stopmodif 
% this simplifies the verification of Condition A2.
% \item
% The family of metrics given by (\ref{LP166}) would exactly correspond to the one of those making the second fundamental 
% form $\bfh$ zero if we were not imposing the extra condition that $\bfh$ is a Riemannian submersion.
% \item
% The metric $\bfP$ given by Theorem~\ref{prop17} is defined only on $\Ouv$ and we do not claim it is complete.
% \item
% Once the manifold $\bfXi$ is chosen, the existence of the function $\bfh^\ortho$ satisfying the 
% conditions is not guaranteed. Indeed, as a 
% consequence of Lemma \ref{lem13}, $\bfh$ and $\bfXi$ cannot be arbitrary. For example $\bfh$ must be surjective and its level sets must be 
% diffeomorphic to each other. Also, $\bfXi,$ although not minimal, should still be a 
% parametrization of the $\bfh(\bfx_0)$-level set $\mathfrak{H}(\bfh(\bfx_0))$, i.e. there should exist an immersion between 
% $\bfXrond_h$ and $\bfXi$. We illustrate this point in the 
% following example.
% \end{enumerate}
% \end{remark}
\begin{remark}~
\normalfont
\begin{list}{}{%
\parskip 0pt plus 0pt minus 0pt%
\topsep 0pt plus 0pt minus 0pt%\topsep 0.5ex plus 0pt minus 0pt%
\parsep 0pt plus 0pt minus 0pt%
\partopsep 0pt plus 0pt minus 0pt%
\itemsep 0pt plus 0pt minus 0pt%\itemsep 0.5ex plus 0pt minus 0pt
\settowidth{\labelwidth}{\:  1)\:}%
\setlength{\labelsep}{0.5em}%
\setlength{\leftmargin}{\labelwidth}%
\addtolength{\leftmargin}{\labelsep}%
}
\item[1)]
The restriction that $\bfh^\ortho$ has rank $n-p$ is crucial. A counterexample is given by the metric 
(\ref{LP136}), which does not make the level sets of the output function totally geodesic. Indeed,
in \eqref{LP136} we have that
$n=3$ and $p=1$, but $\bfh^\ortho$ has (generic) rank $3$.
\item[2)]
Formula (\ref{LP166}) is remarkable
\startmodif
because of the decomposition of $P$ as a sum. In the upcoming Section~\ref{sec21}, we observe that it
\stopmodif 
simplifies the verification of Condition A2.
\item[3)]
The family of metrics given by (\ref{LP166}) would exactly correspond to the one of those making the second fundamental 
form $\bfh$ zero if we were not imposing the extra condition that $\bfh$ is a Riemannian submersion.
\item[4)]
The metric $\bfP$ given by Theorem~\ref{prop17} is defined only on $\Ouv$ and we do not claim it is complete.
\item[5)]
Once the manifold $\bfXi$ is chosen, the existence of the function $\bfh^\ortho$ satisfying the 
conditions is not guaranteed. Indeed, as a 
consequence of Lemma \ref{lem13}, $\bfh$ and $\bfXi$ cannot be arbitrary. For example $\bfh$ must be surjective and its level sets must be 
diffeomorphic to each other.
\startmodif
Also, $\bfXi$ may not be minimal in terms of dimension and there should exist
an immersion between
$\bfXrond_h$ and $\bfXi$. We illustrate this point in the 
following example.
\stopmodif
\end{list}
\end{remark}

\begin{example}\bgroup\normalfont
\label{ex7}
Let the $\bfx$-manifold be $\bfRR^2$ equipped with globally defined coordinates $(x_1,x_2)$. Let $\bfRR$ be 
the $\bfy$-manifold equipped with a globally defined coordinate $y$. This yields $n=2$ and $p=1$.
The output function $\bfh$ we consider is, when expressed in these coordinates,
$$
y\;=\; h(x) := x_1^2 + x_2^2
\  .
$$
It is a submersion on $ \Ouv:=  \bfRR^2\setminus\{0\}$. The level sets of this output function
are diffeomorphic with the unit circle $\SS^1$.

To design a metric $\bfP$ satisfying Condition A3, we follow the lines of
Theorem~\ref{prop17}. We could select $\bfXi$ as a connected 1-dimensional manifold,
this implying
$q = 0$ in Theorem~\ref{prop17}.
\IfReport{
% report
However,
as shown in the \complement \ref{complement49},
with such a choice, 
if, as required by the assumptions
in Theorem~\ref{prop17}, there exists
a function $\bfh^\ortho: \RR^2\setminus\{0\} \to \bfXi$
such that the rank of $\bfhhperp=(\bfh,\bfh^\ortho)$ is $2$ on $\RR^2\setminus\{0\}$, 
then the manifold $\bfXi$ is necessarily diffeomorphic to $\bfSS^1$.
We are reluctant about choosing 
$\bfXi$ as $\SS^1$ because there is no global chart for this manifold, making tricky the numerical implementation 
in applications.
}{%
% not report
}
Instead, we select $\bfXi$ as $\bfRR^2\setminus\{0\}$,
in which $\SS^1$ is embedded. Then, $q$ in Theorem~\ref{prop17}, is equal to $1$ and
$\bfh^\ortho:\Ouv \to \bfRR^2\setminus\{0\}$
is to be chosen with rank $1$.
\startmodif
Then, we choose coordinates for $\bfx$ and $\bfy$. For $\bfx$, we keep
those defined above, namely, $(x_1,x_2)$. To get an extra degree of freedom, for $\bfy$ we change, via a $C^s$
function $\changey:\RR\to \RR$ with nonvanishing derivative, the 
original coordinate $y$ in
\stopmodif
$$
\bar y \;=\; \changey(y)
\  .
$$
Let also $h^\ortho=(h_\indxra^\ortho,h_\indxrb^\ortho)$ be the 
expression of the function $\bfh^\ortho$ we are interested in. For the needs of
Theorem~\ref{prop17}, by letting
$$
\bar h(x) \;=\; \changey(h(x))
\  ,
$$
since $n = 2$ and $p =1$, we want, for each $x \in \RR^2\setminus\{0\}$,
\IfTwoCol{
% two col
\\[0.6em]\null \hfill $\displaystyle 
\renewcommand{\arraystretch}{1.5}
\textsf{Rank}\left(\begin{array}{@{\,  }c@{\,  }}
\frac{\partial \bar h}{\partial x}(x)
\\
\frac{\partial h^\ortho}{\partial x}(x)
\end{array}\right)
\;=\; 2 \  , \quad 
\renewcommand{\arraystretch}{1.5}
\textsf{Rank}\left(
\frac{\partial h^\ortho}{\partial x}(x)
\right)
\;=\; 1
$\hfill \null \\[0.6em]
}{%
% one col
\begin{eqnarray*}
\renewcommand{\arraystretch}{1.5}
\textsf{Rank}\left(\begin{array}{@{\,  }c@{\,  }}
\frac{\partial \bar h}{\partial x}(x)
\\
\frac{\partial h^\ortho}{\partial x}(x)
\end{array}\right)
&=&
\renewcommand{\arraystretch}{1.5}
\textsf{Rank}\left(\begin{array}{@{\,  }cc@{\,  }}
\changey^\prime (x_1^2+x_2^2)x_1 & \changey^\prime (x_1^2+x_2^2)x_2
\\
\frac{\partial h_\indxra^\ortho}{\partial x_1}(x_1,x_2)
&
\frac{\partial h_\indxra^\ortho}{\partial x_2}(x_1,x_2)
\\
\frac{\partial h_\indxrb^\ortho}{\partial x_1}(x_1,x_2)
&
\frac{\partial h_\indxrb^\ortho}{\partial x_2}(x_1,x_2)
\end{array}\right)\;=\; 2
\  ,
\\
\renewcommand{\arraystretch}{1.5}
\textsf{Rank}\left(
\frac{\partial h^\ortho}{\partial x}(x)
\right)
&=&
\renewcommand{\arraystretch}{1.5}
\textsf{Rank}\left(\begin{array}{@{\,  }cc@{\,  }}
\frac{\partial h_\indxra^\ortho}{\partial x_1}(x_1,x_2)
&
\frac{\partial h_\indxra^\ortho}{\partial x_2}(x_1,x_2)
\\
\frac{\partial h_\indxrb^\ortho}{\partial x_1}(x_1,x_2)
&
\frac{\partial h_\indxrb^\ortho}{\partial x_2}(x_1,x_2)
\end{array}\right)\;=\; 1
\end{eqnarray*}
where $\changey^\prime$ is the derivative of $\changey$.
}%
A  solution to these equations is
\begin{eqnarray*}
h_\indxra^\ortho(x_1,x_2)&=&  \frac{k_\indxra(x_1,x_2)}
{\sqrt{k_\indxra(x_1,x_2)^2 + k_\indxrb(x_1,x_2)^2}}
\  ,
\\ 
h_\indxrb^\ortho(x_1,x_2)&=& \frac{k_\indxrb(x_1,x_2)}
{\sqrt{k_\indxra(x_1,x_2)^2 + k_\indxrb(x_1,x_2)^2}}
\end{eqnarray*}
where $(k_1,k_2): \RR^2 \setminus\{0\} \to \RR^2 \setminus\{0\}$ are $C^s$. The rank 1 condition is 
satisfied because of the normalization. To meet the rank 2 
condition, we
\IfTwoCol{
% two col
require, for all $(x_1,x_2)$ in $\RR^2\setminus\{0\}$,
\begin{equation}
\label{LP219}
x_1 \left(k_\indxrb \frac{\partial k_\indxra }{\partial x_2}- k_\indxra \frac{\partial 
k_\indxrb}{\partial x_2}\right)
-
x_2 \left(k_\indxrb \frac{\partial k_\indxra }{\partial x_1}- k_\indxra \frac{\partial 
k_\indxrb}{\partial x_1}\right)\;\neq \; 0
\end{equation}
}{%
% one col
must satisfy
\begin{equation}
\label{LP219}
x_1 \left[k_\indxrb \frac{\partial k_\indxra }{\partial x_2}- k_\indxra \frac{\partial 
k_\indxrb}{\partial x_2}\right]
-
x_2 \left[k_\indxrb \frac{\partial k_\indxra }{\partial x_1}- k_\indxra \frac{\partial 
k_\indxrb}{\partial x_1}\right]\;\neq \; 0
\qquad \forall (x_1,x_2)\in \RR^2\setminus\{0\}
\end{equation}
}%
In this way  the image of $\bfh^\ortho$ is indeed the unit circle, not as an ``abstract'' 
manifold, but as an immersed submanifold of $\RR^2$.
Then, following Theorem~\ref{prop17} and according to (\ref{LP166}), a metric satisfying Condition A3
on $\RR^2 \setminus\{0\}$ is, with $\Py $ a $C^s$ function with strictly positive values and
$\left(\begin{array}{cc}
\Pxi_{\indxra\indxra} & \Pxi_{\indxra\indxrb} \\ \Pxi_{\indxra\indxrb} & \Pxi_{\indxrb\indxrb}
\end{array}\right)$ a $C^s$ function with positive definite values,
\IfTwoCol{%
% two col
\\[0.5em]$\displaystyle 
P(x_1,x_2) = \changey^\prime (x_1^2+x_2^2)^2
\left(\renewcommand{\arraystretch}{1.3}
\begin{array}{@{\,  }c@{\,  }}
\displaystyle 
 x_1
\\
x_2
\end{array}
\right) \Py (\changey (x_1^2+x_2^2) ) 
\left(\begin{array}{@{\,  }cc@{\,  }}
\displaystyle 
 x_1
&
 x_2
\end{array}
\right) 
$\hfill\null\\\null \hfill $\displaystyle
+
\frac{\partial h^\ortho}{\partial x}(x)^\top \Pxi(h^\ortho(x)) \frac{\partial h^\ortho}{\partial x}(x)
$% --------
\\$\displaystyle 
\hphantom{P(x_1,x_2)}=
\changey^\prime (x_1^2+x_2^2)^2
\left(\renewcommand{\arraystretch}{1.3}
\begin{array}{@{\,  }c@{\,  }}
\displaystyle 
 x_1
\\
x_2
\end{array}
\right) \Py (\changey (x_1^2+x_2^2) ) 
\left(\begin{array}{@{\,  }cc@{\,  }}
\displaystyle 
 x_1
&
 x_2
\end{array}
\right) 
$\hfill \null \\[0.3em]\null \hfill $\displaystyle 
+ \mathcursive{K}(x)\widetilde \Pxi (k_\indxra(x),k_\indxrb(x))\mathcursive{K}(x)^\top
$\\[0.3em]
where
\IfReport{
% report
\\[-0.1em]\null \hfill 
}{%
% not report
$\changey^\prime$ is the derivative of $\changey$ and
\\[0.3em]\null \hfill  
}%
$\displaystyle
\mathcursive{K}(x)\;=\; 
\left(\renewcommand{\arraystretch}{2.1}\begin{array}{@{\,  }c@{\,  }}
\displaystyle 
%\left[
k_\indxrb (x)\frac{\partial k_\indxra}{\partial x_1} (x)
-
 k_\indxra (x)\frac{\partial k_\indxrb}{\partial x_1}(x)
 %\right]
\\
\displaystyle 
%\left[
k_\indxrb (x)\frac{\partial k_\indxra}{\partial x_2} (x)
-
k_\indxra (x)\frac{\partial k_\indxrb}{\partial x_2}(x)
%\right]
\end{array}\right)
$\hfill\null\\[0.5em]$\displaystyle
\widetilde \Pxi (k_\indxra,k_\indxrb)\;=\; \frac{
k_\indxra^2\Pxi_{\indxrb\indxrb}(k_\indxra,k_\indxrb)
-
2 k_\indxra k_\indxrb\Pxi_{\indxra\indxrb}(k_\indxra,k_\indxrb)
}
{(k_\indxra^2 + k_\indxrb^2)^3}$\\[-0.4em]
\null\hfill $\displaystyle+\frac{
k_\indxrb^2\Pxi_{\indxra\indxra}(k_\indxra,k_\indxrb)
}
{(k_\indxra^2 + k_\indxrb^2)^3}
\  .
$\\%
}{%
% one col
\\[1em]$\displaystyle 
P(x_1,x_2)
$\hfill \null % --------
\\[1em]\vbox{\noindent\null \quad $
\;=\;
\left(\renewcommand{\arraystretch}{1.3}
\begin{array}{@{\,  }c@{\,  }}
\displaystyle 
\changey^\prime (x_1^2+x_2^2) x_1
\\
\changey^\prime (x_1^2+x_2^2) x_2
\end{array}
\right) \Py (\changey (x_1^2+x_2^2) ) 
\left(\begin{array}{@{\,  }cc@{\,  }}
\displaystyle 
\changey^\prime (x_1^2+x_2^2) x_1
&
\changey^\prime (x_1^2+x_2^2) x_2
\end{array}
\right) 
$\hfill \null \\\null \hfill $
+
\left(\renewcommand{\arraystretch}{2.1}
\begin{array}{@{\,  }cc@{\,  }}
\displaystyle 
\frac{\partial h^\ortho _\indxra}{\partial x_1} 
& \displaystyle
\frac{\partial h^\ortho _\indxrb}{\partial x_1}
\\\displaystyle 
\frac{\partial h^\ortho _\indxra}{\partial x_2} 
& \displaystyle
\frac{\partial h^\ortho _\indxrb}{\partial x_2}
\end{array}
\right)
\left(\renewcommand{\arraystretch}{1.3}
\begin{array}{@{\,  }cc@{\,  }}
\displaystyle
\Pxi_{\indxra\indxra}( h^\ortho(x_1,x_2))
& \displaystyle
\Pxi_{\indxra\indxrb}( h^\ortho(x_1,x_2))
\\
\displaystyle
\Pxi_{\indxra\indxrb}( h^\ortho(x_1,x_2))
& \displaystyle
\Pxi_{\indxrb\indxrb}( h^\ortho(x_1,x_2))
\end{array}\right)
\left(\renewcommand{\arraystretch}{2.1}
\begin{array}{@{\, }cc@{\,  }}
\displaystyle 
\frac{\partial h^\ortho _\indxra}{\partial x_1} 
& \displaystyle
\frac{\partial h^\ortho _\indxra}{\partial x_2} 
\\
\displaystyle 
\frac{\partial h^\ortho _\indxrb}{\partial x_1}
& \displaystyle
\frac{\partial h^\ortho _\indxrb}{\partial x_2}
\end{array}
\right)
$}% --------
\\[1em]\vbox{\noindent\null \quad $\displaystyle 
\;=\;
\left(\begin{array}{@{\,  }c@{\,  }}
x_1
\\
x_2
\end{array}\right) \changey^\prime(x^\top x)^2 
\Py(\changey(x^\top x)) 
\left(\begin{array}{@{\,  }cc@{\,  }}
x_1
&
x_2
\end{array}\right)
$\hfill \null \\[0.7em]\null \qquad \qquad $
+
\left(\renewcommand{\arraystretch}{2.1}\begin{array}{@{\,  }c@{\,  }}
\displaystyle 
%\left[
k_\indxrb (x)\frac{\partial k_\indxra}{\partial x_1} (x)
-
 k_\indxra (x)\frac{\partial k_\indxrb}{\partial x_1}(x)
 %\right]
\\
\displaystyle 
%\left[
k_\indxrb (x)\frac{\partial k_\indxra}{\partial x_2} (x)
-
k_\indxra (x)\frac{\partial k_\indxrb}{\partial x_2}(x)
%\right]
\end{array}\right)
\times
$\hfill \null \\[0.7em]\null \qquad \qquad \qquad  $\displaystyle
\times\   \widetilde \Pxi (k_\indxra(x),k_\indxrb(x))\  \times
$\hfill \null \\[0.7em]\null \hfill $
\times
\left(\begin{array}{@{\,  }cc@{\,  }}
\displaystyle 
\left[k_\indxrb (x)\frac{\partial k_\indxra}{\partial x_1} (x)
-
 k_\indxra (x)\frac{\partial k_\indxrb}{\partial x_1}(x)\right]&\displaystyle 
\left[k_\indxrb (x)\frac{\partial k_\indxra}{\partial x_2} (x)
-
k_\indxra (x)\frac{\partial k_\indxrb}{\partial x_2}(x)\right]
\end{array}\right)
\  .
$}% --------
\\[1em]
$$
\widetilde \Pxi (k_\indxra,k_\indxrb)\;=\; \frac{
k_\indxra^2\Pxi_{\indxrb\indxrb}(k_\indxra,k_\indxrb)
-
2 k_\indxra k_\indxrb\Pxi_{\indxra\indxrb}(k_\indxra,k_\indxrb)
+
k_\indxrb^2\Pxi_{\indxra\indxra}(k_\indxra,k_\indxrb)
}
{(k_\indxra^2 + k_\indxrb^2)^3}
\  .
$$
}
\egroup
\startmodif
For example, the particular choice
$$
\changey(s)=s 
\: ,\; 
\Py=1
\: ,\; 
k_\indxra=x_1
\: ,\; 
k_\indxrb=x_2
\: ,\; 
\Pxi=I
$$
gives
\\[-0.5em]\null \hfill $\displaystyle 
P(x_1,x_2)\;=\; \left(
\begin{array}{cc}
x_1^2 + \frac{x_2^2}{(x_1^2+x_2^2)^2}
&
x_1x_2 - \frac{x_1x_2}{(x_1^2+x_2^2)^2}
\\
x_1x_2 - \frac{x_1x_2}{(x_1^2+x_2^2)^2}
&
x_2^2 + \frac{x_1^2}{(x_1^2+x_2^2)^2}
\end{array}\right)
$\hfill \null 
\stopmodif
\end{example}

\section{On Simultaneous Satisfaction of Conditions A2 and A3}
\label{sec:A2andA3}
We have observed that Conditions A2 and A3 are of completely different nature.

\startmodif
The next example shows both of these conditions may not always hold simultaneously.

In this section, we investigate ways, from a design standpoint, to guarantee that Condition A3 holds
when Condition A2 is already satisfied, and vice versa.
\stopmodif

\startmodif
\begin{example}\bgroup\normalfont
Consider the system
\begin{equation}
\label{NLLP2}
\dot x_1 \;=\; 2x_2\quad ,\qquad \dot x_2 = 1-x_1
\quad ,\qquad y=x_1^2+x_2^2
\end{equation}
It is differentially observable of order five. Furthermore, there
exists a globally convergent observer with linear dynamics estimating the output and its four derivatives.
We have seen in Example \ref{ex7} how to construct a metric satisfying Condition A3. Also there exists an expression
$P$ of the metric $\bfP$, which is polynomial of degree $2$ in $(x_1,x_2)$, satisfying Condition A2.

For this system (\ref{NLLP2}), the observer 
(\ref{eqn:GeodesicObserverVectorField}) takes the form
\\\null \hfill $\displaystyle 
%\dot{\overparen{
\left(\begin{array}{@{\,}c@{\,}}
\dot{\hat x}_1 \\ \dot{\hat x}_2
\end{array}\right)
%} 
= 
\left(\begin{array}{@{\,}c@{\,}}
2\hat x_2
\\
 1-\hat x_1
\end{array}\right)
+ k P(\hat x_1,\hat x_2)^{-1}
 \left(\begin{array}{@{\,}c@{\,}}
\hat x_1
\\
\hat x_2
\end{array}\right)  (y-\hat y)
\, .
$\refstepcounter{equation}\label{NLLP12}\hfill$(\theequation)$
\\[0.6em]
With the first order variation formula, this observer leads to a strict decrease of the Riemannian 
distance induced by $P$, if, for any normalized geodesic $\gamma ^*$, minimizing on $[s,\hat s]$,
\IfReport{
% report
\\[1em]$\displaystyle 
\frac{d\gamma ^*}{ds}(\hat s)^\top P(\gamma ^*(\hat s))
\left[
\left(\begin{array}{c}
2\gamma ^*_2(\hat s)
\\
 1-\gamma ^*_1(\hat s)
\end{array}\right) - k P(\gamma ^*(\hat s))^{-1}
\gamma ^*(\hat s)  (|\gamma ^*(\hat s)|^2-|\gamma ^*(s)|^2)
\right]
$\refstepcounter{equation}\label{NLLP6}\hfill$(\theequation)$
\\\null\hfill$\displaystyle
\;-\; 
\frac{d\gamma ^*}{ds}( s)^\top P(\gamma ^*( s))\left(\begin{array}{c}
2\gamma ^*_2( s)
\\
 1-\gamma ^*_1( s)
\end{array}\right)
\; <\; 0
$\\[1em]
}{
% not report
\\[0.6em]$\displaystyle 
\frac{d\gamma ^*}{ds}(\hat s)^\top P(\gamma ^*(\hat s))\,  \times
$\refstepcounter{equation}\label{NLLP6}\hfill$(\theequation)$
\\[0.3em]\null\hfill$\displaystyle
\times
\left[
\left(\begin{array}{@{\,}c@{\,}}
2\gamma ^*_2(\hat s)
\\
 1-\gamma ^*_1(\hat s)
\end{array}\right) - k P(\gamma ^*(\hat s))^{-1}
\gamma ^*(\hat s)  (|\gamma ^*(\hat s)|^2-|\gamma ^*(s)|^2)
\right]
$\hfill\null\\\null\hfill$\displaystyle
\;-\; 
\frac{d\gamma ^*}{ds}( s)^\top P(\gamma ^*( s))\left(\begin{array}{@{\,}c@{\,}}
2\gamma ^*_2( s)
\\
 1-\gamma ^*_1( s)
\end{array}\right)
\; <\; 0
$\\[0.6em]
}
The correction term contributes strictly to this decrease if Condition A3 holds, i.e.
\begin{equation}
\label{NLLP10}
|\gamma ^*(\hat s)|\neq |\gamma ^*(s)| \,  \Rightarrow\,  
\frac{d\gamma ^*}{ds}(\hat s)^\top \!\!
\gamma ^*(\hat s)   (|\gamma ^*(\hat s)|^2-|\gamma ^*(s)|^2)
 > 0
\,  .
\end{equation}

Unfortunately, there is no complete metric $\bfP$ such that the observer (\ref{NLLP12}) satisfies (\ref{NLLP6}) and 
(\ref{NLLP10}) together.
Indeed we know with  \cite[A.3.1.a]{57} that (\ref{NLLP10}) implies that, for any $x$ different from the 
origin and any unit vector $v$ tangent at $x$ to the circle with radius $|x|$ and centered at the origin,
the geodesic $\gamma ^*$ satisfying
$$
\gamma ^*(0)\;=\; x\  ,\quad 
\frac{d \gamma ^*}{ds}(0)=v
\  ,
$$
remains in that circle. 
Actually, there are two normalized geodesics issued from $x$, say $\gamma ^*_+$ and $\gamma ^*_-$, satisfying
$$
\frac{d\gamma ^*_+}{ds}(0)\;=\; +v
\quad ,\qquad 
\frac{d\gamma ^*_-}{ds}(0)\;=\; -v
$$
which remain in the circle.
The metric being complete 
by assumption, the orbits of these geodesics are the complete circle and there exist $s_+$ and $s_-$ such 
that\footnote{$\gamma ^*_+(s_+)= \gamma ^*_-(s_-)$ is a cut point of $x$. See \cite[Cut Points ch. 
10]{Lee-RM}.}
\begin{equation}
\label{NLLP13}
\gamma ^*_+(s_+)\;=\; \gamma ^*_-(s_-)
\quad ,\qquad 
\frac{d\gamma ^*_+}{ds}(s_+)=-\frac{d\gamma ^*_-}{ds}(s_-)
\end{equation}
and $\gamma ^*_+$, respectively $\gamma ^*_-$, is minimizing on $[0,s_+]$, respectively $[0,s_-]$.
But if (\ref{NLLP6}) holds, we obtain
\IfReport{%
% report
$$ 
\frac{d\gamma ^*_+}{ds}(s_+)^\top P(\gamma ^*_+(s_+))
\left(\begin{array}{c}
2\gamma ^*_{+2}(s_+)
\\
 1-\gamma ^*_{+1}(s_+)
\end{array}\right)
\;<\; 
v^\top P(x)\left(\begin{array}{c}
2 x_2
\\
 1- x_1
\end{array}\right)
$$
}{%
% not report
\\[0.7em]$\displaystyle 
\frac{d\gamma ^*_+}{ds}(s_+)^\top P(\gamma ^*_+(s_+))
\left(\begin{array}{c}
2\gamma ^*_{+2}(s_+)
\\
 1-\gamma ^*_{+1}(s_+)
\end{array}\right)
$\hfill\null\\\null\hfill$\displaystyle
\;<\; v^\top P(x)\left(\begin{array}{c}
2 x_2
\\
 1- x_1
\end{array}\right)
$\\
}
and
\IfReport{%
% report
$$
\frac{d\gamma ^*_-}{ds}(s_-)^\top P(\gamma ^*_-(s_-))
\left(\begin{array}{c}
2\gamma ^*_{-2}(s_-)
\\
 1-\gamma ^*_{-1}(s_-)
\end{array}\right)
\;<\;  -v^\top P(x)\left(\begin{array}{c}
2 x_2
\\
 1- x_1
\end{array}\right)
$$
}{%
% not report
\\[0.7em]$\displaystyle 
\frac{d\gamma ^*_-}{ds}(s_-)^\top P(\gamma ^*_-(s_-))
\left(\begin{array}{c}
2\gamma ^*_{-2}(s_-)
\\
 1-\gamma ^*_{-1}(s_-)
\end{array}\right)
$\hfill\null\\\null\hfill$\displaystyle
\;<\;  -v^\top P(x)\left(\begin{array}{c}
2 x_2
\\
 1- x_1
\end{array}\right)
$\\[0.7em]
}
With (\ref{NLLP13}), these inequalities cannot hold simultaneously.  On the other hand,
it is possible to satisfy either Condition A2 or Condition A3 by properly choosing
the metric.
\egroup\end{example}
\stopmodif

\subsection{Satisfying Condition A2 first} 
We know with \cite[Proposition 2.4]{127} that a Riemannian metric satisfying Condition A2 gives a locally 
convergent observer. This motivates starting with Condition A2.

In \cite[(47) and Propositions 3.2 and 3.5]{127}
we have given procedures for obtaining metrics satisfying Condition A2.
Then, with a metric constructed via such procedures, it remains to check if Condition A3 holds. 
Such a check consists of testing whether or not the second fundamental form of 
$\bfh$ is zero. For this test to be positive, we must have (\ref{LP162}) which is satisfied if
(\ref{LP160}) holds. We know the latter condition can always be satisfied by modifying the given metric $\bfP$ 
into $\bfP_{mod}$ as given in (\ref{LP163}). 
Fortunately, the satisfaction of Condition A2 is not affected by this modification, 
as the following result shows.

\begin{proposition}
\label{prop: prop3}
Condition A2
holds for $\bfP$ if and only if Condition A2 holds for $\bfP_{mod}$.
\end{proposition}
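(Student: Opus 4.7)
The plan is to exploit the algebraic form of the perturbation. Writing $A = \tfrac{\partial h}{\partial x}$, the definition \eqref{LP163} reads $P_{mod} - P = A^\top M A$ with $M(x) = \Py(h(x)) - (A(x)\, P(x)^{-1} A(x)^\top)^{-1}$. Using the Leibniz rule for the Lie derivative together with the identity $\mathcal{L}_\bff(dh_\indyi) = d(\mathcal{L}_\bff h_\indyi)$, a direct computation gives
\[
\mathcal{L}_\bff(A^\top M A) \;=\; (dg)^\top M\, A \,+\, A^\top M\, dg \,+\, A^\top (\mathcal{L}_\bff M)\, A,
\]
where $g = \mathcal{L}_\bff h$. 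The decisive structural feature is that each of the three summands carries $A$ as either a left or a right factor; consequently, for every $v, w \in \bfDistrib^\tangent(x) = \ker A(x)$,
\[
v^\top (A^\top M A)\, w \;=\; 0, \qquad v^\top \mathcal{L}_\bff(A^\top M A)\, w \;=\; 0.
\]

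A first consequence is that for every $v \in \bfDistrib^\tangent(x)$,
\[
v^\top \mathcal{L}_\bff P_{mod}\, v + q\, v^\top P_{mod}\, v \;=\; v^\top \mathcal{L}_\bff P\, v + q\, v^\top P\, v.
\]
Since the right-hand side $\rho(x)\, \bfdh \otimes \bfdh$ of Condition~A2 annihilates tangent vectors, the restriction of Condition~A2 to $\bfDistrib^\tangent$ is unaffected by replacing $\bfP$ by $\bfP_{mod}$ (with the same constant $q$). This captures the nontrivial content of A2, which really lives on the tangent distribution.

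To cover all directions and exhibit an explicit continuous $\rho_{mod}$, fix any $q_{mod} \in (0, q)$ and set
\[
S_{mod} \;=\; \mathcal{L}_\bff(A^\top M A) + q_{mod}\, A^\top M A - (q - q_{mod})\, P,
\]
so that $\mathcal{L}_\bff P_{mod} + q_{mod} P_{mod} = [\mathcal{L}_\bff P + q P] + S_{mod}$. By the previous step, $v^\top S_{mod}\, v = -(q - q_{mod})\, v^\top P\, v$ for $v \in \bfDistrib^\tangent$, so $S_{mod}$ is strictly negative definite there. Decomposing an arbitrary $v$ as $v = v^\ortho + v^\tangent$ in the $\bfP$-orthogonal splitting and applying a Young-type inequality to the cross term $2\, v^{\ortho\top} S_{mod}\, v^\tangent$, the $v^\tangent$-side is absorbed by the reserve $-(q - q_{mod})\, v^{\tangent\top} P\, v^\tangent$, while the $v^\ortho$-side is dominated by a continuous multiple of $v^\top A^\top A\, v$ since $A$ restricted to $\bfDistrib^\ortho$ is an isomorphism onto $\bfRR^p$. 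This delivers a continuous $\tilde\rho$ with $S_{mod} \le \tilde\rho\, A^\top A$, whence Condition~A2 holds for $\bfP_{mod}$ with constants $q_{mod}$ and $\rho_{mod} = \rho + \tilde\rho$. The converse is symmetric: $\bfP = \bfP_{mod} + A^\top (-M)\, A$ has exactly the same structural form, so the identical argument applied with $-M$ in place of $M$ recovers Condition~A2 for $\bfP$ from Condition~A2 for $\bfP_{mod}$.

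The main obstacle is the off-tangent bound just sketched: because $dg$ generically has components outside the row span of $A$, the cross term $v^{\ortho\top} S_{mod}\, v^\tangent$ is not pointwise dominated by $v^\top A^\top A\, v$; it is precisely the strict negative definiteness of $S_{mod}$ on $\bfDistrib^\tangent$, purchased by lowering the rate from $q$ to $q_{mod}$, that supplies the $\|v^\tangent\|^2$ budget needed to close the Young estimate.
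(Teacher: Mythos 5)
Your proof is correct and follows essentially the same route as the paper: the one-line proof in the paper rests on the identity (\ref{trav24}), which says precisely that the scalar $v^{\tangent\top}\,\frac{\partial h}{\partial x}^\top M\, \frac{\partial h}{\partial x}\, v^\tangent$ and its spatial gradient vanish on $\bfDistrib^\tangent$, so $\mathcal{L}_\bff(P_{mod}-P)$ and $P_{mod}-P$ are annihilated on tangent pairs and the Finsler/S-Lemma completion you spell out is left implicit. The Leibniz-rule bookkeeping with $dg$ and the explicit Young estimate with $q_{mod} < q$ are a more detailed rendering of the same observation, not a different argument.
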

\begin{proof}
The claim 
is a direct consequence of the identity
\IfTwoCol{%
% two col
\\[1em]$\displaystyle 
\frac{\partial }{\partial x}\!\left\{\!
\vrule height 1.2em depth 1.2em width 0pt
{\left[\frac{\partial h}{\partial x}(x) v^\tangent\right]\!}^\top
%v^{\tangent\top} \frac{\partial h}{\partial x}(x)^\top 
\! \left[\Py (h(x)) -
{\left(\frac{\partial h}{\partial x}(x) P(x)^{-1}\frac{\partial h}{\partial x}(x)^\top
\!\right)\!}^{-1}
\right]\right.$\\[1em]$\displaystyle 
\qquad \hfill \null \left.
\vrule height 1.2em depth 1.2em width 0pt
\times
\left[\frac{\partial h}{\partial x}(x) v^\tangent\right]\!\right\} =  0 \displaystyle 
\quad\forall v^\tangent \in \Distrib ^\tangent (x)
\  ,\ \forall x\in \coordxm(\coordxd)
$\refstepcounter{equation}\label{trav24}\hfill$(\theequation)$\\[1em]}{%
% one col
\\[1em]$\displaystyle 
\frac{\partial }{\partial x}\left\{v^{\tangent\top} \frac{\partial h}{\partial x}(x)^\top 
\left[\Py (h(x)) -
\left(\frac{\partial h}{\partial x}(x) P(x)^{-1}\frac{\partial h}{\partial x}(x)^\top
\right)^{-1}
\right]
\frac{\partial h}{\partial x}(x) v^\tangent\right\}\;=\; 0
$\refstepcounter{equation}\label{trav24}\hfill$(\theequation)$\\\null \hfill $\displaystyle 
\forall v^\tangent \in \Distrib ^\tangent (x)
\  ,\quad \forall x\in \coordxm(\coordxd)
\  ,
$\\[1em]}
being valid for any coordinate chart $\coordx$.
\end{proof}

% % ------------
% \startarchive
% % ------------
\startmodif
\begin{example}\textit{(Systems that are strongly differentially observable of order $n$)}\bgroup\normalfont
% \label{sec8}
\label{ex4}
In \cite[\S IV]{127}, we have seen that, when $p=1$ and
$$
\bfimmer_{n}(\bfx)\;=\; \left(\begin{array}{c}
\bfh(\bfx) \\ 
L_\bff \bfh(\bfx) \\ \vdots \\ 
L_\bff^{n-1}\bfh(\bfx)
\end{array}\right)
$$%
is a diffeomorphism from some open set $\Ouv$ to $\RR^n$, the expression in some coordinate chart 
$\coordx$ of a
metric satisfying Condition A2 on $\Ouv$ is
$$
P(x) = \frac{\partial \immer_{n}}{\partial x}(x)^\top \bar P \frac{\partial\immer_{n}}{\partial x}(x) \  ,
$$
where $\bar P$ is a symmetric positive definite matrix to be chosen  (see \cite[Lemma 
4.2]{127}).
Actually,
\IfReport{%
% report
$$
\bar x\;=\; \immer_{n}(x)
$$
}{%
% not report
$\bar x= \immer_{n}(x)$
}
are other coordinates for $\bfx$ in which
the expression of the metric $\bfP$
is simply the constant matrix $\bar P$. 
Moreover, the expression of $\bfh$ in the same particular coordinates 
is linear, i.e.
$$
y\;=\; C \bar x
\  ,
$$
with the notation
$$
C\;=\; \left(\begin{array}{ccccc}
1&0&\ldots %& \ldots& 
& 0
\end{array}\right)
\  .
$$
Hence, the system belongs to the ``Euclidean family'' and the observer (\ref{eqn:GeodesicObserverVectorField})
takes the form (see \cite{Deza.ea.92.SCL})
$$
\dot{\hat x}\;=\; f(\hat x)-k_E(\hat x)
\frac{\partial \immer_{n}}{\partial x}(\hat x)^{-1} \bar P^{-1} C^T(C \hat x-y)
\  .
$$
\egroup\end{example}
\stopmodif
% % ------------
% \stoparchive
% % ------------

\subsection{Satisfying Condition A3 First}
\label{sec21}
For Condition A2 to hold, the Riemannian metric $\bfP$ must satisfy the inequality (\ref{3}). Instead,
for A3 to hold, according to Lemma \ref{lem12}, we must have at least the equalities (\ref{LP151}). It may be 
easier to satisfy first the equalities and then the inequalities. Namely,
instead of starting with a metric that satisfies Condition A2, we start with a metric given
by Theorem~\ref{prop17}, which is guaranteed to satisfy Condition A3.  
Then, it remains to define the degrees of freedom $\bfh^\ortho$ and $\bfPxi$ involved in its construction so as 
to satisfy Condition A2.

In this context, the fact that the formula (\ref{LP166}) for $P$ is a sum implies that Condition
A2 takes a particular form. Indeed,  for any $\bfx_0$ in $\bfRR^n$, for any coordinate charts
$(x,\RR^n,\phi)$ around $\bfx_0$ and $(y,\RR^p,\chi)$ around $\bfh(\bfx_0)$ in $\bfRR^p$, for any
$\coordxi $ around $\bfh^\ortho(\bfx_0)$ in $\bfXi$,
for all vectors $v$ satisfying
\begin{equation}
\label{LP236}
\sum_{\indxa}\frac{\partial h_\indyi}{\partial x_\indxa}(x) v_\indxa \;=\; 0
\  ,
\end{equation}
and with the definitions
\IfTwoCol{%
% two col
\begin{equation}
\label{LP220}
g_{ \indxrc}(x)\!=\! \sum_{\indxc}\frac{\partial h^\ortho_\indxrc}{\partial x_\indxc}(x)f_\indxc(x)
\ ,\ 
w_{\indxra}(x)\!=\! \sum_{\indxa}\frac{\partial h^\ortho_\indxra}{\partial x_\indxa}(x) v_\indxa 
\  ,
\end{equation}
}{%
% one col
\begin{equation}
\label{LP220}
g_{ \indxrc}(x)\;=\; \sum_{\indxc}\frac{\partial h^\ortho_\indxrc}{\partial x_\indxc}(x)f_\indxc(x)
\quad ,\qquad 
w_{\indxra}(x)\;=\; \sum_{\indxa}\frac{\partial h^\ortho_\indxra}{\partial x_\indxa}(x) v_\indxa 
\  ,
\end{equation}
}%
the expression in (\ref{LP166}) of the metric $\bfP$ gives
\IfReport{%
% report
\begin{eqnarray*}
v^\top \mathcal{L}_f P(x) v
&\hskip -0.5em =&\hskip -0.5em \displaystyle 
\frac{\partial }{\partial x}\left\{v^\top P(x) v\right\} f(x)
+
2 v^\top P(x) \frac{\partial f}{\partial x}(x) v
\  ,
\\%%%%%%%
&\hskip -0.5em=&\hskip -0.5em\displaystyle 
\sum_{\indxa,\indxb,\indxc,\indxra,\indxrb}\frac{\partial }{\partial x_\indxc}\left\{
v_\indxa\frac{\partial h^\ortho_\indxra}{\partial x_\indxa}(x)
\Pxi_{\indxra\indxrb}(h^\ortho(x))
\frac{\partial h^\ortho_\indxrb}{\partial x_\indxb}(x)v_\indxb
\right\} f_\indxc(x)
\\&\hskip -0.5em&\hskip -0.5em\displaystyle \qquad \qquad \qquad \qquad \qquad \qquad 
+\sum_{\indxa,\indxb,\indxc,\indxra,\indxrb}
2 
v_\indxa\frac{\partial h^\ortho_\indxra}{\partial x_\indxa}(x)
\Pxi_{\indxra\indxrb}(h^\ortho(x))
\frac{\partial h^\ortho_\indxrb}{\partial x_\indxc}(x)
\frac{\partial f_\indxc}{\partial x_\indxb}(x)v_\indxb
\  ,
\\%%%%%%%
&\hskip -0.5em=&\hskip -0.5em\displaystyle 
\sum_{\indxa,\indxb,\indxc,\indxra,\indxrb}
v_\indxa \frac{\partial ^2h^\ortho_\indxra}{\partial x_\indxa\partial x_\indxc}(x)f_\indxc(x)
\Pxi_{\indxra\indxrb}(h^\ortho(x))
\frac{\partial h^\ortho_\indxrb}{\partial x_\indxb}(x)
v_\indxb
\\&\hskip -0.5em&\hskip -0.5em\displaystyle \qquad \qquad 
+\sum_{\indxa,\indxb,\indxc,\indxra,\indxrb,\indxrc}v_\indxa \frac{\partial h^\ortho_\indxra}{\partial x_\indxa}(x)
\frac{\partial \Pxi_{\indxra\indxrb}}{\partial \coordxip_\indxrc}(h^\ortho(x))
\frac{\partial h^\ortho_\indxrc}{\partial x_\indxc}(x)f_\indxc(x)
\frac{\partial h^\ortho_\indxrb}{\partial x_\indxb}(x)
v_\indxb
\\&\hskip -0.5em&\hskip -0.5em\displaystyle \qquad \qquad \qquad \qquad 
+ \sum_{\indxa,\indxb,\indxc,\indxra,\indxrb}
v_\indxa \frac{\partial h^\ortho_\indxra}{\partial x_\indxa}(x)
\Pxi_{\indxra\indxrb}(h^\ortho(x))
\frac{\partial ^2h^\ortho_\indxrb}{\partial x_\indxb\partial x_\indxc}(x)
 f_\indxc(x)
v_\indxb
\\&\hskip -0.5em&\hskip -0.5em\displaystyle \qquad \qquad \qquad \qquad\qquad \qquad 
+
\sum_{\indxa,\indxb,\indxc,\indxra,\indxrb}
2 
v_\indxa\frac{\partial h^\ortho_\indxra}{\partial x_\indxa}(x)
\Pxi_{\indxra\indxrb}(h^\ortho(x))
\frac{\partial h^\ortho_\indxrb}{\partial x_\indxc}(x)
\frac{\partial f_\indxc}{\partial x_\indxb}(x)v_\indxb
\  ,
\\%%%%%%%
&\hskip -0.5em=&\hskip -0.5em
\displaystyle \sum_{\indxra,\indxrb,\indxrc}
w_{ \indxra}
\frac{\partial \Pxi_{\indxra\indxrb}}{\partial \coordxip_\indxrc}(h^\ortho(x))
\,  g_{ \indxrc}(x)
w_{ \indxrb}
+2\sum_{\indxa,\indxrb,\indxrc}
v_\indxa \frac{\partial g_{\indxrc}}{\partial x_\indxa}(x)\,  
\Pxi_{\indxrc\indxrb}(h^\ortho(x))
w_{\indxrb}
\  .
\end{eqnarray*}%
This expression is to be compared with $w^\top \mathcal{L}_{g} R(h^\ortho) \,  w$ considering
$\sum_{\indxa} v_\indxa \frac{\partial g_{\indxrc}}{\partial x_\indxa}$ is formally equal to
$\sum_\indxra w_\indxra \frac{\partial g_{\indxrc}}{\partial h^\ortho_\indxra}$, because of (\ref{LP220}).
Using the expression for $v^\top \mathcal{L}_f P(x) v$ above
and by invoking the S-Lemma (see \cite{Polik-Terlaky.07}),
}{%
% not report
% $v^\top \mathcal{L}_f P(x) v$ is equal to
\\[0.3em]$\displaystyle 
v^\top \mathcal{L}_f P(x) v\;=\;
% \sum_{\indxa,\indxb,\indxc,\indxra,\indxrb}\frac{\partial }{\partial x_\indxc}\left\{
% v_\indxa\frac{\partial h^\ortho_\indxra}{\partial x_\indxa}(x)
% \Pxi_{\indxra\indxrb}(h^\ortho(x))
% \frac{\partial h^\ortho_\indxrb}{\partial x_\indxb}(x)v_\indxb
% \right\} f_\indxc(x)
% $\\\null\hfill$\displaystyle
% +\sum_{\indxa,\indxb,\indxc,\indxra,\indxrb}
% 2 
% v_\indxa\frac{\partial h^\ortho_\indxra}{\partial x_\indxa}(x)
% \Pxi_{\indxra\indxrb}(h^\ortho(x))
% \frac{\partial h^\ortho_\indxrb}{\partial x_\indxc}(x)
% \frac{\partial f_\indxc}{\partial x_\indxb}(x)v_\indxb\qquad$
% \\[0.7em]$\displaystyle
% =
\sum_{\indxra,\indxrb,\indxrc}
w_{ \indxra}
\frac{\partial \Pxi_{\indxra\indxrb}}{\partial \coordxip_\indxrc}(h^\ortho(x))
\,  g_{ \indxrc}(x) w_{ \indxrb}
$\\[-0.5em]\null\hfill
$\displaystyle
+2\sum_{\indxa,\indxrb,\indxrc}
v_\indxa \frac{\partial g_{\indxrc}}{\partial x_\indxa}(x)\,  
\Pxi_{\indxrc\indxrb}(h^\ortho(x))
w_{\indxrb}
\  .
$\\[0.7em]
By invoking the S-Lemma (see \cite{Polik-Terlaky.07}),
}%
we obtain that Condition A2 is satisfied if, when (\ref{LP236}) holds,
we have
\IfTwoCol{%
% two col
\\[1em]$\displaystyle 
\displaystyle \sum_{\indxra,\indxrb}
\left(
w_{ \indxra}(x)
\left[\sum_\indxrc\frac{\partial \Pxi_{\indxra\indxrb}}{\partial \coordxip_\indxrc}(h^\ortho(x))g_{ 
\indxrc}(x)\right]
w_{ \indxrb}(x) \right.$
\\\null \hfill$\displaystyle \left.
+2\left[\sum_\indxa
v_\indxa \frac{\partial g_{\indxra}}{\partial x_\indxa}(x)\right]
\Pxi_{\indxra\indxrb}(h^\ortho(x))
w_{\indxrb}(x)\right)
$\refstepcounter{equation}\label{LP218}\hfill$(\theequation)$
\\\null \hfill $\displaystyle \; \leq \; - q \sum_{\indxra,\indxrb}
w_{\indxra} (x)\Pxi_{\indxra\indxrb}(h^\ortho(x)) w_{\indxrb}(x)
$\\
}{%
% one col
\\[1em]$\displaystyle 
\displaystyle \sum_{\indxra,\indxrb}
\left(
w_{ \indxra}(x)
\left[\sum_\indxrc\frac{\partial \Pxi_{\indxra\indxrb}}{\partial \coordxip_\indxrc}(h^\ortho(x))g_{ 
\indxrc}(x)\right]
w_{ \indxrb}(x)
+2\left[\sum_\indxa
v_\indxa \frac{\partial g_{\indxra}}{\partial x_\indxa}(x)\right]
\Pxi_{\indxra\indxrb}(h^\ortho(x))
w_{\indxrb}(x)\right)
$\refstepcounter{equation}\label{LP218}\hfill$(\theequation)$
\\\null \hfill $\displaystyle \; \leq \; - q \sum_{\indxra,\indxrb}
w_{\indxra} (x)\Pxi_{\indxra\indxrb}(h^\ortho(x)) w_{\indxrb}(x)
$\\
}%
for some strictly positive $q$.

With the above,
 we have reduced the design of the observer (\ref{eqn:GeodesicObserverVectorField}) to the 
problem of finding functions $h^\ortho$, of rank $n-p$ and such that $(h,h^\ortho)$ has rank $n$, and $\Pxi$ with positive definite values, satisfying
the inequality above.

% ------------
\startarchive
% ------------
\begin{example}[A Lagrangian system with one degree of freedom]\bgroup\normalfont
We consider the Lagrangian system of \cite[Example 5.1]{127}. This system is given by
$$%
\dot y=\xrond \  ,\quad 
\dot \xrond  = \xrond ^2
\  .
$$
Following the constructions in Example \ref{ex:Lagrangian}, the corresponding functions are
$$%
g(y)\;=\; \exp(-2y)\  ,\quad \mathfrak{C} = -1
\  ,\quad 
P(y,\xrond )\;=\; \exp(-2y)
\left(\begin{array}{cc}
a+2c  \xrond  + b \xrond ^2
&
-c-b\xrond 
\\
-c-b\xrond 
&
b
\end{array}\right)
\  ,
$$
and the source term $S$ is zero.
Since $p=1$, the integrability condition is not needed and we know from Example \ref{ex:Lagrangian} that Conditions A2 and A3 hold.
However, the metrics $g$ and $P$, 
respectively on $\RR$ and $\RR^2$, are not complete. Ignoring this fact, we have that a normalized
geodesic for $g$ starting from $y_1$ is
$$
\gammay ^*(s)\;=\; y_1-\log(1-s\exp(y_1))
$$
Then, the function $\wp $, defined as the square of the distance, is
$$
\wp (y_1,y_2) = s_2^2
$$
where $s_2$ is solution to
$$
y_1-\log(1-s_2\exp(y_1))=y_2
$$
i.e.
$$
s_2=|\exp(-y_1)-\exp(-y_2)|
\  .
$$
This gives
$$
\wp (y_1,y_2) = [\exp(-y_1)-\exp(-y_2)]^2
\  .
$$
Then, the observer (\ref{eqn:GeodesicObserverVectorField}) is
\begin{eqnarray*}
\dot {\hat y}&=&\hat \xrond +
2 k_E(\hat y,\hat \xrond )\,  \frac{b\exp(\hat y)}{ab-c^2}
[\exp(-\hat y)-\exp(-y)]
\\
\dot{\hat \xrond}&=&\hat \xrond^2+
2k_E(\hat y,\hat \xrond) \,  \frac{(c+b\hat \xrond)\exp(\hat y)}{ab-c^2}
[\exp(-\hat y)-\exp(-y)]
\end{eqnarray*}
\egroup
\end{example}
% ------------
\stoparchive %
% ------------

\begin{example}[Systems of dimension two]\bgroup\normalfont
We consider a general system written as
\begin{equation}
\label{LP82}
\dot y\;=\; f_y(y,\xrond )\quad ,\qquad \dot \xrond \;=\; f_\xrond (y,\xrond )
\end{equation}
with $n=2$ and $p=1$.
\startmodif
It follows from Theorem~\ref{prop17}, that Condition A3 is satisfied if $P$ is in the form (see \eqref{LP166})
\IfTwoCol{%
\\[0.7em]$\displaystyle 
\renewcommand{\arraystretch}{1.5}
P(y,\xrond )= \left(\begin{array}{@{}cc@{}}
1 + \frac{\partial h^\ortho}{\partial y}(y,\xrond)^2
&
\frac{\partial h^\ortho}{\partial y}(y,\xrond)
\frac{\partial h^\ortho}{\partial \xrond}(y,\xrond)
\\
\frac{\partial h^\ortho}{\partial y}(y,\xrond)
\frac{\partial h^\ortho}{\partial \xrond}(y,\xrond)
&
\frac{\partial h^\ortho}{\partial \xrond}(y,\xrond)^2
\end{array}\right)
\  ,
$\\[1em]
}{%
% one col
$$ 
\renewcommand{\arraystretch}{1.5}
P(y,\xrond )= \left(\begin{array}{@{}c@{}}
1 \\ 0
\end{array}\right)
\left(\begin{array}{@{}cc@{}}
1 & 0
\end{array}\right)
+
\left(
\begin{array}{@{}c@{}} 
\displaystyle 
\frac{\partial h^\ortho}{\partial y}(y,\xrond) 
\\\displaystyle 
\frac{\partial h^\ortho}{\partial \xrond}(y,\xrond)
\end{array}
\right)
\left(
\begin{array}{@{}cc@{}} 
\displaystyle 
\frac{\partial h^\ortho}{\partial y}(y,\xrond) 
&\displaystyle 
\frac{\partial h^\ortho}{\partial \xrond}(y,\xrond)
\end{array}
\right)
\  ,
$$
}%
where $h^\ortho$ is any $C^3$ function with $\frac{\partial h^\ortho}{\partial z}(y,\xrond)$ strictly 
positive for all 
$(y,\xrond)$.
In this case, we choose
$$
\wp (y_1,y_2)\;=\; |y_1-y_2|^2
$$
and Condition A2 holds if we have
\IfTwoCol{%
% two col
$$
\frac{\partial }{\partial \xrond}\!
\left\{\!
\frac{\partial h^\ortho}{\partial y}(y,\xrond) f_y(y,\xrond)
+\frac{\partial h^\ortho}{\partial \xrond}(y,\xrond) f_{\xrond}(y,\xrond)\!\right\}
\leq  \! - \frac{\partial h^\ortho}{\partial \xrond}(y,\xrond)^2
$$%
}{%
% one col
$$
2\frac{\partial }{\partial \xrond}
\left\{
\frac{\partial h^\ortho}{\partial y}(y,\xrond) f_y(y,\xrond)
+\frac{\partial h^\ortho}{\partial \xrond}(y,\xrond) f_{\xrond}(y,\xrond)\right\}
\; \leq \; - \frac{\partial h^\ortho}{\partial \xrond}(y,\xrond)
\  .
$$
}%
In this case, the observer (\ref{eqn:GeodesicObserverVectorField}) is
\begin{eqnarray*}
\dot {\hat y}&=&f_y(\hat y,\hat \xrond ) - k_E(\hat y,\hat \xrond )
(\hat y-y)
\  ,
\\
\dot{\hat \xrond}&=&f_\xrond(\hat y,\hat \xrond ) +k_E(\hat y,\hat \xrond )
\frac{
\frac{\partial h^\ortho}{\partial y}(y,\xrond) 
}{
\frac{\partial h^\ortho}{\partial \xrond}(y,\xrond)
}
(\hat y-y)
\  .
\end{eqnarray*}
\stopmodif
\egroup\end{example}

\begin{example}
\bgroup\normalfont
\label{ex8}
We consider again the harmonic oscillator with unknown frequency in (\ref{LP132}),
the state of which evolves in the invariant set $\Ouv_\varepsilon $ defined in (\ref{LP205}).
We have seen in Example~\ref{ex9} that the metrics considered thus far
satisfying Condition A2 do not satisfy Condition A3. 
Following the observations at the beginning of this section,
we proceed by constructing $P$ so that Condition A3 holds, and then assess the satisfaction of
Condition A2.

Following Theorem~\ref{prop17}, 
the level sets of the output function being diffeomorphic to $\bfRR^2$, we choose $\bfXi$ as $\bfRR^2$.
Then, following Theorem~\ref{prop17}, a metric satisfying Condition A3 is%
\IfTwoCol{%
% two col
, with the notation
$\frac{\partial }{\partial x}=\left(\begin{array}{@{}ccc@{}}
\frac{\partial }{\partial y}
&
\frac{\partial }{\partial \xrond_\indxra}
&
\frac{\partial }{\partial \xrond_\indxrb}
\end{array}\right)$,
\\[0.3em]$\displaystyle 
P(y,\xrond_\indxra,\xrond_\indxrb)\;=\;
\changey^\prime (y)^2 
\left(\begin{array}{@{\,  }c@{\,  }}
1 \\ 0 \\0
\end{array}\right)\Py (\changey (y) ) 
\left(\begin{array}{@{}ccc@{}}
1 & 0 & 0
\end{array}\right)
$\refstepcounter{equation}\label{LP187}\hfill$(\theequation)$\\[-0.3em]\null \hfill   $\displaystyle 
\!\!\!\!+ \frac{\partial
\left(\begin{array}{@{\,  }c@{\,  }}
h^\ortho_\indxra \\ h^\ortho_\indxrb
\end{array}\right)
}{\partial x}(y,\xrond_\indxra,\xrond_\indxrb)^\top
\Pxi(h^\ortho(y,\xrond_\indxra,\xrond_\indxrb))\frac{\partial \left(\begin{array}{@{\,  }c@{\,  }}
h^\ortho_\indxra \\ h^\ortho_\indxrb
\end{array}\right)}{\partial x}(y,\xrond_\indxra,\xrond_\indxrb)
$\\[0.7em]
where
% \\[0.7em]$\displaystyle 
%  \frac{\partial h^\ortho}{\partial (y,\xrond_\indxra,\xrond_\indxrb)}(y,\xrond_\indxra,\xrond_\indxrb)
% $\hfill\null\\\null\hfill$\displaystyle
% \;=\; 
% \left(\renewcommand{\arraystretch}{2.2}
% \begin{array}{@{\,  }ccc@{\,  }}
% \displaystyle 
% \frac{\partial h^\ortho_\indxra}{\partial y}(y,\xrond_\indxra,\xrond_\indxrb)
% & \displaystyle
% \frac{\partial h^\ortho_\indxra}{\partial \xrond_\indxra}(y,\xrond_\indxra,\xrond_\indxrb)
% & \displaystyle
% \frac{\partial h^\ortho_\indxra}{\partial \xrond_\indxrb}(y,\xrond_\indxra,\xrond_\indxrb)
% \\
% \displaystyle 
% \frac{\partial h^\ortho_\indxrb}{\partial y}(y,\xrond_\indxra,\xrond_\indxrb)
% & \displaystyle
% \frac{\partial h^\ortho_\indxrb}{\partial \xrond_\indxra}(y,\xrond_\indxra,\xrond_\indxrb)
% & \displaystyle
% \frac{\partial h^\ortho_\indxrb}{\partial \xrond_\indxrb}(y,\xrond_\indxra,\xrond_\indxrb)
% \end{array}
% \right)
% $\\[0.7em]
% where
}{%
% one col
\\[1em]$\displaystyle 
P(y,\xrond_\indxra,\xrond_\indxrb)\;=\;
\changey^\prime (y)^2 
\left(\begin{array}{@{\,  }c@{\,  }}
1 \\ 0 \\0
\end{array}\right)\Py (\changey (y) ) 
\left(\begin{array}{@{}ccc@{}}
1 & 0 & 0
\end{array}\right)+
$\refstepcounter{equation}\label{LP187}\hfill$(\theequation)$\\[0.3em]\null   $
+
\left(\renewcommand{\arraystretch}{2.2}
\begin{array}{@{\,  }cc@{\,  }}
\displaystyle
\frac{\partial h^\ortho_\indxra}{\partial y}(y,\xrond_\indxra,\xrond_\indxrb)
& \displaystyle
\frac{\partial h^\ortho_\indxrb}{\partial y}(y,\xrond_\indxra,\xrond_\indxrb)
\\
\displaystyle
\frac{\partial h^\ortho_\indxra}{\partial \xrond_\indxra}(y,\xrond_\indxra,\xrond_\indxrb)
& \displaystyle
\frac{\partial h^\ortho_\indxrb}{\partial \xrond_\indxra}(y,\xrond_\indxra,\xrond_\indxrb)
\\
\displaystyle
\frac{\partial h^\ortho_\indxra}{\partial \xrond_\indxrb}(y,\xrond_\indxra,\xrond_\indxrb)
& \displaystyle
\frac{\partial h^\ortho_\indxrb}{\partial \xrond_\indxrb}(y,\xrond_\indxra,\xrond_\indxrb)
\end{array}
\right)
\times
$\hfill \null \\[0.7em]\null \hfill  $\displaystyle 
\times
\left(\renewcommand{\arraystretch}{1.3}
\begin{array}{@{\,  }cc@{\,  }}
\displaystyle \Pxi_{\indxra\indxra}( h^\ortho(y,\xrond_\indxra,\xrond_\indxrb))
& \displaystyle
\Pxi_{\indxra\indxrb}( h^\ortho(y,\xrond_\indxra,\xrond_\indxrb))
\\
\displaystyle
\Pxi_{\indxra\indxrb}( h^\ortho(y,\xrond_\indxra,\xrond_\indxrb))
& \displaystyle
\Pxi_{\indxrb\indxrb}( h^\ortho(y,\xrond_\indxra,\xrond_\indxrb))
\end{array}\right)
\times
$\hfill \null \\[0.7em]\null \hfill $\displaystyle 
\times
\left(\renewcommand{\arraystretch}{2.2}
\begin{array}{@{\,  }ccc@{\,  }}
\displaystyle 
\frac{\partial h^\ortho_\indxra}{\partial y}(y,\xrond_\indxra,\xrond_\indxrb)
& \displaystyle
\frac{\partial h^\ortho_\indxra}{\partial \xrond_\indxra}(y,\xrond_\indxra,\xrond_\indxrb)
& \displaystyle
\frac{\partial h^\ortho_\indxra}{\partial \xrond_\indxrb}(y,\xrond_\indxra,\xrond_\indxrb)
\\
\displaystyle 
\frac{\partial h^\ortho_\indxrb}{\partial y}(y,\xrond_\indxra,\xrond_\indxrb)
& \displaystyle
\frac{\partial h^\ortho_\indxrb}{\partial \xrond_\indxra}(y,\xrond_\indxra,\xrond_\indxrb)
& \displaystyle
\frac{\partial h^\ortho_\indxrb}{\partial \xrond_\indxrb}(y,\xrond_\indxra,\xrond_\indxrb)
\end{array}
\right)
$\\[1em]
where
}%
it remains to choose
\begin{list}{}{%
\parskip 0pt plus 0pt minus 0pt%
\topsep 0pt plus 0pt minus 0pt%\topsep 0.5ex plus 0pt minus 0pt%
\parsep 0pt plus 0pt minus 0pt%
\partopsep 0pt plus 0pt minus 0pt%
\itemsep 0pt plus 0pt minus 0pt%\itemsep 0.5ex plus 0pt minus 0pt
\settowidth{\labelwidth}{\quad --}%
\setlength{\labelsep}{0.5em}%
\setlength{\leftmargin}{\labelwidth}%
\addtolength{\leftmargin}{\labelsep}%
}
\item[--]
$\Py $ as a $C^s$ function with strictly positive values,
\item[--]
$\changey$ and $h^\ortho$ such that $(y,\xrond)\mapsto (\changey(y),h^\ortho (y,\xrond))$ is a $C^s$ diffeomorphism on $\Ouv_\varepsilon $,
\item[--]
and $\Pxi$ as a $C^s$ function with positive definite values
\end{list}
to satisfy Condition A2 or its sufficient condition (\ref{LP218}). 
To help in this task, we remind the reader of the findings in \cite[Example 2.2]{127}. In that example
we show that, with $\coordxi$ as a global coordinate chart for $\bfXi=\bfRR^2$, the arrival set of $\bfh^\ortho$,
and the choices
\IfTwoCol{%
% two col
\begin{eqnarray*}
&\displaystyle 
\changey(y) = y
\  ,\quad
Q(y)=c
\  ,\quad 
h_\indxra^\ortho (y,\xrond)=\xrond_\indxra -  y
\  ,
\\
&\displaystyle 
h_\indxrb^\ortho (y,\xrond)=\xrond_\indxrb + \textstyle \frac{1}{2} y^2
\  ,\quad 
\Pxi(\coordxip)\;=\; I_2
\  ,
\end{eqnarray*}
}{%
% one col
$$
\changey(y) = y
\  ,\quad
Q(y)=c
\  ,\quad 
h_\indxra^\ortho (y,\xrond)=\xrond_\indxra -  y
\  ,\quad 
h_\indxrb^\ortho (y,\xrond)=\xrond_\indxrb + \textstyle \frac{1}{2} y^2
\  ,\quad 
\Pxi(\coordxip)\;=\; I_2
\  ,
$$}%
where $c$ is a strictly positive real number,
the expression $P$ of the metric obtained in (\ref{LP187}), namely,
$$
P(\coordyp,\coordxrp_\indxra,\coordxrp_\indxrb)
\;=\; \left(\begin{array}{@{}c@{}}
1 \\ 0 \\ 0
\end{array}\right)c
\left(\begin{array}{@{}ccc@{}}
1 & 0 & 0
\end{array}\right)
+
\left(\begin{array}{@{}cc@{}}
-1 & \coordyp
\\
1 & 0
\\
0 & 1
\end{array}\right)
\left(\begin{array}{@{}ccc@{}}
-1 & 1 & 0
\\
\coordyp & 0 & 1
\end{array}\right)
$$
is such that Condition A2 is satisfied but not strictly -- namely, it only certifies weak differential detectability.

From this point we proceed with a ``deformation'' of the metric above to meet both conditions. We choose
\IfTwoCol{%
% two col
\\[0.7em]$
h_\indxra^\ortho (y,\xrond) \!=\! \coordxip_\indxra\! =\! \xrond_\indxra  -  y
\  ,\quad 
h_\indxrb^\ortho (y,\xrond) \!=\! \coordxip_\indxrb = \xrond_\indxrb
+ \textstyle \frac{1}{2} \displaystyle y^2 + a b y\xrond_\indxra
\  ,$\\[0.7em]\null \hfill $\displaystyle
\Pxi(\coordxip)\! =\! \left(\begin{array}{cc}
1 &  0 \\   0 &  1 + a \coordxip_\indxra^2 
\end{array}\right)
\  ,
$\hfill \null \\[0.7em]}{%
% one col
$$
h_\indxra^\ortho (y,\xrond) = \coordxip_\indxra = \xrond_\indxra  -  y
\  ,\quad 
h_\indxrb^\ortho (y,\xrond)= \coordxip_\indxrb = \xrond_\indxrb
+ \textstyle \frac{1}{2} \displaystyle y^2 + a b y\xrond_\indxra
\  ,\quad 
\Pxi(\coordxip) = \left(\begin{array}{cc}
1 &  0 \\   0 &  1 + a \coordxip_\indxra^2 
\end{array}\right)
\  ,
$$}%
where  $a$ and $b$ are strictly positive real numbers to be chosen, with, a priori, $a$
being small. We express the
inequality (\ref{LP218}) with the coordinates $\coordyxrp$, restricted
to the set $\Ouv_\varepsilon $. Since (\ref{LP220}) reads
\IfTwoCol{%
% two col
\\[0.5em]$
\renewcommand{\arraystretch}{1.3}
\begin{array}{rcl}
g_{\indxra}(y,\coordxrp_\indxra,\coordxrp_\indxrb)&=&-y \coordxrp_\indxrb - \coordxrp_\indxra
\;=\; 
-y \coordxrp_\indxrb - \coordxip_\indxra - y
\  ,
\\
g_{\indxrb}(y,\coordxrp_\indxra,\coordxrp_\indxrb)&=& y \coordxrp_\indxra + ab \coordxrp_\indxra^2 - ab y^2 \coordxrp_\indxrb
\  ,
\end{array}$\\[0.3em]
$\begin{array}{rcl}
w_{\indxra}= v_\indxra-v_y
\  , \ \
w_{\indxrb}=v_\indxrb + y v_y + ab \coordxrp_\indxra v_y + ab y 
v_\indxra
\  ,
\end{array}$\\[0.2em]\null \hfill $\displaystyle 
% $$
\left(\begin{array}{@{}cc@{}}
v_\indxra & v_\indxrb
\end{array}\right)
\;=\; \left(\begin{array}{@{}cc@{}}
w_\indxra & w_\indxrb
\end{array}\right)
\left(\begin{array}{@{}cc}
1 & - ab y
\\
0 &1
\end{array}\right)
\  ,
$\hfill \null \\[0.3em]%$$
}{%
% one col
$$
\renewcommand{\arraystretch}{1.5}
\begin{array}{rcl}
g_{\indxra}(y,\coordxrp_\indxra,\coordxrp_\indxrb)&=&-y \coordxrp_\indxrb - \coordxrp_\indxra
\;=\; 
-y \coordxrp_\indxrb - \coordxip_\indxra - y
\  ,
\\
g_{\indxrb}(y,\coordxrp_\indxra,\coordxrp_\indxrb)&=& y \coordxrp_\indxra + ab \coordxrp_\indxra^2 - ab y^2 \coordxrp_\indxrb
\  ,
\end{array}
\qquad 
\begin{array}{rcl}
w_{\indxra}&=& v_\indxra-v_y
\  ,
\\
w_{\indxrb}&=&v_\indxrb + y v_y + ab \coordxrp_\indxra v_y + ab y 
v_\indxra
\  ,
\end{array}
$$
and we have,
\startmodif
when (\ref{LP236}) holds, i.e. $v_y=0$,
\stopmodif
$$
\left(\begin{array}{@{}cc@{}}
v_\indxra & v_\indxrb
\end{array}\right)
\;=\; \left(\begin{array}{@{}cc@{}}
w_\indxra & w_\indxrb
\end{array}\right)
\left(\begin{array}{@{}cc}
1 & - ab y
\\
0 &1
\end{array}\right)
\  ,
$$
}%
inequality (\ref{LP218}) is
\IfTwoCol{%
% two col
\\[0.3em]$\displaystyle 
w_\indxrb^2 2 a \coordxip_\indxra g_\indxra
+
\startmodif
2
\stopmodif
\left(\begin{array}{@{}cc@{}}
w_\indxra & w_\indxrb
\end{array}\right)
\left(\begin{array}{@{}cc}
1 & - ab y
\\
0 &1
\end{array}\right)
\left(\begin{array}{@{}cc}
-1 & y + 2 ab \coordxrp_\indxra
\\
-y & -ab y^2
\end{array}\right)\times\\\null\hfill
\times \left(\begin{array}{cc}
1 &  0 \\   0 &  1 + a \coordxip_\indxra^2 
\end{array}\right)
\left(\begin{array}{@{}c@{}}
w_\indxra \\ w_\indxrb
\end{array}\right)
$\hfill \null \\[0.2em]\null \hfill $\displaystyle 
\leq \; 
-q 
\left(\begin{array}{cc}
w_\indxra & w_\indxrb
\end{array}\right)
\left(\begin{array}{cc}
1 & 0 \\  0 & 1 + a \coordxip_\indxra^2 
\end{array}\right)
\left(\begin{array}{c}
w_\indxra \\ w_\indxrb
\end{array}\right)
$\\[1em]}{%
% one col
\\[0.7em]$\displaystyle 
w_\indxrb^2 2 a \coordxip_\indxra g_\indxra
+
\startmodif
2
\stopmodif
\left(\begin{array}{@{}cc@{}}
w_\indxra & w_\indxrb
\end{array}\right)
\left(\begin{array}{@{}cc}
1 & - ab y
\\
0 &1
\end{array}\right)
\left(\begin{array}{@{}cc}
-1 & y + 2 ab \coordxrp_\indxra
\\
-y & -ab y^2
\end{array}\right)
\left(\begin{array}{cc}
1 &  0 \\   0 &  1 + a \coordxip_\indxra^2 
\end{array}\right)
\left(\begin{array}{@{}c@{}}
w_\indxra \\ w_\indxrb
\end{array}\right)
$\hfill \null \\\null \hfill $\displaystyle 
\leq \; 
-q 
\left(\begin{array}{cc}
w_\indxra & w_\indxrb
\end{array}\right)
\left(\begin{array}{cc}
1 & 0 \\  0 & 1 + a \coordxip_\indxra^2 
\end{array}\right)
\left(\begin{array}{c}
w_\indxra \\ w_\indxrb
\end{array}\right)
$\\[0.7em]
}%
for some strictly positive $q$. This inequality can be rewritten
\\[0.7em]
$\displaystyle 
-
w_\indxrb^2 \left[
2a \coordxip_\indxra (y \coordxrp_\indxrb + \coordxip_\indxra + y)
+
\startmodif
2
\stopmodif
a b y^2 (1 + a \coordxip_\indxra^2 )
- q (1 + a \coordxip_\indxra^2 )
\right]
$\hfill \null \\[0.3em] $\displaystyle 
+\;
\startmodif
2
\stopmodif
w_\indxra w_\indxrb
a\left[
 2  b(\coordxip_\indxra+ y)(1 + a \coordxip_\indxra^2 )
 +
 y\coordxip_\indxra^2 
 +
 a b^2  y^3(1 + a \coordxip_\indxra^2 )
\right]
$\qquad \null \\\null \hfill $\displaystyle 
\;-\;  w_\indxra^2
\left[
\startmodif
2 (1-a b  y^2)
\stopmodif
 -q\right]
\; \leq \; 0
$\\[0.7em]
It is satisfied if we have
\IfTwoCol{%
% two col
$\startmodif
2 (1-a b  y^2)
\stopmodif
-q\; >\; 0$,
}{%
% one col
$$
\startmodif
2 (1-a b  y^2)
\stopmodif
-q\; >\; 0
$$
}%
and
\IfTwoCol{%
% two col
\\[0.7em]$\displaystyle 
\startmodif
4
\stopmodif
a^2\left[
 2  b(\coordxip_\indxra+ y)(1 + a \coordxip_\indxra^2 )
 +
 y\coordxip_\indxra^2 
 +
 a b^2  y^3(1 + a \coordxip_\indxra^2 )
\right]^2\; <\; 
$\hfill\null\\[0.5em]$\displaystyle
4\left[
2a \coordxip_\indxra (y \coordxrp_\indxrb + \coordxip_\indxra + y)
+ 
\startmodif
2 
\stopmodif
a b y^2 (1 + a \coordxip_\indxra^2 )
- q (1 + a \coordxip_\indxra^2 )
\right] \times$\\[0.5em]\null\hfill$\times \left[
\startmodif
2 (1-a b  y^2)
\stopmodif
-q\right]
\  ,
$\\[0.7em]
}{%
% one col
\\[0.7em]\vbox{\noindent$\displaystyle 
\startmodif
4
\stopmodif
a^2\left[
 2  b(\coordxip_\indxra+ y)(1 + a \coordxip_\indxra^2 )
 +
 y\coordxip_\indxra^2 
 +
 a b^2  y^3(1 + a \coordxip_\indxra^2 )
\right]^2
$\hfill\null\\[0.5em]\null\hfill$\displaystyle
\; <\; 4\left[
2a \coordxip_\indxra (y \coordxrp_\indxrb + \coordxip_\indxra + y)
+
\startmodif
2 
\stopmodif
a b y^2 (1 + a \coordxip_\indxra^2 )
- q (1 + a \coordxip_\indxra^2 )
\right]
\left[
\startmodif
2 (1-a b  y^2)
\stopmodif
-q\right]
\  ,
$}\\[0.7em]
}%
for all $(y,\coordxrp)$ in $\Ouv_\varepsilon $,
\IfTwoCol{%
% two col
and therefore satisfying
\begin{equation}\label{eqn:HarmonicNewCoord}
\frac{4}{\varepsilon ^2} > y^2 + \coordxip_\indxra^2 > \frac{\varepsilon ^2}{4}
\quad ,\qquad 
\varepsilon < \coordxrp_\indxrb < \frac{1}{\varepsilon }
\  .
\end{equation}
}{%
% one col
i.e., for all $(y,\coordxrp)$ satisfying
$$
\frac{1}{\varepsilon ^2} > y^2 + \coordxrp_\indxra^2 > \varepsilon ^2
\quad ,\qquad 
\varepsilon < \coordxrp_\indxrb < \frac{1}{\varepsilon }
$$
and therefore, for all $(y,\coordxip_\indxra,\coordxrp_\indxrb)$ satisfying
\begin{equation}\label{eqn:HarmonicNewCoord}
\frac{4}{\varepsilon ^2} > y^2 + \coordxip_\indxra^2 > \frac{\varepsilon ^2}{4}
\quad ,\qquad 
\varepsilon < \coordxrp_\indxrb < \frac{1}{\varepsilon }
\  .
\end{equation}
}
We have
\\[0.7em]\vbox{\noindent$\displaystyle 
2a \coordxip_\indxra (y \coordxrp_\indxrb + \coordxip_\indxra + y)
+ 
\startmodif
2 
\stopmodif
a b y^2 (1 + a \coordxip_\indxra^2 )
- q (1 + a \coordxip_\indxra^2 )
$\hfill\null\\[0.6em]\null\hfill$\displaystyle
=\; 
a\left[2 \coordxip_\indxra y \left(\xrond_\indxrb + 1\right)
+ (2-q)\coordxip_\indxra^2 
+
\startmodif
2 
\stopmodif
b y^2 \right]
+
\startmodif
2 
\stopmodif
a^2  b y^2\coordxip_\indxra^2 
- q 
\  .
$}\\[0.7em]
So, by choosing $b$ large enough to satisfy
\IfTwoCol{%
% two col
$
\startmodif
2 
\stopmodif
\left(\frac{1}{\varepsilon } +1\right)^2 \leq (2-q) b$,
}{%
% one col
$$
\startmodif
2 
\stopmodif
\left(\frac{1}{\varepsilon } +1\right)^2 \leq (2-q) b 
\  ,
$$
}
we obtain successively
\IfTwoCol{%
% two col
$$
2 \coordxip_\indxra y \left(\xrond_\indxrb + 1\right)
+ (2-q)\coordxip_\indxra^2 
+
\startmodif
2 
\stopmodif
b y^2 
\; \geq \;
\frac{1}{2}\left[(2-q)\coordxip_\indxra^2 +
\startmodif
2 
\stopmodif
b y^2 \right]
$$
$
2a \coordxip_\indxra (y \coordxrp_\indxrb + \coordxip_\indxra + y)
+
\startmodif
2 
\stopmodif
a b y^2 (1 + a \coordxip_\indxra^2 )
- q (1 + a \coordxip_\indxra^2 )$\\[0.4em]
\null\hfill
$\; \geq \; 
\frac{a}{2}\min\{(2-q),
\startmodif
2 
\stopmodif
b\} \frac{\varepsilon ^2}{4}
+
\startmodif
2 
\stopmodif
a^2  b y^2\coordxip_\indxra^2 
- q 
\  .
$\\[0.6em]
}{%
% one col
\begin{eqnarray*}
2 \coordxip_\indxra y \left(\xrond_\indxrb + 1\right)
+ (2-q)\coordxip_\indxra^2 
+
\startmodif
2 
\stopmodif
b y^2 
& \geq &
\frac{1}{2}\left[(2-q)\coordxip_\indxra^2 +
\startmodif
2 
\stopmodif
b y^2 \right]
\  ,
\\
2a \coordxip_\indxra (y \coordxrp_\indxrb + \coordxip_\indxra + y)
+
\startmodif
2 
\stopmodif
a b y^2 (1 + a \coordxip_\indxra^2 )
- q (1 + a \coordxip_\indxra^2 )
& \geq &
\frac{a}{2}\min\{(2-q),
\startmodif
2 
\stopmodif
b\} \frac{\varepsilon ^2}{4}
+
\startmodif
2 
\stopmodif
a^2  b y^2\coordxip_\indxra^2 
- q 
\  .
\end{eqnarray*}
}%
Also, using \eqref{eqn:HarmonicNewCoord}, we have
\IfTwoCol{%
% two col
$
\startmodif
2 (1-a b  y^2)
\stopmodif
 -q \; \geq \;
\startmodif
2 -a b  \frac{8}{\varepsilon ^2}
\stopmodif
-q$,
}{%
% one col
$$
\startmodif
2 (1-a b  y^2)
\stopmodif
-q \; \geq \;
\startmodif
2 -a b  \frac{8}{\varepsilon ^2}
\stopmodif
-q
$$
}
and
\IfTwoCol{%
% two col
\\$
 2  b(\coordxip_\indxra+ y)(1 + a \coordxip_\indxra^2 )
 +
 y\coordxip_\indxra^2 
 +
 a b^2  y^3(1 + a \coordxip_\indxra^2 )$\\[0.5em]
\null\hfill$\leq 
\displaystyle 
 2  b\frac{4}{\varepsilon }\left(1 + a \frac{4}{\varepsilon ^2} \right)
 +
 \frac{8}{\varepsilon ^3} 
 +
 a b^2   \frac{8}{\varepsilon ^3} \left(1 + a \frac{4}{\varepsilon ^2} \right)$\\[0.5em]
\null\hfill$\leq 
  \frac{8}{\varepsilon ^3}\left(
  b
 +
1 
 +
 a b^2   \right)\left(1 + a \frac{4}{\varepsilon ^2} \right)
 \  .
$\\[0.7em]}{%
% one col
\begin{eqnarray*}
 2  b(\coordxip_\indxra+ y)(1 + a \coordxip_\indxra^2 )
 +
 y\coordxip_\indxra^2 
 +
 a b^2  y^3(1 + a \coordxip_\indxra^2 )
&\leq &
\displaystyle 
 2  b\frac{4}{\varepsilon }\left(1 + a \frac{4}{\varepsilon ^2} \right)
 +
 \frac{8}{\varepsilon ^3} 
 +
 a b^2   \frac{8}{\varepsilon ^3} \left(1 + a \frac{4}{\varepsilon ^2} \right)
 \\
 &\leq &
  \frac{8}{\varepsilon ^3}\left( b + 1  + a b^2   \right)\left(1 + a \frac{4}{\varepsilon ^2} \right)
 \  .
\end{eqnarray*}}%
Then, a sufficient condition for Condition A2 to hold is
\IfTwoCol{%
% two col
\begin{equation}
\label{LP228}
\renewcommand{\arraystretch}{1.5}
\hskip -0.9em\left.
\begin{array}{@{}c@{}}
\displaystyle
\startmodif
2 
\stopmodif
\left(\frac{1}{\varepsilon } +1\right)^2 \leq (2-q) b 
\quad ,\qquad  
\startmodif
2 -a b  \frac{8}{\varepsilon ^2}
\stopmodif
-q\; >\; 0
\\
\multicolumn{1}{@{}l@{}}{ \displaystyle 
\frac{64a^2}{\varepsilon ^6}\left( b + 1  + a b^2   \right)^2\left(1 + a \frac{4}{\varepsilon ^2} \right)^2
\; <}
\\\multicolumn{1}{@{}r@{}}{\displaystyle 
\left(\frac{a}{2}\min\{(2-q),b\} \frac{\varepsilon ^2}{4}- q \right)
\left(
\startmodif
2 -a b  \frac{8}{\varepsilon ^2}
\stopmodif
-q\right)
}
\end{array}\quad   \right\} 
\end{equation}
}{%
% one col
\begin{equation}
\label{LP228}
\renewcommand{\arraystretch}{1.5}
\left.
\begin{array}{@{}rcl}
\displaystyle
\startmodif
2 
\stopmodif
\left(\frac{1}{\varepsilon } +1\right)^2 \leq (2-q) b 
&,& \displaystyle 
\startmodif
2 -a b  \frac{8}{\varepsilon ^2}
\stopmodif
-q\; >\; 0
\  ,
\\\displaystyle 
  \frac{64a^2}{\varepsilon ^6}\left(
  b
 +
1 
 +
 a b^2   \right)^2\left(1 + a \frac{4}{\varepsilon ^2} \right)^2
&<& \displaystyle
\left(\frac{a}{2}\min\{(2-q),b\} \frac{\varepsilon ^2}{4}
- q \right)
\left(
\startmodif
2 -a b  \frac{8}{\varepsilon ^2}
\stopmodif
-q\right)
\  .
\end{array}\right\}
\end{equation}
}%
\IfTwoCol{%
% two col
With $b$ fixed as
$b= 4\left(\frac{1}{\varepsilon } +1\right)^2$,%
}{%
% one col
With $b$ fixed as
$$
b\;=\; 
\startmodif
2
\stopmodif
\left(\frac{1}{\varepsilon } +1\right)^2
\  ,
$$%
}%
since the following inequality is satisfied when $a=0$,
there exists a strictly positive real number $\bar a$ such that, for all $a$ in $[0,\bar a)$, we have
$$
\frac{64a}{\varepsilon ^6}\left(  b + 1  + a b^2   \right)^2\left(1 + a \frac{4}{\varepsilon ^2} \right)^2
\; <\;
\startmodif
\frac{\varepsilon ^2}{8}
\stopmodif
\left( \startmodif
2 -a b  \frac{8}{\varepsilon ^2}
\stopmodif\right)
$$
We fix $a$ in $(0,\bar a)$. By continuity, there exists $q$ satisfying (\ref{LP228}).

We have established the existence of a triplet $(a,b,q)$ such that Conditions A2 and A3 are satisfied on $\Ouv_\varepsilon$ by the metric 
$\bfP$, the expression of which, with the coordinate $\coordyxrp$, is
\IfReport{%
% report
$$ 
P(\coordyp,\coordxrp_\indxra,\coordxrp_\indxrb)
=
\left(\begin{array}{c}
1 \\ 0 \\ 0
\end{array}\right)c
\left(\begin{array}{@{}ccc@{}}
1 & 0 & 0
\end{array}\right)
+
\left(\begin{array}{@{}c@{\ }c@{}}
-1 & \coordyp + ab \coordxrp_\indxra
\\
1 & ab \coordyp
\\
0 & 1
\end{array}\right)\hskip -0.5em
\left(\begin{array}{@{}c@{\  }c@{}}
1 & 0 \\  0 & 1 + a (\coordxrp_\indxra-y)^2 
\end{array}\right)
\left(\begin{array}{@{}c@{\  }c@{\  }c@{}}
-1 & 1 & 0
\\
\coordyp + ab \coordxrp_\indxra& ab \coordyp & 1
\end{array}\right)
\,  ,
$$
}{%
% not report
\\[0.7em]$\displaystyle 
P(\coordyp,\coordxrp_\indxra,\coordxrp_\indxrb)
\;=\; 
\left(\begin{array}{c}
1 \\ 0 \\ 0
\end{array}\right)c
\left(\begin{array}{@{}ccc@{}}
1 & 0 & 0
\end{array}\right)
$\hfill\null\\\null\hfill$\displaystyle
+
\left(\begin{array}{@{}c@{\ }c@{}}
-1 & \coordyp + ab \coordxrp_\indxra
\\
1 & ab \coordyp
\\
0 & 1
\end{array}\right)\hskip -0.5em
\left(\begin{array}{@{}c@{\  }c@{}}
1 & 0 \\  0 & 1 + a (\coordxrp_\indxra-y)^2 
\end{array}\right)
\left(\begin{array}{@{}c@{\  }c@{\  }c@{}}
-1 & 1 & 0
\\
\coordyp + ab \coordxrp_\indxra& ab \coordyp & 1
\end{array}\right)
\,  ,
$\\[0.7em]
}
The observer (\ref{eqn:GeodesicObserverVectorField}) 
for the harmonic oscillator with unknown frequency is
\IfTwoCol{%
% two col
\\[0.5em]$\displaystyle 
\dot{\overparen{\left(\begin{array}{c}
\hat y
\\
\hat \xrond_\indxra
\\
\hat \xrond_\indxrb
\end{array}
\right)}}
\;=\; 
\left(\begin{array}{c}
\hat \xrond_\indxra
\\
-\hat y \hat \xrond_\indxrb
\\
0
\end{array}
\right)
$\hfill\null\\\null\hfill$\displaystyle
- \frac{k_E(\hat y,\hat \xrond_\indxra,\hat \xrond_\indxrb)}{c}
\left(\begin{array}{ccc}
1 
\\
1 
\\
-\hat y-a b (\hat \xrond_\indxra + \hat y)
\end{array}\right)
(\hat y -y)
\  .
$\\[0.7em]
}{%
% one col
$$
\dot{\overparen{\left(\begin{array}{c}
\hat y
\\
\hat \xrond_\indxra
\\
\hat \xrond_\indxrb
\end{array}
\right)}}
\;=\; 
\left(\begin{array}{c}
\hat \xrond_\indxra
\\
-\hat y \hat \xrond_\indxrb
\\
0
\end{array}
\right)
- \frac{k_E(\hat y,\hat \xrond_\indxra,\hat \xrond_\indxrb)}{c}
\left(\begin{array}{ccc}
1 
\\
1 
\\
-\hat y-a b (\hat \xrond_\indxra + \hat y)
\end{array}\right)
(\hat y -y)
\  .
$$
}

As a final remark, we note that the 
expression of the metric with the coordinates $(\coordyp,\coordxip)$ is (by definition)
$$
\bar P(\coordyp,\coordxip_\indxra,\coordxip_\indxrb)\;=\; 
\left(\begin{array}{@{}ccc@{}}
c & 0 & 0
\\
0 & 1 & 0 \\  0 & 0 & 1 + a \coordxip_\indxra^2 
\end{array}\right)
\  .
$$
All the corresponding Christoffel symbols are zero, except
\IfTwoCol{%
% two col
$
\bar \Gamma _{\indxrb\indxrb}^\indxra=-\frac{\coordxip_\indxra}{2}$, 
$\bar \Gamma _{\indxra\indxrb}^\indxrb=-\frac{\coordxip_\indxra}{2(a \coordxip_\indxra^2 +1)}
$.
It follows that the component
$
\mathfrak{R} _{\indxra\indxrb\indxrb}^\indxra
=
\frac{\partial \bar \Gamma _{\indxrb\indxrb}^\indxra}{\partial \coordxip_\indxra}
-
\bar \Gamma _{\indxrb\indxrb}^\indxra
\bar \Gamma _{\indxra\indxrb}^\indxrb
= -2a - \frac{\coordxip_\indxra^2}{4(a \coordxip_\indxra^2 +1)}
$
}{%
% one col
$$
\bar \Gamma _{\indxrb\indxrb}^\indxra=-\frac{\coordxip_\indxra}{2}
\quad ,\qquad 
\bar \Gamma _{\indxra\indxrb}^\indxrb=-\frac{\coordxip_\indxra}{2(a \coordxip_\indxra^2 +1)}
\  .
$$
It follows that the component $\mathfrak{R} _{\indxra\indxrb\indxrb}^\indxra$
$$
\mathfrak{R} _{\indxra\indxrb\indxrb}^\indxra
\;=\; 
\frac{\partial \bar \Gamma _{\indxrb\indxrb}^\indxra}{\partial \coordxip_\indxra}
-
\bar \Gamma _{\indxrb\indxrb}^\indxra
\bar \Gamma _{\indxra\indxrb}^\indxrb
\;=\; -2a - \frac{\coordxip_\indxra^2}{4(a \coordxip_\indxra^2 +1)}
$$
}
of the Riemann curvature tensor is not zero. So there is no coordinates for which the expression of the 
metric is Euclidean.
\egroup
\end{example}

% ------------
\startarchive
% ------------
\subsection{Possible ways to facilitate the satisfaction of Conditions A2 and A3}
\label{sec14}
We propose now two possible ways to facilitate the satisfaction of Conditions A2 and A3. We just give the initial ideas, the 
full analysis remaining to be done.

\subsubsection{Immersion into an input dependent system}
\label{sec15}
We consider again the harmonic oscillator with unknown frequency (\ref{LP132})
% $$
% \label{LP132}
% \dot y = \xrond_\indxra
% \quad ,\qquad 
% \dot \xrond_\indxra\;=\; -y \xrond_\indxrb
% \quad ,\qquad 
% \dot \xrond_\indxrb\;=\; 0
% $$
evolving in the invariant set $\Ouv_\varepsilon $ defined in (\ref{LP205}).
Its solutions are solutions of the following system, with input $\entree _y$,
\begin{equation}
\label{LP170}
\dot y = \xrond_\indxra
\quad ,\qquad 
\dot{\xrond}_\indxra= 
-\entree _y \xrond_\indxrb
\quad ,\qquad 
\dot{\xrond}_\indxrb = 
[\entree _y  -y]\xrond_\indxra
\end{equation}
when we make the particular choice
$$
\entree _y\;=\; y
$$
for the input. This trivial remark leads us to pay some attention to input-dependent systems. To ease this 
presentation we work within a given coordinate chart $\coordx$ and
we consider the case where everything depends on a, possibly time varying, input vector $\entree  $. 
Namely the system is
\begin{equation}
\label{1}
\dot x \;=\;  f(x,\entree  )
\quad ,\qquad 
y\;=\; h(x)
\end{equation}
and we denote $X(x,t;\entree  )$ its solution. It is important here, for Condition A3, that $h$ does not 
depend on $u$. Let also 
the metric be input-dependent as $(x,\entree  ) \to P(x,\entree  )$. It gives rise to a continuous family of 
Riemannian spaces.

It can be shown 
% Complement
% \pointtoextcomp{(see Complement \ref{complement29})}{complement29}
(see \complement \ref{complement29})
that, in this case,
\begin{list}{}{%
\parskip 0pt plus 0pt minus 0pt%
\topsep 0pt plus 0pt minus 0pt%
\parsep 0pt plus 0pt minus 0pt%
\partopsep 0pt plus 0pt minus 0pt%
\itemsep 0pt plus 0pt minus 0pt%
\settowidth{\labelwidth}{--}%
\setlength{\labelsep}{0.5em}%
\setlength{\leftmargin}{\labelwidth}%
\addtolength{\leftmargin}{\labelsep}%
}
\item[--]
Condition A2 is modified into
\\
There exist 
a continuous function $\rho :\RR ^n\to \RRgeq$
and a strictly positive real number $q$ such that
$$
L_fP(x,\entree  ) + \frac{\partial P}{\partial \entree  }(x,\entree  )\dot \entree  
\; \leq  \; 
\rho (x)\,  \frac{\partial h}{\partial x}(x)^\top
\frac{\partial h}{\partial x}(x)
\;-\; \qlower\,  P(x,\entree  )
\qquad
\forall x\in\coordxd(\Ouv)
\  ;
$$
\item[--]
With Assumption~\ref{H2}, Condition A3 is still implied by the nullity of the second fundamental form of $h$ which is
$$
\secff _P h_{\indxa\indxb}^\indyi(x,\entree )
\;=\; 
\frac{\partial ^2h_\indyi}{\partial x_\indxa\partial x_\indxb}(x)
-\Gamma _{\indxa\indxb}^\indxc(x,\entree)\frac{\partial h_\indyi}{\partial x_\indxc}(x)
+
\Gammay_{\indyj\indyk}^\indyi(h(x))
\frac{\partial h_\indyj}{\partial x_\indxa}(x)
\frac{\partial h_\indyk}{\partial x_\indxb}(x)
\  ;
$$
\item[--]
Theorem \ref{thm1part1} holds with these modifications.
\end{list}

We show the interest of immersing a system into an input-dependent one via an example.
\begin{example}
\label{ex6}
\bgroup
\normalfont
We consider the harmonic oscillator with unknown frequency (\ref{LP132}) immersed into the input-dependent system 
(\ref{LP170}). To make sure that this input-dependent system satisfies Condition A3 we follow
Theorem~\ref{prop17}. We select
\begin{list}{}{%
\parskip 0pt plus 0pt minus 0pt%
\topsep 0pt plus 0pt minus 0pt%
\parsep 0pt plus 0pt minus 0pt%
\partopsep 0pt plus 0pt minus 0pt%
\itemsep 0pt plus 0pt minus 0pt%
\settowidth{\labelwidth}{--}%
\setlength{\labelsep}{0.5em}%
\setlength{\leftmargin}{\labelwidth}%
\addtolength{\leftmargin}{\labelsep}%
}
\item[--]
the set $\bfXi$ as $\bfRR^2$ and equip it with a global coordinate chart $\coordxi$
and a metric $\bfPxi$ the expression of which is
\startmodif
$$
R(\coordxip_\indxra,\coordxip_\indxrb,\entree _y,\entree _\indxra)\;=\; 
\left(\begin{array}{@{\,  }cc@{\,  }}
2a{} +b{}^2 
&
\entree _yb{}
\\
\entree _yb{}
&
2a{} + 2\entree _y^2 + 2 \entree _\indxra^2 - 2 \entree _y \entree _\indxra 
\end{array}\right)
$$
\stopmodif
where $a$ and $b$ are two real numbers, with $a$ strictly positive.
Note that $R$
does not depend on $(\coordxip_\indxra,\coordxip_\indxrb)$ but it depends on $\entree _y$, input of
(\ref{LP170}), and also on $\entree_\indxra$, a new input.
\item[--]
The expression of the function $\bfh^\ortho$ as
$$
h^\ortho(y,\xrond_\indxra,\xrond_\indxrb) 
\;=\; 
\left(\xrond_\indxra - y\,  ,\,  \xrond_\indxrb+\frac{y^2}{2}\right)
\  .
$$
% the inverse $\bar \hhperp $ of the diffeomorphism $\hhperpinv$ as~:
% $$
% \bar \hhperp (y,\xrond_\indxra,\xrond_\indxrb)
% \;=\; 
% (\bar y,\coordxip_\indxra,\coordxip_\indxrb)
% \;=\; 
% \left(y \,  ,\,   \xrond_\indxra - y\,  ,\,  \xrond_\indxrb+\frac{y^2}{2}\right)
% \  ;
% $$
\item[--]
the expression of the metric $\bfPy$ for the $\bfy$-manifold as  $c$, a strictly positive real number to be chosen.
\end{list} 
% Complement
% \pointtoextcomp{The construction of the input-dependent system and the metric for the harmonic oscillator
% is given in Complement \ref{complement30}.}{complement30}
The construction of the input-dependent system and the metric for the harmonic oscillator
is given in the \complement \ref{complement30}.

From (\ref{LP166}), we get the metric
\startmodif
\begin{equation}
\label{LP172}
P(y,\entree  _y,\entree  _\indxra) = 
\left(\begin{array}{@{}c@{}}
1 \\ 0 \\ 0
\end{array}\right)
c
\left(\begin{array}{@{}ccc@{}}
1 & 0 & 0
\end{array}\right)
+
\left(\begin{array}{@{}cc@{}}
-1 & y 
\\
1 & 0
\\
0 & 1
\end{array}\right)
\left(\begin{array}{@{\,  }cc@{\,  }}
2a{} +b{}^2 
&
\entree _yb{}
\\
\entree _yb{}
&
2a{} + 2\entree _y^2 + 2 \entree _\indxra^2 - 2 \entree _y \entree _\indxra 
\end{array}\right)
\left(\begin{array}{@{}ccc@{}}
-1 & 1 & 0
\\
y & 0 & 1
\end{array}\right)
\end{equation}
\stopmodif
% $$
% \left(\begin{array}{@{\,  }cc@{\,  }}
% -[2a{} +b{}^2]+y\entree _yb{}
% &
% -\entree _yb{} + y[2a{} + 2\entree _y^2 + 2 \entree _\indxra^2 - 2 \entree _y \entree _\indxra ]
% \\
% 2a{} +b{}^2
% &
% \entree _yb{}
% \\
% \entree _yb{}
% &
% 2a{} + 2\entree _y^2 + 2 \entree _\indxra^2 - 2 \entree _y \entree _\indxra 
% \end{array}\right)
% $$
% \\[1em]\vbox{\noindent\null \hfill $
% \renewcommand{\arraystretch}{2}
% = \left(\begin{array}{@{\,  }c@{\quad\   }c@{\quad\   }c@{\,  }}
% \renewcommand{\arraystretch}{0.5}\begin{array}{@{}l@{\quad }}
% c{}+(2a{} +b{}^2 ) -2y\entree  _yb{}
% \\\multicolumn{1}{@{\quad }r@{}}{
% + y^2(2a{} + 2\entree _y^2 + 2 \entree  _\indxra ^2 - 2 \entree _y \entree  _\indxra )
% }
% \end{array}&
% \star
% % -(2s +b{}^2 ) -  y\entree  _y b{}
% & 
% \star
% %-\entree  _y b{} - y(2a{} + 2\entree _y^2 + 2 \{\entree  _\indxra - \entree  _y\}^2 + 2 \entree _y \{\entree  _\indxra - \entree  _y\} )
% \\
% -(2a{} +b{}^2)+y \entree _yb{}
% &
% 2a{} +b{}^2 
% &
% \star
% % \entree _yb{}
% \\
% -\entree _yb{}
% +y\left(2a{} + 2\entree _y^2 + 2 \entree  _\indxra ^2 - 2 \entree _y \entree  _\indxra  \right)
% &
% \entree _yb{}
% &
% 2a{} + 2\entree _y^2 + 2 \entree  _\indxra ^2 - 2 \entree _y \entree  _\indxra 
% \end{array}\right)
% $\refstepcounter{equation}\label{LP172}\hfill$(\theequation)$}\\[1em]
% where each $\star$ should be replaced by its symmetric value.
It follows from Theorem~\ref{prop17} that Condition A3 holds since it is not affected by the input-dependence.

Invoking the S-Lemma (see \cite{Polik-Terlaky.07}), Condition A2 holds if we have
\\[1em]\vbox{\noindent$\displaystyle 
\left(\begin{array}{@{}c@{\  }c@{\  }c@{}}
0 & v_\indxra & v_\indxrb
\end{array}\right)\!
\left[
\frac{\partial P}{\partial \entree  _y} \dot \entree  _y
+
\frac{\partial P}{\partial \entree  _\indxra } \dot \entree  _\indxra 
\right]\!
\left(\begin{array}{@{}c@{}}
0 \\ v_\indxra  \\ v_\indxrb
\end{array}\right)
+
2\left(\begin{array}{@{}c@{\  }c@{\  }c@{}}
0 & v_\indxra & v_\indxrb
\end{array}\right)
P(y,\entree  _y,\entree  _\indxra)\left(\begin{array}{@{}cc@{}}
1 & 0
\\
0 & -\entree  _y
\\
\entree  _y-y & 0
\end{array}\right)
\left(\begin{array}{@{}c@{}}
 v_\indxra  \\ v_\indxrb
\end{array}\right)
$\hfill \null \\\null \hfill $\displaystyle 
\leq - q \  
\left(\begin{array}{@{}c@{\  }c@{\  }c@{}}
0 & v_\indxra & v_\indxrb
\end{array}\right)
P
\left(\begin{array}{@{}c@{}}
0 \\ v_\indxra  \\ v_\indxrb
\end{array}\right)
$}\\[1em]
\startmodif
We compute
\\[1em]$\displaystyle 
\left(\begin{array}{@{}ccc@{}}
0 & v_\indxra & v_\indxrb
\end{array}\right)
\left[
\frac{\partial P}{\partial \entree  _y} \dot \entree  _y
+
\frac{\partial P}{\partial \entree  _\indxra } \dot \entree  _\indxra 
\right]
\left(\begin{array}{@{}c@{}}
0 \\ v_\indxra  \\ v_\indxrb
\end{array}\right)
$\hfill \null \\\null \hfill $
\;=\; 
\left(\begin{array}{@{}ccc@{}}
v_\indxra & v_\indxrb
\end{array}\right)
\left[\left(\begin{array}{@{}cc@{}}
0
&
b{}
\\
b{}
&
4\entree _y - 2 \entree  _\indxra 
\end{array}\right) \dot \entree  _y
\;+\; 
\left(\begin{array}{@{}cc@{}}
0
&
0
\\
0
&
4 \entree  _\indxra - 2 \entree _y 
\end{array}\right) \dot \entree  _\indxra 
\right]
\left(\begin{array}{@{}c@{}}
v_\indxra \\ v_\indxrb
\end{array}\right)
$%
%%%%%%%%%%%%%%%%%%%%%%%%%%%%%%%%%%%%%%%
\\[1em]$\displaystyle
\left(\begin{array}{@{\,  }cc@{\,  }}
2a{} +b{}^2 
&
\entree _yb{}
\\
\entree _yb{}
&
2a{} + 2\entree _y^2 + 2 \entree _\indxra^2 - 2 \entree _y \entree _\indxra 
\end{array}\right)
\left(\begin{array}{@{}ccc@{}}
-1 & 1 & 0
\\
y & 0 & 1
\end{array}\right)
\left(\begin{array}{@{}cc@{}}
1 & 0
\\
0 & -\entree  _y
\\
\entree  _y-y & 0
\end{array}\right)
$\hfill\null\\\null\hfill$\displaystyle
=\; 
\left(\begin{array}{@{\,  }cc@{\,  }}
2a{} +b{}^2 
&
\entree _yb{}
\\
\entree _yb{}
&
2a{} + 2\entree _y^2 + 2 \entree _\indxra^2 - 2 \entree _y \entree _\indxra 
\end{array}\right)
\left(\begin{array}{@{}cc@{}}
-1 & -\entree  _y
\\
\entree  _y & 0
\end{array}\right)
$\qquad  \null 
%%%%%%%%%%%%%%%%%%%%%%%%%%%%
\\[1em]\null \hfill $\displaystyle 
= \renewcommand{\arraystretch}{1.5}
\left(\begin{array}{@{}cc@{}}
\left[-(2a{} +b{}^2)+\entree _y^2b{}\right]
&
-\left[2a+b^2\right]\entree  _y
\\
\left[
- \entree _yb{}
+\entree  _y\left(2a{} + 2\entree _y^2 + 2 \entree  _\indxra ^2 - 2 \entree _y \entree  _\indxra  \right)
\right]
&
-\left[\entree  _y ^2b{}\right]
\end{array}\right)
$\\[1em]
\stopmodif
By expanding we get the inequality
\\[1em]\vbox{\noindent\null \quad $\displaystyle 
2v_\indxra^2
\left[-\left(1-\frac{q}{2}\right)(2a{} +b{}^2)+\entree _y^2b{}\right]
$\hfill \null \\[0.7em]\null \qquad \qquad   $\displaystyle
+
2v_\indxra v_\indxrb
\left[
b{}\dot \entree  _y
+
\entree  _y
\left(-b{}^2-[1-q]b{}
+
 2\entree _y^2 +  2 \entree  _\indxra ^2 - 2 \entree _y \entree  _\indxra 
\right)
\right]
$\refstepcounter{equation}\label{LP188}\hfill$(\theequation)$
\\[0.7em]\null \qquad \qquad \qquad \qquad   $\displaystyle
+
v_\indxrb^2
\left[
\left(
4\entree _y - 2 \entree  _\indxra 
\right) \dot \entree  _y
+
\left(4 \entree  _\indxra - 2 \entree _y 
\right) \dot \entree  _\indxra 
- \entree  _y^2 b{}
+ q [2a{} + 2\entree _y^2 + 2 \entree  _\indxra ^2 - 2 \entree _y \entree  _\indxra ]
\right]
$\hfill \null \\[0.7em]\null \qquad \qquad \qquad \qquad  \qquad \qquad  $\displaystyle 
\leq \; 0
$
}\\[1em]
% So we should have
% \\[1em]\vbox{\noindent\null \quad $\displaystyle 
% 2v_\indxra^2
% \left[-(2a{} +b{}^2)+\entree _y^2b{}\right]
% $\hfill \null \\[0.7em]\null \qquad \qquad   $\displaystyle
% +
% 2v_\indxra v_\indxrb
% \left[
% b{}\dot \entree  _y
% +
% \entree  _y
% \left(-b{}^2-b{}
% +
%  2\entree _y^2 +  2 \entree  _\indxra ^2 - 2 \entree _y \entree  _\indxra 
% \right)
% \right]
% $\hfill \null \\[0.7em]\null \qquad \qquad \qquad \qquad   $\displaystyle
% +
% v_\indxrb^2
% \left[
% \left(
% 4\entree _y - 2 \entree  _\indxra 
% \right) \dot \entree  _y
% +
% \left(4 \entree  _\indxra - 2 \entree _y 
% \right) \dot \entree  _\indxra 
% -2 \entree  _y^2 b{}
% \right]
% $\hfill \null \\[0.7em]\null \hfill  $\displaystyle 
% \leq \; 
% -q \left(
% v_\indxra^2\left[2a+b^2\right]
% +
% 2 v_\indxra v_\indxrb\entree  _y b
% +
% v_\indxrb^2[2a{} + 2\entree _y^2 + 2 \entree  _\indxra ^2 - 2 \entree _y \entree  _\indxra ]
% \right)
% $
% }\\[1em]
% to be compared with (\ref{LP124}).
%
The specific expression we have chosen for $\dot\xrond_\indxrb$, in the input-dependent system, plays an
important role here with preventing
the presence of $a{}$  in the cross term $v_\indxra v_\indxrb$.

At this point, we remind the reader that, to match the given harmonic oscillator with unknown frequency, $u_y$ 
is to be $y$, but $u_\indxra$ is still any time function. Inspired by (\ref{LP132}), we have found fruitful 
to choose $u_\indxra$ generated by the following system~:
\begin{equation}
\label{16}
\dot \entree  _y = \entree  _\indxra 
\quad ,\qquad 
\dot{\entree  }_\indxra 
 = -\entree _y \entree  _\indxrb  
\quad ,\qquad 
\dot{\entree  }_\indxrb 
= 0
\end{equation}
with $(\entree  _y ,\entree  _\indxra,\entree  _\indxrb)$ evolving in the invariant set 
$\Ouv_\varepsilon $. Its meaning is that, when we have $u_y=y$, we have at the same time 
$u_\indxra=\xrond_\indxra$ and $u_\indxrb=\xrond_\indxrb$. With the notations
% This gives
% \\[1em]$\displaystyle 
% 2v_\indxra^2
% \left[-(2a{} +b{}^2)+\entree _y^2b{}\right]
% +
% v_\indxrb^2
% \left[
% \left( 4\entree _y - 2 \entree  _\indxra\right) \entree  _\indxra 
% -
% \left(4 \entree  _\indxra - 2 \entree _y \right)
% \entree _y \entree  _\indxrb 
% -2 \entree  _y^2 b{}
% \right]
% $\hfill \null \\\null \hfill $\displaystyle
% +
% 2v_\indxra v_\indxrb
% \left[
% b{}\entree  _\indxra 
% +
% \entree  _y
% \left(-b{}^2-b{}
% +
%  2\entree _y^2 + 2 \entree  _\indxra ^2 - 2 \entree _y \entree  _\indxra 
% \right)
% \right]
% \; <\; 0
% $\\[1em]
% or
% \\[1em]$\displaystyle 
% -2v_\indxra^2
% \left[2a{} +b{}^2-\entree _y^2b{}\right]
% -
% 2v_\indxrb^2
% \left[
% \entree  _\indxra ^2- 2\entree  _y\entree  _\indxra
% +2 \entree  _\indxra  \entree _y \entree  _\indxrb 
% - \entree _y ^2\entree  _\indxrb 
% + \entree  _y^2 b{}
% \right]
% $\hfill \null \\\null \hfill $\displaystyle
% +
% 2v_\indxra v_\indxrb
% \left[
% b{}\entree  _\indxra 
% +
% \entree  _y
% \left(
% -
% b{}^2
% -
% b{}
% +
% 2 \entree  _\indxra ^2
% +
% 2 \entree  _y^2
% -
% 2 \entree _y \entree  _\indxra 
% \right)
% \right]
% \; <\; 0
% $\\[1em]
% To have a better view on the left hand side, we let~:
$$
\rho{} \;=\; v_\indxra
\quad ,\qquad 
r{} _y\;=\; \entree  _y v_\indxrb
\quad ,\qquad 
r{} _\indxra\;=\; \entree  _\indxra v_\indxrb
\quad ,\qquad 
k\;=\; a\,  q
\  ,
$$
inequality (\ref{LP188}) becomes
\\[1em]$\displaystyle 
-2 \rho{} ^2
\left[\left(1-\frac{k}{2a}\right)(2a{} +b{}^2)-\entree _y^2b{}\right]
$\hfill \null \\\null \hfill $\displaystyle 
-
2
\left[
r{} _\indxra ^2\left(1-\frac{k}{a}-\frac{k}{u_y^2+u_\indxra^2}\right)
+
2 r{} _\indxra r{} _y \left(\entree  _\indxrb+\frac{k}{2a} -1\right)
+
r{} _y^2\left(\frac{b{}}{2}-\entree  _\indxrb-\frac{k}{a}-\frac{k}{u_y^2+u_\indxra^2}\right)
\right]
$\refstepcounter{equation}\label{LP173}\hfill$(\theequation)$
\\\null \hfill $\displaystyle
+
2\rho{}  
\left[
b{}r _\indxra
-
r _yb{}^2
-
r _yb{}\left(1-\frac{k}{a}\right)
+
 2r{} _y \entree _y^2
+
\startmodif
2 r{} _y \entree  _\indxra ^2
\stopmodif
-
2 \entree _y r{} _\indxra 
\right]
\; \leq \; 0
$\\[1em]
When $\rho{} $ is zero, this reduces to the fact that the quadratic form in $(r_y,r_\indxra)$
$$
r{} _\indxra ^2\left(1-\frac{k}{a}-\frac{k}{u_y^2+u_\indxra^2}\right)
+
2 r{} _\indxra r{} _y \left(\entree  _\indxrb+\frac{k}{2a} -1\right)
+
r{} _y^2\left(b{}-\entree  _\indxrb-\frac{k}{a}-\frac{k}{u_y^2+u_\indxra^2}\right)
$$
is positive definite when $(\entree  _y,\entree  _\indxra ,\entree  _\indxrb )$ is in the set $\Ouv_\varepsilon $ where
$$
\varepsilon ^2\leq u_y^2+u_\indxra^2
\quad ,\qquad 
u_\indxrb\leq \frac{1}{\varepsilon }
\  .
$$
This is the case if we fix $a$ arbitrary, $k$ satisfying~:
$$
k\left(\frac{1}{a}+\frac{1}{\varepsilon ^2}\right)\; <\; 1
$$
and then $b$ large enough. By continuity, for $\rho $ sufficiently small, (\ref{LP173}) holds when $(\entree  _y,\entree  _\indxra ,\entree  _\indxrb )$ is
in $\Ouv_\varepsilon $. So we have a negative definite quadratic form in
$(v_\indxra,\entree  _y v_\indxrb, \entree  _\indxra v_\indxrb)$ and therefore in 
$(v_\indxra,v_\indxrb)$. 

With Conditions A3 and A2 satisfied, the assumptions of the extension of Theorem \ref{thm1part1}
to the input-dependent case are satisfied. The observer (\ref{eqn:GeodesicObserverVectorField}) 
for the input-dependent system (\ref{LP170}) takes the form
\begin{eqnarray}
\nonumber
\dot{\hat y}&=&\hat \xrond_\indxra - \frac{k_E}{c{}}[\hat y -y]
\  ,
\\\label{LP171}
\dot{\hat{\xrond}}_\indxra&=&
-\entree _y \hat{\xrond}_\indxrb
- \frac{k_E}{c{}}[\hat y -y]
\  ,
\\\nonumber
\dot{\hat{\xrond}}_\indxrb &=&
[\entree _y  -\hat y]\hat \xrond_\indxra +  \frac{k_E}{c{}}\hat y[\hat y -y]
\  .
\end{eqnarray}
Actually we are interested in an observer not for (\ref{LP170}) but for the given harmonic oscillator with unknown frequency.
We have noticed that solutions of the latter are solutions of (\ref{LP170})
if we choose
$$
\entree  _y\;=\; y
$$
Then, because of (\ref{16}), we should choose also
$$
\entree  _\indxra \;=\; \xrond_\indxra
\quad ,\qquad 
\entree  _\indxrb \;=\; \xrond_\indxrb
$$
But, if $y$ is known as a measurement, $\xrond_\indxra$ and $\xrond_\indxrb$ are not (this is why we need an 
observer!). So the observer (\ref{LP171})
cannot depend on
$(\entree  _\indxra ,\entree  _\indxrb )$ when $\entree  _y = y$ if we want it to lead to an observer for 
the given system. This is fortunately the case. So all this procedure leads to the following
observer for the harmonic oscillator with unknown frequency
\begin{eqnarray}
\nonumber
\dot{\hat y}&=&\hat \xrond_\indxra - \frac{k_E}{c{}}[\hat y -y]
\  ,
\\
\label{LP206}
\dot{\hat{\xrond}}_\indxra&=&
-y \hat{\xrond}_\indxrb
- \frac{k_E}{c{}}[\hat y -y]
\  ,
\\
\nonumber
\dot{\hat{\xrond}}_\indxrb &=&
\left[\frac{k_E}{c{}}\hat y - \hat \xrond_\indxra\right]
[\hat y-y]
\  .
\end{eqnarray}
Instead of relying on Theorem \ref{thm1part1}, we can prove its convergence by establishing that the flow it 
generates is contracting for the metric $\bfP$ the expression of which is (\ref{LP172}) evaluated at
$(\hat y,y,\xrond_\indxra)$.
% Complement
% \pointtoextcomp{See Complement \ref{complement42} for details.}{complement42}
See \complement \ref{complement42} for details.
\egroup
\end{example}

\subsubsection{Dynamic extension}
\label{sec16}
Another possibly fruitful route to find metrics satisfying both Conditions A2 and A3 is to 
augment the state space.

We consider an augmented state $\bfxa$ in the augmented space $\RR^m$
and we define an augmented dynamics for $\bfxa$ in such a way that any observer we could design 
for the augmented state $\bfxa$ can be used as an observer for the given state $\bfx$. This is possible if~:
\begin{list}{}{%
\parskip 0pt plus 0pt minus 0pt%
\topsep 0.5ex plus 0pt minus 0pt%
\parsep 0pt plus 0pt minus 0pt%
\partopsep 0pt plus 0pt minus 0pt%
\itemsep 0.5ex plus 0pt minus 0pt
\settowidth{\labelwidth}{\qquad a)}%
\setlength{\labelsep}{0.5em}%
\setlength{\leftmargin}{\labelwidth}%
\addtolength{\leftmargin}{\labelsep}%
}
\item[a)]
we have a $C^s$ function $\bfpi:\bfRR^m \to \bfRR^n$ giving
$$
\bfx=\bfpi(\bfxa)
\  ;
$$
\item[b)]
any solution to (\ref{eqn:Plant1}) 
can be augmented in a solution of the augmented system to be defined;
\item[c)]
the measurement of the given system (\ref{eqn:Plant1}) is a measurement of the augmented one.
\end{list}

For a), we note that, if $\coordx$ and $\coordxe$ are any coordinate charts for $\RR^n$ and 
$\RR^{m-n}$ respectively, then
$\coordxxe$ is a privileged coordinate chart for $\RR^m$. 
So a) is met as long as, for any $\bfxa$
there is a coordinate chart $\coordxxe$ around $\bfxa$. This is realized by keeping track of the canonical 
projection $\pi:\RR^m\to \RR^n$.

For b), we choose to augment a solution of 
(\ref{eqn:Plant1}) with a component $\bfxe  $ which is constantly zero. This means
that, for the augmented dynamics, expressed with the coordinate chart
$\coordxxe$ as
\begin{equation}
\label{LP174}
 \dot{\overparen{\left(\begin{array}{c}
x \\ \xe 
\end{array}\right)}}\;=\; 
\left(\begin{array}{l}
f_x  (x,\xe )
\\
f_{\xe}(x,\xe )
\end{array}\right)
\  ,
\end{equation}
we impose
\begin{equation}
\label{LP175}
 f_x(x,0)\;=\; f(x)
\quad ,\qquad 
f_{\xe}  (x,0)\;=\; 0
\  .
\end{equation}

Finally, for c), the output function $\bfha$ of the augmented system should
be such that, for any $\bfxa$ we have coordinate charts $\coordxxe$ around $\bfxa$
and
$\coordya$ around $\bfha(\bfxa)$ such that the expressions $\ha$ of $\bfha$ and $h$ of $\bfh$
satisfy
\begin{equation}
\label{LP179}
\ha(x,0)\;=\; h(x)
\  .
\end{equation}
At this point it is important to note that dynamic extension is of no help for satisfying Condition A2.
% 
% we have a vector field $\bffa$ and an output function $\bfha$ such that the augmented dynamics
% \begin{equation}
% \label{LP177}
% \vardot{\bfxa}{x} = \bffa(\bfxa)
% \quad ,\qquad
% \ya\;=\; \bfha(\bfxa)
% \end{equation}
% satisfy condition A2 for some metric $\bfPa$ on some open subset $\Ouv_a$ of $\RR^m$.
% Let also $\bffa$ be such that there exist coordinates $\xa=(x,\xe)$, defined on a Cartesian product 
% $\Ouv\times \Ouv_{\xe}$, with $\Ouv$ an open subset of $\RR^n$ and $\Ouv_{\xe}$ an open neighborhood of the 
% origin in $\RR^{m-n}$,
% and such that the expression $\fa$ of $\bffa$ satisfies (\ref{LP175}) or simply
% \begin{equation}
% \label{trav5}
% \left(\begin{array}{@{}cc@{}}
% 0 & I_{m-n}
% \end{array}\right)\fa(x,0) \;=\; 0
% \  .
% \end{equation}
Indeed, we have the following extension of \cite[Proposition 4.4]{127})

\begin{proposition}
\label{prop:prop19}
Assume that, with a coordinate chart  $\coordxxe$,
Condition A2 holds for the system (\ref{LP174})
satisfying (\ref{LP175}) and (\ref{LP179}), and with the expression $\Pa$ of some metric.
Then, with the particular coordinate chart $\coordx$, the  system
\begin{equation}
\label{LP183}
\dot x= f(x)
\quad ,\qquad 
y=h(x)
\end{equation}
satisfies Condition A2 for a metric $\bfP$ the expression of which is
\begin{equation}
\label{LP176}
P(x) = 
\left(\begin{array}{@{}cc@{}}
I_n & 0
\end{array}\right)
\Pa  (x,0)
\left(\begin{array}{@{}c@{}}
I_n \\ 0
\end{array}\right) 
\  .
\end{equation}
\end{proposition}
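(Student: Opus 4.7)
The plan is to prove the proposition by evaluating the matrix inequality expressing Condition A2 for the augmented system at the submanifold $\xe = 0$ and then extracting its upper-left $n \times n$ block. Concretely, since $(x,\xe)\mapsto \xe$ is constant on the invariant set where $\xe=0$, a matrix inequality $M(x,\xe) \leq N(x,\xe)$ restricted to $\xe=0$ implies $B^\top M(x,0) B \leq B^\top N(x,0) B$ for any constant matrix $B$; applying this with $B = \bigl(\begin{smallmatrix}I_n\\0\end{smallmatrix}\bigr)$ is exactly what yields the formula (\ref{LP176}) for $\bfP$ from $\bfPa$.

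The first step is to write the component form of the Condition A2 inequality for $(f_a,h_a,\Pa)$ using the block decomposition $\Pa = \bigl(\begin{smallmatrix} P_{xx} & P_{x\xe} \\ P_{\xe x} & P_{\xe\xe}\end{smallmatrix}\bigr)$ and $f_a = (f_x,f_\xe)$, and compute its upper-left $n\times n$ block at $(x,0)$. The key computation is that the upper-left block of $\mathcal{L}_{f_a}\Pa$ at $(x,0)$ reduces to $\mathcal{L}_f P(x)$ with $P(x) = P_{xx}(x,0)$. This requires using three consequences of (\ref{LP175}) and (\ref{LP179}):
\begin{itemize}
\item $f_x(x,0) = f(x)$ makes the ``drift'' contribution $\sum_a \tfrac{\partial P_{xx,ij}}{\partial x_a}(x,0)f_{x,a}(x,0)$ collapse to $\sum_a \tfrac{\partial P_{xx,ij}}{\partial x_a}(x,0)f_a(x)$;
\item $f_\xe(x,0) \equiv 0$ in $x$ kills both the term $\tfrac{\partial P_{xx,ij}}{\partial \xe_b}(x,0)f_{\xe,b}(x,0)$ and (by differentiating $f_\xe(\cdot,0)\equiv 0$ in $x$) the two cross terms $P_{x\xe,i\ell}(x,0)\tfrac{\partial f_{\xe,\ell}}{\partial x_j}(x,0)$ and its symmetric partner;
\item the surviving terms $\sum_k P_{xx,ik}(x,0)\tfrac{\partial f_{x,k}}{\partial x_j}(x,0)$ and its symmetric partner coincide with $\sum_k P_{ik}(x)\tfrac{\partial f_k}{\partial x_j}(x)$ because $f_x(x,0)=f(x)$ differentiated in $x$ gives $\tfrac{\partial f_{x,k}}{\partial x_j}(x,0) = \tfrac{\partial f_k}{\partial x_j}(x)$.
\end{itemize}
The sum of these is exactly $[\mathcal{L}_f P(x)]_{ij}$.

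Next, for the right-hand side, the upper-left block of the ``innovation'' term uses $h_a(x,0) = h(x)$, which differentiated in $x$ gives $\tfrac{\partial h_a}{\partial x}(x,0) = \tfrac{\partial h}{\partial x}(x)$. Hence the upper-left block of $\tfrac{\partial h_a}{\partial x_a}(x,0)^\top\tfrac{\partial h_a}{\partial x_a}(x,0)$ equals $\tfrac{\partial h}{\partial x}(x)^\top\tfrac{\partial h}{\partial x}(x)$, and the upper-left block of $-q_a\Pa(x,0)$ is $-q_a P(x)$. Putting these together produces the required inequality (\ref{3}) for $(f,h,P)$ with $\rho(x) = \rho_a(x,0)$ and $q = q_a > 0$. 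Positive-definiteness of $P(x)$ is inherited from $\Pa(x,0)$ by testing on vectors of the form $(v,0)$, so $\bfP$ is indeed a Riemannian metric.

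The main obstacle is purely bookkeeping: to verify carefully that every term in $\mathcal{L}_{f_a}\Pa$ involving $f_\xe$ or its $x$-derivative vanishes at $\xe = 0$, so that the upper-left block is genuinely a function only of $(P_{xx}(\cdot,0), f)$. No step uses completeness of $\bfPa$ to derive the differential inequality; if completeness of the metric $\bfP$ defined by (\ref{LP176}) on $\bfRR^n$ is needed as part of ``Condition A2 holds,'' it must be taken as an additional hypothesis or verified case-by-case from the specific form of $\bfPa$, a point that deserves to be flagged in the proof.
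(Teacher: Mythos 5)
Your proposal is correct and is essentially the same argument as the paper's: evaluate the augmented Condition~A2 at $\xe=0$, invoke (\ref{LP175}) and (\ref{LP179}) to kill every term involving $f_\xe$, its $x$-derivative, or the $\xe$-derivatives of the relevant quantities, and observe that what survives is exactly Condition~A2 for $(f,h,P)$ with $P(x)=P_{xx}(x,0)$. The only cosmetic difference is that you phrase Condition~A2 as the matrix inequality (\ref{3}) and extract its upper-left $n\times n$ block, whereas the paper works directly with the constrained quadratic form (testing on $(v,v_e)$ in the kernel of $\bfdh_a$ and then setting $v_e=0$); these are equivalent formulations and the two derivations are step-for-step parallel. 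Your closing remark about completeness is a fair caveat but not one the paper addresses either, since the proposition only asserts the inequality.
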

\begin{proof}
Condition A2 for the expression (\ref{LP174}) in the coordinates $(x,\xe)$ of the augmented system is:
\\
There exists a strictly positive real number $q_a$ such that,
for all $(x,\xe,v,v_e)$ satisfying
\begin{equation}
\label{LP181}
\frac{\partial \ha}{\partial x}(x,\xe) v
+ \frac{\partial \ha}{\partial \xe}(x,\xe) v_e
\;=\; 0
\  ,
\end{equation}
we have~:
\\[1em]$\displaystyle 
\frac{1}{2}\frac{\partial }{\partial x}
\left\{
\left(\begin{array}{@{}cc@{}}
v^\top & v_e^\top
\end{array}\right) \Pa (x,\xe)\left(\begin{array}{@{}c@{}}
v \\ v_e
\end{array}\right)\right\}f_x(x,\xe)
+
\frac{1}{2}\frac{\partial }{\partial \xe}
\left\{
\left(\begin{array}{@{}cc@{}}
v^\top & v_e^\top
\end{array}\right)
\Pa (x,\xe)
\left(\begin{array}{@{}c@{}}
v \\ v_e
\end{array}\right)
\right\}
f_{\xe}(x,\xe)
$\refstepcounter{equation}\label{LP182}\hfill$(\theequation)$\\\null \hfill $\displaystyle +
\left(\begin{array}{@{}cc@{}}
v^\top & v_e^\top
\end{array}\right)
\Pa (x,\xe)
\left(\begin{array}{@{}cc@{}}
\frac{\partial f_x}{\partial x}(x,\xe)
&
\frac{\partial f_x}{\partial \xe}(x,\xe)
\\[0.5em]
\frac{\partial f_{\xe}}{\partial x}(x,\xe)
&
\frac{\partial f_{\xe}}{\partial \xe}(x,\xe)
\end{array}\right)
\left(\begin{array}{@{}c@{}}
v \\ v_e
\end{array}\right)
\; \leq  \; -q_a
\,  
\left(\begin{array}{@{}cc@{}}
v^\top & v_e^\top
\end{array}\right)
\Pa (x,\xe)
\left(\begin{array}{@{}c@{}}
v \\ v_e
\end{array}\right)
\  .
$\\[1em]
For $\xe=0$,
(\ref{LP175})  and (\ref{LP179}) give
\begin{equation}
\label{LP180}
\frac{\partial f_x}{\partial x}(x,0)\;=\; \frac{\partial f}{\partial x}(x)
\  ,\quad 
f_{\xe}(x,0)\;=\; 0
\  ,\quad 
\frac{\partial f_{\xe}}{\partial x}(x,0)\;=\; 0
\  ,\quad 
\frac{\partial \ha}{\partial x}(x,0)\;=\; \frac{\partial h}{\partial x}(x)
\  .
\end{equation}
So, when $\xe=0$ and $v_e=0$, (\ref{LP181}) is
$$
\frac{\partial h}{\partial x}(x) v\;=\; 0
$$
and  (\ref{LP182}) is
\\[1em]$\displaystyle 
\frac{1}{2}\frac{\partial }{\partial x}
\left\{
\left(\begin{array}{@{}cc@{}}
v^\top & 0
\end{array}\right)
\Pa (x,0)
\left(\begin{array}{@{}c@{}}
v \\ 0
\end{array}\right)\right\}f(x)
$\hfill \\\null \hfill $\displaystyle +
\left(\begin{array}{@{}cc@{}}
v^\top & 0
\end{array}\right)
\Pa (x,0)
\left(\begin{array}{@{}cc@{}}
\frac{\partial f}{\partial x}(x)
&
\frac{\partial f_x}{\partial \xe}(x,0)
\\[0.5em]
0
&
\frac{\partial f_{\xe}}{\partial \xe}(x,0)
\end{array}\right)
\left(\begin{array}{@{}c@{}}
v \\ 0
\end{array}\right)
\; \leq \;  
-q_a
\,  
\left(\begin{array}{@{}cc@{}}
v^\top & 0
\end{array}\right)
\Pa (x,0)
\left(\begin{array}{@{}c@{}}
v \\ 0
\end{array}\right)
$\\[1em]
With (\ref{LP176}), we have established:\\
There exists a strictly positive real number $q_a$ such that, for all 
$(x,v)$ satisfying
$$
\frac{\partial h}{\partial x}(x) v 
\;=\; 0
\  ,
$$
we have~:
$$
\frac{1}{2}\frac{\partial }{\partial x}
\left\{
v^\top P(x) v \right\}f(x)
\;+\; 
v^\top 
P(x)\frac{\partial f}{\partial x}(x)v 
\; \leq \;  -q_a v^\top P (x) v
\  .
$$
This is Condition A2 for the system (\ref{LP183}).
\end{proof}

We conclude that, if we want an augmented system satisfying Condition A2, we need to have this condition 
already satisfied for the given system and moreover we must have the relation (\ref{LP176}) between the 
metrics. So dynamic extension does not bring any loosening to Condition A2.

This is different for Condition A3. Indeed, we know, from Isometric Embedding Theorems\footnote{
% BEGIN FOOTNOTE
We are very grateful to Vincent Andrieu, from  LAGEP in Lyon, for suggesting this link.
% END FOOTNOTE
}, that with $m$ sufficiently large, 
there is no loss of generality in imposing $\bfPa$ to be flat, i.e. we have a global coordinate chart
$(\barcoordxap,\RR^m,i_d)$ such that the expression of $\bfPa$ is the identity matrix. In this case, as noticed in Example 
\ref{ex4}, Condition A3 holds if Assumption~\ref{H2} is satisfied and, for any $\bfxa$,
there exists a coordinate chart $\barcoordya$ around $\bfha(\bfxa)$ and a constant matrix $C$
such that the expression $\barha$ of $\bfha$ in these coordinates, i.e.
$$
\barha(\barcoordxap)
\;=\; 
\coordyam(\bfha(i_d^{-1}(\barcoordxap)))
$$
satisfies
$$
\barha(\barcoordxap)\;=\; C\,  \barcoordxap
\qquad \forall \barcoordxap:
\bfha(i_d^{-1}(\barcoordxap))\in \coordyad
\  .
$$

\begin{example}[{See \cite[Section II.A]{Bernard-Praly-Andrieu-CDC15}}]
\bgroup
\normalfont
With choosing $m=4$, the dynamics of an augmented system for the harmonic oscillator with unknown frequency 
takes the form~:
\begin{equation}
\label{LP168}
\renewcommand{\arraystretch}{1.5}
\begin{array}{lcl}
\dot y &=&\displaystyle  \
\xrond_\indxra +  f_y(y,\xrond_\indxra,\xrond_\indxrb,\xe)\xe
\  ,
\\
\dot \xrond_\indxra&=& \displaystyle-y \xrond_\indxrb +  f_\indxra(y,\xrond_\indxra,\xrond_\indxrb,\xe)\xe
\  ,
\\
\dot \xrond_\indxrb&=&\displaystyle  f_\indxrb(y,\xrond_\indxra,\xrond_\indxrb,\xe)\xe
\  ,
\\
\vardot{\xe}{x}&=&\displaystyle  f_{\xe}(y,\xrond_\indxra,\xrond_\indxrb,\xe)\xe
\   ,
\\
\ya&=&\displaystyle  y +  f_h(y,\xrond_\indxra,\xrond_\indxrb,\xe)\xe
\  .
\end{array}
\end{equation}

We look for the functions $f_y$, $f_\indxra$, $f_\indxrb$, $f_{\xe}$ and $f_h$, the constant positive definite symmetric matrix $\bar \Pa$
and diffeomorphisms $\changexa$ and $\changeya$ such that Condition A2 
holds with
$$
\Pa(y,\xrond_\indxra,\xrond_\indxrb,\xe)\;=\; \left(\begin{array}{c}
\displaystyle 
\frac{\partial \changexa}{\partial y}^\top
\\\displaystyle 
\frac{\partial \changexa}{\partial \xrond_\indxra}^\top
\\\displaystyle 
\frac{\partial \changexa}{\partial \xrond_\indxrb}^\top
\\\displaystyle 
\frac{\partial \changexa}{\partial \xe}^\top
\end{array}\right)
\bar \Pa
\left(\begin{array}{cccc}
\displaystyle 
\frac{\partial \changexa}{\partial y}
&\displaystyle 
\frac{\partial \changexa}{\partial \xrond_\indxra}
&\displaystyle 
\frac{\partial \changexa}{\partial \xrond_\indxrb}
&\displaystyle 
\frac{\partial \changexa}{\partial \xe}
\end{array}\right)
$$
$$
\changeya(y + \xe f_{\xe}(y,\xrond_\indxra,\xrond_\indxrb,\xe))\;=\; 
C \changexa(y,\xrond_\indxra,\xrond_\indxrb,\xe)
$$
These two equations means that, with the coordinates
$$
\barcoordxap\;=\; \changexa(\ya,\xrond_\indxra,\xrond_\indxrb,\xe)
\quad ,\qquad 
\barcoordyap\;=\; \changeya(\coordyap)
$$
the expression of the metric $\bfPa$ is the constant matrix $\bar \Pa$ and the expression of the output 
function $\bfha$ is linear. So, as reminded above, with the function $\wpunbf$ chosen as
$$
\wpunbf (y_{a1},y_{a2})\;=\; |y_{a1}-y_{a2}|^2
\  ,
$$
these two equations guarantee Condition A3 holds for the augmented system.

Inspired by the fact that the harmonic oscillator is strongly differentially observable of order $4$, we 
choose $y$ and its first $3$ derivatives as coordinates for $\bfxa$. Precisely we choose the diffeomorphisms 
$
\changexa
$, on the set $\Ouv_\varepsilon $ defined in (\ref{LP205}), and $\changeya$ and the function $f_h$ as
$$
\changexa(y, \xrond_\indxra, \xrond_\indxrb , \xe )=(y, \xrond_\indxra, -y \xrond_\indxrb + \xrond_\indxra \xe ,- \xrond_\indxra \xrond_\indxrb- y\xe)
\  ,\quad 
\changeya(y)\;=\; y
\  ,\quad 
f_h(y,\xrond_\indxra,\xrond_\indxrb,\xe)\;=\; 0
\  .
$$
With these coordinates the dynamics (\ref{LP168}) are
\begin{eqnarray*}
\dot y &=& \xrond_\indxra +  f_y(y,\xrond_\indxra,\xrond_\indxrb,\xe)\xe
\  ,
\\
\vardot{\bar \xrond_\indxra}{x}
&=&  -y \xrond_\indxrb + f_\indxra(y,\xrond_\indxra,\xrond_\indxrb,\xe)\xe
\\
&=&
\bar \xrond_\indxrb
+  [ f_\indxra(y,\xrond_\indxra,\xrond_\indxrb,\xe)-\xrond_\indxra]\xe
\  ,
\\
\vardot{\bar \xrond_\indxrb}{x}&=&-\xrond_\indxra\xrond_\indxrb -
y  f_\indxrb(y,\xrond_\indxra,\xrond_\indxrb,\xe) \xe
- y \xrond_\indxrb \xe
+
\xrond_\indxra \xe ^2
+
\xrond_\indxra  f_{\xe}(y,\xrond_\indxra,\xrond_\indxrb,\xe) \xe
\\
&=&
\barxe
+y[1-  f_\indxrb(y,\xrond_\indxra,\xrond_\indxrb,\xe)-\xrond_\indxrb ] \xe
+
\xrond_\indxra  [\xe+f_{\xe}(y,\xrond_\indxra,\xrond_\indxrb,\xe)] \xe
\  ,
\\
\vardot{\barxe}{x}
&=&
-
[-y \xrond_\indxrb +  f_\indxra(y,\xrond_\indxra,\xrond_\indxrb,\xe)\xe]
\xrond_\indxrb
-
\xrond_\indxra
[f_\indxrb(y,\xrond_\indxra,\xrond_\indxrb,\xe)\xe]
\\&&\qquad \qquad \qquad \qquad \qquad \qquad -
[\xrond_\indxra +  f_y(y,\xrond_\indxra,\xrond_\indxrb,\xe)\xe]\xe
-
y
[f_{\xe}(y,\xrond_\indxra,\xrond_\indxrb,\xe)\xe]
\  .
\end{eqnarray*}
As expected from the particular choice of $\changexa$, by choosing
\begin{equation}
\label{LP229}
\left.
\renewcommand{\arraystretch}{1.5}
\begin{array}{rcl@{\quad ,\qquad }rcl@{\qquad }}
f_y(y,\xrond_\indxra,\xrond_\indxrb,\xe)&=& 0
&
f_\indxra(y,\xrond_\indxra,\xrond_\indxrb,\xe)&=& \xrond_\indxra
\  ,
\\
f_\indxrb(y,\xrond_\indxra,\xrond_\indxrb,\xe)&=&1-\xrond_\indxrb
&
f_{\xe}(y,\xrond_\indxra,\xrond_\indxrb,\xe)&=& -\xe
\  ,
\end{array}
\right\}
\end{equation}
we get the chain of integrators
\begin{eqnarray*}
\dot y &=& \barxrond_\indxra
\  ,
\\
\vardot{\barxrond_\indxra}{x}
&=& \bar \xrond_\indxrb
\  ,
\\
\vardot{\bar \xrond_\indxrb}{x}&=&
\barxe
\end{eqnarray*}
and
$$
\vardot{\barxe}{x}
=
-
[-y \xrond_\indxrb +  \xrond_\indxra\xe]
\xrond_\indxrb
-
\xrond_\indxra
[1-\xrond_\indxrb]\xe
-
\xrond_\indxra \xe
+
y\xe^2
$$
where
$$
\xrond_\indxrb\;=\;
-\frac{y\bar \xrond_\indxrb+\barxrond_\indxra \barxe}{y^2+\barxrond_\indxra^2}
\quad ,\qquad 
\xe\;=\; \frac{\barxrond_\indxra \bar \xrond_\indxrb-y\barxe}{y^2+\barxrond_\indxra^2}
$$
This gives
\begin{eqnarray*}
\vardot{\barxe}{x}
&=& \bar \xrond_\indxrb \frac{y\bar \xrond_\indxrb+\barxrond_\indxra
\barxe
}{y^2+\barxrond_\indxra^2}
-\barxrond_\indxra \left(1+\frac{y\bar \xrond_\indxrb+\barxrond_\indxra
\barxe
}{y^2+\barxrond_\indxra^2}\right) 
\frac{\barxrond_\indxra \bar \xrond_\indxrb-y
\barxe
}{y^2+\barxrond_\indxra^2}
- \barxrond_\indxra \frac{\barxrond_\indxra \bar \xrond_\indxrb-y
\barxe
}{y^2+\barxrond_\indxra^2}
+y \left(\frac{\barxrond_\indxra \bar \xrond_\indxrb-y
\barxe
}{y^2+\barxrond_\indxra^2}\right)^2
\\
&=&
\frac{y\bar \xrond_\indxrb ^2+\barxrond_\indxra \bar \xrond_\indxrb
\barxe
-\barxrond_\indxra^2\bar \xrond_\indxrb+y\barxrond_\indxra
\barxe
}{y^2+\barxrond_\indxra^2}
-\barxrond_\indxra 
+
\frac{
-\barxrond_\indxra [y\bar \xrond_\indxrb+\barxrond_\indxra 
\barxe
][\barxrond_\indxra \bar \xrond_\indxrb-y
\barxe
]
+y [\barxrond_\indxra \bar \xrond_\indxrb-y
\barxe
]^2
}{[y^2+\barxrond_\indxra^2]^2}
\\
&=&
\frac{y\bar \xrond_\indxrb ^2+\barxrond_\indxra \bar \xrond_\indxrb
\barxe
-\barxrond_\indxra^2\bar \xrond_\indxrb+y\barxrond_\indxra
\barxe
}{y^2+\barxrond_\indxra^2}
-\barxrond_\indxra 
+
\frac{[y^2+\barxrond_\indxra^2] y \xe^2 - \barxrond_\indxra^3 \xrond_\indxrb \xe
}{[y^2+\barxrond_\indxra^2]^2}
\\
&=&
\frac{y\bar \xrond_\indxrb ^2+\barxrond_\indxra \bar \xrond_\indxrb
\barxe
-\barxrond_\indxra^2\bar \xrond_\indxrb+y\barxrond_\indxra
\barxe
+
y \xe^2
}{y^2+\barxrond_\indxra^2}
-\barxrond_\indxra 
+
\frac{ - \barxrond_\indxra^3 \xrond_\indxrb \xe
}{[y^2+\barxrond_\indxra^2]^2}
\\
&=&
\bar f_{\xe}(y,\barxrond_\indxra,\barxrond_\indxrb,\barxe)
\end{eqnarray*}
Hence we have the observer form
\begin{eqnarray*}
\dot y &=& \barxrond_\indxra 
\  ,
\\
\vardot{\barxrond_\indxra}{x}&=& \bar \xrond_\indxrb
\  ,
\\
\vardot{\barxrond_\indxrb}{x}&=&
\barxe
\  ,
\\
\vardot{\barxe}{x}
&=&
\bar f_{\xe}(y,\barxrond_\indxra,\barxrond_\indxrb,\barxe)
\end{eqnarray*}
It is known that a high gain observer can be used. It follows that Condition A2 holds with
a metric $\bfPa$ the expressions of which with the coordinates $(y,\xrond_\indxra,\barxrond_\indxrb,\barxe)$
is a constant matrix $\bar \Pa$.

The observer given by (\ref{eqn:GeodesicObserverVectorField}) 
for the augmented system (\ref{LP168}) with (\ref{LP229}), i.e.
$$
\dot{\overparen{\left(\begin{array}{c}
y 
\\
\xrond_\indxra
\\
\xrond_\indxrb
\\
\xe
\end{array}\right)}}
\;=\; 
\left(\begin{array}{c}
\displaystyle  \xrond_\indxra 
\\
\displaystyle-y \xrond_\indxrb +  \xrond_\indxrb\xe
\\
\displaystyle  [1-\xrond_\indxrb]\xe
\\-\displaystyle  \xe^2
\end{array}
\right)
$$
is
$$
\dot{\overparen{\left(\begin{array}{@{\,  }c@{\,  }}
\hat y 
\\
\hat \xrond_\indxra
\\
\hat \xrond_\indxrb
\\
\hatxe
\end{array}\right)}}
=
\left(\begin{array}{@{\,  }c@{\,  }}
\displaystyle  \hat \xrond_\indxra 
\\
\displaystyle-\hat y \hat \xrond_\indxrb +  \hat \xrond_\indxrb\xe
\\
\displaystyle  [1-\hat \xrond_\indxrb]\hatxe
\\-\displaystyle  \hatxe^2
\end{array}
\right)
-\,  k_E(\hat y,\hat \xrond_\indxra,\hat \xrond_\indxrb,\hatxe)
\left(\begin{array}{@{\,  }cccc@{\,  }}
1 & 0 & 0 & 0
\\
0 & 1 & 0 & 0
\\
-\hat \xrond_\indxrb & \hatxe & -\hat y & \hat \xrond_\indxra
\\
-\hatxe & -\hat \xrond_\indxrb & -\hat \xrond_\indxra & -\hat y
\end{array}\right)^{-1}
\bar \Pa^{-1}
\left(\begin{array}{@{\,  }c@{\,  }}
1 \\ 0  \\ 0 \\ 0
\end{array}\right)
[\hat y -y]
$$
Since any solution of the  harmonic oscillator with unknown frequency 
augmented with $\xe=0$ is a solution of the augmented system,
it is an observer when $y$ is the actual measurement.
\egroup
\end{example}

% Complement
% \pointtoextcomp{More comments on dynamic extension are given in Complement \ref{complement31}.}{complement31}
More comments on dynamic extension are given in the \complement \ref{complement31}.

% ------------
\stoparchive %
% ------------

\section{Conclusions}
\subsection{Conclusions of this paper}
In \cite{57}, we have established that an observer, the 
correction term of which is based on a gradient of a ``gap'' function between 
measured output and estimated output, converges when
a strong differential detectability condition (Condition A2) holds
and when the output function is geodesic monotone (Condition A3) holds.

In \cite{127}, we have shown how, for a given system  for which all the 
variational systems are reconstructible, we can design a metric 
satisfying Condition A2.

In this paper, we have shown that Condition A3 is strongly linked to the 
nullity of the second fundamental form of the output function. 
Actually these two properties are equivalent when the dimension $p$ of the 
$\bfy$-manifold $\RR^p$ is $1$. When $p$ is larger than $1$, the latter implies 
always the former but, for the converse, we need the extra assumption that the 
orthogonal distribution is involutive. With 
this study we have been able to propose a design tool for obtaining a metric satisfying 
Condition A3.
This tool, described in Theorem~\ref{prop17}, is systematic in the sense that it does not rely on some 
equation or inequality to solve. It is a formula (see (\ref{LP166})) giving the expression of the metric in any given 
coordinate chart.

% ------------
\startarchive
% ------------
We have illustrated that two techniques -- immersion in an input-dependent system and dynamic extension --
could be helpful to ease the satisfaction of Conditions A2 and A3. But it remains ``to turn our dream into 
reality''. For example it is known in the high gain observer paradigm that dynamic extension solves the problem when the system is strongly 
differentially observable with an order $m$, strictly larger than $n$. Is it still the case with a weaker 
assumption ? For this topic it could be useful to exploit what is known on immersion and dynamic extension
as reported in \cite{Ticlea-Besancon,Bernard.ea.18}.
% ------------
\stoparchive %
% ------------

In our study, we have left open many problems. We point out
\startmodif
only two
\stopmodif
of them:
\begin{enumerate}
\item
\textit{%
Is the involutivity of the orthogonal distribution necessary for  Condition A3 to hold?}
\\
In case of a positive answer, the nullity of the second fundamental form of the output 
function would be equivalent to Condition A3 and, more interestingly, our design procedure would construct
all the metrics satisfying Condition A3.
\item
\textit{%
Under which conditions does a nonlinear counterpart of Heymann's Lemma \cite{Heymann,Hautus} holds?
}
\\
More precisely, the problem to solve is the following:\\
\textit{Assume system (\ref{eqn:Plant1}) satisfies Condition A2 and the level sets of the given output function 
$\bfh$ are totally geodesic. Do there exist  
functions $\mathfrak{f}:\RR^n\times\RR^p\times\RR^p\to \RR^n$ 
and $\mathfrak{h}:\RR^p\to \RR$  satisfying
\IfTwoCol{%
% two col
$
\mathfrak{f}(\bfx,\bfh(\bfx),\bfh(\bfx))\;=\; \bff(\bfx)
$
and such that Conditions A2 and A3 hold for the modified system
$
\dot \bfx = \mathfrak{f}(\bfx,\bfh(\bfx),y(t))$, 
$\bfy_{mod}=\mathfrak{h}(\bfh(\bfx))$
}{%
% one col
$$
\mathfrak{f}(\bfx,\bfh(\bfx),\bfh(\bfx))= \bff(\bfx)
$$
and such that Conditions A2 and A3 hold for the modified system
$$
\dot \bfx = \mathfrak{f}(\bfx,\bfh(\bfx),\bfy(t))
\quad ,\qquad
\bfy_{mod}=\mathfrak{h}(\bfh(\bfx))
$$
}%
where $t\mapsto \bfy(t)$ is considered as an 
(known) 
input function.}\\
This modified system is obtained with output injection in the function $\mathfrak{f}$ and the reduction to $1$ of the 
number of outputs via the function $\mathfrak{h}$. In case of a positive answer, the extra conditions of
involutivity of the orthogonal distribution would be unnecessary.%
\end{enumerate}

\subsection{Conclusions on our study of convergence of observers with a Riemannian Metric}
\ricardo{HERE}

\startmodif
As written in the introduction, the three papers (\cite{57}, \cite{127}, and this one)
formulate 
sufficient conditions that are as close as possible to necessary conditions for the design of
% MODIF-LP \begin{enumerate}
% MODIF-LP \item
% MODIF-LP an observer; namely, a dynamical system
% MODIF-LP with a state evolving in the same space as the (true) state of a system, giving rise to the zero estimation error,
% MODIF-LP which is the set of points where the estimated state is equal to the true state,
% MODIF-LP \item
% MODIF-LP with convergence established by the decrease of a Riemannian distance between the estimated and the true state, and 
% MODIF-LP \item
% MODIF-LP with the zero estimation error being forward invariant,
% MODIF-LP \item
% MODIF-LP and with an infinite gain margin.
% MODIF-LP \end{enumerate}
\begin{enumerate}
\item
an observer; namely, a dynamical system
with a state evolving in the same space as the (true) state of the given system,
\item
with convergence established by the decrease of a Riemannian distance between the estimated and the true state, 
\item
with the set (\ref{LP233}) of points where the estimated state is equal to the true state
being forward invariant,
\item
and with an infinite gain margin.
\end{enumerate}
A key motivation for this effort is assessing if
contraction theory is a fundamental tool for analyzing observer convergence.
We have put this into practice by studying the effect of the flow of the system-observer pair on a Riemannian distance 
between the estimate  
generated by the observer and the system state, knowing that the Euclidean case,
with therefore appropriately 
chosen coordinates, had been dealt with already (see, e.g.,  
\cite{Tsinias.90,Praly.01.NOLCOS.Observers}).

From our study, we conclude that the expected 
 condition of differential detectability (Condition A2), related to a contraction property in the 
tangent space to the level sets of the output function, is not by itself sufficient to obtain a
(at most local) convergent observer. 
It is also required for these level sets to be at least totally geodesic. Our results indicate
that this latter condition,
related to convexity, is likely a consequence of the fact that
an observer is, to some extent, searching for the global minimum of a cost function
that depends on the output error.  
To the best of our knowledge, this condition has not been proposed before,
the reason being perhaps
that, when the coordinates are such that the output function is 
linear and the metric is Euclidean; i.e., when the system is in the ``Euclidean family'' of Remark~\ref{rem1},
the said condition is automatically satisfied.
\stopmodif

Our study remained at a theoretical level and has not addressed real-world applications, mainly due to the difficulty of satisfying Conditions A2 and A3 simultaneously.
These conditions are of completely different nature and,
for the time being,
we do not know of a systematic way for having both satisfied for general systems.
% The systematic design methods we have proposed relies on the observer canonical 
% form given by differential observability or on the nonlinear Luenberger observer. 
In one way or the other,
other methods  assume
 the knowledge of a family of metrics satisfying the two properties, e.g., there are 
coordinates for which the output function is linear and the pair $(f,h)$ is differentially detectable with 
respect to a constant metric.
Nevertheless, we have reduced the 
problem of simultaneously satisfying Conditions A2 and A3 to
finding functions $h^\ortho$, of rank $n-p$ and such that $(h,h^\ortho)$ has rank $n$, and $\Pxi$ with
positive definite values, satisfying (\ref{LP218}) for some $q>0$. Also, fortunately, as shown in \cite{127},
Condition A2 only is already sufficient to obtain a locally convergent observer.

\startmodif
Ultimately, rather than advocating for a new observer design technique, our work presents substantiated arguments about the  
advantages and disadvantages/limitations of designing observers based on 
contracting a Riemannian distance. In these regards we appropriate the words of D.C. Lewis who,
after proposing in \cite{Lewis.71}
% @Article{Lewis.51,
%  Title    = {Differential equations referred to a variable metric},
%  Author   = {F. Deza, E. Busvelle, J.-P. Gauthier, D. Rakotopara},
%  Journal  = {American Journal of Mathematics},
%  Year     = {1951},
%  Pages    = {48-58},
%  Volume   = {73}
% }
contraction by flows of a Finsler distance (which includes as a special case a 
Riemannian distance) to study the dependence of solutions of dynamical systems on the initial conditions, wrote
in \cite{Lewis.51}
\begin{quote}
\textit{
It thus was felt that the method [\textsf{= contraction}] as so far developed [\textsf{in \cite{Lewis.71}}] was intrinsically too 
crude to yield the desired results in applications.
}
\end{quote}
\stopmodif

\IfReport{%
% report
\newpage
}{%
% not report
\vspace{-0.1in}
}

\bibliographystyle{unsrt} 
\bibliography{long,ObserverGeodesic,RGS,RGSweb}

\IfReport{%
% report
\newpage
}{%
% not report
\vspace{-0.15in}
}

\section*{\textbf{Appendix}}
\addcontentsline{toc}{section}{Appendix}
% \appendices

%%%%%%%%%%%%%%%%%%%%%%%%%%%%%%%%%%%%%%
%                                    %
%   redefine counters for appendix   %
%                                    %
%%%%%%%%%%%%%%%%%%%%%%%%%%%%%%%%%%%%%%

\renewcommand{\thesection}{A\arabic{section}}
\renewcommand{\thesubsection}{\thesection-\arabic{subsection}}
\renewcommand{\thesubsubsection}{\thesubsection-\arabic{subsubsection}}
\renewcommand{\theHsection}{A\arabic{section}}
\renewcommand{\theHsubsection}{\thesection-\arabic{subsection}}
\renewcommand{\theHsubsubsection}{\thesubsection-\arabic{subsubsection}}
% \setcounter{section}{0}
% \setcounter{subsection}{0}
% \setcounter{subsubsection}{0}

%\IfReport{
%}
%{
\setcounter{section}{0}
%}
\section{Glossary}
\label{sec:Glossary}
\IfReport{%
% report
%%%%%%%%%%%%%%%%%%%%%%
%                    %
%   Report version   %
%                    %
%%%%%%%%%%%%%%%%%%%%%%
As a complement or maybe an introduction to the following Glossary,
we recommend reading \cite[Section 1]{Figalli-Villani}.

\begin{enumerate}
\parskip 0pt plus 0pt minus 0pt%
\topsep 0pt plus 0pt minus 0pt%
\parsep 0pt plus 0pt minus 0pt%
\partopsep 0pt plus 0pt minus 0pt%
\itemsep 0pt plus 0pt minus 0pt%
\settowidth{\labelwidth}{$\bullet$}%
\setlength{\labelsep}{0.5em}%
\setlength{\leftmargin}{\labelwidth}%
\addtolength{\leftmargin}{\labelsep}%
\item\label{item:RiemannianMetric}
A Riemannian metric $\bfP$ is a symmetric $2$-covariant tensor with positive 
definite values.

The associated Christoffel symbols $\Gamma_{\indxa\indxb}^\indxc$ expressed in coordinates $x$ are
\begin{equation}
\label{eqn:ChristoffelSymbols}
\Gamma_{\indxa\indxb}^\indxc(x)\!=\! \frac{1}{2}\sum_{\indxd}
(P(x)^{-1})_{\indxc\indxd}
\left[\frac{\partial P_{\indxa\indxd}}{\partial x_\indxb}(x)
+
\frac{\partial P_{\indxb\indxd}}{\partial x_\indxa}(x)
-
\frac{\partial P_{\indxa\indxb}}{\partial x_\indxd}(x)
\right]
\  .
\end{equation}

The corresponding geodesic equation is, in its Euler-Lagrange form
\begin{equation}\label{14}
2\frac{d}{ds}\left\{\bfP(\bfgamma (s))\frac{d\bfgamma }{ds}(s)\right\}
\;=\; \left.\bfd_x\left\{\frac{d\bfgamma }{ds}(s)^\top \bfP(\bfx) \frac{d\bfgamma 
}{ds}(s)\right\}\right|_{\bfx=\bfgamma (s)}
\end{equation}
or with coordinates $x$ for $\bfx$
$$
\frac{d^2\gamma _\indxa}{ds^2}(s)\;=\; -\sum_{\indxb,\indxc} \Gamma _{\indxb\indxc}^\indxa (\gamma (s))\frac{d\gamma 
_\indxb}{ds}(s)\frac{d\gamma _\indxc}{ds}(s)
$$

\item
\label{Glos4}
The Lie derivative $\mathcal{L}_\bff \bfP$ of a symmetric $2$-covariant tensor 
$\bfP$ along the vector field $\bff$
 is a symmetric $2$-covariant tensor.
With coordinates $x$, its expression is, for all $v$ in $\RR ^n$,
\begin{eqnarray*}
\displaystyle v^\top \mathcal{\mathcal{L}}_f P(x)\,  v &  = &
v^\top 
\left.\frac{d}{dt}
\left\{\frac{\partial X}{\partial x}(x,t)^\top
P(X(x,t))
\frac{\partial X}{\partial x}(x,t) 
\right\}\right|_{t=0}
v
\\
&=&
\frac{\partial}{ \partial x}\left\{
\vrule height 1em depth 0.5em width 0pt
v^{\top} P(x) \,  v\right\}\,  f(x) \;+\; 2\,  
v^{\top} P(x)\left(\frac{\partial f}{\partial x}(x) \,  v\right)
\end{eqnarray*}

We would like the reader to distinguish the notation
$\mathcal{L}_fP$ for the Lie derivative of a symmetric a symmetric $2$-covariant tensor from
$L_fa $, which is used for the more usual Lie derivative of a function $a $.
% \item
% Given a vector field $\bff$ and a geodesic $\bfgamma $, we have
% \begin{equation}
% \label{LP77}
% \frac{d}{ds}\left\{\frac{d\bfgamma  }{ds}(s)^\top
% \bfP(\bfgamma   ^*(s) )  \bff(\bfgamma   (s))\right\}
% \;=\;\frac{1}{2}\,  \frac{d\bfgamma  }{ds}(s)^\top \mathcal{L}_\bff \bfP(\bfgamma (s))\,  \frac{d\bfgamma }{ds}(s)
% \end{equation}
% A vector field $f$ is said (respectively strictly) geodesically monotonic with respect to $P$ if it satisfies~:
% $$
% v^\top \mathcal{L}_f P(x)\,   v
% \;\leq \; \  (\textrm{respectively}\; <\; )\quad  0\quad ,\qquad \qquad 
% \forall (x,v) \in \RR ^n\times \SS^{n-1}
% $$
% It is well known at least since \cite{Lewis.71}, that this monotonicity property implies the flow induced by 
% $f$ is a (respectively strict) contraction. The converse holds also. See \cite[Theorem 5.33]{Isac.Nemeth.08}.
% 
% The system given by a pair $(f,h)$ is said to have
% the (respectively strong) differential detectability property with respect to $P$ or $f$ is
% (respectively strictly) geodesically monotonic with respect to $P$ tangentially to
%  the  level sets of the function $h$ if it satisfies~:
% $$
% v^\top \mathcal{L}_f P(x) v\; \leq \; 0
% \quad (\textrm{respectively} \  \;< \; 0)
% \quad ,\qquad 
% \forall (x,v) \in \RR ^n\times\SS^{n-1}\,  :\: \frac{\partial h}{\partial x}(x)\,  v = 0\ .
% $$
\item
\label{Glos1}
Given a function $\bfh:\RR^n\to \RR^p$,
\begin{itemize}
\item
$\bfdh$ denotes its differential the expression of which, with the coordinates 
$x$, is
$\frac{\partial h_\indyi}{\partial x_\indxa}(x)$.
With $\otimes$, a tensor product, $\bfdh \otimes \bfdh$ is a symmetric $2$-covariant tensor  
the expression of which, with coordinates $x$ is
$$
\left(dh \otimes dh\right)(x)_{\indxa\indxb}\;=\; \sum_\indyi \frac{\partial h_\indyi}{\partial x_\indxa}(x) 
\frac{\partial h_\indyi}{\partial x_\indxb}(x)
\  .
$$
\item
$\bfd^2\bfh$ denotes the second differential the expression of which, with the coordinates 
$x$, is
$\frac{\partial ^2h_\indyi}{\partial x_\indxa\partial x_\indxb}(x)$.
% It defines, for each $\indyi$, a symmetric bilinear form which obeys the 
% rules of a symmetric $2$-covariant tensor  when acting 
% on a vector in the kernel of $\bfdh$.
\end{itemize}
When a function $\wpunbf$ has more than two arguments, a subscript attached to $\wpunbf$
denotes with respect to which argument the differential is computed.
For instance, 
the second differential with respect to the first argument
of the function $(\bfy_1,\bfy_2) \mapsto \wpunbf(\bfy_1,\bfy_2)$
is denoted $\bfd_1^2\wpunbf$.
\item
\label{Glos2}
Given a Riemannian metric $\bfP$ and a function $\bfh$, $\bfgrad_{\bfP} \bfh$ denotes the 
(Riemannian) gradient of $\bfh$. Its expression with coordinates $x$ is
$$
\grad_P h(x)\;=\; P(x)^{-1}\frac{\partial h}{\partial x}(x)^\top
\  .
$$
Given a path $\bfgamma $, we have
$$
\frac{d}{ds}\left\{\bfh(\bfgamma (s))\right\}\;=\; \bfgrad_\bfP \bfh(\bfgamma (s))^\top \bfP(\bfgamma (s))\,  
\frac{d\bfgamma }{ds}(s)
$$
See \cite[\S 1.3]{Udriste.94}.
\item
\label{Glos3}
Given a Riemannian metric $\bfP$ and
a function
 $\bfh$, 
$\bfHess _\bfP \bfh$ denotes the (Riemannian) Hessian of $\bfh$.
It is a 2-covariant to 1-contravariant tensor defined as
\begin{equation}
\label{LP156}
\bfHess _\bfP \bfh (\bfx) \;=\; \frac{1}{2}\,  \mathcal{L}_{\bfgrad _\bfP \bfh} \bfP(\bfx)
\  .
\end{equation}
Its expression with coordinates $x$ is
$$
(\Hess _P h (x))_{\indxa\indxb}=\frac{\partial ^2h}{\partial x_\indxa\partial x_\indxb}(x)
-
\sum_{\indxc}\Gamma _{\indxa\indxb}^\indxc(x)
\frac{\partial h}{\partial x_\indxc}(x)
\  .
$$
Given a geodesic $\bfgamma $, we have (see \cite[Exercise 3.16]{ONeill.83}
or the \complement \ref{complement41})
\begin{equation}
\label{LP78}
\frac{d^2}{ds^2}\left\{
\vrule height 0.5em depth 0.5em width 0pt \bfh(\bfgamma (s))\right\}
\;=\; \frac{d\bfgamma }{ds}(s)^\top \bfHess _\bfP \bfh (\bfgamma (s))\frac{d\bfgamma }{ds}(s)
\end{equation}
%
% \item
% The length of a $C^1$ path
% $\gamma $ between points $x_1$ and $x_2$ 
% is defined as
% $$
% \left.\vrule height 1em depth 0.5em width 0pt
% L(\gamma )\right|_{s_1}^{s_{b}}\;=\; \int_{s_1}^{s_2}
% \sqrt{
% \frac{d\gamma }{ds}(s) ^{\top}P(\gamma (s)) \frac{d\gamma }{ds}(s)
% } \,  ds,
% $$
% where
% $
% \gamma (s_1)\;=\; x_1$ and $\gamma (s_2)\;=\; x_2$.
%
See \cite[\S 1.3]{Udriste.94}.
\item
\label{Glos7}
The Riemannian distance $d(x_1,x_2)$ is 
the minimum of $\left.\vrule height 1em depth 0.5em width 0pt
L(\gamma )\right|_{s_1}^{s_2}$ among all possible piecewise $C^1$ 
paths $\gamma $ between $x_1$ and $x_2$.
A minimizer giving the 
distance is called a
minimizing geodesic and is denoted $\gamma ^*$. 
\item
\label{Glos8}
A topological space equipped with a Riemannian distance is complete
when every geodesic
can be maximally extended to $\RR$.
%
% \item
% \label{Glos6}
% A subset $S$ of $\RR^n$ is said to be weakly (respectively strongly or totally) geodesically convex\footnote{
% % BEGIN FOOTNOTE
% The notion in \cite[Definition~2.6]{57}
% is actually defining weak geodesic convexity
% since it only requires the existence of a minimal geodesic
% to stay in the set. 
% % END FOOTNOTE
% }
% if, for any pair  of points $(x_1,x_2) $ in $S\times S
% $, there exists a (respectively all) minimizing geodesic 
% $\gamma ^*$
% between $x_1=\gamma ^*(s_1)$ and $x_2=\gamma ^*(s_2)$
% satisfying
% $
% \gamma ^*(s)\in S$
% for all $s\in [s_1,s_2]$.
% A trivial consequence is that any two points in a weakly geodesically convex
% can be linked by a minimizing geodesic.
\end{enumerate}
}{%
% not report
%%%%%%%%%%%%%%%%%%%%%%%%
%                      %
%    Journal version   %
%                      %
%%%%%%%%%%%%%%%%%%%%%%%%
We give here a complement to the glossary in \cite{127}. 
Also, we recommend reading \cite[Section 1]{Figalli-Villani}.
\begin{enumerate}
\item\label{item:RiemannianMetric}
A Riemannian metric $\bfP$ is a  symmetric covariant $2$-tensor with positive 
definite values.
\item
The length of a $C^1$ path
$\bfgamma $ between points $\bfx_a$ and $\bfx_b$ 
is defined as
$$
\left.\vrule height 1em depth 0.5em width 0pt
L(\bfgamma )\right|_{s_a}^{s_{b}}\;=\; \int_{s_a}^{s_b}
\sqrt{
\frac{d\bfgamma }{ds}(s) ^{\top}\bfP(\bfgamma (s)) \frac{d\bfgamma }{ds}(s)
} \,  ds,
$$
where
$
\bfgamma (s_a)\;=\; \bfx_a$ and $\bfgamma (s_b)\;=\; \bfx_b$.
\item
\label{Glos7}
The Riemannian distance $d(\bfx_a,\bfx_b)$ is 
the minimum of $\left.\vrule height 1em depth 0.5em width 0pt
L(\bfgamma )\right|_{s_a}^{s_b}$ among all possible piecewise $C^1$ 
paths $\bfgamma $ between $\bfx_a$ and $\bfx_b$.
A minimizer giving the 
distance is called a
minimizing geodesic and is denoted $\bfgamma ^*$. 
\item
A Riemannian metric $\bfP$ is said complete
when every geodesic
can be maximally extended to $\RR$.
% \item
% \label{Glos6}
% A subset $S$ of $\RR^n$ is said to be weakly geodesically convex if, for any pair  of points $(\bfx_1,\bfx_2) $ in $S\times S
% $, there exists at least one minimizing geodesic 
% $\bfgamma ^*$
% between $\bfx_1=\bfgamma ^*(s_1)$ and $\bfx_2=\bfgamma ^*(s_2)$
% satisfying
% $
% \bfgamma ^*(s)\in S$
% for all $s\in [s_1,s_2]$.
\item
\label{glossary2}
$\bfd^2\bfh$ denotes the second differential the expression of which, with the coordinates 
$x$, is
$\frac{\partial ^2h_\indyi}{\partial x_\indxa\partial x_\indxb}(x)$.
% It defines, for each $\indyi$, a symmetric bilinear form which obeys the 
% rules of a symmetric $2$-covariant tensor  when acting 
% on a vector in the kernel of $\bfdh$.
\item
\label{glossary3}
The second differential with respect to the first argument
of the function $(\bfy_1,\bfy_2) \mapsto \wpunbf(\bfy_1,\bfy_2)$
is denoted $\bfd_1^2\wpunbf$.
\item
\label{glossary4}
$\bfHess _\bfP \bfh$ denotes the (Riemannian) Hessian of $\bfh$.
It is a 2-covariant to 1-contravariant tensor defined as
\begin{equation}
\label{LP156}
\bfHess _\bfP \bfh (\bfx) \;=\; \frac{1}{2}\,  \mathcal{L}_{\bfgrad _\bfP \bfh} \bfP(\bfx)
\  .
\end{equation}
Its expression with coordinates $x$ is
$$
(\Hess _P h (x))_{\indxa\indxb}=\frac{\partial ^2h}{\partial x_\indxa\partial x_\indxb}(x)
-
\sum_{\indxc}\Gamma _{\indxa\indxb}^\indxc(x)
\frac{\partial h}{\partial x_\indxc}(x)
\  .
$$
Given a geodesic $\bfgamma $, we have (see \cite[Exercise 3.16]{ONeill.83})
\begin{equation}
\label{LP78}
\frac{d^2}{ds^2}\left\{
\vrule height 0.5em depth 0.5em width 0pt \bfh(\bfgamma (s))\right\}
\;=\; \frac{d\bfgamma }{ds}(s)^\top \bfHess _\bfP \bfh (\bfgamma (s))\frac{d\bfgamma }{ds}(s)
\end{equation}
\end{enumerate}
}

\section{Proof of Lemma \ref{lem1}}
\label{complement36}%
\IfReport{%
% +--------------+
% |              |
% |     report   |
% |              |
% +--------------+
Let $\bar \Gamma $ and $\bar \Gammay$ be the expressions of 
the Christoffel symbols in the new coordinates.
From 
\cite[(3.5.22)]{Lovelock-Rund}, we get
\begin{equation}\label{eqn:ChristoffelNewCoord}
\sum_{\indxd,\indxe}\frac{\partial \changex _\indxd}{\partial x_\indxa}
\bar \Gamma _{\indxd\indxe}^\indxc
\frac{\partial \changex_\indxe}{\partial x_\indxb}
\;=\; 
\sum_{\indxd}\frac{\partial \changex_\indxc}{\partial x_\indxd}
\Gamma _{\indxa\indxb}^\indxd
-
\frac{\partial ^2 \changex_\indxc}{\partial x_\indxa\partial x_\indxb}
\end{equation}
$$
\sum_{\indyl,\indym}\frac{\partial \changey _{\indyl}}{\partial y_\indyi}
\bar \Gammay _{\indyl\indym}^\indyk
\frac{\partial \changey _{\indym}}{\partial y_\indyj}
\;=\; 
\sum_{\indyl}\frac{\partial \changey _{\indyk}}{\partial y_\indyl}
\Gammay _{\indyi\indyj}^\indyl
-
\frac{\partial ^2 \changey _{\indyk}}{\partial y_\indyi\partial y_\indyj}
$$
We  also have
$$
\bar h(\changex(x))\;=\; \changey(h(x))
\  ,
$$
$$
\sum_\indxc\frac{\partial \bar h}{\partial \bar x_\indxc}(\changex(x))
\frac{\partial \changex_\indxc}{\partial x_\indxa}(x)\;=\;
\sum_\indym\frac{\partial \changey }{\partial y_\indym}(h(x))\frac{\partial h_\indym}{\partial x_\indxa}(x)
\  ,
$$
$$
\frac{\partial \bar h}{\partial \bar x_\indxc}(\changex(x))
\;=\;
\sum_{\indxe,\indym}
\frac{\partial \changey }{\partial y_\indym}(h(x))\frac{\partial h_\indym}{\partial x_\indxe}(x)
\left[\frac{\partial \changex}{\partial x}(x)^{-1}\right]_{\indxe\indxc}
\  ,
$$
and\\[1em]
$\displaystyle 
\sum_{\indxc,\indxd}\frac{\partial ^2 \bar h}{\partial \bar x_\indxc\partial \bar x_\indxd}(\changex(x))
\frac{\partial \changex_\indxc}{\partial x_\indxa}(x)
\frac{\partial \changex_\indxd}{\partial x_\indxb}(x)
+\sum_\indxc
\frac{\partial \bar h}{\partial \bar x_\indxc}(\changex(x))
\frac{\partial ^2\changex_\indxc}{\partial x_\indxa\partial x_\indxb}(x)
$\hfill \null \\\null \hfill $\displaystyle 
\;=\; 
\sum_{\indyi,\indyj}\frac{\partial ^2\changey }{\partial y_\indyi\partial y_\indyj}(h(x))
\frac{\partial h_\indyi}{\partial x_\indxa}(x)
\frac{\partial h_\indyj}{\partial x_\indxb}(x)
+
\sum_\indyi\frac{\partial \changey }{\partial y_\indyi}(h(x))
\frac{\partial ^2h_\indyi}{\partial x_\indxa\partial x_\indxb}(x)
\  .
$\\[1em]
Using these expressions, we obtain
$$
\secff _P \overline{h}_{\indxc\indxd}^\indyk(\bar x) \;=\;  \frac{\partial ^2\bar h_\indyk}{\partial \bar x_\indxc\partial \bar x_\indxd}(\bar x)
-
\sum_\indxe\bar \Gamma _{\indxc\indxd}^\indxe(x)\frac{\partial \bar h_\indyk}{\partial \bar x_\indxe}(\bar x)  
+
\sum_{\indyl,\indym}\bar \Gammay_{\indyl\indym}^\indyk(\bar h(\bar x))
\frac{\partial \bar h_\indyl}{\partial \bar x_\indxc}(\bar x)
\frac{\partial \bar h_\indym}{\partial \bar x_\indxd}(\bar x)
\  ,
$$
Substituting it in \eqref{eqn:ChristoffelNewCoord},
we obtain
\\[1em]$\displaystyle 
\sum_{\indxc,\indxd}
\frac{\partial \changex_\indxc}{\partial x_\indxa}(x)
\frac{\partial \changex_\indxd}{\partial x_\indxb}(x)
\secff _P \overline{h}_{\indxc\indxd}^\indyk(\bar x)
$\hfill \null \\\null \qquad $\displaystyle 
%%%%%%%%
\;=\; 
\sum_{\indxc,\indxd}
\frac{\partial \changex_\indxc}{\partial x_\indxa}(x)
\frac{\partial \changex_\indxd}{\partial x_\indxb}(x)
\frac{\partial ^2\bar h_\indyk}{\partial \bar x_\indxc\partial \bar x_\indxd}(\bar x)
$\hfill \null \\\null \hfill $\displaystyle
-
\sum_{\indxc,\indxd}
\frac{\partial \changex_\indxc}{\partial x_\indxa}(x)
\frac{\partial \changex_\indxd}{\partial x_\indxb}(x)
\sum_\indxe\bar \Gamma _{\indxc\indxd}^\indxe(x)\frac{\partial \bar h_\indyk}{\partial \bar x_\indxe}(\bar x)
$\hfill \null \\\null \hfill $\displaystyle 
+ \sum_{\indxc,\indxd}
\frac{\partial \changex_\indxc}{\partial x_\indxa}(x)
\frac{\partial \changex_\indxd}{\partial x_\indxb}(x)
\sum_{\indyl,\indym}\bar \Gammay_{\indyl\indym}^\indyk(\bar h(\bar x))
\frac{\partial \bar h_\indyl}{\partial \bar x_\indxc}(\bar x)
\frac{\partial \bar h_\indym}{\partial \bar x_\indxd}(\bar x)
$\\\null \qquad $\displaystyle 
%%%%%%%%
\;=\; 
\sum_{\indyi,\indyj}\frac{\partial ^2\changey _\indyk}{\partial y_\indyi\partial y_\indyj}(h(x))
\frac{\partial h_\indyi}{\partial x_\indxa}(x)
\frac{\partial h_\indyj}{\partial x_\indxb}(x)
+
\sum_\indyi\frac{\partial \changey _\indyk}{\partial y_\indyi}(h(x))
\frac{\partial ^2h_\indyi}{\partial x_\indxa\partial x_\indxb}(x)
$\hfill \null \\\null \qquad \qquad \qquad    $\displaystyle
-
\sum_{\indxc,\indxe,\indyi}
\frac{\partial \changey _\indyk}{\partial y_\indyi}(h(x))\frac{\partial h_\indyi}{\partial x_\indxe}(x)
\left[\frac{\partial \changex}{\partial x}(x)^{-1}\right]_{\indxe\indxc}
\frac{\partial ^2\changex_\indxc}{\partial x_\indxa\partial x_\indxb}(x)
$\\
\null \qquad \qquad \qquad \qquad \qquad 
$\displaystyle
-
\sum_\indxc
\left[
\sum_{\indxd}\frac{\partial \changex_\indxc}{\partial x_\indxd}
\Gamma _{\indxa\indxb}^\indxd
-
\frac{\partial ^2 \changex_\indxc}{\partial x_\indxa\partial x_\indxb}
\right]
\sum_{\indxe,\indyi}
\frac{\partial \changey _\indyk}{\partial y_\indyi}(h(x))\frac{\partial h_\indyi}{\partial x_\indxe}(x)
\left[\frac{\partial \changex}{\partial x}(x)^{-1}\right]_{\indxe\indxc}
$\\\hfill \null \\\null \qquad \qquad \qquad \qquad \qquad \qquad \qquad  $\displaystyle 
+
\sum_{\indyl,\indym}\bar \Gammay_{\indyl\indym}^\indyk(\bar h(\bar x))
\left[
\sum_\indyi\frac{\partial \changey _\indyl}{\partial y_\indyi}(h(x))\frac{\partial h_\indyi}{\partial x_\indxa}(x)
\right]\left[
\sum_\indyj\frac{\partial \changey _\indym}{\partial y_\indyj}(h(x))\frac{\partial h_\indyj}{\partial x_\indxb}(x)
\right]
$\\\null \qquad $\displaystyle 
%%%%%%%%
\;=\; 
\sum_{\indyi,\indyj}\frac{\partial ^2\changey _\indyk}{\partial y_\indyi\partial y_\indyj}(h(x))
\frac{\partial h_\indyi}{\partial x_\indxa}(x)
\frac{\partial h_\indyj}{\partial x_\indxb}(x)
$\\\null \hfill $\displaystyle
+
\sum_\indyi
\frac{\partial \changey _\indyk}{\partial y_\indyi}(h(x))
\left[
\frac{\partial ^2h_\indyi}{\partial x_\indxa\partial x_\indxb}(x)
-
\sum_{\indxd}
\Gamma _{\indxa\indxb}^\indxd
\frac{\partial h_\indyi}{\partial x_\indxd}(x)
\right]
%\frac{\partial \bar h_\indyk}{\partial \bar x_\indxc}(\bar x)
$\hfill \null \\\null \hfill $\displaystyle 
+\sum_{\indyi,\indyj}
\frac{\partial h_\indyi}{\partial x_\indxa}(x)
\frac{\partial h_\indyj}{\partial x_\indxb}(x)
\left[
\sum_{\indyl}\frac{\partial \changey _{\indyk}}{\partial y_\indyl}
\Gammay _{\indyi\indyj}^\indyl
-
\frac{\partial ^2 \changey _{\indyk}}{\partial y_\indyi\partial y_\indyj}
\right]
$\\[0.7em]%
which reduces to
$$
\sum_\indyi
\frac{\partial \changey _\indyk}{\partial y_\indyi}(h(x))
\left[
\frac{\partial ^2h_\indyi}{\partial x_\indxa\partial x_\indxb}(x)
-
\sum_{\indxd}
\Gamma _{\indxa\indxb}^\indxd
\frac{\partial h_\indyi}{\partial x_\indxd}(x)
+
\sum_{\indyl,\indym}
\frac{\partial h_\indyl}{\partial x_\indxa}(x)
\frac{\partial h_\indym}{\partial x_\indxb}(x)
\Gammay _{\indyl\indym}^\indyi
\right]
$$
and establishes (\ref{LP146}).
%
% END Detailed computations
%
%%%%%%%%%%%%%%
\newpage
}{%
% +------------------+
% |                  |
% |     not report   |
% |                  |
% +------------------+
\vskip -0.3em
Let $\bar \Gamma $ and $\bar \Gammay$ be the expressions of 
the Christoffel symbols in the new coordinates.
From 
\cite[(3.5.22)]{Lovelock-Rund}, we get
\begin{equation}\label{eqn:ChristoffelNewCoord}
\sum_{\indxd,\indxe}\frac{\partial \changex _\indxd}{\partial x_\indxa}
\bar \Gamma _{\indxd\indxe}^\indxc
\frac{\partial \changex_\indxe}{\partial x_\indxb}
\;=\; 
\sum_{\indxd}\frac{\partial \changex_\indxc}{\partial x_\indxd}
\Gamma _{\indxa\indxb}^\indxd
-
\frac{\partial ^2 \changex_\indxc}{\partial x_\indxa\partial x_\indxb}
\end{equation}
$$
\sum_{\indyl,\indym}\frac{\partial \changey _{\indyl}}{\partial y_\indyi}
\bar \Gammay _{\indyl\indym}^\indyk
\frac{\partial \changey _{\indym}}{\partial y_\indyj}
\;=\; 
\sum_{\indyl}\frac{\partial \changey _{\indyk}}{\partial y_\indyl}
\Gammay _{\indyi\indyj}^\indyl
-
\frac{\partial ^2 \changey _{\indyk}}{\partial y_\indyi\partial y_\indyj}
$$
We  also have
$
\bar h(\changex(x))\;=\; \changey(h(x))$,
$$
\sum_\indxc\frac{\partial \bar h}{\partial \bar x_\indxc}(\changex(x))
\frac{\partial \changex_\indxc}{\partial x_\indxa}(x)\;=\;
\sum_\indym\frac{\partial \changey }{\partial y_\indym}(h(x))\frac{\partial h_\indym}{\partial x_\indxa}(x)
\  ,
$$
$$
\frac{\partial \bar h}{\partial \bar x_\indxc}(\changex(x))
\;=\;
\sum_{\indxe,\indym}
\frac{\partial \changey }{\partial y_\indym}(h(x))\frac{\partial h_\indym}{\partial x_\indxe}(x)
\left[\frac{\partial \changex}{\partial x}(x)^{-1}\right]_{\indxe\indxc}
\  ,
$$
and\\[0.7em]
$\displaystyle 
\sum_{\indxc,\indxd}\frac{\partial ^2 \bar h}{\partial \bar x_\indxc\partial \bar x_\indxd}(\changex(x))
\frac{\partial \changex_\indxc}{\partial x_\indxa}(x)
\frac{\partial \changex_\indxd}{\partial x_\indxb}(x)$\\\null \hfill
$\displaystyle+\sum_\indxc
\frac{\partial \bar h}{\partial \bar x_\indxc}(\changex(x))
\frac{\partial ^2\changex_\indxc}{\partial x_\indxa\partial x_\indxb}(x)
$\hfill \null \\\null \hfill $\displaystyle 
\;=\; 
\sum_{\indyi,\indyj}\frac{\partial ^2\changey }{\partial y_\indyi\partial y_\indyj}(h(x))
\frac{\partial h_\indyi}{\partial x_\indxa}(x)
\frac{\partial h_\indyj}{\partial x_\indxb}(x)$\\\null \hfill
$\displaystyle
+
\sum_\indyi\frac{\partial \changey }{\partial y_\indyi}(h(x))
\frac{\partial ^2h_\indyi}{\partial x_\indxa\partial x_\indxb}(x)
\  .
$\\[0.7em]
Using these expressions, we obtain
\begin{eqnarray*}
\secff _P \overline{h}_{\indxc\indxd}^\indyk(\bar x) & = & \frac{\partial ^2\bar h_\indyk}{\partial \bar x_\indxc\partial \bar x_\indxd}(\bar x)
-
\sum_\indxe\bar \Gamma _{\indxc\indxd}^\indxe(x)\frac{\partial \bar h_\indyk}{\partial \bar x_\indxe}(\bar x)  \\
& & 
+
\sum_{\indyl,\indym}\bar \Gammay_{\indyl\indym}^\indyk(\bar h(\bar x))
\frac{\partial \bar h_\indyl}{\partial \bar x_\indxc}(\bar x)
\frac{\partial \bar h_\indym}{\partial \bar x_\indxd}(\bar x)
\  ,
\end{eqnarray*}
Substituting it in \eqref{eqn:ChristoffelNewCoord},
we obtain
\\[0.7em]$\displaystyle 
\sum_{\indxc,\indxd}
\frac{\partial \changex_\indxc}{\partial x_\indxa}(x)
\frac{\partial \changex_\indxd}{\partial x_\indxb}(x)
\secff _P \overline{h}_{\indxc\indxd}^\indyk(\bar x)
$\hfill \null \\\null \qquad $\displaystyle 
%%%%%%%%
\;=\; 
\sum_{\indxc,\indxd}
\frac{\partial \changex_\indxc}{\partial x_\indxa}(x)
\frac{\partial \changex_\indxd}{\partial x_\indxb}(x)
\frac{\partial ^2\bar h_\indyk}{\partial \bar x_\indxc\partial \bar x_\indxd}(\bar x)
$\hfill \null \\\null \hfill $\displaystyle
-
\sum_{\indxc,\indxd}
\frac{\partial \changex_\indxc}{\partial x_\indxa}(x)
\frac{\partial \changex_\indxd}{\partial x_\indxb}(x)
\sum_\indxe\bar \Gamma _{\indxc\indxd}^\indxe(x)\frac{\partial \bar h_\indyk}{\partial \bar x_\indxe}(\bar x)
$\hfill \null \\\null \hfill $\displaystyle 
+ \sum_{\indxc,\indxd}
\frac{\partial \changex_\indxc}{\partial x_\indxa}(x)
\frac{\partial \changex_\indxd}{\partial x_\indxb}(x)
\sum_{\indyl,\indym}\bar \Gammay_{\indyl\indym}^\indyk(\bar h(\bar x))
\frac{\partial \bar h_\indyl}{\partial \bar x_\indxc}(\bar x)
\frac{\partial \bar h_\indym}{\partial \bar x_\indxd}(\bar x)
$\\\null \qquad $\displaystyle 
%%%%%%%%
\;=\; 
\sum_{\indyi,\indyj}\frac{\partial ^2\changey _\indyk}{\partial y_\indyi\partial y_\indyj}(h(x))
\frac{\partial h_\indyi}{\partial x_\indxa}(x)
\frac{\partial h_\indyj}{\partial x_\indxb}(x)
$\\\null \qquad $\displaystyle 
+
\sum_\indyi\frac{\partial \changey _\indyk}{\partial y_\indyi}(h(x))
\frac{\partial ^2h_\indyi}{\partial x_\indxa\partial x_\indxb}(x)
$\hfill \null \\\null \qquad \qquad    $\displaystyle
-
\sum_{\indxc,\indxe,\indyi}
\frac{\partial \changey _\indyk}{\partial y_\indyi}(h(x))\frac{\partial h_\indyi}{\partial x_\indxe}(x)
\left[\frac{\partial \changex}{\partial x}(x)^{-1}\right]_{\indxe\indxc}
\frac{\partial ^2\changex_\indxc}{\partial x_\indxa\partial x_\indxb}(x)
$\\
%\null \qquad \qquad \qquad \qquad    
$\displaystyle
-
\sum_\indxc
\left[
\sum_{\indxd}\frac{\partial \changex_\indxc}{\partial x_\indxd}
\Gamma _{\indxa\indxb}^\indxd
-
\frac{\partial ^2 \changex_\indxc}{\partial x_\indxa\partial x_\indxb}
\right]
\sum_{\indxe,\indyi}
\frac{\partial \changey _\indyk}{\partial y_\indyi}(h(x))\frac{\partial h_\indyi}{\partial x_\indxe}(x) $\\\null \hfill $\displaystyle \times
\left[\frac{\partial \changex}{\partial x}(x)^{-1}\right]_{\indxe\indxc}
$\\\hfill \null \\\null \hfill $\displaystyle 
+
\sum_{\indyl,\indym}\bar \Gammay_{\indyl\indym}^\indyk(\bar h(\bar x))
\left[
\sum_\indyi\frac{\partial \changey _\indyl}{\partial y_\indyi}(h(x))\frac{\partial h_\indyi}{\partial x_\indxa}(x)
\right]$\\\hfill \null \\\null \hfill $\displaystyle
\times\left[
\sum_\indyj\frac{\partial \changey _\indym}{\partial y_\indyj}(h(x))\frac{\partial h_\indyj}{\partial x_\indxb}(x)
\right]
$\\\null \qquad $\displaystyle 
%%%%%%%%
\;=\; 
\sum_{\indyi,\indyj}\frac{\partial ^2\changey _\indyk}{\partial y_\indyi\partial y_\indyj}(h(x))
\frac{\partial h_\indyi}{\partial x_\indxa}(x)
\frac{\partial h_\indyj}{\partial x_\indxb}(x)
$\\\null \hfill $\displaystyle
+
\sum_\indyi
\frac{\partial \changey _\indyk}{\partial y_\indyi}(h(x))
\left[
\frac{\partial ^2h_\indyi}{\partial x_\indxa\partial x_\indxb}(x)
-
\sum_{\indxd}
\Gamma _{\indxa\indxb}^\indxd
\frac{\partial h_\indyi}{\partial x_\indxd}(x)
\right]
%\frac{\partial \bar h_\indyk}{\partial \bar x_\indxc}(\bar x)
$\hfill \null \\\null \hfill $\displaystyle 
+\sum_{\indyi,\indyj}
\frac{\partial h_\indyi}{\partial x_\indxa}(x)
\frac{\partial h_\indyj}{\partial x_\indxb}(x)
\left[
\sum_{\indyl}\frac{\partial \changey _{\indyk}}{\partial y_\indyl}
\Gammay _{\indyi\indyj}^\indyl
-
\frac{\partial ^2 \changey _{\indyk}}{\partial y_\indyi\partial y_\indyj}
\right]
$\\[0.7em]%
which reduces to\\[0.7em]%
$\displaystyle 
\sum_\indyi
\frac{\partial \changey _\indyk}{\partial y_\indyi}(h(x))$\hfill \null \\\null \hfill $\displaystyle 
\times
\left[
\frac{\partial ^2h_\indyi}{\partial x_\indxa\partial x_\indxb}(x)
-
\sum_{\indxd}
\Gamma _{\indxa\indxb}^\indxd
\frac{\partial h_\indyi}{\partial x_\indxd}(x)
+
\sum_{\indyl,\indym}
\frac{\partial h_\indyl}{\partial x_\indxa}(x)
\frac{\partial h_\indym}{\partial x_\indxb}(x)
\Gammay _{\indyl\indym}^\indyi
\right]
$\\[0.7em]
and establishes (\ref{LP146}).
%
% END Detailed computations
%
%%%%%%%%%%%%%%
\vspace{-0.2in}
}

\startarchive
\section{Proof of Lemma \ref{lem13}}
\label{sec19}
%\addcontentsline{toc}{subsection}{Appendix: Proof of Lemma \ref{lem13}}
The proof of Lemma~\ref{lem13} given here is a direct extension, from
the case $p=1$ to the general case, of \cite[Proposition 2.2 and Theorem 8.1]{Fischer.96}.

\startmodif
Fix $\bfx_0$ arbitrarily in $\bfRR^n$.
From the Submersion Level Set Theorem
(see \cite[p.105]{Lee.13}), the $\bfh(\bfx_0)$-level set $\mathfrak{H}(\bfh(\bfx_0))$
is a properly embedded $n-p$ dimensional $C^s$ submanifold. In the following, it is useful to view
$\mathfrak{H}(\bfh(\bfx_0))$ as an ``abstract'' $n-p$ dimensional $C^s$ manifold. To make this very clear we 
denote by
$\bfXrond_h$ this abstract manifold and we denote
$$
\bfvarphi:
\begin{array}[t]{rcl}
\bfXrond_h  & \to & \bfRR^p
\\
\bfxrond & \mapsto & \bfvarphi(\bfxrond)
\end{array}
$$
the corresponding smooth embedding the image of which is $\mathfrak{H}(\bfh(\bfx_0))$.
\stopmodif

Since the $\bfx$-manifold is $\bfRR^n$ and the $\bfy$-manifold is $\bfRR^p$, we 
have a globally defined pair of coordinate charts $(x,\RR^n,\phi)$ and $(y,\RR^p,\chi)$ which, without loss of 
generality, satisfy
\begin{equation}
\label{LP230}
h(x_0)\;=\; 0
\  .
\end{equation}
Let also $w{}_\indyi$, with $\indyi=1,\ldots,p$, be a basis of normalized vectors, for the Euclidean norm of $\RR ^p$ and associated with the coordinates $y$, 
i.e.
\IfTwoCol{%
% two col
$y_\indyi=w{}_\indyi^\top y$.
}{%
% one col
$$
y_\indyi=w{}_\indyi^\top y
\  .
$$
}%

\underline{Construction of the bijection $\bfhhperpinv$:}\\
\startmodif
We follow the procedure described at \cite[Bottom of page 235]{Lee.13}. For this, we let, for each integer $\indyi$, $g_\indyi$ be the vector field defined by
$$
g_\indyi(x)= P(x)^{-1}\frac{\partial h}{\partial x}(x)^\top \Py(h(x)) w{}_\indyi
= 
\grad_P h (x) \Py(h(x))w{}_\indyi\;  .
$$
They are known to
span the orthogonal distribution $\Distrib _P ^\ortho  (x)$ which, from Lemma \ref{lem3},
% \cite[Lemma 3.2(ii)]{Vilms} or \cite[Proposition I.5.4]{Nore}
 is involutive. But for this procedure we need commuting vector fields. It is known
that  every involutive distribution is locally spanned by independent smooth commuting vector fields
\cite[Top of page 497]{Lee.13}. But 
this is a local result only and we want a global one. So we assume
\begin{itemize}
\item[]
\textit{The vector fields $g_\indyi$ commute.}
\end{itemize}
Assuming also
\begin{itemize}
\item[]
\textit{The function $\bfy\mapsto \bfPy(\bfy)$ is bounded}
\end{itemize}
we get that these vector fields are bounded since they satisfy
\begin{eqnarray}
\nonumber
g_\indyi(x)^\top P(x) g_\indyj(x)&=&
w{}_\indyi^\top\Py(h(x))\frac{\partial h}{\partial x}(x) P(x)^{-1}
P(x)
P(x)^{-1}\frac{\partial h}{\partial x}(x) \Py(h(x))
w{}_\indyj^\top
\\\nonumber
&=&
w{}_\indyi^\top
\Py(h(x))\frac{\partial h}{\partial x}(x)P(x)^{-1} \frac{\partial h}{\partial x}(x)^\top\Py(h(x))
w{}_\indyj
\\\label{LP210}
&=&
w{}_\indyi^\top\Py (h(x)) w{}_\indyj
\  ,
\end{eqnarray}
where, $\bfh$ being a Riemannian submersion, we have used
$$
\frac{\partial h}{\partial x}(x)P(x)^{-1} \frac{\partial h}{\partial x}(x)^\top
\;=\; \Py (h(x))^{-1}
\  .
$$
Also, the metric $\bfP$ being assumed complete, these vector fields are complete\footnote{%
% BEGIN FOOTNOTE
$d(x(t_a),x(t_b))\leq \int_{t_a}^{t_b}
\sqrt{\dot x(t) P(x(t)) \dot x(t)} dt$.
Since $P$ is complete, $d(x(t_a),x(t_b))$ finite implies no explosion.
% END FOOTNOTE
}.

We denote $X_\indyi (x,{}t)$ their corresponding flows which commute
\stopmodif
(see \cite[Theorem 9.44]{Lee.13}) and we have
\begin{equation}
\label{LP211}
\frac{\partial X_\indyj}{\partial x}(x,t_\indyj)g_\indyi(x)\;=\; g_\indyi(X_\indyj(x,t_\indyj))
\  .
\end{equation}

With the notation
\IfTwoCol{%
% two col
$\cursive{t}=(t_1,\ldots,t_p)$,
}{%
% one col
$$
\cursive{t}=(t_1,\ldots,t_p)
\  ,
$$
}
we write the composition of the flows as
\IfTwoCol{%
% two col
$$
X_g (x,\cursive{t})\!=\! X_p\! \left(\!  \vrule height 1.2em depth 1.2em width 0pt
X_{p-1}\! \left(\vrule height 0.8em depth 0.8em width 0pt
\! \ldots\! \left(\! \vrule height 0.6em depth 0.6em width 0pt
\! X_{2}
\! \left(\! \vrule height 0.4em depth 0.4em width 0pt
 X_1(x,t_1),t_2) \right)\! \ldots\! \right)\! ,t_{p-1}\! \right),t_p\! \right)
\  .
$$
}{%
% one col
$$
X_g (x,\cursive{t})\;=\; X_p\left( \vrule height 1.2em depth 1.2em width 0pt
X_{p-1}\left(\vrule height 0.8em depth 0.8em width 0pt
\ldots\left(\vrule height 0.6em depth 0.6em width 0pt
X_{2}
\left(\vrule height 0.4em depth 0.4em width 0pt
X_1(x,t_1),t_2)\right)\ldots\right),t_{p-1}\right),t_p\right)
\  .
$$
}%
(\ref{LP211}) implies
$X_g (x,\cursive{t})$ does not depend on any
permutation $\sigma $ of the integers $1$ to $p$, i.e. we have
\IfTwoCol{%
% two col
\\[0em]
$X_g (x,\cursive{t})\;=\; X_{\sigma (p)}\left( \vrule height 1.2em depth 1.2em width 0pt
X_{\sigma (p-1)}\left(\vrule height 0.8em depth 0.8em width 0pt
\ldots\right.\right.$\\[-0.3em]
$\null\hfill\left.\left.\left(\vrule height 0.6em depth 0.6em width 0pt
X_{\sigma (2)}
\left(\vrule height 0.4em depth 0.4em width 0pt
X_{\sigma (1)}(x,t_{\sigma (1)}),t_{\sigma (2)})\right)\ldots
\vrule height 0.6em depth 0.6em width 0pt
\right),t_{\sigma (p-1)}
\vrule height 0.8em depth 0.8em width 0pt
\right),t_{\sigma (p)}
\vrule height 1.2em depth 1.2em width 0pt
\right)
$\\[0em]
}{%
% one col
$$
X_g (x,\cursive{t})\;=\; X_{\sigma (p)}\left( \vrule height 1.2em depth 1.2em width 0pt
X_{\sigma (p-1)}\left(\vrule height 0.8em depth 0.8em width 0pt
\ldots\left(\vrule height 0.6em depth 0.6em width 0pt
X_{\sigma (2)}
\left(\vrule height 0.4em depth 0.4em width 0pt
X_{\sigma (1)}(x,t_{\sigma (1)}),t_{\sigma (2)})\right)\ldots\right),t_{\sigma (p-1)}\right),t_{\sigma (p)}\right)
$$
}%
and therefore
\IfTwoCol{%
% two col
\begin{equation}
\label{LP225}
\left.
\renewcommand{\arraystretch}{1.5}
\begin{array}{@{}c@{}}
X_g (X_g(x,\cursive{t}),-\cursive{t})\;=\; x
\quad \forall (x,\cursive{t})
\ ,
\\
X_g(x_a,\cursive{t})\;=\; X_g(x_b,\cursive{t})
\  \Rightarrow\  x_a=x_b
\  .
\end{array}
\qquad \right\}
\end{equation}
}{%
% one col
\begin{equation}
\label{LP225}
X_g (X_g(x,\cursive{t}),-\cursive{t})\;=\; x
\qquad \forall (x,\cursive{t})
\quad ,\qquad 
X_g(x_a,\cursive{t})\;=\; X_g(x_b,\cursive{t})
\  \Rightarrow\  x_a=x_b
\  .
\end{equation}
}%
\indent
On the other hand, (\ref{LP160}) gives us
\IfTwoCol{%
% two col
$$
\frac{\partial }{\partial {}t}\left\{h_\indyj(X_\indyi(x,{}t))\right\}
\;=\; 
\frac{\partial h_\indyj}{\partial x}(X_\indyi(x,{}t))
\frac{\partial X_\indyi}{\partial t}(x,{}t)
\;=\; \delta _{\indyj\indyi}
\  .
$$
}{%
% one col
\begin{eqnarray*}
\frac{\partial }{\partial {}t}\left\{h_\indyj(X_\indyi(x,{}t))\right\}
&=&
\frac{\partial h_\indyj}{\partial x}(X_\indyi(x,{}t))
\frac{\partial X_\indyi}{\partial t}(x,{}t)
\\
&=&
w{}_\indyj^\top \frac{\partial h}{\partial x}(X_\indyi(x,{}t)) P(X_\indyi(x,{}t))^{-1}\frac{\partial h}{\partial 
 x}(X_\indyi(x,{}t))^\top
\Py(h(X_\indyi(x,{}t)))w{}_\indyi
\\
&=&\delta _{\indyj\indyi}
\  .
\end{eqnarray*}
}%
This implies
$$
h_\indyj(X_\indyi(x,{}t))\;=\; h_\indyj(x)\;+\; \delta _{\indyj\indyi} t
\  .
$$
%%%%%%%%%%%%%%%%
Hence, given
\IfTwoCol{%
% two col
$y\;=\; (y_1,\ldots,y_p)$,
}{%
% one col
$$
y\;=\; (y_1,\ldots,y_p)
\  ,
$$
}%
by letting
\IfTwoCol{%
% two col
\begin{eqnarray*}
&\displaystyle 
x_1=X_1(x,y_1)
\  ,\quad 
x_\indyi\;=\; X_\indyi(x_{\indyi-1},y_\indyi)
\quad \indyi\in \{2,\ldots,p\}
\  ,
\\
&\displaystyle 
x_p\;=\; X_p(x_{p-1},y_p)\;=\; X_g(x,y)
\  ,
\end{eqnarray*}
}{%
% one col
\begin{eqnarray*}
x_1&=&X_1(x,y_1)
\  ,
\\
x_\indyi&=& X_\indyi(x_{\indyi-1},y_\indyi)
\qquad \indyi\in \{2,\ldots,p\}
\  ,
\\ 
x_p&=& X_p(x_{p-1},y_p)\;=\; X_g(x,y)
\  ,
\end{eqnarray*}
}%
we obtain
$$
\begin{array}{@{}r@{\;}c@{\;}lcl@{}}
h_\indyj(x_1)&=&
 h_1(x)+y_1
&
\mbox{if}
&
\indyj=1
\  ,\\[0.3em]&=&
h_\indyj(x)
&
\mbox{if}
&
1<\indyj\leq p
\  ,
\\[0.7em]
h_\indyj(x_\indyi)&=& 
h_\indyj(x)+y_\indyj(x)
&
\mbox{if}
&
1\leq \indyj\leq \indyi
\  ,\\[0.3em]&=&
h_\indyj(x)
&
\mbox{if}
&
\indyi < \indyj\leq p
\  .\end{array}
$$
\IfReport{%
% report
This yields
\begin{equation}
\label{LP226}
h(X_g(x,y))=h(x_p)=h(x)+y
\qquad \forall (x,y)\in \RR^n\times\RR^p
\end{equation}
and, in particular,
$$
h(X_g(x,-h(x)))=0
\qquad \forall x\in \RR^n
\  .
$$%
}{%
% not report
This yields, for all $(x,y)$ in $\RR^n\times\RR^p$,
\begin{equation}
\label{LP226}
h(X_g(x,y))=h(x_p)=h(x)+y\  . 
\end{equation}
}%
%
% Hence by letting
% $$
% x_0=x
% \  ,\quad 
% x_\indyi\;=\; X_\indyi(x_{\indyi-1},-h_\indyi(x))
% \quad \indyi\in \{1,\ldots,p\}
% \  ,\quad 
% x_p\;=\; X_p(x_{p-1},-h_p(x))\;=\; X_g(x,-h(x))
% $$
% we obtain
% $$
% h_\indyj(x_\indyi)\;=\; h_\indyj(x_{\indyi-1})\;-\; \delta _{\indyj\indyi}\,   h_{\indyi}(x_{\indyi-1})
% $$
% and therefore
% $$
% \renewcommand{\arraystretch}{1.4}
% h_\indyj(x_0)= h_\indyj(x)
% \quad ,\qquad 
% \begin{array}{@{}r@{\;}c@{\;}lcl@{}}
% h_\indyj(x_1)&=& 0
% &
% \mbox{if}
% &
% \indyj=1
% \\
% &=&h_\indyj(x)
% &
% \mbox{if}
% &
% 1<\indyj\leq p
% \end{array}
% \quad ,\qquad 
% \begin{array}{@{}r@{\;}c@{\;}lcl@{}}
% h_\indyj(x_\indyi)&=& 0
% &
% \mbox{if}
% &
% 1\leq \indyj\leq \indyi
% \\
% &=&h_\indyj(x)
% &
% \mbox{if}
% &
% \indyi < \indyj\leq p
% \end{array}
% $$
% This yields
% $$
% h(X_g(x,-h(x)))=h(x_p)=0
% \qquad \forall x\in \RR^n
% \  .
% $$
To go on, we introduce the notation
$$
\bfX_g(\bfx,\bfy)\;=\; 
\coordxm^{-1}(X_g(\coordxm(\bfx),\coordym(\bfy)))
\  .
$$
\IfTwoCol{%
% two col
The identity (\ref{LP226})
}{%
% one col
The identity (\ref{LP226}), expressed with coordinates, %
}%
implies that $\bfX_g(\bfx,-\bfh(\bfx))$ is in the 
\startmodif
$\bfh(\bfx_0)$-level set $\mathfrak{H}(\bfh(\bfx_0)) $, the image by $\bfvarphi$ of $\bfXrond_h$.
\stopmodif
So we have established that, for any $\bfx$ in $\bfRR^n$, there 
exists $\bfxrond$ in $\bfXrond_h $ which, with (\ref{LP230}), satisfies
\begin{equation}
\label{LP231}
\bfvarphi(\bfxrond)\;=\; \bfX_g(\bfx,-\bfh(\bfx))
\quad ,\qquad 
h(\coordxm(\bfvarphi(\bfxrond)))\;=\; 0
\  .
\end{equation}
Since $\bfvarphi$ is a bijection onto its image, we have defined a $C^s$ function $\bfh^\ortho:\bfRR^n\to 
\bfXrond_h$ satisfying
$$
\bfvarphi(\bfh^\ortho(\bfx))\;=\; \bfX_g(\bfx,-\bfh(\bfx))
\  ,
$$
and the function $\bfhhperp: \bfRR^n \to \bfRR^p\times \bfXrond_h$ defined as
\begin{equation}
\label{LP227}
\bfhhperp(\bfx)\;=\; (\bfh(\bfx),\bfh^\ortho(\bfx))
\end{equation}
is $C^s$. Because of (\ref{LP225}), it is injective. It is also surjective since, for any $(\bfy,\bfxrond)$ 
in $\bfRR^p\times\bfXrond_h$,
by letting
$
\bfx\;=\;  \bfX_g(\bfvarphi(\bfxrond),\bfy)
\  ,
$
we obtain, with (\ref{LP226}) and (\ref{LP231}),
$$
\bfvarphi(\bfxrond)= \bfX_g(\bfx,-\bfh(\bfx))
\  , 
\renewcommand{\arraystretch}{1.5}
\IfReport{
\begin{array}[t]{r@{\,  }c@{\,  }l@{}}
\bfy&=&\coordym^{-1}(h(\coordxm(\bfvarphi(\bfxrond)))+\coordym(\bfy) )
\\
&=& \coordym^{-1}(h(X_g(\coordxm(\bfvarphi(\bfxrond)),\coordym(\bfy))))
\\
&=&\bfh(\bfX_g(\bfvarphi(\bfxrond),\bfy))
\,=\,
\bfh(\bfx)
\  .
\end{array}
}{
\begin{array}[t]{r@{\,  }c@{\,  }l@{}}
\bfy&=&\coordym^{-1}(h(\coordxm(\bfvarphi(\bfxrond)))+\coordym(\bfy) )
%\\
%&=& \coordym^{-1}(h(X_g(\coordxm(\bfvarphi(\bfxrond)),\coordym(\bfy))))
\\
&=&\bfh(\bfX_g(\bfvarphi(\bfxrond),\bfy))
\,=\,
\bfh(\bfx)
\  .
\end{array}
}
$$
% \begin{eqnarray*}
% \bfvarphi(\bfxrond)&=& \bfX_g(\bfx,-\bfh(\bfx))
% \\
% \bfy&=&\coordym^{-1}(h(\coordxm(\bfvarphi(\bfxrond)))+\coordym(\bfy) )
% \\
% &=& \coordym^{-1}(h(X_g(\coordxm(\bfvarphi(\bfxrond)),\coordym(\bfy))))
% \\
% &=&\bfh(\bfX_g(\bfvarphi(\bfxrond),\bfy))
% \;=\; 
% \bfh(\bfx)
% \end{eqnarray*}
% if we let
% \IfTwoCol{%
% % two col
% $\bfx\;=\;  \bfX_g(\bfvarphi(\bfxrond),\bfy)$
% }{%
% % one col
% $$
% \bfx\;=\;  \bfX_g(\bfvarphi(\bfxrond),\bfy)
% $$
% }
% we obtain
% \IfTwoCol{%
% % two col
% $\bfvarphi(\bfxrond)\;=\; \bfX_g(\bfx,-\bfh(\bfx))$
% }{%
% % one col
% $$
% \bfvarphi(\bfxrond)\;=\; \bfX_g(\bfx,-\bfh(\bfx))
% $$
% }
% and, because of (\ref{LP226}),
% $$
% \bfh(\bfx)\;=\; \bfh(\bfX_g(\bfvarphi(\bfxrond),\bfy))
% \;=\; 
% \bfh(\bfvarphi(\bfxrond))+\bfy\;=\; \bfh(\bfx_0)\;+\; \bfy
% $$
Hence $\bfhhperp$ is a $C^s$ bijection and so is its inverse $\bfhhperpinv:\bfRR^p\times\bfXrond_h\to \bfRR^n$
which satisfies
\begin{equation}
\label{LP214}
\left.
\renewcommand{\arraystretch}{1.5}
\begin{array}{@{}c@{}}
\bfhhperpinv(\bfy,\bfxrond) \;=\;  \bfX_g(\bfvarphi(\bfxrond),\bfy)
\  ,\\ 
\bfy \;=\;  \bfh(\bfhhperpinv(\bfy,\bfxrond))
\quad ,\qquad  
\bfxrond \;=\;   \bfh^\ortho(\bfhhperpinv(\bfy,\bfxrond))
\  .
\end{array}\quad \right\}\quad 
\end{equation}
% \begin{equation}
% \label{LP214}
% \bfh(\bfhhperpinv(\bfy,\bfxrond))\;=\; \bfy
% \qquad \forall (\bfy,\bfxrond)\in \bfh(\bfRR^n)\times\bfXrond_h 
% \end{equation}
Finally, given an arbitrary integer $\indyi$, let $\sigma $ be a permutation satisfying
$$
\sigma (1)\;=\; \indyi
\  ,
$$
$\coordxr$ be an arbitrary coordinate chart in $\bfXrond_h $, and  $\hhperpinv$ and $\varphi$ be the 
expressions  of $\bfhhperpinv$ and $\bfvarphi$.
\startmodif
Exploiting commutation, we have
\stopmodif
$$
\hhperpinv (y,\xrond)\;=\; 
\; X_{\sigma (p)}\left( \vrule height 1.2em depth 1.2em width 0pt
X_{\sigma (p-1)}\left(\vrule height 0.8em depth 0.8em width 0pt
\ldots\left(\vrule height 0.6em depth 0.6em width 0pt
X_{\sigma (2)}
\left(\vrule height 0.4em depth 0.4em width 0pt
X_{\indyi}(\varphi(\xrond),y_\indyi),y_{\sigma (2)})\right)\ldots\right),y_{\sigma (p-1)}\right),y_{\sigma (p)}\right)
\  .$$

%
% - - - - - - - - - - - - - - - - - - - -
%
\noindent\underline{$\bfhhperpinv$ is a diffeomorphism:}\\[0.2em]
We obtain
\IfTwoCol{%
% two col
\begin{eqnarray*}
\frac{\partial \hhperpinv}{\partial y_\indyi}(y,\coordxrp)
&\hskip -0.5em =&\hskip -0.5em \displaystyle 
\left[\frac{\partial X_{\sigma (p)}}{\partial x}
\frac{\partial X_{\sigma (p-1)}}{\partial x}
\ldots
\frac{\partial X_{\sigma (2)}}{\partial x}
\right] 
\frac{\partial X_{\indyi}}{\partial t}(\varphi(\coordxrp),y_\indyi)
\\
&\hskip -0.5em =&\hskip -0.5em \displaystyle 
\frac{\partial X_g}{\partial x}(\varphi(\coordxrp),y)
\frac{\partial X_{\indyi}}{\partial x}(\varphi(\coordxrp),y)^{-1}
g_\indyi(X_{\indyi}(\varphi(\coordxrp),y_\indyi))
\end{eqnarray*}
}{%
% one col
\begin{eqnarray*}
\frac{\partial \hhperpinv}{\partial y_\indyi}(y,\coordxrp)
&=&
\left[\frac{\partial X_{\sigma (p)}}{\partial x}
\frac{\partial X_{\sigma (p-1)}}{\partial x}
\ldots
\frac{\partial X_{\sigma (2)}}{\partial x}
\left(X_{\indyi}(\varphi(\xrond),y_\indyi),y_{\sigma (2)})\right)
\right]
\frac{\partial X_{\indyi}}{\partial t}(\varphi(\coordxrp),y_\indyi)
\\
&=&
\frac{\partial X_g}{\partial x}(\varphi(\coordxrp),y)
\frac{\partial X_{\indyi}}{\partial x}(\varphi(\coordxrp),y)^{-1}
g_\indyi(X_{\indyi}(\varphi(\coordxrp),y_\indyi))
\end{eqnarray*}
}%
But with (\ref{LP211}), we have
$$
\frac{\partial X_{\indyi}}{\partial x}(\varphi(\coordxrp),y)^{-1}
g_\indyi(X_{\indyi}(\varphi(\coordxrp),y_\indyi))\;=\; 
g_\indyi(\varphi(\coordxrp))
$$
and
\IfReport{%
% report
\begin{eqnarray*}
\frac{\partial X_g}{\partial x}(\varphi(\coordxrp),y)
g_\indyi(\varphi(\coordxrp))
&\hskip -0.5em =&\hskip -0.5em \displaystyle  
\left[\frac{\partial X_{\sigma (p)}}{\partial x}
\frac{\partial X_{\sigma (p-1)}}{\partial x}
\ldots
\frac{\partial X_{\sigma (2)}}{\partial x}\right]
\frac{\partial X_{\sigma (1)}}{\partial x}(\varphi(\coordxrp),y_{\sigma (1)})\,   g_{\sigma (1)}(\varphi(\coordxrp))
\\[0.5em]
&\hskip -0.5em =&\hskip -0.5em  \displaystyle 
\left[\frac{\partial X_{\sigma (p)}}{\partial x}
\frac{\partial X_{\sigma (p-1)}}{\partial x}
\ldots
\frac{\partial X_{\sigma (2)}}{\partial x}
\right]\: 
g_{\sigma (1)}\!\left(
\vrule height 0.4em depth 0.4em width 0pt
X_{\sigma (1)}(\varphi(\coordxrp),y_\indyi)\right)
\\[0.5em]
&\hskip -0.5em =&\hskip -0.5em \displaystyle
\left[\frac{\partial X_{\sigma (p)}}{\partial x}
\frac{\partial X_{\sigma (p-1)}}{\partial x}
\ldots
\frac{\partial X_{\sigma (3)}}{\partial x}
\right]
g_{\sigma (1)}\!\left(
\vrule height 0.5em depth 0.5em width 0pt
X_{\sigma (2)}\left(
\vrule height 0.4em depth 0.4em width 0pt
X_{\sigma (1)}(\varphi(\coordxrp),y_{\sigma (1)}),y_{\sigma (2)}\right)
\right)
\\[-0.5em]&\hskip -0.5em &\hskip -0.5em \displaystyle 
\null \  \vdots
\\
&\hskip -0.5em =&\hskip -0.5em \displaystyle g_{\sigma (1)}(\hhperpinv(y,\coordxrp))
\; =\;  
g_{\indyi}(\hhperpinv(y,\coordxrp))
\end{eqnarray*}
}{%
% not report
\\[0.7em]$\displaystyle 
\frac{\partial X_g}{\partial x}(\varphi(\coordxrp),y)
g_\indyi(\varphi(\coordxrp))
$\hfill \null \\[0.5em]$\displaystyle 
=\;  
\left[\frac{\partial X_{\sigma (p)}}{\partial x}
\frac{\partial X_{\sigma (p-1)}}{\partial x}
\ldots
\frac{\partial X_{\sigma (2)}}{\partial x}\right]\times
$\hfill\null\\\null\hfill$\displaystyle
\times \frac{\partial X_{\sigma (1)}}{\partial x}(\varphi(\coordxrp),y_{\sigma (1)})\,   g_{\sigma (1)}(\varphi(\coordxrp))
$\\[0.5em]$\displaystyle
=\; \displaystyle 
\left[\frac{\partial X_{\sigma (p)}}{\partial x}
\frac{\partial X_{\sigma (p-1)}}{\partial x}
\ldots
\frac{\partial X_{\sigma (2)}}{\partial x}
\right]\: 
g_{\sigma (1)}\!\left(
\vrule height 0.4em depth 0.4em width 0pt
X_{\sigma (1)}(\varphi(\coordxrp),y_\indyi)\right)
$\hfill\null\\[0.5em]$\displaystyle
=\; \displaystyle
\left[\frac{\partial X_{\sigma (p)}}{\partial x}
\frac{\partial X_{\sigma (p-1)}}{\partial x}
\ldots
\frac{\partial X_{\sigma (3)}}{\partial x}
\right]\times
$\hfill\null\\\null\hfill$\displaystyle
\times
g_{\sigma (1)}\!\left(
\vrule height 0.5em depth 0.5em width 0pt
X_{\sigma (2)}\left(
\vrule height 0.4em depth 0.4em width 0pt
X_{\sigma (1)}(\varphi(\coordxrp),y_{\sigma (1)}),y_{\sigma (2)}\right)
\right)
$\\[-0.5em]$\displaystyle
\null \  \vdots
$\hfill\null\\[0.5em]$\displaystyle
=\;  
g_{\sigma (1)}(\hhperpinv(y,\coordxrp))
%$\hfill\null\\[0.5em]$\displaystyle
=\;  
g_{\indyi}(\hhperpinv(y,\coordxrp))
$\\[0.7em]
}%
All this gives the expressions
\begin{eqnarray}
\label{LP212}
\frac{\partial \hhperpinv}{\partial y_\indyi}(y,\coordxrp)
&=&
\frac{\partial X_g}{\partial x}(\varphi(\coordxrp),y)
g_\indyi(\varphi(\coordxrp))
\\\label{LP213}
&=&
g_{\indyi}(\hhperpinv(y,\coordxrp))
\  .
\end{eqnarray}
We obtain also
\begin{eqnarray*}
\frac{\partial \hhperpinv}{\partial \coordxrp}(y,\coordxrp)
&=&\frac{\partial X_{\sigma (p)}}{\partial x}
\frac{\partial X_{\sigma (p-1)}}{\partial x}
\ldots
\frac{\partial X_{\sigma (1)}}{\partial x}(\varphi(\coordxrp),y)
\frac{\partial \varphi}{\partial \coordxrp}(\coordxrp)
\\
&=&
\frac{\partial X_g}{\partial x}(\varphi(\coordxrp),y)
\frac{\partial \varphi}{\partial \coordxrp}(\coordxrp)
\end{eqnarray*}
Assume there exists a nonzero vector $(u,w)$ in $\RR^p\times\RR^{n-p}$ satisfying
$$
\left(\begin{array}{cc}
\displaystyle 
\frac{\partial \hhperpinv}{\partial y}(y,\coordxrp)
&\displaystyle 
\frac{\partial \hhperpinv}{\partial \coordxrp}(y,\coordxrp)
\end{array}\right)
\left(\begin{array}{c}
u \\ w
\end{array}\right)\;=\; 0
$$
With (\ref{LP212}), this is equivalent to
\IfTwoCol{%
% two col
$$ 
\frac{\partial X_g}{\partial x}(\varphi(\coordxrp),y)
\left[\left(\begin{array}{@{\,  }ccc@{\,  }}
\displaystyle 
g_1(\varphi(\coordxrp))
&\!\!\!
\ldots\!\!\!
&\displaystyle 
g_p(\varphi(\coordxrp))
\end{array}\right)
u 
\;+\; 
\frac{\partial \varphi}{\partial \coordxrp}(\coordxrp) w
\right]\!=\! 0
$$
}{%
% one col
$$ 
\frac{\partial X_g}{\partial x}(\varphi(\coordxrp),y)
\left[\left(\begin{array}{@{\,  }ccc@{\,  }}
\displaystyle 
g_1(\varphi(\coordxrp))
&
\ldots
&\displaystyle 
g_p(\varphi(\coordxrp))
\end{array}\right)
u 
\;+\; 
\frac{\partial \varphi}{\partial \coordxrp}(\coordxrp) w
\right]\;=\; 0
$$
}%
Then, since $\frac{\partial X_g}{\partial x}(\varphi(\coordxrp),y)$ is a product of invertible matrices, this 
is further equivalent to
$$
\left(\begin{array}{@{\,  }ccc@{\,  }}
\displaystyle 
g_1(\varphi(\coordxrp))
&
\ldots
&\displaystyle 
g_p(\varphi(\coordxrp))
\end{array}\right)
u 
\;+\; 
\frac{\partial \varphi}{\partial \coordxrp}(\coordxrp) w
\;=\; 0
$$
Here we note that
the vectors $g_\indyi$ are independent in $\Distrib _P ^\ortho (\varphi(\coordxrp))$ and, $\bfvarphi$ being an embedding, for any nonzero vector $w$ in the tangent 
space to $\bfXrond_h $ at $\bfxrond$, the vector
$$
v^\tangent\;=\; \frac{\partial \varphi}{\partial \coordxrp}(\coordxrp)\,  w
$$
is a nonzero vector in $\Distrib^\tangent (\varphi(\coordxrp))$.
So there is no such nonzero vector $(u,w)$. This proves that the matrix
$\left(\begin{array}{cc}
\displaystyle 
\frac{\partial \hhperpinv}{\partial y}(y,\coordxrp)
&\displaystyle 
\frac{\partial \hhperpinv}{\partial \coordxrp}(y,\coordxrp)
\end{array}\right)$
is invertible and therefore that $\hhperpinv$ is a diffeomorphism.
These arguments establish also that we have
\begin{equation}
\label{LP215}
\frac{\partial \hhperpinv }{\partial \coordxrp }(y,\coordxrp)^\top
P(\hhperpinv (y,\coordxrp))
\frac{\partial \hhperpinv }{\partial \coordxrp }(y,\coordxrp)\; >\; 0
\end{equation}
%
% - - - - - - - - - - - -
%
\noindent
\underline{Construction of the metrics $\bfPy$ and $\bfPxi$:}\\
Let $\mathfrak{P}$ be the expression of the pull back by $\bfhhperpinv$ of $\bfP$. It satisfies
$$
\mathfrak{P}({}y,\coordxrp)\;=\; 
\left(
\renewcommand{\arraystretch}{1.9}
\begin{array}{c}
\displaystyle \frac{\partial \hhperpinv}{\partial {}y}(y, \coordxrp)^\top
\\
\displaystyle \frac{\partial \hhperpinv}{\partial  \coordxrp}(y, \coordxrp)^\top
\end{array}\right)
P(\hhperpinv(y,\coordxrp))
\left(\begin{array}{cc}
\displaystyle \frac{\partial \hhperpinv}{\partial {}y}(y, \coordxrp)
&
\displaystyle \frac{\partial \hhperpinv}{\partial  \coordxrp}(y, \coordxrp)
\end{array}\right)
$$
To simplify this expression, we note that we have (see (\ref{LP210}))
\IfTwoCol{%
% two col
\\[0.7em]$\displaystyle 
\frac{\partial \hhperpinv }{\partial {}y_\indyi}(y,\coordxrp)^\top
P(\hhperpinv (y,\coordxrp))
\frac{\partial \hhperpinv }{\partial {}y_\indyj}(y,\coordxrp)
$\hfill\null\\\null\hfill$\displaystyle
=\; 
g_\indyi(\hhperpinv (y,\coordxrp))
P(\hhperpinv (y,\coordxrp))g_\indyj(\hhperpinv (y,\coordxrp))
\;=\; \Py_{\indyi\indyj}(h(\hhperpinv (y,\coordxrp)))
$\\[0.7em]
}{%
% one col
\begin{eqnarray*}
\frac{\partial \hhperpinv }{\partial {}y_\indyi}(y,\coordxrp)^\top
P(\hhperpinv (y,\coordxrp))
\frac{\partial \hhperpinv }{\partial {}y_\indyj}(y,\coordxrp)&=&
g_\indyi(\hhperpinv (y,\coordxrp))
P(\hhperpinv (y,\coordxrp))
g_\indyj(\hhperpinv (y,\coordxrp))
\\
&=&\Py_{\indyi\indyj}(h(\hhperpinv (y,\coordxrp)))
\end{eqnarray*}
}%
and also, using (\ref{LP214}) for the last identity,
\IfTwoCol{%
% two col
\\[0.7em]$\displaystyle 
\frac{\partial \hhperpinv }{\partial {}y_\indyi}(y,\coordxrp)^\top
P(\hhperpinv (y,\coordxrp))
\frac{\partial \hhperpinv }{\partial \coordxrp }(y,\coordxrp)
$\hfill\null\\\null\hfill$\displaystyle
\renewcommand{\arraystretch}{1.5}
\begin{array}{@{}cl@{}}
=&\displaystyle 
w{}_\indyi^\top
\Py(h(\hhperpinv (y,\coordxrp)))  \frac{\partial h}{\partial x}(\hhperpinv (y,\coordxrp))\frac{\partial \hhperpinv }{\partial \xrond}(y,\coordxrp)
\\
=&\displaystyle 
w{}_\indyi^\top\Py(h(\hhperpinv (y,\coordxrp)))
\frac{\partial }{\partial \coordxrp}\left\{h(\hhperpinv (y,\coordxrp))\right\}
\\
=&\displaystyle 
w{}_\indyi^\top\Py(h(\hhperpinv (y,\coordxrp)))
\frac{\partial }{\partial \coordxrp}\left\{y\right\}
\; =\;0\  .
\end{array}
$\\[0.7em]
This yields
\\[0.5em]\null \hfill $\displaystyle 
\mathfrak{P}({}y,\coordxrp)\;=\; \left(
\begin{array}{cc}
\Py(y) & 0 \\ 0 & \Pxi ({}y,\coordxrp)
\end{array}\right)
$\hfill \null \\[0.3em]
where
$$
\Pxi ({}y,\coordxrp)\;=\; 
\frac{\partial \hhperpinv }{\partial \coordxrp}(y,\coordxrp)^\top
P(\hhperpinv (y,\coordxrp))
\frac{\partial \hhperpinv }{\partial \coordxrp}(y,\coordxrp)
$$
}{%
% one col
\begin{eqnarray*}
\frac{\partial \hhperpinv }{\partial {}y_\indyi}(y,\coordxrp)^\top
P(\hhperpinv (y,\coordxrp))
\frac{\partial \hhperpinv }{\partial \coordxrp }(y,\coordxrp)&=&
w{}_\indyi^\top
\Py(h(\hhperpinv (y,\coordxrp)))\frac{\partial h}{\partial x}(\hhperpinv (y,\coordxrp))\frac{\partial \hhperpinv }{\partial \xrond}(y,\coordxrp)
\\
&=&
w{}_\indyi^\top\Py(h(\hhperpinv (y,\coordxrp)))
\frac{\partial }{\partial \coordxrp}\left\{h(\hhperpinv (y,\coordxrp))\right\}
\\
&=&
w{}_\indyi^\top\Py(h(\hhperpinv (y,\coordxrp)))
\frac{\partial }{\partial \coordxrp}\left\{y\right\}
\\
&=&0
\end{eqnarray*}
This yields
$$
\mathfrak{P}({}y,\coordxrp)\;=\; \left(
\begin{array}{cc}
\Py(y) & 0 \\ 0 & \Pxi ({}y,\coordxrp)
\end{array}\right)
$$
where
$$
\Pxi ({}y,\coordxrp)\;=\; 
\frac{\partial \hhperpinv }{\partial \coordxrp}(y,\coordxrp)^\top
P(\hhperpinv (y,\coordxrp))
\frac{\partial \hhperpinv }{\partial \coordxrp}(y,\coordxrp)
$$
}%
With (\ref{LP213}), we have
\\[1em]$\displaystyle 
\frac{\partial R}{\partial {}y_\indyi}({}y,\coordxrp)
$\hfill \null 
\\\null \quad  $\displaystyle 
\;=\;
\frac{\partial ^2\hhperpinv }{\partial \coordxrp\partial y_\indyi}(y,\coordxrp)^\top
P(\hhperpinv (y,\coordxrp))\frac{\partial \hhperpinv }{\partial \coordxrp}(y,\coordxrp)
$\hfill \null \\\null \hfill $\displaystyle +
\frac{\partial \hhperpinv }{\partial \coordxrp}(y,\coordxrp)^\top
\left[\frac{\partial P}{\partial x}(x)\frac{\partial \hhperpinv }{\partial y_\indyi}(y,\coordxrp))\right]
\frac{\partial \hhperpinv }{\partial \coordxrp}(y,\coordxrp)
$\hfill \null \\\null \hfill $\displaystyle +
\frac{\partial \hhperpinv }{\partial \coordxrp}(y,\coordxrp)^\top P(\hhperpinv (y,\coordxrp))
\frac{\partial ^2\hhperpinv }{\partial \coordxrp \partial y_\indyi}(y,\coordxrp)
$\\\null \quad  $\displaystyle 
\;=\;
\frac{\partial \hhperpinv }{\partial \coordxrp}(y,\coordxrp)^\top
\frac{\partial g_\indyi}{\partial x}(\hhperpinv(y,\coordxrp)^\top
P(\hhperpinv (y,\coordxrp))\frac{\partial \hhperpinv }{\partial \coordxrp}(y,\coordxrp)
$\hfill \null \\\null \hfill $\displaystyle +
\frac{\partial \hhperpinv }{\partial \coordxrp}(y,\coordxrp)^\top
\left[\frac{\partial P}{\partial x}(x)g_\indyi(\hhperpinv (y,\coordxrp))\right]
\frac{\partial \hhperpinv }{\partial \coordxrp}(y,\coordxrp)
$\hfill \null \\\null \hfill $\displaystyle +
\frac{\partial \hhperpinv }{\partial \coordxrp}(y,\coordxrp)^\top P(\hhperpinv (y,\coordxrp))
\frac{\partial g_\indyi}{\partial x}(\hhperpinv(y,\coordxrp)\frac{\partial \hhperpinv }{\partial \coordxrp }(y,\coordxrp)
$\\\null \quad  $\displaystyle 
\;=\;
\frac{\partial \hhperpinv }{\partial \coordxrp}(y,\coordxrp)^\top
\mathcal{L}_{g_\indyi}P(\hhperpinv (y,\coordxrp))
\frac{\partial \hhperpinv }{\partial \coordxrp}(y,\coordxrp)\  ,$\\[1em]
where
\IfTwoCol{%
% two col
\\[1em]$\displaystyle 
\mathcal{L}_{g_\indyi}P(x)_{\indxa\indxb}
$\null \hfill \\\null \quad $\displaystyle 
\;=\;
\sum_{\indxc,\indyj}\frac{\partial P_{\indxa\indxb}}{\partial x_\indxc}(x)
\grad_P[h_\indyj]_\indxc (x) \Py_{\indyj\indyi}(h(x))
$\hfill \null \\ $\displaystyle
+
\sum_{\indxc,\indyj}
\left[
P_{\indxa\indxc}(x)\frac{\partial \grad_P[h_\indyj]_\indxc}{\partial x_\indxb}(x)
+
P_{\indxb\indxc}(x)\frac{\partial \grad_P[h_\indyj]_\indxc}{\partial x_\indxa}(x)
\right]\Py_{\indyj\indyi}(h(x))
$\hfill \null \\$\displaystyle
+
\sum_{\indxc,\indyj,\indyk}
\left[
P_{\indxa\indxc}(x) \frac{\partial h_\indyk}{\partial x_\indxb}(x)
+
P_{\indxb\indxc}(x) \frac{\partial h_\indyk}{\partial x_\indxa}(x)
\right]\times
$\hfill \null \\[-0.5em]\null \hfill $\displaystyle
\times
\grad_P[h_\indyj]_\indxc(x) \frac{\partial \Py_{\indyj\indyi}}{\partial y}(h(x))
$\\\null \quad $\displaystyle 
\;=\;
2 \sum_{\indyj}\Hess_P[h_\indyj]_{\indxa\indxb}(x) \Py_{\indyj\indyi}(h(x))
$\hfill \null \\ $\displaystyle
+
\sum_{\indxc,\indyj,\indyk}
\left[
P_{\indxa\indxc}(x) \frac{\partial h_\indyk}{\partial x_\indxb}(x)
+
P_{\indxb\indxc}(x) \frac{\partial h_\indyk}{\partial x_\indxa}(x)
\right]\times
$\hfill \null \\[-0.5em]\null \hfill $\displaystyle
\times
\grad_P[h_\indyj]_\indxc(x) \frac{\partial \Py_{\indyj\indyi}}{\partial y}(h(x))
\  .$\\[0.5em]
}{%
% one col
\\[1em]$\displaystyle 
\mathcal{L}_{g_\indyi}P(x)_{\indxa\indxb}
$\null \hfill \\\null \quad $\displaystyle 
\;=\;
\sum_{\indxc,\indyj}\frac{\partial P_{\indxa\indxb}}{\partial x_\indxc}(x)
\grad_P[h_\indyj]_\indxc (x) \Py_{\indyj\indyi}(h(x))
$\hfill \null \\\null \hfill $\displaystyle
+
\sum_{\indxc,\indyj}
\left[
P_{\indxa\indxc}(x)\frac{\partial \grad_P[h_\indyj]_\indxc}{\partial x_\indxb}(x) \Py_{\indyj\indyi}(h(x))+
P_{\indxb\indxc}(x)\frac{\partial \grad_P[h_\indyj]_\indxc}{\partial x_\indxa}(x) \Py_{\indyj\indyi}(h(x))
\right]
$\hfill \null \\\null \hfill $\displaystyle
+
\sum_{\indxc,\indyj,\indyk}
\left[
P_{\indxa\indxc}(x)\grad_P[h_\indyj]_\indxc(x) \frac{\partial \Py_{\indyj\indyi}}{\partial y}(h(x))
\frac{\partial h_\indyk}{\partial x_\indxb}(x)
+
P_{\indxb\indxc}(x)\grad_P[h_\indyj]_\indxc(x) \frac{\partial \Py_{\indyj\indyi}}{\partial y}(h(x))
\frac{\partial h_\indyk}{\partial x_\indxb}(x)
\right]
$\\\null \quad $\displaystyle 
\;=\;
2 \sum_{\indyj}\Hess_P[h_\indyj]_{\indxa\indxb}(x) \Py_{\indyj\indyi}(h(x))
$\hfill \null \\\null \hfill $\displaystyle
+
\sum_{\indxc,\indyj,\indyk}
\left[
P_{\indxa\indxc}(x)\grad_P[h_\indyj]_\indxc(x) \frac{\partial \Py_{\indyj\indyi}}{\partial y}(h(x))
\frac{\partial h_\indyk}{\partial x_\indxb}(x)
+
P_{\indxb\indxc}(x)\grad_P[h_\indyj]_\indxc(x) \frac{\partial \Py_{\indyj\indyi}}{\partial y}(h(x))
\frac{\partial h_\indyk}{\partial x_\indxb}(x)
\right]
\  .
$\\[1em]
}%
Since from Lemma \ref{lem12},
% \cite[Lemma 3.2(i)]{Vilms} or \cite[Proposition A.2.2]{57}
the level sets of $\bfh$ are totally geodesic, it follows from
\cite[Proposition A.2.2]{57} that there exist
continuous  functions $k_{\indyj\indyk\indxa}$ such that
we have
$$
\Hess_P[h_\indyj]_{\indxa\indxb}(x) 
\;=\; \sum_{\indyk}k_{\indyj\indyk\indxa}(x) 
\frac{\partial h_\indyk}{\partial x_\indxb}(x) 
+
k_{\indyj\indyk\indxb}(x) 
\frac{\partial h_\indyk}{\partial x_\indxa}(x)
\  .
$$
Consequently, there exist continuous functions $\ell_{\indyi\indyk\indxa}$ satisfying
\IfReport{%
% report
$$
\frac{\partial R}{\partial {}y_\indyi}({}y,\coordxrp) =
\sum_{\indxa,\indxb,\indyk}
\frac{\partial \hhperpinv _\indxa}{\partial \coordxrp}(y,\coordxrp)^\top
\!
\left[
\ell_{\indyi\indyk\indxa}(\hhperpinv (y,\coordxrp))\frac{\partial h_\indyk}{\partial x_\indxb}(\hhperpinv (y,\coordxrp))
+\ell_{\indyi\indyk\indxb}(\hhperpinv (y,\coordxrp))
\frac{\partial h_\indyk}{\partial x_\indxa}(\hhperpinv (y,\coordxrp))^\top 
\right]\frac{\partial \hhperpinv _\indxb}{\partial \coordxrp}(y,\coordxrp)
$$
}{%
% not report
\\[0.7em]$\displaystyle 
\frac{\partial R}{\partial {}y_\indyi}({}y,\coordxrp)\; =\; 
\sum_{\indxa,\indxb,\indyk}
\frac{\partial \hhperpinv _\indxa}{\partial \coordxrp}(y,\coordxrp)^\top
\left[
\ell_{\indyi\indyk\indxa}(\hhperpinv (y,\coordxrp))\frac{\partial h_\indyk}{\partial x_\indxb}(\hhperpinv (y,\coordxrp))
\right.
$\hfill \null \\\null \hfill $\displaystyle
\left.+\ell_{\indyi\indyk\indxb}(\hhperpinv (y,\coordxrp))
\frac{\partial h_\indyk}{\partial x_\indxa}(\hhperpinv (y,\coordxrp))^\top 
\right]\frac{\partial \hhperpinv _\indxb}{\partial \coordxrp}(y,\coordxrp)
$\\[0.7em]
}
Then with
$$
\sum_{\indxa}
\frac{\partial \hhperpinv _\indxa}{\partial \coordxrp}(y,\coordxrp)^\top
\frac{\partial h_\indyk}{\partial x_\indxa}(\hhperpinv (y,\coordxrp))^\top 
\;=\; 
\frac{\partial }{\partial \coordxrp}\left\{h_\indyk(\hhperpinv (y,\coordxrp))\right\}^\top
\  ,
$$
and (\ref{LP214}), we obtain that $\frac{\partial R}{\partial {}y_\indyi}({}y,\coordxrp)\;$ is zero and 
therefore that $R$ does not depend on $y$.

Finally we note that $R$ is 
the expression of a covariant $2$ tensor with positive definite values. Indeed we have (\ref{LP215}) and if
$\bar \coordxrp$ are other coordinates for $\bfxrond$, i.e. 
\IfTwoCol{%
% two col
$\bar \coordxrp = \changexr(\coordxrp)$
where $\changexr$ is a diffeomorphism, then with the definitions
\\[0.5em]\null \hfill $\displaystyle 
\bar \hhperpinv (y,\bar \coordxrp)\,=\, \hhperpinv(y,\coordxrp)
\ ,\ 
\bar \Pxi (\bar \coordxrp)\,=\, 
\frac{\partial \bar \hhperpinv }{\partial \bar \coordxrp}(y,\bar \coordxrp )^\top
P(\bar \hhperpinv (y,\bar \coordxrp ))
\frac{\partial \bar \hhperpinv }{\partial \bar \coordxrp}(y,\bar \coordxrp )
$\hfill \null \\
we obtain
\IfTwoCol{%
% two col
$\frac{\partial  \hhperpinv }{\partial  \coordxrp}(y, \coordxrp )= 
\frac{\partial \bar \hhperpinv }{\partial \bar \coordxrp}(y,\coordxrp )
\frac{\partial \changexr}{\partial \coordxrp}(\coordxrp)
$.
}{
% one col
\\[0.3em]\null \hfill $\displaystyle 
\frac{\partial  \hhperpinv }{\partial  \coordxrp}(y, \coordxrp )\;=\; 
\frac{\partial \bar \hhperpinv }{\partial \bar \coordxrp}(y,\coordxrp )
\frac{\partial \changexr}{\partial \coordxrp}(\coordxrp)
$\hfill \null \\
}
This yields
\begin{eqnarray*}
\frac{\partial \changexr}{\partial \coordxrp}(\coordxrp)^\top
\!\bar \Pxi (\bar \coordxrp)
\frac{\partial \changexr}{\partial \coordxrp}(\coordxrp)
&\!\!\!\!=\!\!\!\!&
\frac{\partial  \hhperpinv }{\partial  \coordxrp}(y, \coordxrp )^\top
\!P(\bar \hhperpinv (y,\bar \coordxrp ))
\frac{\partial  \hhperpinv }{\partial  \coordxrp}(y, \coordxrp )
\\
&\!\!\!\!=\!\!\!\!&
\frac{\partial  \hhperpinv }{\partial  \coordxrp}(y, \coordxrp )^\top
\!P( \hhperpinv (y, \coordxrp ))
\frac{\partial  \hhperpinv }{\partial  \coordxrp}(y, \coordxrp )
\!=\! \Pxi(\coordxrp )
\:  .
\end{eqnarray*}
}{%
% one col
$$
\bar \coordxrp\;=\;\changexr(\coordxrp)
$$
where $\changexr$ is a diffeomorphism, then with the definitions
$$
% \bar \varphi(\bar \coordxrp)\;=\; \varphi(\coordxrp)
% \quad ,\qquad 
\bar \hhperpinv (y,\bar \coordxrp)\;=\; \hhperpinv(y,\coordxrp)
\quad ,\qquad 
\bar \Pxi (\bar \coordxrp)\;=\; 
\frac{\partial \bar \hhperpinv }{\partial \bar \coordxrp}(y,\bar \coordxrp )^\top
P(\bar \hhperpinv (y,\bar \coordxrp ))
\frac{\partial \bar \hhperpinv }{\partial \bar \coordxrp}(y,\bar \coordxrp )
$$
we obtain
$$
\frac{\partial  \hhperpinv }{\partial  \coordxrp}(y, \coordxrp )\;=\; 
\frac{\partial \bar \hhperpinv }{\partial \bar \coordxrp}(y,\coordxrp )
\frac{\partial \changexr}{\partial \coordxrp}(\coordxrp)
$$
This yields
\begin{eqnarray*}
\frac{\partial \changexr}{\partial \coordxrp}(\coordxrp)^\top
\bar \Pxi (\bar \coordxrp)
\frac{\partial \changexr}{\partial \coordxrp}(\coordxrp)
&=&
\frac{\partial  \hhperpinv }{\partial  \coordxrp}(y, \coordxrp )^\top
P(\bar \hhperpinv (y,\bar \coordxrp ))
\frac{\partial  \hhperpinv }{\partial  \coordxrp}(y, \coordxrp )
\\
&=&
\frac{\partial  \hhperpinv }{\partial  \coordxrp}(y, \coordxrp )^\top
P( \hhperpinv (y, \coordxrp ))
\frac{\partial  \hhperpinv }{\partial  \coordxrp}(y, \coordxrp )
\;=\; \Pxi(\coordxrp )
\  .
\end{eqnarray*}
}%

\vspace{-0.15in}
\noindent
\underline{Properties of $\bfh^\ortho$:}\\
So far, we have established that, if
\startmodif
\begin{itemize}
\item[]
$\bfP$ is complete,
\item[]
$\bfh$ is a Riemannian submersion on $\bfRR^n$,
\item[]
the vector fields $g_\indyi$ commute,
\item[]
the function $\bfy\mapsto \bfPy(\bfy)$ is bounded,
\end{itemize}%
\stopmodif\noindent%
there exists a globally defined diffeomorphism
$\bfhhperpinv $ which
is an  isometry between the Riemannian product
of $(\bfRR,\bfPy) \times (\bfXrond_h ,\bfPxi)$ and $(\bfRR^n,\bfP)$.
This implies
that the inverse $\bfhhperp$ of $\bfhhperpinv$ is an isometry between
$(\bfRR^n,\bfP)$ and $(\bfRR,\bfPy) \times (\bfXrond_h ,\bfPxi)$. Moreover its component $\bfh^\ortho$ (see (\ref{LP227})) is a surjective submersion. It is also a Riemannian submersion because of the construction of the metric 
$\bfPxi$ for  $\bfXrond_h$.
Moreover, with (\ref{LP214}) and (\ref{LP213}), we obtain, for any coordinate chart,
\begin{eqnarray*}
\frac{\partial h^\ortho}{\partial x}(\hhperpinv(y,\xrond))
\frac{\partial \hhperpinv}{\partial y}(y,\xrond)&=&0
\\
\frac{\partial h^\ortho}{\partial x}(x)\left(
\begin{array}{@{\,  }ccc@{\,  }}
g_1(x) & \dots & g_p(x)\end{array}\right)
&=&0\qquad \forall x
\end{eqnarray*}
So the space tangent at $\bfx$ to the level sets of $\bfh^\ortho$ is $\bfDistrib _\bfP ^\ortho(\bfx)$.
With Lemma \ref{lem11}
% \cite[Lemma 9.44]{Besse.78}
it follows that, for any $\bfx_0$, in the $\bfx$-manifold $\bfRR^n$, the $\bfh^\ortho(\bfx_0)$-level set of 
$\bfh^\ortho$
\IfReport{
$$
\mathfrak{H}^\ortho (\bfh^\ortho(\bfx_0))\;=\; \left\{\bfx\in\bfRR^n:\,  
\bfh^\ortho(\bfx)=\bfh^\ortho(\bfx_0))\right\}
$$
}
{
$
\mathfrak{H}^\ortho (\bfh^\ortho(\bfx_0))\;=\; \left\{\bfx\in\bfRR^n:\,  
\bfh^\ortho(\bfx)=\bfh^\ortho(\bfx_0))\right\}
$
}
is totally geodesic. Then, with the definition of the metric $\bfPy$, any geodesic 
$\bfgammay:(s_1,s_2)\to \mathfrak{H}^\ortho (\bfx_0)$ for the 
metric $\bfPy$ equipping $\mathfrak{H}^\ortho (\bfx_0)$ gives rise to a geodesic
$s\in (s_1,s_2)\mapsto \bfgamma (s)=(\bfgammay(s),\bfh^\ortho(\bfx_0))$ for the metric $\bfP$ equipping 
$\bfRR^n$. If $\bfP$ is complete the latter can be 
extended to $(-\infty ,+\infty )$ and, because $\mathfrak{H}^\ortho (\bfx_0)$ is totally geodesic, the $\bfxrond$ 
component of this extension remains constant and its $\bfy$ component is a geodesic in this 
$\bfh^\ortho(\bfx_0)$-level set
$\mathfrak{H}^\ortho (\bfh^\ortho(\bfx_0))$ for the 
metric $\bfPy$. With the Hopf-Rinow Theorem 
\cite[Theorem II.1.1]{Sakai.96}  this implies $\bfPy$ is complete. The same argument apply for the metric 
$\bfPxi$.
\null 
\vskip -2em

\stoparchive

%%%%%%%%%%%%%%%%%%%%%%%%%%%%%%
%                            %
%   Supplementary Material   %
%                            %
%%%%%%%%%%%%%%%%%%%%%%%%%%%%%%

% \startcomp
\startarchive
\newpage
\section{Supplementary Material}

\sousection{Observer with a gradient like correction term}
\soussousection{Motivation for a gradient like correction term}
\label{complement10}
Our postulate is that we want an observer in a
Kalman form 
$$
\dot{\hatx}\;=\; f(x)\;-\; \mathfrak{C}(\hatx,y)
$$
such that, given a Riemannian metric with expression $P$, the correction term $\mathfrak{C}$
contributes to the decrease of the distance between $\hat x$ and $x$. This means that we have
\begin{equation}
\label{LP61repeat}
\frac{d\gamma  ^*}{ds}(s)^\top 
P(\gamma   ^*(s) ) \,  \mathfrak{C}(\gamma   ^*(s) ,h(\gamma   ^*(0) ))\; > \; 0
\qquad \forall s:\,  0 < s\ .
\end{equation}
Since the correction must vanish when $h(\hatx)=y$, i.e.
\begin{equation}
\label{1b}
\mathfrak{C}(x,h(x))\;=\; 0\  ,
\end{equation}
by defining
$$
x=\gamma ^*(0)
\quad ,\qquad 
v\;=\; \frac{d\gamma ^*}{ds}(0)
\  ,
$$
(\ref{LP61repeat}) gives, for all $(x,v)$, 
\\[1em]\vbox{\noindent\null\qquad $\displaystyle 
\lim_{s\to 0} 
\frac{1}{s}
\left[\frac{d\gamma  ^*}{ds}(s)^{\top}
P(\gamma   ^*(s) ) \,  \mathfrak{C}(\gamma   ^*(s) ,h(\gamma   ^*(0) ))
-
\frac{d\gamma  ^*}{ds}(0)^{\top}
P(\gamma   ^*(0) ) \,  \mathfrak{C}(\gamma   ^*(0) ,h(\gamma   ^*(0) ))
\right]
$\hfill \null \\[0.5em]\null \hfill $\displaystyle \;=\; 
v^{\top} P(x)\,  \frac{\partial \mathfrak{C}}{\partial x}(x,h(x))\,  v
\; \geq \; 0
\  .
$\qquad \null }\\[1em]
Taking the total derivative of $\mathfrak{C}$ and using (\ref{1b}) we obtain
$$
\frac{\partial \mathfrak{C}}{\partial x}(x,h(x))\;=\;-
\frac{\partial \mathfrak{C}}{\partial y}(x,h(x))\frac{\partial h}{\partial x}(x)
$$
Hence (\ref{LP61repeat}) implies~:
\begin{equation}
\label{LP69}
v^{\top} P(x)\frac{\partial \mathfrak{C}}{\partial y}(x,h(x))\frac{\partial h}{\partial x}(x)\,  v
\; \leq \; 0
\end{equation}
In the case where $\frac{\partial h}{\partial x}(x)$ is full row 
rank, this implies\footnote{
% BEGIN FOOTNOTE
\startmodif
Assume we have $v^{\top} B A v\geq 0$ for all $v$ and where $A$ is a 
full row rank matrix. There exists another full row rank matrix $C$  such 
that the matrix $M=\left(\begin{array}{c}
A \\ C
\end{array}\right)$ is invertible and we have 
$AM^{-1}=\left(\begin{array}{cc}
I & 0
\end{array}\right)$. We decompose $v$ and $B$ as follows
$
Mv=\left(\begin{array}{cc}
v_A\\v_C
\end{array}\right)$ and $
{M^{-1}}^{\top} B=\left(\begin{array}{c}
B_A \\ B_C
\end{array}\right)
$.
Then our assumption can be rewritten as
$
[v_A^{\top} B_A + v_C^{\top} B_C]v_A\geq 0$ for all $v_A$ and $v_C$.
This implies the symmetric part of $B_A$ is non negative and $B_C$ is 
zero.
This gives
$B=M^{\top}\left(\begin{array}{c} B_A \\ 0\end{array}\right)=
A^{\top} B_A
$.
\stopmodif
% END FOOTNOTE
} the existence of a matrix with non negative symmetric part 
$R(x)$ (not necessarily continuous in $x$)
such that we have~:
$$
P(x)\frac{\partial \mathfrak{C}}{\partial y}(x,h(x))\;=\; -
\frac{\partial h}{\partial x}(x)^\top R(x) 
$$
or, in other words, we must have
$$
\frac{\partial \mathfrak{C}}{\partial y}(x,h(x))\;=\; -\grad_P h  (x)\,  R(x)
\  .
$$
This condition is satisfied when the correction term is
$$
\mathfrak{C}(\hatx ,y)\;=\;  k_E(\hatx )\,  \grad_P h(\hatx )
\frac{\partial \wpunbf }{\partial y_a}(h(\hatx ),y)^\top
$$

\soussousection{Our observer is a gradient algorithm}
\startmodif
In (\ref{eqn:GeodesicObserverVectorField}),
the observer is
\begin{equation}
\label{eqn:GeodesicObserverVectorFieldrepeat}
\dot{\bfhatx }\;=\;  \bff({\bfhatx }) \;-\;
\kunbf _E(\bfhatx )\,  \bfgrad_\bfP [\wpunbf \circ\bfh](\bfhatx ,\bfy)
\  .
\end{equation}
\stopmodif
To show how it can be seen as a gradient algorithm, let 
$\bfx_0$ be any point in the $\bfx$-manifold $\bfRR^n$, $\coordx$ be any coordinate chart around  
$\bfx_0$ and $\coordy$ be any coordinate chart around  
$\bfh(\bfx_0)$. The expression of the observer (\ref{eqn:GeodesicObserverVectorFieldrepeat}) is~:
$$
\dot{\hatx } = f(\hatx) - k_E(\hatx)
P(x)^{-1}\frac{\partial h}{\partial x}(\hat x)^\top
\frac{\partial \wpunbf}{\partial y_1}(h(\hatx), y)^\top
$$
We consider the time-varying diffeomorphism $x\mapsto \changex_t(x)$ defined as~:
$$
\changex_t(x)\;=\; X(x,-t)
$$
where $X(x,t)$ denotes the solution 
at time $t$, issued from $x$, at time $0$, of
$$
\dot x = f(x)
\  .
$$
It satisfies
$$
\frac{\partial \changex_t}{\partial x}(x) f(x)\;=\; f(\changex_t(x))
$$
and gives rise to another but time-varying coordinate chart 
$(\barcoordxp_t,\barcoordxd_t,\barcoordxm_t)$ around $\bfx_0$. With $\coordy$,
and
$$
\bar x_t\;=\; \changex_t(x)
\quad ,\qquad 
\bar k_{tE}(\bar x_t)\;=\; k_E(\changex_t^{-1}(\bar x_t))
\quad ,\qquad 
\bar h_t(\bar x_t)\;=\; h(\changex_t^{-1}(\bar x_t))
$$
and
$$ 
P(x)\;=\; \frac{\partial \changex_t}{\partial x}(x)^\top \bar P_t(\changex_t(x)) \frac{\partial \changex_t}{\partial x}(x)
$$
the other expression of the observer is~:
\begin{eqnarray}
\dot{{\overline{\hatx }}}_t &=&  -  k_E(\hatx )
\frac{\partial \changex_t}{\partial x}(\hatx)
P(\hatx )^{-1}\frac{\partial  h}{\partial x}(\hatx)^\top
\frac{\partial \wpunbf}{\partial y_1}(\bar h(\hatx ), y)^\top
\,
\\
\label{5b}
&=& - \bar k_E(\overline{\hatx }_t)
\bar P_t(\overline{\hatx }_t)^{-1}\frac{\partial \bar h_t}{\partial \bar{x}_t}(\overline{\hatx }_t)^\top
\frac{\partial \wpunbf}{\partial y_1}(\bar h_t(\overline{\hatx }_t), y)^\top
\end{eqnarray}
Of course, because of coordinate independence, in the right hand side, we recognize the partial derivative
$$
\frac{\partial }{\partial \overline{\hatx }_t }
\left\{
\wpunbf (\bar h_t(\overline{\hatx }_t), y)
\right\}
\;=\; 
\frac{\partial \wpunbf}{\partial y_1}(\bar h_t(\overline{\hatx }_t), y)
\frac{\partial \bar h_t}{\partial \bar{x}_t}(\overline{\hatx }_t)
$$
It follows that $\dot{\overline{\hatx }}_t$ is proportional to
the expression in the time varying coordinates of the Riemannian gradient,
with respect to $\bfhatx$, of the function $\bfhatx\mapsto \wpunbf (\bfh(\bfhatx),\bfy)$.

\sousection{More about coordinates}
\label{complement45}
When $x$ are used as coordinates for $\,  \bfx$, it is useful to collect all the 
Christoffel symbols
$\,  \Gamma _{\indxb\indxc}^\indxa$
%\linebreak[1]
\glos{\ref{item:RiemannianMetric}},
% \setbox0=\hbox{\glos{\ref{item:RiemannianMetric}}}
% \wd0=0pt
% \box0, 
attached to the expression $P$ of a metric $\bfP$ in the ``matrix''
$$
\mbox{\large $\Gamma $}^\indxa\;=\; 
\left(\Gamma _{\indxb\indxc}^\indxa\right)
$$
This gives for example the compact notation
\begin{equation}
\label{LP158}
u^\top \mbox{\large $\Gamma $}^\indxa v\;=\; \sum_{\indxb,\indxc}\Gamma _{\indxb\indxc}^\indxa u_\indxb 
v_\indxc
\end{equation}

With other coordinates
$$
\bar x\;=\; \changex (x)
\quad ,\qquad 
\bar y \;=\; \changey (y)
\ ,
$$
we have (see \cite[(3.5.22)]{Lovelock-Rund})
$$
\bar h(\changex(x))\;=\; \changey(h(x))
\quad ,\qquad 
\frac{\partial \changex}{\partial x}(x)^\top \bar P(\changex(x))\,  \frac{\partial \changex}{\partial x}(x)\;=\; P(x)
\  ,
$$
\begin{equation}
\label{LP193}
\sum_{\indxa,\indxb}
\frac{\partial \changex _\indxa}{\partial x_\indxd}(x)
\bar \Gamma _{\indxa\indxb}^\indxc(\changex(x))\frac{\partial \changex_\indxb}{\partial x_\indxe}(x)
+ \frac{\partial ^2 \changex_k}{\partial x_\indxd \partial x_\indxe}(x)
\;=\; 
\sum _\indxf\Gamma_{\indxd\indxe}^\indxf(x) \frac{\partial \changex _\indxc}{\partial x_\indxf}(x)
\  .
\end{equation}

When $(y,\xrond)$ are used as coordinates for $\bfx$, we collect all the Christoffel symbols
$\Gamma _{\indyj\indyk}^\indyi$, 
$\Gamma _{\indyj\indxrb}^\indyi$,
$\Gamma _{\indxrb\indxrc}^\indyi$,
$\Gamma _{\indyj\indyk}^\indxra$,
$\Gamma _{\indyj\indxrb}^\indxra$,
x$\Gamma _{\indxrb\indxrc}^\indxra$,
in the two ``matrices''
$$
% \label{LP159}
\mbox{\large $\Gamma $}^\indyi\;=\; 
\left(\begin{array}{cc}
\Gamma _{yy}^\indyi
&
\Gamma _{y\xrond}^\indyi
\\
\Gamma _{\xrond y}^\indyi
&
\Gamma _{\xrond\xrond}^\indyi
\end{array}\right)
\quad ,\qquad 
\mbox{\large $\Gamma $}^\indxra\;=\; 
\left(\begin{array}{cc}
\Gamma _{yy}^\indxra
&
\Gamma _{y\xrond}^\indxra
\\
\Gamma _{\xrond y}^\indxra
&
\Gamma _{\xrond\xrond}^\indxra
\end{array}\right)
\ .
$$
Also a change of coordinates of the form
$$
(\bar y,\bar \xrond)\;=\; (\changey (y),\changexr (y,\xrond))
$$
leads to
\begin{eqnarray*}
\frac{\partial \changey}{\partial y} ^\top \bar P_{yy}  \frac{\partial \changey}{\partial y}
+
\frac{\partial \changexr}{\partial y} ^\top \bar P_{\xrond y}  \frac{\partial \changey}{\partial y}
+
\frac{\partial \changey}{\partial y} ^\top \bar P_{y\xrond}  \frac{\partial \changey}{\partial y}
+
\frac{\partial \changexr}{\partial y} ^\top \bar P_{\xrond\xrond}  \frac{\partial \changexr}{\partial y}
&=&P_{yy}
\\
\frac{\partial \changey}{\partial y} ^\top \bar P_{y\xrond}  \frac{\partial \changexr}{\partial \xrond}
+
\frac{\partial \changexr}{\partial y} ^\top \bar P_{\xrond \xrond}  \frac{\partial \changexr}{\partial \xrond}
&=&
P_{y\xrond}
\\
\frac{\partial \changexr}{\partial \xrond} ^\top \bar P_{\xrond \xrond}  \frac{\partial \changexr}{\partial \xrond}
&=&P_{\xrond\xrond}
\  .
\end{eqnarray*}
and when 
$$
(y,\xrond)\;\mapsto\; (\bar y,\barxrond)\;=\;(y,\changexr(y,\xrond))
$$
\begin{eqnarray*}
\bar \Gamma _{yy}^\indyi
+
\bar \Gamma _{y\xrond}^\indyi
\frac{\partial \changexr}{\partial y}
+
\frac{\partial \changexr}{\partial y}^\top
\bar \Gamma _{\xrond y}^\indyi
+
\frac{\partial \changexr }{\partial y}^\top
\bar \Gamma _{\xrond\xrond}^\indyi
\frac{\partial \changexr }{\partial y}
&=&
\Gamma _{yy}^\indyi
\  ,
\\
\bar \Gamma _{y\xrond}^\indyi
\frac{\partial \changexr}{\partial \xrond}
+
\frac{\partial \changexr }{\partial y}^\top
\bar \Gamma _{\xrond\xrond}^\indyi
\frac{\partial \changexr}{\partial \xrond}
&=&
\Gamma _{y\xrond}^\indyi
\  ,\\
\frac{\partial \changexr }{\partial \xrond}^\top
\bar \Gamma _{\xrond\xrond}^\indyi
\frac{\partial \changexr}{\partial \xrond}
&=&
\Gamma _{\xrond\xrond}^\indyi
\end{eqnarray*}
The proof of the latter is as follows.
A direct decomposition of (\ref{LP193}) gives
\\[1em]$\displaystyle 
\sum_{\indyj\indyk}\frac{\partial y_\indyj}{\partial y_\indyl}
\bar \Gamma _{\indyj\indyk}^\indyi
\frac{\partial y_\indyk}{\partial y_\indym}
+
\sum_{\indyj\indxrc}\frac{\partial y_\indyj}{\partial y_\indyl}
\bar \Gamma _{\indyj\indxrc}^\indyi
\frac{\partial \changexr_\indxrc}{\partial y_\indym}
+
\sum_{\indyj\indxrc}\frac{\partial \changexr_\indxrc}{\partial y_\indyl}
\bar \Gamma _{\indyj\indxrc}^\indyi
\frac{\partial y_\indyj}{\partial y_\indym}
+
\sum_{\indxrc\indxrd}\frac{\partial \changexr _\indxrc}{\partial y_\indyl}
\bar \Gamma _{\indxrc\indxrd}^\indyi
\frac{\partial \changexr_\indxrd}{\partial y_\indym}
$\hfill \null \\\null \hfill $\displaystyle
\;=\; 
\sum_{\indyj}\frac{\partial y_\indyi}{\partial y_\indyj}
\Gamma _{\indyl\indym}^\indyj
+
\sum_{\indxrc}\frac{\partial y_\indyi}{\partial x_\indxrc}
\Gamma _{\indyl\indym}^\indxrc
-
\frac{\partial ^2 y_\indyi}{\partial y_\indyl\partial y_\indym}
%%%%
$\\[1em]$\displaystyle 
%%%%
\sum_{\indyj\indyk}\frac{\partial y_\indyj}{\partial y_\indyl}
\bar \Gamma _{\indyj\indyk}^\indyi
\frac{\partial y_\indyk}{\partial \xrond_\indxra}
+
\sum_{\indyj\indxrc}\frac{\partial y_\indyj}{\partial y_\indyl}
\bar \Gamma _{\indyj\indxrc}^\indyi
\frac{\partial \changexr_\indxrc}{\partial \xrond_\indxra}
+
\sum_{\indyj\indxrc}\frac{\partial \changexr_\indxrc}{\partial y_\indyl}
\bar \Gamma _{\indyj\indxrc}^\indyi
\frac{\partial y_\indyj}{\partial \xrond_\indxra}
+
\sum_{\indxrc\indxrd}\frac{\partial \changexr _\indxrc}{\partial y_\indyl}
\bar \Gamma _{\indxrc\indxrd}^\indyi
\frac{\partial \changexr_\indxrd}{\partial \xrond_\indxra}
$\hfill \null \\\null \hfill $\displaystyle
\;=\; 
\sum_{\indyj}\frac{\partial y_\indyi}{\partial y_\indyj}
\Gamma _{\indyl\indxra}^\indyj
+
\sum_{\indxrc}\frac{\partial y_\indyi}{\partial x_\indxrc}
\Gamma _{\indyl\indxra}^\indxrc
-
\frac{\partial ^2 y_\indyi}{\partial y_\indyl\partial \xrond_\indxra}
%%%%
$\\[1em]$\displaystyle 
%%%%
\sum_{\indyj\indyk}\frac{\partial y_\indyj}{\partial \xrond_\indxra}
\bar \Gamma _{\indyj\indyk}^\indyi
\frac{\partial y_\indyk}{\partial \xrond_\indxrb}
+
\sum_{\indyj\indxrc}\frac{\partial y_\indyj}{\partial \xrond_\indxra}
\bar \Gamma _{\indyj\indxrc}^\indyi
\frac{\partial \changexr_\indxrc}{\partial \xrond_\indxrb}
+
\sum_{\indyj\indxrc}\frac{\partial \changexr_\indxrc}{\partial \xrond_\indxra}
\bar \Gamma _{\indyj\indxrc}^\indyi
\frac{\partial y_\indyj}{\partial \xrond_\indxrb}
+
\sum_{\indxrc\indxrd}\frac{\partial \changexr _\indxrc}{\partial \xrond_\indxra}
\bar \Gamma _{\indxrc\indxrd}^\indyi
\frac{\partial \changexr_\indxrd}{\partial \xrond_\indxrb}
$\hfill \null \\\null \hfill $\displaystyle
\;=\; 
\sum_{\indyj}\frac{\partial y_\indyi}{\partial y_\indyj}
\Gamma _{\indxra\indxrb}^\indyj
+
\sum_{\indxrc}\frac{\partial y_\indyi}{\partial x_\indxrc}
\Gamma _{\indyl\indym}^\indxrc
-
\frac{\partial ^2 y_\indyi}{\partial \xrond_\indxra\partial \xrond_\indxrb}
$\\[1em]
This simplifies in
$$
\bar \Gamma _{\indyl\indym}^\indyi
+
\sum_{\indxrc}
\bar \Gamma _{\indyl\indxrc}^\indyi
\frac{\partial \changexr_\indxrc}{\partial y_\indym}
+
\sum_{\indxrc}\frac{\partial \changexr_\indxrc}{\partial y_\indyl}
\bar \Gamma _{\indym\indxrc}^\indyi
+
\sum_{\indxrc\indxrd}\frac{\partial \changexr _\indxrc}{\partial y_\indyl}
\bar \Gamma _{\indxrc\indxrd}^\indyi
\frac{\partial \changexr_\indxrd}{\partial y_\indym}
\;=\; 
\Gamma _{\indyl\indym}^\indyi
$$
%%%%
$$
\sum_{\indxrc}
\bar \Gamma _{\indyl\indxrc}^\indyi
\frac{\partial \changexr_\indxrc}{\partial \xrond_\indxra}
+
\sum_{\indxrc\indxrd}\frac{\partial \changexr _\indxrc}{\partial y_\indyl}
\bar \Gamma _{\indxrc\indxrd}^\indyi
\frac{\partial \changexr_\indxrd}{\partial \xrond_\indxra}
\;=\; 
\Gamma _{\indyl\indxra}^\indyi
$$
%%%
$$
\sum_{\indxrc\indxrd}\frac{\partial \changexr _\indxrc}{\partial \xrond_\indxra}
\bar \Gamma _{\indxrc\indxrd}^\indyi
\frac{\partial \changexr_\indxrd}{\partial \xrond_\indxrb}
\;=\; 
\Gamma _{\indxra\indxrb}^\indyi
$$
%
% END Detailed computations
%
%%%%%%%%%%%%%%

%%%%%%%%%%%%%
%
% BEGIN Detailed computations
%
\sousection{Proof of (\ref{LP110})}
\label{complement26}
Let $\dy (y,y_1)$ be the distance between $y_1$ and $y$. We keep $y_1$ fixed and let 
$y$ vary as
$y_1 + {\varsigma  _1} {u_1}
$ with $u_1$ some normalized vector.
Then the first variation formula, gives us
$$
\lim_{\varsigma  _1\to 0} \frac{\dy (y+{\varsigma  _1} {u_1},y_1)^2-\dy (y,y_1)^2}{\varsigma  _1}\;=\; 
2 \dy (y,y_1)
% \lim_{\varsigma  _1\to 0} \frac{\dy (y+{\varsigma  _1} {u_1},y_0)-\dy (y,y_0)}{\varsigma  _1}
% 2s
\  u_1^T \Py (y_1) \frac{d\gammay^* }{ds}(s)
\qquad \forall u_1
\  ,
$$
\startmodif
where $\gammay^* $ is a minimizing normalized geodesic from $y_1$ to $y$ satisfying
\stopmodif
\begin{eqnarray*}
&\displaystyle \gammay ^*(0)=y_1
\quad ,\qquad \gammay ^*(s)=y
\  ,
\\
&\displaystyle  
\frac{d\gammay ^*}{ds}( s) ^\top \Py (\gammay ^*( s))\frac{d\gammay ^*}{ds}( s) =1
\qquad 
\forall s 
\  .
\end{eqnarray*}%
Hence the second identity in (\ref{LP110}) is established.

For the first one, we note that the geodesic equation gives the ``acceleration'' $c_1$ of a geodesic issued from
$y_1$ in the direction $v_1$ as
\begin{equation}
\label{LP194}
2\sum_{\indyk}\Py _{\indyi\indyk} (y_1)[c_1]_{\indyk}\;=\;  -2 \sum_\indyk\left[\frac{\partial \Py_{\indyi\indyk}}{\partial y}(y_1) v_1\right] [v_1]_\indyk + 
\sum_{\indyk,\indyl}[v_1]_\indyk \frac{\partial \Py _{\indyk\indyl}}{\partial y_\indyi}(y_1) [v_1]_\indyl
\end{equation}
with the compact notation
$$
\left[\frac{\partial \Py_{\indyi\indyk}}{\partial y}(y_1) v_1\right]
\;=\; 
\sum_\indyj
\frac{\partial \Py_{\indyi\indyk}}{\partial y_\indyj}(y_1) [v_1]_\indyj
$$
Then Taylor expansions give
\begin{eqnarray*}
\gammay ^*(s)&=& y_1 + s v_1 + \frac{s^2}{2} c_1+O(s^3)
\\
\frac{d\gammay ^*}{ds}(s)&=& v_1 + s c_1+O(s^2)
\\
\Py _{\indyi\indyk}(\gammay ^*(s))&=& \Py _{\indyi\indyk}(y_1) + s
\left[\frac{\partial \Py _{\indyi\indyk}}{\partial y}(y_1) v_1\right]+O(s^2)\end{eqnarray*}
Hence, with $s_2$ satisfying
$$
y_2\;=\; y_1 + s_2 v_1 +O(s_2^2)
\  ,
$$
we obtain
\begin{eqnarray*}
\dy (y_1,y_2) &\hskip -0.6em=&\hskip -0.6em\int_0^{s_2}
\sqrt{
\left(v_1 + s c_1\right)^\top
\left(\Py (y_1) + s \left[\frac{\partial \Py }{\partial y}(y_1) v_1\right]\right)
\left(v_1 + s c_1\right)
+O(s^2)
} ds
\\
&\hskip -0.6em=&\hskip -0.6em
\sqrt{v_1 ^\top \Py (y_1) v_1 }
\int_0^{s_2}\!
\left[1+s
\frac{
c_1^\top \Py (y_1) v_1 
+
v_1^\top\left[\frac{\partial \Py }{\partial y}(y_1) v_1\right] v_1
+
v_1 ^\top \Py (y_1) c_1
}{
2\sqrt{v_1 ^\top \Py (y_1) v_1 }
}+O(s^2)\right]\!ds
\  .
\end{eqnarray*}
But (\ref{LP194}) implies~:
\\[1em]$\displaystyle 
\sum_{\indyi,\indyk}[v_1]_\indyi\left[\frac{\partial \Py _{\indyi\indyk}}{\partial y}(y_1) v_1\right] 
[v_1]_\indyk
+
2\sum_{\indyi,\indyj}[v_1]_\indyi \Py _{\indyi\indyk}(y_1) [c_1]_\indyk
$\hfill \null \\[0.5em]\null \hfill $
\begin{array}{@{}cl@{}}
=&\displaystyle 
\sum_{\indyi,\indyk}[v_1]_\indyi\left[\frac{\partial \Py _{\indyi\indyk}}{\partial y}(y_1) v_1\right] 
[v_1]_\indyk
-2 \sum_{\indyi,\indyk}[v_1]_\indyi\left[\frac{\partial \Py_{\indyi\indyk}}{\partial y}(y_1) v_1\right] [v_1]_\indyk + 
\sum_{\indyi,\indyk,\indyl}[v_1]_\indyi [v_1]_\indyk \frac{\partial \Py _{\indyk\indyl}}{\partial y_\indyi}(y_1) [v_1]_\indyl
% -2v_1^\top \left[\frac{\partial \Py }{\partial y}(y_1) v_1\right] v_1 + v_1^\top
% v_1^\top\left[\frac{\partial \Py }{\partial y}(y_1) v_1\right] v_1
% \left[\frac{\partial \Py }{\partial y}(y_1) v_1\right]v_1
\\[0.6em]
=&\displaystyle 
-\sum_{\indyi,\indyj,\indyk}[v_1]_\indyi\frac{\partial \Py _{\indyi\indyk}}{\partial y_\indyj}(y_1) [v_1]_\indyj
[v_1]_\indyk
+
\sum_{\indyi,\indyk,\indyl}[v_1]_\indyi [v_1]_\indyk \frac{\partial \Py _{\indyk\indyl}}{\partial y_\indyi}(y_1) [v_1]_\indyl
% -2v_1^\top \left[\frac{\partial \Py }{\partial y}(y_1) v_1\right] v_1 + v_1^\top
% v_1^\top\left[\frac{\partial \Py }{\partial y}(y_1) v_1\right] v_1
% \left[\frac{\partial \Py }{\partial y}(y_1) v_1\right]v_1
\\[0.6em]
=&\displaystyle 0
\end{array}$\\[1em]
So we have finally
\begin{eqnarray*}
\dy (y_1,y_2) ^2&=&[s_2v_1] ^\top \Py (y_1) [s_2v_1] + O(s_2^4)
\\
&=& [y_2-y_1] ^\top \Py (y_1) [y_2-y_1] + O(|y_2-y_1|^4)
\end{eqnarray*}
Hence~:
$$
\frac{\partial \dy^2}{\partial y_1}(y_1,y_1)\;=\; 0
\quad ,\qquad 
\frac{\partial ^2\dy^2}{\partial y_1^2}(y_1,y_1)\;=\; 2 \Py (y_1)
\  .
$$
\sousection{{Condition A3 does not hold for the metric (\ref{LP135})}}
\label{complement23}
In dimension $3$ with $x_1$ being the output,
the output level sets are totally geodesic if the metric satisfies~:
$$
\left(
\begin{array}{@{}c@{\qquad }c@{\qquad }c@{}}
\displaystyle 
2\frac{\partial P_{12}}{\partial x_2}
-
\frac{\partial P_{22}}{\partial x_1}
&\displaystyle 
\frac{\partial P_{22}}{\partial x_2}
&\displaystyle 
2\frac{\partial P_{23}}{\partial x_2}
-
\frac{\partial P_{22}}{\partial x_3}
%%%
\\[1em]\displaystyle 
%%%
\frac{\partial P_{12}}{\partial x_3}
+
\frac{\partial P_{13}}{\partial x_2}
-
\frac{\partial P_{23}}{\partial x_1}
&\displaystyle 
\frac{\partial P_{22}}{\partial x_3}
&\displaystyle 
\frac{\partial P_{33}}{\partial x_2}
%%%
\\[1em]\displaystyle 
%%%
2\frac{\partial P_{13}}{\partial x_3}
-
\frac{\partial P_{33}}{\partial x_1}
&\displaystyle 
2\frac{\partial P_{23}}{\partial x_3}
-
\frac{\partial P_{33}}{\partial x_2}
&\displaystyle 
\frac{\partial P_{33}}{\partial x_3}
\end{array}
\right)
\left(\begin{array}{c}
P_{22}P_{33}-P_{23}^2
\\[1em]\displaystyle 
P_{23}P_{13}-P_{33}P_{12}
\\[1em]\displaystyle 
P_{23}P_{12}-P_{22}P_{13}
\end{array}\right)
\;=\; 0\  .
$$
\startmodif
We obtain for the metric (\ref{LP135})
\stopmodif
\begin{eqnarray*}
2\frac{\partial P_{12}}{\partial x_2} - \frac{\partial P_{22}}{\partial x_1}
&=&0
\\
\frac{\partial P_{22}}{\partial x_2}
&=&0
\\
2\frac{\partial P_{23}}{\partial x_2} - \frac{\partial P_{22}}{\partial x_3}
&=&
2\frac{ -4 }{\lambda (\lambda ^2+4x_3)^2}
+\frac{8}{\lambda (\lambda ^2+4x_3)^2}
\;=\; 0
\end{eqnarray*}
So the first line is zero.

For the second line, we have
\begin{eqnarray*}
\frac{\partial P_{12}}{\partial x_3} + \frac{\partial P_{13}}{\partial x_2}
-\frac{\partial P_{23}}{\partial x_1}
&=&
\frac{4 }{ (\lambda ^2+4x_3)^2}
+ \frac{-(\lambda ^2-4x_3)}{\lambda ^2(\lambda ^2+4x_3)^2}
-\frac{ (3\lambda ^2+4x_3)}{\lambda ^2(\lambda ^2+4x_3)^2}
\;=\; 0
\\
\frac{\partial P_{22}}{\partial x_3}
&=&
-\frac{8}{\lambda (\lambda ^2+4x_3)^2}
\\
\frac{\partial P_{33}}{\partial x_2}
&=&
-
\frac{4(5\lambda ^2+4x_3)}{\lambda ^2(\lambda ^2+4x_3)^3}x_1
+ 
\frac{8(5\lambda ^2+4x_3)}{\lambda ^3(\lambda ^2+4x_3)^3}x_2
\\
&=&
\frac{
4(5\lambda ^2+4x_3)(-\lambda x_1
+8x_2)
}{\lambda ^3(\lambda ^2+4x_3)^3}
\end{eqnarray*}
$$
P_{23}P_{13}-P_{33}P_{12}
\begin{array}[t]{cl@{}}
=&\displaystyle 
\frac{ (3\lambda ^2+4x_3) x_1 - 4\lambda  x_2
}{\lambda ^2(\lambda ^2+4x_3)^2}
\; 
\frac{-\lambda ^3x_1+(\lambda ^2-4x_3)x_2
}{\lambda ^2(\lambda ^2+4x_3)^2}
\\\multicolumn{2}{@{}r@{}}{\displaystyle 
+
\frac{1 }{ (\lambda ^2+4x_3)}
\;
\frac{
[6\lambda ^4+12\lambda ^2x_3+16x_3^2]x_1 ^2
-
4(5\lambda ^2+4x_3)\lambda x_1x_2
+
4(5\lambda ^2+4x_3)x_2^2
}{\lambda ^3(\lambda ^2+4x_3)^3}
}
\\[1.5em]
=&\displaystyle 
\frac{
- \lambda ^3 (3\lambda ^2+4x_3)  x_1^2
+
[
(3\lambda ^2+4x_3)(\lambda ^2-4x_3)
+ 4 \lambda ^4
]  x_1x_2
-
4 (\lambda ^2-4x_3)\lambda x_2^2
}{\lambda ^4(\lambda ^2+4x_3)^3}
\\\multicolumn{2}{@{}r@{}}{\displaystyle 
+
\frac{
[6\lambda ^4+12\lambda ^2x_3+16x_3^2]\lambda x_1 ^2
-
4(5\lambda ^2+4x_3)\lambda ^2 x_1x_2
+
4(5\lambda ^2+4x_3)\lambda x_2^2
}{\lambda ^4(\lambda ^2+4x_3)^4}
}
\\[1.5em]
=&\displaystyle 
\frac{
[3\lambda ^4+8\lambda ^2x_3+16x_3^2]\lambda x_1 ^2
-
[
-13 \lambda  ^4
-24 \lambda ^2 x_3
-16 x_3^2
] x_1 x_2
+
4(4\lambda ^2+8x_3)\lambda x_2^2
}{\lambda ^4(\lambda ^2+4x_3)^4}
\end{array}
$$
$$
P_{23}P_{12}-P_{22}P_{13} 
\begin{array}[t]{cl@{}}
=&\displaystyle 
-\frac{ (3\lambda ^2+4x_3)
x_1
-
4\lambda 
x_2
}{\lambda ^2(\lambda ^2+4x_3)^2}
\; 
\frac{1 }{ (\lambda ^2+4x_3)}
+
\frac{2  }{ \lambda (\lambda ^2+4x_3)}
\; 
\frac{-
\lambda ^3
x_1
+
(\lambda ^2-4x_3)
x_2}{\lambda ^2(\lambda ^2+4x_3)^2}
\\[1.5em]
%%%%%%%
=&\displaystyle 
\frac{ -(5\lambda ^2+4x_3)\lambda x_1
+2 (3\lambda ^2-4x_3) x_2
}{\lambda ^3(\lambda ^2+4x_3)^3}
\end{array}
$$
So we get for the second line~:
\\[1em]$\displaystyle 
\frac{\partial P_{22}}{\partial x_3}
[P_{23}P_{13}-P_{33}P_{12}]
+
\frac{\partial P_{33}}{\partial x_2}
[P_{23}P_{12}-P_{22}P_{13}]\;=\; 
$\hfill \null \\\null \hfill $\displaystyle 
-\frac{8}{\lambda (\lambda ^2+4x_3)^2}
\frac{
[3\lambda ^4+8\lambda ^2x_3+16x_3^2]\lambda x_1 ^2
-
[
-13 \lambda  ^4
-24 \lambda ^2 x_3
-16 x_3^2
] x_1 x_2
+
4(4\lambda ^2+8x_3)\lambda x_2^2
}{\lambda ^4(\lambda ^2+4x_3)^4}
$\hfill \null \\\null \hfill $\displaystyle 
+\frac{
4(5\lambda ^2+4x_3)(-\lambda x_1
+8x_2)
}{\lambda ^3(\lambda ^2+4x_3)^3}
\frac{ -(5\lambda ^2+4x_3)\lambda x_1
+2 (3\lambda ^2-4x_3) x_2
}{\lambda ^3(\lambda ^2+4x_3)^3}
$\\[1em]
When $x_1=0$, this reduces to
$$
-\frac{8}{\lambda (\lambda ^2+4x_3)^2}
\; \frac{
4(4\lambda ^2+8x_3)\lambda x_2^2
}{\lambda ^4(\lambda ^2+4x_3)^4}
+\frac{
4(5\lambda ^2+4x_3)8x_2
}{\lambda ^3(\lambda ^2+4x_3)^3}
\; \frac{ 
2 (3\lambda ^2-4x_3) x_2
}{\lambda ^3(\lambda ^2+4x_3)^3}\; \neq \; 0
$$
For the third line, we have
\begin{eqnarray*}
2\frac{\partial P_{13}}{\partial x_3} - \frac{\partial P_{33}}{\partial x_1}
&=&
2
\frac{-4x_2}{\lambda ^2(\lambda ^2+4x_3)^2}
-2
\frac{8[-\lambda ^3x_1+(\lambda ^2-4x_3)x_2]}{\lambda ^2(\lambda ^2+4x_3)^3}
\;=\; 8\frac{
2\lambda ^3 x_1
-[3\lambda ^2 - 4 x_3] x_2
}{\lambda ^2(\lambda ^2+4x_3)^3}
\\
2\frac{\partial P_{23}}{\partial x_3} - \frac{\partial P_{33}}{\partial x_2}
&=&
2\frac{ 4x_1}{\lambda ^2(\lambda ^2+4x_3)^2}
-
2
\frac{ 8[(3\lambda ^2+4x_3)x_1-4\lambda x_2]
}{\lambda ^2(\lambda ^2+4x_3)^3}
\;=\; 
8
\frac{
-2[\lambda ^2 +2x_3]x_1
+ 8 \lambda  x_2
}{\lambda ^2(\lambda ^2+4x_3)^3}
\end{eqnarray*}
We conclude that the level sets of the output functions are not 
totally geodesic.
%
% END Detailed computations
%
%%%%%%%%%%%%%%

%%%%%%%%%%%%%%
%
% BEGIN Detailed computations
%
\sousection{{Condition A3 does not hold for the metric (\ref{LP136})}}~\\
\label{complement24}
\\[1em]$\displaystyle
k\,  \Gamma _{33}^1\;=\; 
-y\left[
\xrond _\indxrb ^2
\left|\begin{array}{@{}cccc@{}}
\scripteuP_{11} & . & \scripteuP_{13} & \scripteuP_{14}
\\
 .  & . & . & . 
\\
\scripteuP_{13} & . & \scripteuP_{33} & \scripteuP_{34}
\\
\scripteuP_{14} & . & \scripteuP_{34} & \scripteuP_{44}
\end{array}\right|
\;+\; 
2\xrond _\indxrb \left|\begin{array}{@{}cccc@{}}
\scripteuP_{11} & . & \scripteuP_{13} & \scripteuP_{14}
\\
\scripteuP_{12} & . & \scripteuP_{23} & \scripteuP_{24}
\\
\scripteuP_{13} & . & \scripteuP_{33} & \scripteuP_{34}
\\
 . & . & . & . 
\end{array}\right|
\;+\; 
\left|\begin{array}{@{}cccc@{}}
\scripteuP_{11} & \scripteuP_{12} & \scripteuP_{13} & . 
\\
\scripteuP_{12} & \scripteuP_{22} & \scripteuP_{23} & . 
\\
\scripteuP_{13} & \scripteuP_{23} & \scripteuP_{33} & . 
\\
 . & . & . & . 
\end{array}\right|
\right]
% --------------------------
$\hfill \null \\\null \hfill $\displaystyle
\;-\; 
\xrond _\indxra \left[
\xrond _\indxrb ^2
% \left|\begin{array}{@{}cccc@{}}
% \scripteuP_{12} & \scripteuP_{23} & \scripteuP_{24}
% \\
% \scripteuP_{13} & \scripteuP_{33} & \scripteuP_{34}
% \\
% \scripteuP_{14} & \scripteuP_{34} & \scripteuP_{44}
% \end{array}\right|
\left|\begin{array}{@{}cccc@{}}
 . & . & . & . 
\\
\scripteuP_{12} & . & \scripteuP_{23} & \scripteuP_{24}
\\
\scripteuP_{13} & . & \scripteuP_{33} & \scripteuP_{34}
\\
\scripteuP_{14} & . & \scripteuP_{34} & \scripteuP_{44}
\end{array}\right|
\;+\;
\xrond _\indxrb  \left(
% \left|\begin{array}{@{}cccc@{}}
% \scripteuP_{11} & \scripteuP_{12} & \scripteuP_{14}
% \\
% \scripteuP_{13} & \scripteuP_{23} & \scripteuP_{34}
% \\
% \scripteuP_{14} & \scripteuP_{24} & \scripteuP_{44}
% \end{array}\right|
\left|\begin{array}{@{}cccc@{}}
\scripteuP_{11} & \scripteuP_{12} & . & \scripteuP_{14}
\\
 . & . & . & . 
\\
\scripteuP_{13} & \scripteuP_{23} & . & \scripteuP_{34}
\\
\scripteuP_{14} & \scripteuP_{24} & . & \scripteuP_{44}
\end{array}\right|
+
% \left|\begin{array}{@{}cccc@{}}
% \scripteuP_{12} & \scripteuP_{22} & \scripteuP_{23
% \\
% \scripteuP_{13} & \scripteuP_{23} & \scripteuP_{33
% \\
% \scripteuP_{14} & \scripteuP_{24} & \scripteuP_{34}
% \end{array}\right|
\left|\begin{array}{@{}cccc@{}}
 . & . & . & . 
\\
\scripteuP_{12} & \scripteuP_{22} & \scripteuP_{23} & . 
\\
\scripteuP_{13} & \scripteuP_{23} & \scripteuP_{33} & . 
\\
\scripteuP_{14} & \scripteuP_{24} & \scripteuP_{34} & . 
\end{array}\right|
\right)
\;+\; 
\left|\begin{array}{@{}cccc@{}}
\scripteuP_{11} & \scripteuP_{12} & \scripteuP_{13} & . 
\\
\scripteuP_{12} & \scripteuP_{22} & \scripteuP_{23} & . 
\\
 . & . & . & . 
\\
\scripteuP_{14} & \scripteuP_{24} & \scripteuP_{34} & .
\end{array}\right|
\right]
$\\[1em]
\sousection{Proof of Lemma \ref{lem8}}
\label{complement37}
Let $\coordyxr$ and $\coordy$ be an arbitrary coordinate chart pair.
We have the expressions (see (\ref{LP185}), (\ref{LP148}) and (\ref{LP199}))~:
$$
\frac{\partial h}{\partial x}\;=\; \left(\begin{array}{cc}
I_p & 0
\end{array}\right)
\  ,\quad 
\secff _P h^\indyi\;=\; 
\left(\begin{array}{cc}
\Gammay _{yy}^\indyi
-\Gamma _{yy}^\indyi
&
-\Gamma _{y\xrond}^\indyi
\\
- \Gamma _{\xrond y}^\indyi
&
-\Gamma _{\xrond\xrond}^\indyi
\end{array}\right)\  ,
$$
$$
P^{-1}\;=\; 
\left(\begin{array}{cc}
P_y^{-1}
&
-P_y^{-1}P_{y\xrond}P_{\xrond\xrond}^{-1}
\\
-P_{\xrond\xrond}^{-1}P_{\xrond y}
P_y^{-1}
&
\left(P_{\xrond\xrond}-P_{\xrond y}P_{yy}^{-1}P_{y \xrond}\right)^{-1}
\end{array}\right) 
\  .
$$
and
\begin{eqnarray}
\nonumber
[\secff _P h^{\ortho,\ortho}]^\indyi&=&
\left(\begin{array}{cc}
I_p
&
-
P_{y\xrond}P_{\xrond\xrond}^{-1}
\end{array}\right) 
\left(\begin{array}{cc}
\Gammay _{yy}^\indyi
-\Gamma _{yy}^\indyi
&
-\Gamma _{y\xrond}^\indyi
\\
- \Gamma _{\xrond y}^\indyi
&
-\Gamma _{\xrond\xrond}^\indyi
\end{array}\right)
\left(\begin{array}{cc}
I_p
\\
-P_{\xrond\xrond}^{-1}P_{\xrond y}
\end{array}\right) 
\\\label{LP192}
&=&
\mbox{\large$\Gammay$}^\indyi
-
\Gamma _{yy}^\indyi
\;+\; 
P_{y\xrond}P_{\xrond\xrond}^{-1}
\Gamma _{\xrond y}^\indyi
\;+\; 
\Gamma _{y\xrond}^\indyi
P_{\xrond\xrond}^{-1}P_{\xrond y}
\;-\; 
P_{y\xrond}P_{\xrond\xrond}^{-1}
\Gamma _{\xrond\xrond}^\indyi
P_{\xrond\xrond}^{-1}P_{\xrond y}
\  .
\end{eqnarray}
We expand the latter in
\\[1em]$\displaystyle 
2\left[\secff _P h^{\ortho,\ortho}\right]_{\indyj\indyk}^\indyi
$\hfill \null \\\null \   $\displaystyle  =\; 
\sum_\indyl \Py ^{-1}_{\indyi\indyl}
\left(
\frac{\partial \Py _{\indyl\indyk}}{\partial y_\indyj}
+
\frac{\partial  \Py _{\indyl\indyj}}{\partial y_\indyk}
-
\frac{\partial  \Py _{\indyj\indyk}}{\partial y_\indyl}
\right)
$\hfill \null \\\null \   $\displaystyle  \hphantom{=\; }
-
\sum_\indyl[P^{-1}]_{\indyi\indyl}
\left(
\frac{\partial P_{\indyl\indyk}}{\partial y_\indyj}
+
\frac{\partial P_{\indyl\indyj}}{\partial y_\indyk}
-
\frac{\partial P_{\indyj\indyk}}{\partial y_\indyl}
\right)
-
\sum_\indxra [P^{-1}]_{\indyi\indxra}
\left(
\frac{\partial P_{\indxra\indyk}}{\partial y_\indyj}
+
\frac{\partial P_{\indxra\indyj}}{\partial y_\indyk}
-
\frac{\partial P_{\indyj\indyk}}{\partial \xrond_\indxra}
\right)
$\hfill \null \\\null \   $\displaystyle  \hphantom{=\; }
+\sum_{\indxrb,\indxrc}P_{\indyj\indxrb}[P_{\xrond\xrond}^{-1}]_{\indxrb\indxrc}
\left[
\sum_\indyl [P^{-1}]_{\indyi\indyl}
\left(
\frac{\partial P_{\indyl\indyk}}{\partial \xrond_\indxrc}
+
\frac{\partial P_{\indyl\indxrc}}{\partial y_\indyk}
-
\frac{\partial P_{\indxrc\indyk}}{\partial y_\indyl}
\right)
+
\sum_\indxra [P^{-1}]_{\indyi\indxra}
\left(
\frac{\partial P_{\indxra\indyk}}{\partial \xrond_\indxrc}
+
\frac{\partial P_{\indxra\indxrc}}{\partial y_\indyk}
-
\frac{\partial P_{\indxrc\indyk}}{\partial \xrond_\indxra}
\right)
\right]
$\hfill \null \\\null \   $\displaystyle  \hphantom{=\; }
+\sum_{\indxrd,\indxre}
\left[
\sum_\indyl [P^{-1}]_{\indyi\indyl}
\left(
\frac{\partial P_{\indyl\indyj}}{\partial \xrond_\indxrd}
+
\frac{\partial P_{\indyl\indxrd}}{\partial y_\indyj}
-
\frac{\partial P_{\indxrd\indyj}}{\partial y_\indyl}
\right)
+
\sum_\indxra [P^{-1}]_{\indyi\indxra}
\left(
\frac{\partial P_{\indxra\indyj}}{\partial \xrond_\indxrd}
+
\frac{\partial P_{\indxra\indxrd}}{\partial y_\indyj}
-
\frac{\partial P_{\indxrd\indyj}}{\partial \xrond_\indxra}
\right)
\right]
[P_{\xrond\xrond}^{-1}]_{\indxrd\indxre}
P_{\indxre\indyk}
$\hfill \null \\\null \   $\displaystyle  \hphantom{=\; }
-\sum_{\indxrb,\indxrc,\indxrd,\indxre}P_{\indyj\indxrb}[P_{\xrond\xrond}^{-1}]_{\indxrb\indxrc}
\left[
\sum_\indyl [P^{-1}]_{\indyi\indyl}
\left(
\frac{\partial P_{\indyl\indxrd}}{\partial \xrond_\indxrc}
+
\frac{\partial P_{\indyl\indxrc}}{\partial \xrond_\indxrd}
-
\frac{\partial P_{\indxrc\indxrd}}{\partial y_\indyl}
\right)
+
\right.
$\hfill \null \\\null \hfill $\displaystyle 
\left.
+
\sum_\indxra [P^{-1}]_{\indyi\indxra}
\left(
\frac{\partial P_{\indxra\indxrd}}{\partial \xrond_\indxrc}
+
\frac{\partial P_{\indxra\indxrc}}{\partial \xrond_\indxrd}
-
\frac{\partial P_{\indxrc\indxrd}}{\partial \xrond_\indxra}
\right)
\right]
[P_{\xrond\xrond}^{-1}]_{\indxrd\indxre}
P_{\indxre\indyk}
$\\[1em]
The inversion of the matrix $P$
$$
\left(\begin{array}{cc}
[P^{-1}]_{yy}  & [P^{-1}]_{y\xrond}
\\{}
[P^{-1}]_{\xrond y} & [P^{-1}]_{\xrond\xrond}
\end{array}
\right)
\left(\begin{array}{cc}
P_{yy}  & P_{y\xrond}
\\
P_{\xrond y} & P_{\xrond\xrond}
\end{array}
\right)
\;=\; \left(\begin{array}{cc}
I_p & 0 \\0 & I_{n-p}
\end{array}\right)
$$
gives
$$
[P^{-1}]_{yy} P_{y\xrond}\;=\; -[P^{-1}]_{y\xrond}P_{\xrond\xrond}
$$
and therefore
\begin{equation}
\label{LP200}
[P^{-1}]_{\indyi\indxra}\;=\; - \sum_\indyl [P^{-1}]_{\indyi\indyl}
M_{\indxra\indyl}
\  ,
\end{equation}
with the notation
$$
M_{\indxra\indyl}
\;=\; \sum_{\indxrf}P_{\indyl\indxrf}[P_{\xrond\xrond}^{-1}]_{\indxrf\indxra}
$$
This yields
\\[1em]$\displaystyle 
2\left[\secff _P h^{\ortho,\ortho}\right]_{\indyj\indyk}^\indyi
$\hfill \null \\\null \   $\displaystyle  =\; 
\sum_\indyl \Py ^{-1}_{\indyi\indyl}
\left(
\frac{\partial \Py _{\indyl\indyk}}{\partial y_\indyj}
+
\frac{\partial  \Py _{\indyl\indyj}}{\partial y_\indyk}
-
\frac{\partial  \Py _{\indyj\indyk}}{\partial y_\indyl}
\right)
$\hfill \null \\\null \   $\displaystyle  \hphantom{=\; }
-
\sum_\indyl [P^{-1}]_{\indyi\indyl}
\left[
\left(
\frac{\partial P_{\indyl\indyk}}{\partial y_\indyj}
+
\frac{\partial P_{\indyl\indyj}}{\partial y_\indyk}
-
\frac{\partial P_{\indyj\indyk}}{\partial y_\indyl}
\right)
-
\sum_{\indxra}M_{\indxra\indyl}
\left(
\frac{\partial P_{\indxra\indyk}}{\partial y_\indyj}
+
\frac{\partial P_{\indxra\indyj}}{\partial y_\indyk}
-
\frac{\partial P_{\indyj\indyk}}{\partial \xrond_\indxra}
\right)
\right]
$\hfill \null \\\null \   $\displaystyle  \hphantom{=\; }
+\sum_\indyl [P^{-1}]_{\indyi\indyl}
\left[
\sum_{\indxrc}M_{\indxrc\indyj}
\left[
\vphantom{\sum_{\indxrd,\indxre}}
\left(
\frac{\partial P_{\indyl\indyk}}{\partial \xrond_\indxrc}
+
\frac{\partial P_{\indyl\indxrc}}{\partial y_\indyk}
-
\frac{\partial P_{\indxrc\indyk}}{\partial y_\indyl}
\right)
-
\sum_{\indxra}M_{\indxra\indyl}
\left(
\frac{\partial P_{\indxra\indyk}}{\partial \xrond_\indxrc}
+
\frac{\partial P_{\indxra\indxrc}}{\partial y_\indyk}
-
\frac{\partial P_{\indxrc\indyk}}{\partial \xrond_\indxra}
\right)
\right]
\right]
$\\\null \   $\displaystyle  \hphantom{=\; }
+
\sum_\indyl [P^{-1}]_{\indyi\indyl}
\left[
\sum_{\indxrd,\indxre}
\left[
\vphantom{\sum_{\indxrd,\indxre}}
\left(
\frac{\partial P_{\indyl\indyj}}{\partial \xrond_\indxrd}
+
\frac{\partial P_{\indyl\indxrd}}{\partial y_\indyj}
-
\frac{\partial P_{\indxrd\indyj}}{\partial y_\indyl}
\right)
-
\sum_{\indxra}M_{\indxra\indyl}
\left(
\frac{\partial P_{\indxra\indyj}}{\partial \xrond_\indxrd}
+
\frac{\partial P_{\indxra\indxrd}}{\partial y_\indyj}
-
\frac{\partial P_{\indxrd\indyj}}{\partial \xrond_\indxra}
\right)
\right]
M_{\indxrd\indyk}
\right]
$\\\null \   $\displaystyle  \hphantom{=\; }
-
\sum_\indyl [P^{-1}]_{\indyi\indyl}
\left[
\sum_{\indxrc,\indxrd,\indxre}M_{\indxrc\indyj}
\left[
\vphantom{\sum_{\indxrf,\indxra}}
\left(
\frac{\partial P_{\indyl\indxrd}}{\partial \xrond_\indxrc}
+
\frac{\partial P_{\indyl\indxrc}}{\partial \xrond_\indxrd}
-
\frac{\partial P_{\indxrc\indxrd}}{\partial y_\indyl}
\right)
-
\sum_{\indxra}M_{\indxra\indyl}
\left(
\frac{\partial P_{\indxra\indxrd}}{\partial \xrond_\indxrc}
+
\frac{\partial P_{\indxra\indxrc}}{\partial \xrond_\indxrd}
-
\frac{\partial P_{\indxrc\indxrd}}{\partial \xrond_\indxra}
\right)
\right]\right]
M_{\indxrd\indyk}
$\\[1em]
We collect the terms involving a derivative with respect to a $\xrond$ component. Without the factor $\displaystyle \sum_\indyl [P^{-1}]_{\indyi\indyl}$, it is
\\[1em]$\displaystyle  
\mbox{Term}(\partial \xrond)\;=\; 
\sum_{\indxra}M_{\indxra\indyl}
\frac{\partial P_{\indyj\indyk}}{\partial \xrond_\indxra}
$\hfill \null \\\null \   $\displaystyle  \hphantom{=\; }
-
\sum_{\indxrc}M_{\indxrc\indyj}
\left[
\frac{\partial P_{\indyl\indyk}}{\partial \xrond_\indxrc}
-
\sum_{\indxra}M_{\indxra\indyl}
\left(
\frac{\partial P_{\indxra\indyk}}{\partial \xrond_\indxrc}
-
\frac{\partial P_{\indxrc\indyk}}{\partial \xrond_\indxra}
\right)
\right]
$\\\null \   $\displaystyle  \hphantom{=\; }
-
\sum_{\indxrd}
\left[
\frac{\partial P_{\indyl\indyj}}{\partial \xrond_\indxrd}
-
\sum_{\indxra}M_{\indxra\indyl}
\left(
\frac{\partial P_{\indxra\indyj}}{\partial \xrond_\indxrd}
-
\frac{\partial P_{\indxrd\indyj}}{\partial \xrond_\indxra}
\right)
\right]
M_{\indxrd\indyk}
$\\\null \   $\displaystyle  \hphantom{=\; }
+
\sum_{\indxrc,\indxrd}M_{\indxrc\indyj}
\left[
\vphantom{\sum_{\indxrf,\indxra}}
\left(
\frac{\partial P_{\indyl\indxrd}}{\partial \xrond_\indxrc}
+
\frac{\partial P_{\indyl\indxrc}}{\partial \xrond_\indxrd}
\right)
-
\sum_{\indxra}M_{\indxra\indyl}
\left(
\frac{\partial P_{\indxra\indxrd}}{\partial \xrond_\indxrc}
+
\frac{\partial P_{\indxra\indxrc}}{\partial \xrond_\indxrd}
-
\frac{\partial P_{\indxrc\indxrd}}{\partial \xrond_\indxra}
\right)
\right]
M_{\indxrd\indyk}
$\\[1em]
We change the dummy variable to get a derivative with respect to $\xrond_\indxra$
\\[1em]$\displaystyle  
\mbox{Term}(\partial \xrond)\;=\; 
\sum_{\indxra}M_{\indxra\indyl}
\frac{\partial P_{\indyj\indyk}}{\partial \xrond_\indxra}
$\hfill \null \\\null \   $\displaystyle  \hphantom{=\; }
-
\sum_{\indxra}M_{\indxra\indyj}
\frac{\partial P_{\indyl\indyk}}{\partial \xrond_\indxra}
-
\sum_{\indxra}
\frac{\partial P_{\indyl\indyj}}{\partial \xrond_\indxra}
M_{\indxra\indyk}
$\hfill \null \\\null \   $\displaystyle  \hphantom{=\; }
+
\sum_{\indxra}M_{\indxra\indyj}
\sum_{\indxrc}M_{\indxrc\indyl}
\frac{\partial P_{\indxrc\indyk}}{\partial \xrond_\indxra}
-
\sum_{\indxrc}M_{\indxrc\indyj}
\sum_{\indxra}M_{\indxra\indyl}
\frac{\partial P_{\indxrc\indyk}}{\partial \xrond_\indxra}
$\\\null \   $\displaystyle  \hphantom{=\; }
+
\sum_{\indxra}
\sum_{\indxrd}M_{\indxrd\indyl}
\frac{\partial P_{\indxrd\indyj}}{\partial \xrond_\indxra}
M_{\indxra\indyk}
-
\sum_{\indxrd}
\sum_{\indxra}M_{\indxra\indyl}
\frac{\partial P_{\indxrd\indyj}}{\partial \xrond_\indxra}
M_{\indxrd\indyk}
$\\\null \   $\displaystyle  \hphantom{=\; }
+
\sum_{\indxra,\indxrd}M_{\indxra\indyj}
\frac{\partial P_{\indyl\indxrd}}{\partial \xrond_\indxra}
M_{\indxrd\indyk}
+
\sum_{\indxrc,\indxra}M_{\indxrc\indyj}
\frac{\partial P_{\indyl\indxrc}}{\partial \xrond_\indxra}
M_{\indxra\indyk}
$\\\null \   $\displaystyle  \hphantom{=\; }
-
\sum_{\indxra}M_{\indxra\indyj}
\sum_{\indxrd,\indxrc}M_{\indxrc\indyl}
\frac{\partial P_{\indxrc\indxrd}}{\partial \xrond_\indxra}
M_{\indxrd\indyk}
%$\hfill \null \\\null \hfill $\displaystyle 
-
\sum_{\indxrc}M_{\indxrc\indyj}
\sum_{\indxrd,\indxra}M_{\indxrd\indyl}
\frac{\partial P_{\indxrd\indxrc}}{\partial \xrond_\indxra}
M_{\indxra\indyk}
%$\hfill \null \\\null \hfill $\displaystyle 
+
\sum_{\indxrc}M_{\indxrc\indyj}
\sum_{\indxra,\indxrd}M_{\indxra\indyl}
\frac{\partial P_{\indxrc\indxrd}}{\partial \xrond_\indxra}
M_{\indxrd\indyk}
$\\[1em]

We regroup the terms with the same $M_{\indxra\cdot}$ in factor to obtain
\\[1em]$\displaystyle  
\mbox{Term}(\partial \xrond)
$\hfill \null \\\null \   $\displaystyle =\; 
\sum_{\indxra}M_{\indxra\indyl}
\left[
\frac{\partial P_{\indyj\indyk}}{\partial \xrond_\indxra}
-
\sum_{\indxrc}M_{\indxrc\indyj}
\frac{\partial P_{\indxrc\indyk}}{\partial \xrond_\indxra}
-
\sum_{\indxrd}
\frac{\partial P_{\indxrd\indyj}}{\partial \xrond_\indxra}
M_{\indxrd\indyk}
+
\sum_{\indxrd,\indxrc}M_{\indxrc\indyj}
\frac{\partial P_{\indxrc\indxrd}}{\partial \xrond_\indxra}
M_{\indxrd\indyk}
\right]
$\hfill \null \\\null \   $\displaystyle  \hphantom{=\; }
-
\sum_{\indxra}M_{\indxra\indyj}
\left[
\frac{\partial P_{\indyl\indyk}}{\partial \xrond_\indxra}
-
\sum_{\indxrc}M_{\indxrc\indyl}
\frac{\partial P_{\indxrc\indyk}}{\partial \xrond_\indxra}
-
\sum_{\indxrd}
\frac{\partial P_{\indyl\indxrd}}{\partial \xrond_\indxra}
M_{\indxrd\indyk}
+
\sum_{\indxrd,\indxrc}M_{\indxrc\indyl}
\frac{\partial P_{\indxrc\indxrd}}{\partial \xrond_\indxra}
M_{\indxrd\indyk}
\right]
$\hfill \null \\\null \   $\displaystyle  \hphantom{=\; }
-
\sum_{\indxra}M_{\indxra\indyk}
\left[
\frac{\partial P_{\indyl\indyj}}{\partial \xrond_\indxra}
-
\sum_{\indxrd}M_{\indxrd\indyl}
\frac{\partial P_{\indxrd\indyj}}{\partial \xrond_\indxra}
-
\sum_{\indxrc}M_{\indxrc\indyj}
\frac{\partial P_{\indyl\indxrc}}{\partial \xrond_\indxra}
+
\sum_{\indxrc}M_{\indxrc\indyj}
\sum_{\indxrd}M_{\indxrd\indyl}
\frac{\partial P_{\indxrd\indxrc}}{\partial \xrond_\indxra}
\right]
$\\[1em]
Next we use the following identity written with an appropriate choice of the dummy variables to meet the above expression,
\\[1em]$\displaystyle 
\frac{\partial [P_y]_{\indyj\indyk}}{\partial \xrond_\indxra}\;=\; 
\frac{\partial }{\partial \xrond_\indxra}
\left\{
P_{\indyj\indyk}
-
\sum_{\indxrb,\indxrc}
P_{\indyj\indxrb}
[P_{\xrond\xrond}^{-1}]_{\indxrb\indxrc}
P_{\indxrc\indyk}
\right\}
$\hfill \null \\\null \hfill $\displaystyle 
\renewcommand{\arraystretch}{1.7}
\begin{array}[b]{@{}cl@{}}
=&\displaystyle 
\frac{\partial P_{\indyj\indyk}}{\partial \xrond_\indxra}
-
\sum_{\indxrd,\indxre}
\frac{\partial P_{\indyj\indxrd}}{\partial \xrond_\indxra}
[P_{\xrond\xrond}^{-1}]_{\indxrd\indxre}
P_{\indxre\indyk}
+
\sum_{\indxrb\indxrc\indxrd\indxre}
P_{\indyj\indxrb}
[P_{\xrond\xrond}^{-1}]_{\indxrb\indxrc}
% \frac{\partial [P_{\xrond\xrond}^{-1}]_{\indxrc\indxrd}}{\partial \xrond_\indxra}
\frac{\partial P_{\indxrc\indxrd}}{\partial \xrond_\indxra}
[P_{\xrond\xrond}^{-1}]_{\indxrd\indxre}
P_{\indxre\indyk}
-
\sum_{\indxrb\indxrc}
P_{\indyj\indxrb}
[P_{\xrond\xrond}^{-1}]_{\indxrb\indxrc}
\frac{\partial P_{\indxrc\indyk}}{\partial \xrond_\indxra}
\\
=&\displaystyle 
\frac{\partial P_{\indyj\indyk}}{\partial \xrond_\indxra}
-
\sum_{\indxrd}
\frac{\partial P_{\indyj\indxrd}}{\partial \xrond_\indxra}
M_{\indxrd\indyk}
+
\sum_{\indxrc,\indxrd}
M_{\indxrc\indyj}
% \frac{\partial [P_{\xrond\xrond}^{-1}]_{\indxrc\indxrd}}{\partial \xrond_\indxra}
\frac{\partial P_{\indxrc\indxrd}}{\partial \xrond_\indxra}
M_{\indxrd\indyk}
-
\sum_{\indxrb\indxrc}
M_{\indxrc\indyj}
\frac{\partial P_{\indxrc\indyk}}{\partial \xrond_\indxra}
\end{array}
$\refstepcounter{equation}\label{LP202} $(\theequation)$
\\[1em]
We conclude, with the symmetry of $P$,
$$
\mbox{Term}(\partial \xrond)\;=\; 
\sum_{\indxra}
\frac{\partial [P _y]_{\indyj\indyk}}{\partial \xrond_\indxra}
M_{\indxra\indyl} 
-
\sum_{\indxra}
\frac{\partial [P_y]_{\indyl\indyk}}{\partial \xrond_\indxra}
M_{\indxra\indyj}
-\sum_{\indxra}
\frac{\partial [P _y]_{\indyl\indyj}}{\partial \xrond_\indxra}
M_{\indxra\indyk}
$$

We proceed as above with collecting the terms involving a derivative with respect to a $y$ component. This 
gives
\\[1em]$\displaystyle  
\mbox{Term}(\partial y)
\renewcommand{\arraystretch}{1.7}
\begin{array}[t]{@{}rcl@{}}
&=&\displaystyle 
\left[
\left(
\frac{\partial P_{\indyl\indyk}}{\partial y_\indyj}
+
\frac{\partial P_{\indyl\indyj}}{\partial y_\indyk}
-
\frac{\partial P_{\indyj\indyk}}{\partial y_\indyl}
\right)
-
\sum_{\indxra}M_{\indxra\indyl}
\left(
\frac{\partial P_{\indxra\indyk}}{\partial y_\indyj}
+
\frac{\partial P_{\indxra\indyj}}{\partial y_\indyk}
\right)
\right]
\\
&&\displaystyle 
-
\sum_{\indxrc}M_{\indxrc\indyj}
\left[
\vphantom{\sum_{\indxrd,\indxre}}
\left(
\frac{\partial P_{\indyl\indxrc}}{\partial y_\indyk}
-
\frac{\partial P_{\indxrc\indyk}}{\partial y_\indyl}
\right)
-
\sum_{\indxra}M_{\indxra\indyl}
\frac{\partial P_{\indxra\indxrc}}{\partial y_\indyk}
\right]
\\
&&\displaystyle 
-
\sum_{\indxrd,\indxre}
\left[
\vphantom{\sum_{\indxrd,\indxre}}
\left(
\frac{\partial P_{\indyl\indxrd}}{\partial y_\indyj}
-
\frac{\partial P_{\indxrd\indyj}}{\partial y_\indyl}
\right)
-
\sum_{\indxra}M_{\indxra\indyl}
\frac{\partial P_{\indxra\indxrd}}{\partial y_\indyj}
\right]
M_{\indxrd\indyk}
\\
&&\displaystyle 
-
\sum_{\indxrc,\indxrd}M_{\indxrc\indyj}
\frac{\partial P_{\indxrc\indxrd}}{\partial y_\indyl}
M_{\indxrd\indyk}
\\[1em]
%%%%%%%%
&=&\displaystyle 
\frac{\partial P_{\indyl\indyk}}{\partial y_\indyj}
-
\sum_{\indxra}M_{\indxra\indyl}
\frac{\partial P_{\indxra\indyk}}{\partial y_\indyj}
-
\sum_{\indxrd,\indxre}
\frac{\partial P_{\indyl\indxrd}}{\partial y_\indyj}
M_{\indxrd\indyk}
+
\sum_{\indxra,\indxrd}M_{\indxra\indyl}
\frac{\partial P_{\indxra\indxrd}}{\partial y_\indyj}
M_{\indxrd\indyk}
\\
&&\displaystyle 
+\; 
\frac{\partial P_{\indyl\indyj}}{\partial y_\indyk}
-
\sum_{\indxra}M_{\indxra\indyl}
\frac{\partial P_{\indxra\indyj}}{\partial y_\indyk}
-
\sum_{\indxrc}M_{\indxrc\indyj}
\frac{\partial P_{\indyl\indxrc}}{\partial y_\indyk}
+
\sum_{\indxrc,\indxra}M_{\indxrc\indyj}
M_{\indxra\indyl}
\frac{\partial P_{\indxra\indxrc}}{\partial y_\indyk}
\\
&&\displaystyle
-\; 
\frac{\partial P_{\indyj\indyk}}{\partial y_\indyl}
+
\sum_{\indxrc}M_{\indxrc\indyj}
\frac{\partial P_{\indxrc\indyk}}{\partial y_\indyl}
+
\sum_{\indxrd,\indxre}
\frac{\partial P_{\indxrd\indyj}}{\partial y_\indyl}
M_{\indxrd\indyk}
-
\sum_{\indxrc,\indxrd}M_{\indxrc\indyj}
\frac{\partial P_{\indxrc\indxrd}}{\partial y_\indyl}
M_{\indxrd\indyk}

\\[1em]
%%%%%%%%
&=&\displaystyle 
\frac{\partial [P_y]_{\indyl\indyk}}{\partial y_\indyj}
+
\frac{\partial [P_y]_{\indyl\indyj}}{\partial y_\indyk}
-
\frac{\partial [P_y]_{\indyj\indyk}}{\partial y_\indyl}
\end{array}
$\\[1em]
With all this and the identity
$$
 [P^{-1}]_{\indyi\indyl}\;=\; [P_y^{-1}]_{\indyi\indyl}
$$
we have obtained
\\[1em]$\displaystyle 
2\left[\secff _P h^{\ortho,\ortho}\right]_{\indyj\indyk}^\indyi
$\hfill \null \\\null \   $\displaystyle  =\; 
\sum_\indyl \Py ^{-1}_{\indyi\indyl}
\left(
\frac{\partial \Py _{\indyl\indyk}}{\partial y_\indyj}
+
\frac{\partial  \Py _{\indyl\indyj}}{\partial y_\indyk}
-
\frac{\partial  \Py _{\indyj\indyk}}{\partial y_\indyl}
\right)
$\hfill \null \\\null \   $\displaystyle  \hphantom{=\; }
-
\sum_\indyl 
 [P_y^{-1}]_{\indyi\indyl}
\left(
\frac{\partial [P_y]_{\indyl\indyk}}{\partial y_\indyj}
+
\frac{\partial [P_y]_{\indyl\indyj}}{\partial y_\indyk}
-
\frac{\partial [P_y]_{\indyj\indyk}}{\partial y_\indyl}
\right)
$\hfill \null \\\null \   $\displaystyle  \hphantom{=\; }
-
\sum_\indyl 
 [P_y^{-1}]_{\indyi\indyl}
\left(
\sum_{\indxra,\indxrf}
\frac{\partial [P _y]_{\indyj\indyk}}{\partial \xrond_\indxra}
[P_{\xrond\xrond}^{-1}]_{\indxra\indxrf}
P_{\indxrf\indyl} 
-
\sum_{\indxra,\indxrf}
\frac{\partial [P_y]_{\indyl\indyk}}{\partial \xrond_\indxra}
[P_{\xrond\xrond}^{-1}]_{\indxra\indxrf}P_{\indxrf\indyj}
-\sum_{\indxra,\indxrf}
\frac{\partial [P _y]_{\indyl\indyj}}{\partial \xrond_\indxra}
[P_{\xrond\xrond}^{-1}]_{\indxra\indxrf}P_{\indxrf\indyk}
\right)
$
\sousection{Proof of Lemma \ref{lem11}}
\label{complement47}
Let $s_4$ in $(s_1,s_2)$ be such that
$$
\frac{d\bfgamma}{ds}(s_4) \in \bfDistrib _\bfP ^\ortho(\bfgamma(s_4))
$$
Let $\coordyxr$ be a coordinate chart around $\bfgamma(s_4)$. Then, with $P$ and $(\gamma _y,\gamma _\xrond)$ 
being the corresponding expression of the metric and the geodesic, from Lemma \ref{lem6}, there exists a 
vector $w_4$ in $\RR^p$ satisfying
$$
\left(\begin{array}{@{\,  }c@{\,  }}
\displaystyle 
\frac{d\gamma _y}{ds}(s_4)
\\[0.5em]\displaystyle 
\frac{d\gamma _\xrond}{ds}(s_4)
\end{array}\right)\;=\; 
\left(\begin{array}{@{\,  }c@{\,  }}
I_p
\\
P_{\xrond\xrond}(\gamma _y(s_4),\gamma _{\xrond}(s_4))^{-1}
P_{\xrond y}(\gamma _y(s_4),\gamma _{\xrond}(s_4))^{-1}
\end{array}\right)
w_4
$$
Also if, for some $s$ in $(s_1,s_2)$,
there exists a 
vector $w$ in $\RR^p$ satisfying
$$
\left(\begin{array}{@{\,  }c@{\,  }}
\displaystyle
\frac{d\gamma _y}{ds}(s)
\\[0.5em]\displaystyle 
\frac{d\gamma _\xrond}{ds}(s)
\end{array}\right)\;=\; 
\left(\begin{array}{@{\,  }c@{\,  }}
I_p
\\
P_{\xrond\xrond}(\gamma _y(s),\gamma _{\xrond}(s))^{-1}
P_{\xrond y}(\gamma _y(s),\gamma _{\xrond}(s))
\end{array}\right)
w
$$
then we have
$$
\frac{d\gamma}{ds}(s) \in \Distrib _P^\ortho(\gamma(s))
$$
So our result holds if we can show that we have
$$
\frac{d}{ds}\left\{
P_{\xrond\xrond}(\gamma _y(s),\gamma _{\xrond}(s)) 
\frac{d\gamma _\xrond}{ds}(s)
+
P_{\xrond y}(\gamma _y(s),\gamma _{\xrond}(s))
\frac{d\gamma _y}{ds}(s)
\right\}\;=\; 0
\qquad \forall s\in (s_1,s_2)
$$
But from the Euler-Lagrange form of the geodesic equation (\ref{14}), we get
\\[1em]$\displaystyle 
2\frac{d}{ds}\left\{
P_{\xrond\xrond}(\gamma _y(s),\gamma _{\xrond}(s)) 
\frac{d\gamma _\xrond}{ds}(s)
+
P_{\xrond y}(\gamma _y(s),\gamma _{\xrond}(s))
\frac{d\gamma _y}{ds}(s)
\right\}
$\hfill \null \\\null \hfill $\displaystyle
\begin{array}{@{}cl@{}r@{}}
=&\displaystyle 
\left.
\frac{\partial }{\partial \xrond}
\left\{
\left(\begin{array}{@{\,  }cc@{\,  }}
\displaystyle 
\frac{d\gamma _y}{ds}(s)^\top
&\displaystyle 
\frac{d\gamma _\xrond}{ds}(s)^\top
\end{array}\right)
\left(\begin{array}{@{\,  }cc@{\,  }}
P_{yy}(\gamma _y(s),\xrond) 
&
P_{y\xrond}(\gamma _y(s),\xrond) 
\\
P_{\xrond y}(\gamma _y(s),\xrond) 
&
P_{\xrond\xrond}(\gamma _y(s),\xrond) 
\end{array}\right)
\left(\begin{array}{@{\,  }c@{\,  }}
\displaystyle
\frac{d\gamma _y}{ds}(s)
\\\displaystyle 
\frac{d\gamma _\xrond}{ds}(s)
\end{array}\right)
\right\}
\right|_{\xrond=\gamma _\xrond(s)}
%---------
\\[1em]
%---------
=&\displaystyle 
\left.
\frac{\partial }{\partial \xrond}
\left\{
\frac{d\gamma _y}{ds}(s)^\top
\left(\begin{array}{@{\,  }cc@{\,  }}
\displaystyle 
I_p
&\displaystyle 
-P_{y \xrond }(\gamma _y(s),\gamma _{\xrond}(s))
P_{\xrond\xrond}(\gamma _y(s),\gamma _{\xrond}(s))^{-1}
\end{array}\right) \times
\right.
\right.
\\&\multicolumn{1}{c}{%
\displaystyle 
\times
\left(\begin{array}{@{\,  }cc@{\,  }}
P_{yy}(\gamma _y(s),\xrond) 
&
P_{y\xrond}(\gamma _y(s),\xrond) 
\\
P_{\xrond y}(\gamma _y(s),\xrond) 
&
P_{\xrond\xrond}(\gamma _y(s),\xrond) 
\end{array}\right)
\times
}
\\&\multicolumn{1}{r}{%
\displaystyle 
\left.\left.
\times
\left(\begin{array}{@{\,  }c@{\,  }}
\displaystyle 
I_p
\\
\displaystyle 
-
P_{\xrond\xrond}(\gamma _y(s),\gamma _{\xrond}(s))^{-1}
P_{\xrond y}(\gamma _y(s),\gamma _{\xrond}(s))
\end{array}\right)
\frac{d\gamma _y}{ds}(s)
\right\}
\right|_{\xrond=\gamma _\xrond(s)}
}
\\[1.5em]
&\displaystyle 
\;-\; 
\left.
\frac{\partial }{\partial \xrond}
\left\{
\left(
P_{\xrond\xrond}(\gamma _y(s),\gamma _{\xrond}(s)) 
\frac{d\gamma _\xrond}{ds}(s)
+
P_{\xrond y}(\gamma _y(s),\gamma _{\xrond}(s))
\frac{d\gamma _y}{ds}(s)
\right)^\top
\times
\right.
\right.
&
% ...............
\refstepcounter{equation}\label{LP209}(\theequation)
% ..............
\\&\multicolumn{1}{c}{%
\times
P_{\xrond\xrond}(\gamma _y(s),\xrond) ^{-1}
\times
}
\\&\multicolumn{1}{r}{%
\displaystyle 
\left.\left.
\times
\left(
P_{\xrond\xrond}(\gamma _y(s),\gamma _{\xrond}(s)) 
\frac{d\gamma _\xrond}{ds}(s)
+
P_{\xrond y}(\gamma _y(s),\gamma _{\xrond}(s))
\frac{d\gamma _y}{ds}(s)
\right)
\right\}
\right|_{\xrond=\gamma _\xrond(s)}
}
\end{array}
$\\[1em]
Fortunately, we have
\\[1em]$\displaystyle 
\frac{\partial P_y}{\partial \xrond_\indxra}(y,\xrond)
\;=\; 
\left(\begin{array}{@{\,  }cc@{\,  }}
\displaystyle 
I_p
&\displaystyle 
-P_{y \xrond }(y,z)
P_{\xrond\xrond}(y,z)^{-1}
\end{array}\right) 
\left(\begin{array}{@{\,  }cc@{\,  }}
\displaystyle 
\frac{\partial P_{yy}}{\partial \xrond_\indxra}(y,\xrond) 
&\displaystyle 
\frac{\partial P_{y\xrond}}{\partial \xrond_\indxra}(y,\xrond) 
\\[0.5em]\displaystyle 
\frac{\partial P_{\xrond y}}{\partial \xrond_\indxra}(y,\xrond) 
&\displaystyle 
\frac{\partial P_{\xrond\xrond}}{\partial \xrond_\indxra}(y,\xrond) 
\end{array}\right)
\left(\begin{array}{@{\,  }c@{\,  }}
\displaystyle 
I_p
\\
\displaystyle 
-
P_{\xrond\xrond}(y,z)^{-1}
P_{\xrond y}(y,z)
\end{array}\right)
$\\[1em]
with the notation (see (\ref{LP199}))
$$
P_y(y,\xrond)\;=\; P_{yy}(y,\xrond)\;-\; P_{y\xrond}(y,\xrond)P_{\xrond\xrond}(y,\xrond)^{-1}P_{\xrond y}(y,\xrond)
\  .
$$
Since $\bfh$ is a Riemannian submersion, $P_y$ does not depend on $\xrond$. The first claim follows from the 
equation
$$
\frac{dT}{ds}(s)\;=\; 
\;-\; 
\left.\frac{\partial }{\partial \xrond}
\left\{
T(s)^\top
P_{\xrond\xrond}(\gamma _y(s),\xrond) ^{-1}
T(s)
\right\}
\right|_{\xrond=\gamma _\xrond(s)}
$$
obtained from (\ref{LP209}) by letting
$$
T(s)  \;=\; \left(
P_{\xrond\xrond}(\gamma _y(s),\gamma _{\xrond}(s)) 
\frac{d\gamma _\xrond}{ds}(s)
+
P_{\xrond y}(\gamma _y(s),\gamma _{\xrond}(s))
\frac{d\gamma _y}{ds}(s)
\right)
$$

For the second claim, the Euler-Lagrange form of the geodesic equation (\ref{14}) gives also
\\[1em]$\displaystyle 
2\frac{d}{ds}\left\{
P_{y\xrond}(\gamma _y(s),\gamma _{\xrond}(s)) 
\frac{d\gamma _\xrond}{ds}(s)
+
P_{y y}(\gamma _y(s),\gamma _{\xrond}(s))
\frac{d\gamma _y}{ds}(s)
\right\}
$\hfill \null \\\null \hfill $\displaystyle
\begin{array}{@{}cl@{}r@{}}
=&\displaystyle 
\left.
\frac{\partial }{\partial y}
\left\{
\left(\begin{array}{@{\,  }cc@{\,  }}
\displaystyle 
\frac{d\gamma _y}{ds}(s)^\top
&\displaystyle 
\frac{d\gamma _\xrond}{ds}(s)^\top
\end{array}\right)
\left(\begin{array}{@{\,  }cc@{\,  }}
P_{yy}(y,\gamma _\xrond(s)) 
&
P_{y\xrond}(y,\gamma _\xrond(s)) 
\\
P_{\xrond y}(y,\gamma _\xrond(s)) 
&
P_{\xrond\xrond}(y,\gamma _\xrond(s)) 
\end{array}\right)
\left(\begin{array}{@{\,  }c@{\,  }}
\displaystyle
\frac{d\gamma _y}{ds}(s)
\\\displaystyle 
\frac{d\gamma _\xrond}{ds}(s)
\end{array}\right)
\right\}
\right|_{y=\gamma _y(s)}
%---------
\\[1em]
\end{array}
$\\[1em]
With what we have above, this reduces to
$$
2 \frac{d}{ds}\left\{
P_y(\gamma _y(s))\frac{d\gamma _y}{ds}(s)
\right\}
\;=\; 
\left.
\frac{\partial }{\partial y}
\left\{
\frac{d\gamma _y}{ds}(s)^\top 
P_y(y)
\frac{d\gamma _y}{ds}(s)
\right\}
\right|_{y=\gamma _y(s)}
$$
Since $\bfh$ is a Riemannian submersion, we have
$$
P_y(y)\;=\; \Py(y)
$$
and the above equation is nothing but the geodesic equation in the $\bfy$-manifold.
%
% END Detailed computations
%
%%%%%%%%%%%%%%

%%%%%%%%%%%%%
%
% BEGIN Detailed computations
%
\sousection{{Details on (\ref{LP165})}}
\label{complement38}
$\displaystyle 
\frac{\partial \changex}{\partial x}(x)^\top
\bar P_{mod}(\bar x)
\frac{\partial \changex}{\partial x}(x)
$\hfill \null \\\null \hfill $
\renewcommand{\arraystretch}{2}
\begin{array}{cl@{}}
=&\displaystyle 
\frac{\partial \changex}{\partial x}(x)^\top
\left(\bar P(\bar x) +
\frac{\partial \bar h}{\partial \bar x}(\bar x)^\top  
\left[\bar \Py (\bar h(\bar x))
-
\left(\frac{\partial \bar h}{\partial \bar x}(\bar x) \bar P(\bar x)^{-1}\frac{\partial \bar h}{\partial \bar 
x}(\bar x)^\top\right)^{-1}
\right]
\frac{\partial \bar h}{\partial \bar x}(\bar x)
\right)
\frac{\partial \changex}{\partial x}(x)
\\%%%%%%%%%
=&\displaystyle 
P(x) + \frac{\partial h}{\partial x}(x)^\top \frac{\partial \changey }{\partial y}(h(x))^\top
\times
\\&\multicolumn{1}{c}{\displaystyle 
\times
\left[\bar \Py (\bar h(\bar x))
-
\left(\frac{\partial \bar h}{\partial \bar x}(\bar x) 
\frac{\partial \changex}{\partial x}(x) P( x)^{-1}\frac{\partial \changex}{\partial x}(x) ^\top\frac{\partial \bar h}{\partial \bar x}(\bar x)^\top
\right)^{-1}\right]
\times
}
\\&\multicolumn{1}{r}{\displaystyle 
\times
\frac{\partial \changey }{\partial y}(h(x))\frac{\partial h}{\partial x}(x)
}
\\
=&\displaystyle 
P(x) + 
\frac{\partial h}{\partial x}(x)^\top 
 \Py ( h( x))
\frac{\partial h}{\partial x}(x)
\\
\multicolumn{2}{@{}r@{}}{\displaystyle %
\qquad \qquad 
-\frac{\partial h}{\partial x}(x)^\top \frac{\partial \changey }{\partial y}(h(x))^\top
\left(
\frac{\partial \changey }{\partial y}(h(x))\frac{\partial h}{\partial x}(x)
P( x)^{-1}
\frac{\partial h}{\partial x}(x)^\top \frac{\partial \changey }{\partial y}(h(x))^\top\right)^{-1}
\frac{\partial \changey }{\partial y}(h(x))\frac{\partial h}{\partial x}(x)
}
\\
=&\displaystyle 
P_{mod}(x)
\end{array}
$
%
% END Detailed computations
%
%%%%%%%%%%%%%%

%%%%%%%%%%%%
%
% BEGIN Detailed computations
%
\sousection{Direct proof of the property \ref{point4} of Lemma \ref{lem9}}
\label{complement28}
The Hessian is
$$
\left[\Hess _P h \right]_{\indxa\indxb}^\indyi=
\frac{\partial ^2h_\indyi}{\partial x_\indxa\partial x_\indxb}
-
\frac{1}{2}\sum_{\indxc,\indxd}
\left(\frac{\partial P_{\indxa\indxd}}{\partial x_\indxb}
+
\frac{\partial P_{\indxb\indxd}}{\partial x_\indxa}
-
\frac{\partial P_{\indxa\indxb}}{\partial x_\indxd}
\right)
\left[P^{-1}\right]_{\indxc\indxd}
\frac{\partial h_\indyi}{\partial x_\indxc}
\  .
$$
 We note that, with compact notations and
$$
S(x)\;=\; \Py (h(x)) -
\left(\frac{\partial h}{\partial x}(x) P(x)^{-1}\frac{\partial h}{\partial x}(x)^\top
\right)^{-1}
\  ,
$$
 we  have
\\[1em]\vbox{\noindent$\displaystyle 
P_{mod}(x)^{-1}\frac{\partial h}{\partial x}(x)^\top\;=\; 
\left(P(x) + \frac{\partial h}{\partial x}(x)^\top  S(x)\,  \frac{\partial h}{\partial 
x}(x)\right)^{-1}\frac{\partial h}{\partial x}(x)^\top
$\hfill \null \\\null \hfill $\displaystyle 
\begin{array}[b]{@{}c@{\; }l@{}}
=&\displaystyle 
P(x)^{-1}\frac{\partial h}{\partial x}(x)^\top
\left(I_p+
\left(S(x)^{-1}+\frac{\partial h}{\partial x}(x)P(x)^{-1}\frac{\partial h}{\partial x}(x)^\top\right)^{-1}
\frac{\partial h}{\partial x}(x) P(x)^{-1}\,  \frac{\partial h}{\partial x}(x)^\top\right)^{-1}%
\hskip -1em\null 
\\ 
=&\displaystyle
P(x)^{-1}\frac{\partial h}{\partial x}(x)^\top\left(I_p+S(x)\frac{\partial h}{\partial 
x}(x)P(x)^{-1}\frac{\partial h}{\partial x}(x)^\top\right)^{-1}
\end{array}
$}
\\[1em]
Expanded with indices, this equation is
$$
\sum_\indxd
\left[P_{mod}^{-1}\right]_{\indxc\indxd}\frac{\partial h_\indyi}{\partial x_\indxd}
\;=\; 
\sum_{\indxd,\indyj}
\left[P^{-1}\right]_{\indxc\indxd}\frac{\partial h_\indyj}{\partial x_\indxd}
\left[
\left(I_p+S\frac{\partial h}{\partial x}P^{-1}\frac{\partial h}{\partial x}^\top\right)^{-1}
\right]_{\indyj\indyi}
$$
On another hand, from
$$
{P_{mod}}_{\,  \indxa\indxb}\;=\; P_{\indxa\indxb}
+ \sum_{\indyk,\indyl} \frac{\partial h_\indyk}{\partial x_\indxa}S_{\indyk\indyl} \frac{\partial h_\indyl}{\partial x_\indxb}
$$
we obtain
\\[1em]\vbox{\noindent$\displaystyle 
\sum_{\indxa,\indxb}v_\indxa v_\indxb\left(
\frac{\partial {P_{mod}}_{\,\indxa\indxc}}{\partial x_\indxb}
+
\frac{\partial {P_{mod}}_{\,\indxb\indxc}}{\partial x_\indxa}
-
\frac{\partial {P_{mod}}_{\,\indxa\indxb}}{\partial x_\indxc}
\right)
$\hfill \null \\\null \qquad \qquad  $\displaystyle \;=\; 
\sum_{\indxa,\indxb}v_\indxa v_\indxb\left(
\frac{\partial P_{\indxa\indxc}}{\partial x_\indxb}
+
\frac{\partial P_{\indxb\indxc}}{\partial x_\indxa}
-
\frac{\partial P_{\indxa\indxb}}{\partial x_\indxc}
\right)
$\hfill \null \\\null \hfill $\displaystyle 
\;+\; 
\sum_{\indxa,\indxb,\indyk,\indyl}v_\indxa v_\indxb
\left(
\frac{\partial }{\partial x_\indxb}
\left(
\frac{\partial h_\indyk}{\partial x_\indxa}S_{\indyk\indyl} \frac{\partial h_\indyl}{\partial x_\indxc}
\right)
+
\frac{\partial }{\partial x_\indxa}
\left(
\frac{\partial h_\indyk}{\partial x_\indxb}S_{\indyk\indyl} \frac{\partial h_\indyl}{\partial x_\indxc}
\right)
-
\frac{\partial }{\partial x_\indxc}
\left(
\frac{\partial h_\indyk}{\partial x_\indxa}S_{\indyk\indyl} \frac{\partial h_\indyl}{\partial x_\indxb}
\right)
\right)
$}\\[1em]
But when
\begin{equation}
\label{trav26}
\sum_\indxd \frac{\partial h_\indyk}{\partial x_\indxd} v_\indxd\;=\; 0
\qquad \forall \indyk
\end{equation}
we have
\\[1em]\vbox{\noindent$\displaystyle 
\sum_{\indxa,\indxb,\indyk,\indyl}v_\indxa v_\indxb
\left(
\frac{\partial }{\partial x_\indxb}
\left(
\frac{\partial h_\indyk}{\partial x_\indxa}S_{\indyk\indyl} \frac{\partial h_\indyl}{\partial x_\indxc}
\right)
+
\frac{\partial }{\partial x_\indxa}
\left(
\frac{\partial h_\indyk}{\partial x_\indxb}S_{\indyk\indyl} \frac{\partial h_\indyl}{\partial x_\indxc}
\right)
-
\frac{\partial }{\partial x_\indxc}
\left(
\frac{\partial h_\indyk}{\partial x_\indxa}S_{\indyk\indyl} \frac{\partial h_\indyl}{\partial x_\indxb}
\right)
\right)
$\hfill \null \\\null \hfill $\displaystyle 
\;=\; 2\sum_{\indxa,\indxb,\indyk,\indyl}v_\indxa v_\indxb
\frac{\partial ^2h_\indyk}{\partial x_\indxa\partial x_\indxb}S_{\indyk\indyl} \frac{\partial h_\indyl}{\partial x_\indxc}
\  .
$}\\[1em]
This yields
\\[1em]\vbox{\noindent$\displaystyle 
\sum_{\indxa,\indxb}v_\indxa v_\indxb\left(
\frac{\partial {P_{mod}}_{\,\indxa \indxc}}{\partial x_\indxb}
+
\frac{\partial {P_{mod}}_{\,\indxb\indxc}}{\partial x_\indxa}
-
\frac{\partial {P_{mod}}_{\,\indxa\indxb}}{\partial x_\indxc}
\right)
$\hfill \null \\\null \qquad \qquad  $\displaystyle \;=\; 
\sum_{\indxa\indxb}v_\indxa v_\indxb
\left(\frac{\partial P_{\indxa\indxc}}{\partial x_\indxb}
+
\frac{\partial P_{\indxb\indxc}}{\partial x_\indxa}
-
\frac{\partial P_{\indxa\indxb}}{\partial x_\indxc}
+
2\sum_{\indyk,\indyl}
\frac{\partial ^2h_\indyk}{\partial x_ \indxa \partial x_ \indxb }S_{\indyk\indyl} \frac{\partial h_\indyl}{\partial x_ \indxc }
\right)
$}\\[1em]
and therefore
\\[1em]\vbox{\noindent$\displaystyle 
\sum_{\indxc,\indxd}\sum_{\indxa,\indxb}v_\indxa v_\indxb\left(\frac{\partial {P_{mod}}_{\,\indxa\indxc}}{\partial x_ \indxb }
+
\frac{\partial {P_{mod}}_{\,\indxb\indxc}}{\partial x_ \indxa }
-
\frac{\partial {P_{mod}}_{\,\indxa\indxb}}{\partial x_ \indxc }
\right)\left[P_{mod}^{-1}\right]_{\indxc\indxd}
\frac{\partial h_\indyi}{\partial x_\indxd }
$\hfill \null \\\null \qquad   $\displaystyle \;=\; 
\sum_{\indxc,\indxd,\indyj}
\left[\sum_{\indxa,\indxb}v_\indxa v_\indxb
\left(\frac{\partial P_{\indxa\indxc}}{\partial x_ \indxb }
+
\frac{\partial P_{\indxb\indxc}}{\partial x_ \indxa }
-
\frac{\partial P_{\indxa\indxb}}{\partial x_ \indxc }
+
2\sum_{\indyk,\indyl}
\frac{\partial ^2h_\indyk}{\partial x_ \indxa \partial x_ \indxb }S_{\indyk\indyl} \frac{\partial h_\indyl}{\partial x_ \indxc }
\right)
\right]\times
$\hfill \null \\\null \hfill $\displaystyle 
\times
\left[P^{-1}\right]_{\indxc\indxd}\frac{\partial h_\indyj}{\partial x_\indxd }
\left[
\left(I_p+S \frac{\partial h}{\partial x}P^{-1}\frac{\partial h}{\partial x}^\top\right)^{-1}
\right]_{\indyj\indyi}
$}\\[1em]
But, when the property (\ref{LP151}) holds, (\ref{trav26}) implies
$$
\sum_{\indxa,\indxb}
v_\indxa v_\indxb
\frac{1}{2}\sum_{\indxc,\indxd}
\left(\frac{\partial P_{\indxa\indxd}}{\partial x_\indxb}
+
\frac{\partial P_{\indxb\indxd}}{\partial x_\indxa}
-
\frac{\partial P_{\indxa\indxb}}{\partial x_\indxd}
\right)
\left[P^{-1}\right]_{\indxc\indxd}
\frac{\partial h_\indyj}{\partial x_\indxc}
\;=\; 
\sum_{\indxa,\indxb}
v_\indxa v_\indxb
\frac{\partial ^2h_\indyj}{\partial x_\indxa\partial x_\indxb}
\  .
$$
So we have
\\[1em]\vbox{\noindent$\displaystyle 
\sum_{\indxc\indxd}\sum_{\indxa\indxb}v_\indxa v_\indxb\left(\frac{\partial {P_{mod}}_{\,\indxa\indxc}}{\partial x_ \indxb }
+
\frac{\partial {P_{mod}}_{\,\indxb\indxc}}{\partial x_ \indxa }
-
\frac{\partial {P_{mod}}_{\,\indxa\indxb}}{\partial x_ \indxc }
\right)\left[P_{mod}^{-1}\right]_{\indxc\indxd}
\frac{\partial h_\indyi}{\partial x_\indxd }
$\hfill \null \\\null \qquad \qquad  $\displaystyle \;=\; 
2\sum_{\indyj}
\left[\sum_{\indxa\indxb}v_\indxa v_\indxb
\left(
\frac{\partial ^2h_\indyj}{\partial x_ \indxa \partial x_ \indxb }
+
\sum_{\indxc,\indxd,\indyk,\indyl}
\frac{\partial ^2h_\indyk}{\partial x_ \indxa \partial x_ \indxb }S_{\indyk\indyl} \frac{\partial h_\indyl}{\partial x_ \indxc }
\left[P^{-1}\right]_{\indxc\indxd}\frac{\partial h_\indyj}{\partial x_\indxd }\right)
\right]\times
$\hfill \null \\\null \hfill $\displaystyle 
\times
\left[
\left(I_p+S \frac{\partial h}{\partial x}P^{-1}\frac{\partial h}{\partial x}^\top\right)^{-1}
\right]_{\indyj\indyi}
$}\\\vbox{\noindent
\null \qquad \qquad  $\displaystyle \;=\; 
2\sum_{\indyj}
\left[\sum_{\indxa,\indxb}v_\indxa v_\indxb
\left(
\frac{\partial ^2h_\indyj}{\partial x_ \indxa \partial x_ \indxb }
+
\sum_{\indyk}
\frac{\partial ^2h_\indyk}{\partial x_ \indxa \partial x_ \indxb }
\left[S \frac{\partial h}{\partial x}P^{-1}\frac{\partial h}{\partial x}\right]_{\indyk\indyj}\right)
\right]\times
$\hfill \null \\\null \hfill $\displaystyle 
\times
\left[
\left(I_p+S \frac{\partial h}{\partial x}P^{-1}\frac{\partial h}{\partial x}^\top\right)^{-1}
\right]_{\indyj\indyi}
$}\\\vbox{\noindent
\null \qquad \qquad  $\displaystyle \;=\; 
2\sum_{\indxa,\indxb}v_\indxa v_\indxb
\sum_{\indyk}
\frac{\partial ^2h_\indyk}{\partial x_ \indxa \partial x_ \indxb }
\sum_{\indyj}\left[
I_p+S \frac{\partial h}{\partial x}P^{-1}\frac{\partial h}{\partial x}^\top
\right]_{\indyk\indyj}
\left[
\left(I_p+S \frac{\partial h}{\partial x}P^{-1}\frac{\partial h}{\partial x}^\top\right)^{-1}
\right]_{\indyj\indyi}
$}\\\null \qquad \qquad  $\displaystyle \;=\; 
2\sum_{\indxa,\indxb}v_\indxa v_\indxb
\frac{\partial ^2h_\indyi}{\partial x_ \indxa \partial x_ \indxb }
$\\[1em]
This establishes that, if we have
$$
v^\top \Hess_Ph (x) v\;=\; 0\qquad \forall (x,v):\,  L_vh(x)\;=\; 0
\  ,
$$
then we have also~:
$$
v^\top \Hess_{P_{mod}}h (x) v\;=\; 0\qquad \forall (x,v):\,  L_vh(x)\;=\; 0
$$
%
% END Detailed computations
%
%%%%%%%%%%%%%

%%%%%%%%%%%%%
%
% BEGIN Detailed computations
%
\sousection{Proof of Lemma \ref{lem3}}
\label{complement40}
\subsubsection{A first necessary and sufficient condition for the integrability of the orthogonal distribution.}
\label{complement48}
\begin{lemma}
%\label{lem10}
The orthogonal distribution $\bfDistrib _\bfP^\ortho$ is integrable on $\Ouv$ if and only if, for any $\bfx_0$ in $\Ouv$, 
there exists a coordinate chart 
$\coordx$ around $\bfx_0$ such that the expressions of $\bfh$ and $\bfP$ in these coordinates satisfy the following 
(coordinate independent) condition, for all 
$\coordxp$ in $\coordxm(\coordxd \cap \Ouv)$ and all $\indyi$ and $\indyj$ in $\{1,\ldots,p\}$,
\IfReport{%
% report
\begin{equation}
\label{LP191}
0= 
\left(I_n - P(x)^{-1}\frac{\partial h}{\partial x}(x)^\top 
\left(\frac{\partial h}{\partial x}(x) 
P(x)^{-1}\frac{\partial h}{\partial x}(x)^\top
\right)^{-1}
\frac{\partial h}{\partial x}(x)\right)
\left[P(x)^{-1}
\frac{\partial h_\indyi}{\partial x}(x)^\top
 ,\,  P(x)^{-1}\frac{\partial 
h_\indyj}{\partial x}(x)^\top\right]
\end{equation} 
}{%
% not report
\\[1em]$\displaystyle 
0\;=\; 
$\refstepcounter{equation}\label{LP191}\hfill$(\theequation)$
%$\hfill \null
\\\null \hfill $\displaystyle 
\left(I_n - P(x)^{-1}\frac{\partial h}{\partial x}(x)^\top 
\left(\frac{\partial h}{\partial x}(x) 
P(x)^{-1}\frac{\partial h}{\partial x}(x)^\top
\right)^{-1}
\frac{\partial h}{\partial x}(x)\right)
\left[P(x)^{-1}
\frac{\partial h_\indyi}{\partial x}(x)^\top
 ,\,  P(x)^{-1}\frac{\partial 
h_\indyj}{\partial x}(x)^\top\right]
$\\[1em]
}
where $[u,v]$ denotes the Lie bracket of $u$ and $v$
\end{lemma}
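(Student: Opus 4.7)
The plan is to recognize condition (\ref{LP191}) as a coordinate expression of the Frobenius involutivity criterion for the orthogonal distribution.

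First, I would observe that the matrix
\[
\Pi^\tangent(x) \;=\; I_n - P(x)^{-1}\frac{\partial h}{\partial x}(x)^\top
\left(\frac{\partial h}{\partial x}(x) P(x)^{-1}\frac{\partial h}{\partial x}(x)^\top\right)^{-1}
\frac{\partial h}{\partial x}(x)
\]
is the projector onto the tangent distribution $\Distrib^\tangent(x)$ parallel to the orthogonal distribution $\Distrib_P^\ortho(x)$. This follows from the decomposition $v = v^\ortho + v^\tangent$: applying $\frac{\partial h}{\partial x}(x)$ kills $v^\tangent$, and since, by Lemma \ref{lem6}, $v^\ortho$ has the form $P(x)^{-1}\frac{\partial h}{\partial x}(x)^\top w$, we can solve for $w$ in terms of $v$ and obtain $v^\ortho = (I_n - \Pi^\tangent(x)) v$. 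A parallel verification shows $\Pi^\tangent(x)^2 = \Pi^\tangent(x)$, that $\Pi^\tangent(x) v = 0$ for $v \in \Distrib_P^\ortho(x)$, and that $\Pi^\tangent(x) v = v$ for $v \in \Distrib^\tangent(x)$.

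Second, I would apply the Frobenius theorem: since $\bfh$ is a submersion on $\Ouv$, the orthogonal distribution $\bfDistrib_\bfP^\ortho$ has constant rank $p$, and, still by Lemma~\ref{lem6}, is globally spanned (on any coordinate chart $\coordx$ around $\bfx_0$ in $\Ouv$) by the $p$ vector fields
\[
g_\indyi(x) \;=\; P(x)^{-1}\frac{\partial h_\indyi}{\partial x}(x)^\top, \qquad \indyi = 1,\ldots,p.
\]
Frobenius' theorem states that $\bfDistrib_\bfP^\ortho$ is integrable if and only if it is involutive, i.e., $[g_\indyi, g_\indyj](x) \in \Distrib_P^\ortho(x)$ for all $\indyi, \indyj$ and all $x$ in $\coordxm(\coordxd \cap \Ouv)$. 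Because $\Distrib^\tangent(x)$ and $\Distrib_P^\ortho(x)$ are complementary, this membership is equivalent to $\Pi^\tangent(x) [g_\indyi, g_\indyj](x) = 0$, which is precisely the equation (\ref{LP191}).

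Third, I would observe that the sufficient direction is immediate from the equivalence above, while for the necessary direction one only has to note that any two sections of $\bfDistrib_\bfP^\ortho$ can be expressed (locally) as $P$-linear combinations of the $g_\indyi$, so involutivity on the spanning set implies involutivity of the full distribution; this uses the standard fact that $[fU,V] = f[U,V] - (L_V f) U$, whose second term remains in $\bfDistrib_\bfP^\ortho$. Finally, the coordinate-independence of (\ref{LP191}) need not be checked directly, as it reflects a coordinate-free geometric condition (Frobenius involutivity), so its validity in one coordinate chart around $\bfx_0$ suffices. The only genuinely technical point is the identification of $\Pi^\tangent(x)$ as the projector; once this is in hand, the proof is a direct invocation of Frobenius.
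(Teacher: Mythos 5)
Your proposal is correct and follows essentially the same route as the paper's proof: identify the displayed matrix as the ($P$-orthogonal) projector onto the tangent distribution with kernel $\Distrib_P^\ortho(x)$, invoke Frobenius, and reduce the involutivity check to the spanning columns of $P(x)^{-1}\frac{\partial h}{\partial x}(x)^\top$ supplied by Lemma~\ref{lem6}. The paper states these three points more tersely (citing O'Neill for the projection form of the involutivity criterion), but the content and order of the argument are the same as yours.
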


\begin{proof}
From Frobenius Theorem, the orthogonal distribution $\bfDistrib _\bfP^\ortho$ is integrable on $\Ouv$
if and only if, for any $\bfx_0$ in $\Ouv$, there exists a coordinate chart 
$\coordx$ around $\bfx_0$ such that
the expression $[u^\ortho(x),v^\ortho(x)]$ of the Lie bracket of two arbitrary vectors $u^\ortho(x)$ and $v^\ortho(x)$
of $\Distrib _P^\ortho(x)$ belongs to $\Distrib _P^\ortho(x)$, 
or , equivalently, (see\cite[p. 213]{ONeill.83})
if and only if the $P$-orthogonal projection of $[u^\ortho(x),v^\ortho(x)]$ onto the tangent 
distribution $\Distrib ^\tangent(x)$  is zero.

So the claim follows from
\begin{itemize}
\item\null \quad 
$\displaystyle 
\left(I_n - P(x)^{-1}\frac{\partial h}{\partial x}(x)^\top  \left(\frac{\partial h}{\partial x}(x) 
P(x)^{-1}\frac{\partial h}{\partial x}(x)^\top \right)^{-1}\frac{\partial h}{\partial x}(x)\right)
$\quad 
is a $P$-orthogonal projector onto the tangent distribution.
\item
Since the orthogonal distribution $\Distrib _P^\ortho(x)$ is spanned by the columns of
$P(x)^{-1}\frac{\partial h}{\partial x}(x)^\top$, it is necessary and sufficient to restrict our attention  
to these columns.
\end{itemize}
\end{proof}

\subsubsection{A second necessary and sufficient condition for the integrability of the orthogonal distribution.}
\label{complement32}
We exploit the properties
\begin{enumerate}
\item
By definition, the integrability of the distribution $\bfDistrib _\bfP ^\ortho  $ everywhere 
locally on $\Ouv$ means:
\\
For any $\bfx_0$ in $\Ouv$ and any coordinate chart pair $\coordx$ around $\bfx_0$
and $\coordy$ around $\bfh(\bfx_0)$
there exists
a $C^s$ function $h^\ortho:\coordxm(\coordxd)\to \RR^{n-p}$ satisfying
the properties listed in Lemma \ref{lem3}.
\item
For any coordinate chart pair $\coordx$ and $\coordy$,
by letting $h$, $P$ and $\Distrib _P ^\ortho  (x)$ be the corresponding expressions of 
$\bfh $, $\bfP $ and $\bfDistrib _\bfP ^\ortho  (\bfx)$,  we have that
the orthogonal distribution  $\Distrib _P ^\ortho  (x)$ is spanned by the columns
$P(x)^{-1}\frac{\partial h}{\partial x}(x)^\top $.
\item
The integrability of a distribution is a coordinate invariant. So it is sufficient to establish this 
property in our ``preferred'' coordinates. 
\end{enumerate}

From Frobenius Theorem, the orthogonal distribution $\bfDistrib _\bfP^\ortho$ is integrable
if and only if it is involutive, and therefore if and only if, for any $\bfx_0$ in $\Ouv$, there exists
a coordinate chart $\coordyxr$ and, for any 
pair $(\indyi,\indyj)$, there exist $p$ functions $\mu _\indyl:\coordyxrd\to \RR$ satisfying
$$
\left[
\left(\begin{array}{@{\,  }c@{\,  }}
\delta _{\indyk\indyi} \\ M(y,z)_{\indxrb\indyi}
\end{array}\right)
,
\left(\begin{array}{@{\,  }c@{\,  }}
\delta _{\indyk\indyj} \\ M(y,z)_{\indxrb\indyj}
\end{array}\right)
\right]
\;=\; \sum_\indyl
\left(\begin{array}{@{\,  }c@{\,  }}
\delta _{\indyk\indyl} \\ M(y,z)_{\indxrb\indyl}
\end{array}\right)\mu _\indyl(y,z)
\qquad \forall (y,z)\in \coordyxrm(\coordyxrd)
\  ,
$$
with the notation
$$
M_{\indxrb\indyi}\;=\; \sum_{\indxrc}\left[P_{\xrond\xrond}^{-1}\right]_{\indxrb \indxrc}P_{\indxrc\indyi}
$$
or
$$
\sum_\indxrb P_{\indxra\indxrb}M_{\indxrb\indyi}\;=\; P_{\indxra\indyi}
\  .
$$
The functions $\mu _\indyl$ must be zero and the necessary and sufficient condition reduces to
$$
-
\frac{\partial M_{\indxrb\indyj}}{\partial y_\indyi}
+
\sum_{\indxrd} M_{\indxrd\indyi}
\frac{\partial M_{\indxrb\indyj}}{\partial \xrond_\indxrd}
+
\frac{\partial M_{\indxrb\indyi}}{\partial y_\indyj}
-
\sum_{\indxrd} M_{\indxrd\indyj}
\frac{\partial M_{\indxrb\indyi}}{\partial \xrond_\indxrd}
\;=\; 0
\qquad \forall (\indyi,\indyj,\indxrb)
$$
After multiplication by $P_{\indxra\indxrb}$ and summation in $\indxrb$, this gives
$$
\sum_\indxrb P_{\indxra\indxrb}
\left[\frac{\partial M_{\indxrb\indyi}}{\partial y_\indyj}-\frac{\partial M_{\indxrb\indyj}}{\partial 
y_\indyi}\right]
\;=\; 
\sum_{\indxrd,\indxrb} 
\left[M_{\indxrd\indyj}
P_{\indxra\indxrb}\frac{\partial M_{\indxrb\indyi}}{\partial \xrond_\indxrd}
-
M_{\indxrd\indyi}
P_{\indxra\indxrb}\frac{\partial M_{\indxrb\indyj}}{\partial \xrond_\indxrd}\right]
$$
With
$$
\sum_\indxrb P_{\indxra\indxrb} \frac{\partial M_{\indxrb\indyi}}{\partial \mkern 2mu\raise -0.25em\hbox{\huge$\cdot$}}
\;=\; 
\frac{\partial P_{\indxra\indyi}}{\partial \mkern 2mu\raise -0.25em\hbox{\huge$\cdot$}}-
\sum_\indxrb M_{\indxrb\indyi}\frac{\partial P_{\indxra \indxrb}}{\partial \mkern 2mu\raise -0.25em\hbox{\huge$\cdot$}}
$$
this yields
\\[1em]$\displaystyle 
\frac{\partial P_{\indxra\indyi}}{\partial y_\indyj }-\frac{\partial P_{\indxra\indyj}}{\partial y_\indyi }
-\sum_\indxrb \left[M_{\indxrb\indyi}
\frac{\partial P_{\indxra\indxrb}}{\partial y_\indyj}-
M_{\indxrb\indyj}\frac{\partial P_{\indxra\indxrb}}{\partial 
y_\indyi}\right]
$\hfill \null \\\null \hfill $\displaystyle
\;=\; 
\sum_{\indxrd} 
M_{\indxrd\indyj}
\frac{\partial P_{\indxra\indyi}}{\partial \xrond_\indxrd}
-
\sum_{\indxrd} 
M_{\indxrd\indyi}
\frac{\partial P_{\indxra\indyj}}{\partial \xrond_\indxrd}
+
\sum_{\indxrd,\indxrb} 
M_{\indxrd\indyi}M_{\indxrb\indyj}
\frac{\partial P_{\indxra\indxrb}}{\partial \xrond_\indxrd}
-
\sum_{\indxrd,\indxrb} 
M_{\indxrd\indyj}M_{\indxrb\indyi}
\frac{\partial P_{\indxra\indxrb}}{\partial \xrond_\indxrd}
$\\[1em]
which reduces to
% \\[1em]$
% \begin{array}{cl}
% =&\displaystyle 
% \sum_\indxrd \left[M_{\indxrd\indyi}
% \frac{\partial P_{\indxra\indxrd}}{\partial y_\indyj}
% -
% M_{\indxrd\indyj}\frac{\partial P_{\indxra\indxrd}}{\partial 
% y_\indyi}\right]
% +
% \sum_{\indxrd} 
% M_{\indxrd\indyj}
% \left[
% \frac{\partial P_{\indxra\indyi}}{\partial \xrond_\indxrd}
% -
% M_{\indxrb\indyi}
% \frac{\partial P_{\indxra\indxrb}}{\partial \xrond_\indxrd}
% \right]
% -
% \sum_{\indxrd} 
% M_{\indxrd\indyi}
% \left[
% \frac{\partial P_{\indxra\indyj}}{\partial \xrond_\indxrd}
% -
% M_{\indxrb\indyj}
% \frac{\partial P_{\indxra\indxrb}}{\partial \xrond_\indxrd}
% \right]
% \\
% =&\displaystyle 
% \sum_{\indxrc,\indxrd}
% \left[P_{\xrond\xrond}^{-1}\right]_{\indxrd \indxrc}
% \left(P_{\indxrc\indyi}
% \left[
% \frac{\partial P_{\indxra\indxrd}}{\partial y_\indyj}
% -
% \frac{\partial P_{\indxra\indyj}}{\partial \xrond_\indxrd}
% +
% \sum_{\indxre} \left[P_{\xrond\xrond}^{-1}\right]_{\indxrb \indxre}
% P_{\indxre\indyj}
% \frac{\partial P_{\indxra\indxrb}}{\partial \xrond_\indxrd}
% \right]
% \;-\; 
% P_{\indxrc\indyj}
% \left[
% \frac{\partial P_{\indxra\indxrd}}{\partial y_\indyi}
% -
% \frac{\partial P_{\indxra\indyi}}{\partial \xrond_\indxrd}
% +
% \sum_{\indxre} \left[P_{\xrond\xrond}^{-1}\right]_{\indxrb \indxre}
% P_{\indxre\indyi}
% \frac{\partial P_{\indxra\indxrb}}{\partial \xrond_\indxrd}
% \right]\right)
% \end{array}$\\[1em]
%
\begin{equation}
\label{LP201}
\frac{\partial P_{\indxra\indyi}}{\partial y_\indyj }-\frac{\partial P_{\indxra\indyj}}{\partial y_\indyi }
\;=\; 
\sum_\indxrd M_{\indxrd\indyi}
\left[
\frac{\partial P_{\indxra\indxrd}}{\partial y_\indyj}
-
\frac{\partial P_{\indxra\indyj}}{\partial \xrond_\indxrd}
+
M_{\indxrb\indyj}
\frac{\partial P_{\indxra\indxrb}}{\partial \xrond_\indxrd}
\right]
\;-\; 
M_{\indxrd\indyj}
\left[
\frac{\partial P_{\indxra\indxrd}}{\partial y_\indyi}
-
\frac{\partial P_{\indxra\indyi}}{\partial \xrond_\indxrd}
+
M_{\indxrb\indyi}
\frac{\partial P_{\indxra\indxrb}}{\partial \xrond_\indxrd}
\right]
\end{equation}
and therefore to
\\[1em]$\displaystyle 
\frac{\partial P_{\indxra\indyi}}{\partial y_\indyj }-\frac{\partial P_{\indxra\indyj}}{\partial y_\indyi }
\; 
=\; 
\sum_{\indxrc,\indxrd}
\left[P_{\xrond\xrond}^{-1}\right]_{\indxrd \indxrc}
P_{\indxrc\indyi}
\left[
\frac{\partial P_{\indxra\indxrd}}{\partial y_\indyj}
-
\frac{\partial P_{\indxra\indyj}}{\partial \xrond_\indxrd}
+
\sum_{\indxre} \left[P_{\xrond\xrond}^{-1}\right]_{\indxrb \indxre}
P_{\indxre\indyj}
\frac{\partial P_{\indxra\indxrb}}{\partial \xrond_\indxrd}
\right]
$\refstepcounter{equation}\label{LP195}\hfill$(\theequation)$\\\null \hfill$\displaystyle 
\;-\; 
\sum_{\indxrc,\indxrd}
\left[P_{\xrond\xrond}^{-1}\right]_{\indxrd \indxrc}
P_{\indxrc\indyj}
\left[
\frac{\partial P_{\indxra\indxrd}}{\partial y_\indyi}
-
\frac{\partial P_{\indxra\indyi}}{\partial \xrond_\indxrd}
+
\sum_{\indxre} \left[P_{\xrond\xrond}^{-1}\right]_{\indxrb \indxre}
P_{\indxre\indyi}
\frac{\partial P_{\indxra\indxrb}}{\partial \xrond_\indxrd}
\right]
$\\[1em]
We have established that the orthogonal distribution $\bfDistrib _\bfP^\ortho$ is integrable on $\Ouv$ if and 
only if,
for any $\bfx_0$ in $\Ouv$, we there exists a coordinate chart 
$\coordyxr$ around $\bfx_0$ such that the above equation holds for all $(\indyi,\indyj,\indxra)$ and all $(y,z)$ in
$\coordyxrm(\coordyxrd)$.

\subsubsection{(\ref{LP161}) implies (\ref{LP195}) when $\bfh$ is a Riemannian submersion}

Let $\coordyxr$ and $\coordy$ be an arbitrary coordinate chart pair.
$v^\tangent$ is a tangent vector if and only if there exists a vector $v$ in $\RR^{n-p} $ satisfying
$$
v^\tangent\;=\; \left(\begin{array}{cc}
0
\\
I_{n-p}
\end{array}\right) v
$$
From Lemma \ref{lem6}, $u^\ortho $ is an orthogonal vector if and only if there exists a 
vector $u$ in $\RR^p$ satisfying
\begin{equation}
\label{LP203}
u^\ortho\;=\; \left(\begin{array}{cc}
I_p
\\
-P_{\xrond\xrond}^{-1}P_{\xrond y}
\end{array}\right) u
\end{equation}
So, since we have
$$
\left(\begin{array}{cc}
I_p
&
-
P_{y\xrond}P_{\xrond\xrond}^{-1}
\end{array}\right) 
\left(\begin{array}{cc}
\Gammay _{yy}^\indyi
-\Gamma _{yy}^\indyi
&
-\Gamma _{y\xrond}^\indyi
\\
- \Gamma _{\xrond y}^\indyi
&
-\Gamma _{\xrond\xrond}^\indyi
\end{array}\right)
\left(\begin{array}{cc}
0
\\
I_{n-p}
\end{array}\right) 
\;=\; 
-
\Gamma _{y\xrond}^\indyi
\;+\; 
P_{y\xrond}P_{\xrond\xrond}^{-1}
\Gamma _{\xrond\xrond}^\indyi
\  ,
$$
the equation (\ref{LP161}) is equivalent to
$$ 
\Gamma _{\indxra\indyk}^\indyi 
-
\sum_{\indxrb}\Gamma _{\indxra\indxrb}^\indyi M_{\indxrb\indyk}
\;=\; 0
\  .
$$
We expand to obtain
\\[1em]$\displaystyle 
\sum_\indyj [P^{-1}]_{\indyi\indyj}
\left(
\frac{\partial P_{\indyj\indxra}}{\partial y_\indyk}
+
\frac{\partial P_{\indyj\indyk}}{\partial \xrond_\indxra}
-
\frac{\partial P_{\indyk\indxra}}{\partial y_\indyj}
\right)
+
\sum_\indxrc
[P^{-1}]_{\indyi\indxrc}
\left(
\frac{\partial P_{\indxrc\indxra}}{\partial y_\indyk}
+
\frac{\partial P_{\indxrc\indyk}}{\partial \xrond_\indxra}
-
\frac{\partial P_{\indyk\indxra}}{\partial \xrond_\indxrc}
\right)\;=\; 
$\hfill \null \\\null \hfill $\displaystyle 
\sum_{\indxrb}
\left[
\sum_\indyj
[P^{-1}]_{\indyi\indyj}
\left(
\frac{\partial P_{\indyj\indxra}}{\partial \xrond_\indxrb}
+
\frac{\partial P_{\indyj\indxrb}}{\partial \xrond_\indxra}
-
\frac{\partial P_{\indxrb\indxra}}{\partial y_\indyj}
\right)
+
\sum_\indxrc
[P^{-1}]_{\indyi\indxrc}
\left(
\frac{\partial P_{\indxrc\indxra}}{\partial \xrond_\indxrb}
+
\frac{\partial P_{\indxrc\indxrb}}{\partial \xrond_\indxra}
-
\frac{\partial P_{\indxrb\indxra}}{\partial \xrond_\indxrc}
\right)
\right]
M_{\indxrb\indyk}
$\\[1em]
With (\ref{LP200}) and the fact that, $P^{-1}$ being symmetric positive definite, its $P_{yy}^{-1}$ block is 
invertible, this yields
\\[1em]$\displaystyle 
\left(
\frac{\partial P_{\indyj\indxra}}{\partial y_\indyk}
+
\frac{\partial P_{\indyj\indyk}}{\partial \xrond_\indxra}
-
\frac{\partial P_{\indyk\indxra}}{\partial y_\indyj}
\right)
-
\sum_\indxrc
M_{\indxrc\indyj}
\left(
\frac{\partial P_{\indxrc\indxra}}{\partial y_\indyk}
+
\frac{\partial P_{\indxrc\indyk}}{\partial \xrond_\indxra}
-
\frac{\partial P_{\indyk\indxra}}{\partial \xrond_\indxrc}
\right)\;=\; 
$\hfill \null \\\null \hfill $\displaystyle 
\sum_{\indxrb}
\left[
\left(
\frac{\partial P_{\indyj\indxra}}{\partial \xrond_\indxrb}
+
\frac{\partial P_{\indyj\indxrb}}{\partial \xrond_\indxra}
-
\frac{\partial P_{\indxrb\indxra}}{\partial y_\indyj}
\right)
-
\sum_\indxrc
M_{\indxrc\indyj}
\left(
\frac{\partial P_{\indxrc\indxra}}{\partial \xrond_\indxrb}
+
\frac{\partial P_{\indxrc\indxrb}}{\partial \xrond_\indxra}
-
\frac{\partial P_{\indxrb\indxra}}{\partial \xrond_\indxrc}
\right)
\right]
M_{\indxrb\indyk}
$\\[1em]
By re-ordering, we obtain
\\[1em]$\displaystyle 
\left(
\frac{\partial P_{\indyj\indxra}}{\partial y_\indyk}
-
\frac{\partial P_{\indyk\indxra}}{\partial y_\indyj}
\right)
-
\sum_\indxrc
M_{\indxrc\indyj}
\left(
\frac{\partial P_{\indxrc\indxra}}{\partial y_\indyk}
-
\frac{\partial P_{\indyk\indxra}}{\partial \xrond_\indxrc}
\right)
-
\sum_{\indxrb}
\left(
\frac{\partial P_{\indyj\indxra}}{\partial \xrond_\indxrb}
-
\frac{\partial P_{\indxrb\indxra}}{\partial y_\indyj}
\right)
M_{\indxrb\indyk}
$\hfill \null \\\null \hfill $\displaystyle 
\;=\; -\sum_{\indxrb,\indxrc}
M_{\indxrc\indyj}
\left(
\frac{\partial P_{\indxrc\indxra}}{\partial \xrond_\indxrb}
-
\frac{\partial P_{\indxrb\indxra}}{\partial \xrond_\indxrc}
\right)
M_{\indxrb\indyk}
-
\frac{\partial [P_y]_{\indyj\indyk}}{\partial \xrond_\indxra}
$\\[1em]
where we have used (\ref{LP202}).
This is nothing but the sufficient condition for integrability (\ref{LP201}) 
if
$$
\frac{\partial [P_y]_{\indyj\indyk}}{\partial \xrond_\indxra}\;=\; 0
\  .
$$
But this identity is a consequence of the fact that $\bfh$ is a Riemannian submersion.
\sousection{The  function $\bar h^\ortho$ given by (\ref{LP190}) satisfies (\ref{LP149})}
\label{complement46}
The data are
$$
h(\euX(x,\tau (x)))\;=\; h(\coordxm(\bfx_0))
\quad ,\qquad 
\bar h^\ortho(x)\;=\; h^\ortho(\euX(x,\tau  (x)))
\  .
$$
where $\euX(x,t)$ is the solution of
$$
\dot x = P(x)^{-1}\frac{\partial h}{\partial x}(x)^\top
\  .
$$
So along the solutions we have
$$
\dot{\overparen{h(x)}}
\;=\; \frac{\partial h}{\partial x}(x) P(x)^{-1}\frac{\partial h}{\partial x}(x)^\top
$$
The right hand side being strictly positive, $h(x)$ is strictly  monotonic along the solution, hence the 
local existence and uniqueness of $\tau (x)$. Also the implicit function theorem guarantees its smoothness.

On another hand the semigroup property of the flow gives
$$
\frac{\partial \euX}{\partial x}(x,t) P(x)^{-1}\frac{\partial h}{\partial x}(x)^\top
\;=\; 
P(\euX(x,t))^{-1}\frac{\partial h}{\partial x}(\euX(x,t))^\top
$$

Then we compute
\\[1em]$\displaystyle 
\frac{\partial \bar h^\ortho}{\partial x}(x)
P(x)^{-1}\frac{\partial h}{\partial x}(x)^\top
$\hfill \null \\\null \hfill $
\renewcommand{\arraystretch}{2}
\begin{array}{cl}
=&\displaystyle 
\frac{\partial h^\ortho}{\partial x}(\euX(x,\tau (x)))
\left[
\frac{\partial \euX}{\partial x}(x,\tau (x))P(x)^{-1}\frac{\partial h}{\partial x}(x)^\top
 +
\frac{\partial \euX}{\partial t}(x,\tau (x)) \frac{\partial \tau}{\partial x}(x)
P(x)^{-1}\frac{\partial h}{\partial x}(x)^\top
\right]
\  ,
\\
=&\displaystyle 
\frac{\partial h^\ortho}{\partial x}(\euX(x,\tau (x)))
P(\euX(x,\tau (x)))^{-1}\frac{\partial h}{\partial x}(\euX(x,\tau (x)))^\top
\left[1+\frac{\partial \tau}{\partial x}(x)P(x)^{-1}\frac{\partial h}{\partial x}(x)^\top\right]
\end{array}
$\\[1em]
But we have
$$
\frac{\partial h}{\partial x}(\euX(x,\tau (x)))
\left[
\frac{\partial \euX}{\partial x}(x,\tau (x))
 +
\frac{\partial \euX}{\partial t}(x,\tau (x)) \frac{\partial \tau}{\partial x}(x)
\right]
=0
$$
and therefore
$$
\frac{\partial h}{\partial x}(\euX(x,\tau (x)))
P(\euX(x,\tau (x)))^{-1}\frac{\partial h}{\partial x}(\euX(x,\tau (x)))^\top
\left[1+\frac{\partial \tau}{\partial x}(x)P(x)^{-1}\frac{\partial h}{\partial x}(x)^\top\right]
=0
$$
Since $\frac{\partial h}{\partial x}P^{-1}\frac{\partial h}{\partial x}^\top$ is not zero, this implies
$$
\left[1+\frac{\partial \tau}{\partial x}(x)P(x)^{-1}\frac{\partial h}{\partial x}(x)^\top\right]\;=\; 0
$$
So we do have
$$
\frac{\partial \bar h^\ortho}{\partial x}(x)
P(x)^{-1}\frac{\partial h}{\partial x}(x)^\top\;=\; 0
$$
%
% END Detailed computations
%
%%%%%%%%%%%%%%

%%%%%%%%%%%%%%
%
% BEGIN Detailed computations
%
\sousection{{About (\ref{LP126})}}
\label{complement25}
We have
%% 
%   BEGIN OF DETAILED COMPUTATIONS
%   --------------------------------
\begin{eqnarray*}
P_{i\beta }(q,v)
&=&
-c g_{i\beta }(q)
+bg_{\beta {\indxrb }}(q)\mathfrak{C} _{{\indxra }i}^{\indxrb }(q) v_{\indxra }
\  ,
\\[0.5em]
P_{\alpha j}(q,v) 
& =&
-cg_{\alpha j}(q)
+b g_{\alpha {\indxrd }}(q)\mathfrak{C} _{{\indxrb }j}^{\indxrd }(q) v_{\indxrb }
\  ,\\[0.5em]
P_{ij}(q,v) &=&
ag_{ij}(q)
-c\left(
 g_{i{\indxrb }}(q)\mathfrak{C} _{{\indxra }j}^{\indxrb }(q) v_{\indxra } 
+
g_{{\indxra }j}(q)\mathfrak{C} _{{\indxrb }i}^{\indxra }(q) v_{\indxrb } 
\right)
%\\& & 
+ bg_{{\indxrc }{\indxrd }}(q)
\mathfrak{C} _{{\indxra }i}^{\indxrc }(q)
\mathfrak{C} _{{\indxrb }j}^{\indxrd }(q) v_{\indxra } v_{\indxrb }
\  ,\\[0.5em]
%-------
&=&
ag_{ij}(q)
-c\left(
 g_{{\indxrc}j}(q)\mathfrak{C} _{{\indxra }i}^{\indxrc }(q) v_{\indxra } 
+
g_{i{\indxrd }}(q)\mathfrak{C} _{{\indxrb }j}^{\indxrd }(q) v_{\indxrb } 
\right)
+ bg_{{\indxrc }{\indxrd }}(q)
\mathfrak{C} _{{\indxra }i}^{\indxrc }(q)
\mathfrak{C} _{{\indxrb }j}^{\indxrd }(q) v_{\indxra } v_{\indxrb }
% ----
\  ,\\[0.5em]
%-------
&=&
ag_{ij}(q)
+\left[
-c
 g_{{\indxrc}j}(q)
+ bg_{{\indxrc }{\indxrd }}(q)
\mathfrak{C} _{{\indxrb }j}^{\indxrd }(q) v_{\indxrb }
\right]
\mathfrak{C} _{{\indxra }i}^{\indxrc }(q) v_{\indxra } 
\\&&\qquad \qquad \qquad \displaystyle
+\left[
-c
g_{i{\indxrd }}(q)
+ bg_{{\indxrc }{\indxrd }}(q)
\mathfrak{C} _{{\indxra }i}^{\indxrc }(q) v_{\indxra } 
\right] \mathfrak{C} _{{\indxrb }j}^{\indxrd }(q) v_{\indxrb } 
-bg_{{\indxrc }{\indxrd }}(q)
\mathfrak{C} _{{\indxra }i}^{\indxrc }(q)
\mathfrak{C} _{{\indxrb }j}^{\indxrd }(q) v_{\indxra } v_{\indxrb }
% ----
\  ,\\[0.5em]
%-------
P_{ij}(q,v) &=&
ag_{ij}(q)
+\sum_{\indxrf,\indxrc }P_{\indxrc  j}(q,v) 
\mathfrak{C} _{{\indxrf }i}^{\indxrc  }(q) v_{\indxrf } 
+\sum_{\indxre,\indxrd }P_{i\indxrd }(q,v)\mathfrak{C} _{{\indxre }j}^{\indxrd  }(q) v_{\indxre } 
-b\sum_{\indxre,\indxrf,\indxrc ,\indxrd }g_{{\indxrc  }{\indxrd  }}(q)
\mathfrak{C} _{{\indxrf }i}^{\indxrc  }(q)
\mathfrak{C} _{{\indxre }j}^{\indxrd  }(q) v_{\indxrf } v_{\indxre }
\end{eqnarray*}
and
\begin{eqnarray*}
P_{i\indxrd }(q,v)
-\sum_{\indxrf,\indxrc }\mathfrak{C} _{{\indxrf }i}^{\indxrc  }(q) v_{\indxrf } P_{\indxrc \indxrd }(q,v)
% &=&
% -c g_{i\delta  }(q)
% +bg_{\delta  {\indxrb }}(q)\mathfrak{C} _{{\indxra }i}^{\indxrb }(q) v_{\indxra }
% -\mathfrak{C} _{{\indxra }i}^{\gamma  }(q) v_{\indxra } 
% bg_{\gamma  \delta  }(q)
% \\
&=&-c g_{i\indxrd  }(q)
\end{eqnarray*}
 %%
% END OF DETAILED COMPUTATIONS
%   ----------------------------
 %%
By construction, for any positive definite matrix $S$, we have
\\[1em]$\displaystyle 
\left(\begin{array}{cc}
\delta _{\indyi\indyk}
&\displaystyle 
-\sum_{\indxrf}\mathfrak{C} _{{\indxrf }\indyi}^{\indxrc  }(q) v_{\indxrf } 
\\
0
&
S_{\indxra  \indxrc }(q)
\end{array}\right)
\left(\begin{array}{cc}
P_{kl}(q,v) &P_{k\indxrd }(q,v) 
\\
P_{\indxrc \indyl} (q,v) & P_{\indxrc \indxrd }(q,v) 
\end{array}\right)
\left(\begin{array}{cc}
\delta _{\indyl \indyj}
&
0
\\
-\displaystyle \sum_{\indxre}\mathfrak{C} _{{\indxre }\indyj}^{\indxrd  }(q) v_{\indxre } 
&
S_{\indxrd  \indxrb }(q)
\end{array}\right)
$
%
%% 
%   BEGIN OF DETAILED COMPUTATIONS
%   --------------------------------
\\[1em]\null \quad $\displaystyle 
\;=\; 
\left(\begin{array}{cc}
P _{il}
-\sum_{\indxrf,\indxrc }\mathfrak{C} _{{\indxrf }i}^{\indxrc  }(q) v_{\indxrf } P_{\indxrc l}
&
P_{i\indxrd }
-\sum_{\indxrf,\indxrc }\mathfrak{C} _{{\indxrf }i}^{\indxrc  }(q) v_{\indxrf } P_{\indxrc \indxrd }
\\{}
\sum_{\indxrc }S_{\indxra  \indxrc }(q)P_{\indxrc l}
&
\sum_{\indxrc }S_{\indxra  \indxrc }(q)P_{\indxrc \indxrd }
\end{array}\right)
\left(\begin{array}{cc}
\delta _{l j}
&
0
\\
-\sum_{\indxre}\mathfrak{C} _{{\indxre }j}^{\indxrd  }(q) v_{\indxre } 
&
S_{\indxrd  \indxrb }(q)
\end{array}\right)
$\\[1em]\null \quad $\displaystyle 
\;=\; 
\left(\begin{array}{cc}
P _{ij}-\sum_{\indxrf,\indxrc }\mathfrak{C} _{{\indxrf }i}^{\indxrc  }(q) v_{\indxrf } P_{\indxrc j}
-\sum_{\indxre,\indxrd }P_{i\indxrd }\mathfrak{C} _{{\indxre }j}^{\indxrd  }(q) v_{\indxre } 
+
\sum_{\indxre,\indxrf,\indxrc ,\indxrd }\mathfrak{C} _{{\indxrf }i}^{\indxrc  }(q) v_{\indxrf } P_{\indxrc \indxrd }\mathfrak{C} _{{\indxre }j}^{\indxrd  }(q) v_{\indxre } 
\quad \null 
\\[-0.7em]
&\multicolumn{1}{@{}l@{}}{\nearrow\quad }
\\[-0.5em]\multicolumn{1}{@{}r@{}}{
\left[P_{i\indxrd }
-\sum_{\indxrf,\indxrc ,\indxrd }\mathfrak{C} _{{\indxrf }i}^{\indxrc  }(q) v_{\indxrf } P_{\indxrc \indxrd }\right]
S_{\indxrd \indxrb }(q)
}
\\[1em]
\sum_{\indxrc ,\indxrd }
S_{\indxra  \indxrc }(q)\left[P_{\indxrc j}
-\sum_{\indxre}P_{\indxrc \indxrd }
\mathfrak{C} _{{\indxre }j}^{\indxrd  }(q) v_{\indxre } \right]
\\[-0.7em]
&\multicolumn{1}{@{}l@{}}{\nearrow}
\\[-0.5em]\multicolumn{1}{@{}r@{}}{
\sum_{\indxrc ,\indxrd }S_{\indxra  \indxrc }(q)P_{\indxrc \indxrd }S_{\indxrd  \indxrb }(q)
}
\end{array}\right)
$
 %%
% END OF DETAILED COMPUTATIONS
%   ----------------------------
\\[1em]\null \quad $\displaystyle 
\;=\; 
\left(\begin{array}{cc}
a g_{\indyi\indyj}(q) & \displaystyle 
-c \sum_{\indxrd }g_{\indyi\indxrd  }(q) S_{\indxrd \indxrb }(q)\\ 
\displaystyle -c \sum_{\indxrc }S_{\indxra  \indxrc }(q)g_{\indxrc  j} &
\displaystyle 
b  \sum_{\indxrc ,\indxrd }S_{\indxra  \indxrc }(q)g_{\indxrc \indxrd }(q)S_{\indxrd \indxrb }(q)
\end{array}\right)
$
%
% END Detailed computations
%
%%%%%%%%%%%%%%%%

%%%%%%%%%%%%%%
%
% BEGIN Detailed computations
%
\sousection{{Details on (\ref{LP167})}}
\label{complement39}
We obtain
$$
\left[P_{y\xrond}P_{\xrond\xrond}^{-1} P_{\xrond y}\right]_{\indyi\indyj}
\;=\; \frac{1}{b}\,  \sum_{\indxra,\indxrb}
\left[-c g_{\indyi\indxrb } 
+b\sum_{\indxrf,\indxrg}g_{\indxrb \indxrg} \mathfrak{C} _{{\indxrf }i}^{\indxrg}  \xrond _{\indxrf }
\right]
[g^{-1}]_{\indxrb\indxra}
\left[-cg_{\indxra  j} 
+b \sum_{\indxre,\indxrh}g_{\indxra  {\indxrh}} \mathfrak{C} _{{\indxre }j}^{\indxrh }  \xrond _{\indxre }
\right]
$$
where
$$
\sum_\indxrb
g_{\indyi\indxrb }[g^{-1}]_{\indxrb\indxra}\;=\; \delta _{\indyi\indxra}
\quad ,\qquad 
\sum_\indxrb
g_{\indxrg\indxrb }[g^{-1}]_{\indxrb\indxra}\;=\; \delta _{\indxrg\indxra}
\  .
$$
So we are left with
\begin{eqnarray*}
\left[P_{y\xrond}P_{\xrond\xrond}^{-1} P_{\xrond y}\right]_{\indyi\indyj}
&=&
\frac{1}{b}
\sum_{\indxra}
\left[
-c\delta _{\indyi\indxra}
+ b\sum_{\indxrf}
\mathfrak{C} _{{\indxrf }\indyi}^{\indxra}  \xrond _{\indxrf }
\right]
\left[-cg_{\indxra  \indyj} 
+b \sum_{\indxre,\indxrh}g_{\indxra\indxrh}\mathfrak{C} _{{\indxre }j}^{\indxrh }  \xrond _{\indxre }
\right]
\\
&=&
\frac{c^2}{b} g_{\indyi\indyj}
-c\sum_{\indxra,\indxrf}
\mathfrak{C} _{{\indxrf }\indyi}^{\indxra}  \xrond _{\indxrf }g_{\indxra  j} 
-c\sum_{\indxre,\indxrh}g_{\indyi\indxrh}\mathfrak{C} _{{\indxre }j}^{\indxrh }  \xrond _{\indxre }
+b 
\sum_{\indxrg,\indxrf,\indxre,\indxrh}
g_{\indxrg  {\indxrh}} 
\mathfrak{C} _{{\indxrf }\indyi}^{\indxrg} 
\mathfrak{C} _{{\indxre }\indyj}^{\indxrh } 
\xrond _{\indxrf }
\xrond _{\indxre }
\\
&=&\frac{c^2}{b} g_{\indyi\indyj} + P_{\indyi\indyj}-ag_{\indyi\indyj}
\end{eqnarray*}
%
% END Detailed computations
%
%%%%%%%%%%%%%%%%

%%%%%%%%%%%%%%
%
% BEGIN Detailed computations
%
\sousection{Integrability of the orthogonal distribution implies the metric is flat}
\label{complement27}
Following (\ref{LP195}) in the \complement \ref{complement32}, we  want to know whether or not we have
\\[1em]$\displaystyle 
\frac{\partial P_{\indxra\indyi}}{\partial y_\indyj }-\frac{\partial P_{\indxra\indyj}}{\partial y_\indyi }
\; 
=\; 
\sum_{\indxrc,\indxrd}
\left[P_{\xrond\xrond}^{-1}\right]_{\indxrd \indxrc}
P_{\indxrc\indyi}
\left[
\frac{\partial P_{\indxra\indxrd}}{\partial y_\indyj}
-
\frac{\partial P_{\indxra\indyj}}{\partial \xrond_\indxrd}
+
\sum_{\indxre} \left[P_{\xrond\xrond}^{-1}\right]_{\indxrb \indxre}
P_{\indxre\indyj}
\frac{\partial P_{\indxra\indxrb}}{\partial \xrond_\indxrd}
\right]
$\refstepcounter{equation}\label{labequ14}\hfill$(\theequation)$
\\\null \hfill$\displaystyle 
\;-\; 
\sum_{\indxrc,\indxrd}
\left[P_{\xrond\xrond}^{-1}\right]_{\indxrd \indxrc}
P_{\indxrc\indyj}
\left[
\frac{\partial P_{\indxra\indxrd}}{\partial y_\indyi}
-
\frac{\partial P_{\indxra\indyi}}{\partial \xrond_\indxrd}
+
\sum_{\indxre} \left[P_{\xrond\xrond}^{-1}\right]_{\indxrb \indxre}
P_{\indxre\indyi}
\frac{\partial P_{\indxra\indxrb}}{\partial \xrond_\indxrd}
\right]
$\\[1em]
when $P$ is defined as
\begin{eqnarray*}
P_{\indyi\indyj}(y,\xrond) &=&
ag_{\indyi\indyj}(y)
-c\left(
\sum_{\indxrg,\indxrf} g_{j{\indxrg }}(y)\mathfrak{C} _{{\indxrf }i}^{\indxrg }(y) \xrond_{\indxrf } 
+
\sum_{\indxrh,\indxre}g_{i{\indxrh }}(y)\mathfrak{C} _{{\indxre }j}^{\indxrh }(y) \xrond_{\indxre } 
\right)
\\  % The following is made to push  up to the end of the line
&&%
\setbox0=\hbox{$%
\null \hskip 1cm P_{\indyi\indyj}(y,\xrond)%
\hskip 2\arraycolsep%
=
\hskip 2\arraycolsep \null $}%
\setlength{\longueur}{\linewidth}
\addtolength{\longueur}{-\wd0}
\setbox0=\hbox{$%
\displaystyle + b\sum_{\indxrg,\indxrh,\indxre,\indxrf}g_{{\indxrg}{\indxrh}}(y)
\mathfrak{C} _{{\indxrf }i}^{\indxrg}(y)
\mathfrak{C} _{{\indxre }j}^{\indxrh}(y) \xrond_{\indxrf } \xrond_{\indxre }
\  ,\hskip 1cm\null 
$}
\addtolength{\longueur}{-\wd0}
\hskip \longueur \box0
\\[0.5em]
P_{\indyi\indxrb }(y,\xrond)&=&
-c g_{\indyi\indxrb }(y)
+b\sum_{\indxrf,\indxrg}g_{\indxrb \indxrg}(y)\mathfrak{C} _{{\indxrf }i}^{\indxrg}(y) \xrond_{\indxrf }
\  ,
\\[0.5em]
P_{\indxra  \indyj}(y,\xrond) & =&
-cg_{\indxra  j}(y)
+b \sum_{\indxre,\indxrh}g_{\indxra  {\indxrh}}(y)\mathfrak{C} _{{\indxre }j}^{\indxrh }(y) \xrond_{\indxre }
\  ,\\[0.5em]
P_{\indxra  \indxrb }(y,\xrond)&=&
bg_{\indxra  \indxrb }(y)
\  ,
\end{eqnarray*}
We get
\begin{eqnarray*}
\frac{\partial P_{ \indxra \indyj} }{\partial y _\indyi  }&=&
-c \frac{\partial g_{\indxra \indyj}}{\partial y_\indyi }(y)
+b \sum_{\indxre,\indxrh}
\frac{\partial g_{\indxra  {\indxrh}}}{\partial y_\indyi}(y)
\mathfrak{C} _{{\indxre }j}^{\indxrh }(y) \xrond_{\indxre }
+b \sum_{\indxre,\indxrh}g_{\indxra  {\indxrh}}(y)\frac{\partial \mathfrak{C} _{{\indxre }j}^{\indxrh }}{\partial y_\indyi}(y) \xrond_{\indxre }
\\
\frac{\partial P_{ \indxra \indyi} }{\partial y _\indyj  }
-\frac{\partial P_{ \indxra \indyj} }{\partial y _\indyi  }&=&
-c \left[\frac{\partial g_{\indxra \indyi}}{\partial y_\indyj }(y)
-\frac{\partial g_{\indxra \indyj}}{\partial y_\indyi }(y)
\right]
\\&&+b \sum_{\indxre,\indxrh}\left[
\frac{\partial g_{\indxra  {\indxrh}}}{\partial y_\indyj}(y)
\mathfrak{C} _{{\indxre }\indyi}^{\indxrh }(y) 
-
\frac{\partial g_{\indxra  {\indxrh}}}{\partial y_\indyi}(y)
\mathfrak{C} _{{\indxre }\indyj}^{\indxrh }(y) 
\right]\xrond_{\indxre }
\\&&
+b \sum_{\indxre,\indxrh}g_{\indxra  {\indxrh}}(y)
\left[\frac{\partial \mathfrak{C} _{{\indxre }\indyi}^{\indxrh }}{\partial y_\indyj}(y) 
-
\frac{\partial \mathfrak{C} _{{\indxre }j}^{\indxrh }}{\partial y_\indyi}(y)
\right]\xrond_{\indxre }
\end{eqnarray*}
then
\begin{eqnarray*}
\left[P_{\xrond\xrond}^{-1}\right]_{\indxrc \indxrd}&=& \frac{1}{b}[g^{-1}]_{\indxrc \indxrd}
\\
\sum_{\indxrc}\left[P_{\xrond\xrond}^{-1}\right]_{\indxrc \indxrd}
P_{\indxrc \indyi} 
&=&\frac{1}{b}\sum_{\indxrc}[g^{-1}]_{\indxrc \indxrd}
\left[-cg_{\indxrc  \indyi}(y)
+b \sum_{\indxre,\indxrh}g_{\indxrc  {\indxrh}}(y)\mathfrak{C} _{{\indxre }\indyi}^{\indxrh }(y) \xrond_{\indxre }
\right]
\\
&=&-\frac{c}{b} \delta _{\indxrd\indyi}+
\sum_{\indxre}\mathfrak{C} _{{\indxre }\indyi}^{\indxrd }(y) \xrond_{\indxre }
\\
\frac{\partial P_{ \indxra \indxrd} }{\partial y _\indyj  }
&=&b \frac{\partial g_{ \indxra \indxrd} }{\partial y_\indyj  }(y)
\\
\frac{\partial P_{ \indxra \indyj} }{\partial \xrond  _\indxrd }
&=&
b \sum_{\indxrh}g_{\indxra  {\indxrh}}(y)\mathfrak{C} _{{\indxrd }\indyj}^{\indxrh }(y)
\\
\frac{\partial P_{ \indxra \indxrd} }{\partial y _\indyj  }
-\frac{\partial P_{ \indxra \indyj} }{\partial \xrond  _\indxrd }
&=&b \frac{\partial g_{ \indxra \indxrd} }{\partial y_\indyj  }(y)
-
b \sum_{\indxrh}g_{\indxra  {\indxrh}}(y)\mathfrak{C} _{{\indxrd }\indyj}^{\indxrh }(y)
\\
\sum_{\indxrd,\indxrc}\left[P_{\xrond\xrond}^{-1}\right]_{\indxrc \indxrd}
\left[
\frac{\partial P_{\indxra\indxrd }}{\partial y _\indyj}
-
\frac{\partial P_{ \indxra \indyj} }{\partial \xrond  _\indxrd  }
\right]P_{\indxrc \indyi} 
&=&
\sum_{\indxrd}
\left[-c \delta _{\indxrd\indyi}
+b\sum_{\indxre}\mathfrak{C} _{{\indxre }\indyi}^{\indxrd }(y) \xrond_{\indxre }
\right]
\left[
\frac{\partial g_{ \indxra \indxrd} }{\partial y_\indyj  }(y)
-
\sum_{\indxrh}g_{\indxra  {\indxrh}}(y)\mathfrak{C} _{{\indxrd }\indyj}^{\indxrh }(y)
\right]
\\
&=&
-c \frac{\partial g_{ \indxra \indyi} }{\partial y_\indyj  }(y)
+ c \sum_{\indxrh}g_{\indxra  {\indxrh}}(y)\mathfrak{C} _{{\indyi }\indyj}^{\indxrh }(y)
\\&&
+ b\sum_{\indxrd,\indxre}\frac{\partial g_{ \indxra \indxrd} }{\partial y_\indyj  }(y)
\mathfrak{C} _{{\indxre }\indyi}^{\indxrd }(y) \xrond_{\indxre }
-b \sum_{\indxrd,\indxre,\indxrh}g_{\indxra  {\indxrh}}(y)\mathfrak{C} _{{\indxrd }\indyj}^{\indxrh }(y)\mathfrak{C} _{{\indxre }\indyi}^{\indxrd }(y) \xrond_{\indxre }
\end{eqnarray*}
\\[1em]$\displaystyle 
\sum_{\indxrd,\indxrc}\left[P_{\xrond\xrond}^{-1}\right]_{\indxrc \indxrd}
\left(
\left[
\frac{\partial P_{\indxra\indxrd }}{\partial y _\indyj}
-
\frac{\partial P_{ \indxra \indyj} }{\partial \xrond  _\indxrd  }
\right]P_{\indxrc \indyi} 
-
\left[
\frac{\partial P_{\indxra\indxrd }}{\partial y _\indyi}
-
\frac{\partial P_{ \indxra \indyi} }{\partial \xrond  _\indxrd  }
\right]P_{\indxrc \indyj} 
\right)
$\null \hfill \\\null \quad  $\displaystyle 
=\; -c \left[\frac{\partial g_{ \indxra \indyi} }{\partial y_\indyj  }(y)
-
\frac{\partial g_{ \indxra \indyj} }{\partial y_\indyi }(y)
\right]
+
c \sum_{\indxrh}g_{\indxra  {\indxrh}}(y)\left[\mathfrak{C} _{{\indyi }\indyj}^{\indxrh }(y)
-
\mathfrak{C} _{{\indyj }\indyi}^{\indxrh }(y)\right]
$\null \hfill \\\null \hfill $\displaystyle 
+
b\sum_{\indxrd,\indxre}
\left[
\frac{\partial g_{ \indxra \indxrd} }{\partial y_\indyj  }(y)
\mathfrak{C} _{{\indxre }\indyi}^{\indxrd }(y) -
\frac{\partial g_{ \indxra \indxrd} }{\partial y_\indyi  }(y)
\mathfrak{C} _{{\indxre }\indyj}^{\indxrd }(y) 
\right]\xrond_{\indxre }
-b \sum_{\indxrd,\indxre,\indxrh}g_{\indxra  {\indxrh}}(y)
\left[
\mathfrak{C} _{{\indxrd }\indyj}^{\indxrh }(y)\mathfrak{C} _{{\indxre }\indyi}^{\indxrd }(y) 
-
\mathfrak{C} _{{\indxrd }\indyi}^{\indxrh }(y)\mathfrak{C} _{{\indxre }\indyj}^{\indxrd }(y) 
\right]\xrond_{\indxre }
$\\[1em]\null \quad  $\displaystyle 
=\; -c \left[\frac{\partial g_{ \indxra \indyi} }{\partial y_\indyj  }(y)
-
\frac{\partial g_{ \indxra \indyj} }{\partial y_\indyi }(y)
\right]
$\null \hfill \\\null \hfill $\displaystyle 
+
b\sum_{\indxrd,\indxre}
\left[
\frac{\partial g_{ \indxra \indxrd} }{\partial y_\indyj  }(y)
\mathfrak{C} _{{\indxre }\indyi}^{\indxrd }(y) -
\frac{\partial g_{ \indxra \indxrd} }{\partial y_\indyi  }(y)
\mathfrak{C} _{{\indxre }\indyj}^{\indxrd }(y) 
\right]\xrond_{\indxre }
-b \sum_{\indxrd,\indxre,\indxrh}g_{\indxra  {\indxrh}}(y)
\left[
\mathfrak{C} _{{\indxrd }\indyj}^{\indxrh }(y)\mathfrak{C} _{{\indxre }\indyi}^{\indxrd }(y) 
-
\mathfrak{C} _{{\indxrd }\indyi}^{\indxrh }(y)\mathfrak{C} _{{\indxre }\indyj}^{\indxrd }(y) 
\right]\xrond_{\indxre }
$\\[1em]
with using $\mathfrak{C} _{{\indyi }\indyj}^{\indxrh }=\mathfrak{C} _{{\indyj }\indyi}^{\indxrh }$.
And finally
$$
\frac{\partial P_{\indxra\indxrb}}{\partial \xrond_\indxrd}\;=\; 0
\  .
$$
Hence (\ref{labequ14})  holds if we have
$$
\sum_{\indxre,\indxrh}g_{\indxra  {\indxrh}}(y)
\left(
\left[\frac{\partial \mathfrak{C} _{{\indxre }\indyi}^{\indxrh }}{\partial y_\indyj}(y) 
-
\frac{\partial \mathfrak{C} _{{\indxre }j}^{\indxrh }}{\partial y_\indyi}(y)
\right]
+
\sum_{\indxrd}
\left[
\mathfrak{C} _{{\indxrd }\indyj}^{\indxrh }(y)\mathfrak{C} _{{\indxre }\indyi}^{\indxrd }(y) 
-
\mathfrak{C} _{{\indxrd }\indyi}^{\indxrh }(y)\mathfrak{C} _{{\indxre }\indyj}^{\indxrd }(y) 
\right]
\right)\xrond_{\indxre }
\;=\; 
0
$$
Note that the  Riemann curvature tensor $\mathfrak{R} $ is
(see \cite[Proposition 7.4]{Lee-RM}
$$
\mathfrak{R} _{\indyi\indyj\indxre}^\indxrh
\;=\; 
\left[\frac{\partial \mathfrak{C} _{{\indxre }\indyi}^{\indxrh }}{\partial y_\indyj}(y) 
-
\frac{\partial \mathfrak{C} _{{\indxre }j}^{\indxrh }}{\partial y_\indyi}(y)
\right]
-
\sum_{\indxrd}
\left[
\mathfrak{C} _{{\indxrd }\indyj}^{\indxrh }(y)\mathfrak{C} _{{\indxre }\indyi}^{\indxrd }(y) 
-
\mathfrak{C} _{{\indxrd }\indyi}^{\indxrh }(y)\mathfrak{C} _{{\indxre }\indyj}^{\indxrd }(y) 
\right]
$$
Since $g$ is invertible and $\xrond_\indxre$ is arbitrary, the condition is that the Riemann curvature tensor is 
zero and therefore
(see \cite[Theorem 7.10]{Lee-RM}) the configuration space is flat, i.e.
every point has a neighborhood that is isometric 
to an open set in $\RR^p$ with its Euclidean metric.
%
% END Detailed computations
%
%%%%%%%%%%%%%%

%%%%%%%%%%%%
%
% BEGIN Dynamic extension
%
\sousection{If $\bfXi$ is connected and 1-dimensional it is $\bfSS^1$}
\label{complement49}
Let the $\bfx$-manifold be $\bfRR^2$ equipped with globally defined coordinates $(x_1,x_2)$. Let $\bfRR$ be 
the $\bfy$-manifold equipped with a globally defined coordinate $y$. Let $\Ouv$ be the open set 
$\bfRR^2\setminus\{0\}$.
The output function $\bfh:\Ouv\to \RR$ we consider is, when expressed in these coordinates,
$$
y\;=\; h(x) := x_1^2 + x_2^2
\  .
$$

To possibly satisfy the assumptions of Theorem~\ref{prop17}, we select $\bfXi$ as a connected 1-dimensional manifold.
Then, we look for a function $\bfh^\ortho: \RR^2\setminus\{0\} \to \bfXi$,
such that the rank of $\bfhhperp=(\bfh,\bfh^\ortho)$ is $2$ on $\RR^2\setminus\{0\}$.

Specifically, let $\coordxi$ be a coordinate chart around
$\bfh^\ortho (1,0)$ in $\bfXi$. We want
\begin{eqnarray*}
\renewcommand{\arraystretch}{1.5}
\textsf{Rank}\left(\begin{array}{@{\,  }c@{\,  }}
\frac{\partial h}{\partial x}(x)
\\
\frac{\partial h^\ortho}{\partial x}(x)
\end{array}\right)
&=&
\renewcommand{\arraystretch}{1.5}
\textsf{Rank}\left(\begin{array}{@{\,  }cc@{\,  }}
x_1 & x_2
\\
\frac{\partial h^\ortho}{\partial x_1}(x_1,x_2)
&
\frac{\partial h^\ortho}{\partial x_2}(x_1,x_2)
\end{array}\right)\;=\; 2
\qquad \forall (x_1,x_2)\in [h^\ortho]^{-1}(\coordxid)
\end{eqnarray*}
This is possible only if we have
$$
x_1\frac{\partial h^\ortho}{\partial x_2}(x_1,x_2)
-
x_2\frac{\partial h^\ortho}{\partial x_1}(x_1,x_2) 
\neq 0
\qquad \forall (x_1,x_2)\in [h^\ortho]^{-1}(\coordxid)\  .
$$
Now note that
\\[1em]$\displaystyle 
\frac{d}{dt}\left\{h^\ortho(\cos(t),\sin(t))\right\}
\;=\; \cos(t)\frac{\partial h^\ortho}{\partial x_2}(\cos(t),\sin(t))
-
\sin(t)\frac{\partial h^\ortho}{\partial x_1}(\cos(t),\sin(t)) 
$\hfill \null \\\null \hfill $\displaystyle 
\qquad \forall t:\,  h^\ortho(\cos(t),\sin(t)) \in \coordxim(\coordxid)
$\\[1em]
But the right hand side having a constant sign, the integral of the left hand side on $(0,2\pi )$
cannot be zero as would imply the periodicity of the function 
$t\mapsto h^\ortho(\cos(t),\sin(t)) $. We conclude that the closed interval $(0,2\pi )$
cannot be contained in $h^\ortho(\cos(t),\sin(t)) \in \coordxim(\coordxid)$. This implies that $\bfXi$ cannot 
be covered by a single coordinate domain. So it cannot be diffeomorphic to an interval of the form $(0,1)$, 
$(0,1]$ or $[0,1]$.
It is therefore diffeomorphic to $\bfSS^1$
(see \cite[Appendix]{Milnor}).
%
% END Detailed computations
%
%%%%%%%%%%%%%%

%%%%%%%%%%%%
%
% BEGIN Dynamic extension
%
\sousection{Proof of (\ref{LP78})}
\label{complement41}
With introducing indices to make more explicit the matrix products, we have
$$
\left(\frac{\partial }{\partial x}\left(
\vrule height 0.5em depth 0.5em width 0pt
v^\top P(x) v \right) \grad_P h   (x)
\;+\; 2
v^\top P(x)\left[ \frac{\partial \grad_P h  }{\partial x}(x) v\right]
\right)
\; =\;  
\sum_{i,j}
v_iv_j
\left[\sum_k \frac{\partial P_{ij}}{\partial x_k}\grad_P h   _k
\;+\; 2
\sum_k
P_{ik}
\frac{\partial \grad_P h   _k}{\partial x_j}
\right]
$$
But, with the expression of $\bfgrad_\bfP \bfh   $, we have
$$
\grad_P h   _k=\sum _l [P^{-1}]_{kl}\frac{\partial h}{\partial x_l}
\quad ,\qquad 
\sum_k
P_{ik} \grad_P h   _k\;=\; \frac{\partial h}{\partial x_i}
$$
We have also
$$
\frac{\partial [P^{-1}]_{kl}}{\partial x _j}
\;=\; -\sum_{m,n} [P^{-1}]_{km}
\frac{\partial P_{mn}}{\partial x_j} [P^{-1}]_{nl}
$$
This implies
$$
\frac{\partial \grad_P h  _k}{\partial x_j}
\;=\; \sum_l
\left[[P^{-1}]_{kl}\frac{\partial ^2h}{\partial x_l\partial x_j}
-\sum_{m,n} [P^{-1}]_{km}
\frac{\partial P_{mn}}{\partial x_j} [P^{-1}]_{nl}
\frac{\partial h}{\partial x_l}
\right]
$$
and
\begin{eqnarray*}
\sum_k
P_{ik}\frac{\partial \grad_P h   _k}{\partial x_j}
&=&
\sum_k
P_{ik}
\sum_l
\left[[P^{-1}]_{kl}\frac{\partial ^2h}{\partial x_l\partial x_j}
-\sum_{m,n} [P^{-1}]_{km}
\frac{\partial P_{mn}}{\partial x_j} [P^{-1}]_{nl}
\frac{\partial h}{\partial x_l}
\right]
\\
&=& 
\frac{\partial ^2h}{\partial x_i\partial x_j}-
\sum_l\sum_n
\frac{\partial P_{in}}{\partial x_j} [P^{-1}]_{nl}
\frac{\partial h}{\partial x_l}
\end{eqnarray*}
All this gives
\\[1em]$\displaystyle 
\left(\frac{\partial }{\partial x}\left(
\vrule height 0.5em depth 0.5em width 0pt
v^\top P(x) v \right) \grad_P h   (x)
\;+\; 2
v^\top P(x)\left[ \frac{\partial \grad_P h  }{\partial x}(x) v\right]
\right)\; =
$\hfill \null \\\null \hfill $\displaystyle  
\sum_{i,j}
v_iv_j
\left(2\frac{\partial ^2h}{\partial x_i\partial x_j}
\;+\; 
\sum_l\left[\sum_k \frac{\partial P_{ij}}{\partial x_k}
[P^{-1}]_{kl}
\;-\; 2
\sum_n
\frac{\partial P_{in}}{\partial x_j} [P^{-1}]_{nl}
\right]\frac{\partial h}{\partial x_l}\right)
$\\[1em]
But
\begin{eqnarray*}
-\frac{1}{2}\sum_k \frac{\partial P_{ij}}{\partial x_k}
[P^{-1}]_{kl}
\;+\; 
\sum_n
\frac{\partial P_{in}}{\partial x_j} [P^{-1}]_{nl}
&=&
\sum_k [P^{-1}]_{kl}
\left(-\frac{1}{2}
\frac{\partial P_{ij}}{\partial x_k}
\;+\; 
\frac{\partial P_{ik}}{\partial x_j} [P^{-1}]_{kl}
\right)
\\
&=&
\frac{1}{2} \sum_k
[P^{-1}]_{lk} \displaystyle
\left(
\frac{\partial P_{ki}}{\partial x_j}(x) 
+
\frac{\partial P_{kj}}{\partial x_i}(x)
-
\frac{\partial P_{ij}}{\partial x_k}(x)
\right)
\\
&=&\Gamma^l_{ij}
\end{eqnarray*}
is a Christoffel symbol. So we have finally
\begin{eqnarray*}
\left(\frac{\partial }{\partial x}\left(
\vrule height 0.5em depth 0.5em width 0pt
v^\top P(x) v \right) \grad_P h   (x)
\;+\; 2
v^\top P(x)\left[ \frac{\partial \grad_P h  }{\partial x}(x) v\right]
\right)
&=& 2
\sum_{i,j}
v_iv_j
\left(\frac{\partial ^2h}{\partial x_i\partial 
x_j}-\Gamma^l_{ij}\frac{\partial h}{\partial x_l}\right)
\\
&=& 2\sum_{i,j} v_iv_j \Hess_P h_{ij}
\  .
\end{eqnarray*}

%
% END Detailed computations
%
%%%%%%%%%%%%%%
% \stopcomp
%
% \stoparchive

%%%%%%%%%%%%%%%%%%%%%%%%%%%%%%
%                                                               %
%   COMPLEMENTS  FOR EXTENSIONS   %
%                                                               %
%%%%%%%%%%%%%%%%%%%%%%%%%%%%%%
% % \startarchivecomp
% \startarchive
% \newpage
% \appendix
% \refstepcounter{section}
% %
% \noindent
% \textbf{\Large Further Supplementary Material}\hfill \null 
% \addcontentsline{toc}{section}{
% \textcolor{my-violet}{Further Supplementary Material}}
% \par\vspace{3em}
% %\Alph{section}
% \setcounter{section}{4}
% \setcounter{subsection}{0}%

%%%%%%%%%%%%
%
% BEGIN Detailed computations
%
\sousection{Input-dependent systems}
\label{complement29}
Let $\bfRR^n$ be equipped with a single coordinate chart with coordinates $\coordxp$ and
coordinate map $\coordxm:\bfRR^n\to \RR^n$, an 
homeomorphism, satisfying
$$
\coordxp\;=\; \coordxm(\bfx)\quad \forall \bfx\in\bfRR^n
$$
See \cite[p. 40]{Lee.13}.
Let $\Ouv$ be an open subset of $\RR ^n$.

Either because the given system is input-dependent or because we immerse our given system into an 
input-dependent one, %as in Example \ref{u-ex6},
we consider the case where everything depends on a, possibly time varying, input vector $\entree  $. 
In what follows, we consider this input vector $\entree $ as well as its time derivative $\dot \entree$ as 
being given and fixed and we denote $\mathfrak{U}$ the set of values taken by the pair $(\entree,\dot\entree)$. The system is then
\begin{equation}
\label{u-eqcomp1}
\dot x \;=\;  f(x,\entree  )
\quad ,\qquad 
y\;=\; h(x)
\end{equation}
and we denote $X(x,t;\entree  )$ its solution. Let also 
the metric be input-dependent as $(x,\entree  ) \to P(x,\entree  )$. It gives rise to a continuous family of 
Riemannian spaces.

\begin{description}
\item[\it{Attached to $P$,}] we have the Christoffel symbols
$$
\Gamma_{ij}^l(x,\entree  )\!=\! \frac{1}{2}\sum_{k}
(P^{-1})_{kl}(x,\entree  )
\left[\frac{\partial P_{ik}(x,\entree  )}{\partial x_j}
+
\frac{\partial P_{jk}(x,\entree  )}{\partial x_i}
-
\frac{\partial P_{ij}(x,\entree  )}{\partial x_k}
\right]
\  .
$$
and the geodesic equation
\begin{equation}
\label{u-14}
2\,  \frac{d}{ds}\left(\sum_j P_{kj}(\gamma (s),\entree  )\frac{d\gamma _j}{ds}(s)\right)\;=\; 
\sum_{ij}\frac{d\gamma _{i}}{ds}(s)\frac{\partial P_{ij}}{\partial x_k}(\gamma (s),\entree  )\frac{d\gamma _j}{ds}(s)
\end{equation}
\item[\it{Attached to $P$ and $f$,}] we have the Lie derivative of $P$
$$
L_fP(x,\entree  )
\;=\; 
P(x,\entree  )\frac{\partial f}{\partial x}(x,\entree  )
\;+\; 
\frac{\partial f}{\partial x}(x,\entree  )^\top P(x,\entree  )
\; +\; \sum_k
\frac{\partial P}{\partial x_k}(x,\entree ) f_k(x,\entree  )
$$
\item[\it{Attached to $P$ and $h$,}] we have the gradient
$$
\grad_P h(x,\entree  )\;=\; P(x,\entree  )^{-1}\frac{\partial h}{\partial x}(x)^\top
$$
and the Hessian
$$
\Hess _P h (x,\entree  ) \;=\; \frac{1}{2}\,  \mathcal{L}_{\grad _f h} P(x,\entree  )
$$
or, with coordinates
$$
(\Hess _P h (x,\entree  ))_{ij}=\frac{\partial ^2h}{\partial x_i\partial x_j}(x)
-
\sum_{l}\Gamma _{ij}^l(x,\entree  )
\frac{\partial h}{\partial x_l}(x)
\  .
$$
\item[\quad \it{For each possible (fixed) $\entree  $, the length of a $C^1$ path}] $\gamma $ between points $x_1$ and $x_2$ 
is given by
$$
\left.\vrule height 1em depth 0.5em width 0pt
L(\gamma ,\entree  )\right|_{s_1}^{s_{b}}\;=\; \int_{s_1}^{s_2}
\sqrt{\sum_{ij}
\frac{d\gamma _i}{ds}(s) P(\gamma (s),\entree ) _{ij}\frac{d\gamma _j}{ds}(s)
} \,  ds,
$$
where
$
\gamma (s_1)\;=\; x_1$ and $\gamma (s_2)\;=\; x_2$.
\end{description}

There is nothing special occurring as long as $\entree  $ is 
constant. But if $\entree  $ is a time function which therefore varies as the flow generated by (\ref{u-eqcomp1}), then 
care must be taken.

Let $\gamma $ be a geodesic between $x_1$ and $x_2$ at time say $0$, i.e. obtained with the metric
$x\mapsto P( x,\entree  (0))$.
Let
$t\mapsto\Gamma (s,t)$ be a $C^1$ function satisfying
$$
\frac{\partial \Gamma }{\partial t}(s,t)\;=\; f(\Gamma (s,t),\entree  (t))
\  ,\quad  
\Gamma (s,0)\;=\; \gamma (s)
\  .
$$
At time $t$, $s\mapsto \Gamma(s,t)$ is a path between $X(x_1,t;\entree  )$ and $X(x_2,t;\entree  )$. 
Its length is
$$
\left.\vrule height 1em depth 0.5em width 0pt
L(\Gamma ,\entree  )\right|_{s_1}^{s_{b}}\;=\; \int_{s_1}^{s_2}
\sqrt{\sum_{ij}
\frac{\partial \Gamma _i}{\partial s}(s,t) P_{ij}(\Gamma (s,t),\entree (t)) \frac{\partial \Gamma _j}{\partial 
s}(s,t)
} \,  ds 
\  .
$$
Its derivative with respect to $t$ is
$$
\frac{d}{dt}\!\!\left(\left.\!\!\vrule height 1em depth 0.5em width 0pt
L(\Gamma ,\entree  )\right|_{s_1}^{s_{b}}\right)
=
\int_{s_1}^{s_2}
\frac{
\begin{array}{r}
\multicolumn{1}{l}{%
2\sum_{jk}\frac{\partial ^2\Gamma _k}{\partial s\partial t}(s,t) P_{kj}(\Gamma (s,t),\entree (t))\frac{\partial \Gamma _j}{\partial 
s}(s,t)
}
\\\quad \quad 
+\; \sum_{ijk}
\frac{\partial \Gamma _i}{\partial s}(s,t) 
\left[\frac{\partial P_{ij}}{\partial x_k}(\Gamma (s,t),\entree (t)) \frac{\partial \Gamma _k}{\partial 
t}(s,t)\right]
\frac{\partial \Gamma _j}{\partial s}(s,t)
\quad \quad 
\\\multicolumn{1}{r}{
+\; \sum_{ijl}
\frac{\partial \Gamma _i}{\partial s}(s,t) 
\left[\frac{\partial P_{ij}}{\partial \entree  _l}(\Gamma (s,t),\entree (t))\dot \entree  _l(t)\right]
\frac{\partial \Gamma _j}{\partial s}(s,t)
}
\end{array}
}{2
\sqrt{\sum_{ij}
\frac{\partial \Gamma _i}{\partial s}(s,t) P_{ij}(\Gamma (s,t),\entree (t))\frac{\partial \Gamma _j}{\partial 
s}(s,t)
}}\,  ds 
\  ,
$$
where we have
$$
\frac{\partial ^2\Gamma _k}{\partial s\partial t}(s,t) 
\;=\; \sum_i
\frac{\partial f_k}{\partial x_i}(\Gamma (s,t),\entree  (t))
\frac{\partial \Gamma _i}{\partial s}(s,t)
$$
So, at $t=0$, we have~:
$$
\left.\frac{d}{dt}\!\left(\left.\!\vrule height 1em depth 0.5em width 0pt
L(\Gamma ,\entree  )\right|_{s_1}^{s_{b}}\right)\right|_{t=0}
=\; \int_{s_1}^{s_2}
\frac{
\begin{array}{r}
\multicolumn{1}{l}{%
\sum_{ijk}
2\frac{d \gamma _i}{d s}(s)^\top
\frac{\partial f_k}{\partial x_i}(\gamma (s),\entree  )
P_{kj}(\gamma (s),\entree  )\frac{d \gamma _j}{ds}(s)
}
\\\quad \quad 
+\; \sum_{ijk}
\frac{d \gamma _i}{d s}(s) 
\left[\frac{\partial P_{ij}}{\partial x_k}(\gamma (s),\entree  ) f_k(\gamma (s),\entree  )\right]
\frac{d \gamma _j}{d s}(s)
\quad \quad 
\\\multicolumn{1}{r}{
+\; \sum_{ijl}
\frac{d \gamma _i}{d s}(s) 
\left[\frac{\partial P_{ij}}{\partial \entree  _l}(\gamma (s),\entree  )\dot \entree  _l \right]
\frac{d \gamma _j}{d s}(s)
}
\end{array}
}{2
\sqrt{
\frac{d \gamma _i}{d s}(s) P_{ij}(\gamma (s),\entree  )\frac{d \gamma _j}{d s}(s)
}}\,  ds 
\  ,
$$
But the geodesic equation above gives
\\[1em]$\displaystyle 
\sum_{ijk}
\frac{d \gamma _i}{d s}(s)^\top
\left(
2\frac{\partial f_k}{\partial x_i}(\gamma (s),\entree  )
P_{kj}(\gamma (s),\entree (t))
\;+\;  
\frac{\partial P_{ij}}{\partial x_k}(\gamma (s),\entree  ) f_k(\gamma (s),\entree  )
\right)\frac{d \gamma _j}{d s}(s)
$\hfill \null \\\null \hfill $\displaystyle 
\;=\; 
2\sum_{jk}\frac{d}{ds}\left(f_k(\gamma (s),\entree  )\right)
P_{kj}(\gamma (s),\entree (t))\frac{d \gamma _j}{ds}(s)
\; +\; 
2\sum_{jk}\frac{d}{ds}\left(P_{kj}(\gamma (s),\entree  )\frac{d\gamma _j}{ds}(s)\right)
 f_k(\gamma (s),\entree  )
$\\[1em]
So, if $\gamma $ is normalized, i.e. satisfies
$$
\sum_{ij}\frac{d \gamma _i}{d s}(s) P_{ij}(\gamma (s),\entree  )\frac{d \gamma _j}{d s}(s)\;=\; 1
\qquad \forall s
$$
we have finally
\\[1em]$\displaystyle 
\left.\frac{d}{dt}\left(\left.\vrule height 1em depth 0.5em width 0pt
L(\Gamma ,\entree  )\right|_{s_1}^{s_{b}}\right)\right|_{t=0}
\;=\; 
f(\gamma (s_2))^\top P (\gamma (s_2),\entree  ) \frac{d\gamma }{ds}(s_2)
-
f(\gamma (s_1))^\top P (\gamma (s_1),\entree  ) \frac{d\gamma }{ds}(s_1)
$\refstepcounter{equation}\label{u-10b}\hfill$(\theequation)$\\\null \hfill $\displaystyle
\;+\;
\frac{1}{2}\int_{s_1}^{s_2}
\frac{\partial }{\partial \entree  }\left(\frac{d \gamma }{d s}(s) ^\top P(\gamma (s),\entree )
\frac{d \gamma }{d s}(s)\right)\dot \entree  
\,  ds 
$\\[1em]
So, compared with the usual first variation formula, we have on the right hand side the extra term
$$
\frac{1}{2}\int_{s_1}^{s_2}
\frac{\partial }{\partial \entree  }\left(\frac{d \gamma }{d s}(s) ^\top P(\gamma (s),\entree )
\frac{d \gamma }{d s}(s)\right)\dot \entree  
\;=\; 
\frac{\partial }{\partial \entree  }\left(\left.\vrule height 1em depth 0.5em width 0pt
L(\gamma ,\entree  )\right|_{s_1}^{s_{b}}\right) \dot \entree  
\  .
$$
The above establishes also that lengths are contracted by the flow if we have~:
$$
L_fP (x,\entree  )+ \frac{\partial P}{\partial \entree  }(x,\entree  )\dot \entree  
\;=\; 
P(x,\entree  )\frac{\partial f}{\partial x}(x,\entree  )
\;+\; 
\frac{\partial f}{\partial x}(x,\entree  )^\top P(x,\entree  )
\; +\; \sum_k
\frac{\partial P}{\partial x_k}(x,\entree ) f_k(x,\entree  )
\; +\; 
\frac{\partial P}{\partial \entree  }(x,\entree )\dot \entree  
 < 0
$$
So again, compared with the usual condition, we have the extra term
$\frac{\partial P}{\partial \entree  }\dot \entree  $.
\par\vspace{1em}
For this framework, the definition of strong differential detectability becomes
\\[1em]
\textbf{Strong differential detectability:}
For each possible (fixed) $\entree$, there exist a continuous function $\rho :\Ouv\to [0,+\infty[ $,
and a strictly positive real number $q$
satisfying
\begin{equation}
\label{u-LP10}
\mathcal{L}_f P(x,\entree  )
\;+\; 
\frac{\partial P}{\partial \entree  }(x,\entree ) \dot \entree  
\; \leq  \; 
\rho (x,\entree  )\,  \frac{\partial h}{\partial x}(x)^\top \frac{\partial h}{\partial x}(x)
\;-\; \qlower\,  P(x,\entree  )
\qquad \forall x\in\Ouv\  ,\quad \forall (\entree,\dot \entree)\in \mathfrak{U}
\  .
\end{equation}

Consider the case where $\entree  $ is actually a function of $x$, i.e.
$$
\entree  \;=\; \varpi (x)
\  ,
$$
where $f$ and $\varpi $ are such that there exists $M$ satisfying
$$
\frac{\partial f}{\partial \entree  }(x,\varpi  (x) )\frac{\partial \varpi  }{\partial x}(x) \;=\; M(x)\,  \frac{\partial h}{\partial x}(x)
\  .
$$
The nominal Strong differential detectability property makes sense in this case. It is
\\[1em]\vbox{\noindent$\displaystyle 
\mathcal{L}_f P(x,\varpi  (x) )\;+\; 
\frac{\partial P}{\partial \entree  }(x,\varpi  (x)) \frac{\partial \varpi  }{\partial x}(x) f(x,\varpi  (x))
$\hfill \null \\[0.3em]\null \hfill $\displaystyle 
\;+\; 
P(x,\varpi  (x) )\frac{\partial f}{\partial \entree  }(x,\varpi  (x) )\frac{\partial \varpi  }{\partial x}(x) 
\;+\; 
\frac{\partial \varpi  }{\partial x}(x) ^\top\frac{\partial f}{\partial \entree  }(x,\varpi  (x) )^\top P(x,\varpi  (x) )
$\hfill \null \\[0.3em]\null \hfill $\displaystyle \; \leq  \; 
\rho (x)\,  \frac{\partial h}{\partial x}(x)^\top \frac{\partial h}{\partial x}(x)
\;-\; \qlower\,  P(x,\varpi  (x) )
$}\\[1em]
or
\\[1em]\vbox{\noindent$\displaystyle 
\mathcal{L}_f P(x,\varpi  (x) )\;+\; 
\frac{\partial P}{\partial \entree  }(x,\varpi  (x)) \frac{\partial \varpi  }{\partial x}(x) f(x,\varpi  (x))
$\hfill \null \\[0.3em]\null \hfill $\displaystyle 
\;+\; 
P(x,\varpi  (x) )\,  M(x)\,  \frac{\partial h}{\partial x}(x)
\;+\; 
\frac{\partial h}{\partial x}(x)^\top M(x)^\top P(x,\varpi  (x) )
$\hfill \null \\[0.3em]\null \hfill $\displaystyle \; \leq  \; 
\rho (x)\,  \frac{\partial h}{\partial x}(x)^\top \frac{\partial h}{\partial x}(x)
\;-\; \qlower\,  P(x,\varpi  (x) )
$}\\[1em]
Instead the new Strong differential detectability property (\ref{u-LP10}) is
$$ 
\mathcal{L}_f P(x,\varpi  (x) )
\;+\; 
\frac{\partial P}{\partial \entree  }(x,\varpi  (x)) \frac{\partial \varpi  }{\partial x}(x) f(x,\varpi  (x))
\; \leq  \; 
\rho _{new}(x,\varpi  (x))\,  \frac{\partial h }{\partial x}(x )^\top \frac{\partial h }{\partial x}(x)
\;-\; \qlower _{new}\,  P(x,\varpi  (x) )
$$
It is implied by the nominal one as can be seen by picking
$$
\qlower _{new}\;=\; \frac{\qlower}{2}
\quad ,\qquad 
\rho _{new} (x,h(x))\;=\; 
\rho (x) + \frac{1}{2\qlower}\sup_{v:|v|=1}\left|v^\top P(x,h(x) )^{1/2}M(x)\right|^2
$$

\begin{theorem}
\label{thm5}
Assume the function $h$ is $C^3$ and there exists a $\entree$-dependent $C^3$ Riemannian metric $P$ defined on $\RR^n$ such that
\begin{description}
\item[\normalfont A1$_u$~:]
For each possible (fixed) $\entree$, the metric $P$ is complete.
\item[\normalfont A2$_u$~:]
We have the strong differential detectability.
\item[\normalfont A3$_u$~:]
There exists a $C^3$ function
$
\wpunbf : (y_1,y_2)\in \RR^p\times\RR^p  \mapsto \wp(y_1,y_2)\in[0,+\infty[ 
$
satisfying
\begin{equation}
\label{u-LP65}
\wpunbf (y,y)= 0
\  ,\quad
\frac{\partial ^2\wpunbf}{\partial y_1\partial y_1}(y,y) > 0
\qquad \forall y\in h(\Ouv)
\end{equation}
and, for each possible (fixed) $\entree$ and any  geodesic $\gamma ^*$, taking values in $\Ouv$ and 
minimizing on the maximal interval $[s_1,s_2]$,
we have
\\[1em]$\displaystyle 
\frac{d}{ds}\left\{\wpunbf  (h(\gamma ^*(s)),h(\gamma ^*(s_3)))\right\}\; >\; 0
\qquad \forall s\in ]s_3,s_4],
$\refstepcounter{equation}\label{u-LP99}\hfill$(\theequation)$
\\\null\hfill$\displaystyle
\forall s_3,s_4 \in [s_1,s_2]:\  
s_3<s_4 \quad \& \quad 
h(\gamma ^*(s_3)) \neq h(\gamma ^*(s_4))
\  .
$
% % BEGIN MODIFIED
% \IfTwoCol{%
% % two col
% \\[0.7em]$\displaystyle 
% \frac{d}{ds}\left\{\wpunbf  (\bfh(\bfgamma ^*(s)),\bfh(\bfgamma ^*(s_3)))\right\}\; >\; 0
% $\refstepcounter{equation}\label{u-LP99}\hfill$(\theequation)$
% \\\null\hfill$\displaystyle
% \qquad \forall s\in (s_3,s_2),\; \  \forall s_3 \in (s_1,s_2)
% \  .
% $\\[-0.5em]
% }{%
% % one col
% \begin{equation}
% \label{u-LP99}
% \frac{d}{ds}\left\{\wpunbf  (\bfh(\bfgamma ^*(s)),\bfh(\bfgamma ^*(s_3)))\right\}\; >\; 0
% \qquad \forall s\in (s_3,s_2),\; \  \forall s_3 \in (s_1,s_2)
% \  .
% \end{equation}
% }%
% % END MODIFIED
\end{description}
Under these conditions,
for any strictly positive real number $E$ and any
closed subset $\mathcal{C}$ of $\Ouv$ with a non empty 
interior, for each possible (fixed) $\entree$ ,
there exists a continuous function 
$k _E^* :\Close\to \RR_{>0} $ such that
\begin{list}{}{%
\parskip 0pt plus 0pt minus 0pt%
\topsep 0.5ex plus 0pt minus 0pt%
\parsep 0pt plus 0pt minus 0pt%
\partopsep 0pt plus 0pt minus 0pt%
\itemsep 0.5ex plus 0pt minus 0pt
\settowidth{\labelwidth}{--}%
\setlength{\labelsep}{0.5em}%
\setlength{\leftmargin}{\labelwidth}%
\addtolength{\leftmargin}{\labelsep}%
}
\item[--]
for any continuous function 
$k _E:\Close\to \RR$
satisfying
$$
k _E(\hat x ,\entree )\geq k _E^*(\hat x ,\entree )
\qquad \forall \hat x \in \Close
\  ,
$$
\item[--]
for the observer given by
\begin{equation}
\label{u-eqn:GeodesicObserverVectorField}
\dot{\hat{x}}\;=\; F({\hat{x}},y,\entree  )\;=\; f({\hat{x}},\entree  )
- k_E({\hat{x}},\entree  ) P({\hat{x}},\entree  )^{-1}\frac{\partial h}{\partial x}({\hat{x}})^\top
\frac{\partial \wpunbf }{\partial y_1}(h({\hat{x}}),-y)
\  ,
\end{equation}
\item[--]
and for all $x$ and $\hat x $ in $\Close$ such that, for the given (fixed) $\entree$, there exists a minimizing normalized geodesic $\gamma ^*$ satisfying
$$
x=\gamma ^*(0)
\quad ,\qquad
\hat x =\gamma ^*(\hat s)
\quad ,\qquad 
\gamma ^*(s)\in\Close\quad \forall s\in [0,\hat s]
\  ,
$$
and satisfying
\begin{equation}
\label{u-eqn:Basin}
d(\hat x ,x) \; <\;   E\  ,
\end{equation}
\end{list}
we have\footnote{
$\Did d(\hat x ,x)$ is the upper right Dini derivative along the solution
}
\begin{equation}
\label{u-LP63}
\Did d(\hat x ,x)
\; \leq \;  \displaystyle -
\frac{\qlower}{4} 
\,  d(\hat x ,x)\end{equation} 
where $d$ denotes the Riemannian distance induced by $P$ for the given (fixed) $\entree$.
\end{theorem}

\begin{IEEEproof}
Since we have
$$
\hat x\;=\; x
\qquad \Rightarrow\qquad 
F(\hat x,y,\entree )\;=\; f(\hat x,\entree )\;=\; f(x,\entree ),
$$
the result already holds when $d(x,\hat x)$ is zero. 
Therefore, the remainder of the proof 
only considers pairs $x$ and $\hat x$ of distinct points in $\Close$.

Let $x$ and $\hat x$ be distinct points in $\Close$ such that there exists a minimizing normalized geodesic $\gamma ^*$ 
satisfying
$$
x=\gamma ^*(0)
\quad ,\qquad
\hat x =\gamma ^*(\hat s)
\quad ,\qquad 
\gamma ^*(s)\in\Close\quad \forall s\in [0,\hat s]
\  .
$$
We have
\begin{equation}
\label{u-eqn:Rdistance}
d(\hat x,x)\;=\; \left.
\vrule height 1em depth 0.5em width 0pt
L(\gamma^*)\right|_{0}^{\hat s}\;=\;
\hat s
\  .
\end{equation}
The  $\entree$-dependent first order variation formula (\ref{u-10b}) gives~:
\begin{eqnarray}
\nonumber
\Did d(\hat x,x) &\leq &
\left.\frac{d}{dt} \left.
\vrule height 1em depth 0.5em width 0pt
L(\Gamma (\mbox{\LARGE .}\, ,t))\right|_{0}^{\hat s}\right|_{t=0}
\\ \label{u-LP7}
&\leq &
\frac{d\gamma ^*}{ds}(\hat s)^\top P(\gamma ^*(\hat s),\entree )  \,  F(\gamma ^*(\hat s),y,\entree )
- \frac{d\gamma ^*}{ds}(0)^\top P(\gamma ^*(0),\entree )\,  f(\gamma ^*(0),\entree )
\\\nonumber
&&\qquad \qquad \qquad \qquad \qquad \qquad \;+\;
\frac{1}{2}\int_{s_1}^{s_2}
\frac{\partial }{\partial \entree  }\left(\frac{d \gamma }{d s}(s) ^\top P(\gamma (s),\entree )
\frac{d \gamma }{d s}(s)\right)\dot \entree  
\,  ds
\  ,
\end{eqnarray}
and the observer and system dynamics give~:
\\[1em]\vbox{\noindent$\displaystyle 
\frac{d\gamma ^*}{ds}(\hat s)^\top P(\gamma ^*(\hat s),\entree )  
\left[F(\gamma ^*(\hat s),y,\entree ) - f(\gamma ^*(\hat s),\entree )\right]\;-\;  \frac{d\gamma ^*}{ds}(0)^\top P(\gamma ^*(0),\entree )
\left[F(\gamma ^*(0),y,\entree )-f(\gamma ^*(0),\entree )\right]
$\hfill\null\\\null\hfill$\displaystyle
=\;
-k_E(\gamma ^*(\hat s))\, 
\frac{d\,  h\circ \gamma ^*}{ds}(\hat s)^{\top}
\frac{\partial \wpunbf  }{\partial y_1}(h(\gamma ^*(\hat s)),y)
\  .
$\refstepcounter{equation}\label{u-LP35}\qquad $(\theequation)$}\\[1em]
On the other hand, we have
\\[1em]$\displaystyle 
\frac{d\gamma ^*}{ds}(\hat s)^\top P(\gamma ^*(\hat s),\entree )  \,  f(\gamma ^*(\hat s),\entree )
- \frac{d\gamma ^*}{ds}(0)^\top P(\gamma ^*(0),\entree )\,  f(\gamma ^*(0),\entree )
$\refstepcounter{equation}\label{u-Aux1}\hfill$(\theequation)$
\\\null\hfill$\displaystyle
\;=\;  
\int_0^{\hat s}\frac{d}{ds} \left(\frac{d\gamma ^*}{ds}(s)^\top P(\gamma ^*(s),\entree )  
\,  f(\gamma ^*(s))\right) ds
\  .
$\\[1em]
With the Euler-Lagrange form of the geodesic equation (\ref{u-14}),
the definition of the Lie 
derivative $\mathcal{L}_fP$ and (\ref{u-LP10}),
we get
\\[1em]$\displaystyle 
\frac{d}{ds} \left(\frac{d\gamma ^*}{ds}(s)^\top P(\gamma ^*(s),\entree )  
\,  f(\gamma ^*(s),\entree )\right)
$\hfill\null\\$\displaystyle
\renewcommand{\arraystretch}{1.5}
\begin{array}[b]{@{}c@{\; }l@{}}
=& \displaystyle 
\frac{1}{2}\,  
\frac{d\gamma ^*}{ds}(s)^\top \mathcal{L}_fP(\gamma ^*(s),\entree )  
\frac{d\gamma ^*}{ds}(s)
\  ,
\\
\leq &\displaystyle 
\frac{\rho (\gamma ^*(s))}{2}
\left|
\frac{\partial h}{\partial x}(\gamma ^*(s)) 
\frac{d \gamma ^*}{ds}(s)\right|^2
\;-\; 
\frac{\qlower}{2}
\, 
\frac{d\gamma ^*}{ds}(s)^\top  P(\gamma ^*(s),\entree ) 
\frac{d\gamma ^*}{ds}(s)
\;-\;
\frac{1}{2}\frac{\partial }{\partial \entree  }
\left(\frac{d\gamma ^*}{ds}(s)^\top
P(\gamma ^*(s),\entree ) \frac{d\gamma ^*}{ds}(s)\right) \dot \entree
\  ,
\\
\leq& \displaystyle 
\frac{\rho (\gamma ^*(s)),\entree }{2}
\left|\frac{d\,  h \circ \gamma ^*}{ds}(s)\right|^2
\;-\;
\frac{\qlower}{2}
\;-\;
\frac{1}{2}\frac{\partial }{\partial \entree  }
\left(\frac{d\gamma ^*}{ds}(s)^\top
P(\gamma ^*(s),\entree ) \frac{d\gamma ^*}{ds}(s)\right) \dot \entree
\  ,
\end{array}
$\refstepcounter{equation}\label{u-Aux2}\hfill$(\theequation)$\\[1em]
where, in the last inequality, we have used
$$
\frac{d\gamma ^*}{ds}(s)^\top  P(\gamma ^*(s),\entree ) 
\frac{d\gamma ^*}{ds}(s)=1
$$
since $\gamma^*$ is normalized.
With
$
d(\hat x,x)\;=\; 
\hat s
$
as given in \eqref{u-eqn:Rdistance},
replacing \eqref{u-Aux2} into \eqref{u-Aux1}
yields
\\[0.7em]$\displaystyle 
\label{u-Aux01}
\frac{d\gamma ^*}{ds}(\hat s)^\top P(\gamma ^*(\hat s),\entree )  \,  f(\gamma ^*(\hat s),\entree )
- \frac{d\gamma ^*}{ds}(0)^\top P(\gamma ^*(0),\entree )\,  f(\gamma ^*(0),\entree )
$\refstepcounter{equation}\label{u-Aux02}\hfill$(\theequation)$
\\[0.7em]\null\hfill  $\displaystyle 
\leq \; 
\int_0^{\hat s}
\frac{\rho (\gamma ^*(s),\entree  )}{2}
\left|\frac{d\,  h \circ \gamma ^*}{ds}(s)\right|^2 ds
\;-\;
\frac{\qlower}{2}\,  d({\hat{x}},x)
\;-\; \frac{1}{2}\int_0^{\hat s}\frac{1}{2}\frac{\partial }{\partial \entree  }
\left(\frac{d\gamma ^*}{ds}(s)^\top
P(\gamma ^*(s),\entree ) \frac{d\gamma ^*}{ds}(s)\right) \dot \entree \,  ds
\,  .
$\\[0.7em]
Then, from (\ref{u-LP7}),
using (\ref{u-LP35}) and \eqref{u-Aux01}, we obtain
\\[1em]$\displaystyle 
\Did d(\hat x,x)
$\hfill\null\\\null\qquad $\displaystyle
\leq \; 
\left[\frac{d\gamma ^*}{ds}(\hat s)^\top P(\gamma ^*(\hat s),\entree )  
\left(F(\gamma ^*(\hat s),y) - f(\gamma ^*(\hat s),\entree )\right)
- \frac{d\gamma ^*}{ds}(0)^\top P(\gamma ^*(0),\entree )
\left(F(\gamma ^*(0),y,\entree )-f(\gamma ^*(0),\entree )\right)
\right]
$\hfill\null\\\null\hfill$\displaystyle
+ \left[\frac{d\gamma ^*}{ds}(\hat s)^\top P(\gamma ^*(\hat s),\entree )  
f(\gamma ^*(\hat s),\entree )
- \frac{d\gamma ^*}{ds}(0)^\top P(\gamma ^*(0),\entree )
f(\gamma ^*(0),\entree )\right]
$\hfill\null\\\null\hfill$\displaystyle 
\;+\; 
\frac{1}{2}\int_0^{\hat s}\frac{1}{2}\frac{\partial }{\partial \entree  }
\left(\frac{d\gamma ^*}{ds}(s)^\top
P(\gamma ^*(s),\entree ) \frac{d\gamma ^*}{ds}(s)\right) \dot \entree 
\  ,
$\\[0.7em]\null\qquad $\displaystyle
\leq \; \displaystyle 
-\;  k_E(\hat x)\,  
\frac{d\,  h\circ \gamma ^*}{ds}(\hat s)^{\top}
\frac{\partial \wpunbf  }{\partial y_1}(h(\gamma ^*(\hat s)),y)^\top
+ \int_0^{\hat s}
\frac{\rho (\gamma ^*(s),\entree )}{2}
\left|\frac{d\,  h \circ \gamma ^*}{ds}(s)\right|^2
ds
\;-\; 
\frac{\qlower}{2}\,  d(\hat x,x)
\  .
$\refstepcounter{equation}\label{u-LP59}\hfill$(\theequation)$\\[0.5em]

To proceed it is appropriate to associate two functions $a_{(\hat x,x,\gamma ^*)}$ and
$b_{(\hat x,x,\gamma ^*,\entree )}$ 
to any triple $(\hat x,x,\gamma ^*)$ with
$\hat x$ and $x$  distinct points in $\Close$
and
$\gamma ^*$ a minimizing normalized geodesic
between $x=\gamma ^*(0)$ and $\hat x=\gamma ^*(\hat s)$
satisfying
$
\gamma ^*(s) \in \Close $ for all $s \in [0,\hat s]$. These 
functions are defined on $[0,\hat s]$ as follows:
\begin{eqnarray*}
a_{(\hat x,x,\gamma ^*)}(r)&=&\frac{1}{r}\,  
\frac{d\,  h\circ \gamma ^*}{ds}(r)^{\top}
\frac{\partial \wpunbf }{\partial y_1}(h(\gamma ^*(r)),h(\gamma ^*(0)))^\top
\end{eqnarray*}
if  $0 < r \leq \hat s$, and
\begin{eqnarray*}
a_{(\hat x,x,\gamma ^*)}(0) &=& \frac{d\,  h\circ \gamma ^*}{ds}(0)^\top
\frac{\partial ^2 \wpunbf }{\partial y_1\partial y_1}(h(\gamma ^*(0)),h(\gamma ^*(0)))\frac{d\,  h\circ \gamma ^*}{ds}(0)
\  ;
\end{eqnarray*}
and
\begin{eqnarray*}
b_{(\hat x,x,\gamma ^*),\entree }(r)&=&
\frac{1}{r}\,  
\int_0^{r}
\frac{\rho (\gamma ^*(s),\entree )}{2}
\left|\frac{d\,  h \circ \gamma ^*}{ds}(s)\right|^2
ds
\end{eqnarray*}
if  $0 < r \leq \hat s$, and
\begin{eqnarray*}
b_{(\hat x,x,\gamma ^*,\entree }(0)&=&
\frac{\rho (\gamma ^*(0),\entree )}{2}
\left|\frac{d\,  h \circ \gamma ^*}{ds}(0)\right|^2.
\end{eqnarray*}
We remark with (\ref{u-LP65}) that $\wpunbf $ reaches its global minimum 
at $y_1=y_2=h(x)$. This implies
\begin{equation}
\label{u-extra6}
\frac{\partial \wpunbf }{\partial y_1}(h(\gamma ^*(r)),h(\gamma ^*(0)))
\;=\; 
\left[\int_0^1
\left(\frac{\partial ^2\wpunbf }{\partial y_1^2}(h(\gamma ^*(\sigma r)),h(\gamma ^*(0)))
\frac{d\,  h \circ \gamma ^*}{ds}(\sigma r)\right)
d\sigma \right] r
\qquad 
\forall r\in [0,\hat s]
\  .
\end{equation}
As a consequence, the functions $a_{(\hat x,x,\gamma ^*)}$ and $b_{(\hat x,x,\gamma ^*,\entree )}$ are continuous on
$[0,\hat s]$.

To investigate further the properties of these functions, we
distinguish $2$ cases~:
\begin{list}{}{%
\parskip 0pt plus 0pt minus 0pt%
\topsep 1ex plus 0pt minus 0pt%
\parsep 0pt plus 0pt minus 0pt%
\partopsep 0pt plus 0pt minus 0pt%
\itemsep 1ex plus 0pt minus 0pt
\settowidth{\labelwidth}{$\  h(x)\neq  h(\hat x)$}%
\setlength{\labelsep}{0.5em}%
\setlength{\leftmargin}{\labelwidth}%
\addtolength{\leftmargin}{\labelsep}%
}
\item[$h(x)\neq  h(\hat x).$]
Condition A3$_u$ gives the implication
$$
h(x)\;\neq \; h(\hat x)\qquad \Longrightarrow\qquad 
a_{(\hat x,x,\gamma ^*)}(r)\; >\; 0\quad \forall r\in (0,\hat  s]
\  .
$$
\item[$h(x)=  h(\hat x).$]
In this case, there are only the following two 
possibilities:
\begin{enumerate}
\item
$h\circ \gamma ^*$ is constant on $[0,\hat s]$. Then we have
$$
\frac{d h\circ \gamma ^*}{ds}(s)\;=\; 0\qquad \forall s\in[0,\hat s]
$$
and therefore
$$
a_{(\hat x,x,\gamma ^*)}(r)\;=\; b_{(\hat x,x,\gamma ^*,\entree )} (r)\;=\; 0\qquad \forall r\in[0,\hat s]
\  .
$$
\item
$h\circ \gamma ^*$ is not constant on $[0,\hat s]$. Then, there exists 
some $s_1$ in $]0, \hat s]$ such that
$$
h(\gamma (s_1))\; \neq \; h(\gamma ^*(0))=h(x)
\  .
$$
With Condition A3$_u$, this implies that the function $s\mapsto \wpunbf  (h(\gamma ^*(s)),h(\gamma ^*(0)))$ 
is not constant on $[0, \hat s]$. But since, we have
$$
0\;=\; \wpunbf  (h(\hat x),h(x))\;=\; \wpunbf  (h(\gamma ^*(\hat s)),h(\gamma ^*(0)))\;=\;  
\wpunbf  (h(\gamma ^*(0)),h(\gamma ^*(0)))
\  ,
$$
this function must
reach a maximum at some point $s_m$ in $]0,\hat s[$ where we have
$$
\wpunbf  (h(\gamma ^*(s_m)),h(\gamma ^*(0))) >0
\quad ,\qquad 
\frac{d}{ds}\wpunbf  (h(\gamma ^*(s_m)),h(\gamma ^*(0)))= 0,
$$
and therefore
$$
h(\gamma ^*(s_m))\; \neq \; h(\gamma ^*(0))
\  .
$$
But this contradicts Condition A3$_u$. So this case is impossible.
\end{enumerate}
\end{list}
In any case, we have established that $a_{(\hat x,x,\gamma ^*)}(\hat s)$ is non negative and if it is zero then
$b_{(\hat x,x,\gamma ^*,\entree )}(r)$ is zero for all $r\in [0,\hat s].$

Let $x_0$ be a point in $\Close$. Call it origin.
For each integer $i$, we introduce the set
$$
\mathcal{C}_i
\;=\;
 \{\xdummy  \in \Close
\,  :\:  i\leq 
d(x_0 ,\xdummy )
\leq i+1\}
\  .
$$
It is $\entree$-dependent via $d$. But for the given (fixed) $\entree$, the metric $P$ is complete and the  Hopf-Rinow Theorem 
\cite[Theorem II.1.1]{Sakai.96} implies
$\mathcal{C}_i$ is compact.
We consider the case where $\hat x$ is in $\mathcal{C}_i$ and $x$ distinct from $\hat x$ satisfies
$$
d(\hat x,x)\;=\; \hat s\; \leq \; E
\  .
$$
Invoking again the completeness of $P$,
the Hopf-Rinow Theorem guarantees the set
$$
\mathcal{K}_i\;=\; \{\xdummy  \in \RR^n\,:\: 
d(x_0 ,\xdummy )
\leq i+1 + E \}
$$
is compact and the domain of definition of
$\gamma ^*$ contains at least $[0,E]$. Then since,  $\gamma ^*$ being normalized, we have
$$
\begin{array}{rcll}
 d(\gamma ^*(\hat s),\gamma ^*(s))&=&\hat s -s\qquad \forall s\in [0,\hat s]\  ,
 \\
 &\leq &E\qquad \forall s\in [\hat s,E]\  
\end{array}
$$
and therefore
$$
d(x_0 ,\gamma ^*(s) )\;\leq \; d(x_0 ,\gamma ^*(\hat s) )+ d(\gamma ^*(\hat s),\gamma ^*(s))
\; \leq \;  i+1 + E\qquad \forall s\in [0,E]
\  .
$$
It follows that
$x$ and $\gamma ^*(s)$ for $s$ in $[0,E]$ are in $\mathcal{K}_i$ and
the domain of definition of the functions $a_{(\hat x,x,\gamma ^*)}$ and $b_{(\hat x,x,\gamma ^*,\entree )}$
can be extended to $[0,E]$ while preserving at least their continuity.

To conclude it is sufficient to
prove the existence of a real number $k_i(\entree )$ such 
that, for the given (fixed) $\entree $, 
we have
\\[0.5em]\null \hfill $\displaystyle 
\frac{q}{4}\;+\; k_i(\entree )\,  a_{(\hat x,x,\gamma ^*)}(\hat s) > 
b_{(\hat x,x,\gamma ^*,\entree )}(\hat s).
$\hfill \null \\[0.5em]
Indeed, with this inequality, the definitions of the functions $a_{(\hat x,x,\gamma ^*)}$ and $b_{(\hat x,x,\gamma ^*,\entree )}$ and
(\ref{u-LP59}) where $d(\hat x,x)=\hat s$,
we obtain the result  %(\ref{u-LP40})
provided the function $k_E$ satisfies
$$
k_E(\hat x,\entree)\; \geq \; k_i(\entree )
\qquad \forall \hat x
\in\mathcal{C}_i\  .
$$

We prove the existence of $k_i$ by contradiction. Suppose that such $k_i$ does not exist.
Then, there exists a sequence $(\hat s_n, x_n,\hat x_n, \gamma _n^*)$, with 
$\hat s_n $ in $]0,E]$,
$\hat x_n$ in $\mathcal{C}_i$, $x_n$ in $\mathcal{K}_i$,
$\gamma _n^*$ a normalized geodesic satisfying
$$
x_n=\gamma _n^*(0)
\quad ,\qquad
\hat x _n=\gamma _n^*(\hat s _n)
\quad ,\qquad 
\gamma _n^*(s)\in\Close\quad \forall s\in [0,\hat s_n]
\  ,
$$
minimizing on $[0,\hat s_n]$, defined on $[0,E]$ where it takes its values in the compact set $\mathcal{K}_i$ 
independent of $n$. And we have
\begin{equation}
\label{u-LP16}
\frac{q}{4}\;+\; n\,  a_{(\hat x _n,x_n,\gamma _n^*)}(\hat s_n)\; \leq \; 
b_{(\hat x _n,x_n,\gamma _n^*,\entree )}(\hat s_n)
\  .
\end{equation}
The functions $h$,
$\rho $
and $\frac{\partial  h}{\partial x}$, restricted to 
the compact set $\mathcal{K}_i$ where the functions $\gamma _n^*$ take their values,
are continuous and bounded.  Also, from the geodesic equation, 
the same holds for $\gamma _n^*$, $\frac{d\gamma _n^*}{ds}$ and 
$\frac{d^2\gamma _n^*}{ds^2}$ restricted to
$[0, E]$.
With the definition of $b_{(\hat x _n,x_n,\gamma _n^*,\entree )}$, this 
implies that the right-hand side of (\ref{u-LP16}) is upper bounded, say by 
$B$. Consequently, we have
$$
\frac{q}{4}\;+\; n\,  a_{(\hat x _n,x_n,\gamma _n^*)}(\hat s_n)\; \leq 
\; B \qquad \forall n.
$$
Since $a_{(\hat x _n,x_n,\gamma _n^*)}(\hat s_n)$ is nonnegative, 
we obtain
\begin{equation}
\label{u-extra3}
\lim_{n\to \infty }a_{(\hat x _n,x_n,\gamma _n^*)}(\hat s_n)\;=\; 0
\end{equation}
and
\begin{equation}
\label{u-extra2}
\liminf_{n\to \infty }b_{(\hat x _n,x_n,\gamma _n^*,\entree )}(\hat s_n)\; \geq \; \frac{q}{4}
\  .
\end{equation}

We  have the following claim.

\noindent
{\it Claim: There exists a subsequence $(\hat s_{n_1},x_{n_1},\hat x _{n_1},\gamma _{n_1}^*)$ of $(\hat s_{n},x_{n},\hat x _{n},\gamma _{n}^*)$ such that
\begin{eqnarray}\label{u-eqn:LimitingSolPoints}
\lim_{n_1\to \infty } (x_{n_1},\hat x _{n_1})
&=&
( x_\omega ,\hat x _\omega )
\  ,\\\label{u-extra4}
\lim_{n_1\to \infty }\hat s_{n_1}\;=\;\lim_{n_1\to \infty } d(\hat x_{n_1},x_{n_1})
&=& d(\hat x_\omega ,x_\omega )\;=\; \hat s_\omega 
\  ,
\\
\label{u-eqn:LimitingSol}
\lim_{n_1\to \infty }\gamma _{n_1}^*(s)
&=&
\gamma _\omega (s)
\ \textrm{uniformly in} \  s \in [0, E]
\  ,\\\label{u-extra5}
\lim_{n_1\to \infty }\frac{d\gamma _{n_1}^*}{ds}(s)&=&\frac{d\gamma _\omega }{ds}(s)
\ \textrm{uniformly in} \  s \in [0, E]
\  , \
\end{eqnarray}
where $\gamma_\omega:[0,\hat s_\omega]\to \Close$ is  a minimizing normalized geodesic between 
$x_\omega$ and $\hat x_\omega$.}
%\\
%[0.7em]

\noindent
\par\vspace{1em}\noindent
\textit{Proof~:}
From the properties listed above of the sequence $(\hat s_n, x_n,\hat x_n, \gamma _n^*)$
where $\gamma _n^*$ is a solution of the geodesic 
equation, we deduce the existence of a subsequence with index 
$n_1$ and a quadruple $(\hat s_\omega , x_\omega ,\hat x_\omega , \gamma _\omega )$ such that 
\eqref{u-eqn:LimitingSolPoints}-\eqref{u-extra5} hold (see, for instance,  \cite[Theorem 5, \S 1]{Filippov.88} and
\cite[Proposition II.2.1]{Sakai.96}),
where $\gamma _\omega $ is a solution of the geodesic equation and,
since $\Close $ is closed, it satisfies
$$
%\label{u-LP46}
\gamma _\omega (0) \;=\;  x_\omega
\  ,\quad 
\gamma _\omega (\hat s_\omega ) \;=\;  \hat x _\omega
\  ,\quad 
\gamma _\omega (s)\in \Close \quad  \forall s\in [0,\hat s_\omega ]
\  .
$$
Finally, according to \cite[Lemma III.4.2]{Sakai.96}, it is minimizing 
between $x_\omega$ and $\hat x _\omega$.
\qquad $\square$

\par\vspace{1em}

If $\hat x _\omega \neq x_\omega$, the functions
$a_{(\hat x_\omega ,x_\omega ,\gamma _\omega ^*)}$ 
and
$b_{(\hat x_\omega ,x_\omega ,\gamma _\omega ^*,\entree )}$ are well defined and we have
$$
\lim_{n_1\to \infty }
a_{(\hat x_{n_1},x_{n_1},\gamma _{n_1}^*)}(s)
\;=\; 
a_{(\hat x_\omega ,x_\omega ,\gamma _\omega ^*)}(s)
\quad ,\qquad 
\lim_{n_1\to \infty }
b_{(\hat x_{n_1},x_{n_1},\gamma _{n_1}^*,\entree )}(s)
\;=\; 
b_{(\hat x_\omega ,x_\omega ,\gamma _\omega ^*,\entree }(s)
\qquad \forall s\in [0,E]
\  .
$$
It follows from (\ref{u-extra3}) that
$a_{(\hat x _\omega ,x_\omega ,\gamma _\omega )}(\hat s_\omega )$ 
is zero and we have seen that this implies the same holds for
$b_{(\hat x _\omega ,x_\omega ,\gamma _\omega ,\entree )}(r)$
for all $r\in [0,\hat s_\omega ]$. This contradicts (\ref{u-extra2}).

\par\vspace{1em}

If $\hat x _\omega = x_\omega$, the functions
$a_{(\hat x_\omega ,x_\omega ,\gamma _\omega ^*)}$ 
and
$b_{(\hat x_\omega ,x_\omega ,\gamma _\omega ^*,\entree )}$ are not defined.
Nevertheless (\ref{u-extra4}) and the uniformity in (\ref{u-eqn:LimitingSol}) and 
(\ref{u-extra5}) give~:
\begin{eqnarray*}
\lim_{n_1 \to \infty }\hat s_{n_1}&=&0
\  ,
\\
\lim_{n_1\to \infty }
\frac{1}{\hat s_{n_1}}\,  
\int_0^{\hat s_{n_1}}
\frac{\rho (\gamma _{n_1}^*(s),\entree )}{2}
\left|\frac{d\,  h \circ \gamma _{n_1}^*}{ds}(s)\right|^2
&=&
\frac{\rho (\gamma _\omega ^*(0),\entree )}{2}
\left|\frac{\partial h}{\partial x}(\gamma _\omega ^*(0))\frac{d\gamma _\omega }{ds}(0)\right|^2
\  ,
\\
\lim_{n_1\to \infty }
\frac{d\,  h\circ \gamma _{n_1}^*}{ds}(\hat s_{n_1})
\;=\; 
\lim_{n_1\to \infty }
\frac{\partial h}{\partial x}(\hat x_{n_1})
\frac{d\gamma _{n_1}}{ds}(\hat s_{n_1})
&=&
\frac{\partial h}{\partial x}(\hat x_\omega )
\frac{d\gamma _\omega }{ds}(0)
\  .
\end{eqnarray*}
Also we obtain with (\ref{u-extra6})
\begin{eqnarray*}
\lim_{n_1 \to \infty }\frac{1}{\hat s_{n_1}}  
\frac{\frac{\partial \wpunbf }{\partial y_1}(h(\gamma _{n_1}^*(\hat s_{n_1})),h(\gamma ^*(0)))^\top}{s_{n_1}}
&\hskip -0.7em = & \hskip -0.7em 
\frac{\partial ^2\wpunbf }{\partial y_1^2}(h(\gamma _\omega ^*(0)),h(\gamma _\omega ^*(0)))
\frac{\partial h}{\partial x}(\gamma _\omega ^*(0))\frac{d\gamma _\omega }{ds}(0)
\  ,
\\
&\hskip -0.7em = & \hskip -0.7em
\frac{\partial ^2\wpunbf }{\partial y_1^2}(h(\hat x_\omega ),h(\hat x_\omega ))
\frac{\partial h}{\partial x}(\hat x_\omega )\frac{d\gamma _\omega }{ds}(0)
\  ,
\\
\lim_{n_1 \to \infty }\frac{1}{\hat s_{n_1}}  
\frac{d  h\circ \gamma _{n_1}^*}{ds}(\hat s_{n_1})^{\top}
\frac{\partial \wpunbf }{\partial y_1}(h(\gamma _{n_1}^*(\hat s_{n_1})),h(\gamma ^*(0)))^\top
&\hskip -0.7em = & \hskip -0.7em
\frac{d\gamma _\omega }{ds}(0)^\top
\frac{\partial h}{\partial x}(\hat x_\omega )^\top
\frac{\partial ^2\wpunbf }{\partial y_1^2}(h(\hat x_\omega ),h(\hat x_\omega ))
\frac{\partial h}{\partial x}(\hat x_\omega )
\frac{d\gamma _\omega }{ds}(0)
\, .
\end{eqnarray*}
So (\ref{u-extra3}) and (\ref{u-extra2}) give~:
\begin{eqnarray*}
\frac{\rho (\hat x_\omega ,\entree )}{2}
\left|\frac{\partial h}{\partial x}(\hat x_\omega )\frac{d\gamma _\omega }{ds}(0)\right|^2
&\geq & \frac{q}{4}
\  ,\\
\frac{d\gamma _\omega }{ds}(0)^\top
\frac{\partial h}{\partial x}(\hat x_\omega )^\top
\frac{\partial ^2\wpunbf }{\partial y_1^2}(h(\hat x_\omega ),h(\hat x_\omega ))
\frac{\partial h}{\partial x}(\hat x_\omega )
\frac{d\gamma _\omega }{ds}(0)
&=& 0
\end{eqnarray*}
Since the matrix $\frac{\partial ^2 \wpunbf  }{\partial y_1 \partial y_1}(x_\omega ,x_\omega )$ is positive 
definite by assumption, we have a contradiction.
So we have established the existence of $k_i$.
\end{IEEEproof}
\sousection{Construction of the input-dependent system and the metric for the harmonic oscillator}
\label{complement30}
\subsubsection{A reduced order observer in other coordinates}
Consider the harmonic oscillator with unknown frequency (\ref{LP132})
the state of which evolves in the invariant set $\Ouv_\varepsilon $ defined in (\ref{LP205}).
Expressed in the other coordinates
$$
\barxrond=
\left(\begin{array}{@{\,  }c@{\,  }}
\barxrond_\indxra
\\
\barxrond_\indxrb
\end{array}\right)=\varphi (y,\xrond)=
\left(\begin{array}{@{\,  }c@{\,  }}
\xrond_\indxra - y
\\
\xrond_\indxrb+\frac{y^2}{2}
\end{array}\right)
$$
its dynamics are
$$
\dot \barxrond=
\left(\begin{array}{@{\,  }c@{\,  }}
\dot{\barxrond}_\indxra 
\\
\dot{\barxrond}_\indxrb
\end{array}\right)
=
\left(\begin{array}{@{\,  }c@{\,  }}
 -y \xrond_\indxrb-\xrond_\indxra
\\
y \xrond_\indxra
\end{array}\right)
=
\left(\begin{array}{@{\,  }c@{\,  }}
-y\barxrond_\indxrb-\barxrond_\indxra +\frac{y^3}{2}-y
\\
y \barxrond_\indxra+ y^2
\end{array}\right)
=
\bar f(y,\barxrond)
$$
Let an observer be given by a copy
$$
\dot{{\hat \barxrond}}_\indxra 
 = -y{\hat \barxrond}_\indxrb-{\hat \barxrond}_\indxra
+\frac{y^3}{2}-y
\quad ,\qquad 
\dot{{\hat \barxrond}}_\indxrb 
= y {\hat \barxrond}_\indxra+ y^2
$$

\subsubsection{Convergence and construction of a metric}
We study its convergence by considering the estimation errors
$$
{\bar e}_\indxra=\hat{\barxrond}_\indxra-\barxrond_\indxra
\quad ,\qquad 
{\bar e}_\indxrb=\hat \barxrond_\indxrb-\barxrond_\indxrb
$$
\startmodif
which
\stopmodif
satisfy
$$
\dot {\bar e}_\indxra = -y {\bar e}_\indxrb - {\bar e}_\indxra
\quad ,\qquad 
\dot {\bar e}_\indxrb = y {\bar e}_\indxra
$$
In these equations $y$ plays the role of an exogenous input. We keep the notation $y$ however. But later we 
shall use $u_y$ instead.

We obtain
\begin{equation}
\label{LP237}
\dot{\overparen{{\bar e}_\indxra^2 + {\bar e}_\indxrb^2}}\;=\; -2{\bar e}_\indxra^2
\end{equation}
The right hand side is negative but not negative definite.
To build a strict Lyapunov function, we introduce the auxiliary variables
\begin{equation}
\label{LP207}
z_1={\bar e}_\indxra
\quad ,\qquad 
z_2=-y {\bar e}_\indxrb
\quad ,\qquad 
z_3=-\dot y {\bar e}_\indxrb
\end{equation}
They satisfy
\begin{eqnarray*}
\dot z_1&=& z_2 - z_1
\\
\dot z_2&=&z_3-y^2 z_1
\\
\dot z_3&=&-\ddot y {\bar e}_\indxrb - \dot y y z_1
\end{eqnarray*}
where $y$ satisfies
$$
\dot y = \xrond_\indxra
\quad ,\qquad 
\ddot y = - y \xrond_\indxrb
$$
Again here, as $y$ is, $\xrond _\indxra$ and $\xrond _\indxrb$ are exogenous inputs.

We obtain
\begin{eqnarray*}
\dot z_1&=& z_2 - z_1
\\
\dot z_2&=&z_3-y^2 z_1
\\
\dot z_3&=&- \xrond_\indxrb z_2 - \xrond_\indxra y z_1
\end{eqnarray*}
This defines a dynamic system with inputs $(y,\xrond_\indxra,\xrond_\indxrb)$, state $(z_1,z_2,z_3)$ and output $z_1$.
Our next step is to build an output injection term making the origin asymptotically stable.
For this, we consider a first intermediate system
\begin{eqnarray*}
\dot{\zeta }_2 &=&\zeta _3 - 2 d{} \zeta _2
\\
\dot{\zeta }_3 &=& -  \xrond_\indxrb \zeta _2 -d{} \zeta _2
\end{eqnarray*}
\startmodif
where $d{}$ is some real number.
\stopmodif
It satisfies
\begin{eqnarray*}
\frac{1}{2}\dot{\overparen{(\zeta _2 -\zeta _3)^2 + \zeta _3^2}}
&=&
 (\zeta _2 -\zeta _3)(\zeta _3- 2 d{} \zeta _2 + [\xrond_\indxrb +d{}]  \zeta _2)
- \zeta _3[\xrond_\indxrb +d{}] \zeta _2
\\
&=&
 (\zeta _2 -\zeta _3)(\zeta _3 - [d{} -\xrond_\indxrb ]  \zeta _2)
-\zeta _3 [\xrond_\indxrb +d{}] \zeta _2
\\
&=&
- [d{} - \xrond_\indxrb]  \zeta _2^2
-\zeta _3^2
+[1 -2\xrond_\indxrb ]\zeta _2\zeta _3
\end{eqnarray*}
which is negative definite if
$$
d{}-\xrond_\indxrb >0\quad ,\qquad [1 -2\xrond_\indxrb ]^2 < 4[d{}-\xrond_\indxrb] 
$$
or
$$
\xrond_\indxrb  <\frac{\sqrt{4d{}-1}}{2}
\  .
$$
Consider now a second intermediate system
$$\begin{array}{rcl|l}
\dot{\zeta }_1&=&\multicolumn{2}{l}{[\zeta _2 - b{} \zeta _1]}
\\[0.5em]
\cline{1-3}\vrule height 1.3em depth 0pt width 0pt
\dot{\zeta }_2 &=&\zeta _3 - 2 d{} \zeta _2&+2d{} [\zeta _2-b{}\zeta _1]
\\
\dot{\zeta }_3 &=& - \xrond_\indxrb  \zeta _2 -d{} \zeta _2&+d{} [\zeta _2-b{} \zeta _1]
\end{array}
$$
It satisfies
\\[1em]$\displaystyle 
\frac{1}{2}\dot{\overparen{(\zeta _2 - b{} \zeta _1)^2+(\zeta _2 -\zeta _3)^2 + \zeta _3^2}}
$\hfill \null \\\null \qquad  $\displaystyle 
=
(\zeta _2 - b{} \zeta _1)\left(\zeta _3 - 2 d{} b{}\zeta _1-b{}(\zeta _2 - b{} \zeta _1)\right)
$\hfill \null \\\null \hfill $\displaystyle 
- [d{} - \xrond_\indxrb]  \zeta _2^2
-\zeta _3^2
+[1 -2\xrond_\indxrb ]\zeta _2\zeta _3
$\hfill \null \\\null \hfill $\displaystyle 
+ d{} (\zeta _2 -\zeta _3)(\zeta _2-b{}\zeta _1)
+d{} \zeta _3(\zeta _2-b{}\zeta _1)
$\\\null \qquad  $\displaystyle 
=
-(b{}-2d{})(\zeta _2 - b{} \zeta _1)^2
- [d{} - \xrond_\indxrb]  \zeta _2^2
-\zeta _3^2
$\hfill \null \\\null \hfill $\displaystyle 
+[1 -2\xrond_\indxrb ]\zeta _2\zeta _3
+ (\zeta _2 - b{} \zeta _1)(
[\zeta _3-2d{}\zeta_2]
+d{} [\zeta _2-\zeta_3]
+ d{} \zeta _3
)
$\\[0.7em]\null \qquad  $\displaystyle 
=
-(b{}-2d{})(\zeta _2 - b{} \zeta _1)^2
- [d{} - \xrond_\indxrb]  \zeta _2^2
-\zeta _3^2
+[1 -2\xrond_\indxrb ]\zeta _2\zeta _3
- d{} (\zeta _2-b{}\zeta _1)\zeta _2
+(\zeta _2-b{}\zeta _1)\zeta _3
$\\[1em]
It remains to find $b{}$ and $d{}$ to satisfy
\startmodif
$$
\left(\begin{array}{@{\,  }ccc@{\,  }}
2(b{}-2d{})
&
d{}
&
-1
\\
d{}
&
2 [d{} - \xrond_\indxrb] 
&
- [1 -2\xrond_\indxrb ]
\\
-1
&
- [1 -2\xrond_\indxrb ]
&
2
\end{array}\right)
\geq S
$$
\stopmodif
where $S$ is a constant symmetric positive definite matrix. We have seen above the block
\startmodif
$$
\left(\begin{array}{@{\,  }cc@{\,  }}
2 [d{} - \xrond_\indxrb] 
&
- [1 -2\xrond_\indxrb ]
\\{}
- [1 -2\xrond_\indxrb ]
&
2
\end{array}\right)
$$
\stopmodif
is positive definite if $d{}$ is large enough to satisfy
$$
\xrond_\indxrb  <\frac{\sqrt{4d{}-1}}{2}
\  .
$$
So it remains to pick $b{}$ large enough to satisfy
$$
2(b{}-2d{}) >
\left(\begin{array}{@{\,  }cc@{\,  }}
d{}  & - 1
\end{array}\right)
\left(\begin{array}{@{\,  }cc@{\,  }}
2 [d{} - \xrond_\indxrb] 
&
- [1 -2\xrond_\indxrb ]
\\{}
- [1 -2\xrond_\indxrb ]
&
2
\end{array}\right)^{-1}
\left(\begin{array}{@{\,  }c@{\,  }}
d{} \\ - 1
\end{array}\right)
$$
Now we can see the $z$ system as the following perturbation of the second intermediate system
$$\begin{array}{rcl|l}
\dot z_1
&=&[z_2  - b{} z_1] &- \left( 1 - b{} \right) z_1
\\
\dot z_2
&=&z_3 -2d{} z_2  + 2d{} [z_2-b{} z_1] &-\left( y^2 - 2 d{} b{} \right)z_1
\\
\dot z_3
&=&- \xrond_\indxrb z_2 -d{} z_2 + d{}[z_2- b{} z_1] &- \left(\xrond_\indxra y - d{} b{}\right) z_1
\end{array}
$$
It satisfies
\\[1em]$\displaystyle 
\frac{1}{2}\dot{\overparen{( z_2 - b{}  z_1)^2+( z_2 - z_3)^2 +  z_3^2}}
$\hfill \null \\\null \qquad $\displaystyle 
\leq 
-\left(\begin{array}{@{\,  }ccc@{\,  }}
[z_2-b{} z_1]
&
z_2
&
z_3
\end{array}\right) S
\left(\begin{array}{@{\,  }c@{\,  }}
[z_2-b{} z_1]
\\
z_2
\\
z_3
\end{array}\right)
$\hfill \null \\\null \hfill $\displaystyle 
- [z_2-b{}z_1][y^2 - 2 d{} b{} -b{}(1-b{})]z_1
- z_2 [y^2 -\xrond_\indxra y- d{} b{}]z_1
+ z_3 \left(y^2-2\xrond_\indxra y \right) z_1
$ \\[1em]
% For later use it is appropriate to introduce at this point the following notations
% \begin{eqnarray*}
% T_{\mbox{\scriptsize erm}\,1}(z_1,z_2,z_3)&=&-\left(\begin{array}{@{\,  }ccc@{\,  }}
% [z_2-b{} z_1]
% &
% z_2
% &
% z_3
% \end{array}\right) S
% \left(\begin{array}{@{\,  }c@{\,  }}
% [z_2-b{} z_1]
% \\
% z_2
% \\
% z_3
% \end{array}\right)
% \\
% T_{\mbox{\scriptsize erm}\,2}(z_1,z_2,z_3)&=&
% - [z_2-b{}z_1][y^2 - 2 d{} b{} -b{}(1-b{})]z_1
% - z_2 [y^2 -\xrond_\indxra y- d{} b{}]z_1
% + z_3 \left(y^2-2\xrond_\indxra y \right) z_1
% \end{eqnarray*}

By completing the squares, we can find a real number $a{}$, depending ona bound on $|y|$, to obtain
$$
\frac{1}{2}\dot{\overparen{( z_2 - b{}  z_1)^2+( z_2 - z_3)^2 +  z_3^2}}
\leq 
-\frac{1}{2}\left(\begin{array}{@{\,  }ccc@{\,  }}
[z_2-b{} z_1]
&
z_2
&
z_3
\end{array}\right) S
\left(\begin{array}{@{\,  }c@{\,  }}
[z_2-b{} z_1]
\\
z_2
\\
z_3
\end{array}\right) + a{} z_1^2
$$
Substituting the $z_i$ by their definition (\ref{LP207}) in terms of the $e$ gives
\startmodif
$$
\frac{1}{2}\dot{\overparen{( y {\bar e}_\indxrb + b{}  {\bar e}_\indxra)^2+(
y {\bar e}_\indxrb - \xrond_\indxra {\bar e}_\indxrb
)^2 +  
\xrond_\indxra^2 {\bar e}_\indxrb^2}}
\leq 
-\frac{1}{2}\left(\begin{array}{@{\,  }ccc@{\,  }}
[y {\bar e}_\indxrb+b{} {\bar e}_\indxra]
&
y {\bar e}_\indxrb
&
\xrond_\indxra {\bar e}_\indxrb
\end{array}\right) S
\left(\begin{array}{@{\,  }c@{\,  }}
[y {\bar e}_\indxrb+b{} {\bar e}_\indxra]
\\
y {\bar e}_\indxrb
\\
\xrond_\indxra {\bar e}_\indxrb
\end{array}\right) + a{} {\bar e}_\indxra^2
\  .
$$
With (\ref{LP237}), this yields finally
\\[1em]$\displaystyle 
\frac{1}{2}\dot{\overparen{2a{}({\bar e}_\indxra^2+{\bar e}_\indxrb^2)+( y {\bar e}_\indxrb + b{}  {\bar e}_\indxra)^2
+( y  - \xrond_\indxra )^2{\bar e}_\indxrb^2 +  \xrond_\indxra ^2 {\bar e}_\indxrb^2}}
$\hfill \null \\[-0.7em]\null \hfill  $\displaystyle 
\leq 
-\frac{1}{2}\left(\begin{array}{@{\,  }ccc@{\,  }}
[y {\bar e}_\indxrb+b{} {\bar e}_\indxra]
&
y {\bar e}_\indxrb
&
\xrond_\indxra {\bar e}_\indxrb
\end{array}\right) S
\left(\begin{array}{@{\,  }c@{\,  }}
[y {\bar e}_\indxrb+b{} {\bar e}_\indxra]
\\
y {\bar e}_\indxrb
\\
\xrond_\indxra {\bar e}_\indxrb
\end{array}\right) - a{} {\bar e}_\indxra^2
\  .
$\\[1em]
We have obtained a quadratic form in the errors $({\bar e}_\indxra,{\bar e}_\indxrb)$ given by the matrix
$$
\bar P(y,\barxrond_\indxra)
\;=\;
\left(\begin{array}{@{\,  }cc@{\,  }}
2a{} +b{}^2 
&
yb{}
\\
yb{}
&
2a{} + 2y^2 + 2 \barxrond_\indxra^2 + 2 y \barxrond_\indxra 
\end{array}\right)
$$
\stopmodif
knowing that  the ``coefficients'' $(y,\xrond_\indxra)$ in $\bar P$ satisfy
$$
\dot y = \xrond_\indxra
\quad ,\qquad 
\dot \xrond_\indxra\;=\; -y \xrond_\indxrb
$$

\subsubsection{Return to the given coordinates}
The above
\startmodif
matrix
\stopmodif
has been constructed to show the flow generated by the reduced order observer
$$
\dot{{\hat \barxrond}}_\indxra 
 = -y[{\hat \barxrond}_\indxrb-\frac{y^2}{2}]-[{\hat \barxrond}_\indxra + y]
\quad ,\qquad 
\dot{{\hat \barxrond}}_\indxrb 
= y {\hat \barxrond}_\indxra+ y^2
$$
is contracting. In doing so, we have seen that $y$, and also $\barxrond_\indxra$ and $\barxrond_\indxrb$
may play the role of exogenous inputs. This observer is simply a copy of the given system in other 
coordinates. To insist on the role played by these exogenous inputs, we prefer to write
the given system is these other coordinates as~:
$$
\dot y\;=\; \barxrond_\indxra+ y
\quad ,\qquad 
\dot{{ \barxrond}}_\indxra 
 = -\entree _ y[\barxrond_\indxrb-\frac{y^2}{2}]-[\barxrond _\indxra+y]
\quad ,\qquad 
\dot{{ \barxrond}}_\indxrb 
= \entree _y [\barxrond_\indxra+ y]
$$
where $y$ as such is a state component, whereas $u_y$ is seen as an input.
Correspondingly, the metric is
\startmodif
$$
\bar P(\entree _y,\bar\entree _\indxra)
\;=\;
\left(\begin{array}{@{\,  }cc@{\,  }}
2a{} +b{}^2 
&
\entree _yb{}
\\
\entree _yb{}
&
2a{} + 2\entree _y^2 + 2 \bar \entree _\indxra^2 + 2 \entree _y \bar \entree _\indxra 
\end{array}\right)
$$
\stopmodif

Since the given coordinates are
$$
y= y
\quad ,\qquad 
\xrond_\indxra\;=\; \barxrond_\indxra + y
\quad ,\qquad 
\xrond_\indxrb\;=\; \barxrond_\indxrb -\frac{y^2}{2}
$$
we modify accordingly the input in
$$
\entree _y=\entree _y
\quad ,\qquad 
\entree _\indxra=\bar \entree _\indxra + \entree _y
$$
So coming back to the given coordinates, the expression of the augmented system is
$$
\dot y = \xrond_\indxra
\quad ,\qquad 
\dot{\xrond}_\indxra= 
-\entree _y \xrond_\indxrb
\quad ,\qquad 
\dot{\xrond}_\indxrb = 
[\entree _y  -y]\xrond_\indxra
$$
\startmodif
The one of the metric is
\\[1em]$\displaystyle  
P(y,\entree  _y,\entree  _\indxra)
$\hfill\null\\\null\hfill$\displaystyle
\begin{array}{rcl}
&=&
\left(\begin{array}{c}
1 
\\
0 
\\
0 
\end{array}\right)
c{}
\left(\begin{array}{@{\,  }ccc@{\,  }}
1 & 0 & 0
\end{array}\right)
\\[-0.5em]&&\qquad \qquad +
\left(\begin{array}{cc}
-1 & y
\\
1 & 0
\\
0 & 1
\end{array}\right)
\left(\begin{array}{@{\,  }ccc@{\,  }}
2a{} +b{}^2 
&
\entree _yb{}
\\
\entree _yb{}
&
2a{} + 2\entree _y^2 + 2 \entree _\indxra^2 - 2 \entree _y \entree _\indxra 
\end{array}\right)
\left(\begin{array}{ccc}
-1 &1 & 0
\\
y & 0 & 1
\end{array}\right)
\  ,
\\
&=&
\left(\begin{array}{c}
1 
\\
0 
\\
0 
\end{array}\right)
c{}
\left(\begin{array}{@{\,  }ccc@{\,  }}
1 & 0 & 0
\end{array}\right)
\\[-0.5em]&&\qquad \qquad +
\left(\begin{array}{@{\,  }cc@{\,  }}
-(2a{} +b{}^2 ) +  y\entree  _y b{}& 
-\entree  _y b{} + y(2a{} + 2\entree _y^2 + 2 \entree _\indxra^2 - 2 \entree _y \entree _\indxra )
\\
2a{} +b{}^2 
&
\entree _yb{}
\\
\entree _yb{}
&
2a{} + 2\entree _y^2 + 2 \entree _\indxra^2 - 2 \entree _y \entree _\indxra 
\end{array}\right)
\left(\begin{array}{ccc}
-1 &1 & 0
\\
y & 0 & 1
\end{array}\right)
\  ,
\\[2em]
&=&
\left(
\renewcommand{\arraystretch}{1.5}
\begin{array}{@{\,  }c|c|c@{\,  }}
\begin{array}{@{}l@{\quad }}
c{}+(2a{} +b{}^2 ) -2y\entree  _yb{}
\\\multicolumn{1}{@{\quad }r@{}}{
+ y^2(2a{} + 2\entree _y^2 + 2 \entree _\indxra^2 - 2 \entree _y \entree _\indxra )
}
\end{array}&
\star
% -(2s +b{}^2 ) -  y\entree  _y b{}
& 
\star
%-\entree  _y b{} - y(2a{} + 2\entree _y^2 + 2 \entree _\indxra^2 + 2 \entree _y \entree _\indxra )
\\\hline
-(2a{} +b{}^2) + y \entree _yb{}
&
2a{} +b{}^2 
&
\star
% \entree _yb{}
\\\hline
- \entree _yb{}
+y\left(2a{} + 2\entree _y^2 + 2 \entree _\indxra^2 - 2 \entree _y \entree _\indxra \right)
&
\entree _yb{}
&
2a{} + 2\entree _y^2 + 2 \entree _\indxra^2 - 2 \entree _y \entree _\indxra 
\end{array}\right)
\  .
\end{array}
$\\[1em]
\stopmodif
%
% END Detailed computations
%
%%%%%%%%%%%%%%

%%%%%%%%%%%%%
%
% BEGIN Detailed computations
%
\sousection{Proof of convergence of the observer (\ref{LP206})}
\label{complement42}
Let us show that the flow generated by the system
\begin{eqnarray*}
\dot{\hat y}&=&\hat \xrond_\indxra - \frac{k_E}{c{}}[\hat y -y]
\\
\dot{\hat{\xrond}}_\indxra&=&
-y \hat{\xrond}_\indxrb
- \frac{k_E}{c{}}[\hat y -y]
\\
\dot{\hat{\xrond}}_\indxrb &=&
\left[\frac{k_E}{c{}}\hat y - \hat \xrond_\indxra\right]
[\hat y-y]
\end{eqnarray*} 
where $y$ is an exogenous input satisfying
$$
\dot y \;=\; \xrond_\indxra
\  ,
$$
\startmodif
is contracting with respect to the metric
\stopmodif
\\[1em]$\displaystyle 
P(\hat y,y,\xrond_\indxra)
$\hfill \null \\[1em] \null \hfill $
%%%%%%%%%%%%%
= \renewcommand{\arraystretch}{1.5}
\left(\begin{array}{@{\,  }c|c|c@{\,  }}
\begin{array}{@{}l@{\quad }}
c{}+(2a{} +b{}^2 ) -2\hat yyb{}
\\\multicolumn{1}{@{\quad }r@{}}{
+ \hat y^2(2a{} + 2y^2 + 2 \xrond _\indxra ^2 - 2 y \xrond _\indxra )
}
\end{array}&
-(2a{} +b{}^2)+\hat y yb{}
& 
-yb{}
+\hat y\left(2a{} + 2y^2 + 2 \xrond _\indxra ^2 - 2 y \xrond _\indxra  \right)
\\\hline
-(2a{} +b{}^2)+\hat y yb{}
&
2a{} +b{}^2 
&
yb{}
\\\hline
-yb{}
+\hat y\left(2a{} + 2y^2 + 2 \xrond _\indxra ^2 - 2 y \xrond _\indxra  \right)
&
yb{}
&
2a{} + 2y^2 + 2 \xrond _\indxra ^2 - 2 y \xrond _\indxra 
\end{array}\right)
\:  .
$\\[1em]
The Lie derivative of the metric along the vector field given by the system is
\\[1em]$\displaystyle 
\left(\begin{array}{@{\,  }ccc@{\,  }}
v_y & v_\indxra & v_\indxrb
\end{array}\right)
\left[
\frac{\partial P}{\partial \hat y}(\hat y,y,\xrond_\indxra)
\left(\hat \xrond_\indxra-\frac{k_E}{c{}}[\hat y-y]\right)
+
\frac{\partial P}{\partial y}(\hat y,y,\xrond_\indxra)
\xrond_\indxra
-
\frac{\partial P}{\partial \xrond  _\indxra}(\hat y,y,\xrond_\indxra)
y \xrond_\indxrb
\right]
\left(\begin{array}{@{\,  }c@{\,  }}
v_y \\ v_\indxra \\  v_\indxrb
\end{array}\right)
$\hfill \null \\[0.5em]\null \hfill $\displaystyle 
\;+\; 2 \left(\begin{array}{@{\,  }ccc@{\,  }}
v_y & v_\indxra & v_\indxrb
\end{array}\right)P(\hat y,y,\xrond _\indxra)
\left(
\begin{array}{@{\,  }ccc@{\,  }}
-\frac{k_E}{c{}} & 1 & 0
\\
-\frac{k_E}{c{}} & 0 & -y
\\
\frac{k_E}{c{}} [\hat y-y]+\frac{k_E}{c{}}\hat y - \hat \xrond_\indxra
&
y-\hat y
&
0
\end{array}\right)\left(\begin{array}{@{\,  }c@{\,  }}
v_y \\ v_\indxra \\ v_\indxrb
\end{array}\right)
\  .
$\\[1em]
We develop
\\[1em]$\displaystyle 
\left[
\frac{\partial P}{\partial \hat y}(\hat y,y,\xrond_\indxra)
\left(\hat \xrond_\indxra-\frac{k_E}{c{}}[\hat y-y]\right)
+
\frac{\partial P}{\partial y}(\hat y,y,\xrond_\indxra)
\xrond_\indxra
-
\frac{\partial P}{\partial \xrond  _\indxra}(\hat y,y,\xrond_\indxra)
y \xrond_\indxrb
\right]
%%%%%%%%%%
$\hfill \null \\[1em]\null \hfill $\displaystyle 
=\; 
\left(\begin{array}{@{\,  }ccc@{\,  }}
-2yb{} + 4 \hat y(a{} + y^2 +  \xrond _\indxra ^2 -  y \xrond _\indxra )
&
y b{}
& 
\startmodif
2(a{} + y^2 +  \xrond _\indxra ^2 -  y \xrond _\indxra )
\stopmodif
\\
yb{}
&
0
&
0
\\
2\left(a{} +  y^2 +  \xrond _\indxra ^2 -  y \xrond _\indxra  \right)
&
0
&
0
\end{array}\right)
\left(\hat \xrond_\indxra-\frac{k_E}{c{}}[\hat y-y]\right)
%%%%%%%%%%
$\hfill \null \\[1em]\null \hfill $\displaystyle 
\;+\; 
\left(\begin{array}{@{\,  }ccc@{\,  }}
-2\hat yb{} + 2\hat y^2(2 y - \xrond _\indxra )
&
\hat y b{}
& 
-b{}
+
2\hat y\left(2y - \xrond _\indxra  \right)
\\
\hat y b{}
&
0
&
b
\\
-b{}
+
2\hat y\left(2y - \xrond _\indxra  \right)
&
b{}
&
4y - 2 \xrond _\indxra 
\end{array}\right)
\xrond _\indxra
%%%%%%%%%%
$\hfill \null \\[1em]\null \hfill $\displaystyle 
\;-\; 
\left(\begin{array}{@{\,  }ccc@{\,  }}
2\hat y^2(2 \xrond _\indxra  -  y  )
&
0
& 
2\hat y\left(2\xrond _\indxra  -  y  \right)
\\
0
&
0
&
0
\\
2\hat y\left(2\xrond _\indxra  -  y  \right)
&
0
&
4 \xrond _\indxra  - 2 y 
\end{array}\right) y \xrond_\indxrb
$\\[2em]$\displaystyle 
P(\hat y,y,\xrond _\indxra)
\left(
\begin{array}{@{\,  }ccc@{\,  }}
-\frac{k_E}{c{}} & 1 & 0
\\
-\frac{k_E}{c{}} & 0 & -y
\\
\frac{k_E}{c{}} [\hat y-y]+\frac{k_E}{c{}}\hat y - \hat \xrond_\indxra
&
y-\hat y
&
0
\end{array}\right)\;=\; 
$\\[1em]\null \hfill $\displaystyle 
\renewcommand{\arraystretch}{1.5}
\left(\begin{array}{@{\,  }c|c|c@{\,  }}
\begin{array}{@{}l@{\quad }}
c{}+(2a{} +b{}^2 ) -2\hat yyb{}
\\\multicolumn{1}{@{\quad }r@{}}{
+ \hat y^2(2a{} + 2y^2 + 2 \xrond _\indxra ^2 - 2 y \xrond _\indxra )
}
\end{array}&
-(2a{} +b{}^2)+\hat y yb{}
& 
-yb{}
+\hat y\left(2a{} + 2y^2 + 2 \xrond _\indxra ^2 - 2 y \xrond _\indxra  \right)
\\\hline
-(2a{} +b{}^2)+\hat y yb{}
&
2a{} +b{}^2 
&
yb{}
\\\hline
-yb{}
+\hat y\left(2a{} + 2y^2 + 2 \xrond _\indxra ^2 - 2 y \xrond _\indxra  \right)
&
yb{}
&
2a{} + 2y^2 + 2 \xrond _\indxra ^2 - 2 y \xrond _\indxra 
\end{array}\right)\times
$\hfill \null \\[1em]\null \hfill $
\times
\left(
\begin{array}{@{\,  }ccc@{\,  }}
-\frac{k_E}{c{}} & 1 & 0
\\
-\frac{k_E}{c{}} & 0 & -y
\\
\frac{k_E}{c{}} [\hat y-y]+\frac{k_E}{c{}}\hat y - \hat \xrond_\indxra
&
y-\hat y
&
0
\end{array}\right)
$\\[1em]
We get
\begin{eqnarray*}
\mbox{column 1}&=& \renewcommand{\arraystretch}{1.5}
\begin{array}{@{\,  }|c|@{\,  }} 
-k_E
% -\frac{k_E}{c{}}\hat y\left[-yb{}
% + \hat y\left(2a{} + 2y^2 + 2 \xrond _\indxra ^2 - 2 y \xrond _\indxra \right)
% \right]
% %
% +\frac{k_E}{c{}}\hat y\left[-yb{}
% +\hat y\left(2a{} + 2y^2 + 2 \xrond _\indxra ^2 - 2 y \xrond _\indxra  \right)
% \right]
+
\left[
-yb{}
+\hat y\left(2a{} + 2y^2 + 2 \xrond _\indxra ^2 - 2 y \xrond _\indxra  \right)
\right]
\left[
\frac{k_E}{c{}} [\hat y-y]
- \hat \xrond_\indxra
\right]
\\%%%%%%%%%%%%%%
% -\frac{k_E}{c{}}\hat y y b{}
% + y b{}\frac{k_E}{c{}}\hat y
+ y b{}\left[\frac{k_E}{c{}} [\hat y-y]
- \hat \xrond_\indxra\right]
\\%%%%%%%%%%%%%%
% -\frac{k_E}{c{}}\hat y\left(2a{} + 2y^2 + 2 \xrond _\indxra ^2 - 2 y \xrond _\indxra  \right)
% +\left(2a{} + 2y^2 + 2 \xrond _\indxra ^2 - 2 y \xrond _\indxra  \right)\frac{k_E}{c{}}\hat y
\left(2a{} + 2y^2 + 2 \xrond _\indxra ^2 - 2 y \xrond _\indxra  \right)
\left[
\frac{k_E}{c{}} [\hat y-y]
- \hat \xrond_\indxra
\right]
\end{array}
\\
\mbox{column 2}&=&\renewcommand{\arraystretch}{1.5}
\begin{array}{@{\,  }|c|@{\,  }} 
c{}+(2a{} +b{}^2 ) -\hat yyb{}
% +\hat y\left[-yb{}
% + \hat y\left(2a{} + 2y^2 + 2 \xrond _\indxra ^2 - 2 y \xrond _\indxra \right)
% \right]
% -\hat y\left[-yb{}
% +\hat y\left(2a{} + 2y^2 + 2 \xrond _\indxra ^2 - 2 y \xrond _\indxra  \right)
% \right]
+
y
\left[-yb{}
+\hat y\left(2a{} + 2y^2 + 2 \xrond _\indxra ^2 - 2 y \xrond _\indxra  \right)
\right]
\\
-(2a{} +b{}^2)
% +\hat y yb{}
% -\hat y yb{}
+ b{} y^2
\\
-yb{}
+ y\left(2a{} + 2y^2 + 2 \xrond _\indxra ^2 - 2 y \xrond _\indxra  \right)
\end{array}
\\
\mbox{column 3}&=&\renewcommand{\arraystretch}{1.5}
\begin{array}{@{\,  }|c|@{\,  }} 
-y\left[-(2a{} +b{}^2)+\hat y yb{}\right]
\\
-y\left(2a{} +b{}^2 \right)
\\
- y^2 b{}
\end{array}
\end{eqnarray*}
It follows from these expressions that, when $v_y$ is zero, the Lie derivative is
\\[1em]$\displaystyle 
2\left[-(2a{}+b{}^2)+b{}y^2\right]v_\indxra^2
+
2\left[(2y-\xrond_\indxra)\xrond_\indxra
-
(2\xrond_\indxra-y)y\xrond_\indxrb-y^2 b{}\right]v_\indxrb^2
$\hfill \null \\\null \hfill $\displaystyle 
+
2\left[b{}\xrond_\indxra
-yb{}
+ y\left(2a{} + 2y^2 + 2 \xrond _\indxra ^2 - 2 y \xrond _\indxra  \right)
-y(2a{}+b{}^2)
\right]v_\indxra v_\indxrb
$\\[1em]$\displaystyle 
\;=\; 
\startmodif
-2\left[(2a{}+b{}^2)-b{}y^2\right]v_\indxra^2
\stopmodif
-
2\left[\xrond_\indxra^2+y^2 (b{}-\xrond_\indxrb)-2y\xrond_\indxra(1-\xrond_\indxrb)\right]v_\indxrb^2
$\hfill \null \\\null \hfill $\displaystyle 
+
2\left[b{}\xrond_\indxra
-yb{}-yb{}^2
+ y\left(2y^2 + 2 \xrond _\indxra ^2 - 2 y \xrond _\indxra  \right)
\right]v_\indxra v_\indxrb
$\\[1em]
To have a better view on the left hand side, we let~:
$$
\rho{} \;=\; v_\indxra
\quad ,\qquad 
r{} _y\;=\; y v_\indxrb
\quad ,\qquad 
r{} _\indxra\;=\; \xrond _\indxra v_\indxrb
$$
This yields
\\[1em]$\displaystyle 
\;=\; 
\startmodif
-2\left[(2a{}+b{}^2)-b{}y^2\right]\rho{} ^2
\stopmodif
-
2\left[r{} _\indxra^2+r{} _y^2 (b{}-\xrond_\indxrb)-2r{} _y r{} _\indxra(1-\xrond_\indxrb)\right]
$\hfill \null \\\null \hfill $\displaystyle 
+
2\left[
\startmodif
b{}r{} _\indxra
\stopmodif
-r{} _yb{}(b{}+1)
+ r{} _y\left(2y^2 + 2 \xrond _\indxra ^2 - 2 y \xrond _\indxra  \right)
\right]\rho{}  
$\\[1em]
We have recovered (\ref{LP173}). We know that, $(y,\xrond_\indxra,\xrond_\indxrb)$ being in 
$\Ouv_\varepsilon $, we can choose $b{}$ and then $a{}$ to make this expression negative definite in $(\rho{} 
,r{} _\indxra,r{} _\indxrb)$ and therefore in $(v_\indxra,v_\indxrb)$ when 
$(y,\xrond_\indxra,\xrond_\indxrb)$ is in $\Ouv_\varepsilon $.
Then with, say $c=k_E$, for all $k_E$ large enough, the full expression is negative 
definite in $(v ,v_\indxra,v_\indxrb)$.
%
% END Detailed computations
%
%%%%%%%%%%%%

%%%%%%%%%%%%
%
% BEGIN Dynamic extension
%
\sousection{More comments on Dynamic Extension}
\label{complement31}
\subsubsection{Choose properly the output.}
Instead of
\begin{equation}
\label{trav19}
\ya=\ha(x,\xe))
\end{equation}
with $\ha$ satisfying (\ref{LP179}), we could take the output as
\begin{equation}
\label{trav20}
\ya\;=\; \left(\begin{array}{@{}c@{}}
\ha(x,\xe) \\ \xe
\end{array}\right)
\end{equation}
since $\xe$ is known to be constantly zero along any solution of the given system.
We show in the following
example that, if a metric $\bfP$, for the original $\bfx$-manifold $\RR^n$, does not satisfy Condition A3,
dynamic extension is of no help with this choice of augmented output.

\begin{example}\bgroup\normalfont
\label{travex1}
Consider $\RR^2$ equipped with a metric $\bfP$ which possibly satisfies Condition A2 but not Condition A3. 
For this reason we try to go via a dynamic extension.

According to \cite[Theorem 2.6]{127}), there exist local coordinates $(y,\xrond)$ for $\bfx$ in which the expression $P$ of 
$\bfP$ takes the diagonal form~:
\begin{equation}
\label{LP208}
P(y,\xrond)\;=\; \left(\begin{array}{cc}
P_{y}(y,\xrond) & 0 %P_{y \xrond }(y,\xrond)
\\
%P_{y \xrond }(y,\xrond) 
0 & P_{\xrond}(y,\xrond) 
\end{array}\right)
\  .
\end{equation}
\startmodif
Assume that the level sets of the output function are not 
\stopmodif
totally geodesic. Then, since
the level sets of the output function
are totally geodesic for all $\bfy$ in $\bfh(\Ouv)$ if and only if
the block $P_{\xrond}$ does not depend on $y$,
$\frac{\partial P_{\xrond}}{\partial y}(y,\xrond)$ is not zero everywhere.

% On the other, we know without changing 
% anything, we can modify $\bfP$ in $\bfP_{mod}$ the expression of which is
% $$
% \bar P_{mod}(y,\xrond)\;=\; \left(\begin{array}{cc}
% \bar \Py(y) & 0 %P_{y \xrond }(y,\xrond)
% \\
% %P_{y \xrond }(y,\xrond) 
% 0 & P_{\xrond}(y,\xrond) 
% \end{array}\right)
% $$
% where $\Py $ is arbitrary. So we consider $\RR^2$ is actually equipped 
% with $\bfP_{mod}$.

We add an extra dimension and consider the augmented state $\bfxa$ in $\RR^3$
with coordinates
$$
\xa=(y,\xrond,\xe)
\  .
$$

If we choose $y$ and $\xe$ as outputs, the problem we are interested in is to find a metric $\bfPa$ in $\RR^3$
% and a function $\bar \Py$
such that
\begin{list}{}{%
\parskip 0pt plus 0pt minus 0pt%
\topsep 0.5ex plus 0pt minus 0pt%
\parsep 0pt plus 0pt minus 0pt%
\partopsep 0pt plus 0pt minus 0pt%
\itemsep 0.5ex plus 0pt minus 0pt
\settowidth{\labelwidth}{\quad a)}%
\setlength{\labelsep}{0.5em}%
\setlength{\leftmargin}{\labelwidth}%
\addtolength{\leftmargin}{\labelsep}%
}
\item[a)]
Condition A3 holds for $\bfPa$ and therefore
the sets where $y$ and $\xe$ are constant are totally geodesic, 
\item[b)]
the expression of the pull-back metric of $\bfPa$ by
the projection $\pi : (y,\xrond,\xe) \mapsto (y,\xrond)$
is (\ref{LP208}).
\end{list} 
The requirement b) means that if $\Pa$ is the expression of $\bfPa$ in the coordinates
$(y,\xrond,\xe)$
$$
\Pa(y,\xrond,\xe) 
\;=\; \left(\begin{array}{ccc}
\Pa_{yy}(y,\xrond,\xe) & \Pa_{y \xrond }(y,\xrond,\xe) & \Pa_{y \xe}(y,\xrond,\xe)
\\
\Pa_{y \xrond }(y,\xrond,\xe) & \Pa_{\xrond\xrond}(y,\xrond,\xe) & \Pa_{\xrond \xe}(y,\xrond,\xe)
\\
\Pa_{y \xe}(y,\xrond,\xe) & \Pa_{\xrond \xe}(y,\xrond,\xe) & \Pa_{\xe \xe}(y,\xrond,\xe)
\end{array}\right)
$$
then we have 
\begin{eqnarray*}
\left(\begin{array}{cc}
P_{y}(y,\xrond) & 0 %P_{y \xrond }(y,\xrond)
\\
%P_{y \xrond }(y,\xrond) 
0 & P_{\xrond}(y,\xrond) 
\end{array}\right) 
&=&
\left(\begin{array}{@{}c@{}}
\displaystyle
\frac{\partial  \pi }{\partial y}(y,\xrond,0) ^\top
\\[01em]
\displaystyle 
\frac{\partial   \pi }{\partial \xrond}(y,\xrond,0) ^\top
\end{array}\right)
\Pa(y,\xrond,0) 
\left(\begin{array}{@{}cc@{}}
\displaystyle 
\frac{\partial   \pi }{\partial y}(y,\xrond,0) 
&
\displaystyle 
\frac{\partial \pi }{\partial \xrond}(y,\xrond,0) 
\end{array}\right)
\  ,
\\
&=&
 \left(\begin{array}{@{}ccc@{}}
1 & 0 & 0 \\ 0 & 1 & 0
\end{array}\right)
\Pa(y,\xrond,0) 
\left(\begin{array}{@{}cc@{}}
1 & 0  \\ 0  & 1  \\ 0 & 0
\end{array}\right)
\  .
\end{eqnarray*}
This gives
$$
\Pa_{yy}(y,\xrond,0) \;=\; 
% \bar \Py(y) 
P_{y}(y,\xrond) 
\quad ,\qquad 
\Pa_{y \xrond }(y,\xrond,0)\;=\; 0 %P_{y \xrond }(y,\xrond)
\quad ,\qquad 
\Pa_{\xrond\xrond}(y,\xrond,0) \;=\; P_{\xrond}(y,\xrond) 
$$
and yields
$$
\frac{\partial \Pa_{y \xrond }}{\partial \xrond}(y,\xrond,0)=0
\quad ,\qquad \frac{\partial \Pa_{\xrond \xrond }}{\partial y}(y,\xrond,0)= 
\frac{\partial P_{\xrond }}{\partial y}(y,\xrond)
\ne 0
\qquad \forall (y,\xrond)
\  .
$$

On another hand if the sets where $y$ and $\xe$ are constant
are totally geodesic,
the Christoffel symbols
$\Gamma_{a\xrond\xrond}^y$ and $\Gamma_{a\xrond\xrond}^{\xe}$ of $\Pa$ are zero. 
With evaluating at the point $(y,\xrond,0)$, we obtain
$$\displaystyle 
\left(\begin{array}{c@{\qquad }c@{\qquad }c}\displaystyle
-\frac{\partial \Pa_{\xrond\xrond}}{\partial y}
& \displaystyle
\frac{\partial \Pa_{\xrond\xrond}}{\partial \xrond}
& \displaystyle
2\frac{\partial \Pa_{\xrond \xe}}{\partial \xrond}
-
\frac{\partial \Pa_{\xrond\xrond}}{\partial \xe}
\end{array}\right)
\renewcommand{\arraystretch}{1.5}
\left(\begin{array}{c@{\quad }c}
\Pa_{\xrond\xrond}\Pa_{\xe\xe}-\Pa_{\xrond \xe}^2
&
-\Pa_{\xrond\xrond}\Pa_{y \xe}
\\
\Pa_{\xrond \xe} \Pa_{y \xe} 
&
-\Pa_{yy}\Pa_{\xrond \xe}
\\
-\Pa_{\xrond\xrond}\Pa_{y \xe}
&
\Pa_{yy}\Pa_{\xrond\xrond}
\end{array}\right)\;=\; 0
\  .
$$
We can view this as a system of $2$ linear equations in the $2$ unknowns 
$\Pa_{y \xe}$ and $\Pa_{\xe \xe}$, i.e.,
$$\displaystyle 
\renewcommand{\arraystretch}{1.5}
\left(\begin{array}{c@{\quad}c}
c %(y,\xrond)
&\displaystyle 
-\frac{\partial \Pa_{\xrond\xrond}}{\partial y}%(y,\xrond,0)
\Pa_{\xrond\xrond}%(y,\xrond,0)
\\[0.5em]\displaystyle 
\frac{\partial \Pa_{\xrond\xrond}}{\partial y}%(y,\xrond,0)
\Pa_{\xrond\xrond}%(y,\xrond,0)
&
0
\end{array}\right)
\left(\begin{array}{c}
\Pa_{y \xe}%(y,\xrond,0)
\\[0.5em]
\Pa_{\xe \xe}%(y,\xrond,0)
\end{array}\right)
=
\left(\begin{array}{@{}c@{}}
\displaystyle 
-\frac{\partial \Pa_{\xrond\xrond}}{\partial y}%(y,\xrond,0)
\Pa_{\xrond \xe}^2%(y,\xrond,0)^2
\\[0.5em]
c%(y,\xrond)
\Pa_{yy}%(y,\xrond,0)
\end{array}\right)
$$
with the notation
$$
c (y,\xrond)= 
\frac{\partial \Pa_{\xrond\xrond}}{\partial \xrond}(y,\xrond,0)
\Pa_{\xrond \xe}(y,\xrond,0)
-
\left[2\frac{\partial \Pa_{\xrond \xe}}{\partial \xrond}(y,\xrond,0)
-\frac{\partial \Pa_{\xrond\xrond}}{\partial \xe}(y,\xrond,0)
\right] \Pa_{\xrond\xrond}(y,\xrond,0)
\  .
$$
But there exist pairs $(y,\xrond)$ where $\frac{\partial \Pa_{\xrond\xrond}}{\partial y}(y,\xrond,0)
\Pa_{\xrond\xrond}(y,\xrond,0)$ is nonzero. For these, we can solve for 
to obtain
$$
\left(\begin{array}{c}
\Pa_{y \xe}
\\[1em]
\Pa_{\xe \xe}
\end{array}\right)
= \left(\begin{array}{c}
\displaystyle
\frac{c\Pa_{yy}}{
\frac{\partial \Pa_{\xrond\xrond}}{\partial y}
\Pa_{\xrond\xrond}
}
\\[1em]\displaystyle 
\frac{\Pa_{\xrond \xe}^2}{\Pa_{\xrond \xrond}}
+ \frac{c^2\Pa_{y y}}{\frac{\partial \Pa_{\xrond\xrond}}{\partial y}^2 \Pa_{\xrond \xrond}^2}
\end{array}\right)
=
\left(\begin{array}{c}
\displaystyle 
\frac{c\Pa_{yy}}{
\frac{\partial \Pa_{\xrond\xrond}}{\partial y}
\Pa_{\xrond\xrond}
}
\\[1em]\displaystyle 
\frac{\Pa_{\xrond \xe}^2}{\Pa_{\xrond \xrond}} 
+ \frac{\Pa_{y \xe}^2}{\Pa_{yy}}
\end{array}\right)
$$
With these expressions the determinant of $\Pa$ evaluated at $(y,\xrond,0)$ is
\begin{eqnarray*}
\Pa_{yy} \Pa_{\xrond\xrond} \Pa_{\xe\xe} - \Pa_{y\xe}^2\Pa_{\xrond\xrond} -
\Pa_{yy} \Pa_{\xrond\xe}^2
& = & 
\Pa_{yy} \Pa_{\xrond\xrond} 
\left(
\frac{\Pa_{\xrond \xe}^2}{\Pa_{\xrond \xrond}} 
+ \frac{\Pa_{y \xe}^2}{\Pa_{yy}}
\right)
 - \Pa_{y\xe}^2\Pa_{\xrond\xrond} -
\Pa_{yy} \Pa_{\xrond\xe}^2
\\
&=& 0
\  .
\end{eqnarray*}
This contradicts the fact that $\Pa$ is positive definite.
% And this is the casewhatever we pick for $\bar \Py (y)$.
We conclude that, when the output is (\ref{trav20}), there is no metric $\bfPa$ for $\RR^3$
satisfying Condition A3 and the pull-back of which in the initial 
$\bfx$-manifold does not satisfy this condition.
\egroup\end{example}

\subsubsection{Possible procedures for designing an observer via a dynamic extension}
In the standard procedures for designing an observer via a dynamic extension, the choice of the function
$\fa$ and $\ha$ is actually indirect.  The first step consists in finding an injective immersion
$\bfimmer:\bfRR^n\to \bfRR^m$
mapping the given state $\bfx$ into an augmented state $\bfxa$ for which
there exist special coordinates $\xa$ the dynamics of which can be written in a so called normal form.
For example,
when such a form is a triangular, the high gain observer paradigm applies.
When it is a Hurwitz form, the
Luenberger paradigm applies.

\paragraph{\textcolor{my-violet}{Triangular normal form.}}~\\
\label{sec2}
For systems (\ref{eqn:Plant1}) with $p=1$ and which are strongly differentially observable with an 
order $m$, strictly larger than $n$,
on an open set $\Ouv $
(compare with Example \ref{ex4}), 
the function 
$\bfimmer_{m}:\Ouv  \to \RR^{m}$~:
$$
% \label{eqn:StronglyDiffObservableMap}
\bfimmer_{m}(\bfx)
\;=\; \left(\begin{array}{c}
\bfh(\bfx) \\ 
L_\bff \bfh(\bfx) \\ \vdots \\ 
L_\bff^{m-1}\bfh(\bfx)
\end{array}\right)
$$
is (by definition) an injective immersion from $\Ouv$ to $\RR^m$.
Let $x$ be coordinates for $\bfx$ in $\bfRR^n$, $y$ for $\bfy$ in $\RR^p$, $\xe$ for $\bfxe$ in 
$\bfRR^{m-n}$ and $\xa$ for $\bfxa$ in 
$\bfRR^m$ such that there is a diffeomorphism
$\oha$ satisfying
$$
\xa=\oha(x,\xe)
\quad ,\qquad 
\oha(x,0)\;=\; \immer_{m}(x)
\  .
$$
Then, with the notations
\begin{eqnarray}
\nonumber
%\label{LP97}
&\displaystyle 
C\;=\; \left(\begin{array}{ccccc}
1&0&\ldots & \ldots& 0
\end{array}\right)
\quad ,\quad 
A\;=\; \left(\begin{array}{ccccc}
0 & 1 & 0 & \ldots & 0
\\
0 & 0 & 1 &\ddots & \vdots
\\
\vdots && \ddots&\ddots&0
\\
\vdots&&&\ddots&1
\\
0 & \ldots& \ldots& \ldots & 0
\end{array}\right)
\quad ,\quad 
B\;=\; \left(\begin{array}{c}
0 \\ \vdots\\\vdots\\0 \\ 1
\end{array}\right)
\  ,
\end{eqnarray}
the expression
$\immer_{m}(x)$ of $\bfimmer_{m}(\bfx)$ satisfies
$$
\dot{\overparen{\immer_{m}(x)}}
\;=\; 
A\,  \immer_{m}(x)\;+\; B\,  L_f^{m}h(x)
\quad ,\qquad 
y\;=\; C\,  \immer_{m}(x) \;=\; h(x)
$$
% where
% $$
% A=\left(\begin{array}{@{}cccccc@{}}
% 0 & 1& 0 & \dots & \dots & 0
% \\
% \vdots&  \ddots &  \ddots &  \ddots & &\vdots
% \\
% \vdots & &  \ddots &  \ddots &  \ddots &\vdots
% \\
% \vdots & &   &  \ddots &  \ddots &0
% \\
% 0 & \dots & \dots& \dots& 0& 1
% \\
% 0 & \dots & \dots& \dots& 0 & 0
% \end{array}\right)
% \quad ,\qquad 
% B=\left(\begin{array}{@{}c@{}}
% 0 
% \\
% \vdots
% \\
% \vdots 
% \\
% \vdots 
% \\
% 0 
% \\
% 1
% \end{array}\right)
% \quad ,\qquad 
% C=\left(\begin{array}{@{}cccccc@{}}
% 1 & 0 & \dots  & \dots & \dots & 0
% \end{array}\right)
% $$
In this case, the  augmented dynamics are naturally
$$
\vardot{\xa}{x}\;=\; A \xa + B [L_f^m h(x)]
\quad ,\qquad 
y\;=\; C \xa
\  .
$$

From what is known on high gain observer, Condition A2 holds for these augmented dynamics with a metric $\bfPa$ the expression $\Pa$ of which, in the 
coordinates $\xa$, is a constant matrix. Then, because for these 
augmented dynamics, the output function is linear, condition A3 is also satisfied. 

\paragraph{\textcolor{my-violet}{Hurwitz form.}}~\\
With referring to \cite[Chapter 3]{Bernard}, we know that, under weak smoothness 
and observability assumptions, for any $m\times m$ 
Hurwitz matrix $A$ and any $m\times p$ matrix $B$ such that the pair $(A,B)$ is controllable, 
there exists $\immer:\RR^n\to \RR^m$ satisfying
$$
\dot{\overparen{\immer (x)}}\;=\; A\,\immer(x) + B\,  h(x)
$$
which is an injective immersion at least when $m\geq 2n+1$.
In this case, when expressed in the same coordinates as
$\immer$, the  augmented dynamics are
$$
\vardot{\xa}{x}\;=\; A\,  \xa + B\,  y
\  .
$$
Because $A$ is Hurwitz, the flow generated by this system
is strongly contracting for an Euclidean metric (constant in the coordinates $\barxa$). In such a case we 
do not even need condition A3 to be satisfied by the augmented dynamics.

\sousection{A reduced order observer}
\label{complement50}
%\label{sec10}
In \cite{127}, we have shown that  strong differential detectability implies the existence of a locally 
convergent reduced order observer. Now, we elaborate on the fact that Condition A3 together with Condition A2 implies
the existence of a convergent reduced order observer of a special form.
This is a consequence of the fact that, when the correction term is a gradient like, then Condition A3 is 
equivalent to the infinite gain margin property. But by increasing the gain, we introduce a time scale 
separation.

Specifically, under Assumption~\ref{H1}, the function $\bfh:\bfRR^n\to \bfRR^p$ is a submersion on
$\Ouv\subset\bfRR^n$. For any $\bfx_0$ in $\Ouv$, let $\coordyxr$ be a 
coordinate chart around $\bfx_0$.
We write the expression of the metric $\bfP$ in these coordinates as~:
$$
P(y,\xrond)\;=\; \left(\begin{array}{cc}
P_{yy}(y,\xrond) & P_{y\xrond }(y,\xrond)
\\
P_{\xrond y}(y,\xrond) & P_{\xrond\xrond}(y,\xrond) 
\end{array}\right)
\  .
$$
The expression of the system is
\begin{equation}
\label{LP108}
\dot y\;=\; f_y(y,\xrond)
\quad ,\qquad 
\dot \xrond\;=\; f_\xrond(y,\xrond)
\end{equation}
and the one for the observer  is
\begin{eqnarray}\label{eqn:hatyNEW}
\dot {\hat y}&=&f_y(\hat y,\hat \xrond) - \ell\,  k_E^*(\hat y,\hat \xrond)
\left[\grad_Ph\right]_y(y,\xrond)\frac{\partial \wp }{\partial y_a}(\hat y,y)^\top
\\ \label{eqn:hatxrondNEW}
\dot{\hat \xrond}&=&f_\xrond(\hat y,\hat \xrond)
-\ell\,  k_E^*(\hat y,\hat \xrond) \left[\grad_Ph\right]_\xrond(y,\xrond)\frac{\partial \wp }{\partial y_a}(\hat y,y)^\top
\  ,
\end{eqnarray}
where we have chosen $\ell k_E^*$, with $\ell$ any real number in $[0,+\infty)$, as function $k_E$ in (\ref{eqn:GeodesicObserverVectorField})
and with the notations
\begin{equation}
\label{LP70}
\begin{array}{@{}r@{\; }c@{\; }l@{}}
\left[\grad_Ph\right]_y(y,\xrond)&=&P_y(y,\xrond)^{-1}
\\[0.5em]
\!\left[\grad_Ph\right]_\xrond(y,\xrond)
&=&
-P_{\xrond\xrond}(y,\xrond)^{-1}
P_{\xrond y}(y,\xrond)
P_y(y,\xrond)^{-1}
\\[0.5em]
P_y(y,\xrond)&=&
P_{yy}(y,\xrond)-P_{y\xrond }(y,\xrond) P_{\xrond\xrond}(y,\xrond)^{-1}P_{\xrond y}(y,\xrond)\  .
\end{array}
\end{equation}
where the function $\wp $ is given by condition A3.

To go on, we assume the function
$
(y_1,y_2)\mapsto \left(
\frac{\partial \wp }{\partial y_1} (y_1,y_2),y_2\right)
$
is a diffeomorphism\footnote{
% BEGOIN FOOTNOTE
This is the case when $\wp$ is square of distance given by a Riemannian metric on the $\bfy$-manifold
such that we have uniqueness of minimizing geodesics. Indeed, for $y_2$ fixed, the unique minimizing
geodesic $\gammay $ between $y_1$ and 
$y_2$ parameterized in such a way that $y_2=\gammay (0)$ and $y_1=\gammay (1)$ is such that $y_1\mapsto 
v_2=\frac{d\gamma }{ds}(0)=\phi_e(y_1)$ is a diffeomorphism, given by the exponential map. On another hand 
the geodesic flow gives a diffeomorphism $(y_1,v_1)=\phi_{gy}(y_2,v_2),\phi_{gv}(y_2,v_2))$. So with $y_2$ 
fixed, we have $v_1=\frac{d\gamma }{ds}(1)=\phi_{gv}(y_2,\phi_e(y_1))$. Since we have
$\frac{\partial \phi_e}{\partial y_1}$ invertible and $\phi_{gy}(y_2,\phi_e(y_1))=y_1$, we can conclude that
$y_1\mapsto v_1$ is a diffeomorphism on $h(\coordyxrm(\coordyxrd))$, maybe after reducing $\coordyxrd$. Finally, we have
$
\frac{\partial \wp}{\partial y_1}=v_1^\top P_{yy}(y_1)
$.
% END FOOTNOTE
}   on 
$h(\coordyxrm(\coordyxrd))\times h(\coordyxrm(\coordyxrd)))$.
With (\ref{LP65}) and continuity, this implies that the matrix
$
\frac{\partial ^2\wp 
}{\partial y_1^2}(\hat y,y)
$
is positive definite.
In this case we have
\begin{equation}
\label{LP74}
\wp (y,y)\;=\; 0
\quad ,\qquad 
\frac{\partial \wp }{\partial y_1}(y,y)\;=\; 0
\quad ,\qquad 
\frac{\partial ^2\wp }{\partial y_1^2}(y,y)\;+\; 
\frac{\partial ^2 \wp }{\partial y_1\partial y_2}(y,y)\;=\; 0
\  
\end{equation}
and the partial inverse of the diffeomorphism denoted~:
$$
y_1\;=\; \wp _1^{-1}\left(
\frac{\partial \wp }{\partial y_1} (y_1,y_2),y_2\right)
$$
satisfies~:
$$
y\;=\; \wp _1^{-1}\left(0,y\right)
\  .
$$

To express the observer in the standard form of the singular perturbation theory, we
replace the coordinate $\hat y$ by $\yrond$ defined as
$$
\yrond =\ell{}\,  \frac{\partial \wp }{\partial y_1} (\hat y,y)^\top
\  .
$$
The expression of the system-observer dynamics are then~:
\begin{equation}
\label{LP73}
\begin{array}{@{}rcl@{}}
\frac{1}{\ell{}}\,  \dot \yrond
&=&\displaystyle 
\frac{\partial ^2\wp }{\partial y_1^2}(\wp _1^{-1}\left(\frac{\yrond  }{\ell},y\right),y)
\left[
\vrule height 1.2em depth 1.2em width 0pt
f_y\left(\wp _1^{-1}\left(\frac{\yrond   }{\ell},y\right),\hat \xrond   \right)
\right.
\hskip 6.5cm\null 
\\\multicolumn{3}{r}{\displaystyle 
\left.\vrule height 1.2em depth 1.2em width 0pt
\;-\;
\bar k_E^*\left(\wp _1^{-1}\left(\frac{\yrond   }{\ell},y\right),\hat \xrond \right)
\left[\grad _Ph\right]_y 
\!\left(\wp _1^{-1}\left(\frac{\yrond   }{\ell},y\right),\hat \xrond \right)
\: \yrond   \right]\qquad 
}
\\\multicolumn{3}{r}{\displaystyle 
\;+\; 
\frac{\partial ^2\wp }{\partial y_1 \partial y_2}\left(\wp _1^{-1}\left(\frac{\yrond   }{\ell},y\right),y\right)f_y( y, \xrond   )
\  ,
}
\\[1.5em]
\dot{\hat \xrond    }&=&\displaystyle 
f_\xrond   \left(\wp _1^{-1}\left(\frac{\yrond   }{\ell},y\right),\hat \xrond    \right)\;-\; 
\bar k_E^*\left(\wp _1^{-1}\left(\frac{\yrond   }{\ell},y\right),\hat \xrond \right)
\left[\grad _Ph\right]_\xrond \!\left(\wp _1^{-1}\left(\frac{\yrond   }{\ell},y\right),\hat \xrond \right)\:   \yrond   
\  ,
\\[0.5em]
\dot y &=& f_y(y,\xrond    )
\  ,
\\[0.5em]
\dot \xrond    &=&f_\xrond    (y,\xrond    )
\end{array}
\end{equation}

Since $k_E^*$ has strictly positive values and $\left[\grad_Ph\right]_y(y,\xrond)$ is an invertible matrix (see (\ref{LP70})), with (\ref{LP74}), the algebraic equation
associated with the singularly perturbed system
is given by the $\dot \yrond$ equation when $\ell$ 
is infinity as
\begin{equation}
\label{LP72}
\yrond  \;-\; 
\frac{\left[\grad _Ph\right]_y \!(y,\hat \xrond )^{-1}
}{
\bar k_E^*\left(\wp _1^{-1}\left(\frac{\yrond   }{\ell},y\right),\hat \xrond \right)
}
\left[\vrule height 0.6em depth 0.6em width 0pt
f_y(y,\hat\xrond)-f_y(y,\xrond)\right]
\;=\; 0
\  .
\end{equation}
It has trivially a single root (in $\yrond   $) which  is also non singular.
The slow system we obtain is~:
\begin{equation}
\label{LP71}
\begin{array}{rcl}
\dot{\hat \xrond    }&=&f_\xrond   (y,\hat \xrond    ) 
\;+\; 
P_{\xrond\xrond}(y,\hat \xrond)^{-1}
P_{\xrond y}(y,\hat \xrond)
\left[\vrule height 0.6em depth 0.6em width 0pt
f_y(y,\hat\xrond)-f_y(y,\xrond)\right]
\  ,
\\[0.5em]
\dot y &=& f_y(y,\xrond    )
\  ,
\\[0.5em]
\dot \xrond    &=&f_\xrond    (y,\xrond    )
\end{array}
\end{equation}
where we have used (\ref{LP70}), 
and the above root $\yrond   $ to express  the $\dot {\hat \xrond}$ equation.
The interesting feature of (\ref{LP71}) is that its
$\dot {\hat \xrond}$ component
\begin{equation}
\label{2}
\dot{\hat \xrond    }\;=\; f_\xrond   (y,\hat \xrond    ) 
\;+\; 
P_{\xrond\xrond}(y,\hat \xrond)^{-1}
P_{\xrond y}(y,\hat \xrond)
\left[\vrule height 0.6em depth 0.6em width 0pt
f_y(y,\hat\xrond)-f_y(y,\xrond)\right]
\end{equation}
appears as a full order observer for 
the reduced order (fictitious) system
\begin{equation}
\label{LP90}
\dot{\hat \xrond    }\;=\; f_\xrond   \left(y(t),\hat \xrond    \right)
\quad ,\qquad 
{y\!_f}\;=\; f_y(y(t),\xrond    )
\end{equation}
with ${y\!_f}$ as measurement and $y$ as an exogenous input 
given by the ``dynamic output feedback''~:
\begin{equation}
\label{LP91}
\dot y(t)\;=\; {y\!_f}(t)
\  .
\end{equation}
Its correction term is proportional to the difference $f_y(y,\hat \xrond  )
\;-\;
f_y( y, \xrond   )$. This appears naturally as a consequence of our various restrictions.
Also, it does not have quite a gradient like correction term. We do have the inverse of the 
matrix $P_{\xrond\xrond}$. 
But, instead of $\frac{\partial f_y}{\partial \xrond}$, we have $-P_{\xrond y}$.

If Conditions A2 and A3 are satisfied, the original observer (\ref{eqn:hatyNEW})-(\ref{eqn:hatxrondNEW}
 is convergent whatever $\ell$ is in $[1,+\infty)$. So we expect the observer
in \eqref{LP71} with state in $\RR^{n-p}$ to also be convergent  under the same assumptions.

\begin{proposition}
\label{prop18}
% \label{thmNEW}
Assume $\bfh$ is a submersion and there exist a complete Riemannian metric $\bfP$,
and an open $\Ouv$ subset of $\RR^n$, such that Condition A2 holds and
\begin{description}
\item[\normalfont\textit{A3.1~:}]
\startmodif
All sets $\mathfrak{H}(\bfy) \cap \Ouv$ for $\bfy$ in $\bfh(\Ouv)$ are strongly geodesically convex.
\stopmodif
% are totally geodesic.
% \item[A4~:] 
% We can use
% $(y,\xrond )$ as coordinates for $\bfx \in \Ouv$.
\end{description}
Then, for any point $\bfx_0$ in $\Ouv$,
there exists a coordinate chart $\coordyxr$ around $\bfx_0$ such that, along the solutions of (\ref{LP71}), we have~:
\begin{equation}
\label{eqn:ReducedContraction}
\Did d((y,\hat \xrond),(y,\xrond))
\; \leq \;  \displaystyle -
\frac{\qlower}{2} 
\,  d((y,\hat \xrond),(y,\xrond))
\qquad \forall ((y,\hat \xrond),(y,\xrond))\in \coordyxrm(\coordyxrd)^2
\  .
\end{equation}
\end{proposition}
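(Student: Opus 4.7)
The strategy is to exploit the convexity hypothesis A3.1 together with the block decomposition of the metric to show that, when the two trajectories share the same $y$, the ``correction'' in \eqref{LP71} produces exactly the contribution needed to cancel the spurious $f_y$-term, leaving the Dini derivative of the distance to be bounded purely by Condition A2 restricted to a level set.

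First I would fix $\bfx_0$ in $\Ouv$, pick a coordinate chart $\coordyxr$ around it, and consider $x=(y,\xrond)$ and $\hat x=(y,\hat\xrond)$ in $\coordyxrm(\coordyxrd)^2$. The key observation is that both points lie in the same level set $\mathfrak{H}(y)$, so by A3.1 a minimizing geodesic $\gamma^*:[s_3,s_4]\to \mathfrak{H}(y)$ between them exists and stays in $\mathfrak{H}(y)$. Consequently $\frac{d\gamma^*}{ds}(s)\in \Distrib^\tangent(\gamma^*(s))$ for all $s$; equivalently, its $y$-component is zero, so $\frac{d\gamma^*}{ds}(s)=(0,v^\tangent_\xrond(s))$ in the coordinates $(y,\xrond)$, and $h(\gamma^*(s))=y$ is constant in $s$.

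Next I would apply the first variation formula \eqref{LP7} with the vector field $\mathfrakbf{C}$ replaced by $\bff(\hat x)-\dot{\hat x}$, so that
$$
\Did d(\hat x,x) \;\leq\;
\tfrac{d\gamma^*}{ds}(s_4)^{\!\top}\! P(\gamma^*(s_4))\,\dot{\hat x}
-\tfrac{d\gamma^*}{ds}(s_3)^{\!\top}\! P(\gamma^*(s_3))\,\bff(x).
$$
Write $\dot{\hat x}=\bff(\hat x)+\Delta$ with, from \eqref{eqn:hatyNEW}--\eqref{eqn:hatxrondNEW} evaluated at $\ell k_E^*$ producing the slow dynamics \eqref{LP71},
$$
\Delta \;=\; \bigl(-\delta y,\ P_{\xrond\xrond}(y,\hat\xrond)^{-1}P_{\xrond y}(y,\hat\xrond)\,\delta y\bigr),\qquad \delta y:=f_y(y,\hat\xrond)-f_y(y,\xrond).
$$
A direct block computation using \eqref{LP199} gives
$$
P(\gamma^*(s_4))\,\Delta \;=\; \begin{pmatrix}-P_y(y,\hat\xrond)\,\delta y\\ 0\end{pmatrix},
$$
so the pairing with $\frac{d\gamma^*}{ds}(s_4)=(0,v^\tangent_\xrond(s_4))$ vanishes. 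The correction contributes nothing to the Dini derivative, and we are left with
$$
\Did d(\hat x,x)\;\leq\;\tfrac{d\gamma^*}{ds}(s_4)^{\!\top}\! P(\gamma^*(s_4))\bff(\gamma^*(s_4)) - \tfrac{d\gamma^*}{ds}(s_3)^{\!\top}\! P(\gamma^*(s_3))\bff(\gamma^*(s_3)).
$$

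Finally, I would integrate the Lie derivative of $\frac{d\gamma^*}{ds}^\top P\,\bff$ along $\gamma^*$ exactly as in the derivation of \eqref{LP143}. Condition A2 gives
$$
\tfrac{d}{ds}\Bigl(\tfrac{d\gamma^*}{ds}^{\!\top}\! P(\gamma^*)\bff(\gamma^*)\Bigr)
\;\leq\;\tfrac{\rho(\gamma^*)}{2}\Bigl|\tfrac{d\,h\circ\gamma^*}{ds}\Bigr|^2 - \tfrac{q}{2},
$$
and because $h\circ\gamma^*\equiv y$ the first term is identically zero on $[s_3,s_4]$. Integrating over an arclength interval of length $d(\hat x,x)$ yields $\Did d(\hat x,x)\leq -\tfrac{q}{2}\,d(\hat x,x)$, which is \eqref{eqn:ReducedContraction}.

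The main obstacle is verifying the cancellation $P(\gamma^*(s_4))\,\Delta\in\bfDistrib^\ortho(\gamma^*(s_4))$: it is exactly the block-Schur identity that makes $P_y$ appear in the first slot and zero in the second. Everything else is a re-use of the Part~I contraction argument, now applied \emph{within} a single totally-geodesic level set where the output-variation term in Condition A2 is automatically silent.
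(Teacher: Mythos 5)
Your proof is correct, and it reaches the same conclusion as the paper's but via a noticeably cleaner packaging. The paper's argument constructs an explicit variation $\Gamma_\xrond(t,s)$ (push the whole minimizing geodesic forward by the reduced observer flow), differentiates the length integrand pointwise in $s$, collects the terms carrying the factor $f_y(y,\hat\xrond)-f_y(y,\xrond)$ into a symmetric quadratic form $\frac{d\gamma_\xrond^*}{ds}{}^\top S\, \frac{d\gamma_\xrond^*}{ds}$, and shows $S\equiv 0$ by invoking the geodesic equations \eqref{LP87}--\eqref{LP86} restricted to the level set; the surviving terms are then controlled by A2 because $\frac{d\gamma^*}{ds}$ is a tangent-distribution vector. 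What you do instead is treat the first variation formula \eqref{LP7} as a black box, observe that the deviation $\Delta=\dot{\hat x}-\bff(\hat x)$ has the special block structure $\Delta=\bigl(-\delta y,\  P_{\xrond\xrond}^{-1}P_{\xrond y}\,\delta y\bigr)$, and verify by a one-line Schur-complement computation that $P\Delta$ has zero $\xrond$-block, so its pairing with the tangent vector $\frac{d\gamma^*}{ds}(s_4)=(0,v_\xrond^\tangent)$ vanishes identically. This replaces the paper's pointwise $S$-term cancellation (which needs the geodesic ODEs) with a purely algebraic boundary-term cancellation, pushing all dependence on the geodesic equation into the first variation formula and into the standard A2 step $\frac{d}{ds}\bigl(\frac{d\gamma^*}{ds}{}^\top P\,\bff\bigr)\leq \frac{\rho}{2}\bigl|\frac{d(h\circ\gamma^*)}{ds}\bigr|^2-\frac{q}{2}$, whose $\rho$-term you correctly silence by A3.1 since $h\circ\gamma^*\equiv y$. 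What the paper's longer route buys is the stronger pointwise statement that the correction contributes nothing to the integrand at any $s$ (not just at the endpoints), which is the observation behind the reduction to an unforced observer on the level set in the later discussion; what your route buys is brevity and a transparent identification of the cancellation as the same infinite-gain-margin mechanism that Section~\ref{sec:WhyA3} discusses for the full-order observer. One small point worth making explicit if you write this up: as in the paper, the coordinate domain $\coordyxrd$ should be chosen weakly geodesically convex so that the minimizing geodesic between the two points in $\coordyxrm(\coordyxrd)$ remains inside the chart; this is needed before the first variation formula and the block decomposition of $P$ can be applied consistently along $\gamma^*$.
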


\par\vspace{1em}
We remind the reader that $A3$ implies $A3.1$.
\startmodif
See \cite[Proposition A3 1)b)]{57}.
\stopmodif
\par\vspace{1em}
\begin{proof}
\startmodif
Let $\coordyxr$ be an arbitrary a coordinate chart  around $\bfx_0$. May be with reducing $\coordyxrd$,
we can assume this set is weakly geodesically convex. It follows that, for any
$(y,\hat \xrond)$ and $(y,\xrond)$ in $\coordyxrm(\coordyxrd)$ and therefore in the same level set
$\mathfrak{H}(y) \cap \Ouv$, there there exists a minimizing geodesic $\gamma ^*$
joining these points and remaining in $\coordyxrm(\coordyxrd) \cap \mathfrak{H}(y) \cap \Ouv$. Hence
$\gamma ^*$ can be decomposed as
\stopmodif
$$
\gamma ^*(s)\;=\; \left(\gammay ^*(s),\gamma _\xrond^*(s)\right)
$$
where
$$
\gammay ^*(s)\;=\; y\qquad \forall s\in [s,\hat s]
\  .
$$
This yields
\\[1em]$\displaystyle 
\left(\begin{array}{@{}cc@{}}
\displaystyle \frac{d\gammay ^*}{ds}(s)^\top
&\displaystyle \frac{ d \gamma _\xrond^*}{ds}(s)^\top
\end{array}\right)
\renewcommand{\arraystretch}{1.5}
\left(\begin{array}{@{}cc@{}}
P_{yy}\left(\gammay ^*(s),\gamma _\xrond^*(s)\right)
&
P_{y\xrond}\left(\gammay ^*(s),\gamma _\xrond^*(s)\right)
\\
P_{\xrond y}\left(\gammay ^*(s),\gamma _\xrond^*(s)\right)
&
P_{\xrond\xrond}\left(\gammay ^*(s),\gamma _\xrond^*(s)\right)
\end{array}\right)
\left(\begin{array}{@{}c@{}}
\displaystyle \frac{d\gammay ^*}{ds}(s)
\\
\displaystyle \frac{ d \gamma _\xrond^*}{ds}(s)
\end{array}\right)
$\hfill \null \\[0.5em]\null \hfill $\displaystyle 
\;=\; 
\frac{ d \gamma _\xrond^*}{ds}(s)^\top
P_{\xrond\xrond}\left(\gammay ^*(s),\gamma _\xrond^*(s)\right)
\frac{ d \gamma _\xrond^*}{ds}(s)
$\\[1em]
and
$$
d((y,\hat \xrond),(y,\xrond))
\; = \; 
\int_{s}^{\hat s} \sqrt{
\frac{d \gamma _\xrond ^*}{d s}(s)^\top
P_{\xrond\xrond}(y,\gamma _\xrond ^*(s))\,  
\frac{d \gamma _\xrond ^*}{ds}(s)}\:  ds
$$
Also with $\frac{d \gammay ^{}}{ds}(s)=0$, the geodesic equation gives
\begin{eqnarray}
\label{LP87}
2\frac{d}{ds}\left\{
P_{y\xrond}(y,\gamma _\xrond ^*(s))\,  \frac{d\gamma _\xrond ^*}{ds}(s)\right\}
&=&\frac{\partial}{ \partial y}\left\{\frac{d\gamma _\xrond ^*}{ds}(s)^\top P_{\xrond\xrond}(y , 
\gamma _\xrond ^*(s))\frac{d\gamma _\xrond ^*}{ds}(s)\right\}^\top
\\
\label{LP86}
2\frac{d}{ds}\left\{
P_{\xrond\xrond}(y,\gamma _\xrond ^*(s))\,  \frac{d\gamma _\xrond ^*}{ds}(s)\right\}
&=&\left.\frac{\partial}{ \partial \xrond}\left\{\frac{d\gamma _\xrond ^*}{ds}(s)^\top P_{\xrond\xrond}(y , 
\xrond ^*)\frac{d\gamma _\xrond ^*}{ds}(s)\right\}\right|_{\xrond=\gamma _\xrond ^*(s)}^\top
\end{eqnarray}

Let $(Y(y,\xrond,t),\mathcal{X}(y,\xrond,t)$ be the solution of (\ref{LP108}) issued from $(y,\xrond)$ and 
taking values in the open set $\coordyxrm(\coordyxrd)$ for $t$ in $[0,T)$.
Let  $(t,s)\mapsto \Gamma _\xrond(t,s)$ be solution of
\\[1em]$\displaystyle 
\frac{\partial \Gamma _\xrond }{\partial t}(t,s)\;=\;  f_\xrond   (Y(y,\xrond,t),\Gamma _\xrond   (t,s)  ) 
$\hfill\null \\\null \hfill $\displaystyle
+
P_{\xrond\xrond}(Y(y,\xrond,t),\Gamma _\xrond (t,s))^{-1}
\!
P_{\xrond y}(Y(y,\xrond,t),\Gamma _\xrond )
\!
\left[
f_y(Y(y,\xrond,t),\Gamma _\xrond  (t,s) )
-
f_y(Y(y,\xrond,t),\mathcal{X}(y,\xrond,t) )
\right]
$\\[1em]
with initial condition
$$
\Gamma _\xrond(0,s)\;=\; \gamma _\xrond^*(s)
\  .
$$
From the definition of Dini derivative, we have
$$
\Did d((y,\hat \xrond),(y,\xrond))
\; \leq \; 
\left.\frac{d}{dt}\left\{\int_{s}^{\hat s} \sqrt{
\frac{\partial \Gamma _\xrond }{\partial s}(t,s)^\top
P_{\xrond\xrond}(Y(y,\xrond,t),\Gamma _\xrond (t,s))\,  
\frac{\partial \Gamma _\xrond }{\partial s}(t,s)}\:  ds\right\}\right|_{t=0}
$$
But (\ref{LP86}) gives
\\[1em]$\displaystyle 
\left.\frac{d}{dt}\left\{
\frac{\partial \Gamma _\xrond }{\partial s}(t,s)^\top
P_{\xrond\xrond}(Y(y,\xrond,t),\Gamma _\xrond (t,s))\,  
\frac{\partial \Gamma _\xrond }{\partial s}(t,s)\right\}\right|_{t=0}
$\hfill \null \\$\displaystyle 
=\; 
\frac{d \gamma _\xrond ^*}{d s}(s)^\top\left(
\vrule height 1.3em depth 1.3em width 0pt
\right.
\left.\frac{\partial }{\partial t}
\left\{
P_{\xrond\xrond}(Y(y,\xrond,t),\Gamma _\xrond (t,s))
\right\}\right|_{t=0}
$\hfill \null \\\null \qquad \qquad \qquad  $\displaystyle
+ 2
P_{\xrond\xrond}(y,\gamma _\xrond^*(s))\,  
\left[
\vrule height 1.3em depth 1.3em width 0pt
\frac{\partial f_\xrond}{\partial \xrond}(y ,\hat \xrond )
\right.
$\hfill \null \\\null \hfill $\displaystyle 
\left.+
\left.\frac{\partial }{\partial \xrond}
\left\{
P_{\xrond\xrond}(y ,\hat \xrond)^{-1} P_{\xrond y}(y ,\hat \xrond)
\left[\vrule height 0.6em depth 0.6em width 0pt
f_y(y,\hat\xrond)-f_y(y,\xrond)\right]
\right\} \right|_{\xrond=\gamma _\xrond^*(s)}
\right]
\left.
\vrule height 1.3em depth 1.3em width 0pt
\right)
\frac{d \gamma _\xrond ^*}{d s}(s)
$\\[1em]
where the partial derivative $\frac{\partial \hfill\null }{\partial\hat \xrond_\indxra}$ with respect to the
component $\indxra$ of $\hat \xrond$ is
\\[1em]\vbox{\noindent$\displaystyle 
\frac{\partial }{\partial \hat \xrond_\indxra  }
\left\{
P_{\xrond\xrond}(y ,\hat \xrond)^{-1} P_{\xrond y}(y ,\hat \xrond)
\left[\vrule height 0.6em depth 0.6em width 0pt
f_y(y,\hat\xrond)-f_y(y,\xrond)\right]
\right\} 
$\hfill \null \\\null \hfill $
\renewcommand{\arraystretch}{1.9}
\begin{array}{@{}r@{\qquad }r}
\displaystyle 
\;=\; 
-
P_{\xrond\xrond}(y ,\hat \xrond)^{-1} 
\frac{\partial P_{\xrond\xrond}}{\partial \xrond_\indxra  }(y ,\hat \xrond)
P_{\xrond\xrond}(y ,\hat \xrond)^{-1} 
P_{\xrond y}(y ,\hat \xrond)
\left[\vrule height 0.6em depth 0.6em width 0pt
f_y(y,\hat\xrond)-f_y(y,\xrond)\right]
&
(\star 1)
\\
\displaystyle 
\;+\; 
P_{\xrond\xrond}(y ,\hat \xrond)^{-1} 
\frac{\partial P_{\xrond y}}{\partial \xrond_\indxra  }(y ,\hat \xrond)
\left[\vrule height 0.6em depth 0.6em width 0pt
f_y(y,\hat\xrond)-f_y(y,\xrond)\right]
&
(\star 2)
\\
\displaystyle 
\;+\; 
P_{\xrond\xrond}(y ,\hat \xrond)^{-1} 
P_{\xrond y}(y ,\hat \xrond)
\frac{\partial f_y}{\partial \xrond_\indxra  }(y ,\hat \xrond)
\end{array}$}\\[1em]
We have also, with $v_\xrond=\frac{d\gamma _\xrond^*}{ds}(s)$ and with adding and subtracting 
$f_y(y,\hat\xrond)$ in the first 
line of the right member,
\\[1em]\vbox{\noindent$\displaystyle  
v_\xrond^\top \left.\frac{\partial }{\partial t}
\left\{
P_{\xrond\xrond}(Y(y,\xrond,t),\Gamma _\xrond (t,s))
\right\}\right|_{t=0}
v_\xrond
$\hfill \null \\\null\hfill $\displaystyle 
\begin{array}{@{}c@{\; }l@{\; }c@{}}
=&\displaystyle 
\left.\frac{\partial}{ \partial y}\left\{v_\xrond^\top P_{\xrond\xrond}(y ,\hat \xrond)v_\xrond\right\}f_y(y,\hat\xrond)
\;-\; 
\frac{\partial}{ \partial y}\left\{v_\xrond^\top P_{\xrond\xrond}(y ,\hat \xrond)v_\xrond\right\}
\left[\vrule height 0.6em depth 0.6em width 0pt
f_y(y,\hat\xrond)-f_y(y,\xrond)\right]
\right|_{\xrond=\gamma _\xrond^*(s)}
&(\star 3)
\\ 
&\displaystyle 
\left.+\; 
\frac{\partial}{\partial \xrond}\left\{v_\xrond^\top P_{\xrond\xrond}(y ,\hat \xrond)v_\xrond\right\}
\left(
f_\xrond   (y ,\hat \xrond    ) 
+ 
P_{\xrond\xrond}(y ,\hat \xrond)^{-1}P_{\xrond y}(y,\hat \xrond)
\left[\vrule height 0.6em depth 0.6em width 0pt
f_y(y,\hat\xrond)-f_y(y,\xrond)\right]
\right)
\right|_{\xrond=\gamma _\xrond^*(s)}
&(\star 4)
\end{array}
$}\\[1em]
% Also, we have
% $$
% +\; 2
% v_\xrond^\top P_{\xrond\xrond}(y ,\hat \xrond)
% \left[
% \frac{\partial f_\xrond}{\partial \xrond}(y ,\hat \xrond )v_\xrond
% +
% \frac{\partial }{\partial \xrond}
% \left\{
% P_{\xrond\xrond}(y ,\hat \xrond)^{-1} P_{\xrond y}(y ,\hat \xrond)
% \left[\vrule height 0.6em depth 0.6em width 0pt
% f_y(y,\hat\xrond)-f_y(y,\xrond)\right]
% \right\} v_\xrond
% \right]
% $$
We collect all the terms in the lines above labeled with $(\star)$ which have $\left[ f_y(y ,\hat \xrond  ) 
- f_y(y,\xrond) \right]$ in factor. They give a term like 
$$
(\star 3) + (\star 4)
+
2P_{\xrond\xrond}(y ,\hat \xrond)\left[(\star 1) + (\star 2)\right]
\;=\; \frac{d \gamma _\xrond^*}{ds}(s)^\top S(y,\xrond,\gamma _\xrond^*(s))\frac{d \gamma _\xrond^*}{ds}(s)
$$
which
is a row vector with the same dimension as $y$. To express it
we let $\frac{d \gamma _\indxra^*}{ds}$ denote the $\indxra$ component of the vector $\frac{d \gamma 
_\xrond^*}{ds}$. We get
\\[1em]$\displaystyle  
\frac{d \gamma _\xrond^*}{ds}(s)^\top S(y,\xrond,\gamma _\xrond^*(s))\frac{d \gamma _\xrond^*}{ds}(s)
$\hfill \null \\[0.5em] \null   \hfill    $\displaystyle
\begin{array}{@{}c@{\; }l@{}}
=&\displaystyle 
-\frac{\partial}{ \partial y}\left\{\frac{d \gamma _\xrond^*}{ds}(s)^\top
P_{\xrond\xrond}(y ,\gamma _\xrond^*(s))
\frac{d \gamma _\xrond^*}{ds}(s)\right\}
\\&\displaystyle 
\;+\; \left.
\frac{\partial}{ \partial \xrond}\left\{\frac{d \gamma _\xrond^*}{ds}(s)^\top
P_{\xrond\xrond}(y ,\xrond)
\frac{d \gamma _\xrond^*}{ds}(s)\right\}
\right|_{\xrond=\gamma _\xrond^*(s)}
P_{\xrond\xrond}(y ,\gamma _\xrond^*(s))^{-1}P_{\xrond y}(y,\gamma _\xrond^*(s))
\\&\displaystyle 
\;+\; 2
\sum_\indxra  
\frac{d \gamma _\indxra^*}{ds}(s)  \left[
-
P_{y\xrond }(y ,\gamma _\xrond^*(s))
P_{\xrond\xrond}(y ,\gamma _\xrond^*(s))^{-1} 
\frac{\partial P_{\xrond\xrond}}{\partial \xrond_\indxra  }(y ,\gamma _\xrond^*(s)) 
+
\frac{\partial P_{y\xrond }}{\partial \xrond_\indxra  }(y ,\gamma _\xrond^*(s))
\right]
\frac{d \gamma _\xrond^*}{ds}(s) 
\end{array}
$\\[1em]
where we have
\begin{eqnarray*}
\sum_\indxra \frac{d \gamma _\indxra^*}{ds}(s) 
\frac{\partial P_{\xrond\xrond}}{\partial \xrond_\indxra  }(y ,\gamma _\xrond^*(s))  
\frac{d \gamma _\xrond^*}{ds}(s)
&=&
\left.\frac{d}{ds}\left\{ P_{\xrond\xrond}(y , \xrond) 
\frac{d \gamma _\xrond^*}{ds}(s)  \right\}\right|_{ \xrond=\gamma _\xrond^*(s)}
\hskip -1em
\;-\; P_{\xrond\xrond}(y ,\gamma _\xrond^*(s))
\frac{d ^2\gamma _\xrond^*}{ds^2}(s)
\\
\sum_\indxra
\frac{d \gamma _\indxra^*}{ds}(s) 
\frac{\partial P_{y\xrond }}{\partial \xrond_\indxra  }(y ,\gamma _\xrond^*(s))
\frac{d \gamma _\xrond^*}{ds}(s)
&=&
\left.\frac{d}{ds}\left\{ P_{y\xrond }(y , \xrond) 
\frac{d \gamma _\xrond^*}{ds}(s)  \right\}\right|_{ \xrond=\gamma _\xrond^*(s)}
\hskip -1em
\;-\; P_{y\xrond }(y ,\gamma _\xrond^*(s))
\frac{d ^2\gamma _\xrond^*}{ds^2}(s)
\end{eqnarray*}
So, with (\ref{LP87}) and (\ref{LP86}), we get
$$
\frac{d \gamma _\xrond^*}{ds}(s)^\top S(y,\xrond,\gamma _\xrond^*(s))\frac{d \gamma _\xrond^*}{ds}(s)
\;=\; 0
\  .
$$
With all this, we are left with~:
\\[1em]$\displaystyle  
\left.\frac{d}{dt}\left\{
\frac{\partial \Gamma _\xrond }{\partial s}(t,s)^\top
P_{\xrond\xrond}(Y(y,\xrond,t),\Gamma _\xrond (t,s))\,  
\frac{\partial \Gamma _\xrond }{\partial s}(t,s)\right\}\right|_{t=0}
$\hfill \null \\\null\qquad  $\displaystyle 
\;=\; 
\frac{\partial}{ \partial y}\left\{\frac{d \gamma _\xrond^*}{ds}(s)^\top P_{\xrond\xrond}(y ,\gamma _\xrond^*(s))\frac{d \gamma _\xrond^*}{ds}(s)\right\}f_y(y,\gamma _\xrond^*(s))
$\hfill \null \\\null\hfill   $\displaystyle
\; +\; 
\left.\frac{\partial}{ \partial \xrond}\left\{
\frac{d \gamma _\xrond^*}{ds}(s)^\top
P_{\xrond\xrond}(y , \xrond)\frac{d \gamma _\xrond^*}{ds}(s)
\right\}\right|_{ \xrond =\gamma _\xrond^*(s)}
f_\xrond   (y ,\gamma _\xrond^*(s)    ) 
$  \\\null \hfill $\displaystyle 
\; +\; 2
\frac{d \gamma _\xrond^*}{ds}(s)^\top \left(
P_{\xrond\xrond}(y ,\gamma _\xrond^*(s))
\frac{\partial f_\xrond}{\partial \xrond}(y ,\gamma _\xrond^*(s) )
+
P_{\xrond y}(y ,\gamma _\xrond^*(s))
\frac{\partial f_y}{\partial \xrond}(y ,\gamma _\xrond^*(s))
\right)\frac{d \gamma _\xrond^*}{ds}(s)
$\\[1em]
But (\ref{3}) in Condition A2 expressed sin the $(y,\xrond)$ coordinates gives~:
\\[1em]$\displaystyle  
-q \: v_\xrond^\top P_{\xrond\xrond}(y,\xrond)  v_\xrond
%$\hfill \null \\ \null  \hfill $\displaystyle
%
\;\geq \; 
\frac{\partial}{ \partial y}\left\{v_\xrond^\top P_{\xrond\xrond}(y , \xrond)v_\xrond\right\}
f_y(y,\xrond)
\;+\; 
\frac{\partial}{ \partial \xrond}\left\{v_\xrond^\top P_{\xrond\xrond}(y , \xrond)v_\xrond\right\}
f_\xrond   (y , \xrond    ) 
$\hfill \null \\ \null \hfill $\displaystyle 
\;+\; 2
v_\xrond^\top \left[ P_{\xrond\xrond}(y , \xrond)
\frac{\partial f_\xrond}{\partial \xrond}(y , \xrond )
+
P_{\xrond y}(y , \xrond)
\frac{\partial f_y}{\partial \xrond }(y , \xrond)
\right] v_\xrond
\qquad \forall (y,\xrond)\in \coordyxrm(\coordyxrd)
\  .
$\\[1em]
So we have finally~:
$$
\left.\frac{d}{dt}\left\{
\frac{\partial \Gamma _\xrond }{\partial s}(t,s)^\top
P_{\xrond\xrond}(Y(y,\xrond,t),\Gamma _\xrond (t,s))\,  
\frac{\partial \Gamma _\xrond }{\partial s}(t,s)\right\}\right|_{t=0}
\; \leq \; -q \: 
\frac{d \gamma _\xrond^*}{ds}(s)^\top P_{\xrond\xrond}(y,\gamma _\xrond^*(s))  \frac{d \gamma _\xrond^*}{ds}(s)
$$
and therefore
$$
\Did d((y,\hat \xrond),(y,\xrond))
\; \leq \; -\frac{q}{2}
\left\{\int_{s}^{\hat s} \sqrt{
\frac{d \gamma _\xrond^*}{ds}(s)^\top P_{\xrond\xrond}(y,\gamma _\xrond^*(s))  \frac{d \gamma _\xrond^*}{ds}(s)
}\:  ds\right\}
\  .
$$
This gives (\ref{eqn:ReducedContraction}).
\end{proof}

\begin{remark}
\normalfont
It follows from this proof that besides Condition A2, the key point for obtaining the convergence of the 
reduced order observer is the equation
\\[1em]$\displaystyle  
\frac{\partial}{ \partial y}\left\{v_\xrond^\top
P_{\xrond\xrond}(y ,\xrond )
v_\xrond\right\}
\;-\; 
\frac{\partial}{ \partial \xrond}\left\{v_\xrond^\top
P_{\xrond\xrond}(y , \xrond)
v_\xrond\right\}
P_{\xrond\xrond}(y ,\xrond )^{-1}P_{\xrond y}(y,\xrond )
$\hfill \null \\\null \hfill $\displaystyle 
\;=\; 2
\sum_\indxra  
v_{\xrond \indxra} \left[
-
P_{y\xrond }(y ,\xrond )
P_{\xrond\xrond}(y ,\xrond )^{-1} 
\frac{\partial P_{\xrond\xrond}}{\partial \xrond_\indxra  }(y ,\xrond ) 
+
\frac{\partial P_{y\xrond }}{\partial \xrond_\indxra  }(y ,\xrond )
\right]
v_\xrond 
$\\[1em]
i.e.
$$
\frac{\partial P_{\indxra\indxrb}}{\partial y_\indyi}
-
\sum_{\indxrc,\indxrd}
\frac{\partial P_{\indxra\indxrb}}{\partial \xrond_\indxrc}
[P_{\xrond\xrond}]^{-1}_{\indxrc\indxrd}P_{\indxrd\indyi}\;=\; 
\frac{\partial P_{\indyi \indxrb}}{\partial \xrond_\indxra}
+
\frac{\partial P_{\indyi \indxra}}{\partial \xrond_\indxrb}
-
\sum_{\indxrc,\indxrd}
P_{\indyi\indxrd}
[P_{\xrond\xrond}]^{-1}_{\indxrd\indxrc}
\left[\frac{\partial P_{\indxrc\indxrb}}{\partial \xrond_\indxra}+
\frac{\partial P_{\indxrc\indxra}}{\partial \xrond_\indxrb}
\right]
$$
or
$$
\sum_{\indxrc,\indxrd}
P_{\indyi\indxrd}
[P_{\xrond\xrond}]^{-1}_{\indxrd\indxrc}
\left[\frac{\partial P_{\indxrc\indxrb}}{\partial \xrond_\indxra}+
\frac{\partial P_{\indxrc\indxra}}{\partial \xrond_\indxrb}
-
\frac{\partial P_{\indxra\indxrb}}{\partial \xrond_\indxrc}
\right]\;=\; 
\frac{\partial P_{\indyi \indxrb}}{\partial \xrond_\indxra}
+
\frac{\partial P_{\indyi \indxra}}{\partial \xrond_\indxrb}
-
\frac{\partial P_{\indxra\indxrb}}{\partial y_\indyi}
$$
With multiplication by $[P^{-1}]_{\indyj\indyi}$ and summation in $\indyi$
and since
$$
\sum_{\indyi}[P^{-1}]_{\indyj\indyi}
\sum_{\indxrd}
P_{\indyi\indxrd}
[P_{\xrond\xrond}]^{-1}_{\indxrd\indxrc}
\;=\; 
-[P^{-1}]_{\indyj\indxrc}
$$
the condition above is equivalent to the condition
$$
\Gamma ^\indyj_{\indxra\indxrb}\;=\; 0
$$
known to be equivalent to the fact that the level sets of the output function are totally geodesic.
From the proof we know this is implied by Condition A3.1.
\end{remark}

\par\vspace{1em}

The observer (\ref{2}) can be also written as
$$
\dot{\hat \xrond    } + P_{\xrond    \xrond    }(y,\hat \xrond    )^{-1}
P_{y\xrond    }(y,\hat \xrond    )^{\top} \dot y
\;=\; 
f_\xrond   \left(y ,\hat \xrond    \right)\;+\;   
P_{\xrond    \xrond    }(y,\hat \xrond    )^{-1}
P_{y\xrond    }(y,\hat \xrond    )^{\top} f_y( y,\hat \xrond    )
$$
Assume
\textit{\begin{description}
\item[\normalfont\textit{A3.2~:}] 
The number of 
outputs $p$ is one or the orthogonal distribution is integrable.
\end{description}}
In this case, there exists a function $h^\ortho:\coordyxrm(\coordyxrd)\to\RR^{n-p} $ such that
$(y,\xrond)\mapsto (y,{\barxrond})=(y,h^\ortho (y,\xrond))$
is a diffeomorphism satisfying
\begin{equation}
\label{6}
P_{y \xrond   }(y,\hat \xrond     )^\top
\;=\; 
P_{\xrond     \xrond     }(y,\hat \xrond     )
\frac{\partial h^\ortho}{\partial \xrond     }(y,\hat \xrond     )^{-1}
\frac{\partial h^\ortho}{\partial y} (y,\hat \xrond     )
\  .
\end{equation}
In the case, we can get successively
\begin{eqnarray*}
\dot{\hat \xrond    } 
\;+\;  \frac{\partial h^\ortho }{\partial \xrond     }(y,\hat \xrond     )^{-1}
\frac{\partial h^\ortho }{\partial y} (y,\hat \xrond     ) \dot y
&=&
f_\xrond   \left(y ,\hat \xrond    \right)\;+\;   
\frac{\partial h^\ortho }{\partial \xrond     }(y,\hat \xrond     )^{-1}
\frac{\partial h^\ortho }{\partial y} (y,\hat \xrond     ) f_y( y,\hat \xrond    )
\\
\frac{\partial h^\ortho }{\partial \xrond     }(y,\hat \xrond     ) \dot{\hat \xrond    } 
\;+\; 
\frac{\partial h^\ortho }{\partial y} (y,\hat \xrond     ) \dot y
&=&
\frac{\partial h^\ortho }{\partial \xrond     }(y,\hat \xrond     ) f_\xrond   \left(y ,\hat \xrond    \right)
\;+\;   
\frac{\partial h^\ortho }{\partial y} (y,\hat \xrond     ) f_y( y,\hat \xrond    )
\end{eqnarray*}
Hence by letting
$$
f_{\barxrond}(y,{\barxrond}) \;=\; \dot {\barxrond}\;=\; \frac{\partial \varphi}{\partial \xrond     }(y, \xrond     ) f_\xrond   \left(y , \xrond    \right)
\;+\;   
\frac{\partial \varphi}{\partial y} (y, \xrond     ) f_y( y, \xrond   )
$$
we see that the dynamics of $\hat {\barxrond}$ 
is simply
$$
\dot {\hat {\barxrond}}\;=\; f_{\barxrond}(y,\hat {\barxrond}) 
$$
In this way we recover exactly  \cite[Proposition 
2.8]{127} under the Conditions A1, A2, A3.1, and A3.2.
%
% END Detailed computations
%
%%%%%%%%%%%%%%
%\stoparchivecomp
\stoparchive

%\IfReport{
%}
%{
%\vspace{-0.5in}
%}

\begin{IEEEbiography}%
% [{\includegraphics[width=1in,height=1.25in,clip,keepaspectratio]{/Users/Ricardo/svn/personal/Photos/Sanfelice5x7a_2012smallIEEE}}]%
{Ricardo G. Sanfelice} received the B.S. degree in Electronics Engineering from the Universidad de Mar del
Plata, Buenos Aires, Argentina, in 2001, and the M.S. and Ph.D. degrees in Electrical and Computer Engineering
from the University of California, Santa Barbara, CA, USA, in 2004 and 2007, respectively.  In 2007 and 2008,
he held postdoctoral positions at the Laboratory for Information and Decision Systems at the Massachusetts
Institute of Technology and at the Centre Automatique et Syst\`{e}mes at the \'Ecole des Mines de Paris.  In
2009, he joined the faculty of the Department of Aerospace and Mechanical Engineering at the University of
Arizona, Tucson, AZ, USA, where he was an Assistant Professor.  In 2014, he joined the University of
California, Santa Cruz, CA, USA, where he is currently Professor in the Department of Electrical and Computer
Engineering.  Prof.  Sanfelice is the recipient of the 2013 SIAM Control and Systems Theory Prize, the
National Science Foundation CAREER award, the Air Force Young Investigator Research Award, the 2010 IEEE
Control Systems Magazine Outstanding Paper Award, and the 2020 Test-of-Time Award from the Hybrid Systems:
Computation and Control Conference.  His research interests are in modeling, stability, robust control,
observer design, and simulation of nonlinear and hybrid systems with applications to power systems, aerospace,
and biology.
\end{IEEEbiography}

\begin{IEEEbiography}%
% [{\includegraphics[width=1in,height=1.25in,clip,keepaspectratio]{/Users/Ricardo/svn/personal/Others/LaurentPraly/praly}}]%
{Laurent Praly}
graduated as an engineer from \'EcoleNationale Sup\'{e}rieure des Mines de Paris in 1976 and got his PhD in
Automatic Control and Mathematics in 1988 from Universit\'{e} Paris IX Dauphine.  After working in industry
for three years, in 1980 he joined the Centre Automatique et Syst\`{e}mes at \'Ecole des Mines de Paris.  From
July 1984 to June 1985, he spent a sabbatical year as a visiting assistant professor in the Department of
Electrical and Computer Engineering at the University of Illinois at Urbana-Champaign.  Since 1985 he has
continued at the Centre Automatique et Syst\`{e}mes where he served as director for two years.  He has made
several long term visits to various institutions (Institute for Mathematics and its Applications at the
University of Minnesota, University of Sydney, University of Melbourne, Institut Mittag-Leffler, University of
Bologna).  His main interest is in feedback stabilization of controlled dynamical systems under various
aspects -- linear and nonlinear, dynamic, output, under constraints, with parametric or dynamic uncertainty,
disturbance attenuation or rejection.  On these topics he is contributing both on the theoretical aspect with
many academic publications and the practical aspect with applications in power systems, mechanical systems,
aerodynamical and space vehicles.
\end{IEEEbiography}

\end{document}

% LABELS
% cor1
% lem14
% foot1
% LP237
% ex9
% sec21
% Glos8
% thm5
% complement51
% prop:prop20
% point8
% def5
% rem2